\renewcommand\paragraph{\@startsection{paragraph}{4}{\z@}%
                                    {1ex \@plus1ex \@minus.2ex}%
                                    {-1em}%
                                    {\normalfont\normalsize\bfseries}}
\newtheorem{theorem}{Theorem}
\newtheorem{proposition}{Proposition}
\newtheorem{lemma}{Lemma}
\newtheorem{corollary}{Corollary}
\theoremstyle{definition}
\newtheorem{example}{Example}
\newtheorem{definition}{Definition}
\newtheorem{assumption}{Assumption}
\newtheorem{sassumption}{Simplifying Assumption}
\theoremstyle{remark}
\newtheorem{remark}{Remark}
\newtheorem{fact}{Fact}
\providecommand{\eref}[1]{\eqref{eq:#1}}  
\providecommand{\cref}[1]{Chapter~\ref{chap:#1}}
\providecommand{\sref}[1]{Section~\ref{sec:#1}}
\providecommand{\appref}[1]{Appendix~\ref{appendix:#1}}
\providecommand{\fref}[1]{Figure~\ref{fig:#1}}
\providecommand{\thref}[1]{Theorem~\ref{thm:#1}}
\providecommand{\defref}[1]{Definition~\ref{def:#1}}
\providecommand{\lemref}[1]{Lemma~\ref{lem:#1}}
\providecommand{\remref}[1]{Remark~\ref{rem:#1}}
\providecommand{\assumpref}[1]{Assumption~\ref{assump:#1}}
\providecommand{\sassumpref}[1]{Simplifying Assumption~\ref{sassump:#1}}
\providecommand{\factref}[1]{Fact~\ref{fact:#1}}
\providecommand{\propref}[1]{Proposition~\ref{prop:#1}}
\providecommand{\corref}[1]{Corollary~\ref{cor:#1}}
\providecommand{\exref}[1]{Example~\ref{ex:#1}}
\providecommand{\R}{\ensuremath{\mathbb{R}}}
\providecommand{\C}{\ensuremath{\mathbb{C}}}
\providecommand{\N}{\ensuremath{\mathbb{N}}}
\providecommand{\W}{\ensuremath{\mathbb{N}_0}}
\providecommand{\abs}[1]{\lvert#1\rvert}
\providecommand{\norm}[1]{\lVert#1\rVert}
\providecommand{\set}[1]{\left\{#1\right\}}
\providecommand{\bydef}{\overset{\text{def}}{=}}
\renewcommand{\dim}{N}
\newcommand{\ssize}{M}
\newcommand{\reg}{\rho}
\newcommand{\uclass}[1]{\mathscr{U}(#1)}
\newcommand{\auxmat}{\mat{A}}
\newcommand{\auxvec}{{a}}
\newcommand{\auxdim}{b}
\newcommand{\order}{k}
\newcommand{\opt}{\mathrm{RLS}}
\renewcommand{\vec}[1]{\ensuremath{\boldsymbol{#1}}}
\providecommand{\mat}[1]{\ensuremath{\boldsymbol{#1}}}
\providecommand{\calS}{\mathcal{S}}
\providecommand{\calU}{\mathcal{U}}
\providecommand{\calV}{\mathcal{V}}
\providecommand{\mA}{\mat{A}} \providecommand{\mB}{\mat{B}}
\providecommand{\mD}{\mat{D}}
\providecommand{\mF}{\mat{F}}
\providecommand{\mH}{\mat{H}}
\providecommand{\mI}{\mat{I}} \providecommand{\mJ}{\mat{J}} 
\providecommand{\mM}{\mat{M}} \providecommand{\mP}{\mat{P}} 
\providecommand{\mQ}{\mat{Q}} \providecommand{\mR}{\mat{R}}
\providecommand{\mS}{\mat{S}} \providecommand{\mU}{\mat{U}} 
\providecommand{\mV}{\mat{V}}
\providecommand{\mW}{\mat{W}}
\providecommand{\mT}{\mat{T}}
\providecommand{\mZ}{\mat{Z}}
\providecommand{\mSigma}{\mat{\Sigma}}
 \providecommand{\mG}{\mat{G}}
\providecommand{\mPsi}{\mat{\Psi}}
\providecommand{\mPsi}{\mat{\Psi}}
\providecommand{\mX}{\mat{X}}
\providecommand{\mXi}{\mat{\Xi}}
\providecommand{\linmap}{\mathscr{M}}
\providecommand{\mO}{\mat{O}}
 \providecommand{\ve}{\vec{e}}
\providecommand{\vu}{\vec{u}} \providecommand{\vw}{\vec{w}}
\providecommand{\vx}{\vec{x}} \providecommand{\vy}{\vec{y}}
\providecommand{\vz}{\vec{z}} 
 \providecommand{\vzero}{\vec{0}}
\providecommand{\vv}{\vec{v}}
\providecommand{\vDelta}{\vec{\Delta}}
\providecommand{\valpha}{\vec{\alpha}}
\providecommand{\vbeta}{\vec{\beta}}
\providecommand{\vepsilon}{\vec{\epsilon}}
\providecommand{\vone}{\vec{1}}
\providecommand{\vgamma}{\vec{\gamma}}
\providecommand{\vdelta}{\vec{\delta}}
\newcommand{\auxC}{L}
\newcommand{\unif}[1]{\mathsf{Unif}(#1)} 
\newcommand{\ortho}{\mathbb{O}} 
\newcommand{\explain}[2]{\overset{\text{\tiny{#1}}}{#2}}
\newcommand{\iter}[2]{{#1}^{(#2)}} 
\newcommand{\E}{\mathbb{E}} 
\renewcommand{\P}{\mathbb{P}} 
\newcommand{\Var}{\mathrm{Var}}
\newcommand{\ip}[2]{\left\langle {#1}, {#2} \right\rangle} 
\newcommand{\gauss}[2]{\mathcal{N}\left( #1,#2 \right)} 
\newcommand{\nonlin}{f}
\newcommand{\degree}{D}
\newcommand{\fourier}[3]{\hat{\nonlin}(#1,#2; #3)}
\newcommand{\pweight}[2]{p_{#2}(#1)}
\newcommand{\qweight}[2]{q_{#2}({#1})}
\newcommand{\height}[2][]{h_{#1}{(#2)}}
\newcommand{\valid}{\mathtt{VALID}}
\newcommand{\relev}{\mathtt{RELEV}}
\newcommand{\hermite}[1]{H_{#1}}
\newcommand{\trees}[2]{\mathscr{T}_{#1}(#2)}
\newcommand{\colorings}[2]{\mathscr{C}_{#1}(#2)}
\newcommand{\leaves}[1]{\mathscr{L}(#1)}
\renewcommand{\part}[1]{\mathscr{P}(#1)}
\newcommand{\chrn}[1]{c_{#1}}
\newcommand{\pw}{$\mathrm{PW}_2$}
\newcommand{\serv}[1]{\mathsf{#1}}
\newcommand{\nullleaves}[1]{\mathscr{L}_0(#1)}
\newcommand{\exponent}{\eta}
\newcommand{\noisestd}{\sigma}
\DeclareMathOperator*{\argmin}{\arg\min}
\newcommand{\myitem}[1]{%
\item{#1} \protected@edef\@currentlabel{#1}%
}
\newenvironment{fminipage}%
  {\begin{Sbox}\begin{minipage}}%
  {\end{minipage}\end{Sbox}\fbox{\TheSbox}}
\newenvironment{algbox}[0]{\vskip 0.2in
\noindent 
\begin{fminipage}{6.3in}
}{
\end{fminipage}
\vskip 0.2in
}
\DeclareFontFamily{U}{mathx}{\hyphenchar\font45}
\DeclareFontShape{U}{mathx}{m}{n}{
      <5> <6> <7> <8> <9> <10>
      <10.95> <12> <14.4> <17.28> <20.74> <24.88>
      mathx10
      }{}
\DeclareSymbolFont{mathx}{U}{mathx}{m}{n}
\DeclareMathAccent{\widecheck}{0}{mathx}{"71}
\DeclareMathAccent{\wideparen}{0}{mathx}{"75}
\renewcommand{\tilde}{\widetilde}
\renewcommand{\hat}{\widehat}
\providecommand{\asymeq}{\explain{\pw}{\simeq}}
\newcommand{\Ortho}{\mathbb{O}}
\newcommand{\mLambda}{\mat{\Lambda}}
\DeclareMathOperator{\diag}{diag}
\DeclareMathOperator{\Tr}{Tr}
\DeclareMathOperator{\op}{op}
\newcommand*{\tran}{^{\mkern-1.5mu\mathsf{T}}}
\newcommand*{\herm}{^\ast}
\providecommand{\stepsize}{\zeta}
\providecommand{\vbeta}{\vec{\beta}}
\newcommand{\revised}[1]{{#1}}
\renewcommand*\diff{\mathop{}\!\mathrm{d}}
\title{Spectral Universality of Regularized Linear Regression with\\ Nearly Deterministic Sensing Matrices}
\author[]{Rishabh Dudeja\thanks{rd2714@columbia.edu}}
\author[]{Subhabrata Sen\thanks{subhabratasen@fas.harvard.edu}}
\author[]{Yue M. Lu\thanks{yuelu@seas.harvard.edu}}
\affil[]{Harvard University}
\begin{document}
\maketitle

\begin{abstract}
It has been observed that the performances of many high-dimensional estimation problems are universal with respect to underlying sensing (or design) matrices. Specifically, matrices with markedly different constructions seem to achieve identical performance if they share the same spectral distribution and have ``generic'' singular vectors. We prove this universality phenomenon for the case of convex regularized least squares (RLS) estimators under a linear regression model with additive Gaussian noise. Our main contributions are two-fold: (1) We introduce a notion of  universality classes for sensing matrices, defined through a set of deterministic conditions that fix the spectrum of the sensing matrix and precisely capture the notion of generic singular vectors; (2) We show that for all sensing matrices that lie in the same universality class, the dynamics of the proximal gradient descent algorithm for solving the regression problem, as well as the performance of RLS estimators themselves (under additional strong convexity conditions) are asymptotically identical. In addition to including i.i.d. Gaussian and rotational invariant matrices as special cases, our universality class also contains highly structured, strongly correlated, and even (nearly) deterministic matrices. Examples of the latter include randomly signed versions of incoherent tight frames and randomly subsampled Hadamard transforms. As a consequence of this universality principle, the asymptotic performance of regularized linear regression on many structured matrices constructed with limited randomness can be characterized by using the rotationally invariant ensemble as an equivalent yet mathematically more tractable surrogate.
\end{abstract}

\tableofcontents

\section{Introduction}

A common theme in statistical signal processing and inference is to estimate a signal vector $\vbeta_\star \in \R^{\dim}$ from a set of noisy and potentially highly incomplete measurements $\vy \in \R^\ssize$. A fairly general model is
\begin{align}\label{eq:general-model}
\vy & = g(\mX \vbeta_\star, \vepsilon),
\end{align}
where $\mX \in \R^{\ssize \times \dim}$ is an observed $\ssize \times \dim$ sensing (or design) matrix with $\ssize$ being the sample size, $\vepsilon \in \R^\ssize$ is the unobserved noise, and $g(\cdot, \cdot): \R^2 \mapsto \R$ is some fixed function that acts on each coordinate of its input arguments. This model arises in many (regularized) regression problems, with examples including photon-limited imaging \citep{unser1988maximum,yang2011bits}, phase retrieval \citep{fienup1982phase}, MIMO detection \citep{thrampoulidis2018symbol,hu2020limiting} in wireless communications, signal recovery from quantized measurements \citep{rangan2001recursive}, and robust data fitting \citep{el2013robust,karoui2013asymptotic,donoho2016high,el2018impact}.

There is a long and very rich line of work on studying various estimators for model \eqref{eq:general-model} and its generalizations (see, \emph{e.g.}, \citep{dobson2018introduction} for an overview). In many cases, the cleanest expression for the performance of an estimator is given in the asymptotic regime, where the underlying dimension $N$ and the sample size $M$ are both large and comparable. Indeed, under additional statistical assumptions on the sensing matrix $\mX$, a growing body of work (see, \emph{e.g.},  \citep{donoho2009message,donoho2010counting,bayati2011lasso, bayati2011dynamics, chandrasekaran2012convex,karoui2013asymptotic,amelunxen2014living,thrampoulidis2015regularized, donoho2016high, dobriban2018high,weng2018overcoming,reeves2019replica, barbier2019optimal, sur2019likelihood,sur2019modern,ma2019optimization,candes2020phase,celentano2020lasso,mignacco2020role,lu2020phase,bu2020algorithmic,li2021minimum}) 
analyzes the properties of statistical estimators in the high-dimensional limit, predicting their exact asymptotic performance and often revealing interesting phase transition phenomena. The latter amount to an abrupt change in the performance of an estimator as certain parameters (such as the signal-to-noise ratio or the sampling ratio $\ssize/\dim$) cross critical thresholds. In addition, novel asymptotic null distributions of various classical hypothesis tests have also been characterized under this asymptotic regime, leading to principled, efficient inference in high-dimensions (see, \emph{e.g.}, \citep{bai2009corrections,jiang2012likelihood,jiang2013central,jiang2015likelihood,sur2019likelihood,he2021likelihood}).  Such asymptotic results are highly valuable, as they provide fundamental limits on the degree to which different inference methodology can be successful. Moreover, the precise asymptotic characterizations can also lead to optimal algorithm designs, as demonstrated in recent work \citep{bean2013optimal, hu2019asymptotics,luo2019optimal,celentano2020estimation,wang2020bridge,taheri2021fundamental,mondelli2021optimal,maillard2022construction}. 

Despite considerable recent progress, there remains a significant gap between theory and practice. On the theoretical end, most of the existing research relies upon strong and often unrealistic assumptions on the underlying sensing models. In particular, the sensing matrix $\mX$ in \eqref{eq:general-model} is usually assumed to consist of i.i.d. entries or have rotational-invariant properties. Such idealistic statistical models, while useful and convenient for mathematical proofs, do not resemble the actual systems encountered in practice. Indeed, the sensing matrices $\mX$ encountered in applications are usually structured, and often have strong correlations among the entries.  
In this paper, we seek to narrow this gap between theory and practice by precisely characterizing the statistical properties of estimators in high-dimensional settings where the underlying sensing matrices can be \emph{highly structured}, \emph{strongly correlated} in their components, and even \emph{(nearly) deterministic}. 

To this end, we will establish a \emph{universality principle} for large classes of sensing matrices.\ Broadly speaking, universality is the observation that there exist universal laws that govern the macroscopic behavior of many complex systems, regardless of what the microscopic components of those systems are, or how they interact with each other. In the context of high-dimensional estimation, it refers to the well-documented empirical observations (see, \emph{e.g.}, \citep{donoho2009observed,monajemi2013deterministic,oymak2014case,abbara2020universality}) that, for many estimators and iterative algorithms, certain structured (or even deterministic) sensing matrices seem to exactly match the theoretical performance derived under the i.i.d. Gaussian or rotational-invariant assumptions. In this paper, we first introduce a notion of \emph{universality classes} of sensing matrices, defined through a set of deterministic conditions on the matrices. Under a linear regression model with additive Gaussian noise, we show that the performance of \emph{regularized least squares} (RLS) estimators with general convex regularizers are asymptotically identical for all sensing matrices that lie in a given universality class.

\subsection{Model, RLS Estimators, and Assumptions}
This paper studies a linear version of the general observation model in \eqref{eq:general-model}, with 
\begin{equation}\label{eq:linear-model}
\vy = \mX \vbeta_\star + \vepsilon.
\end{equation}
We also impose the following assumptions on the signal vector $\vbeta_\star$, and the noise $\vepsilon$ in our analysis.

\begin{assumption}[Random Signal and Noise] \label{assump:RSN} The entries of $\vbeta_\star$ are i.i.d. copies of a random variable $\serv{B_\star}$ with finite moments of all orders, whose distribution is uniquely determined by its moments. The entries of the noise vector $\vepsilon$ are i.i.d. $\gauss{0}{\noisestd^2}$ for some $\noisestd \geq 0$. The signal $\vbeta_\star$ and the noise $\vepsilon$ are independent of the sensing matrix $\mX$. 
\end{assumption}

We study general Regularized Least Squares (RLS) estimators, defined as
\begin{subequations}\label{eq:RLS}
\begin{align} 
    \vbeta_\opt(\mX, \vbeta_\star, \vepsilon) \in \argmin_{\vbeta \in \R^{\dim}}  L(\vbeta; \mX, \vy) \label{eq:RLS:1}\\
    L(\vbeta; \mX, \vy) \explain{def}{=} \frac{1}{2\dim}  \|\vy - \mX \vbeta\|^2 + \frac{1}{\dim} \sum_{i=1}^\dim \reg(\beta_i),
\end{align}
\end{subequations}
where $\reg: \R \mapsto \R$ is a regularizer or the penalty function.

\begin{assumption}[Convex Regularizer]\label{assump:convex-reg} The regularizer $\rho: \R \mapsto \R$ is a proper, closed convex function which diverges at $\infty$ (that is, $\rho(x) \rightarrow \infty$ as $|x| \rightarrow \infty$). 
\end{assumption}

The convexity assumption on the regularizer $\rho$  guarantees that the optimization problem in \eqref{eq:RLS} has at least one global minimizer.
A key challenge in showing the universality of the performance of the RLS estimator with respect to the sensing matrix $\mX$ is that the RLS estimator in \eqref{eq:RLS} is an \emph{implicit function} of the matrix $\mX$, defined through a high-dimensional optimization problem. To overcome this challenge, we rely on an \emph{algorithmic approach} to establishing universality. Specifically, we construct a sequence of \emph{explicit approximations} to the implicitly defined estimator $ \vbeta_\opt(\mX, \vbeta_\star, \vepsilon)$ by tracking the iterates of the \emph{proximal gradient algorithm} (see \citep[Chapter 10]{beck2017}) for solving \eqref{eq:RLS}. 

Recall the definition of the proximal operator $\eta: \R \times (0,\infty) \mapsto \R$ associated with the regularizer $\rho$:
\begin{align}\label{eq:prox}
    \eta(x; \gamma) \explain{def}{=} \argmin_{z \in \R}\ \gamma\rho(z) + \frac{(x-z)^2}{2}. 
\end{align}
This operator is well-defined under \assumpref{convex-reg} (see \citep[Theorem 6.3]{beck2017} for a proof). The proximal gradient algorithm executes the following iterations:
\begin{subequations}\label{eq:prox-method}
\begin{align} 
\iter{{\vbeta}}{1} &= \eta(\mX \tran \vy ; \stepsize)\\
    \iter{{\vbeta}}{t+1} &= \eta\big( \iter{{\vbeta}}{t} - \stepsize \mX\tran(\mX \iter{{\vbeta}}{t} - \vy) ; \stepsize\big), \quad \text{for } t \ge 1,
\end{align}
\end{subequations}
where $\stepsize \in (0,\infty)$ is the step-size parameter.

We study the RLS estimator in \eqref{eq:RLS} and the iterates of the proximal gradient algorithm in \eqref{eq:prox} in the high-dimensional asymptotic framework where the signal dimension $\dim \rightarrow \infty$. In this framework, one observes a sequence of regression problems indexed by $\dim$, with sample size $\ssize_\dim$, sensing matrix $\iter{\mX}{\dim} \in \R^{M_\dim \times \dim}$, signal vector $\iter{\vbeta}{\dim}_\star \in \R^{\dim}$, noise $\iter{\vepsilon}{\dim} \in \R^{\ssize_\dim}$, and measurement vector $\iter{\vy}{\dim}\in \R^{\ssize_\dim}$. While our asymptotic analysis does not need to assume a particular scaling of the sample size $\ssize_{\dim}$ with $\dim$, many of our examples will consider the proportional scaling where $\ssize_\dim/\dim \rightarrow \alpha \in (0,\infty)$. For notational simplicity, we will often suppress the dependence of $\ssize_\dim$, $\iter{\mX}{\dim}$, $\iter{\vbeta}{\dim}_\star$, $\iter{\vepsilon}{\dim}$ and $\iter{\vy}{\dim}$ on $\dim$.

\subsection{Demonstrations of Universality}
\label{sec:demonstration}
Before presenting our main results that formally establish the aforementioned universality principle, we first illustrate this phenomenon by considering several concrete examples of sensing matrices. The first ensemble, denoted by $\mathtt{SpikeSine}$, is a real-valued and randomly signed version of the Spikes and Sines matrix considered by \citet{monajemi2013deterministic}. Specifically,
\begin{equation}\label{eq:ensemble_ss}
\mX_\mathtt{SpikeSine} = \frac{1}{\sqrt{2}} \begin{bmatrix} \mI_\ssize & \mQ_\ssize \end{bmatrix} \cdot \mS,
\end{equation}
where $\mI_{\ssize}$ is the $\ssize \times \ssize$ identity matrix and $\mQ_\ssize$ denotes an $\ssize \times \ssize$ orthonormal discrete cosine transform (DCT) matrix, and $\mS = \diag(s_{1:\dim})$ is a diagonal matrix of i.i.d.\ signs $s_{1:\dim} \explain{i.i.d.}{\sim} \unif{\{\pm 1\}}$. The second ensemble, $\texttt{Mask}$, is obtained by concatenating $L$ square matrices:
\begin{equation}\label{eq:ensemble_Hadamard}
\mX_\texttt{Mask} = \begin{bmatrix} \mD_1 \mH_\ssize &  \mD_2 \mH_\ssize& \ldots &  \mD_L \mH_\ssize \end{bmatrix} \cdot \mS.
\end{equation}
Here, $L \in \N$, $\mH_\ssize$ denotes the $\ssize \times \ssize$ orthonormal Hadamard-Walsh matrix, and $\set{\mD_\ell}_{1 \le \ell \le L}$ is a collection of random diagonal matrices. The diagonal entries $(\mD_\ell)_{ii}$ are drawn i.i.d. from some \emph{symmetric} probability distribution with bounded support. We take $(\mD_\ell)_{ii} \explain{i.i.d.}{\sim} \unif{[-1,1]}$ in our experiments. Just as in \eqref{eq:ensemble_ss}, $\mS$ in \eqref{eq:ensemble_Hadamard} is a diagonal matrix of i.i.d. signs. Moreover, $\mD_1, \ldots, \mD_L$ and $\mS$ are mutually independent. The third ensemble, $\texttt{RandDCT}$, and the last ensemble, \texttt{Haar}, have similar constructions in the form of
\begin{equation}\label{eq:ensemble_RandDCT}
     \mX_\texttt{RandDCT}(\mLambda) = \mLambda^{1/2} (\mP \mQ_N \mS)
\end{equation}
and
\begin{equation}\label{eq:ensemble_Haar}
    \mX_\texttt{Haar}(\mLambda) = \mLambda^{1/2} \mV\tran,
\end{equation}
where $\mLambda \in \R^{\dim \times \dim}$ is a deterministic diagonal matrix with non-negative numbers on the diagonal. In \eqref{eq:ensemble_RandDCT}, $\mQ_\dim$ is the DCT matrix of size $\dim \times \dim$, $\mP \in \R^{\dim \times \dim}$ is a uniformly random permutation matrix, and $\mS$ is a diagonal matrix of i.i.d. signs independent of $\mP$. In \eqref{eq:ensemble_Haar}, $\mV \in \unif{\ortho(\dim)}$ is a random orthogonal matrix drawn from the Haar (\emph{i.e.}, uniform) distribution on the group $\ortho(\dim)$ of orthogonal matrices.

By construction, $\mX_\texttt{SpikeSine}$ in \eqref{eq:ensemble_ss} has orthonormal rows, and thus the eigenvalues of $\mX\tran \mX$ for \texttt{SpikeSine} consist of exactly $\ssize$ ones and $\ssize$ zeros. On the other hand, for \texttt{RandDCT} and \texttt{Haar}, the eigenvalues of $\mX\tran\mX$ are given by the diagonal elements of $\mLambda$. Thus, by setting
\begin{equation}\label{eq:Lambda_SS}
    \mLambda = \mLambda_\texttt{SpikeSine} \bydef \diag\{\underbrace{1, \ldots, 1}_{\ssize}, \underbrace{0, \ldots, 0}_{M}\},
\end{equation}
we can make sure that the spectral distribution of $\mX\tran \mX$ is identical for the three ensembles \texttt{SpikeSine}, \texttt{Haar}, and \texttt{RandDCT}. Similarly, it is easy to verify that, with the choice of
\begin{equation}\label{eq:Lambda_Hadamard}
    \mLambda = \mLambda_\texttt{Mask} \bydef \sum_{1 \le \ell \le L} \mD_\ell^2,
\end{equation}
the ensembles \texttt{RandDCT} and \texttt{Haar} can also attain the same spectral distribution of \texttt{Mask}. Notwithstanding their matching spectra, the four ensembles defined above have very different constructions. In fact, with \texttt{SpikeSine} and \texttt{Mask} being rectangular matrices whereas \texttt{RandDCT} and \texttt{Haar} being square matrices, they do not even have the same aspect ratios.
\begin{figure}[t]
         \includegraphics[width=\textwidth,trim={3.4cm 0.1cm 4.1cm 0.4cm},clip]{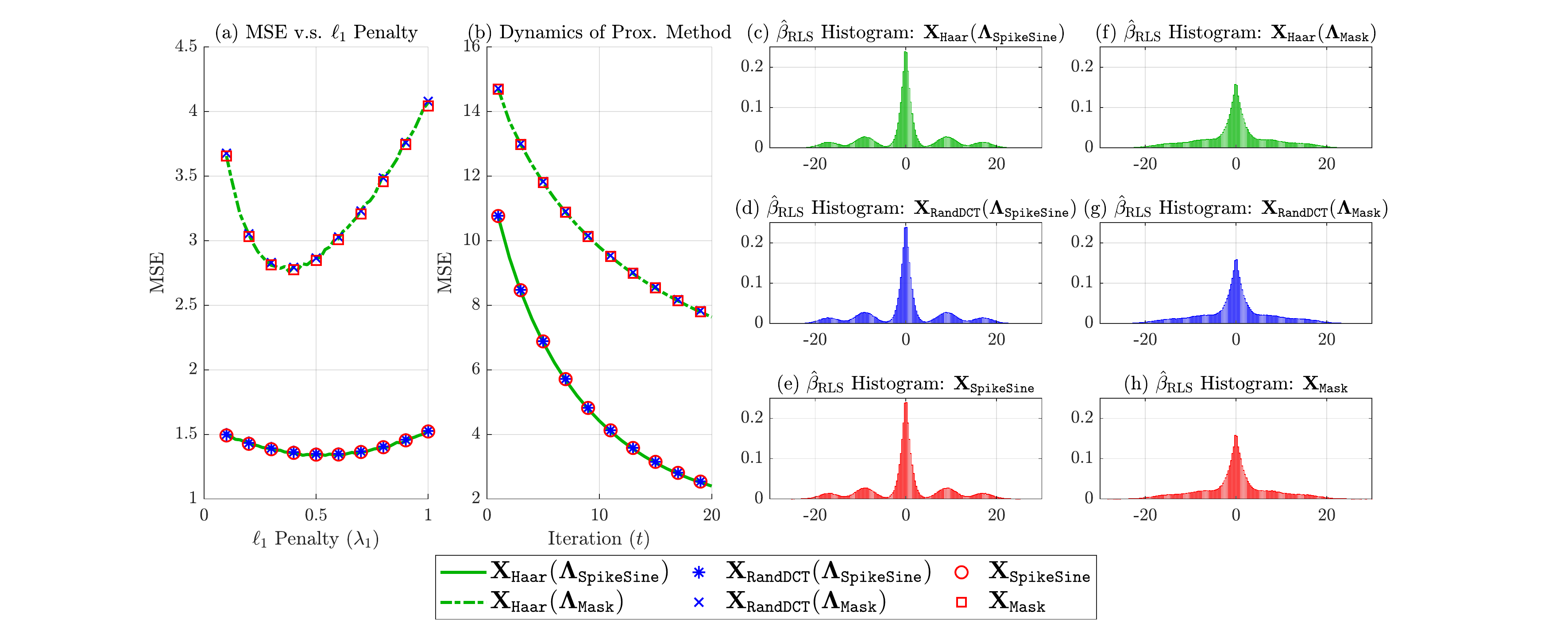}
\caption{Universal behavior of sensing matrices \texttt{SpikeSine}, \texttt{RandDCT}, \texttt{Haar} [with the latter two using $\mLambda= \mLambda_\texttt{SpikeSine}$ in \eqref{eq:Lambda_SS}], and \texttt{Mask}, \texttt{RandDCT}, \texttt{Haar} [with the latter two using $\mLambda= \mLambda_\texttt{Mask}$ in \eqref{eq:Lambda_Hadamard}] for regularized linear regression. Panel (a): Plot of normalized MSE v.s. $\ell_1$ penalty parameter $\lambda_1$. Panel (b): Plot of normalized MSE of the proximal iterate \eqref{eq:prox-method} versus iteration number $t$. Panels (c)-(h): Histogram of the non-zero coordinates of the RLS estimator $\vbeta_\opt$.}
\label{fig:univplot}
\end{figure}

In \fref{univplot}, we compare the performance of the RLS estimator in \eqref{eq:RLS} and the proximal gradient algorithm in \eqref{eq:prox-method} on each of these sensing matrix ensembles. In our experiment, we sampled the coordinates of the $\dim = 2^{21}$ dimensional signal $\vbeta_\star$ i.i.d. from the following 5-point prior:
\begin{align*}
    (\vbeta_\star)_i \explain{i.i.d.}{\sim} \frac{9}{10} \delta_0 + \frac{1}{10} \left( \frac{1}{3} \delta_{12} + \frac{1}{3} \delta_{-12} + \frac{1}{6} \delta_{20} + \frac{1}{6} \delta_{-20} \right).
\end{align*}
The measurements $\vy$ were generated from the linear model \eqref{eq:linear-model} with Gaussian noise of variance $\sigma^2 = 1$. We choose an \emph{elastic net} regularizer, where 
\begin{equation}\label{eq:reg_net}
    \reg(\beta) = \lambda_1 \abs{\beta} + \lambda_2 \beta^2
\end{equation}
for some positive constants $\lambda_1$ and $\lambda_2$. Let $\vbeta_{\opt}(\mX, \vbeta_\star, \vepsilon)$ denote the RLS estimator from \eqref{eq:RLS} and $\iter{{\vbeta}}{t}(\mX, \vepsilon, \vbeta_\star)$ denote the $t$th iterate of the proximal gradient algorithm in \eqref{eq:prox-method}. Panel (a) shows the normalized mean squared error (MSE) of the RLS estimate:
\begin{equation}
    \frac 1 N \norm{{\vbeta}_\opt(\mX, \vepsilon, \vbeta_\star) - \vbeta_\star}^2
\end{equation}
as we vary the $\ell_1$ regularization parameter $\lambda_1$ in \eqref{eq:reg_net} over a grid of values in $(0,1]$, while keeping the ratio $\lambda_2/\lambda_1$ fixed at $10^{-3}$. Thereafter, in panels (b)-(h) we fixed $\lambda_1 = 1$ and $\lambda_2 = 10^{-3}$. Panel (b) plots the normalized MSE of the proximal iterates $\iter{{\vbeta}}{t}(\mX, \vepsilon, \vbeta_\star)$ as a function of iteration $t$. Finally, panels (c-h) show the histograms of the \emph{non-zero} coordinates of the RLS estimator ${\vbeta}_\opt(\mX, \vepsilon, \vbeta_\star)$ for the different sensing matrices. To avoid the high memory and computational cost for explicit generation and manipulation of Haar matrices of dimension $\dim = 2^{21}$, we used the \emph{Householder Dice} algorithm of the third author \citep{lu2021householder} in our experiments involving the \texttt{Haar} ensemble. We observe that, despite their markedly different constructions, the ensembles \texttt{SpikeSine}, \texttt{RandDCT}, \texttt{Haar} [with the latter two using $\mLambda= \mLambda_\texttt{SpikeSine}$ in \eqref{eq:Lambda_SS}], and the ensembles \texttt{Mask}, \texttt{RandDCT}, \texttt{Haar} [with the latter two using $\mLambda= \mLambda_\texttt{Mask}$ in \eqref{eq:Lambda_Hadamard}] seem to achieve identical results in \fref{univplot}.

The universality phenomenon shown in \fref{univplot} is yet another demonstration of the well-known observation in the literature \citep{donoho2009observed,monajemi2013deterministic,oymak2014case,abbara2020universality,ma2021spectral} that the performance of many sensing matrices are universal, as long as they have matching spectral distributions and that their singular vectors are in ``generic positions''. The main goal of this paper is to make this intuition precise and rigorous. Specifically, our contributions are two-fold:

\paragraph{1. Characterization of Universality classes:} We introduce a set of easy-to-verify technical conditions that define a universality class for sensing matrices (see Definition~\ref{def:univ-class}). In addition to having the standard i.i.d. and the rotational invariant ensembles as special cases, the universality classes that we define in this work also include sensing matrices constructed with very limited randomness, such as the \texttt{Haar}, \texttt{Mask}, and \texttt{RandDCT} ensembles considered above. See \sref{universality_examples} for details and examples of other ensembles that lie in the universality classes.

\paragraph{2. Universality of RLS estimators: } In \thref{RLS}, we show that, for all sensing matrices that lie in the same universality class, the dynamics of the proximal gradient algorithm in \eqref{eq:prox-method}---and more generally, the dynamics of a broad class of first-order methods (formally introduced in \sref{GFOM})---for solving the RLS problem in \eqref{eq:RLS} are asymptotically identical (in a sense to be made precise in Definition~\ref{def:PW2-eq}). Moreover, under additional strong convexity conditions (which guarantee the uniqueness of the RLS estimators), we show that the RLS estimators associated with matrices from the same universality class are also asymptotically identical. Finally, while we use regularized linear regression as our primary application, our universality result for first order methods (\thref{GFOM}) is stated more generally and may be applicable to inference problems beyond linear regression.
\begin{remark}
There have been rigorous studies of the universality phenomenon (see, e.g., \citep{donoho2010counting,tulino2010capacity,farrell2011limiting,anderson2014asymptotically}) in contexts related to \fref{univplot}. We provide a detailed discussion of these earlier works in the literature as well as several recent advances \citep{dudeja2020universality,dudeja2022universality,wang2022universality} in \sref{related}.
\end{remark}

\subsection{Main Results}

Our main results are to show that the performance of RLS estimators ${\vbeta}_\opt(\mX, \vbeta_\star, \vepsilon)$ and the dynamics of the proximal gradient method are asymptotically identical for all sensing matrices $\mX$ that lie in a given universality class. We now introduce this universality class in the definition below.

\begin{definition}[Spectral Universality Class]\label{def:univ-class} Given a compactly supported probability measure $\mu$ on $[0,\infty)$, we say that a sensing matrix $\mX$ lies in the universality class $\uclass{\mu}$ if:
\begin{enumerate}
    \item \emph{Random Signs.} $\mX = \mJ \mS$ where $\mJ \in \R^{\ssize \times \dim}$ is a deterministic matrix and $\mS = \diag(s_{1:\dim})$ is a diagonal matrix of i.i.d. signs $s_{1:\dim} \explain{i.i.d.}{\sim} \unif{\{\pm 1\}}$. 
\end{enumerate}
The sequence of deterministic matrices $\mJ \in \R^{\ssize \times \dim}$ satisfies:
\begin{enumerate}
\setcounter{enumi}{1}
    \item \emph{Bounded Operator Norm.} $\|\mJ\|_{\op} \lesssim 1$.
    \item \emph{Convergence of Empirical Spectral Measure.}  For any fixed $k \in \N$, 
    \begin{align*}
    \Tr[(\mJ \tran \mJ)^k] /\dim \rightarrow \int \lambda^k \mu(\diff \lambda) \quad \text{ as $\dim \rightarrow \infty$.}
    \end{align*}
    \item \emph{Generic Right Singular Vectors.}  For any fixed $k \in \N$, $\epsilon > 0$, 
    \begin{align} \label{eq:univ-class-moment-convergence}\Big\| (\mJ \tran \mJ)^k -  \frac{ \Tr[(\mJ \tran \mJ)^k]}{\dim} \mI_{\dim} \Big\|_{\infty} \lesssim \dim^{-1/2+\epsilon}.
    \end{align}
    This means that for any $k \in \N, \epsilon>0$ there are constants $C(k,\epsilon) > 0$, $\dim_0(k,\epsilon) \in \N$ such that:
    \begin{align*}
          \Big\| (\mJ \tran \mJ)^k -  \frac{ \Tr[(\mJ \tran \mJ)^k]}{\dim} \mI_{\dim} \Big\|_{\infty} \leq C(k,\epsilon) \cdot \dim^{-1/2 + \epsilon}  \quad \forall \; \dim \geq \dim_0(k,\epsilon).
    \end{align*}
    In the above display, for a matrix $\mA \in \R^{\dim \times \dim}$, $\|\mA\|_{\infty} \explain{def}{=} \max_{i,j \in [\dim]} |A_{ij}|$ is the entry-wise infinity norm.
\end{enumerate}
\end{definition}

\begin{remark}
We will show in \sref{universality_examples} that all four ensembles considered in \sref{demonstration} (i.e., \texttt{SpikeSine}, \texttt{Mask}, \texttt{RandDCT}, and \texttt{Haar}) belong to $\uclass{\mu}$ for some probability measure $\mu$. Note that in \defref{univ-class}, the randomness of the matrix $\mX = \mJ \mS$ entirely comes from the i.i.d. sign matrix $\mS$; the component $\mJ$ is deterministic. In \texttt{Mask} and \texttt{RandDCT}, however, the corresponding $\mJ$ matrix contains additional sources of randomness ({e.g.}, the random diagonal matrices $\set{\mD_\ell}_\ell$ in \eqref{eq:ensemble_Hadamard} and the random permutation matrix $\mP$ in \eqref{eq:ensemble_RandDCT}). For such cases, the statement that an ensemble lies in $\uclass{\mu}$ should be interpreted as follows: almost surely, any sequence of matrices drawn from the ensemble lies in $\uclass{\mu}$.
\end{remark}

\begin{remark}
The construction of the \texttt{Haar} ensemble in \eqref{eq:ensemble_Haar} does not contain a random sign matrix $\mS$, but we can always append one without changing the distribution of the ensemble. Indeed, due to the rotational invariance of the Haar matrix $\mV$, we have $\mV\tran \explain{d}{=} \mV\tran \mS$.
\end{remark}

To state our result, we will also need the following definition of asymptotic equivalence of random vectors.

\begin{definition}[Asymptotic Equivalence of Random Vectors]\label{def:PW2-eq} Let  $(\iter{\vv}{1}, \dotsc, \iter{\vv}{k})$ and $(\iter{\tilde{\vv}}{1}, \dotsc, \iter{\tilde{\vv}}{k})$  be two collections of $\dim$-dimensional random vectors realized in the same probability space as the signal vector $\vbeta_\star$. We say that  $(\iter{\vv}{1}, \dotsc, \iter{\vv}{k})$ and $(\iter{\tilde{\vv}}{1}, \dotsc, \iter{\tilde{\vv}}{k})$ are asymptotically equivalent in probability with respect to the Wasserstein-$2$ metric  if for any continuous test function $h: \R^{k+1} \rightarrow \R$  (independent of $\dim$) that satisfies:
\begin{subequations}\label{eq:pseudo_Lipschitz}
\begin{align}
    |h(x;\beta) - h(y;\beta)| & \leq L \|x - y\|   (1 + \|x\| + \|y\| + |\beta|^D) \; \forall \; x, y \; \in \; \R^k, \; \beta \; \in \; \R \intertext{and}
    |h(x; \beta)| & \leq L  (1 + \|x\|^D + |\beta|^D)
\end{align}
\end{subequations}
for some finite constants $L \geq 0$ and $D \in \N$, we have,
\begin{align*}
    \frac{1}{\dim} \sum_{i=1}^\dim h(\iter{v}{1}_i, \iter{v}{2}_i, \dotsc, \iter{v}{k}_i; (\beta_\star)_i) - \frac{1}{\dim} \sum_{i=1}^\dim h(\iter{\tilde{v}}{1}_i, \iter{\tilde{v}}{2}_i, \dotsc, \iter{\tilde{v}}{k}_i; (\beta_\star)_i)  \explain{P}{\rightarrow} 0,
\end{align*}
where $\explain{P}{\rightarrow}$ denotes convergence in probability. We denote equivalence in the above sense using the notation $(\iter{\vv}{1}, \iter{\vv}{2}, \dotsc, \iter{\vv}{k}; \vbeta_\star) \asymeq (\iter{\tilde{\vv}}{1}, \dotsc, \iter{\tilde{\vv}}{k}; \vbeta_\star)$. 
\end{definition}

\begin{remark}\label{rem:mse}
Let $x = (x_{1}, \ldots, x_{k}) \in \R^k$. It is easy to verify that for any $j \in [k]$, the function $h(x; \beta) = (x_j - \beta)^2$ satisfies the conditions in \eqref{eq:pseudo_Lipschitz}. Thus, $(\iter{\vv}{1}, \iter{\vv}{2}, \dotsc, \iter{\vv}{k}; \vbeta_\star) \asymeq (\iter{\tilde{\vv}}{1}, \dotsc, \iter{\tilde{\vv}}{k}; \vbeta_\star)$ implies, in particular, that
\begin{equation}
    \frac {\|\iter{{\vv}}{j} - \vbeta_\star \|^2} N  - \frac {\|\iter{\tilde{\vv}}{j} - \vbeta_\star \|^2} N \explain{P}{\rightarrow} 0 \qquad \text{for all } j \in [k].
\end{equation}
\end{remark}

\begin{theorem}\label{thm:RLS} Let $\mu$ be any compactly supported probability measure on $[0,\infty)$. Suppose that  $(\vbeta_\star, \vepsilon)$ satisfy \assumpref{RSN} and the regularizer satisfies \assumpref{convex-reg}. Let $\mX = \mJ \mS, \tilde{\mX} = \tilde{\mJ} \tilde{\mS}$ be two independent sensing matrices in the same universality class $\uclass{\mu}$. Then,
\begin{enumerate}
    \item Universality of Proximal Method Iterates: For any fixed $T \in \N$ (independent of $\dim$):
    \begin{align*}
        (\iter{{\vbeta}}{1}(\mX, \vepsilon, \vbeta_\star), \dotsc, \iter{{\vbeta}}{T}(\mX, \vepsilon, \vbeta_\star); \vbeta_\star ) \asymeq  (\iter{{\vbeta}}{1}(\tilde{\mX}, \vepsilon, \vbeta_\star), \dotsc, \iter{{\vbeta}}{T}(\tilde{\mX}, \vepsilon, \vbeta_\star); \vbeta_\star ).
    \end{align*}
    Here, $\iter{{\vbeta}}{1:T}(\mX, \vepsilon, \vbeta_\star)$ and $\iter{{\vbeta}}{1:T}(\tilde{\mX}, \vepsilon, \vbeta_\star)$ are the iterates generated by the proximal method \eqref{eq:prox-method} on the sensing matrix $\mX$ and $\tilde{\mX}$ respectively, with the signal vector $\vbeta_\star$ and noise vector $\vepsilon$.
    \item Universality of the RLS Estimator: Suppose that in addition, at least one of the following hold:
    \begin{enumerate}
        \item  $\rho$ is $\kappa$-strongly convex for some $\kappa > 0$. That is, for any $x,x^\prime \in \R$:
    \begin{align*}
        \rho(x^\prime) \geq \rho(x) + (x^\prime-x) \partial \rho(x) + \frac{\kappa(x-x^\prime)^2 }{2},
    \end{align*}
    where $\partial \rho(x)$ is any sub-gradient of $\rho$ at $x$.
    \item Or, there are constants $\kappa > 0$ and $\dim_0 \in \N$ such that:
    \begin{align*}
       \lambda_{\min}(\mJ \tran \mJ) \geq \kappa, \quad  \lambda_{\min}(\tilde{\mJ} \tran \tilde{\mJ}) \geq \kappa \quad \forall \; \dim \geq \dim_0.
    \end{align*}
    \end{enumerate}
    Then, the RLS estimator in \eqref{eq:RLS:1} is uniquely specified and
    \begin{align*}
       ({\vbeta}_\opt(\mX, \vepsilon, \vbeta_\star); \vbeta_\star ) \asymeq  ({\vbeta}_\opt(\tilde{\mX}, \vepsilon, \vbeta_\star); \vbeta_\star ).
    \end{align*}
\end{enumerate}
\end{theorem}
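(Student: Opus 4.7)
Part~1 follows essentially for free from the general universality theorem for first-order methods, \thref{GFOM}, once we recognize that the proximal gradient recursion \eqref{eq:prox-method} fits the GFOM template of \sref{GFOM}: at each step, the per-coordinate nonlinearity is the proximal operator $\eta(\cdot;\stepsize)$, which is $1$-Lipschitz (non-expansive) under \assumpref{convex-reg}. Since both $\mX$ and $\tilde{\mX}$ lie in $\uclass{\mu}$, this theorem directly delivers the claimed $\asymeq$-equivalence of $(\iter{{\vbeta}}{1}, \dotsc, \iter{{\vbeta}}{T})$ for any fixed $T$.

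For Part~2 I would argue by approximation: first show that the proximal iterates converge to $\vbeta_\opt$ at a rate uniform in $\dim$ and in the choice of matrix within $\uclass{\mu}$, then combine with Part~1 via a triangle inequality. Let $F(\vbeta) = \tfrac12\|\vy - \mX\vbeta\|^2 + \sum_i \reg(\beta_i) = \dim\cdot L(\vbeta;\mX,\vy)$. Under either hypothesis (a) or (b), $F$ is $\kappa$-strongly convex with a modulus independent of the specific $\mX \in \uclass{\mu}$: in case (b) this uses $\mX\tran\mX = \mS\mJ\tran\mJ\mS$ together with the orthogonality of $\mS$, so that $\mX\tran\mX$ and $\mJ\tran\mJ$ share spectra. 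Combined with the uniform smoothness bound $\|\mX\tran\mX\|_\op \leq \|\mJ\|_\op^2 \lesssim 1$, a step size $\stepsize$ chosen small enough but independently of $\mX$ yields the classical linear-convergence guarantee
\[
\|\iter{\vbeta}{T} - \vbeta_\opt\|^2 \;\leq\; q^{T-1}\, \|\iter{\vbeta}{1} - \vbeta_\opt\|^2, \qquad q \in (0,1),
\]
where $q$ depends only on the deterministic constants in \defref{univ-class} and on $\kappa$. The initial error is controlled by $\|\vbeta_\opt\|^2/\dim = O_\mathsf{P}(1)$, which I would obtain by testing $\kappa$-strong convexity at $\vbeta = \vzero$ to get $\|\vbeta_\opt\|^2 \leq (2/\kappa)(F(\vzero) - F(\vbeta_\opt))$, then using $F(\vzero)/\dim = O_\mathsf{P}(1)$ (from \assumpref{RSN}) and the lower bound $F(\vbeta_\opt)/\dim \geq -C$ coming from the properness of $\reg$.

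To conclude, for any test function $h$ satisfying \eqref{eq:pseudo_Lipschitz}, I would decompose
\[
\tfrac{1}{\dim}\textstyle\sum_i h(\vbeta_{\opt,i}(\mX),(\beta_\star)_i) - \tfrac{1}{\dim}\sum_i h(\vbeta_{\opt,i}(\tilde{\mX}),(\beta_\star)_i)
\]
by inserting the iterates $\iter{\vbeta}{T}(\mX)$ and $\iter{\vbeta}{T}(\tilde{\mX})$. The middle term vanishes in probability for each fixed $T$ by Part~1; the two side terms are bounded via Cauchy--Schwarz using the pseudo-Lipschitz inequality, yielding a prefactor $\|\iter{\vbeta}{T} - \vbeta_\opt\|/\sqrt{\dim} = O_\mathsf{P}(q^{(T-1)/2})$ multiplied by a factor controlled by bounded $D$-th moments of $\iter{\vbeta}{T}$, $\vbeta_\opt$, and $\vbeta_\star$. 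Taking $\dim \to \infty$ first and then $T \to \infty$ closes the argument.

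The main obstacle I expect is securing \emph{uniformity} in the second part: linear convergence of proximal gradient is classical, but every constant (the contraction rate $q$, the step size $\stepsize$, the initial-error bound, and the $D$-th moment bound on the iterates) must be traceable to the deterministic quantities in \defref{univ-class} and \assumpref{RSN}, independently of both $\mX$ and $T$. The moment bound is the subtlest point, since it requires iterating the non-expansiveness of $\eta$ together with $\|\mJ\|_\op \lesssim 1$ while preventing the constants from growing with $T$; without this, one cannot exchange the limits $\dim \to \infty$ and $T \to \infty$.
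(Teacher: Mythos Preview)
Your Part~2 strategy matches the paper's: use strong convexity to get a convergence rate for proximal gradient that is uniform in $\dim$ and in the matrix, then sandwich $\vbeta_\opt$ between $\iter{\vbeta}{T}$ and let $T\to\infty$ after $\dim\to\infty$. (The paper uses the $O(1/t)$ suboptimality bound from \factref{prox} rather than linear convergence, but either suffices.)

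Your Part~1, however, has a real gap. You assert that the proximal recursion ``fits the GFOM template'' directly, but it does not. In the paper's framework, the GFOM in \eqref{eq:GFOM} is driven by \emph{square} $\dim\times\dim$ matrices $\mM_{1:T}$ that form a strongly semi-random ensemble (\defref{strong-semirandom}), and the auxiliary information $\auxmat$ must have i.i.d.\ rows in $\R^\auxdim$. The proximal update contains the term $\mX\tran\vy = \mX\tran\mX\vbeta_\star + \mX\tran\vepsilon$. The first summand is fine: $\mX\tran\mX$ is $\dim\times\dim$ and, by \remref{ssr}, polynomials in $\mX\tran\mX$ are strongly semi-random whenever $\mX\in\uclass{\mu}$. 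But $\mX\tran\vepsilon$ applies the \emph{rectangular} matrix $\mX\tran\in\R^{\dim\times\ssize}$ to the $\ssize$-dimensional noise $\vepsilon$, and neither $\mX\tran$ nor $\vepsilon$ can serve as a GFOM ingredient: $\mX\tran$ is not $\dim\times\dim$, and $\vepsilon$ is not $\dim$-dimensional auxiliary information.

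The paper resolves this in two nontrivial steps you omitted. First, since $\vepsilon$ is Gaussian, one has the distributional identity $\mX\tran\vepsilon \overset{d}{=} \sqrt{\mX\tran\mX}\,\vw$ with $\vw\sim\gauss{\vzero}{\sigma^2\mI_\dim}$, so the iteration can be reparameterized entirely in terms of $\dim\times\dim$ objects with $(\vw,\vbeta_\star)$ as auxiliary information. Second, $\sqrt{\mX\tran\mX}$ is still not a polynomial in $\mX\tran\mX$, so it does not fit \defref{strong-semirandom}; the paper therefore approximates $\sqrt{\cdot}$ uniformly on $[0,C^2]$ by a polynomial $p_k$ (Weierstrass), runs an auxiliary iteration with $p_k(\mX\tran\mX)$ in place of $\sqrt{\mX\tran\mX}$, applies \thref{GFOM} to \emph{that}, and then lets $k\to\infty$ with a separate perturbation argument. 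Without these steps, there is no valid instantiation of \thref{GFOM}, and Part~1 does not go through.
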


As a consequence of \thref{RLS}, if the asymptotic performance of the RLS estimator is known for one sensing matrix in $\uclass{\mu}$, one immediately concludes the same asymptotic characterization for any other sensing matrix in $\uclass{\mu}$. A particularly convenient choice in $\uclass{\mu}$ is the rotationally invariant \texttt{Haar} ensemble $\mX_\texttt{Haar}(\mLambda)$ given in \eqref{eq:ensemble_Haar}, where the diagonal matrix $\mLambda$ is such that the empirical distribution of its diagonal elements $\set{\Lambda_{ii}}$ converges to the probability measure $\mu$. Later, in \sref{sp_inv}, we will verify that $\mX_\texttt{Haar}(\mLambda)$, and more generally, any matrix with matching spectral distribution and whose right singular vectors are sign and permutation invariant, lies in $\uclass{\mu}$. (See \lemref{sign-perm-inv-ensemble} and \exref{rot-inv-ensemble} for details.)

Since the random orthogonal matrix $\mV$ in \texttt{Haar} can be obtained by performing a QR decomposition on an i.i.d. Gaussian random matrix of the same size, the \texttt{Haar} ensemble inherits many nice statistical properties (such as the rotational invariance) of the i.i.d. Gaussian matrix. This makes the \texttt{Haar} ensemble a very convenient model for mathematical analysis. See \emph{e.g.} \citep{ma2017orthogonal, takeuchi2017rigorous, rangan2019vector,dudeja2020analysis,dudeja2020information,maillard2020phase, pmlr-v125-gerbelot20a,gerbelot2020asymptotic,fan2020approximate,ebpca, mondelli2021pca,ma2021analysis, venkataramanan2022estimation} for exact asymptotic characterization of several estimation problems involving the \texttt{Haar} ensemble. In particular, \citet{pmlr-v125-gerbelot20a} have characterized the asymptotic mean squared error of the RLS estimator for $\mX_{\texttt{Haar}}(\mLambda)$, showing that
\begin{align*}
    \frac{\|\vbeta_\star -  \vbeta_\opt(\mX_{\texttt{Haar}}(\mLambda), \vbeta_\star, \vepsilon)\|^2}{\dim} \explain{P}{\rightarrow} \zeta(\mu, \serv{B_\star}),
\end{align*}
for an explicit limiting value $\zeta(\mu, \serv{B_\star})$ determined by $\mu$ and the law of $\serv{B_\star}$ (cf. \assumpref{RSN}). As a consequence of this characterization and \thref{RLS}, we have for any $\mX \in \uclass{\mu}$:
\begin{align} \label{eq:asymp-mse}
     \frac{\|\vbeta_\star -  \vbeta_\opt(\mX, \vbeta_\star, \vepsilon)\|^2}{\dim} \explain{P}{\rightarrow} \zeta(\mu, \serv{B_\star}).
\end{align}
This follows as we can always realize the rotationally invariant sensing matrix $\mX_{\texttt{Haar}}(\mLambda)$ in the same probability space as a given $\mX \in \uclass{\mu}$ by sampling it independently of $\mX$. Hence, by \thref{RLS} and \remref{mse}:
\begin{align*}
    \frac{\|\vbeta_\star -  \vbeta_\opt(\mX_{\texttt{Haar}}(\mLambda), \vbeta_\star, \vepsilon)\|^2}{\dim} -  \frac{\|\vbeta_\star -  \vbeta_\opt(\mX, \vbeta_\star, \vepsilon)\|^2}{\dim} \explain{P}{\rightarrow} 0,
\end{align*}
which immediately yields \eqref{eq:asymp-mse}. 

As suggested by the above discussions, the universality principle established in \thref{RLS} provides a convenient approach to obtaining the asymptotic performance of RLS estimators on structured sensing ensembles such as \texttt{SpikeSine},  \texttt{Mask}, and \texttt{RandDCT}. It is intuitively clear where the challenge lies in directly analyzing these structured ensembles: Compared with \texttt{Haar}, these structured random matrices simply have much less “randomness”. Loosely speaking, the former consists of $\mathcal{O}(\dim^2)$ independent random variables (due to its connection to the i.i.d. Gaussian ensemble), while \texttt{SpikeSine} and \texttt{Mask} only build on $\mathcal{O}(\dim)$ independent random variables. Common tools in high-dimensional performance analysis, such as Gaussian width \citep{chandrasekaran2012convex,amelunxen2014living}, comparison inequalities for Gaussian processes \citep{gordon1985some,thrampoulidis2015regularized}, and state evolution of message passing algorithms \citep{bayati2011dynamics,takeuchi2017rigorous,rangan2019vector,fan2020approximate} are simply not equipped to tightly control the strong correlations that exist in these matrices constructed with such limited randomness. By appealing to the universality principle, we can bypass these technical challenges and study the more mathematically tractable \texttt{Haar} ensemble instead, with the guarantee that results obtained there can be transferred to \texttt{SpikeSine} and \texttt{Mask} in the high-dimensional setting.

\paragraph{Organization:} The rest of the paper is organized as follows. In \sref{universality_examples}, we study several concrete examples of matrix ensembles and show that they all belong to the universality class $\uclass{\mu}$, as defined in \defref{univ-class}, for some suitably chosen probability measure $\mu$. \sref{related} discusses related work and recent progress, both in the general area of high-dimensional analysis of estimation problems and along the specific theme of universality, in the literature. Our main result, \thref{RLS}, is proved in \sref{proof_strategy}. A key technical component of our proof, namely the universality of vector approximate message passing algorithms, is discussed in \sref{univ-vamp}. Additional technical details, as well as miscellaneous auxiliary results, are delegated to the appendix.

\paragraph{Notations:} We conclude this section by collecting some notations used throughout this paper. 
\begin{description}[font=\normalfont\emph,leftmargin=0cm,itemsep=0ex]
\item [Some common sets:] We will use $\N$ and $\R$ to denote the set of positive integers and the set of real numbers, respectively. We define $\W \explain{def}{=} \N \cup \{0\}$ as the set of non-negative integers. For each $\dim \in \N$, $[\dim]$ denotes the set $\{1, 2, 3, \dotsc, \dim\}$ and $\ortho(\dim)$ denotes the set of $\dim \times \dim$ orthogonal matrices. \item [Asymptotics:] Given a sequence $a_{\dim}$ and a non-negative sequence $b_{\dim}$ indexed by $\dim \in \N$ we say $a_\dim \ll b_\dim$ or $a_{\dim} = o(b_\dim)$ if $a_\dim/b_\dim \rightarrow 0$. Similarly we say $a_{\dim} \lesssim b_{\dim}$ or $a_\dim = O(b_\dim)$ if there exist fixed constants $c \geq 0$ and $N_0 \in \N$, such that $|a_\dim| \leq c\, b_\dim$ for all $\dim \geq N_0$.
\item[Linear Algebra:] For a vector $v \in \R^{k}$, we use $\|v\|_1, \|v\|, \|v\|_\infty$ to denote its $\ell_1$, $\ell_2$ and $\ell_\infty$ norms, respectively and $\|v\|_0$ to denote the number of non-zero coordinates (or sparsity) of $v$. For a matrix $Q \in \R^{k \times k}$, we use $\|Q\|_{\op}, \|Q\|$ to denote the operator (spectral) norm and Frobenius norm of $Q$ respectively. On the other hand $\|Q\|_\infty \explain{def}{=} \max_{i,j \in [k]} |Q_{ij}|$ denotes the entry-wise $\ell_\infty$ norm. For a symmetric matrix $Q \in \R^{k \times k}$, $\lambda_{\min}(Q)$ and $\lambda_{\max}(Q)$ denote the smallest and largest eigenvalues of $Q$. $1_k$ denotes the vector $(1, 1, \dotsc, 1)$ in $\R^k$, $0_k$ denotes the vector $(0, 0, \dotsc, 0)$ in $\R^k$, and $e_1, e_2, \dotsc, e_{k}$ denote the standard basis vectors in $\R^k$. When the context makes the dimension clear, we will write $1_k$ as $1$ and $0_k$ as $0$. Analogously, $I_{k}$ denotes the $k \times k$ identity matrix.  We reserve the bold-face font for matrices and vectors whose dimensions diverge as $\dim$ (the dimension of the signal) grows to $\infty$. For example, the signal $\vbeta_\star \in \R^{\dim}$, the sensing matrix $\mX \in \R^{\ssize \times \dim}$ and the RLS estimator $\vbeta_{\opt} \in \R^{\dim}$ are bold-faced. 
\item[Gaussian Distributions and Hermite Polynomials:] The 
Gaussian distribution on $\R^{k}$ with mean vector $\mu \in \R^k$ and covariance matrix $\Sigma \in \R^{k \times k}$ is denoted by $\gauss{{\mu}}{\Sigma}$. For each $i \in \W$, $H_i : \R \mapsto \R$ denotes the univariate, normalized Hermite polynomial of degree $i$. The univariate Hermite polynomials are orthonormal polynomials for the standard Gaussian measure $\gauss{0}{1}$ on $\R$. This means that for $Z \sim \gauss{0}{1}$, $\E H_i^2(Z) = 1$ for each $i \in \W$ and $\E[H_i(Z) H_j(Z)] = 0$ for $i, j \in \W$ and $i \neq j$. The first few Hermite polynomials are $H_0(z) = 1, \; H_1(z) = z, \; H_2(z) = (z^2-1)/\sqrt{2}$. The multivariate Hermite polynomials generalize uni-variate Hermite problems to higher dimensions. For a degree vector $r = (r_1, \dotsc, r_k) \in \W^k$, the $k$-variate degree-$r$ Hermite polynomial $\hermite{r} : \R^k \mapsto \R$ is defined as:
\begin{align} \label{eq:hermite-notation}
    \hermite{r}(z_1, \dotsc, z_k) & \explain{def}{=} \prod_{i=1}^k \hermite{r_i}(z_i),
\end{align}
where the polynomials that appear on the RHS are the usual uni-variate Hermite polynomials. The $k$-variate Hermite polynomials are orthonormal polynomials for the standard Gaussian measure $\gauss{0}{I_k}$ on $\R^k$. We refer the reader to \citet[Chapter 11]{o2014analysis} for additional background on Hermite polynomials. 
\item[Other Distributions:] For a finite set $A$, $\unif{A}$ denotes the uniform distribution on $A$. For e.g., $\unif{\{\pm 1\}}$ and $\unif{\{\pm 1\}^\dim}$ denote the uniform distributions on $\{-1,1\}$ and the $\dim$-dimensional Boolean hypercube $\{-1,1\}^\dim$, respectively. We will use $\unif{\ortho(\dim)}$ to denote the Haar measure on the orthogonal group $\ortho(\dim)$. For any $x \in \R$, the probability measure $\delta_x$ on $\R$ denotes the point mass at $x$. 
\end{description}

\section{The Universality Class: Examples}
\label{sec:universality_examples}

In this section, we give several examples of matrix ensembles that lie in a spectral universality class, as defined in \defref{univ-class}. Our discussions in this section serve two purposes: (1) they showcase the wide applicability of the notion of universality class introduced in this work; and (2) they also demonstrate, on several different matrix ensembles, how to verify the key technical condition \eqref{eq:univ-class-moment-convergence} in \defref{univ-class}.

\subsection{Signed Incoherent Tight Frames}
\label{sec:sitf}

We start by considering matrices associated with \emph{incoherent tight frames}. Our motivation for these matrices comes from the work of \citet{monajemi2013deterministic}, who provide several examples of \emph{deterministic} incoherent tight frames that exhibit universality when used as sensing matrices in noiseless compressed sensing. 

Recall \revised{(see \emph{e.g.,} \citep{casazza2012introduction} for an overview on frame theory)} that a \emph{tight frame}\footnote{\revised{Strictly speaking, the only requirement for $\mF_{\ssize,\dim}$ to be a tight frame is $\mF_{\ssize,\dim} \mF \tran_{\ssize,\dim} = \mI_{\ssize}$. The additional requirement $(\mF_{\ssize, \dim} \tran\mF_{\ssize, \dim})_{ii}  = \ssize/\dim \quad \forall \; i \; \in \; [\dim]$ makes $\mF_{\ssize,\dim}$ an \emph{equal norm} tight frame. However, since all tight frames considered in the work of \citet{monajemi2013deterministic} and this paper are equal normed, we will omit the equal norm qualifier for brevity.}} is a deterministic $\ssize \times \dim$ matrix $\mF_{\ssize,\dim}$ with $\ssize \leq \dim$ that satisfies:
\begin{align}\label{eq:tight-frame}
    \mF_{\ssize,\dim} \mF \tran_{\ssize,\dim} = \mI_{\ssize}, \quad (\mF_{\ssize, \dim} \tran  \mF_{\ssize, \dim})_{ii}  = \frac{\ssize}{\dim} \quad \forall \; i \; \in \; [\dim].
\end{align}
We call \revised{(a sequence of)} tight frames $\mF_{\ssize,\dim}$ \emph{incoherent} if:
\begin{align} \label{eq:incoherent}
    \|\mF_{\ssize, \dim} \tran  \mF_{\ssize, \dim} -(\ssize/\dim) \cdot \mI_{\dim} \|_{\infty} \lesssim \dim^{-1/2+\epsilon} \quad \forall \; \epsilon \; > 0.
\end{align}
These two conditions imply that the rows of a tight frame are orthonormal whereas the columns have equal norms and approximately pairwise orthogonal. 

\begin{definition}
A \emph{signed} incoherent tight frame is a matrix of the form:
\begin{align} \label{eq:sign-frame}
    \mX_{\mathtt{SF}} = \mF_{\ssize,\dim} \cdot \mS.
\end{align}
where $\mF_{\ssize,\dim}$ is an incoherent tight frame and $\mS = \diag{(s_{1:\dim})}, \; s_{1:\dim} \explain{i.i.d.}{\sim} \unif{\{\pm 1\}}$ is a uniformly random sign matrix. 
\end{definition}

Matrices of the form \eqref{eq:sign-frame} can be viewed as natural semi-random analogs of the deterministic matrices considered by \citet{monajemi2013deterministic}. The following lemma shows that signed incoherent tight frames lie in the spectral universality class corresponding to a Bernoulli distribution.

\begin{lemma}\label{lem:frames} Let $ \mX_{\mathtt{SF}} = \mF_{\ssize,\dim} \cdot \mS$ be a $\ssize \times \dim$ signed incoherent tight frame with converging aspect ratio $\ssize /\dim \rightarrow \alpha \in (0,1]$. Then, $ \mX_{\mathtt{SF}} \in \uclass{\mathrm{Bern}(\alpha)}$, where $\mathrm{Bern}(\alpha)$ denotes the Bernoulli distribution with mean $\alpha$.
\end{lemma}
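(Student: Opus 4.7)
The plan is to verify the four defining conditions of the universality class $\uclass{\mathrm{Bern}(\alpha)}$ in \defref{univ-class} one-by-one with the identification $\mJ = \mF_{\ssize,\dim}$ and $\mS$ being the sign matrix already present in \eqref{eq:sign-frame}. Conditions (1) and (2) will be essentially immediate from the tight-frame defining relation \eqref{eq:tight-frame}: the representation $\mX_\mathtt{SF} = \mJ \mS$ is the definition itself, and $\mF_{\ssize,\dim} \mF_{\ssize,\dim}\tran = \mI_{\ssize}$ forces $\|\mF_{\ssize,\dim}\|_{\op} = 1$.

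The main observation that makes (3) and (4) tractable is that $\mF_{\ssize,\dim}\tran \mF_{\ssize,\dim}$ is an orthogonal projection, hence idempotent. Indeed,
\begin{align*}
(\mF_{\ssize,\dim}\tran \mF_{\ssize,\dim})^2 \;=\; \mF_{\ssize,\dim}\tran (\mF_{\ssize,\dim}\mF_{\ssize,\dim}\tran)\mF_{\ssize,\dim} \;=\; \mF_{\ssize,\dim}\tran \mF_{\ssize,\dim},
\end{align*}
so $(\mF_{\ssize,\dim}\tran \mF_{\ssize,\dim})^k = \mF_{\ssize,\dim}\tran \mF_{\ssize,\dim}$ for every $k \in \N$. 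Its eigenvalues are therefore $\ssize$ ones and $\dim-\ssize$ zeros, yielding $\Tr[(\mF_{\ssize,\dim}\tran \mF_{\ssize,\dim})^k]/\dim = \ssize/\dim \to \alpha$, which coincides with the $k$-th moment of $\mathrm{Bern}(\alpha)$ for every $k \geq 1$ (and trivially for $k=0$). This gives condition (3).

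For condition (4), idempotence collapses the target to the single inequality
\begin{align*}
\Big\|\mF_{\ssize,\dim}\tran \mF_{\ssize,\dim} - \tfrac{\ssize}{\dim} \mI_{\dim}\Big\|_{\infty} \;\lesssim\; \dim^{-1/2+\epsilon},
\end{align*}
which is literally the incoherence assumption \eqref{eq:incoherent}. Thus conditions (3) and (4) both reduce to the idempotence of $\mF_{\ssize,\dim}\tran \mF_{\ssize,\dim}$ combined with the definitional properties \eqref{eq:tight-frame}–\eqref{eq:incoherent} of a signed incoherent tight frame.

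There is no real obstacle here; the proof is essentially an unpacking of definitions, and the only step requiring a moment's thought is recognizing that the tight-frame identity $\mF_{\ssize,\dim}\mF_{\ssize,\dim}\tran = \mI_{\ssize}$ upgrades the polynomial conditions (3) and (4) in $(\mF_{\ssize,\dim}\tran \mF_{\ssize,\dim})^k$ into conditions on the single matrix $\mF_{\ssize,\dim}\tran \mF_{\ssize,\dim}$. The lemma follows by assembling the four verified conditions.
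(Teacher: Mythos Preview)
Your proposal is correct and follows essentially the same route as the paper: both arguments hinge on the observation that the tight-frame identity $\mF_{\ssize,\dim}\mF_{\ssize,\dim}\tran = \mI_\ssize$ makes $\mF_{\ssize,\dim}\tran\mF_{\ssize,\dim}$ idempotent, which collapses conditions (3) and (4) for all $k$ down to the defining properties \eqref{eq:tight-frame}--\eqref{eq:incoherent}. The only cosmetic difference is that you phrase idempotence in terms of orthogonal projections and eigenvalues, whereas the paper just writes out $(\mF\tran\mF)^k = \mF\tran(\mF\mF\tran)^{k-1}\mF = \mF\tran\mF$ directly.
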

\begin{proof}
We need to check that $\mF_{\ssize,\dim}$ satisfies the requirements of \defref{univ-class}. Observe that \eqref{eq:tight-frame} and \eqref{eq:incoherent} guarantee that:
\begin{align*}
\|\mF_{\ssize,\dim}\|_{\op} = 1, \quad \Tr(\mF_{\ssize,\dim} \tran \mF_{\ssize,\dim}) = \ssize, \quad  
    \| \mF_{\ssize,\dim} \tran \mF_{\ssize,\dim} - \dim^{-1} \cdot \Tr(\mF_{\ssize,\dim} \tran \mF_{\ssize,\dim}) \cdot   \mI_{\dim}\|_{\infty}\lesssim \dim^{-1/2+\epsilon}, 
\end{align*}
for any $\epsilon > 0$. Furthermore, for any $k \in \N$, we can compute:
\begin{align*}
    (\mF_{\ssize,\dim} \tran \mF_{\ssize,\dim})^k & = \mF_{\ssize,\dim} \tran (\mF_{\ssize,\dim} \mF_{\ssize,\dim}\tran)^{k-1} \mF_{\ssize,\dim} \explain{\eqref{eq:tight-frame}}{=} \mF_{\ssize,\dim}\tran\mF_{\ssize,\dim}.
\end{align*}
Hence,  for any $k \in \N$ and any $\epsilon>0$: 
\begin{align*}
     \| (\mF_{\ssize,\dim} \tran \mF_{\ssize,\dim})^k - \dim^{-1} \cdot \Tr[(\mF_{\ssize,\dim} \tran \mF_{\ssize,\dim})^k] \cdot   \mI_{\dim}\|_{\infty}\lesssim \dim^{-1/2+\epsilon}.
\end{align*}
Finally note that:
\begin{align*}
    \frac{\Tr[(\mF_{\ssize,\dim} \tran \mF_{\ssize,\dim})^k]}{\dim} & =  \frac{\Tr(\mF_{\ssize,\dim} \tran \mF_{\ssize,\dim})}{\dim} \explain{\eqref{eq:tight-frame}}{=} \frac{\ssize}{\dim} \rightarrow \alpha.
\end{align*}Consequently, $\mX_{\mathtt{SF}} \in \uclass{\mu_\alpha}$ with $\mu_\alpha = \mathrm{Bern}(\alpha)$, since
\begin{align*}
    \int \lambda^k  \mu_\alpha(\diff \lambda) & = \alpha \quad \forall \; k \; \in \; \N.
\end{align*}
This concludes the proof of the lemma. 
\end{proof}
Incoherent tight frames encompass two important classes of matrices considered by \citet{monajemi2013deterministic}, which we discuss below.

\paragraph{1. Signed spikes+orthogonal matrices.} Assume that $\dim = 2\ssize $ is even. A spikes+orthogonal matrix is a $\ssize \times \dim$ matrix of the form
\begin{align}\label{eq:SO}
    \mF_{\mathtt{SO}} \explain{def}{=} \frac{1}{\sqrt{2}} \cdot  \begin{bmatrix}\mI_{\ssize} & \mO_{\ssize} \end{bmatrix}
\end{align}
where $\mO_{\dim}$ is any \emph{deterministic} $\ssize \times \ssize$ orthogonal matrix that satisfies the \emph{delocalization} estimate:
\begin{align*}
    \|\mO_{\ssize}\|_{\infty}  \lesssim \ssize^{-1/2 + \epsilon} \quad \forall \; \epsilon > 0.
\end{align*}
Examples of orthogonal matrices which satisfy the above property include discrete cosine/sine transform (DCT/DST) matrices, Hadamard-Walsh matrices and Discrete Fourier Transform (DFT) matrices (in the complex case). Thus, the \texttt{SpikeSine} ensemble considered in \sref{universality_examples} is just a special case of \eqref{eq:SO}.

Observe that spikes+orthogonal matrices satisfy:
\begin{align*}
    \mF_{\mathtt{SO}} \mF_{\mathtt{SO}}\tran & = \mI_{\ssize}, \quad\text{and}\quad
    \mF_{\mathtt{SO}} \tran \mF_{\mathtt{SO}}  = \frac{1}{2} \begin{bmatrix} \mI_{\ssize} & \mO_{\ssize} \\ \mO_{\ssize} \tran & \mI_{\dim}\end{bmatrix},
\end{align*}
and hence these matrices are incoherent tight frames [cf. \eqref{eq:tight-frame}, \eqref{eq:incoherent}]. Applying \lemref{frames}, we can conclude that the semi-random analog $\mX_{\mathtt{SO}} \explain{def}{=} \mF_{\mathtt{SO}} \cdot \mS$ (where $\mS$ is a uniformly random sign diagonal matrix) of the deterministic spikes+orthogonal matrices investigated by \citet{monajemi2013deterministic} lie in the universality class $\uclass{\mathrm{Bern(1/2)}}$. 

\paragraph{2. Signed Equiangular Tight Frames (ETFs).} A $\ssize \times \dim$ matrix $\mF_{\mathtt{ETF}}$ is an equiangular tight frame if it is a tight frame [\emph{i.e.}, it satisfies \eqref{eq:tight-frame}] and has equiangular columns. That is,
\begin{align}\label{eq:ETF}
    |(\mF_{\mathtt{ETF}} \tran \mF_{\mathtt{ETF}})_{ij}| & = c_{\ssize,\dim} \quad \forall \; i,j \; \in [\dim], \; i \neq j,
\end{align}
for some constant $c_{\ssize,\dim}$ (possibly dependent on $\ssize,\dim$). \citet{monajemi2013deterministic} empirically observed that many well-known ETFs (in the complex case) such as:
\begin{enumerate}
    \item Paley ETF constructed by \citet{bandeira2013road},
    \item Grassmannian ETF constructed by \citet{strohmer2003grassmannian}, 
    \item Delsarte-Goethals ETFs constructed by \citet{calderbank2010construction}. 
    \item Linear Chirp ETF constructed by \citet{applebaum2009chirp}
\end{enumerate}
exhibit universality properties in the context of noiseless compressed sensing. These matrices are complex ETFs that satisfy the natural complex analog of the conditions \eqref{eq:tight-frame} and \eqref{eq:ETF}:
\begin{align}\label{eq:CETF} 
    \mF_{} \mF \herm = \mI_{\ssize}, \quad (\mF_{\ssize, \dim} \herm  \mF_{\ssize, \dim})_{ii}  = \frac{\ssize}{\dim} \quad \forall \; i \; \in \; [\dim], \quad |(\mF\herm \mF)_{ij}| = c_{\ssize,\dim} \quad \forall \; i,j \; \in [\dim], \; i \neq j.
\end{align}
In the above display $\mF \herm$ denotes the Hermitian transpose of the matrix $\mF \in \C^{\ssize \times \dim}$. It is well-known that for \emph{any} ETF (real or complex) the constant $c_{\ssize,\dim}$ in \eqref{eq:ETF} and \eqref{eq:CETF} must be \citep[Theorem 2.3]{strohmer2003grassmannian}:
\begin{align}\label{eq:ETF-const}
    c_{\ssize,\dim} = \sqrt{\frac{\ssize}{\dim^2}\frac{\dim - \ssize}{(\dim-1)}} \leq \frac{1}{\sqrt{\dim}},
\end{align}
where the inequality follows from the assumption $1 \leq \ssize \leq \dim$. The characterization \eqref{eq:ETF-const} can be readily seen by computing the Frobenious norm of $\mF_{\mathtt{ETF}} \tran \mF_{\mathtt{ETF}}$ in two ways:
\begin{align*}
    \|\mF_{\mathtt{ETF}} \tran \mF_{\mathtt{ETF}}\|^2 & = \sum_{i,j=1}^\dim |\mF_{\mathtt{ETF}} \tran \mF_{\mathtt{ETF}})_{ij}|^2 \explain{\eqref{eq:tight-frame},\eqref{eq:ETF}}{=} \dim \cdot \frac{\ssize^2}{\dim^2} + \dim(\dim -1) \cdot c_{\ssize,\dim}^2, \\
     \|\mF_{\mathtt{ETF}} \tran \mF_{\mathtt{ETF}}\|^2 & = \Tr( \mF_{\mathtt{ETF}} \tran \mF_{\mathtt{ETF}} \mF_{\mathtt{ETF}} \tran \mF_{\mathtt{ETF}})  \explain{\eqref{eq:tight-frame}}{=} \Tr(\mF_{\mathtt{ETF}} \tran \mF_{\mathtt{ETF}})  \explain{\eqref{eq:tight-frame}}{=} \ssize.
\end{align*}
Equating these expression yields \eqref{eq:ETF-const}. Hence any ETF is necessarily incoherent in the sense of \eqref{eq:incoherent}. Consequently, \lemref{frames} shows that signed (real) ETFs $\mX_{\mathtt{ETF}} \explain{def}{=} \mF_{\mathtt{ETF}} \mS$ (where $\mS$ is a uniformly random sign diagonal matrix), which are real-valued, semi-random analogs of the deterministic complex ETFs studied by \citet{monajemi2013deterministic}, lie in the universality class $\uclass{\mathrm{Bern}(\alpha)}$, where $\alpha$ is the limiting aspect ratio $\ssize/\dim$ of the ETF $\mF_{\mathtt{ETF}}$. 

\subsection{Masked Orthogonal Sensing Matrices}
\label{sec:mask}

A \emph{masked orthogonal sensing matrix} is a $\ssize \times \dim$  matrix with integer aspect ratio $L = \dim/\ssize \in \N$  of the form:
\begin{align} \label{eq:masked-design}
    \mX_{\mathtt{Mask}} = \begin{bmatrix} \mD_1 \mO & \mD_2 \mO &\hdots & \mD_L \mO  \end{bmatrix} \mS
\end{align}
where:
\begin{enumerate}
    \item $L$ is the number of masks, which is assumed to be held fixed as $\dim \rightarrow \infty$.
    \item The matrices $\mD_{1:L}$ represent the masks. For each $\ell \in [L]$, $\mD_{\ell} = \diag(d_{\ell,1}, \dotsc, d_{\ell, \ssize})$ is a diagonal matrix whose entries $d_{\ell,1}, \dotsc, d_{\ell, \ssize}$ are i.i.d. copies of a symmetric and bounded random variable $\serv{D}$ (that is, $\serv{D} \explain{d}{=} - \serv{D}$ and $|\serv{D}| \leq K$ for some finite constant $K$). Furthermore, $\mD_{1}, \mD_{2}, \dotsc, \mD_{L}$ are sampled independently of each other.
    \item $\mO$ is deterministic $\ssize \times \ssize$ delocalized orthogonal matrix which satisfies:
    \begin{align}\label{eq:delocalization-mask}
        \|\mO\|_{\infty} \lesssim \ssize^{-1/2+\epsilon} \quad \forall \; \epsilon  >  0.
    \end{align}
    \item $\mS = \diag(s_{1:\dim}), \; s_{1:\dim} \explain{i.i.d.}{\sim} \unif{\{\pm 1\}}$ is a uniformly random sign diagonal matrix. 
\end{enumerate}
Note that the ensemble defined in \eqref{eq:ensemble_Hadamard} is just a special case of \eqref{eq:masked-design}, with the orthogonal matrix $\mO$ being the Hadamard-Walsh matrix, which satisfies \eqref{eq:delocalization-mask}. The following lemma identifies the universality class containing masked orthogonal sensing matrices. 

\begin{lemma} Let $\delta_0$ denote the Dirac measure at $0$, and let $\nu_{\serv{R}}$ denote the law of the random variable
\begin{align}\label{eq:DR}
    \serv{R} \explain{def}{=} \sum_{\ell = 1}^L \serv{D}_{\ell}^2,
\end{align}
where $\serv{D}_{1:L}$ are $L$ i.i.d. copies of the random variable $\serv{D}$ that was used to generate the diagonal masks $\mD_{1:L}$. Then, with probability 1, $ \mX_{\mathtt{Mask}}$ lies in the spectral universality class $\uclass{\mu_{L}}$  where $\mu_{L} \explain{def}{=} (1-1/L) \cdot \delta_0 + 1/L \cdot \nu_{\serv{R}}$.
\end{lemma}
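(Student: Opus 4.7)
The plan is to verify Conditions~1--4 of \defref{univ-class} for the matrix $\mJ = [\mD_1 \mO, \ldots, \mD_L \mO]$; Condition~1 is built into the construction \eqref{eq:masked-design}. The starting observation is that
\begin{align*}
\mJ \mJ^\top = \sum_{\ell=1}^L \mD_\ell \mO \mO^\top \mD_\ell = \sum_{\ell=1}^L \mD_\ell^2 =: \mR,
\end{align*}
a diagonal matrix whose entries $R_k = \sum_\ell d_{\ell,k}^2$ are i.i.d.\ copies of $\serv{R}$, each bounded by $LK^2$, where $K$ is the a.s.\ bound on $|\serv{D}|$. Thus $\|\mJ\|_\op^2 = \max_k R_k \leq LK^2$, giving Condition~2. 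For Condition~3 we use $\Tr[(\mJ^\top \mJ)^m] = \Tr[\mR^m] = \sum_k R_k^m$, and the strong law of large numbers yields $N^{-1}\Tr[(\mJ^\top\mJ)^m] = (1/L) \cdot M^{-1}\sum_k R_k^m \to (1/L)\,\E[\serv{R}^m] = \int \lambda^m \, \mu_L(d\lambda)$ almost surely.

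The technical heart of the argument will be Condition~4. The key identity is $(\mJ^\top \mJ)^m = \mJ^\top (\mJ\mJ^\top)^{m-1}\mJ = \mJ^\top \mR^{m-1} \mJ$, so the $(a,b)$-block (indexed by $a,b \in [L]$) equals $\mO^\top \mD_a \mR^{m-1} \mD_b \mO$, whose $(i,j)$-entry is $\sum_{k=1}^M O_{ki} O_{kj}\, g_k^{(a,b)}$ with $g_k^{(a,b)} := d_{a,k} d_{b,k} R_k^{m-1}$. I would bound each such entry by $O(M^{-1/2+\epsilon})$ using the delocalization $\|\mO\|_\infty \lesssim M^{-1/2+\delta}$ combined with Hoeffding-type concentration, and then take a union bound over all $L^2 M^2 = O(N^2)$ entries and apply Borel--Cantelli to upgrade to an almost-sure statement.

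The main obstacle lies in the \emph{off-diagonal} blocks ($a \neq b$), where no SLLN-style averaging applies directly. The key observation is that, since $R_k = d_{a,k}^2 + S_{a,k}$ with $S_{a,k} := \sum_{\ell \neq a} d_{\ell,k}^2$, the factor $R_k^{m-1}$ is an even polynomial in $d_{a,k}$, so $g_k^{(a,b)}$ is \emph{odd} in $d_{a,k}$. Conditioning on every variable except $\vxi_a := (d_{a,1}, \ldots, d_{a,M})$, the summands $O_{ki}O_{kj}\,g_k^{(a,b)}$ are independent across $k$, bounded by $|O_{ki} O_{kj}| \cdot K^2 (LK^2)^{m-1}$, and have conditional mean zero by the assumed symmetry of $\serv{D}$. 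A conditional Hoeffding inequality combined with $\sum_k (O_{ki}O_{kj})^2 \leq \|\mO\|_\infty^2 \lesssim M^{-1+2\delta}$ then yields, for any $\epsilon > \delta$,
\begin{align*}
\P\!\left(\left|(\mO^\top \mD_a \mR^{m-1} \mD_b \mO)_{ij}\right| > M^{-1/2+\epsilon}\right) \leq 2\exp(-c\,M^{2(\epsilon-\delta)}).
\end{align*}

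For the diagonal blocks ($a = b$), let $W_k := d_{a,k}^2 R_k^{m-1}$, which are i.i.d.\ bounded non-negative variables with common mean $\mu_W = \E[\serv{D}^2 \serv{R}^{m-1}] = \E[\serv{R}^m]/L$ (the last equality follows from symmetry over $a$). For $i \neq j$, orthonormality of $\mO$ gives $\sum_k O_{ki}O_{kj} = 0$, so $\sum_k O_{ki}O_{kj}W_k = \sum_k O_{ki}O_{kj}(W_k - \mu_W)$, and the same Hoeffding argument delivers the bound $O(M^{-1/2+\epsilon})$. For $i = j$, I would decompose
\begin{align*}
\sum_k O_{ki}^2 W_k - \tau_m = \sum_k (O_{ki}^2 - 1/M)(W_k - \mu_W) + \left(\tfrac{1}{M}\sum_k W_k - \mu_W\right) + (\mu_W - \tau_m),
\end{align*}
where $\tau_m = N^{-1}\sum_k R_k^m$. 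The first term uses $\sum_k(O_{ki}^2 - 1/M) = 0$ together with the delocalization bound; the second is standard Hoeffding concentration of bounded i.i.d.\ variables; and the third compares two empirical means with the same limit $\E[\serv{R}^m]/L$. All three are $O(M^{-1/2+\epsilon})$ simultaneously in $i$ after a union bound. Finally, a union bound over the $L^2 M^2$ block entries against the super-polynomial tails, followed by a countable intersection over $m \in \N$ and over $\epsilon = 1/j$, $j \in \N$, concludes via Borel--Cantelli that $\mX_{\mathtt{Mask}} \in \uclass{\mu_L}$ almost surely.
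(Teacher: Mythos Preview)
Your proposal is correct and follows essentially the same approach as the paper: compute $\mJ\mJ^\top = \mR$, write $(\mJ^\top\mJ)^m$ in block form $\mO^\top\mD_a\mR^{m-1}\mD_b\mO$, and apply Hoeffding entrywise followed by a union bound and Borel--Cantelli. The paper streamlines a few of your steps: rather than conditioning and splitting into the cases $a\neq b$, $a=b$ with $i\neq j$, and $a=b$ with $i=j$, it simply observes that the summands $d_{a,k}d_{b,k}R_k^{m-1}$ are already \emph{unconditionally} i.i.d.\ and bounded across $k$, applies Hoeffding once to show every entry concentrates around its mean, and then computes $\E[(\mJ^\top\mJ)^m] = (\E[\serv{R}^m]/L)\,\mI_\dim$ in one stroke (using orthonormality of $\mO$ and the symmetry of $\serv{D}$). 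This single estimate then yields both Condition~3 and Condition~4 by the triangle inequality, so the separate SLLN argument and your three-term decomposition for the diagonal entries are not needed.
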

\begin{proof} Since $\mX_{\mathtt{Mask}} = \mJ \mS$ for:
\begin{align*}
    \mJ \explain{def}{=} \begin{bmatrix} \mD_1 \mO & \mD_2 \mO &\hdots & \mD_L \mO  \end{bmatrix},
\end{align*}
it suffices to verify that $\mJ$ satisfies the requirements of \defref{univ-class} (with probability 1). Note that
\begin{align} \label{eq:R-def}
    \mJ  \mJ \tran  & = \sum_{\ell=1}^L  \mD_{\ell}^2 \explain{def}{=} \mR = \diag(r_1, \dotsc, r_{\ssize}),
\end{align}
where $r_a = \sum_{\ell=1}^L d_{\ell, a}^2$ for $a \in [M]$. Hence, for any $k \geq 1$ we have:
\begin{align} 
    (\mJ \tran \mJ)^k  = \mJ \tran (\mJ \mJ\tran)^{k-1} \mJ  &=  \mJ \tran \mR^{k-1} \mJ \nonumber \\&=  \begin{bmatrix} \mO \tran \mD_{1} \mR^{k-1} \mD_{1} \mO & \mO \tran \mD_1 \mR^{k-1} \mD_2 \mO & \hdots &   \mO \tran \mD_1 \mR^{k-1} \mD_L \mO \\ \vdots & \vdots & \vdots & \vdots \\  \mO \tran \mD_{L} \mR^{k-1} \mD_{1} \mO & \mO \tran \mD_L \mR^{k-1} \mD_2 \mO & \hdots &   \mO \tran \mD_L \mR^{k-1} \mD_L \mO \end{bmatrix}.  \label{eq:JTJ-k}
\end{align}
Observe that for any $s,t \in [L]$ and any $i,j \in [\ssize]$, the $(i,j)$ entry of the matrix $\mO \tran \mD_s \mR^{k-1} \mD_t \mO$ is of the form:
\begin{align*}
    (\mO \tran \mD_{s} \mR^{k-1} \mD_{t} \mO)_{ij} & = \sum_{a=1}^\ssize O_{ai} O_{aj} ( d_{s,a}   d_{t,a}  r_{a}^{k-1}).
\end{align*}
Hence $(\mO \tran \mD_{s} \mR^{k-1} \mD_{t} \mO)_{ij}$ is a linear combination of i.i.d. random variables $\{d_{s,a}  d_{t,a}  r_{a}^{k-1} : a \in [\ssize]\}$ which are uniformly bounded:
\begin{align*}
    \abs{d_{s,a}   d_{t,a}  r_{a}^{k-1}} & \explain{\eqref{eq:R-def}}{=} \abs{d_{s,a}   d_{t,a}  \left(\textstyle\sum_{\ell=1}^L d_{\ell, a}^2 \right)^{k-1}} \leq L^{k-1}  K^{2k}, 
\end{align*}
where the last inequality follows from the fact that $\{d_{s,a}: s \in [L], \; a \in [\ssize]\}$ are i.i.d. copies of a random variable $\serv{D}$ which satisfies $|\serv{D}| \leq K$. Hence, by Hoeffding's Inequality:
\begin{align*}
    \P \left( |(\mO \tran \mD_{s} \mR^{k-1} \mD_{t} \mO)_{ij} - \E[(\mO \tran \mD_{s} \mR^{k-1} \mD_{t} \mO)_{ij}] | > C_{K,L} \sqrt{\dim \ln(\dim)} \cdot \|\mO\|_{\infty}^2 \right) & \leq 1/\dim^4,
\end{align*}
for some constant $C_{K,L}$ determined by $K,L$. Hence by a union bound over $i,j \in [\ssize]$ and $s,t \in [L]$:
\begin{align*}
    \P \left( \|(\mJ \tran \mJ)^k -  \E[(\mJ \tran \mJ)^k] \|_{\infty} > C_{K,L} \sqrt{\dim \ln(\dim)} \cdot \|\mO\|_{\infty}^2\right) \leq 1/\dim^2.
\end{align*}
Since $\|\mO\|_{\infty} \lesssim \dim^{-1/2+\epsilon}$, using the Borel-Cantelli lemma we obtain:
\begin{align*}
    \P \left( \|(\mJ \tran \mJ)^k -  \E[(\mJ \tran \mJ)^k] \|_{\infty}  \lesssim \dim^{-1/2+\epsilon} \right) & = 1 \quad \forall \; \epsilon > 0, \; k \in \N.
\end{align*}
Taking a union bound over $\epsilon \in \mathbb{Q}$ (the set of rationals) and $k \in \N$ yields:
\begin{align*}
    \P \left( \|(\mJ \tran \mJ)^k -  \E[(\mJ \tran \mJ)^k] \|_{\infty}  \lesssim \dim^{-1/2+\epsilon} \quad \forall \; \epsilon > 0, \; k \in \N  \right) & = 1.
\end{align*}
Next, we compute $\E[(\mJ \tran \mJ)^k]$. Recall the random variables $\serv{D}_1, \dotsc, \serv{D}_{L}$ and $\serv{R}$ defined in \eqref{eq:DR}. In light of \eqref{eq:JTJ-k}, we begin by noting that for any $s,t \in [T]$, $\mD_{s} \mR^{k-1} \mD_{t}$ is a diagonal matrix whose diagonal entries are i.i.d. copies of the random variable $\serv{D}_s  \serv{D}_t \serv{R}^{k-1}$. Combining this with the fact that $\mO$ is orthogonal, we obtain:
\begin{align*}
    \E[(\mJ \tran \mJ)^k] & = \begin{bmatrix} \E[\serv{D}_1^2 \cdot \serv{R}^{k-1}] \cdot \mI_{\ssize} & \E[\serv{D}_1 \serv{D}_2 \cdot \serv{R}^{k-1}] \cdot \mI_{\ssize} & \hdots &   \E[\serv{D}_1 \serv{D}_L \cdot \serv{R}^{k-1}] \cdot \mI_{\ssize} \\ \vdots & \vdots & \vdots & \vdots \\  \E[ \serv{D}_L \serv{D}_1 \cdot \serv{R}^{k-1}] \cdot \mI_{\ssize} & \E[\serv{D}_L \serv{D}_2 \cdot \serv{R}^{k-1}] \cdot \mI_{\ssize} & \hdots &   \E[ \serv{D}_L^2 \cdot \serv{R}^{k-1}] \cdot \mI_{\ssize} \end{bmatrix}.
\end{align*}
Since $\serv{D}_{1:L}$ are i.i.d. copies of a symmetric random variable $\E[\serv{D}_s \serv{D}_t \cdot \serv{R}^{k-1}] = 0$ whenever $s \neq t$. Furthermore, by symmetry we have that $\E[\serv{D}_1^2 \cdot \serv{R}^{k-1}] = \E[\serv{D}_2^2 \cdot \serv{R}^{k-1}] = \dotsb = \E[\serv{D}_L^2 \cdot \serv{R}^{k-1}]$. Hence, for any $\ell \in [L]$:
\begin{align*}
    \E[\serv{D}_\ell^2 \cdot \serv{R}^{k-1}] & = \frac{1}{L} \E\left[ \sum_{\ell=1}^L \serv{D}_\ell^2 \cdot \serv{R}^{k-1} \right]  = \frac{\E[\serv{R}^k]}{L} \explain{(a)}{=} \int \lambda^k \; \mu_L(\diff \lambda),
\end{align*}
where step (a) follows by recalling the definition of $\mu_L$ from the statement of the lemma. Hence, we have shown:
\begin{align*}
     \P \left( \left\|(\mJ \tran \mJ)^k -  \mI_{\dim} \cdot \int \lambda^k \; \mu_L(\diff \lambda) \right\|_{\infty}  \lesssim \dim^{-1/2+\epsilon} \quad \forall \; \epsilon > 0, \; k \in \N  \right) & = 1.
\end{align*}
Notice that by the triangle inequality:
\begin{align*}
    \left|\frac{\Tr[(\mJ \tran \mJ)^k]}{\dim} - \int \lambda^k \; \mu_L(\diff \lambda) \right|  & \leq \left\|(\mJ \tran \mJ)^k -  \mI_{\dim} \cdot \int \lambda^k \; \mu_L(\diff \lambda) \right\|_{\infty}.
\end{align*}
This implies that with probability $1$, for all $k \in \N$ and all $\epsilon > 0$:
\begin{align*}
\left\|(\mJ \tran \mJ)^k -  \frac{\Tr[(\mJ \tran \mJ)^k]}{\dim}  \mI_{\dim}  \right\|_{\infty} &\leq 2\left\|(\mJ \tran \mJ)^k -  \mI_{\dim} \cdot \int \lambda^k \; \mu_L(\diff \lambda) \right\|_{\infty}  \lesssim \dim^{-1/2+\epsilon},  \\ \frac{\Tr[(\mJ \tran \mJ)^k]}{\dim} &\rightarrow \int \lambda^k \; \mu_L(\diff \lambda).
\end{align*}
Thus, $\mX_{\mathtt{Mask}} \in \uclass{\mu_L}$ with probability 1. 
\end{proof}

\subsection{Sign and Permutation Invariant Matrices}
\label{sec:sp_inv}

Next, we consider matrices whose right singular vectors are \emph{sign and permutation invariant}. As shown in the following lemma, these matrices lie in the spectral universality class corresponding to their limiting spectral measure. 

\begin{lemma} \label{lem:sign-perm-inv-ensemble} Consider a $\ssize \times \dim$ sensing matrix $\mX$ with singular value decomposition $\mX = \mU \mSigma \mV \tran$ where:
\begin{enumerate}
    \item  $\mU \in \mathbb{O}(\ssize)$ is an arbitrary deterministic orthogonal matrix.
    \item $\mSigma$ is a deterministic rectangular diagonal matrix that satisfies:
    \begin{align} \label{eq:singular-value-condition}
        \|\mSigma\|_{\op} \lesssim 1, \quad \frac{\Tr((\mSigma \tran \mSigma)^k)}{\dim} \rightarrow  \int \lambda^k \mu(\diff \lambda) \quad \forall \; k \in \N,
    \end{align}
    for some compactly supported probability measure $\mu$. 
    \item $\mV$ is an orthogonal matrix of the form $\mV = \mS \mO \mP$ where:
   \begin{enumerate}
       \item $\mO \in \mathbb{O}(\dim)$ is a delocalized deterministic orthogonal matrix that satisfies $\|\mO\|_{\infty}\lesssim \dim^{-1/2+\epsilon}$ for any $\epsilon > 0$. 
       \item $\mS = \diag(s_{1:\dim})$ is a uniformly random sign diagonal matrix with $s_{1:\dim} \explain{i.i.d.}{\sim} \unif{\{\pm 1\}}$. 
       \item $\mP$ is a uniformly random $\dim \times \dim$ permutation matrix independent of $\mS$.
       \end{enumerate}
       \end{enumerate}
Then, $\mX \in \uclass{\mu}$.
\end{lemma}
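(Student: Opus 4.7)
\textbf{Proof proposal for Lemma \ref{lem:sign-perm-inv-ensemble}.}

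\emph{Setup.} The plan is to write $\mX = \mJ \mS$ where $\mJ \bydef \mU \mSigma \mP\tran \mO\tran$ and $\mS$ is the random sign matrix of the hypothesis, so that the remaining three requirements of \defref{univ-class} must be checked for this (now possibly random, through $\mP$) matrix $\mJ$, almost surely. Because $\mU$, $\mP$, $\mO$ are orthogonal, the operator-norm bound $\|\mJ\|_\op = \|\mSigma\|_\op \lesssim 1$ follows from \eqref{eq:singular-value-condition}. For the spectral moments and the right-singular-vector condition, the key observation is that $\mU$ drops out entirely: $\mJ\tran \mJ = \mO \mP \mSigma\tran \mSigma \mP\tran \mO\tran$, and more generally, since $\mO \mP$ is orthogonal,
\begin{align*}
    (\mJ\tran \mJ)^k \;=\; \mO \mP (\mSigma\tran \mSigma)^k \mP\tran \mO\tran \;=\; \mO \mA_k \mO\tran,
\end{align*}
where $\mA_k \bydef \mP (\mSigma\tran \mSigma)^k \mP\tran$ is diagonal with entries $\{\lambda_{\pi(i)}^k\}$ for $\pi$ the random permutation associated with $\mP$ and $\{\lambda_i\}$ the squared singular values on the diagonal of $\mSigma\tran \mSigma$. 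Since $\Tr[(\mJ\tran \mJ)^k] = \Tr[\mA_k] = \Tr[(\mSigma\tran \mSigma)^k]$, the moment-convergence requirement is immediate from \eqref{eq:singular-value-condition}.

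\emph{Main step: the entry-wise bound.} Letting $\blm_k \bydef \Tr[(\mSigma\tran \mSigma)^k]/\dim$, I need to show that with probability one,
\begin{align*}
    \big\|\mO (\mA_k - \blm_k \mI_\dim) \mO\tran\big\|_\infty \;\lesssim\; \dim^{-1/2 + \epsilon} \qquad \forall\, \epsilon > 0, \, k \in \N.
\end{align*}
The $(i,j)$-entry equals $f_{ij}(\pi) \bydef \sum_{a=1}^\dim c_{ij,a} Y_{\pi(a)}$, where $c_{ij,a} \bydef O_{ia} O_{ja}$ and $Y_b \bydef \lambda_b^k - \blm_k$. Note $\E f_{ij}(\pi) = \bar c_{ij} \sum_b Y_b = 0$ since $\sum_b Y_b = 0$. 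The values satisfy $|Y_b| \lesssim 1$ (from $\|\mSigma\|_\op \lesssim 1$) and $|c_{ij,a}| \leq \|\mO\|_\infty^2 \lesssim \dim^{-1+2\epsilon}$ by delocalization. Swapping two coordinates of $\pi$ changes $f_{ij}(\pi)$ by at most $2 \max_a |c_{ij,a}| \cdot 2 \max_b |Y_b| \lesssim \dim^{-1+2\epsilon}$, so the bounded-differences (Hoeffding/McDiarmid) inequality for uniformly random permutations gives
\begin{align*}
    \P\!\left(|f_{ij}(\pi)| > t\right) \;\leq\; 2 \exp\!\left(-\frac{c\, t^2}{\dim \cdot \dim^{-2+4\epsilon}}\right) \;=\; 2 \exp\!\left(-\frac{c\, t^2}{\dim^{-1+4\epsilon}}\right).
\end{align*}
Choosing $t = \dim^{-1/2 + 3\epsilon}$ makes the right side at most $2\exp(-c\, \dim^{2\epsilon})$, which is summable and, after a union bound over $(i,j) \in [\dim]^2$, still decays faster than any polynomial. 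Borel--Cantelli gives the claimed almost-sure bound. A further union bound over $k \in \N$ and $\epsilon \in \mathbb{Q}_{>0}$ finishes the check of the generic-right-singular-vector condition, thereby placing $\mX$ in $\uclass{\mu}$.

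\emph{Anticipated obstacle.} The computations on $\mJ\tran \mJ$ are routine once the factorization is set up; the main delicate point is securing a concentration inequality that simultaneously handles the \emph{correlated} entries of $\mA_k$ (because $\pi$ is a single random permutation) and exploits the delocalization of $\mO$. I expect the cleanest route is the bounded-differences inequality for functions of a uniform random permutation (e.g., Maurer's or McDiarmid's version for permutations), as sketched above; if one wants to avoid this, a second-moment computation using identities for $\E[\mP (\cdot) \mP\tran]$ together with a high-moment bootstrap would also work, but is noticeably longer.
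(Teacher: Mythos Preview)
Your proposal is correct and follows essentially the same route as the paper: factor $\mX = \mJ\mS$ with $\mJ = \mU\mSigma\mP\tran\mO\tran$, reduce the entry-wise estimate on $(\mJ\tran\mJ)^k$ to a permutation statistic $\sum_a O_{ia}O_{ja}\,\lambda_{\pi(a)}^k$, and conclude via a concentration inequality plus Borel--Cantelli and union bounds. The only difference is the choice of concentration tool: the paper invokes the Bernstein-type inequality of Bercu--Delyon--Rio for permutation statistics (packaged as a fact), whereas you use the transposition bounded-differences/McDiarmid inequality for uniform permutations; both give the required $\dim^{-1/2+\epsilon}$ rate with stretched-exponential tails, so this is a matter of taste rather than a genuinely different argument.
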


\begin{example}[Randomized Partial Hadamard-Walsh Matrix]
An important example of a sign and permutation invariant matrix is the randomized $\ssize \times \dim$ partial Hadamard-Walsh matrix $\mX_{\mathtt{PHWT}}$ with converging aspect ratio $\ssize/\dim \rightarrow \alpha \in (0,1]$ which is constructed by picking $\ssize$ rows of the $\dim \times \dim$ Hadamard-Walsh matrix uniformly at random and then randomly signing the columns of the resulting matrix. More concretely, $\mX_{\mathtt{PHWT}}$ is given by:
\begin{align*}
    \mX_{\mathtt{PHWT}} = [\mI_{\ssize}, \;  \vzero_{\ssize, \dim-\ssize}] \cdot \mP \tran \mH_{\dim} \mS
\end{align*}
where $\mP$ is a uniformly random $\dim \times \dim$ permutation matrix, $\mH_{\dim}$ is the $\dim \times \dim$ Hadamard-Walsh matrix. This type of matrix is commonly used as a structured dimension-reduction map in numerical linear algebra and high-dimensional data analysis (see, \emph{e.g.} \citep{nguyen2009fast,halko2011finding}). The singular value decomposition of $\mX_{\mathtt{PHWT}}$ is given by $\mX_{\mathtt{PHWT}} = \mU \mSigma \mV \tran$ with $\mU = \mI_{\ssize}$, $\mSigma = [\mI_{\ssize}, \;  \vzero_{\ssize, \dim-\ssize}]$, and $\mV = \mS \mH_{\dim} \tran \mP$. Since the Hadamard-Walsh matrix is delocalized in the sense $\|\mH_{\dim}\|_{\infty} = \dim^{-1/2}$, $\mX_{\mathtt{PHWT}}$ satisfies all the requirements of \lemref{sign-perm-inv-ensemble} and $\mX_{\mathtt{PHWT}} \in \uclass{\mathtt{Bern}(\alpha)}$ where $\mathrm{Bern}(\alpha)$ denotes the Bernoulli distribution with mean $\alpha$.
\end{example}

\begin{example}[Rotationally Invariant Ensembles]\label{ex:rot-inv-ensemble}
\lemref{sign-perm-inv-ensemble} also holds in the situation when the singular value decomposition of $\mX$ is given by $\mX = \mU \mSigma \mV \tran$ where $\mU, \mSigma, \mV$ are mutually independent random matrices such that $\mSigma$ satisfies the requirement \eqref{eq:singular-value-condition} with probability $1$, and the right singular vectors are Haar distributed $\mV \sim \unif{\Ortho(\dim)}$. This is because the Haar measure on $\Ortho(\dim)$ is distributionally invariant to left or right multiplication by any deterministic orthogonal matrix. Hence, $\mX = \mU \mSigma \mV \tran$ has the same distribution as $\tilde{\mX} = \mU \mSigma \tilde{\mV} \tran$ where $\tilde{\mV} \explain{def}{=} \mS \mV \mP$ and $\mS$, $\mP$ are uniformly random sign and permutation matrices independent of $\mV$. Moreover, with probability 1, $\norm{\mV}_\infty \lesssim N^{-1/2+\epsilon}$ for any $\epsilon > 0$. Thus, $\tilde{\mX} \in \uclass{\mu}$ by \lemref{sign-perm-inv-ensemble}, and therefore $\mX$ also lies in $\uclass{\mu}$. In particular, the rotationally invariant ensemble $\mX_{\mathtt{Haar}}(\mLambda)$ defined in \eqref{eq:ensemble_Haar} lies in the universality class $\uclass{\mu}$, provided that $\mLambda^{1/2}$ satisfies the condition \eqref{eq:singular-value-condition}.
\end{example}

\begin{remark} During the preparation of this manuscript, a recent independent work of \citet{wang2022universality} obtains universality results for the dynamics of AMP algorithms for sign and permutation invariant matrices by using a different proof technique. See \sref{related} for detailed discussions.
\end{remark}
\begin{proof}[Proof of \lemref{sign-perm-inv-ensemble}] Observe we can write $\mX = \mJ \mS$ where $\mJ = \mU \mSigma \mP\tran \mO \tran$. In order to prove the lemma, we need to verify that $\mJ$ satisfies the requirements of \defref{univ-class}. Observe that $\|\mJ\|_{\op} = \|\mSigma\|_{\op} \lesssim 1$ as required. Furthermore for any $k \in \N$, we have:  $(\mJ\tran \mJ)^k \explain{}{=} \mO \mP (\mSigma\tran \mSigma)^k \mP\tran \mO \tran$. In particular:
\begin{align*}
    \frac{\Tr[(\mJ\tran \mJ)^k]}{\dim} & = \frac{\Tr[(\mSigma\tran \mSigma)^k ]}{\dim} \rightarrow  \int \lambda^k \mu(\diff \lambda) \quad \forall \; k \in \N,
\end{align*}
as required by \defref{univ-class}. Furthermore, we note that since $\mP$ is a uniformly random permutation matrix:
\begin{align}\label{eq:invariant-formula}
   \E[(\mJ\tran \mJ)^k] =  \E\left[ \mO \mP (\mSigma\tran \mSigma)^k \mP\tran \mO \tran\right] & = \frac{\Tr[(\mSigma\tran \mSigma)^k]}{\dim} \cdot \mI_{\dim} = \frac{\Tr[(\mJ\tran \mJ)^k]}{\dim} \cdot \mI_{\dim}
\end{align}
Furthermore, a concentration inequality for random permutations due to \citet{bercu2015concentration} (stated as \factref{permutation} in \appref{permutation} for convenience) shows that: 
\begin{align}\label{eq:bercu}
    &\P\left( \|(\mJ\tran \mJ)^k - \E [(\mJ\tran \mJ)^k]\|_{\infty} > K  \|\mO\|_\infty^2  \|\mSigma \tran \mSigma\|_{\op}^k \cdot \left( \sqrt{\dim \ln(\dim)} + \ln(\dim) \right) \right) \leq 4/\dim^2 \quad \forall \; k \in \N,
\end{align}
for some absolute constant $K$. Using the Borel-Cantelli lemma and the fact that $\|\mO\|_{\infty} \lesssim \dim^{-1/2+\epsilon/3}$ we obtain:
\begin{align*}
    P\left( \|(\mJ\tran \mJ)^k - \E [(\mJ\tran \mJ)^k]\|_{\infty} \lesssim  \dim^{-1/2+\epsilon}\right) & = 1 \quad \forall \; k \in \N, \; \epsilon >0.
\end{align*}
Taking a union bound over $k \in \N$, $\epsilon \in \mathbb{Q}$ (the set of rationals) and recalling \eqref{eq:invariant-formula} we obtain:
\begin{align*}
    \P\left( \bigg\| (\mJ \tran \mJ)^k - \frac{\Tr[(\mJ\tran \mJ)^k]}{\dim} \cdot \mI_{\dim}  \bigg\|_{\infty} \lesssim  \dim^{-1/2 + \epsilon} \quad \forall \; k \in \N, \; \epsilon > 0 \right) & = 1.
\end{align*}
Hence, $\mJ$ satisfies the requirements of \defref{univ-class} with probability $1$.
\end{proof}
\subsection{Linear Transformations of I.I.D. Matrices}
The last class of matrices we consider are \emph{left linear transformations of i.i.d. matrices}. These are matrices of the form \begin{align} \label{eq:tiid}
    \mX_{\mathtt{tiid}}(\mT) &\explain{def}{=} \mT \mZ
\end{align} 
where:
\begin{enumerate}
    \item $\mT$ is a $\ssize \times \ssize$ deterministic matrix with bounded operator norm $\|\mT\|_{\op} \lesssim 1$ whose spectral measure $\pi_{\dim}$ (defined below) converges to a compactly supported probability distribution $\pi$:
    \begin{align*}
        \pi_{\dim} \explain{def}{=} \frac{1}{\ssize} \sum_{i=1}^\ssize \delta_{\lambda_i(\mT \tran \mT)} \explain{d}{\rightarrow} \pi,
    \end{align*}
    where $\lambda_1(\mT \tran \mT) \geq \dotsb \geq \lambda_{\ssize} (\mT \tran \mT)$ denote the eigenvalues of $\mT \tran \mT$. 
    \item $\mZ$ is a $\ssize \times \dim$ matrix with converging aspect ratio $\ssize/\dim \rightarrow \alpha$. The rescaled entries of $\mZ$, $\hat{Z}_{ij} \explain{def}{=} \sqrt{\dim} Z_{ij}$ are i.i.d. and satisfy: $\E\hat{Z}_{ij} = 0$, $\E |\hat{Z}_{ij}|^2 = 1$, and have finite moments of all orders. Furthermore, we assume that the entries of $\mZ$ are symmetrically distributed in the sense $Z_{ij} \explain{d}{=} - Z_{ij}$.
\end{enumerate}
This matrix model captures the following important sensing matrices considered in prior works, in a unified manner:
\begin{enumerate}
\item When $\mT = \mI_{\ssize}$, the matrix $\mX_{\mathtt{tiid}}(\mI_{\ssize})$ has symmetrically distributed i.i.d. entries. This model is a frequently studied generalization of the i.i.d. Gaussian sensing model.
\item  When $\mT = \diag(t_1, \dotsc, t_{\ssize})$ is a diagonal matrix, $\mX_{\mathtt{tiid}}(\mT)$ specializes to the \emph{elliptic model}, which has been used to model deviations from the peculiar geometry of high-dimensional i.i.d. matrices \citep{diaconis1984asymptotics,karoui2011geometric,el2018impact}. Specifically, while the $\ell_2$ norms of the rows of an i.i.d. matrix are approximately equal and concentrate to a deterministic value, the row norms in the elliptic model are approximately $|t_1|, \dotsc, |t_{\ssize}|$, and thus, can be widely different.
\item The case of general $\mT$ is less studied, but matrices of this form arise in the analysis of sketching algorithms (see \emph{e.g.}, \citet{liu2019ridge} for an application to ridge regression).
\end{enumerate}
The following lemma shows that $\mX \in \uclass{\pi \boxtimes \mu_{\mathrm{MP}^\alpha}}$, where $\pi \boxtimes \mu_{\mathrm{MP}}^\alpha$ denotes the \emph{free multiplicative convolution} of $\pi$ and $\mu_{\mathrm{MP}}^\alpha$, with the latter being the Marchenko-Pastur distribution \citep{marvcenko1967distribution} with aspect ratio $\alpha$. The measure $\pi \boxtimes \mu_{\mathrm{MP}}^\alpha$ is defined via its Stieltjes transform $m: \C_+ \explain{def}{=} \{z \in \C : \Im(z) > 0\} \mapsto \C_+$:
\begin{align*}
    m(z) \explain{def}{=} \int \frac{\pi \boxtimes\mu_{\mathrm{MP}}^\alpha(\diff \lambda) }{\lambda - z}.
\end{align*}
For any $z \in \C_+$, the Stieltjes transform $m(z)$ of $\pi \boxtimes \mu_{\mathrm{MP}}^\alpha$ is the unique solution in $\C_+$ of the fixed point equation \citep{marvcenko1967distribution}:
\begin{align*}
    \frac{1}{m(z)} & = - z + \alpha \int \frac{\lambda}{1+ \lambda m(z)} \; \pi(\diff \lambda).
\end{align*}
Since a probability measure is uniquely defined by its Stieltjes transform, the above description provides an implicit definition for $\pi \boxtimes \mu_{\mathrm{MP}}^\alpha$.
\begin{lemma}\label{lem:iid} Let $\mX_{\mathtt{tiid}}(\mT) = \mT \mZ$ be a left linear transformation of an i.i.d. matrix $\mZ$ satisfying the hypotheses stated above. Then $\mX_{\mathtt{tiid}}(\mT)  \in \uclass{\pi \boxtimes \mu_{\mathrm{MP}}^\alpha}$. 
\end{lemma}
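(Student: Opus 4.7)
The plan is to verify that $\mX_{\mathtt{tiid}}(\mT) = \mT \mZ$ satisfies the conditions of \defref{univ-class} almost surely, per the remark following that definition. The sign factor required by condition~1 is extracted from $\mZ$ itself: since the entries of $\mZ$ are symmetrically distributed, $\mZ \eqd \mZ \mS$ for an independent uniform sign diagonal $\mS$, so $\mX_{\mathtt{tiid}}(\mT) \eqd (\mT \mZ) \mS$. Setting $\mJ \bydef \mT \mZ$, it suffices to check conditions 2--4 of \defref{univ-class} for $\mJ$ almost surely. Boundedness $\|\mJ\|_{\op} \leq \|\mT\|_{\op} \cdot \|\mZ\|_{\op} \lesssim 1$ a.s.\ is immediate from the Bai--Yin theorem, whose fourth-moment hypothesis is supplied by the assumption that the rescaled entries $\hat{Z}_{ij}$ have finite moments of all orders. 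The spectral convergence $\Tr[(\mJ\tran \mJ)^k]/\dim \to \int \lambda^k \, (\pi \boxtimes \mu_{\mathrm{MP}}^\alpha)(\diff \lambda)$ follows from the classical Marchenko--Pastur / free multiplicative convolution theory applied to $\mZ\tran \mT\tran \mT \mZ$, upgraded from weak convergence of the spectrum to convergence of moments by the uniform operator norm bound just established.

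The main content lies in condition~4. Let $\mA \bydef \mT\tran \mT$ and $\mM \bydef (\mZ\tran \mA \mZ)^k$; the goal is to show
\begin{align*}
    \|\mM - (\Tr(\mM)/\dim)\, \mI_{\dim}\|_\infty \lesssim \dim^{-1/2+\epsilon} \qquad \text{a.s., for every } \epsilon > 0.
\end{align*}
I would separately compute expectations and control fluctuations. For the expectations: flipping the sign of the $i$-th column of $\mZ$ preserves the joint distribution of $\mZ$ and conjugates $\mM$ by the diagonal sign-flip $\mD_i = \mI_{\dim} - 2 e_i e_i\tran$, which negates every off-diagonal entry $\mM_{ij}$ with $j \neq i$; hence $\E[\mM_{ij}] = 0$ for $i \neq j$. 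Column exchangeability of $\mZ$ then forces $\E[\mM_{ii}]$ to be constant in $i$, and therefore equal to $\E[\Tr(\mM)]/\dim$. Consequently, $\E\bigl[\mM - (\Tr(\mM)/\dim)\mI_{\dim}\bigr] = 0$.

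For the fluctuations: each entry $\mM_{ij}$ is a polynomial of total degree $2k$ in the i.i.d.\ entries $\{Z_{ab}\}$, with coefficients determined by $\mA$. A genus-expansion-style combinatorial analysis of the $2q$-th centered moment gives $\E[|\mM_{ij} - \E \mM_{ij}|^{2q}] \lesssim C_{k,q} \cdot \|\mA\|_{\op}^{2kq} \cdot \dim^{-q}$ for every fixed $q \in \N$. Markov's inequality with $q$ chosen large, a union bound over the $\dim^2$ entries of $\mM$, and the Borel--Cantelli lemma then give $\|\mM - \E \mM\|_\infty \lesssim \dim^{-1/2+\epsilon}$ a.s.\ for every $\epsilon > 0$. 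The same argument applied to the scalar polynomial $\Tr(\mM)/\dim$ yields $|\Tr(\mM)/\dim - \E \Tr(\mM)/\dim| \lesssim \dim^{-1/2+\epsilon}$ a.s. Combining these two concentration statements with the expectation identity from the previous paragraph establishes condition~4.

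The main obstacle is the moment estimate $\E[|\mM_{ij} - \E \mM_{ij}|^{2q}] \lesssim \dim^{-q}$. A naive count of monomials in the $2q$-th expansion of $\mM_{ij}$ is enormous, and the key point is that non-vanishing pair-matchings among the $Z$-factors (forced by mean zero and independence) impose strong coincidences on the column-index walks, reducing the effective combinatorial count to the right order while keeping the dependence on $\mA$ controlled by powers of $\|\mA\|_{\op}$. This is a standard but tedious computation in the moment method for sample-covariance-type matrices; alternatively, one could invoke a polynomial Hanson--Wright inequality for degree-$2k$ polynomials of independent sub-Weibull variables (a regime guaranteed by finite moments of all orders), at the cost of a polylogarithmic factor that is absorbed into the $\dim^\epsilon$ slack.
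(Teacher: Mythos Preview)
Your proposal is correct and takes a genuinely different route from the paper for the key condition~4 of \defref{univ-class}. The paper black-boxes the anisotropic local law of Knowles--Yin for the resolvent $(\mJ\tran\mJ - z\mI_\dim)^{-1}$, then recovers control of $(\mJ\tran\mJ)^k$ via the Cauchy integral formula around a contour enclosing the spectrum. Your approach instead works directly with the entries of $\mM = (\mZ\tran\mA\mZ)^k$: symmetry under column sign-flips and column permutations of $\mZ$ forces $\E[\mM]$ to be a scalar multiple of $\mI_\dim$, and a moment-method bound on the centered $2q$-th moments of each entry, combined with Markov and Borel--Cantelli, yields the fluctuation estimate. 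Your observation that $|\Tr(\mM)/\dim - \E\Tr(\mM)/\dim|$ is automatically dominated by $\|\mM - \E\mM\|_\infty$ closes the argument cleanly. The trade-off is that the paper's route is short and modular at the cost of invoking a deep RMT result, whereas yours is self-contained but pushes the work into the combinatorial moment bound $\E|\mM_{ij} - \E\mM_{ij}|^{2q} \lesssim \dim^{-q}$, which is standard for $k=1$ (Rosenthal/Hanson--Wright on bilinear forms) but requires a genuine genus-type expansion for general $k$. One small inaccuracy: finite moments of all orders does not imply sub-Weibull tails, so your Hanson--Wright alternative would need a version tailored to the moment-of-all-orders regime; the direct moment route is unaffected since for each fixed $k,q$ you only need finitely many moments.
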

\begin{proof} This proof of this lemma is provided in \appref{iid} and follows by combining some well-known random matrix theory results on i.i.d. matrices \citep{yin1986limiting,bai2008limit,knowles2017anisotropic}
\end{proof}

Note that the precise distribution of the entries of the i.i.d. matrix $\mZ$ in \eqref{eq:tiid} does not play any role in determining the spectral universality class containing $\mX_{\mathtt{tiid}}(\mT)$. In particular, \lemref{iid} shows that any matrix $\mX_{\mathtt{tiid}}(\mT)$ of the form \eqref{eq:tiid} lies is the same universality class as the \emph{correlated} Gaussian matrix $\mX_{\mathtt{Gauss}}(\mT) = \mT \mG$ where $\mG$ is a i.i.d. Gaussian matrix $G_{ij} \explain{i.i.d.}{\sim} \gauss{0}{1/\dim}$. This \emph{Gaussian universality} of i.i.d. matrices and their simple transformations has been studied in prior work:
\begin{enumerate}
    \item When $\mT = \mI_{\ssize}$, recall that the matrix $\mX_{\mathtt{tiid}}(\mI_{\ssize})$ has symmetrically distributed i.i.d. entries. A long line of work (see, \emph{e.g.}, \citep{korada2011applications,karoui2013asymptotic,panahi2017universal,bayati2015universality, montanari2017universality, oymak2018universality, el2018impact, chen2021universality,celentano2021high,han2022universality}) has shown that i.i.d. sensing matrices  exhibit universality and behave like i.i.d. Gaussian sensing matrices for many inference problems, even without the symmetric distribution requirement. 
    \item When $\mT = \diag(t_1, \dotsc, t_{\ssize})$ is a diagonal matrix, recall that $\mX_{\mathtt{tiid}}(\mT)$ specializes to the \emph{elliptic model}. The Gaussian universality result for the elliptic model was obtained by \citet{el2018impact} (again, without the symmetric distribution requirement).
    \item The case of general $\mT$ does not appear to have been studied in prior work in the context of RLS estimators with general strongly convex regularizers. In this situation, \lemref{iid} and \thref{RLS} show that the performance of RLS estimators on the sensing matrix $\mX_{\mathtt{tiid}}(\mT)$ depends on $\mT$ only via its limiting spectral measure $\pi_{\dim}$. In particular, the singular vectors of $\mT$ do not play a role. 
\end{enumerate}
Hence, \lemref{iid} implies that, via the set of unified deterministic conditions stated in \defref{univ-class}, our universality results not only explain the observed universality of nearly deterministic matrices but also capture the well-understood Gaussian universality of matrices with \emph{symmetric} i.i.d. entries \emph{and} their left linear transformations.

\section{Related Work}\label{sec:related}
\paragraph{Results for Gaussian and Rotationally Invariant Matrices.} The precise analysis of high-dimensional signal estimation has already been the subject of a vast literature (see, \emph{e.g.}, 
\citep{donoho2009message,donoho2010counting,bayati2011lasso, bayati2011dynamics, chandrasekaran2012convex,karoui2013asymptotic,amelunxen2014living,thrampoulidis2015regularized, donoho2016high, dobriban2018high,weng2018overcoming,reeves2019replica, barbier2019optimal, sur2019likelihood,sur2019modern,ma2019optimization,candes2020phase,celentano2020lasso,mignacco2020role,lu2020phase,bu2020algorithmic,li2021minimum}). Historically, sharp asymptotic characterizations were first obtained using statistical physics techniques, especially the non-rigorous replica method (see, \emph{e.g.}, \citep{guo2005randomly,takeda2006analysis,rangan2009asymptotic}). In terms of rigorous development, the seminal works of \citet{donoho2005neighborly,donoho2006high} and \citet{donoho2005sparse,donoho2009counting,donoho2005neighborliness} established the phase transition boundary for the basis pursuit estimator for noiseless compressed sensing with Gaussian sensing matrices using ideas and tools from high-dimensional polytope geometry. Subsequently, several frameworks have been developed to obtain precise asymptotic performance characterizations for high-dimensional inference problems driven by Gaussian or rotationally invariant matrices. These include frameworks based on high-dimensional convex geometry \citep{chandrasekaran2012convex,amelunxen2014living}, comparison inequalities for Gaussian processes \citep{rudelson2008sparse,stojnic2013framework,thrampoulidis2015regularized}, the leave-one-out technique for i.i.d. matrices \citep{karoui2013asymptotic,el2013robust,el2018impact}, and approximate message passing (AMP) algorithms for Gaussian matrices \citep{bolthausen2014iterative, donoho2009message, bayati2011dynamics, donoho2013information, javanmard2013state, berthier2020state, gerbelot2021graph} and rotationally invariant matrices \citep{ma2017orthogonal,rangan2019vector,takeuchi2017rigorous,fan2020approximate,takeuchi2020convolutional,takeuchi2021bayes,liu2022memory}; see \citep{feng2022unifying} for a recent review on AMP algorithms. Although Gaussian and rotationally invariant matrix models are just idealized matrix ensembles primarily chosen for their mathematical tractability, performance characterizations derived from these matrix models can often accurately describe the behavior of matrices that do not satisfy these mathematically convenient properties. This phenomenon is called universality and has been investigated in several works, which we will discuss next. 
\paragraph{Gaussian Universality.} A large body of work has shown that matrices with i.i.d. entries behave like i.i.d. Gaussian matrices in the context of spin glasses (see, \emph{e.g.}, \citep{chatterjee2005simple,carmona2006universality}), random matrix theory (see, \emph{e.g.}, \citep{tao2014random} for a survey), and statistical inference (see, \emph{e.g.}, \citep{korada2011applications,karoui2013asymptotic,panahi2017universal,bayati2015universality, montanari2017universality, oymak2018universality, el2018impact, chen2021universality,celentano2021high,han2022universality}). More recently, a line of work \citep{mei2022generalization,hu2020universality,liang2022precise,montanari2022universality,gerace2022gaussian} has shown that, in the context of inference problems, sensing matrices with independent rows (with possible correlations within a row) behave like Gaussian matrices with independent rows and matching row means and covariance matrices. These works rely on a proof technique known as  Lindeberg’s swapping trick \citep{lindeberg1922neue} or its variants. In a nutshell, this method gradually replaces the independent rows from one matrix ensemble with those from another ensemble (with matching moments). Universality holds if the macroscopic properties of interest remain stable in the swapping process. We note that the sensing ensembles we consider in this work (such as the \texttt{SpikeSine},  \texttt{Mask}, \texttt{RandDCT} ensembles introduced in \sref{demonstration}) all have dependent rows. This seems to preclude the direct application of the standard Lindeberg method in establishing universality.  

\paragraph{Beyond Gaussian Universality.} The behavior of sensing matrices we consider in this work is \emph{not} accurately described by a suitable Gaussian matrix, in general. A \emph{different spectral universality principle} governs the behavior of these matrices which can be described as follows---if the eigenvectors of the sample covariance matrix $\mX \tran \mX$ are sufficiently ``generic'', the sensing matrix $\mX$ has the same asymptotic properties as the (right) rotationally invariant matrix $\mX_{\mathtt{Haar}}(\mLambda)$ defined in \eqref{eq:ensemble_Haar}, where $\mLambda$ is chosen to match the spectrum of $\mX$. This general phenomenon has been \emph{empirically} observed in various contexts. Examples include the work of \citet{marinari1994replica} and \citet{parisi1995mean}, who observed the Sine model, an Ising spin-glass model with a fully deterministic coupling matrix exhibits similar thermodynamic properties as the Random Orthogonal Model (ROM), an Ising model with a rotationally invariant coupling matrix. Another well-known example are the empirical observations of \citet{donoho2009observed} in compressed sensing and various subsequent works \citep{monajemi2013deterministic,oymak2014case,abbara2020universality,ma2021spectral}. This general universality principle has been rigorously established in certain cases. We take this opportunity to review these related investigations in the following paragraphs. 

\paragraph{Compressed Sensing.} \citet{donoho2010counting} have provided a proof for their empirical universality observations \citep{donoho2009observed} in the performance of the Basis Pursuit estimator in noiseless compressed sensing  when the sensing matrix is generic in a suitably defined sense \revised{and the signal is non-negative (coordinate-wise)}. The proof of \citet{donoho2010counting} relies on results from the theory of random polytopes \citep{wendel1962problem,cover1965geometrical,winder1966partitions}, which makes it difficult to extend their approach beyond linear programming-based estimators and the setting of noiseless compressed sensing. 

\paragraph{Random Matrix Theory.} In a different context, similar universality results have appeared in the context of random matrix theory. Given two deterministic $\dim \times \dim$ matrices $\mA$ and $\mB$, Voiculescu \citep{voiculescu1991limit,voiculescu1992free} establishes that $\mA$ and $\mU \tran \mB \mU$ are asymptotically freely independent, when $\mU$ is a random matrix drawn from the Haar distribution on $\mathbb{O}(N)$.  Consequently, the limiting spectral measure of $\mA + \mU\tran \mB \mU$  is determined using the free additive convolution (more generally the limiting spectral measure of any polynomial in $\mA$ and $\mU\tran \mB \mU$ can be determined from the individual spectral measures). A remarkable extension of this result was established by \citet{tulino2010capacity} who proved that if $\mA$ and $\mB$ are random diagonal matrices with i.i.d. entries, $\mA$ is also asymptotically freely independent of $\mF^\star \mB \mF$, where $\mF$ is the $\dim \times \dim$ Fourier matrix. Note that the conjugating matrix $\mF$ is completely deterministic, in contrast to the Haar matrix $\mU$; however, $\mF$ is sufficiently ``generic'' or  ``random-like". These results were, in turn substantially generalized in the work of  \citet{farrell2011limiting} and \citet{anderson2014asymptotically}, who showed that conjugation by any delocalized orthogonal matrix with sign and permutation symmetries induces freeness.  More recently, these freeness results have been leveraged to characterize the performance of the sub-sampled Hadamard-Walsh sketch for ordinary least squares (OLS) regression by \citet{dobriban2019asymptotics} and \citet{lacotte2020optimal}. {These works leverage the explicit formula available for the OLS estimator to relate the performance of the OLS estimator to the spectral measure of a random matrix, which is analyzed using the freeness results of \citet{farrell2011limiting}. This approach does not seem to extend to general RLS estimators since they do not have a convenient explicit closed-form formula like the OLS estimator.}

\paragraph{Linearized AMP Algorithms.} In joint work with Milad Bakshizadeh \citep{dudeja2020universality}, the first author established the universality of a \emph{linearized} version of AMP for the phase retrieval problem. Specifically, this work shows that linearized AMP algorithms for phase retrieval have the same limiting dynamics when the sensing matrix is a randomly sub-sampled Hadamard-Walsh matrix or a randomly sub-sampled Haar matrix. This proof relied on the fact that linearized AMP algorithms can be formulated as a sequence of matrix multiplications, and the proof technique did not apply to general, non-linear AMP algorithms.

\paragraph{AMP algorithms for Semi-Random Matrices.} In recent work \citep{dudeja2022universality}, the authors of this manuscript have obtained a universality principle for a subclass of \emph{non-linear} AMP algorithms called vector approximate message passing (VAMP) algorithms \citep{ma2017orthogonal,rangan2019vector,takeuchi2017rigorous,ccakmak2019memory}. This work identified a notion of \emph{semi-random matrices} such that VAMP algorithms driven by any matrix in this class have the same limiting dynamics. The class of semi-random matrices includes many highly structured matrices, like the ones studied in this paper. As an application, our prior work provided indirect evidence for the previously discussed empirical universality observations of \citet{marinari1994replica} and \citet{parisi1995mean} by showing that a natural iterative algorithm proposed by \citet{ccakmak2019memory} to compute the magnetization of Ising models has the same dynamics in the Sine model and the random orthogonal model (ROM). VAMP algorithms play a crucial role in this paper as well. At the heart of our results in the current paper is a universality principle for VAMP algorithms (\thref{VAMP}), which generalizes the result obtained in our prior work in \citep{dudeja2022universality} in several important ways. First, we now allow VAMP algorithms to use side information in their updates. This feature is necessary to capture inference problems like regularized linear regression, where the signal and the noise are treated as side information. Furthermore, unlike in \citep{dudeja2022universality} where the VAMP algorithms are restricted to be memory-free, we now allow VAMP algorithms to use the entire history of previous iterates in their update rules. This extension enables us to implement general first-order methods (like the proximal method) using VAMP algorithms and hence obtain a universality principle for the proximal method and, subsequently, the RLS estimator. The techniques used to obtain these extensions are discussed in more detail in \sref{univ-vamp}.

\paragraph{Universality of AMP for Sign and Permutation Invariant Matrices.} 
After our earlier work \citep{dudeja2022universality} appeared on arXiv and during the preparation of the current manuscript, a parallel work of \citet{wang2022universality} used a different proof technique to analyze the universality of AMP algorithms for both symmetric and nonsymmetric matrices. By unfolding the AMP iterates in terms of a \emph{tensor network} and through a subsequent elegant argument based on this expansion, these authors obtain a universality principle for AMP algorithms driven by general i.i.d. ensembles and by matrices of the type $\mX = \mJ \mS$, where $\mS$ is a random sign diagonal matrix and $\mJ$ is a matrix whose SVD is of the form
\begin{equation}\label{eq:sp_2}
    \mJ = \mU \mSigma (\mO\mP)\tran.
\end{equation}
Here, the left singular basis $\mU$ is an arbitrary deterministic orthogonal matrix, $\mSigma$ is a deterministic rectangular diagonal matrix consisting of the singular values, and the right singular basis $\mO \mP$ is composed of a delocalized deterministic orthogonal matrix $\mO$ and a uniformly random permutation matrix $\mP$ (that is independent of $\mS$). We note that these matrices are exactly the sign and permutation invariant ensembles studied in \sref{sp_inv}, and they form a sub-class of the spectral universality class introduced in this paper.

A vital feature of the matrices in \eqref{eq:sp_2} is that the singular values of $\mJ$ are matched to the corresponding right singular vectors using a uniformly random permutation. \revised{This randomness from the permutation matrix $\mP$ seems to play an important role in the proof of \citet{wang2022universality}.} In contrast, at the expense of a longer combinatorial argument, the approaches initiated in \citep{dudeja2022universality} and further extended in this paper do not rely on the randomness of $\mJ$. In particular, as one of the main contributions of this work, we identify in \defref{univ-class} fully deterministic and easy-to-verify conditions on $\mJ$ that guarantee universality. This allows the universality principle to capture additional structured sensing ensembles whose singular values and singular vectors are not randomly matched. Examples of these include the \texttt{SpikeSine} ensemble in \eqref{eq:ensemble_ss} (see further generalizations to signed incoherent tight frames in \sref{sitf}), and the \texttt{Mask} ensemble studied in \sref{mask}.

\section{Proof of \thref{RLS}} \label{sec:proof_strategy}
This section is devoted to the proof of \thref{RLS}. As mentioned previously, we take an \emph{algorithmic approach} to prove this result: we derive a universality principle for RLS estimators via a universality principle for iterative algorithms which construct explicit and arbitrarily accurate approximations to the RLS estimator. 

\paragraph{Roadmap.} We begin by presenting a roadmap of the proof:
\begin{enumerate}
    \item In \sref{GFOM}, we introduce a broad class of iterative algorithms called \emph{General First Order Methods} (GFOMs)  \citep{celentano2020estimation}. This class not only includes the proximal method \eqref{eq:prox-method}, but many other iterative algorithms (e.g., gradient descent, accelerated gradient methods, and approximate message passing) for linear regression. We also introduce a restricted sub-class of GFOMs called \emph{Vector Approximate Message Passing Algorithms} (VAMP) \citep{ma2017orthogonal,takeuchi2017rigorous,rangan2019vector}, whose dynamics admit a simple asymptotic characterization.
    \item Since VAMP algorithms are simpler to analyze; we first prove a universality principle for VAMP algorithms. This result is presented as \thref{VAMP} in \sref{vamp-dynamics} and its proof is deferred to \sref{univ-vamp}. 
\end{enumerate}
Assuming this universality principle for VAMP algorithms, we provide a self-contained derivation of a universality principle for GFOMs and RLS estimators in \sref{GFOM-univ-proof} and \sref{RLS-univ-proof} respectively:
\begin{enumerate}
\setcounter{enumi}{2}
    \item In \sref{GFOM-univ-proof} we show that any GFOM can be implemented by non-linear post-processing of a suitably designed VAMP algorithm and use this to derive a universality result for GFOMs  (\thref{GFOM}).
    \item Finally, in \sref{RLS-univ-proof}, we prove \thref{RLS} by showing the proximal gradient method is a GFOM and hence, exhibits universality. Then, we argue that since the proximal method can construct arbitrarily accurate approximations for the RLS estimator, the RLS estimator must also exhibit universality. 
\end{enumerate}

\subsection{General First Order Methods and Vector Approximate Message Passing}
\label{sec:GFOM}
\subsubsection{General First Order Methods (GFOMs)}
The notion of general first-order methods (GFOMs) was introduced in the work of \citet{celentano2020estimation}. While the work of \citet{celentano2020estimation} is concerned with GFOMs driven by i.i.d. Gaussian matrices, this notion is well-defined even for non-i.i.d. matrices. We will find it convenient to work with a generalized definition of  a general first-order method (GFOM), which is specified by:
\begin{enumerate}
    \item The total number of iterations $T \in \N$.
    \item An ordered collection random matrices $\mM_{1:T}$. 
    \item A matrix $\dim \times \auxdim$ matrix $\auxmat$ of auxiliary information with rows $\auxvec_1, \auxvec_2, \dotsc, \auxvec_\dim \in \R^{\auxdim}$.
    \item A collection of nonlinerities $\nonlin_1, \nonlin_2, \dotsc \nonlin_T$ and $\eta_1, \eta_2, \dotsc, \eta_T$ where for each $i \in [T]$, $f_i, \eta_i: \R^{i-1 + \auxdim} \mapsto \R$.
\end{enumerate}
A first order method maintains an iterate $\iter{\vz}{t} \in \R^{\dim}$ for $t \in [T]$ which is updated as follows:
\begin{align}\label{eq:GFOM}
    \iter{\vz}{t} = \mM_t \cdot \nonlin_t(\iter{\vz}{1}, \iter{\vz}{2}, \dotsc, \iter{\vz}{t-1}; \auxmat) + \eta_t(\iter{\vz}{1}, \iter{\vz}{2}, \dotsc, \iter{\vz}{t-1}; \auxmat),
\end{align}
where the non-linearities $\nonlin_{1:T}$ and $\eta_{1:T}$ act entry-wise on their arguments. Hence, $\nonlin_t(\iter{\vz}{1}, \dotsc, \iter{\vz}{t-1}; \auxmat)$ is a vector in $\R^\dim$ with entries:
\begin{align*}
    (\nonlin_t(\iter{\vz}{1}, \iter{\vz}{2}, \dotsc, \iter{\vz}{t-1}; \auxmat))_i & = \nonlin_t(\iter{z}{1}_i, \iter{z}{2}_i, \dotsc, \iter{z}{t-1}_i ; \auxvec_i) \quad \forall \; i \; \in \; [\dim]. 
\end{align*}
This class of iterative algorithms not only includes the proximal method \eqref{eq:prox-method}, but many other iterative algorithms (e.g., gradient descent, accelerated gradient methods, and approximate message passing) for linear regression. Hence, we will seek to obtain an abstract universality result for a GFOM and use it to obtain \thref{RLS}. We will make the following assumption on the auxiliary information matrix $\auxmat$ to obtain our universality result:
\begin{assumption}[I.I.D. Auxiliary Information]\label{assump:side-info} The rows of the auxiliary information matrix $\auxmat$ are i.i.d. copies of a random vector $\serv{A} \in \R^\auxdim$ with  $\E \|\serv{A}\|^p < \infty$ for each $p \in \N$. Furthermore, the distribution of $\serv{A}$ is uniquely determined by its moments. 
\end{assumption}

To characterize the dynamics of a GFOM (or the limiting behavior of the GFOM iterates), we will use the following notion of convergence of high-dimensional vectors:

\begin{definition}[Convergence of Empirical Distributions]\label{def:PW2} A collection of $k$ random vectors $(\iter{\vv}{1}, \dotsc, \iter{\vv}{k})$ in $\R^{\dim}$ and the auxiliary information matrix $\auxmat$ converge with respect to the Wasserstein-$2$ metric to a random vector $(\serv{V}_1, \serv{V}_2, \dotsc, \serv{V}_k ; \serv{A}) \in \R^{k+\auxdim}$ in probability as $\dim \rightarrow \infty$, if for any continuous test function $h: \R^{k+\auxdim} \rightarrow \R$  (independent of $\dim$) that satisfies:
\begin{align*}
    |h(x;\auxvec) - h(y;\auxvec)| & \leq L \cdot \|x - y\| \cdot  (1 + \|x\| + \|y\| + \|\auxvec\|^D) \; \forall \; x, y \; \in \; \R^k, \; \auxvec \; \in \; \R^{\auxdim}, \\
    |h(x; \auxvec)| & \leq L \cdot (1 + \|x\|^D + \|\auxvec\|^D)
\end{align*}
for some finite constants $L \geq 0$ and $D \in \N$, we have,
\begin{align*}
    \frac{1}{\dim} \sum_{i=1}^\dim h(\iter{v}{1}_i, \iter{v}{2}_i, \dotsc, \iter{v}{k}_i; \auxvec_i) \explain{P}{\rightarrow} \E h(\serv{V}_1, \serv{V}_2, \dotsc, \serv{V}_k; \serv{A}).
\end{align*}
We denote convergence in this sense using the notation $(\iter{\vv}{1}, \iter{\vv}{2}, \dotsc, \iter{\vv}{k}; \auxmat) \explain{\pw}{\longrightarrow} (\serv{V}_1, \serv{V}_2, \dotsc, \serv{V}_k; \serv{A})$. 
\end{definition}

\subsubsection{Vector Approximate Message Passing (VAMP) Algorithms}
To derive a universality principle for GFOMs, we first prove a universality result for a restricted class of first-order methods called vector approximate message passing (VAMP), whose update rule takes the simpler form:
\begin{align}\label{eq:VAMP}
    \iter{\vz}{t} = \mM_t \cdot \nonlin_t(\iter{\vz}{1}, \iter{\vz}{2}, \dotsc, \iter{\vz}{t-1}; \auxmat).
\end{align}
In addition, further requirements are imposed on the matrices $\mM_{1:T}$ and the non-linearities $\nonlin_{1:T}$ (to be introduced momentarily). These algorithms were introduced in the work of \citet{ma2017orthogonal}, \citet{rangan2019vector} and \citet{takeuchi2017rigorous}, who consider the situation when the matrices $\mM_{1:T}$ are rotationally invariant and show that the dynamics of VAMP algorithms admit a simple asymptotic characterization called the \emph{state evolution}. However, many authors (\emph{see e.g.,} \citep{ma2017orthogonal,ccakmak2019memory,abbara2020universality}) have empirically observed that the state evolution appears to hold even when $\mM_{1:T}$ are not rotationally invariant, but highly structured and have limited randomness. 

\paragraph{Restrictions on the Matrix Ensemble.} One of the key contributions of this paper is the identification of nearly deterministic conditions on $\mM_{1:T}$ that guarantee the validity of the state evolution. These conditions are stated in the following notion of a \emph{semi-random ensemble}. 
\begin{definition}[Semi-random Ensemble] \label{def:semirandom} A semi-random ensemble is a collection of  matrices $\mM_{1:T}$ of the form $\mM_i = \mS \mPsi_i \mS$ where:
\begin{enumerate}
    \item $\mS$ is a diagonal matrix consisting of i.i.d. random signs: $\mS = \diag(s_{1:\dim})$, $s_{1:\dim} \explain{i.i.d.}{\sim} \unif{\{\pm 1\}}$. 
    \item $\mPsi_{1:T}$ are deterministic $\dim \times \dim$ matrices which satisfy:
    \begin{enumerate}
        \item $\max_{i \in [T]}\|\mPsi_i\|_{\op} \lesssim 1$.
        \item The matrix $\hat{\Omega} \in \R^{T \times T}$ with entries $\hat{\Omega}_{ij} \explain{def}{=} \Tr(\mPsi_i \mPsi_j \tran ) /\dim$ converges to a matrix $\Omega \in \R^{T \times T}$ as $\dim \rightarrow \infty$. 
        \item For any fixed $\epsilon > 0$ (independent of $\dim$), 
        \begin{align*} \max_{i,j \in [T]}  \| \mPsi_i \mPsi_j \tran - \dim^{-1} \cdot \Tr(\mPsi_i \mPsi_j \tran ) \cdot \mI_{\dim} \|_\infty \lesssim \dim^{-1/2 + \epsilon}.
        \end{align*}
        \item Lastly, for any fixed $\epsilon > 0$ (independent of $\dim$), $ \max_{i \in [T]}\|\mPsi_i\|_\infty\lesssim \dim^{-1/2+\epsilon}$.
        \end{enumerate}
\end{enumerate}
We call the matrix $\hat{\Omega}$ the empirical covariance matrix of the semi-random ensemble and the matrix $\Omega$ the limiting covariance matrix of the semi-random ensemble $\mM_{1:k}$.  
\end{definition}
\begin{remark} Observe that if $\mM_{1:T}$ form a semi-random ensemble, then because of the requirement (2d) in \defref{semirandom}, for any $i \in [T]$, $\Tr(\mM_i)/\dim = \Tr(\mPsi_i)/\dim \rightarrow 0$. In particular, the matrices $\mM_{1:T}$ are asymptotically ``trace-free''. This is a well-known hallmark of VAMP algorithms and plays an important role in the analysis of these algorithms in the rotationally invariant case \citep{ma2017orthogonal,takeuchi2017rigorous,rangan2019vector}. 
\end{remark}

Before we can formally introduce the restrictions imposed on the non-linearities $\nonlin_{1:T}$ used in a VAMP algorithm \eqref{eq:VAMP}, we will need to introduce the \emph{state evolution} associated with a VAMP algorithm, which characterizes the asymptotic dynamics of the algorithm in the high-dimensional limit $\dim \rightarrow \infty$.
\paragraph{State Evolution of a VAMP Algorithm.} Every VAMP algorithm is associated with $T$ mean zero Gaussian random variables:
\begin{subequations} \label{eq:SE-VAMP}
\begin{align}
    (\serv{Z}_1, \serv{Z}_2, \dotsc, \serv{Z}_T) \sim \gauss{0}{\Sigma_T},
\end{align} 
which will describe the  asymptotic behavior of the VAMP iterates $\iter{\vz}{1}, \dotsc, \iter{\vz}{T}$. The  covariance matrix $\Sigma_T$ is determined using the following recursion:
\begin{align} 
   (\Sigma_{T})_{s,t+1} \explain{def}{=} \E[\serv{Z}_{t+1} \serv{Z}_s] & =  \Omega_{t+1,s} \cdot \E[\nonlin_{t+1}(\serv{Z}_1, \dotsc, \serv{Z}_t; \serv{A}) \nonlin_{s}(\serv{Z}_1, \dotsc, \serv{Z}_{s-1}; \serv{A})] \quad \forall \; s \; \leq \; t+1.
\end{align}
In the above display $\serv{A}$ is the auxiliary information random variable from \assumpref{side-info}, independent of $\serv{Z}_{1}, \dotsc, \serv{Z}_{T}$ and $\Omega$ is the limiting covariance matrix of the semi-random ensemble $\mM_{1:T}$ (\defref{semirandom}) driving the VAMP algorithm \eqref{eq:VAMP}. We will also find it useful to define another $T \times T$ covariance matrix $\Phi_T$ whose entries are given by:
\begin{align}
    (\Phi_{T})_{s,t} & = \E[\nonlin_{s}(\serv{Z}_1, \dotsc, \serv{Z}_{s-1}; \serv{A}) \nonlin_{t}(\serv{Z}_1, \dotsc, \serv{Z}_{t-1}; \serv{A})] \quad \forall \; s,t \; \in \; [T].
\end{align}
We will refer to  $\serv{Z}_1, \serv{Z}_2, \dotsc, \serv{Z}_T$  as the \emph{Gaussian state evolution random variables}, $\Sigma_T$ as the \emph{Gaussian state evolution covariance} and $\Phi_T$ as the \emph{non-Gaussian state evolution covariance}. We can now formally introduce the restrictions imposed on the non-linearities $\nonlin_{1:T}$ that can be used in a VAMP algorithm.
\end{subequations}

\paragraph{Restriction on Non-linearities.} The state evolution associated with a VAMP algorithm determines the asymptotic behavior of its iterates as $\dim \rightarrow \infty$, provided the non-linearities $\nonlin_{1:T}$ are ``divergence-free'' in the following sense. 
\begin{assumption}[Divergence-Free Non-Linearities]\label{assump:div-free} For each $t \in [T]$, the non-linearity $\nonlin_t$ satisfies:
\begin{align*}
    \E[ \serv{Z}_s \nonlin_t(\serv{Z_1}, \dotsc, \serv{Z}_{t-1}; \serv{A})] & = 0 \quad \forall \; s \; \in \;  [t-1],
\end{align*}
where $\serv{A}$ is the auxiliary information random variable from \assumpref{side-info} and $\serv{Z}_1, \dotsc, \serv{Z}_T$ are the Gaussian state evolution random variables from \eqref{eq:SE-VAMP}. Furthermore, $\serv{A}$ and $(\serv{Z}_1, \dotsc, \serv{Z}_T)$ are independent.
\end{assumption}
This requirement is another well-known hallmark of VAMP algorithms and was introduced in the works of \citet{ma2017orthogonal}, \citet{rangan2019vector} and \citet{takeuchi2017rigorous} in the context of VAMP algorithms driven by rotationally invariant matrices. This concludes our formal definition of VAMP algorithms. 

\subsection{A Universality Principle for Vector Approximate Message Passing}
\label{sec:vamp-dynamics}

We now state our result characterizing the asymptotic dynamics of a VAMP algorithm in terms of the Gaussian state evolution random variables $\serv{Z}_{1:T}$.

\begin{theorem}\label{thm:VAMP} Let $T \in \N$ be fixed (independent of $\dim$). Consider $T$ iterations of the VAMP algorithm in \eqref{eq:VAMP}. Suppose that:
\begin{enumerate}
    \item $\mM_{1:T}$ form a semi-random ensemble (\defref{semirandom}). 
    \item The auxiliary information matrix $\auxmat$ satisfies \assumpref{side-info}. 
    \item The non-linearities $\nonlin_{1:T}$ are continuous functions independent of $\dim$ and satisfy the divergence-free assumption (\assumpref{div-free}). Furthermore, they are uniformly Lipschitz and polynomially bounded in the sense that there are finite constants $L \in (0,\infty)$ and $\degree \in \N$ such that:
    \begin{align*}
        |\nonlin_i(z; \auxvec) - \nonlin_i(z^\prime; \auxvec)| & \leq L \cdot  \|z - z^\prime\|_2 \quad \forall \; i \in [T], \; \auxvec \in \R^{\auxdim}, \; z, z^\prime \in \R^{i-1}, \\
        |\nonlin_i(z;\auxvec)| & \leq L \cdot (1 + \|z\|^{\degree} + \|\auxvec\|^{\degree}) \quad \forall \; i \in [T], \; \auxvec \in \R^{\auxdim}, \; z \in \R^{i-1}.
    \end{align*}
\end{enumerate}
Then,
\begin{align*}
    (\iter{\vz}{1}, \iter{\vz}{2}, \dotsc, \iter{\vz}{T} ; \auxmat) \explain{\pw}{\longrightarrow} (\serv{Z}_1, \serv{Z}_2, \dotsc, \serv{Z}_T; \serv{A}).
\end{align*}
In the above display, $\serv{A}$ is the auxiliary information random variable from \assumpref{side-info} and $\serv{Z}_1, \dotsc, \serv{Z}_T$ are the Gaussian state evolution random variables from \eqref{eq:SE-VAMP}. Furthermore, $\serv{A}$ and $(\serv{Z}_1, \dotsc, \serv{Z}_T)$ are independent. 
\end{theorem}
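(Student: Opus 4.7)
My approach is to prove Theorem~\ref{thm:VAMP} by the method of moments. Since \assumpref{side-info} makes the law of $\serv{A}$ moment-determinate and the jointly Gaussian tuple $(\serv{Z}_1,\dotsc,\serv{Z}_T)$ is automatically moment-determinate, it suffices to verify that $\dim^{-1}\sum_{i=1}^{\dim} h(\iter{z}{1}_i,\dotsc,\iter{z}{T}_i;\auxvec_i) \explain{P}{\longrightarrow} \E\, h(\serv{Z}_1,\dotsc,\serv{Z}_T;\serv{A})$ for every polynomial test function $h$. A density argument based on the uniform Lipschitz and polynomial-growth bounds on $\nonlin_{1:T}$ then lifts the conclusion to the pseudo-Lipschitz class of \defref{PW2}. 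A parallel approximation step, via truncation and polynomial approximation of each $\nonlin_t$ combined with the stability of the state-evolution recursion in the Wasserstein metric, reduces the problem further to the case where $\nonlin_{1:T}$ are themselves polynomials.

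With both $h$ and $\nonlin_{1:T}$ polynomial, repeated substitution of $\iter{\vz}{r}=\mS\mPsi_r\mS\nonlin_r(\iter{\vz}{1},\dotsc,\iter{\vz}{r-1};\auxmat)$ unfolds the entry $\iter{z}{t}_i$ into a finite sum over labeled rooted trees: each internal edge carries a label $r\in[T]$ contributing a factor $(\mPsi_r)_{ab}$, each vertex carries a coordinate index $\ell\in[\dim]$ together with a sign $s_\ell$, and each leaf carries a monomial in the entries of $\auxvec_\ell$. A joint moment $\dim^{-1}\sum_i \prod_{t\in R}(\iter{z}{t}_i)^{m_t}\cdot P(\auxvec_i)$ thus becomes a sum, indexed by forests of such trees sharing a common root index $i$, of products of $\mPsi$-entries, products of signs $\prod_\ell s_\ell^{\nu_\ell}$, and auxiliary-information monomials. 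Taking expectation over the i.i.d.\ signs in $\mS$ forces every coordinate index $\ell$ to appear with an even total sign-parity, leaving a sum over pairings of vertices; by independence of $\mS$ from $\auxmat$, the auxiliary monomials contribute their population moments, which is precisely how $\serv{A}$ enters the limit.

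The main combinatorial task, and where I expect the principal difficulty to lie, is to isolate the pairings that survive in the limit and match their total contribution to the recursion defining $\Sigma_T$ in \eqref{eq:SE-VAMP}. Each clause of \defref{semirandom} plays a specific role here: the operator-norm bound controls the combinatorial weight of each diagram uniformly; the entry-wise bound $\|\mPsi_r\|_\infty\lesssim\dim^{-1/2+\epsilon}$ forces each unpaired coordinate to cost a factor $\dim^{-1/2+\epsilon}$, killing ``crossing'' diagrams in the $\dim\to\infty$ limit; and the concentration $\|\mPsi_i\mPsi_j\tran - \hat{\Omega}_{ij}\mI_{\dim}\|_\infty\lesssim \dim^{-1/2+\epsilon}$ contracts each paired $(\mPsi_i,\mPsi_j)$-edge into the scalar $\hat{\Omega}_{ij}\to\Omega_{ij}$, producing exactly the covariance coefficient appearing in \eqref{eq:SE-VAMP}. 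Finally, the divergence-free hypothesis \assumpref{div-free} removes the diagrams that would otherwise leave behind cross-moments of the form $\E[\serv{Z}_s\,\nonlin_t]$ rather than the desired $\E[\nonlin_s\,\nonlin_t]$; this closure step is what makes the limiting recursion reproduce $\Sigma_T$ instead of an unbounded tower of mixed moments. Extending the tree bookkeeping from the memory-free, side-information-free setting of \citep{dudeja2022universality} to the present memory-ful update rule, where $\nonlin_t$ depends on all of $\iter{\vz}{1},\dotsc,\iter{\vz}{t-1}$ and on the rows of $\auxmat$, is the main technical novelty required.

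To upgrade convergence of expectations to convergence in probability, I would repeat the forest expansion for the variance $\dim^{-2}\sum_{i,j}(\cdot)$ of the empirical average: pairs of root indices $(i,j)$ give rise to doubled forests, and a diagram that does not factor across the two roots is ``connected'', which forces at least one extra unpaired coordinate or one extra $\mPsi$-entry contributing $\dim^{-1/2+\epsilon}$. The total connected contribution is therefore $o(1)$, and Chebyshev's inequality combined with the first-moment analysis then yields the claimed convergence in probability.
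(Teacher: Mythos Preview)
Your high-level strategy---reduce to polynomial nonlinearities and test functions, unfold the iterates into a tree expansion, take the sign expectation to enforce parity, identify the surviving diagrams via the semi-random estimates, and then bound the variance---matches the paper's architecture. But the paper inserts two structural reductions before the tree expansion that you omit, and your account of the combinatorial step glosses over the part where the argument actually has content.

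On the reductions: \propref{orthogonalization} first Gram--Schmidt orthogonalizes both the ensemble $\mM_{1:T}$ and the nonlinearities $\nonlin_{1:T}$ so that the target state-evolution covariance becomes $\Sigma_T=I_T$, and simultaneously replaces the ensemble by a \emph{balanced} one with $(\mPsi_i\mPsi_j\tran)_{\ell\ell}=\Omega_{ij}$ exactly. Then the long-memory VAMP iteration is embedded into a \emph{memory-free} multivariate VAMP (MVAMP) of order $k=T$ with time-homogeneous matrices and nonlinearities; the decorated-$k$-tree expansion of \lemref{hermite-expansion} is written for this MVAMP, not for the original iteration. You propose to carry the memory directly through the trees, which is not obviously wrong but is not what is done, and it would complicate the bookkeeping considerably.

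The genuine gap is in your power-counting. Your statement that $\|\mPsi_i\mPsi_j\tran-\hat\Omega_{ij}\mI_\dim\|_\infty\lesssim\dim^{-1/2+\epsilon}$ ``contracts each paired $(\mPsi_i,\mPsi_j)$-edge into the scalar $\hat\Omega_{ij}$'' is only half the mechanism: that contraction (elimination of a \emph{removable edge pair}, \lemref{eliminate-pair}) requires the two parent vertices to share a block of the coloring partition $\pi$. After iterating all such removals one lands on \emph{simple configurations} (\defref{simple-configuration}), and for these the naive exponent $|\pi|-1-\tfrac12\sum_v\|p(v)\|_1$ can still be nonnegative. The paper then needs a second, distinct use of the same estimate---\emph{nullifying edges} (\defref{nullifying}), where the parents lie in \emph{different} blocks---to sharpen the exponent as in \propref{key-estimate}, together with a nontrivial graph-theoretic inequality (\factref{graph-theory}) to certify that every simple configuration has strictly negative improved exponent. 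The divergence-free hypothesis does not cancel $\E[\serv{Z}_s\nonlin_t]$ diagrams in the way you describe; it enters by forcing certain coefficients $\chi(F,\pi)$ to vanish (the \ref{weak-forbidden-weights-property}), which is what makes the class of relevant configurations small enough for the decomposition of \propref{decomposition} to terminate. Finally, the paper handles the variance by Efron--Stein and iterate-perturbation bounds (\thref{variance}), not by a doubled-forest expansion; your route may also work but would duplicate the entire combinatorial argument on product diagrams.
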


\thref{VAMP} can be interpreted as a universality result since it shows that the limiting empirical distribution of the VAMP iterates depends on the semi-random ensemble $\mM_{1:T}$ only via its limiting covariance matrix $\Omega$. Hence, the dynamics of a VAMP algorithm are asymptotically identical on two semi-random ensembles which have the same limiting covariance matrix. 

Observe that the dynamics of VAMP algorithms are universal under very weak conditions on $\mM_{1:T}$ (for instance, compare the requirements on $\mM_{1:T}$ in \thref{VAMP} with the requirements on the covariance matrix $\mX\tran \mX$ of the sensing matrix $\mX$ in \thref{RLS}), which makes it easier to prove a universality principle for VAMP algorithms under nearly deterministic conditions on $\mM_{1:T}$. This is why VAMP algorithms play a central role in our work. We postpone the proof of \thref{VAMP} to \sref{univ-vamp} and instead derive the universality results for GFOMs and RLS estimators from this result first.

\subsection{A Universality Principle for General First Order Methods}
\label{sec:GFOM-univ-proof}
In this section, we derive a universality principle for a general first-order method (GFOM) of the form \eqref{eq:GFOM}. To do so, we take inspiration from an argument of \citet{celentano2020estimation}. Specifically, we show that given a GFOM of the form \eqref{eq:GFOM}, one can design a VAMP algorithm (of the form \eqref{eq:VAMP}) such that the iterates of the GFOM can be obtained by a non-linear transformation (post-processing) of the VAMP iterates. We then argue that since the dynamics of VAMP are universal (by \thref{VAMP}), the dynamics of the GFOM must also universal. In order to have universal dynamics, GFOMs \eqref{eq:GFOM} require stronger requirements on the driving matrix ensemble $\mM_{1:T}$ than those stated in \defref{semirandom} (semi-random ensemble). This is because the matrices used in the VAMP algorithm that implements a given GFOM are products of the matrices $\mM_{1:T}$ that drive the given GFOM (this will be made clear in the proof of \thref{GFOM}, which will be stated and proved momentarily). To ensure that the products of $\mM_{1:T}$ are semi-random (in the sense of \defref{semirandom}),  we require that $\mM_{1:T}$ satisfy the following notion of \emph{strongly semi-random matrices}. 

\begin{definition}[Strongly Semi-Random Matrices]\label{def:strong-semirandom} A collection of matrices $\mM_{1:T}$ is strongly semi-random if the matrices are of the form $\mM_i = \mS \mPsi_i \mS$ where:
\begin{enumerate}
    \item $\mS$ is a diagonal matrix consisting of i.i.d. random signs: $\mS = \diag(s_{1:\dim})$, $s_{1:\dim} \explain{i.i.d.}{\sim} \unif{\{\pm 1\}}$. 
    \item $\mPsi_{1:T}$ are deterministic $\dim \times \dim$ matrices which satisfy $\max_{i \in T}\|\mPsi_i\|_{\op} \lesssim 1$.  
    \item Furthermore, the matrices $\mPsi_{1:T}$ have the property that for any subsets $B, B^\prime \subset [T]$ there exists a constant $\Omega_{B,B^\prime}$ (independent of $\dim$) such that:
\begin{align} \label{eq:ssr-empirical}
    \hat{\Omega}_{B,B^\prime} \explain{def}{=} \frac{\Tr( \mPsi_B \mPsi_{B^\prime} \tran )}{\dim} \rightarrow \Omega_{B,B^\prime} \quad \text{ as $\dim \rightarrow \infty$.}
\end{align}
and
\begin{align} \label{eq:ssr-Id-approx}
   \max_{B, B^\prime \subset [T]} \| \mPsi_B \mPsi_{B^\prime} \tran - \hat{\Omega}_{B,B^\prime} \mI_{\dim} \|_\infty \lesssim \dim^{-1/2 + \epsilon} \; \forall \; \epsilon > 0.
\end{align}
\end{enumerate}
In the above display, for any subset $B \subset [T]$ with sorted elements $b_1 < b_2 < \dotsb < b_{|B|}$ we defined the matrix $\mPsi_B$ as:
\begin{align*}
    \mPsi_B \explain{def}{=} \mPsi_{b_{|B|}} \cdot \mPsi_{b_{|B|-1}} \cdot \dotsb \cdot \mPsi_{b_{1}}.
\end{align*}
When $B = \emptyset$, we define $\mPsi_{\emptyset} \explain{def}{=} \mI_\dim$. The constants $\{\hat{\Omega}_{B,B^\prime} : B, B^\prime \subset [T]\}$ are called the empirical moments of $\mM_{1:T}$ and the constants $\{{\Omega}_{B,B^\prime} : B, B^\prime \subset [T]\}$ are called the limiting moments. 
\end{definition}
\noindent Before stating our universality result for GFOMs, we clarify some aspects of \defref{strong-semirandom} in the remarks below.
\begin{remark}[semi-random v.s. strongly semi-random ensembles] Notice that requirements \eqref{eq:ssr-empirical} and \eqref{eq:ssr-Id-approx} can be viewed as stronger analogs of the requirements (3b) and (3c) imposed in the definition of a semi-random ensemble (\defref{semirandom}). Indeed, setting $B, B^\prime$ as the singleton sets $B = \{i\}$ and $B^\prime = \{j\}$ for some $i,j \in [T]$ yields requirements (3b) and (3c) in \defref{semirandom}. This is why we call a matrix ensemble that satisfies the requirements of \defref{strong-semirandom} \emph{strongly} semi-random. However, an important caveat is that a strongly semi-random ensemble $\mM_{1:T} = \mS \mPsi_{1:T} \mS$ need not be semi-random in the sense of \defref{semirandom}. The reason is that \defref{semirandom} requires that the matrices $\mPsi_{1:T}$ satisfy the delocalization estimate $\max_{i \in [T]}\|\mPsi_i\|_\infty\lesssim \dim^{-1/2+\epsilon}$ for any $\epsilon > 0$. However, this requirement is not imposed in \defref{strong-semirandom}. 
\end{remark}

\begin{remark}[strongly semi-random matrices and the spectral universality class] \label{rem:ssr}  For regularized linear regression, many natural first-order methods are driven by a matrix ensemble $\mM_{1:T}$ where for each $i \in [T]$, $\mM_i$ is a matrix polynomial in $\mX\tran \mX$ (recall $\mX$ is the sensing matrix). That is, $\mM_i = p_i(\mX\tran \mX)$ for some polynomial $p_i : \R \mapsto \R$. In \sref{RLS-univ-proof}, we will show that the proximal method has this form. In this situation, if $\mX = \mJ \mS$ lies in a spectral universality class $\uclass{\mu}$ (\defref{univ-class}) for some compactly supported probability measure $\mu$ on $[0,\infty)$, then $\mM_{1:T}$ is strongly semi-random. To see this, notice that requirements (1) and (2) in \defref{strong-semirandom} are immediately satisfied thanks to requirements (1) and (2) in \defref{univ-class}. In order to verify requirements \eqref{eq:ssr-empirical} and \eqref{eq:ssr-Id-approx}, consider the special case where $\mM_i = (\mX\tran \mX)^{k_i}$ for $k_1, \dotsc, k_T \in \W$ (that is, $p_i$'s are monomials). In this case, for any $B, B^\prime \subset [T]$, $\mPsi_{B} \mPsi_{B^\prime} \tran = (\mJ\tran \mJ)^{\kappa(B) + \kappa(B^\prime)}$ where for any $B \subset [T]$, we defined $\kappa(B) = \sum_{i \in B} k_i$. Hence, \eqref{eq:ssr-empirical} and \eqref{eq:ssr-Id-approx} are satisfied thanks to requirements (3) and (4) in \defref{univ-class}. Since any polynomial can be expressed as a linear combination of monomials, this argument extends to the general case when $p_i$'s are arbitrary polynomials. 
\end{remark}

The following theorem provides a universality principle for GFOMs.

\begin{theorem}\label{thm:GFOM} Let $T \in \N$ be fixed (independent of $\dim$). Consider $T$ iterations of a general first order method of the form \eqref{eq:GFOM}. Suppose that:
\begin{enumerate}
    \item $\mM_{1:T} = \mS \mPsi_{1:T} \mS$ form a strongly semi-random ensemble (\defref{strong-semirandom}) with limiting moments $\{\Omega_{B,B^\prime} : B, B^\prime \subset [T]\}$.
    \item The auxiliary information matrix $\auxmat$ satisfies \assumpref{side-info}. 
    \item The non-linearities $\nonlin_{1:T}, \eta_{1:T}$ are continuous functions independent of $\dim$. Furthermore, they are uniformly Lipschitz and polynomially bounded in the sense that there are finite constants $L \in (0,\infty)$ and $\degree \in \N$ such that:
    \begin{align*}
        |\nonlin_i(z; \auxvec) - \nonlin_i(z^\prime; \auxvec)| \vee |\eta_i(z; \auxvec) - \eta_i(z^\prime; \auxvec)| & \leq L \cdot  \|z - z^\prime\|_2 \quad \forall \; i \in [T], \; \auxvec \in \R^{\auxdim}, \; z, z^\prime \in \R^{i-1}, \\
        |\nonlin_i(z;\auxvec)| \vee |\eta_i(z;\auxvec)| & \leq L \cdot (1 + \|z\|^{\degree} + \|\auxvec\|^{\degree}) \quad \forall \; i \in [T], \; \auxvec \in \R^{\auxdim}, \; z \in \R^{i-1}.
    \end{align*}
\end{enumerate}
Then, there exist random variables $(\serv{Z}_1, \serv{Z}_2, \dotsc, \serv{Z}_T; \serv{A})$ whose joint distribution is completely determined by $\nonlin_{1:T}$, $\eta_{1:T}$ and $\{\Omega_{B,B^\prime} : B, B^\prime \subset [T]\}$ such that:
\begin{align*}
    (\iter{\vz}{1}, \iter{\vz}{2}, \dotsc, \iter{\vz}{T}; \auxmat) \explain{\pw}{\longrightarrow} (\serv{Z}_1, \serv{Z}_2, \dotsc, \serv{Z}_T; \serv{A}).
\end{align*}
In the above display, $\serv{A}$ is the auxiliary information random variable from \assumpref{side-info}.
\end{theorem}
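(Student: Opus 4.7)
The plan is to realize the GFOM iterates $\iter{\vz}{1}, \ldots, \iter{\vz}{T}$ as a Lipschitz, polynomially bounded post-processing of the iterates of an auxiliary VAMP algorithm, after which \thref{VAMP} together with a continuous-mapping argument for the $\pw$-convergence in \defref{PW2} will deliver the conclusion. Two features of the GFOM obstruct a direct use of the VAMP template \eqref{eq:VAMP}: (a) the additive term $\eta_t(\iter{\vz}{1:t-1}; \auxmat)$, which has no counterpart in VAMP; and (b) the nonlinearities $\nonlin_t$ need not be divergence-free as required by \assumpref{div-free}. Both obstructions will be removed by an inductive construction that absorbs the deterministic part of each update into the post-processing map, while the random part is produced by VAMP driven by the \emph{centered} matrices $\tilde{\mM}_B \explain{def}{=} \mS\big(\mPsi_B - \hat{\Omega}_{B,\emptyset}\mI_{\dim}\big)\mS$ for various subsets $B \subseteq [T]$.

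I would proceed by induction on $t$, maintaining the invariant that there exist auxiliary VAMP iterates $\iter{\vu}{1}, \ldots, \iter{\vu}{m(t-1)}$ (with $m(t-1)$ bounded uniformly in $\dim$) and a Lipschitz, polynomially bounded map $\Phi_{t-1}$ such that $\iter{\vz}{s} = \Phi_{t-1,s}(\iter{\vu}{1:m(t-1)}; \auxmat)$ for every $s \le t-1$. To produce $\iter{\vz}{t}$, I would first form the candidate nonlinearity $\nonlin_t \circ \Phi_{t-1}$ and then Gram--Schmidt orthogonalize it against the Gaussian state-evolution variables $\serv{U}_1, \ldots, \serv{U}_{m(t-1)}$, whose covariances are computable from the already-constructed state evolution; the coefficients of the subtracted linear combination are deterministic, and the correction is finite-dimensional and Lipschitz. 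This gives a valid VAMP nonlinearity $\tilde{\nonlin}$ satisfying \assumpref{div-free}. One applies $\tilde{\mM}_{\{t\}}$ to obtain a new iterate $\iter{\vu}{m(t-1)+1}$; additional auxiliary VAMP iterates $\tilde{\mM}_B$ applied to appropriate nonlinearities reproduce $\mPsi_B$-based terms needed to represent $\tilde{\mM}_{\{t\}}$ times the subtracted Gaussian projection, and the identity-component correction $\hat{\Omega}_{\{t\},\emptyset}\,\nonlin_t(\iter{\vz}{1:t-1}; \auxmat)$ together with $\eta_t(\iter{\vz}{1:t-1}; \auxmat)$ is folded directly into $\Phi_t$ at the post-processing level.

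Next I would check that the family $\{\tilde{\mM}_B\}$ used in the construction forms a semi-random ensemble in the sense of \defref{semirandom}. Boundedness of operator norms follows from $\|\mPsi_i\|_{\op} \lesssim 1$ and finiteness of $T$, so $\|\mPsi_B\|_{\op} \lesssim 1$; convergence of $\Tr(\tilde{\mPsi}_B\tilde{\mPsi}_{B'}^\top)/\dim$ and the infinity-norm concentration of $\tilde{\mPsi}_B\tilde{\mPsi}_{B'}^\top - (\cdot)\mI_{\dim}$ follow from \eqref{eq:ssr-empirical} and \eqref{eq:ssr-Id-approx} after expanding the product using $\mS^2 = \mI$. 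The crucial delocalization (condition 2d of \defref{semirandom}), $\|\tilde{\mPsi}_B\|_\infty \lesssim \dim^{-1/2+\epsilon}$, is exactly why each $\mPsi_B$ is centered by $\hat{\Omega}_{B,\emptyset}\mI_{\dim}$: it is precisely \eqref{eq:ssr-Id-approx} with $B' = \emptyset$. This is also the step that forces the use of the \emph{strongly} semi-random hypothesis (\defref{strong-semirandom}) rather than just a semi-random one, since the auxiliary VAMP naturally produces matrix products indexed by subsets of $[T]$.

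With the VAMP so constructed, \thref{VAMP} yields $(\iter{\vu}{1}, \ldots, \iter{\vu}{m(T)}; \auxmat) \explain{\pw}{\longrightarrow} (\serv{U}_1, \ldots, \serv{U}_{m(T)}; \serv{A})$ with a Gaussian limit whose covariance is determined entirely by the limiting moments $\{\Omega_{B,B'}\}$, the nonlinearities $\nonlin_{1:T}, \eta_{1:T}$, and the law of $\serv{A}$. Since each $\iter{\vz}{t} = \Phi_{T,t}(\iter{\vu}{1:m(T)}; \auxmat)$ is Lipschitz and polynomially bounded in its first argument by construction, a standard continuous-mapping argument for the $\pw$-convergence of \defref{PW2} produces the required limit $(\serv{Z}_1, \ldots, \serv{Z}_T; \serv{A}) = (\Phi_{T,1}(\serv{U}_{1:m(T)};\serv{A}), \ldots, \Phi_{T,T}(\serv{U}_{1:m(T)};\serv{A}); \serv{A})$, whose joint distribution is manifestly a function only of $\{\Omega_{B,B'}\}$, $\nonlin_{1:T}$, $\eta_{1:T}$, and the law of $\serv{A}$. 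The main technical obstacle I anticipate is the bookkeeping of the inductive construction: one must show that the counts $m(t)$ remain bounded independently of $\dim$, that the Gram--Schmidt step is always well defined (handling degeneracies of the state-evolution covariance), and that the composed Lipschitz and growth constants of $\Phi_t$ stay controlled along the induction so the continuous-mapping argument goes through. Verifying the entrywise delocalization of the centered products $\tilde{\mPsi}_B$ with the correct exponent is conceptually simple but where the proof genuinely uses \defref{strong-semirandom} over \defref{semirandom}.
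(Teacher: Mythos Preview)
Your proposal is correct and follows essentially the same approach as the paper: both construct an auxiliary VAMP inductively using the centered matrices $\mPsi_B - \hat{\Omega}_{B,\emptyset}\mI_{\dim}$, make the composite nonlinearities divergence-free by subtracting a linear combination of the previous VAMP iterates, fold the $\eta_t$ and trace-correction terms into the post-processing map, and finish by continuous mapping from \thref{VAMP}. One point worth flagging: the paper builds the post-processing maps $H_t$ using the \emph{limiting} moments $\Omega_{B,B'}$ (so that $H_t$ is $\dim$-independent) rather than the empirical $\hat{\Omega}_{B,B'}$, which introduces a vanishing error vector $\iter{\vDelta}{t}$ with $\|\iter{\vDelta}{t}\|^2/\dim \explain{P}{\to} 0$; your invariant $\iter{\vz}{s} = \Phi_{t-1,s}(\iter{\vu}{1:m(t-1)};\auxmat)$ as an exact equality would make $\Phi$ depend on $\dim$, so you will need this small approximation step as well.
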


\thref{GFOM} is a universality result since it shows that the limiting empirical distribution of the GFOM iterates depends on the strongly semi-random ensemble $\mM_{1:T}$ only via its limiting moments $\{\Omega_{B,B^\prime} : B, B^\prime \subset [T]\}$. Hence, the dynamics of a GFOM are asymptotically identical on two strongly semi-random ensembles which have the same limiting moments. 

\begin{proof}[Proof Sketch of \thref{GFOM}] To illustrate the main idea behind the proof of \thref{GFOM}, we design a VAMP algorithm that implements the first two iterations of a given GFOM and describe how the universality principle for VAMP algorithms (\thref{VAMP}) implies \thref{GFOM}. The complete proof of \thref{GFOM} is provided in \appref{VAMP-to-GFOM}. Consider a $T$-iteration GFOM driven by a strongly semi-random ensemble $\mM_{1:T} = \mS \mPsi_{1:T} \mS$ with empirical moments $\{\hat{\Omega}_{B,B^\prime}: B, B^\prime \subset [T]\}$ and  limiting moments $\{\Omega_{B,B^\prime}: B, B^\prime \subset [T]\}$ (cf. \defref{strong-semirandom}):
\begin{align}\label{eq:GFOM-to-implement}
    \iter{\vz}{t} = \mM_t \cdot \nonlin_t(\iter{\vz}{1}, \iter{\vz}{2}, \dotsc, \iter{\vz}{t-1}; \auxmat) + \eta_t(\iter{\vz}{1}, \iter{\vz}{2}, \dotsc, \iter{\vz}{t-1}; \auxmat) \quad t \in [T].
\end{align}
The idea is to construct the VAMP algorithm implementing \eqref{eq:GFOM-to-implement} inductively.
\paragraph{Iteration 1.} Observe that the first iteration of the GFOM \eqref{eq:GFOM-to-implement}
\begin{align} \label{eq:GFOM-it-1}
    \iter{\vz}{1} & = \mS \mPsi_1 \mS \nonlin_1(\auxmat) + \eta_1(\auxmat) \explain{(a)}{=} \mS \cdot (\mPsi_1 - \hat{\Omega}_{\{1\},\emptyset} \mI_{\dim}) \cdot \mS \nonlin_1(\auxmat) +  \hat{\Omega}_{\{1\},\emptyset} \cdot \nonlin_1(\auxmat) + \eta_1(\auxmat).
\end{align}
In the above display $\hat{\Omega}_{\{1\},\emptyset} \explain{def}{=} \Tr(\mPsi_1)/\dim$ is an empirical moment of the strongly semi-random ensemble $\mM_{1:T} = \mS \mPsi_{1:T} \mS$ (cf. \defref{strong-semirandom}). The rational behind the centering done in step (a) is that we cannot directly use the matrix $\mM_1 = \mS \mPsi_1 \mS$ in the VAMP algorithm since it need not satisfy the delocalization requirement $\|\mPsi_1\|_{\infty} \lesssim \dim^{-1/2+\epsilon}$ imposed on a semi-random matrix (\defref{semirandom}). On the other hand, the centered matrix $\mPsi_1 - \hat{\Omega}_{\{1\},\emptyset} \mI_{\dim}$ does satisfy this requirement (recall \defref{strong-semirandom}). We construct the first iteration of the VAMP algorithm as:
\begin{align} \label{eq:VAMP-implement-iter1}
    \iter{\vw}{1} & = \mQ_1 g_1(\auxmat)
\end{align}
where:
\begin{enumerate}
    \item  $\mQ_1 \explain{def}{=} \mS \cdot \mXi_1 \cdot \mS$ with $\mXi_1 \explain{def}{=} \mPsi_1 - \hat{\Omega}_{\{1\},\emptyset} \mI_{\dim}$. Observe that $\mQ_1$ is a semi-random matrix (\defref{semirandom}) since the conditions imposed on $\mPsi_{1:T}$ in \defref{strong-semirandom} guarantee:
    \begin{subequations} \label{eq:semi-random-argument}
    \begin{align}
        \|\mXi_1\|_{\infty}  &= \|\mPsi_1 - \hat{\Omega}_{\{1\},\emptyset} \mI_{\dim}\|_{\infty} \lesssim \dim^{-1/2+\epsilon}, \\
        \|\mXi_1 \mXi_1\tran - \dim^{-1} \cdot \Tr(\mXi_1 \mXi_1\tran) \cdot \mI_{\dim}\|_{\infty} & =
        \|\mXi_1 \mXi_1\tran - (\hat{\Omega}_{\{1\},\{1\}} - \hat{\Omega}_{\{1\},\emptyset}^2) \cdot \mI_{\dim}\|_{\infty} \\ & = \left\|(\mPsi_1 \mPsi_1 \tran- \hat{\Omega}_{\{1\},\{1\}} \cdot \mI_{\dim}) -  \hat{\Omega}_{\{1\}, \emptyset} \cdot (\mPsi_1  + \mPsi_1\tran - 2\hat{\Omega}_{\{1\},\emptyset} \mI_{\dim}  )  \right\|_{\infty} \\
        & \leq \left\|\mPsi_1 \mPsi_1 \tran- \hat{\Omega}_{\{1\},\{1\}} \cdot \mI_{\dim} \right\|_\infty + 2 |\hat{\Omega}_{\{1\}, \emptyset}| \cdot \left\|\mPsi_1 - \hat{\Omega}_{\{1\},\emptyset} \mI_{\dim}  \right\|_{\infty} \\
        & \lesssim \dim^{-1/2+\epsilon}.
    \end{align}
    \end{subequations}
    \item The non-linearity $g_1:\R^{\auxdim} \mapsto \R$ is given by $g_1(\auxvec) \explain{def}{=} \nonlin_1(\auxvec)$. 
\end{enumerate}
Define a post-processing function $H_1: \R^{1+\auxdim} \mapsto \R$ as:
\begin{align*}
        H_1(w_1; \auxvec) \explain{def}{=} w_1 +  \Omega_{\{1\},\emptyset} \cdot \nonlin_1(\auxvec) + \eta_1(\auxvec),
\end{align*}
where $\Omega_{\{1\},\emptyset} \explain{def}{=} \lim_{\dim \rightarrow \infty} \Tr(\mPsi_1)/\dim$ is a limiting moment of the strongly semi-random ensemble $\mM_{1:T} \explain{def}{=} \mS \mPsi_{1:T} \mS$ (cf. \defref{strong-semirandom}). Recalling \eqref{eq:GFOM-it-1}, the above definitions ensure that 
\begin{align}\label{eq:post-process-iter1}
    \iter{\vz}{1} &=  \iter{\vw}{1} +  \hat{\Omega}_{\{1\},\emptyset} \cdot \nonlin_1(\auxmat) + \eta_1(\auxmat) \explain{(a)}{\approx}  \iter{\vw}{1} +  {\Omega}_{\{1\},\emptyset} \cdot \nonlin_1(\auxmat) + \eta_1(\auxmat) = H_1(\iter{\vw}{1}; \auxmat),
\end{align}
where the approximation in step (a) follows from the fact the convergence $\hat{\Omega}_{\{1\},\emptyset} \rightarrow {\Omega}_{\{1\},\emptyset}$ (see \appref{VAMP-to-GFOM} for the formal justification of this approximation). Hence, we have constructed the first iteration of the desired VAMP algorithm which implements one iteration of the given GFOM using a simple post-processing step.

\paragraph{Iteration 2.} Next, we consider the second iteration of the GFOM:
\begin{align}
    \iter{\vz}{2} = \mS \mPsi_{2} \mS \cdot \nonlin_{2}(\iter{\vz}{1}; \auxmat) + \eta_{2}(\iter{\vz}{1}; \auxmat)  &\explain{\eqref{eq:VAMP-implement-iter1}}{\approx} \mS \mPsi_{2} \mS \cdot \nonlin_{2}(H_1(\iter{\vw}{1}; \auxmat); \auxmat); \auxmat) + \eta_{2}(H_1(\iter{\vw}{1}; \auxmat); \auxmat) \nonumber \\
    & = \mS \mPsi_{2} \mS \cdot \tilde{\nonlin}_{2}(\iter{\vw}{1}; \auxmat) + \tilde{\eta}_{2}(\iter{\vw}{1}; \auxmat) \label{eq:GFOM-to-implement-iter2}
\end{align}
where in the last equation we defined the composite non-linearities $\tilde{\nonlin}_{2}, \tilde{\eta}_{2} : \R^{1 + \auxdim} \mapsto \R$ as:
\begin{align*}
    \tilde{\nonlin}_{2}(w_1; \auxvec) \explain{def}{=} \nonlin_{2}(H_1(w_1; \auxvec); \auxvec), \quad
    \tilde{\eta}_{2}(w_1; \auxvec) \explain{def}{=} \nonlin_{2}(H_1(w_1; \auxvec); \auxvec).
\end{align*}
Note that we cannot use the composite non-linearity $\tilde{\nonlin}_2$ directly in the VAMP algorithm since it need not be divergence free (\assumpref{div-free}). However, this can be addressed by correcting $\tilde{\nonlin}_2$ with a linear function to ensure it becomes divergence-free. Indeed, if $\serv{W}_1$ denotes the Gaussian state evolution random variable (recall \eqref{eq:SE-VAMP}) corresponding to the first iteration of the VAMP algorithm \eqref{eq:VAMP-implement-iter1} and $\serv{A}$ is the auxiliary information random vector (\assumpref{side-info}) independent of $\serv{W}_1$, then the function:
\begin{align}\label{eq:g_2-div-free}
    g_2(w_1; \auxmat) \explain{def}{=} \tilde{\nonlin}_{2}(w_1; \auxvec) - \beta_2 w_1, \quad \beta_2 \explain{def}{=} \frac{\E[\tilde{\nonlin}_{2}(\serv{W}_1; \serv{A}) \cdot \serv{W}_1]}{\E[\serv{W}_1^2]},
\end{align}
is divergence-free in the sense of \assumpref{div-free}. Hence, we can express  \eqref{eq:GFOM-to-implement-iter2} as:
\begin{align}
    \iter{\vz}{2} &\approx \mS \mPsi_{2} \mS \cdot g_{2}(\iter{\vw}{1}; \auxmat) + \beta_2 \cdot  \mS \mPsi_2 \mS \cdot \iter{\vw}{1} +  \tilde{\eta}_{2}(\iter{\vw}{1}; \auxmat) \nonumber\\
    & \explain{\eqref{eq:VAMP-implement-iter1}}{=} \mS \mPsi_{2} \mS \cdot g_{2}(\iter{\vw}{1}; \auxmat) + \beta_2 \cdot  \mS \mPsi_2 \mPsi_1 \mS \cdot g_{1}(\auxmat) - \beta_2 \hat{\Omega}_{\{1\},\emptyset} \cdot \mS \mPsi_2 \mS \cdot g_{1}(\auxmat)  +   \tilde{\eta}_{2}(\iter{\vw}{1}; \auxmat) \nonumber\\
    & \explain{(a)}{=} \mS (\mPsi_{2} -\hat{\Omega}_{\{2\},\emptyset} \cdot \mI_{\dim}) \mS \cdot g_{2}(\iter{\vw}{1}; \auxmat) + \beta_2 \cdot  \mS (\mPsi_2 \mPsi_1 - \hat{\Omega}_{\{1,2\},\emptyset}\cdot \mI_{\dim}) \mS \cdot g_{1}(\auxmat) + \hat{\Omega}_{\{2\},\emptyset} \cdot g_2(\iter{\vw}{1}; \auxmat)  \nonumber \\& \hspace{0.7cm} - \beta_2 \cdot  \hat{\Omega}_{\{1\},\emptyset} \cdot \mS (\mPsi_2-\hat{\Omega}_{\{2\},\emptyset}\cdot \mI_{\dim}) \mS \cdot g_{1}(\auxmat)  + \beta_2 \cdot (\hat{\Omega}_{\{1,2\},\emptyset} - \hat{\Omega}_{\{1\},\emptyset} \cdot \hat{\Omega}_{\{2\},\emptyset} ) \cdot g_1(\auxmat)  +   \tilde{\eta}_{2}(\iter{\vw}{1}; \auxmat), \label{eq:VAMP-implement-iter2-prelim} 
\end{align}
where step (a) follows from appropriately centering the matrices $\mPsi_2, \mPsi_2 \mPsi_1$ so that their centered versions satisfy the requirements of \defref{semirandom} (analogous to the argument used in \eqref{eq:GFOM-it-1}). In light of \eqref{eq:VAMP-implement-iter2-prelim}, we construct the next two iterations of the VAMP algorithm as:
\begin{align} \label{eq:VAMP-implement-iter2}
    \iter{\vw}{2} & = \mQ_2 \cdot g_2(\iter{\vw}{1}; \auxmat), \quad \iter{\vw}{3}  = \mQ_3 \cdot g_3(\iter{\vw}{1}, \iter{\vw}{2}; \auxmat),
\end{align}
where:
\begin{enumerate}
    \item The matrices $\mQ_2$ and $\mQ_3$ are defined as: $$\mQ_2 \explain{def}{=} \mS (\mPsi_2-\hat{\Omega}_{\{2\},\emptyset}\cdot \mI_{\dim}) \mS \quad \mQ_3 \explain{def}{=} \mS (\mPsi_2 \mPsi_1 - \hat{\Omega}_{\{1,2\},\emptyset} \cdot \mI_{\dim} - \hat{\Omega}_{\{1\},\emptyset} \cdot (\mPsi_2 - \hat{\Omega}_{\{2\},\emptyset}\cdot \mI_{\dim})) \mS.$$ A generalization of the argument in \eqref{eq:semi-random-argument} shows that $\mQ_{1:3}$ form a semi-random ensemble (complete details are provided in \appref{VAMP-to-GFOM}). 
    \item The non-linearity $g_2$ is as defined in \eqref{eq:g_2-div-free} and $g_3$ is defined as $g_3(w_1, w_2; \auxvec) = g_1(\auxvec)$.
\end{enumerate}
Define the post-processing function $H_2: \R^{3 + \auxdim} \mapsto \R$ as:
\begin{align*}
    H(w_1, w_2, w_3; \auxvec) & \explain{def}{=} w_2 + \beta_2 w_3 + {\Omega}_{\{2\},\emptyset} \cdot g_2(w_1 ; \auxvec) + \beta_2 \cdot (\Omega_{\{1,2\},\emptyset} - \Omega_{\{1\},\emptyset} \cdot \Omega_{\{2\},\emptyset} ) \cdot g_1(\auxvec)  +   \tilde{\eta}_{2}(w_1; \auxvec).
\end{align*}
Using these definitions and the approximations (justified formally in \appref{VAMP-to-GFOM}): $$\hat{\Omega}_{\{1\},\emptyset} \approx \Omega_{\{1\},\emptyset}, \quad \hat{\Omega}_{\{2\},\emptyset} \approx {\Omega}_{\{2\},\emptyset}, \quad \hat{\Omega}_{\{1,2\},\emptyset} \approx \Omega_{\{1,2\},\emptyset},$$ \eqref{eq:VAMP-implement-iter2-prelim} can be expressed as:
\begin{align} \label{eq:post-process-iter2}
    \iter{\vz}{2} & \approx H_2(\iter{\vw}{1}, \iter{\vw}{2}, \iter{\vw}{3}; \auxmat).
\end{align}
Hence we have designed a VAMP algorithm (cf. \eqref{eq:VAMP-implement-iter1} and \eqref{eq:VAMP-implement-iter2}) which implement two iterations of the given GFOM in 3 iterations, combined with post-processing. 

\paragraph{Dynamics of the GFOM.}  Using this reduction, the dynamics of the GFOM can be inferred from the dynamics of the VAMP algorithm constructed in \eqref{eq:VAMP-implement-iter1} and \eqref{eq:VAMP-implement-iter2}. Applying \thref{VAMP} to the constructed VAMP algorithm we obtain: 
\begin{align*}
    (\iter{\vw}{1}, \iter{\vw}{2}, \iter{\vw}{3};  \auxmat) \explain{\pw}{\longrightarrow} (\serv{W}_1, \serv{W}_2, \serv{W}_3; \serv{A}),
\end{align*}
where  $\serv{A}$ is the auxiliary information random vector from \assumpref{side-info} and $\serv{W}_{1:3}$ (independent of $\serv{A}$) are the Gaussian state evolution random variables corresponding to the VAMP algorithm (recall \eqref{eq:SE-VAMP}). Furthermore, the constructed post-processing functions $H_1, H_2$ can be shown to have the necessary continuity properties (see \appref{VAMP-to-GFOM} for details) to guarantee that:
\begin{align*}
    (\iter{\vz}{1}, \iter{\vz}{2};  \auxmat) \explain{\eqref{eq:post-process-iter1}\eqref{eq:post-process-iter2}}{\approx} (H_1(\iter{\vw}{1}; \auxmat), \; H_2(\iter{\vw}{1}, \iter{\vw}{2}, \iter{\vw}{3}; \auxmat) \; ;  \auxmat) \explain{\pw}{\longrightarrow} (\underbrace{H_1(\serv{W}_1; \serv{A}), \; H_2(\serv{W}_1, \serv{W}_2, \serv{W}_3; \serv{A})}_{\explain{def}{=}(\serv{Z}_1, \serv{Z}_2)} \; ;  \serv{A}).
\end{align*}
As is apparent from the reduction, the covariance matrix of $\serv{W}_1, \serv{W}_2$ as well as the postprocessing functions $H_1, H_2$ are completely determined by the non-linearities $\nonlin_1, \nonlin_2$ and the limiting moments $\{\Omega_{B,B^\prime}\}$ of the strongly semi-random ensemble used in the given GFOM. Consequently, $(\iter{\vz}{1}, \iter{\vz}{2};  \auxmat) \explain{\pw}{\longrightarrow} (\serv{Z}_1, \serv{Z}_2, ; \serv{A})$ where the joint law of $\serv{Z}_1, \serv{Z}_2,  \serv{A}$ is completely determined by the non-linearities $\nonlin_1, \nonlin_2$ and the limiting moments $\{\Omega_{B,B^\prime}\}$ of the strongly semi-random ensemble used in the given GFOM. These arguments extend to higher iterations and can be used to prove \thref{GFOM}. The complete proof is provided in \appref{VAMP-to-GFOM}. 
\end{proof}

\subsection{Proof of \thref{RLS}: Universality of RLS Estimators and Proximal Method}
\label{sec:RLS-univ-proof}

We now prove the universality result for the  proximal method and RLS estimators for the regularized linear regression problem (\thref{RLS}) using the general universality result for first-order methods stated in \thref{GFOM}. The proof of \thref{RLS} relies on the following fact, which collects some useful results regarding the proximal method.
\begin{fact}\label{fact:prox}[{\citealp[Theorem 6.42 and Theorem 10.21]{beck2017}}] Under \assumpref{convex-reg}, we have:
\begin{enumerate}
    \item The proximal operator $\eta(\cdot ; \stepsize): \R \mapsto \R$ (defined in \eqref{eq:prox}) is a 1-Lipschitz function for any $\stepsize>0$. 
    \item The algorithm \eqref{eq:prox-method} with any step size $\stepsize \in (0, 1/\|\mX\|^2_{\op}]$ has the convergence guarantee:
    \begin{align*}
        L(\iter{{\vbeta}}{t}; \mX, \vy) - \min_{\vbeta \in \R^{\dim}}  L(\vbeta; \mX, \vy) & \leq  \frac{\|\iter{{\vbeta}}{1} -  {\vbeta}_\opt(\mX, \vbeta_\star, \vepsilon)\|^2}{2\cdot \stepsize \cdot (t-1) \cdot \dim} \quad \forall \; t \; \in \; \N.
    \end{align*}
    In the above display,  $L(\cdot ; \mX, \vy)$ and ${\vbeta}_\opt(\mX, \vbeta_\star, \vepsilon)$ are as defined in \eqref{eq:RLS}.
\end{enumerate}
\end{fact}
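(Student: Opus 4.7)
The plan is to prove the two claims of \factref{prox} independently.

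\textbf{Part 1 (1-Lipschitz property of $\eta$).} First I would establish that $\eta(x;\stepsize)$ is well-defined. For fixed $x \in \R$ and $\stepsize > 0$, the objective in \eqref{eq:prox} is a sum of the proper closed convex function $\stepsize \rho$ and the strictly convex quadratic $(x-z)^2/2$; since $\rho$ diverges at infinity by \assumpref{convex-reg}, the sum is coercive and strictly convex, so a unique minimizer exists. First-order optimality then yields the resolvent characterization $x - \eta(x;\stepsize) \in \stepsize \cdot \partial \rho(\eta(x;\stepsize))$. For two input points $x_1, x_2$ with $z_i \bydef \eta(x_i;\stepsize)$, monotonicity of $\partial \rho$ applied to the pairs $(z_1, (x_1 - z_1)/\stepsize)$ and $(z_2, (x_2 - z_2)/\stepsize)$ gives
\[
\bigl((x_1 - z_1) - (x_2 - z_2)\bigr)(z_1 - z_2) \geq 0,
\]
which rearranges to $(z_1 - z_2)^2 \leq (x_1 - x_2)(z_1 - z_2)$. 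Cauchy--Schwarz then delivers $|z_1 - z_2| \leq |x_1 - x_2|$.

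\textbf{Part 2 ($O(1/t)$ convergence rate).} I would follow the standard proximal-gradient analysis applied to the unnormalized objective $\hat{L}(\vbeta) \bydef \dim \cdot L(\vbeta;\mX,\vy) = f(\vbeta) + g(\vbeta)$, where $f(\vbeta) \bydef \tfrac{1}{2}\|\vy - \mX\vbeta\|^2$ has $\|\mX\|_{\op}^2$-Lipschitz gradient and $g(\vbeta) \bydef \sum_{i=1}^{\dim}\rho(\beta_i)$ is proper closed convex. A direct check shows that \eqref{eq:prox-method} is precisely the proximal-gradient update for $f+g$ with step size $\stepsize$. The two ingredients are the descent lemma
\[
f(\vbeta') \leq f(\vbeta) + \ip{\nabla f(\vbeta)}{\vbeta' - \vbeta} + \tfrac{1}{2\stepsize}\|\vbeta' - \vbeta\|^2,
\]
valid for $\stepsize \in (0, 1/\|\mX\|_{\op}^2]$, together with the three-point prox inequality obtained by combining the optimality condition for \eqref{eq:prox} with convexity of $g$. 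Applying both at $\vbeta = \iter{\vbeta}{s}$, $\vbeta' = \iter{\vbeta}{s+1}$ with comparison point $\vbeta_\opt$ produces the one-step bound
\[
\hat{L}(\iter{\vbeta}{s+1}) - \hat{L}(\vbeta_\opt) \leq \tfrac{1}{2\stepsize}\bigl(\|\iter{\vbeta}{s} - \vbeta_\opt\|^2 - \|\iter{\vbeta}{s+1} - \vbeta_\opt\|^2\bigr),
\]
while the same combination with comparison point $\iter{\vbeta}{s}$ gives the monotonicity $\hat{L}(\iter{\vbeta}{s+1}) \leq \hat{L}(\iter{\vbeta}{s})$. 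Telescoping the one-step bound over $s = 1, \ldots, t-1$ and lower-bounding its left-hand side by $(t-1)(\hat{L}(\iter{\vbeta}{t}) - \hat{L}(\vbeta_\opt))$ yields $\hat{L}(\iter{\vbeta}{t}) - \hat{L}(\vbeta_\opt) \leq \|\iter{\vbeta}{1} - \vbeta_\opt\|^2/(2\stepsize(t-1))$; dividing by $\dim$ recovers the claim.

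The argument is entirely classical and I do not expect genuine obstacles. The only point requiring care is that \assumpref{convex-reg} permits $\rho$ to be non-differentiable and possibly extended-real-valued, which is precisely the setting where the subdifferential/resolvent formulation and the three-point prox inequality are designed to operate.
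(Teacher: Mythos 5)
The paper offers no proof of this Fact; it simply cites \citet[Theorems 6.42 and 10.21]{beck2017}, and your argument is exactly the standard textbook proof of those two results: firm nonexpansiveness of the scalar prox via monotonicity of $\partial\rho$, and the descent-lemma/three-point/telescoping analysis of proximal gradient applied to $\hat{L}=f+g$ with the correct identification of $\|\mX\|_{\op}^2$ as the gradient-Lipschitz constant and of $t-1$ as the number of proximal-gradient steps taken from $\iter{\vbeta}{1}$. Both parts are correct; the only (cosmetic) remark is that the bound is vacuous at $t=1$, which is an artifact of the statement rather than of your proof.
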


\begin{proof}[Proof of \thref{RLS}] The definition of $\uclass{\mu}$ (\defref{univ-class}) guarantees the existence of a constant $C > 0$ and a $\dim_0 \in \N$ such that:
\begin{align}\label{eq:good-evnt}
    \|\mX\|_{\op} \vee \|\tilde{\mX}\|_{\op} \leq C \quad \forall\;  \dim \geq \dim_0.
\end{align}
Throughout the proof we will assume that $\dim \geq \dim_0$ so that \eqref{eq:good-evnt} holds. In order to show claim (1) of \thref{RLS}, our strategy will be to implement the proximal method using a GFOM and to use \thref{GFOM} to show its universality. In order to do so, we will find it useful to reparameterize the proximal iterates.
\paragraph{Step 1: Reparameterizing the Proximal Iterates.} Recalling \eqref{eq:prox-method}, that the proximal method on the sensing matrix $\mX$ can be rewritten as:
\begin{subequations} \label{eq:prox-method-recall}
\begin{align}
 \iter{{\vbeta}}{1} &= \eta(\mX \tran \vy ; \stepsize) = \eta\big(\mX \tran \mX \vbeta_\star + \mX\tran \vepsilon; \stepsize \big) \explain{(a)}{=}  \eta(\mX \tran \mX \vbeta_\star + \sqrt{\mX\tran \mX} \vw ; \stepsize), \quad \vw \sim \gauss{\vzero}{\noisestd^2\mI_{\dim}}  \\
\iter{{\vbeta}}{t+1} &= \eta\big( (\mI_{\dim} - \stepsize \mX\tran \mX) \cdot \iter{{\vbeta}}{t} + \stepsize \cdot (\mX \tran \mX \vbeta_\star + \sqrt{\mX\tran \mX} \vw) ; \stepsize\big).
\end{align}
In the step marked (a) we observed that $\mX\tran \vepsilon \sim \gauss{\vzero}{\noisestd^2\mX\tran \mX}$.  Analogous expressions hold for $\{\iter{\tilde{\vbeta}}{t}: t \in \W\}$,  the proximal method iterates on the sensing matrix $\tilde{\mX}$. 
\end{subequations}
We will find it useful to introduce the re-parameterization for the proximal method on $\mX$
\begin{align*}
    \iter{\vz}{1} &= \mX \tran \mX \vbeta_\star + \sqrt{\mX\tran \mX} \vw, \quad 
    \iter{\vz}{t+1}  = - \stepsize \mX\tran \mX \cdot \iter{{\vbeta}}{t} + \iter{{\vbeta}}{t} + \stepsize \iter{\vz}{1}  \quad \forall \; t \geq 1,
\end{align*}
and the analogous reparameterization $\{\iter{\tilde{\vz}}{t} : t \in \W\}$ for the proximal method on $\tilde{\mX}$. 
Observe that the original proximal iterates can be recovered from the reparameterized iterates by:
\begin{align}\label{eq:reparameterization}
    \iter{\vbeta}{t} = \eta(\iter{\vz}{t}; \stepsize), \quad \iter{\tilde{\vbeta}}{t} = \eta(\iter{\tilde{\vz}}{t}; \stepsize) \quad \forall \; t \; \geq \; 1.
\end{align}
Hence, the reparameterized iterates follow the update rules:
\begin{align}\label{eq:prox-reparam}
    \iter{\vz}{1} =  \mX \tran \mX \vbeta_\star + \sqrt{\mX\tran \mX} \vw, \quad \iter{\vz}{t+1} = - \stepsize \mX\tran \mX \cdot \eta(\iter{{\vz}}{t}; \stepsize) + \eta(\iter{{\vz}}{t}; \stepsize) - \stepsize \iter{\vz}{1} \quad \forall \; t \; \geq 1, 
\end{align}
The update rules for $\{\iter{\tilde{\vz}}{t} : t \in \W\}$, the reparameterized iterates for $\tilde{\mX}$, are analogous.  We record the following useful estimates $\|\iter{{\vz}}{t}\|$ and $\|\iter{\tilde{\vz}}{t}\|$. For any fixed $t \in \N$ (independent of $\dim$) we have:
\begin{align}\label{eq:norm-estimate}
   \limsup_{\dim \rightarrow \infty} \frac{\E[\|\iter{{\vz}}{t} \|^2  ]}{\dim} < \infty, \quad \limsup_{\dim \rightarrow \infty} \frac{\E[\|\iter{\tilde{\vz}}{t} \|^2 \ ]}{\dim} < \infty.
\end{align}
The above estimates are readily obtained by induction. For $t=1$, we have:
\begin{align*}
   \frac{\E[\|\iter{{\vz}}{1} \|^2 ]}{\dim} & =  \frac{\E[\| \mX \tran \mX \vbeta_\star  \|^2]}{\dim} +  \frac{\E[\| \sqrt{\mX\tran \mX} \vw  \|^2 ]}{\dim} \explain{(a)}{\leq} \frac{C^4 \E[\|\vbeta_\star\|^2]}{\dim} +  \frac{C^2 \E[\|\vw\|^2]}{\dim} \lesssim 1.
\end{align*}
In the above display (a) follows from the fact that $\|\mX\|_{\op} \leq C$. For $t\geq 2$, we have:
\begin{align*}
    \frac{\E[\|\iter{{\vz}}{t} \|^2 ]}{\dim}  &\explain{\eqref{eq:prox-reparam}}{\leq}  \frac{2 \stepsize^2 \E[\|\iter{\vz}{1}\|^2]}{\dim}+ \frac{2\E[\|(\mI_{\dim} - \stepsize {\mX}\tran {\mX}) \cdot \eta(\iter{{\vz}}{t-1};\stepsize) \|^2]}{\dim}\\
    &  \explain{(a)}{\leq}  \frac{2 \stepsize^2 \E[\|\iter{\vz}{1}\|^2  ]}{\dim}+ \frac{2(1+C^2\stepsize)^2\E[\|\eta(\iter{{\vz}}{t-1};\stepsize) \|^2]}{\dim}\\
    & \explain{(b)}{\leq}  \frac{2 \stepsize^2 \E[\|\iter{\vz}{1}\|^2 ]}{\dim}+ 4 (1+C^2\stepsize)^2 \eta^2(0;\stepsize) +  \frac{4(1+C^2\stepsize)^2\E[\|\iter{{\vz}}{t-1} \|^2]}{\dim} \explain{(c)}{\lesssim} 1.
\end{align*}
In the above display (a) follows from the fact that $\|\mX\|_{\op} \leq C$. Step (b) follows from the fact that $\eta$ is 1-Lipschitz (cf. \factref{prox}), and (c) follows from the induction hypothesis. This proves the estimate claimed in \eqref{eq:norm-estimate} by induction. 
\paragraph{Step 2: Polynomial Approximation of $\sqrt{\mX\tran \mX}$.} In order to implement the reparameterized iterations \eqref{eq:prox-reparam} using a GFOM, we will find it convenient to approximate $ \sqrt{\mX\tran \mX}$ using a polynomial of $\mX\tran \mX$. By the Weierstrass approximation theorem, there is a sequence of polynomials $p_k: \R \mapsto \R$ indexed by $k \in \N$ such that $p_k$ is a polynomial of degree at most $k$ and:
\begin{subequations}\label{eq:sqrt-approx}
\begin{align}
    \Delta_k \explain{def}{=} \sup_{\lambda \in [0,C^2]} |p_k(\lambda) - \sqrt{\lambda}|  \rightarrow 0 \text{ as } k \rightarrow \infty.
\end{align}
where $C$ is the constant from \eqref{eq:good-evnt}. Hence:
\begin{align}
    \|\sqrt{\mX\tran \mX} - p_k(\mX\tran \mX) \|_{\op} &\leq \Delta_k, \quad   \|\sqrt{\tilde{\mX}\tran \tilde{\mX}} - p_k(\tilde{\mX}\tran \tilde{\mX}) \|_{\op}  \leq \Delta_k,
\end{align}
where $\Delta_{k} \rightarrow 0$ as $k \rightarrow \infty$. 
\end{subequations}
We introduce a family of iterations indexed by $k \in \N$ to approximate the reparameterized proximal iterates \eqref{eq:prox-reparam} on $\mX$:
\begin{align} \label{eq:prox-method-approx}
 \iter{{\vz}}{1,k} &= \mX \tran \mX \vbeta_\star + p_k({\mX\tran \mX}) \vw, \quad \iter{{\vz}}{t+1,k} = - \stepsize \mX\tran \mX \cdot \eta(\iter{{\vz}}{t,k}; \stepsize) + \eta(\iter{{\vz}}{t,k}; \stepsize) - \stepsize \iter{\vz}{1,k} \quad \forall \; t \; \geq 1. 
\end{align}
Analogously, we define $\{\iter{\tilde{\vz}}{t,k}: t \in \W\}$, a family of iterations indexed by $k \in \N$ to approximate the reparameterized proximal iterates $\{\iter{\tilde{\vz}}{t}: t \in \W\}$ on $\tilde{\mX}$. Next, we obtain estimates on the approximation error introduced in the iterations. For any fixed $t \in \N$, we have:
\begin{align}\label{eq:poly-approx-guarantee}
    \lim_{k \rightarrow \infty} \limsup_{\dim \rightarrow \infty} \frac{\E[\|\iter{{\vz}}{t} - \iter{{\vz}}{t,k} \|^2  ]}{\dim} =0, \quad \lim_{k \rightarrow \infty} \limsup_{\dim \rightarrow \infty} \frac{\E[\|\iter{\tilde{\vz}}{t}- \iter{\tilde{\vz}}{t,k} \|^2  ]}{\dim} =0.
\end{align}
This can be verified using induction. For $t=1$ we have:
\begin{align*}
     \lim_{k \rightarrow \infty} \limsup_{\dim \rightarrow \infty}\frac{\E[\|\iter{{\vz}}{1} - \iter{{\vz}}{1,k} \|^2 ]}{\dim} & \explain{\eqref{eq:prox-reparam},\eqref{eq:prox-method-approx}}{=}  \lim_{k \rightarrow \infty} \limsup_{\dim \rightarrow \infty}\frac{\E[\| \sqrt{\mX\tran \mX} \vw - p_k(\mX\tran \mX) \vw \|^2  ]}{\dim} \\&\explain{\eqref{eq:sqrt-approx}}{\leq}  \lim_{k \rightarrow \infty} \limsup_{\dim \rightarrow \infty} \Delta_k^2 \frac{\E\|\vw\|^2}{\dim} = 0.
\end{align*}
For $t \geq 2$ we have:
\begin{align*}
      \frac{\E[\|\iter{{\vz}}{t} - \iter{{\vz}}{t,k} \|^2  ]}{\dim} & \explain{\eqref{eq:prox-reparam},\eqref{eq:prox-method-approx}}{\leq }  \frac{\E[\|(\mI_{\dim} - \stepsize \mX\tran \mX) \cdot (\eta(\iter{{\vz}}{t-1,k};\stepsize)-\eta(\iter{{\vz}}{t-1};\stepsize)) - \stepsize \cdot( \iter{\vz}{1,k}-\iter{{\vz}}{1})\|^2  ]}{\dim} \\
      & \explain{(a)}{\leq} \frac{2(1+C^2\stepsize)^2\E[\|\eta(\iter{{\vz}}{t-1,k};\stepsize)-\eta(\iter{{\vz}}{t-1};\stepsize)\|^2  ]}{\dim} + \frac{2\stepsize^2\E[\|( \iter{\vz}{1,k}-\iter{{\vz}}{1})\|^2 ]}{\dim} \\
      & \explain{(b)}{\leq} \frac{2(1+C^2\stepsize)^2\E[\|\iter{{\vz}}{t-1,k}-\iter{{\vz}}{t-1}\|^2  ]}{\dim} + \frac{2\stepsize^2\E[\|( \iter{\vz}{1,k}-\iter{{\vz}}{1})\|^2  ]}{\dim} 
\end{align*}
In the above display, (a) follows from  the observation that $\|\mX\|_{\op} \leq C$  and (b) follows from the fact that $\eta$ is 1-Lipschitz (\factref{prox}). The above recursive bound immediately yields \eqref{eq:poly-approx-guarantee} by induction. 
\paragraph{Step 3: Implementing approximate proximal iterations using a GFOM.} We claim that the approximate proximal iterations \eqref{eq:prox-method-approx} can be implemented using a GFOM of the form given in \eqref{eq:GFOM}. Indeed, by introducing an extra iterate $\iter{\vz}{0}, \iter{\tilde{\vz}}{0}$ \eqref{eq:prox-method-approx} can be expressed as:
\begin{align*}
    \iter{\vz}{0,k} & = p_k({\mX\tran \mX}) \vw, \quad  \iter{{\vz}}{1,k} = \mX \tran \mX \vbeta_\star + \iter{\vz}{0,k}, \quad \iter{{\vz}}{t+1,k} = - \stepsize \mX\tran \mX  \eta(\iter{{\vz}}{t,k}; \stepsize) + \eta(\iter{{\vz}}{t,k}; \stepsize) - \stepsize \iter{\vz}{1,k}, \\
 \iter{\tilde{\vz}}{0,k} & = p_k({\tilde{\mX}\tran \tilde{\mX}}) \tilde{\vw}, \quad \iter{\tilde{\vz}}{1,k} =  \tilde{\mX} \tran \tilde{\mX} \vbeta_\star + \iter{\tilde{\vz}}{0,k}, \quad \iter{\vz}{t+1,k} = - \stepsize \tilde{\mX}\tran \tilde{\mX}  \eta(\iter{\tilde{\vz}}{t,k}; \stepsize) + \eta(\iter{\tilde{\vz}}{t,k}; \stepsize) - \stepsize \iter{\tilde{\vz}}{1,k}.
\end{align*}
These iterations are an instance of the general GFOM iteration \eqref{eq:GFOM} driven by matrices:
\begin{align*}
    \mM_0 = p_k({\mX\tran \mX}), \quad  \mM_1 = \mX\tran \mX, \quad \dotsc,  \quad \mM_{T} =  \mX\tran \mX,  \\
    \tilde{\mM}_0 = p_k({\tilde{\mX}\tran \tilde{\mX}}), \quad  \tilde{\mM}_1 = \tilde{\mX}\tran \tilde{\mX}, \quad \dotsc,  \quad \tilde{\mM}_{T} =  \tilde{\mX}\tran \tilde{\mX}.
\end{align*}
and auxiliary information matrices $$\mA = [\vw, \vbeta_\star], \quad \tilde{\mA} = [\tilde{\vw}, \vbeta_\star].$$ Since $\mX, \tilde{\mX} \in \uclass{\mu}$, $\mM_{0:T}$ and $\tilde{\mM}_{0:T}$ are strongly semi-random matrices with the same limiting moments (see \remref{ssr}). Furthermore, recall that $\vw \explain{d}{=} \tilde{\vw} \sim \gauss{\vzero}{\sigma^2\mI_{\dim}}$. Hence, by \thref{GFOM} for any fixed $k,T \in \N$:
\begin{align} \label{eq:GFOM-univ-conclusion}
    (\iter{\vz}{0,k}, \iter{\vz}{1,k}, \dotsc, \iter{\vz}{T,k}; \vbeta_\star)  \asymeq (\iter{\tilde{\vz}}{0,k}, \iter{\tilde{\vz}}{1,k}, \dotsc, \iter{\tilde{\vz}}{T,k}; \vbeta_\star).
\end{align}
\paragraph{Step 3: Universality of Proximal Iterates.} We now prove the first claim made in the theorem:
\begin{align} \label{eq:prox-pw2}
     (\iter{\vbeta}{1}, \dotsc, \iter{\vbeta}{T}; \vbeta_\star)  \asymeq (\iter{\tilde{\vbeta}}{1}, \dotsc, \iter{\tilde{\vbeta}}{T}; \vbeta_\star).
\end{align}
Since $\iter{\vbeta}{t} = \eta(\iter{\vz}{t}; \stepsize), \; \iter{\tilde{\vbeta}}{t} = \eta(\iter{\tilde{\vz}}{t}; \stepsize)$ and $\eta$ is 1-Lipschitz, it suffices to show:
\begin{align} \label{eq:pw2_sufficient}
     (\iter{\vz}{1}, \dotsc, \iter{\vz}{T}; \vbeta_\star)  \asymeq (\iter{\tilde{\vz}}{1}, \dotsc, \iter{\tilde{\vz}}{T}; \vbeta_\star).
\end{align}
Indeed, if \eqref{eq:pw2_sufficient} holds, then for any test function  $h:\R^{T+1} \mapsto \R$ be any test function which satisfies the regularity assumptions of \defref{PW2-eq}, we have:
\begin{align}
     &\frac{1}{\dim} \sum_{i=1}^\dim h(\iter{\beta}{1}_i, \dotsc, \iter{\beta}{T}_i; (\beta_\star)_i)  -  \frac{1}{\dim} \sum_{i=1}^\dim h(\iter{\tilde{\beta}}{1}_i, \dotsc, \iter{\tilde{\beta}}{T}_i; (\beta_\star)_i) \nonumber \\& \hspace{2.5cm}= \frac{1}{\dim} \sum_{i=1}^\dim h(\eta(\iter{z}{1}_i; \stepsize), \dotsc, \eta(\iter{z}{T}_i; \stepsize); (\beta_\star)_i)  -  \frac{1}{\dim} \sum_{i=1}^\dim h(\eta(\iter{\tilde{z}}{1}_i;\stepsize), \dotsc, \eta(\iter{\tilde{z}}{T}_i;\stepsize); (\beta_\star)_i). \label{eq:pw2_sufficient_composite}
\end{align}
Since $\eta( ; \stepsize)$ is 1-Lipschitz (\factref{prox}, item 1), the composite test function $\hat{h}: \R^{T+1} \mapsto \R$ defined as: $$\hat{h}(z_1, \dotsc, z_T ;  \beta_\star) \explain{def}{=} h(\eta(z_1 ; \stepsize), \dotsc, \eta(z_T; \stepsize); \beta_\star)$$
also satisfies the regularity conditions of \defref{PW2-eq}. Hence, using \eqref{eq:pw2_sufficient}, we conclude that the RHS of \eqref{eq:pw2_sufficient_composite} converges in probability to $0$. This means that \eqref{eq:pw2_sufficient} implies \eqref{eq:prox-pw2}. In order to prove that \eqref{eq:pw2_sufficient}, we again consider a test function $h:\R^{T+1} \mapsto \R$ which satisfies the regularity assumptions of \defref{PW2-eq}. Define:
\begin{align*}
    H_\dim & \explain{def}{=} \frac{1}{\dim} \sum_{i=1}^\dim h(\iter{z}{1}_i, \dotsc, \iter{z}{T}_i; (\beta_\star)_i), \quad \iter{H}{k}_\dim \explain{def}{=} \frac{1}{\dim} \sum_{i=1}^\dim h(\iter{z}{1,k}_i, \dotsc, \iter{z}{T,k}_i; (\beta_\star)_i), \\
    \tilde{H}_\dim & \explain{def}{=} \frac{1}{\dim} \sum_{i=1}^\dim h(\iter{\tilde{z}}{1}_i, \dotsc, \iter{\tilde{z}}{T}_i; (\beta_\star)_i), \quad \iter{\tilde{H}}{k}_\dim \explain{def}{=} \frac{1}{\dim} \sum_{i=1}^\dim h(\iter{\tilde{z}}{1,k}_i, \dotsc, \iter{\tilde{z}}{T,k}_i; (\beta_\star)_i).
\end{align*}
We need to show that $H_\dim - \tilde{H}_\dim \explain{P}{\rightarrow} 0$. To this end,consider any $\epsilon > 0$. We have:
\begin{align*}
   &\limsup_{\dim \rightarrow \infty} \P(|H_\dim - \tilde{H}_\dim| > 3 \epsilon) \\ & \qquad \qquad \qquad\leq \limsup_{k \rightarrow \infty}  \limsup_{\dim \rightarrow \infty} \left( \P(|H_\dim - \iter{H}{k}_\dim| >  \epsilon) + \P(|\iter{H}{k}_\dim-\iter{\tilde{H}}{k}_\dim| >  \epsilon) +  \P(|\tilde{H}_\dim - \iter{\tilde{H}}{k}_\dim| >  \epsilon)\right) \\
   &\qquad \qquad \qquad \explain{\eqref{eq:GFOM-univ-conclusion}}{=} \limsup_{k \rightarrow \infty}  \limsup_{\dim \rightarrow \infty} \left( \P(|H_\dim - \iter{H}{k}_\dim| >  \epsilon)  +  \P(|\tilde{H}_\dim - \iter{\tilde{H}}{k}_\dim| >  \epsilon)\right) \\
   & \qquad \qquad \qquad \leq \epsilon^{-1} \cdot \limsup_{k \rightarrow \infty}  \limsup_{\dim \rightarrow \infty}  \left( \E |H_\dim - \iter{H}{k}_\dim| + \E |\tilde{H}_\dim - \iter{\tilde{H}}{k}_\dim|   \right).
\end{align*}
Using the regularity of $h$ (cf. \defref{PW2-eq}) and the Cauchy-Schwarz Inequality:
\begin{align}\label{eq:recycle}
    &\limsup_{k \rightarrow \infty}  \limsup_{\dim \rightarrow \infty}  \left( \E |H_\dim - \iter{H}{k}_\dim|  \right)^2 \\&\leq  \limsup_{k \rightarrow \infty}  \limsup_{\dim \rightarrow \infty} 3L^2 \left(1 + \E[ \serv{B}^{2D}_\star] +  \frac{1}{\dim}\sum_{t=1}^T (\E[\|\iter{\vz}{t}\|^2 ] +  \E[\|\iter{\vz}{t,k}\|^2 ]) \right) \cdot \left( \frac{1}{\dim} \sum_{t=1}^T \E[\|\iter{\vz}{t} - \iter{\vz}{t,k} \|^2]\right) \nonumber\\
    & \leq \limsup_{k \rightarrow \infty}  \limsup_{\dim \rightarrow \infty} 3L^2 \left(1 + \E[ \serv{B}^{2D}_\star]+ \frac{1}{\dim}\sum_{t=1}^T (3\E[\|\iter{\vz}{t}\|^2 ] + 2  \E[\|\iter{\vz}{t,k}-\iter{\vz}{t}\|^2]) \right) \left( \frac{1}{\dim} \sum_{t=1}^T \E[\|\iter{\vz}{t} - \iter{\vz}{t,k} \|^2 ]\right) \nonumber\\
    &\explain{\eqref{eq:norm-estimate},\eqref{eq:poly-approx-guarantee}}{=} 0. \nonumber
\end{align}
The same argument shows that:
\begin{align*}
    \lim_{k \rightarrow \infty}  \limsup_{\dim \rightarrow \infty} \E |\tilde{H}_\dim - \iter{\tilde{H}}{k}_\dim|  = 0,
\end{align*}
and hence $H_\dim - \tilde{H}_\dim \explain{P}{\rightarrow} 0$, which proves claim (1) in the statement of the theorem. To prove claim (2), we argue that since the proximal method can construct arbitrarily accurate approximations for the RLS estimator, the RLS estimator must also exhibit universality. 
\paragraph{Step 4: Universality for RLS Estimator.} Observe that the additional assumptions made for claim (2) of the theorem guarantee that the objectives $L(\cdot \; ; \mX, \mX \vbeta_\star + \vepsilon): \R^\dim \mapsto \R$ and $L(\cdot \; ; \tilde{\mX}, \tilde{\mX} \vbeta_\star + \vepsilon): \R^\dim \mapsto \R$ are  $(\kappa\cdot \dim^{-1})$-strongly convex. Under these conditions the minimizers:
\begin{align*}
    \vbeta_{\opt} \explain{def}{=}  \arg\min_{\beta \in \R^\dim} L(\vbeta \; ; \mX, \mX \vbeta_\star + \vepsilon),  \quad \tilde{\vbeta}_{\opt} \explain{def}{=}  \arg\min_{\beta \in \R^\dim} L(\vbeta \; ; \tilde{\mX}, \tilde{\mX} \vbeta_\star + \vepsilon),
\end{align*}
are unique. As before, we use $\iter{\vbeta}{t}$ and $\iter{\tilde{\vbeta}}{t}$ to denote the proximal iterates on sensing matrices $\mX$ and $\tilde{\mX}$ respectively. In order to prove the second claim of the theorem we run the proximal method iterations with the special step size $\stepsize = 1/2C^2$, where $C$ is the constant from \eqref{eq:good-evnt}.  This step-size satisfies the requirement stated in \factref{prox} and hence for any $t \in \N$:
\begin{align}
    L(\iter{\vbeta}{t} \; ; \mX, \mX \vbeta_\star + \vepsilon) - L(\vbeta_{\opt} \; ; \mX, \mX \vbeta_\star + \vepsilon) & \leq \frac{C^2\|\iter{{\vbeta}}{1} -  {\vbeta}_\opt(\mX, \vbeta_\star, \vepsilon)\|^2}{(t-1) \cdot \dim} \nonumber \\&\leq \frac{2C^2}{(t-1)} \cdot \left( \frac{\|\iter{{\vbeta}}{1}\|^2 + \|{\vbeta}_\opt\|^2}{\dim} \right)\label{eq:prox-gap}
\end{align}
On the other hand, by the definition of $(\kappa\cdot \dim^{-1})$-strong convexity the sub-optimality gap of the proximal iterates can be lower bounded by:
\begin{align*}
    &L(\iter{\vbeta}{t} \; ; \mX, \mX \vbeta_\star + \vepsilon) - L(\vbeta_{\opt} \; ; \mX, \mX \vbeta_\star + \vepsilon) \\ &\qquad \qquad\qquad \qquad\geq \ip{\partial_{\vbeta}L(\vbeta_\opt \; ; \mX, \mX \vbeta_\star + \vepsilon) }{\iter{\vbeta}{t} -\vbeta_\opt } + \frac{\kappa \| \iter{\vbeta}{t} -\vbeta_\opt \|^2}{2\dim} = \frac{\kappa \| \iter{\vbeta}{t} -\vbeta_\opt \|^2}{2\dim}, 
\end{align*}
where the last equality follows by the subgradient optimality condition at $\vbeta_\opt$. Rearranging the above display and using \eqref{eq:prox-gap} we obtain:
\begin{align}\label{eq:prox-min-gap}
     \frac{\| \iter{\vbeta}{t} -\vbeta_\opt \|^2}{\dim} &\leq  \frac{4C^2}{\kappa(t-1)} \cdot \left( \frac{\|\iter{{\vbeta}}{1}\|^2 + \|{\vbeta}_\opt\|^2}{\dim} \right).
\end{align}
An analogous estimate holds for $\| \iter{\tilde{\vbeta}}{t} -\tilde{\vbeta}_\opt \|^2/\dim$.
Next, we upper bound $\|{\vbeta}_\opt\|^2$. Again by the definition of strong convexity:
\begin{align*}
       \frac{\|\mX \vbeta_\star + \vepsilon\|^2}{\dim} + \reg(0)=  L(\vzero \; ; \mX, \mX \vbeta_\star + \vepsilon) &\geq  L(\vbeta_{\opt} \; ; \mX, \mX \vbeta_\star + \vepsilon)  + \frac{\kappa \|\vbeta_\opt \|^2}{2\dim} \\
       & \geq \rho_{\min} + \frac{\kappa \|\vbeta_\opt \|^2}{2\dim},
\end{align*}
where in the last step we defined $\rho_{\min} \explain{def}{=} \min_{x\in \R} \rho(x) > -\infty$ (cf. \assumpref{convex-reg}). Hence,
\begin{align}\label{eq:optimum-norm-bound}
    \limsup_{\dim \rightarrow \infty} \frac{\E \|\vbeta_\opt \|^2}{\dim} \leq \frac{2}{\kappa} \cdot \left( \rho(0) - \rho_{\min} + \limsup_{\dim \rightarrow \infty}\frac{\E \|\mX \vbeta_\star + \vepsilon\|^2}{\dim}\right) \leq \frac{2(\rho(0) - \rho_{\min} + C^2 \E \serv{B}_\star^2 + \noisestd^2)}{\kappa}.
\end{align}
Using the above estimate in \eqref{eq:prox-min-gap} we obtain:
\begin{align}\label{eq:prox-min-gap-final}
     \lim_{t \rightarrow \infty} \limsup_{\dim \rightarrow \infty} \frac{\E \| \iter{\vbeta}{t} -\vbeta_\opt \|^2}{\dim} &= 0, \quad
       \lim_{t \rightarrow \infty} \limsup_{\dim \rightarrow \infty} \frac{\E \| \iter{\tilde{\vbeta}}{t} -\tilde{\vbeta}_\opt \|^2}{\dim} =0.
\end{align}
We now prove the second claim made in the theorem: $(\vbeta_\opt; \vbeta_\star)  \asymeq (\tilde{\vbeta}_\opt; \vbeta_\star)$. Let $h:\R^{2} \mapsto \R$ be any test function which satisfies the regularity assumptions of \defref{PW2-eq}. Define:
\begin{align*}
    H_\dim & \explain{def}{=} \frac{1}{\dim} \sum_{i=1}^\dim h((\beta_\opt)_i; (\beta_\star)_i), \quad \iter{H}{t}_\dim \explain{def}{=} \frac{1}{\dim} \sum_{i=1}^\dim h(\iter{\beta}{t}_i; (\beta_\star)_i), \\
    \tilde{H}_\dim & \explain{def}{=} \frac{1}{\dim} \sum_{i=1}^\dim h((\tilde{\beta}_{\opt})_i; (\beta_\star)_i), \quad \iter{\tilde{H}}{t}_\dim \explain{def}{=} \frac{1}{\dim} \sum_{i=1}^\dim h(\iter{\tilde{\beta}}{t}_i; (\beta_\star)_i).
\end{align*}
We need to show that $H_\dim - \tilde{H}_\dim \explain{P}{\rightarrow} 0$. To this end,consider any $\epsilon > 0$. We have:
\begin{align*}
   &\limsup_{\dim \rightarrow \infty} \P(|H_\dim - \tilde{H}_\dim| > 3 \epsilon) \\ &\qquad\qquad\qquad \leq \limsup_{t \rightarrow \infty}  \limsup_{\dim \rightarrow \infty} \left( \P(|H_\dim - \iter{H}{t}_\dim| >  \epsilon) + \P(|\iter{H}{t}_\dim-\iter{\tilde{H}}{t}_\dim| >  \epsilon) +  \P(|\tilde{H}_\dim - \iter{\tilde{H}}{t}_\dim| >  \epsilon)\right) \\
   &\qquad\qquad\qquad \explain{(a)}{\leq} \limsup_{t \rightarrow \infty}  \limsup_{\dim \rightarrow \infty} \left( \P(|H_\dim - \iter{H}{t}_\dim| >  \epsilon)  +  \P(|\tilde{H}_\dim - \iter{\tilde{H}}{t}_\dim| >  \epsilon)\right) \\
   &\qquad\qquad\qquad \leq \epsilon^{-1} \cdot \limsup_{t \rightarrow \infty}  \limsup_{\dim \rightarrow \infty}  \left( \E |H_\dim - \iter{H}{t}_\dim| + \E |\tilde{H}_\dim - \iter{\tilde{H}}{t}_\dim|   \right) \\
   &\qquad\qquad\qquad \explain{(b)}{=} 0.
\end{align*}
In the above display (a) follows from the fact that for any $t \in \N$, $(\iter{\vbeta}{t}; \vbeta_\star) \asymeq (\iter{\tilde{\vbeta}}{t}; \vbeta_\star)$ (this was the first claim of the theorem) and step (b) follows by repeating the arguments used in display \eqref{eq:recycle} (we use estimates \eqref{eq:optimum-norm-bound},\eqref{eq:prox-min-gap-final} instead of \eqref{eq:norm-estimate},\eqref{eq:poly-approx-guarantee}). This proves the second claim of the theorem and concludes the proof of \thref{RLS}. 
\end{proof}

\section{Proof of the Universality Principle for VAMP Algorithms}
\label{sec:univ-vamp}

The universality principle for VAMP algorithms given in \thref{VAMP} generalizes a result obtained in our prior work \citep[Theorem 1]{dudeja2022universality} in two important ways:
\begin{enumerate}
    \item \thref{VAMP} allows the VAMP algorithm to use an auxiliary information matrix $\mA$. This feature is important for VAMP algorithms designed for inference problems like linear regression where the signal and the noise are treated as auxiliary information. In contrast, our prior work considered VAMP algorithms for random optimization problems, which do not need any auxiliary information. 
    \item \thref{VAMP} allows the VAMP algorithm to use the entire history $\iter{\vz}{1}, \dotsc, \iter{\vz}{t-1}$ in the update rule for $\iter{\vz}{t}$ (cf. \eqref{eq:VAMP}). This flexibility ensures that VAMP algorithms can implement general first order methods like the proximal gradient method. In contrast, the VAMP algorithms studied in our prior work only used $\iter{\vz}{t-1}$ in the update rule for $\iter{\vz}{t}$.
\end{enumerate}
We now introduce some key ideas involved in the proof of the universality principle for VAMP algorithms (\thref{VAMP}) in the form of some intermediate results whose proofs are deferred to the appendix. The proof of \thref{VAMP} is provided at the end of this section.

\subsection{Key Intermediate Results}

\subsubsection{Simplifying Assumptions} 
We will prove \thref{VAMP} under two additional simplifying assumptions and subsequently argue that the result continues to hold without these simplifying assumptions using suitable reductions and approximation arguments. We discuss these simplifying assumptions in the paragraphs below.

\paragraph{Orthogonalization.} Our first simplifying assumption is that the non-linearities $\nonlin_{1:T}$ and the semi-random ensemble $\mM_{1:T}$ used in the VAMP algorithm satisfy the following orthogonality condition. 
\begin{sassumption}[Orthogonality] \label{sassump:orthogonality}A VAMP iteration of the form \eqref{eq:VAMP} is orthogonalized if $\Omega$, the limiting covariance matrix of the semi-random ensemble $\mM_{1:T}$ and the non-linearities $\nonlin_{1:T}$ satisfy:
\begin{align*}
    \Omega_{st} &\in \{0, 1\} \quad \forall \; s,t \; \in \; [T], \\
    \Omega_{tt} & = 1 \quad \forall \; t \; \in \; [T], \\ 
    \E[\nonlin_s(\serv{G}_1, \dotsc, \serv{G}_{s-1}; \serv{A})\nonlin_t(\serv{G}_1, \dotsc, \serv{G}_{t-1}; \serv{A})] & \in \{0,1\} \quad \forall \; s,t \; \in \; [T], \\
    \E \nonlin_t^2(\serv{G}_1, \dotsc, \serv{G}_{t-1}; \serv{A}) & = 1  \quad \forall \; t \; \in \; [T], \\
    \Omega_{st} \cdot  \E[\nonlin_s(\serv{G}_1, \dotsc, \serv{G}_{s-1}; \serv{A})\nonlin_t(\serv{G}_1, \dotsc, \serv{G}_{t-1}; \serv{A})] & = 0 \quad \forall \; s,t \; \in \; [T], \; s \neq t.
\end{align*}
In the above display $\serv{A}$ is the auxiliary information random variable from \assumpref{side-info} and $\serv{G}_{1:T}$ are i.i.d. $\gauss{0}{1}$ random variables independent of $\serv{A}$.
\end{sassumption}

The motivation behind the condition stated above is that if \sassumpref{orthogonality} holds, then it is immediate from \eqref{eq:SE-VAMP} that the Gaussian state evolution covariance associated with the VAMP algorithm is given by $\Sigma_T = I_T$, and the state evolution random variables $\serv{Z}_1, \dotsc, \serv{Z}_T$ are i.i.d. $\gauss{0}{1}$.  This simple form of the state evolution makes the proof of \thref{VAMP} tractable in this case. Furthermore, in \appref{simplifications} (specifically, \lemref{orthogonalization}) we argue that the iterates of any VAMP algorithm driven by a semi-random ensemble $\mM_{1:T}$ and non-linearities $\nonlin_{1:T}$ (that need not satisfy \sassumpref{orthogonality}) can be expressed as a linear combination of the iterates of an \emph{orthogonalized} VAMP algorithm that satisfies \sassumpref{orthogonality}. The semi-random ensemble used in the orthogonalized VAMP algorithm is obtained by applying the Gram-Schmidt process on the semi-random ensemble $\mM_{1:T}$ used in the original VAMP algorithm (by viewing $\mM_{1:T}$ as vectors in $\R^{\dim^2}$). Similarly, the non-linearities used in the orthogonalized VAMP algorithm are obtained by applying the Gram-Schmidt process on $\nonlin_{1:T}$, the non-linearities used in the original VAMP algorithm (by viewing them as vectors in the Gaussian Hilbert space corresponding to the Gaussian state evolution random variables associated with the given VAMP algorithm).

\paragraph{Balancing Semi-Random Matrices.} Our second simplifying assumption is that the semi-random ensemble $\mM_{1:T}$ driving the VAMP algorithm is balanced in the following sense.
\begin{sassumption}[Balanced Semi-Random Ensemble]\label{sassump:balanced} A semi-random ensemble $\mM_{1:T}$ with limiting covariance matrix $\Omega$ is balanced if for any $s,t \in [T]$
\begin{align*}
    (\mM_{s} \mM_t \tran)_{11} = (\mM_{s} \mM_t \tran)_{22} = \dotsb = (\mM_{s} \mM_t \tran)_{\dim\dim} = \Omega_{st}.
\end{align*}
\end{sassumption}
In \appref{simplifications} (see \lemref{balancing}), we argue that given a semi-random ensemble $\mM_{1:T}$, one can construct a balanced semi-random ensemble $\hat{\mM}_{1:T}$ such that:
\begin{align*}
    \max_{t \in [T]} \|\hat{\mM}_{t} - \mM_t \|_{\op} \ll 1.
\end{align*}
This approximation guarantee is sufficient to ensure that using $\hat{\mM}_{1:T}$ instead of $\mM_{1:T}$ in the VAMP algorithm does not change the limiting dynamics of the VAMP algorithm. We record the two simplifications introduced above in the following proposition, which is proved in \appref{simplifications}. 
\begin{proposition}\label{prop:orthogonalization} It suffices to prove \thref{VAMP} when  \sassumpref{balanced} and \sassumpref{orthogonality} hold in addition to the other assumptions required by \thref{VAMP}.\footnote{In other words, the proof of this proposition will show that if \thref{VAMP} holds under the additional \sassumpref{orthogonality} and \sassumpref{balanced}, then it must also hold without these additional simplifying assumptions.}
\end{proposition}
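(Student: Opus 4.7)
The plan is to establish the reduction in two separable stages: first replace the given semi-random ensemble by a \emph{balanced} one whose VAMP dynamics are close in a strong sense, and then use a Gram--Schmidt procedure on both the (now balanced) matrix ensemble and the non-linearities to obtain an \emph{orthogonalized} VAMP whose iterates linearly determine the original ones. Applying the hypothesized simplified version of \thref{VAMP} to the orthogonalized-balanced algorithm and unwinding the linear reparameterizations will then recover the general claim.

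For the first reduction, I would invoke the balancing lemma (referenced in the paragraph above the statement as \lemref{balancing}) that constructs, from $\mM_{1:T} = \mS\mPsi_{1:T}\mS$, a new ensemble $\hat{\mM}_{1:T} = \mS\hat{\mPsi}_{1:T}\mS$ satisfying \sassumpref{balanced} with the same limiting covariance matrix $\Omega$, and such that $\max_{t \in [T]}\|\hat{\mM}_t - \mM_t\|_{\op} \to 0$. Running the update rule \eqref{eq:VAMP} with $\hat{\mM}_{1:T}$ in place of $\mM_{1:T}$ produces iterates $\iter{\hat{\vz}}{t}$. A straightforward induction on $t$, using the uniform Lipschitz and polynomial growth bounds on $\nonlin_{1:T}$ together with uniform operator-norm control on both ensembles, gives $\|\iter{\vz}{t} - \iter{\hat{\vz}}{t}\|/\sqrt{\dim} \to 0$ in probability and bounded second moments $\E\|\iter{\vz}{t}\|^2/\dim \lesssim 1$. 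A standard truncation-plus-Cauchy-Schwarz argument (analogous to display \eqref{eq:recycle} in the proof of \thref{RLS}) then shows that, for any pseudo-Lipschitz test function $h$ as in \defref{PW2}, replacing $\iter{\vz}{1:T}$ by $\iter{\hat{\vz}}{1:T}$ inside empirical averages changes them by $o_P(1)$. Hence, if \thref{VAMP} applies to $\iter{\hat{\vz}}{1:T}$ it transfers verbatim to $\iter{\vz}{1:T}$, with identical Gaussian state evolution since $\Omega$ is unchanged.

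For the second reduction, I would apply the orthogonalization lemma (\lemref{orthogonalization}) that constructs an orthogonalized VAMP whose driving ensemble $\tilde{\mM}_{1:T}$ is obtained by Gram--Schmidt on the balanced $\hat{\mM}_{1:T}$ with respect to the inner product $\langle \mA,\mB\rangle \explain{def}{=} \Tr(\mA\mB\tran)/\dim$, and whose non-linearities $\tilde{\nonlin}_{1:T}$ are obtained by Gram--Schmidt on $\nonlin_{1:T}$ inside the Gaussian Hilbert space determined by the state evolution \eqref{eq:SE-VAMP} of the original algorithm. By construction, $\tilde{\mM}_{1:T}$ has limiting covariance equal to $\mI_T$ and the transformed non-linearities produce iterates $\iter{\tilde{\vz}}{t}$ whose state evolution random variables are i.i.d.\ $\gauss{0}{1}$, so \sassumpref{orthogonality} holds. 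The key bookkeeping step is that, by the very definition of Gram--Schmidt, each original iterate $\iter{\hat{\vz}}{t}$ is an \emph{explicit} deterministic linear combination of $\iter{\tilde{\vz}}{1},\dotsc,\iter{\tilde{\vz}}{t}$ (plus lower-order terms depending only on $\auxmat$), with coefficients given by the Cholesky-type factors of the limiting covariances $\Omega$ and $\Phi_T$. Convergence of the empirical moments $\hat{\Omega}$ and \assumpref{side-info} ensure that the analogous finite-sample coefficients converge to these limits. Verifying that divergence-freeness (\assumpref{div-free}), uniform Lipschitz continuity, and polynomial growth are all preserved under the Gram--Schmidt step reduces to checking that these classes are closed under finite linear combinations and subtraction of their projections onto the linear span of lower-indexed iterates, which is routine.

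Combining the two reductions: if \thref{VAMP} has been established under \sassumpref{orthogonality} and \sassumpref{balanced}, then applying it to $\iter{\tilde{\vz}}{1:T}$ yields $(\iter{\tilde{\vz}}{1}, \dotsc, \iter{\tilde{\vz}}{T}; \auxmat) \explain{\pw}{\longrightarrow} (\serv{G}_1,\dotsc,\serv{G}_T; \serv{A})$ with $\serv{G}_{1:T}$ i.i.d.\ $\gauss{0}{1}$ independent of $\serv{A}$. Using the linear combination formulas from the orthogonalization step and the elementary fact that the pseudo-Lipschitz class of \defref{PW2} is closed under precomposition with affine maps in the vector coordinate, we obtain $(\iter{\hat{\vz}}{1}, \dotsc, \iter{\hat{\vz}}{T}; \auxmat) \explain{\pw}{\longrightarrow} (\serv{Z}_1, \dotsc, \serv{Z}_T; \serv{A})$ with $\serv{Z}_{1:T} \sim \gauss{0}{\Sigma_T}$ precisely the Gaussian state evolution of the original algorithm. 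Transferring this to $\iter{\vz}{1:T}$ via the balancing approximation of the first paragraph completes the proof. I expect the main obstacle to lie in the Gram--Schmidt step when either $\Omega$ or $\Phi_T$ is singular: handling the degenerate directions will require either a perturbation-and-limit argument (replacing the problematic covariances by $\Omega + \delta \mI_T$ and sending $\delta \downarrow 0$ after applying the result) or, equivalently, augmenting the system with auxiliary orthonormal directions filled in from a reserve pool of independent semi-random matrices and trivial divergence-free non-linearities. This technical subtlety is presumably why \lemref{balancing} and \lemref{orthogonalization} are stated and proved in a separate appendix.
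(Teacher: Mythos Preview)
Your two-stage plan (balance, then orthogonalize via Gram--Schmidt on both matrices and non-linearities) captures the paper's strategy in outline, and your closing paragraph correctly anticipates that degeneracy of $\Omega$ or $\Phi_T$ is the main technical obstacle. However, the order you propose has a genuine gap.

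The paper actually proceeds in \emph{three} steps, introducing a further simplifying assumption (non-degeneracy: $\lambda_{\min}(\Omega),\lambda_{\min}(\Phi_T),\lambda_{\min}(\Sigma_T)>0$) and disposing of it \emph{first}. The perturbation that accomplishes this replaces $\mPsi_t$ by $\mPsi_t+\epsilon\mG_t$ with $\mG_{1:T}$ independent GOE matrices, and adds $\epsilon$ times fresh i.i.d.\ Gaussian auxiliary noise to each $\nonlin_t$ (together with an adaptive linear correction to restore divergence-freeness); one applies the remaining reductions to the perturbed algorithm and sends $\epsilon\downarrow 0$. Only \emph{after} non-degeneracy is in place does the paper balance and then orthogonalize. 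This ordering is not cosmetic: the balancing construction you invoke already requires $\lambda_{\min}(\Omega)>0$, because the balanced ensemble is built row-by-row via $\hat{\mU}_\ell\tran=\mU_\ell\tran\,\Omega_\ell^{-1/2}\Omega^{1/2}$, where $\Omega_\ell$ is the $T\times T$ Gram matrix of the $\ell$th rows of $\mPsi_{1:T}$, and one needs these $\Omega_\ell$ uniformly invertible. So ``balance first and patch degeneracy only at the Gram--Schmidt step'' does not work; the degeneracy issue infects \emph{both} reductions, and the paper handles it upfront as a separate lemma (\lemref{degeneracy}).

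A second discrepancy: the orthogonalized VAMP the paper constructs has $T^2$ iterations, not $T$. Gram--Schmidt yields orthonormal matrices $\hat{\mM}_{1:T}$ and orthonormal non-linearities $\hat{\nonlin}_{1:T}$, but \sassumpref{orthogonality} also demands the cross condition $\Omega_{st}\cdot\E[\nonlin_s\nonlin_t]=0$ for $s\neq t$. The paper enforces this by forming all pairings $\iter{\hat{\vz}}{t,i}=\hat{\mM}_i\,\hat{\nonlin}_t(\cdot)$ for $(t,i)\in[T]^2$ and running them as a single VAMP of length $T^2$: the new limiting covariance entry is $1$ iff the $i$-indices agree, the non-linearity inner product is $1$ iff the $t$-indices agree, and both can hold simultaneously only when the full indices coincide. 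A $T$-iteration orthogonalized algorithm as you sketch would not satisfy this cross condition.
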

\subsubsection{Multivariate Vector Approximate Message Passing (MVAMP)} \label{sec:MVAMP-to-VAMP}
As mentioned previously, one of the important aspects in which \thref{VAMP} generalizes the main result in our prior work \citep[Theorem 1]{dudeja2022universality} is that it allows the VAMP algorithm to use the entire history $\iter{\vz}{1:t}$ in the update rule for $\iter{\vz}{t+1}$. In order to obtain this generalization, we use a simple reduction which shows that such long-memory VAMP algorithms can be implemented using memory-free VAMP algorithms with \emph{matrix-valued} iterates. We call such VAMP algorithms Multivariate VAMP (MVAMP) algorithms and introduce them formally below.

\paragraph{MVAMP Algorithms.} A Multivariate VAMP (MVAMP) algorithm is specified using:
\begin{enumerate}
    \item A positive integer $\order \in \N$, known as the order of the MVAMP algorithm. 
    \item The total number of iterations $T$.
    \item A collection of $\order$ matrices $\mM_1, \mM_2, \dotsc, \mM_{\order} \in \R^{\dim \times \dim}$. 
    \item A matrix $\dim \times \auxdim$ matrix $\auxmat$ of auxiliary information with rows $\auxvec_1, \auxvec_2, \dotsc, \auxvec_\dim \in \R^{\auxdim}$.
    \item A collection of $\order$ nonlinerities $\nonlin_1, \nonlin_2, \dotsc \nonlin_\order$ where each $f_i: \R^{\order + \auxdim} \mapsto \R$.  
\end{enumerate}
The MVAMP algorithm maintains $\order$ iterates $\iter{\vz}{t,1}, \iter{\vz}{t,2}, \dotsc, \iter{\vz}{t,\order} \in \R^\dim$, which are updated as follows:
\begin{subequations} \label{eq:mvamp}
\begin{align}
    \iter{\vz}{t,i} & = \mM_i \nonlin_i(\iter{\vz}{t-1,1}, \iter{\vz}{t-1,2}, \dotsc, \iter{\vz}{t-1,\order}; \auxmat)  \; \forall \;  i \in [\order],\; \forall \; t \; \in [T].
\end{align}
The algorithm is initialized with:
\begin{align}
     \iter{\vz}{0,i} \explain{i.i.d.}{\sim} \gauss{\vzero}{\mI_\dim}.
\end{align}
\end{subequations}
For each $i \in [\order]$ and for each $t \in [T]$, we define the matrices $\iter{\mZ}{t,\cdot}$ and $\iter{\mZ}{\cdot,i}$ as follows:
\begin{align*}
    \iter{\mZ}{t,\cdot} & = \begin{bmatrix} \iter{\vz}{t,1} & \iter{\vz}{t,2} & \hdots & \iter{\vz}{t,\order} \end{bmatrix}, \\
    \iter{\mZ}{\cdot,i} & = \begin{bmatrix} \iter{\vz}{0,i} & \iter{\vz}{1,i} & \hdots & \iter{\vz}{T,i} \end{bmatrix}.
\end{align*}
We denote the row $j$ of these matrices by $\iter{z}{t,\cdot}_j \in \R^{\order}$ and $\iter{z}{\cdot,i}_j \in \R^{T+1}$ for each $j \in [\dim]$.

\paragraph{Advantage of MVAMP Algorithms.} Our motivation for implementing the VAMP algorithm \eqref{eq:VAMP} using the MVAMP algorithms of the form \eqref{eq:mvamp} is that these algorithms have two convenient properties which make their analysis easier. First, as mentioned previously, they are memory-free in the sense that the update equation for $\iter{\mZ}{t,\cdot}$ only depends on the previous iterate $\iter{\mZ}{t-1,\cdot}$. Secondly, the non-linearities and the semi-random matrices used in the MVAMP do not change at each iteration, unlike the VAMP algorithm in \eqref{eq:VAMP}.

\paragraph{Implementing VAMP using MVAMP.}  The class of MVAMP algorithms is rich enough to implement any VAMP algorithm. In order to demonstrate this, we consider a $T$-iteration VAMP algorithm driven by a semi-random ensemble $\mM_{1:T}$ and non-linearities $\nonlin_{1:T}$:
\begin{align} \label{eq:VAMP-to-MVAMP}
    \iter{\vz}{t} = \mM_t \cdot \nonlin_t(\iter{\vz}{1}, \iter{\vz}{2}, \dotsc, \iter{\vz}{t-1}; \auxmat) \quad \forall \; t \in [T].
\end{align}
This VAMP algorithm can be implemented using $T$ iterations of a MVAMP algorithm of order $\order = T$ with update equations:
\begin{align*}
    \iter{\vw}{t,i} \explain{def}{=} \mM_i \nonlin_i( \iter{\vw}{t-1,1}, \iter{\vw}{t-1,2} \dotsc, \iter{\vw}{t-1,i-1}; \auxmat) \quad \forall \; t,i \; \in [T].
\end{align*}
In order to show that the above MVAMP algorithm implements the given VAMP algorithm, we can compute the first two iterations of the MVAMP algorithm:
\begin{align}\label{eq:VAMP-to-MVAMP-iter1}
    \iter{\mW}{1,\cdot} & = \begin{bmatrix}\mM_1 \nonlin_1(\auxmat) &  \times & \times & \hdots &\times \end{bmatrix} \explain{\eqref{eq:VAMP-to-MVAMP}}{=} \begin{bmatrix}\iter{\vz}{1}  &  \times & \times & \hdots &\times \end{bmatrix}.
\end{align}
In the above display $\times$ denotes columns of $ \iter{\mW}{1,\cdot}$ whose explicit expressions are not important for the argument. Similarly, we can compute the second iteration of the MVAMP algorithm:
\begin{align*}
     \iter{\mW}{2,\cdot} = \begin{bmatrix}\mM_1 \nonlin_1(\auxmat) &  \mM_2 \nonlin_2(\iter{\vw}{1,1}; \auxmat) & \times & \hdots &\times \end{bmatrix} &\explain{\eqref{eq:VAMP-to-MVAMP-iter1}}{=} \begin{bmatrix}\iter{\vz}{1}  &  \mM_2 \nonlin_2(\iter{\vz}{1}; \auxmat) & \times & \hdots &\times \end{bmatrix}  \\&\explain{\eqref{eq:VAMP-to-MVAMP}}{=} \begin{bmatrix}\iter{\vz}{1}  &  \iter{\vz}{2}  & \times & \hdots &\times \end{bmatrix}.
\end{align*}
By an induction argument for any $t \in [T]$:
\begin{align*}
    \iter{\mW}{t,\cdot} & = \begin{bmatrix}\iter{\vz}{1}  &  \iter{\vz}{2}   & \hdots &\iter{\vz}{t} &\times &\hdots & \times \end{bmatrix}.
\end{align*}
Hence $\iter{\mW}{T,\cdot} = [\iter{\vz}{1}, \iter{\vz}{2}, \dotsc, \iter{\vz}{T}]$, which shows that the MVAMP algorithm implements the given VAMP algorithm. Since the final iterate of the MVAMP encodes the entire $T$-iteration trajectory of the given VAMP algorithm, in order to characterize the limiting joint empirical distribution of the VAMP iterates $\iter{\vz}{1}, \dotsc, \iter{\vz}{T}$, it is sufficient to characterize the limiting joint empirical distribution of the last iterate of the MVAMP algorithm $\iter{\vw}{T,1}, \dotsc, \iter{\vw}{T,T}$.

\subsubsection{Polynomial Approximation} \label{sec:poly-approx}
In order to analyze the limiting empirical distribution of $\iter{\vz}{T,1}, \dotsc, \iter{\vz}{T,k}$, the MVAMP iterate at time $T$ (recall \eqref{eq:mvamp}), we use the method of moments. This involves expressing the key quantity of interest:
\begin{align*}
    \frac{1}{\dim} \sum_{\ell=1}^\dim h(\iter{z}{T,1}_{\ell}, \dotsc, \iter{z}{T,\order}_{\ell} ; \auxvec_\ell)
\end{align*}
as a polynomial in the semi-random ensemble $\mM_{1:\order}$ used in the MVAMP iterations. In the above display $h:\R^{\order+\auxdim} \mapsto \R$ is a test function that satisfies the regularity assumptions stated in \defref{PW2}. In order to do so, we approximate the test function $h$ and the non-linearities $\nonlin_{1:k}$ used in the MVAMP algorithm \eqref{eq:mvamp} by polynomials. The following lemma constructs the polynomial approximations for $\nonlin_{1:k}$ and $h$. The approximations of $\nonlin_{1:k}$ are constructed so that the divergence-free property from \assumpref{div-free} and the orthogonality property stated in \sassumpref{orthogonality} are maintained. 

\begin{lemma}[Approximation] \label{lem:low-degree-approx}Let $\nonlin_{1:k} : \R^{k+\auxdim} \mapsto \R$ be a collection of polynomially bounded, continuous  non-linearities and $h: \R^{k + \auxdim} \mapsto \R$ be a polynomially bounded, continuous test function which satisfy:
\begin{align*}
        \max_{i \in [\order]} |\nonlin_i(z;\auxvec)| & \leq L \cdot (1 + \|z\|^{\degree} + \|\auxvec\|^{\degree}), \quad |h(z;\auxvec)|  \leq L \cdot (1 + \|z\|^{\degree} + \|\auxvec\|^{\degree}) \quad \forall \; \auxvec \in \R^{\auxdim}, \; z \in \R^{\order}.
 \end{align*}
Additionally, suppose that for each $i,j \in [k]$ we have:
\begin{align*}
        \E \serv{Z}_i \nonlin_j( \serv{Z}_1, \dotsc, \serv{Z}_k; \serv{A}) &= 0, \quad
        \E \nonlin_i( \serv{Z}_1, \dotsc, \serv{Z}_k; \serv{A})\nonlin_j( \serv{Z}_1, \dotsc, \serv{Z}_k; \serv{A})  \in \{0,1\}, \quad
        \E \nonlin_i^2( \serv{Z}_1, \dotsc, \serv{Z}_k; \serv{A})   = 1.
\end{align*}
Then, for any $\epsilon \in (0,1)$ there exists an integer $D_\epsilon \in \N$, a constant $L_{\epsilon} \in (0,\infty)$ and functions $\hat{\nonlin}^\epsilon_{1:k}, \hat{h}^\epsilon : \R^{k + \auxdim} \mapsto \R$ such that:
\begin{enumerate}
    \item For each $\auxvec \in \R^{\auxdim}$, $\hat{\nonlin}^\epsilon_{1}(z_1, \dotsc, z_k; \auxvec)$, $\dotsc$,  $\hat{\nonlin}^\epsilon_{k}(z_1, \dotsc, z_k; \auxvec)$, and $\hat{h}^\epsilon(z_1, \dotsc, z_k; \auxvec)$ are polynomials of degree at most $D_\epsilon$ in $z_1, \dotsc, z_k$. 
    \item $\hat{\nonlin}^\epsilon_{1}(z_1, \dotsc, z_k; \auxvec)$, $\dotsc$,  $\hat{\nonlin}^\epsilon_{k}(z_1, \dotsc, z_k; \auxvec)$, and $\hat{h}^\epsilon(z_1, \dotsc, z_k; \auxvec)$ satisfy:
    \begin{align*}
       \max_{i \in [\order]} \E[(\hat{\nonlin}^\epsilon_i( \serv{Z}_1, \dotsc, \serv{Z}_k; \serv{A}) - \nonlin_i( \serv{Z}_1, \dotsc, \serv{Z}_k; \serv{A}))^2] & \leq \epsilon^2, \quad
        \E[(\hat{h}^\epsilon( \serv{Z}_1, \dotsc, \serv{Z}_k; \serv{A}) - h( \serv{Z}_1, \dotsc, \serv{Z}_k; \serv{A}))^2]  \leq \epsilon^2.
    \end{align*}
    \item  For each $i,j \in [k]$,
    \begin{align*}
         \E[\serv{Z}_i \hat{\nonlin}^\epsilon_j( \serv{Z}_1, \dotsc, \serv{Z}_k; \serv{A})]  = 0, \;
         \E[\hat{\nonlin}^\epsilon_i( \serv{Z}_1, \dotsc, \serv{Z}_k; \serv{A})  \hat{\nonlin}^\epsilon_j( \serv{Z}_1, \dotsc, \serv{Z}_k; \serv{A})]  = \E[ \nonlin_i( \serv{Z}_1, \dotsc, \serv{Z}_k; \serv{A})\nonlin_j( \serv{Z}_1, \dotsc, \serv{Z}_k; \serv{A})].
    \end{align*}
    \item The functions $\hat{\nonlin}^\epsilon_{1}(z_1, \dotsc, z_k; \auxvec)$, $\dotsc$,  $\hat{\nonlin}^\epsilon_{k}(z_1, \dotsc, z_k; \auxvec)$, and $\hat{h}^\epsilon(z_1, \dotsc, z_k; \auxvec)$ and polynomially bounded continuous functions which satisfy:
    \begin{align*}
        \max_{i \in [\order]} |\hat{\nonlin}^\epsilon_i(z;\auxvec)| & \leq L_\epsilon \cdot (1 + \|z\|^{\degree_\epsilon} + \|\auxvec\|^{\degree_\epsilon}), \quad  |\hat{h}^\epsilon(z;\auxvec)| & \leq L_\epsilon \cdot (1 + \|z\|^{\degree_\epsilon} + \|\auxvec\|^{\degree_\epsilon}) \quad \forall  \; \auxvec \in \R^{\auxdim}, \; z \in \R^{\order}.
    \end{align*}
\end{enumerate}
In the above equations $\serv{Z}_{1:k} \explain{i.i.d.}{\sim} \gauss{0}{1}$ and $\serv{A}$ is the auxiliary information random variable from \assumpref{side-info}, independent of $\serv{Z}_{1:k}$. 
\end{lemma}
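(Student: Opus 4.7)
The plan is to build the polynomial approximations in two stages: a Hermite truncation in the Gaussian variables $\serv{Z}$, followed by a small linear correction to enforce the algebraic identities in item (3) exactly. I treat $\serv{A}$ as a parameter throughout, so the approximants are polynomials in $z$, as required, with coefficients that are measurable (polynomially bounded) functions of $a$.

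First I would expand each $\nonlin_i$ in multivariate Hermite polynomials of $z$ with $\serv{A}$-dependent coefficients,
\begin{equation*}
\nonlin_i(z; a) = \sum_{r \in \W^k} c_{i,r}(a)\, \hermite{r}(z), \qquad c_{i,r}(a) \explain{def}{=} \E_{\serv{Z} \sim \gauss{0}{\mI_k}}\!\left[\nonlin_i(\serv{Z}; a)\, \hermite{r}(\serv{Z})\right],
\end{equation*}
and likewise for $h$. Since $|\nonlin_i(z;a)| \leq L(1 + \|z\|^D + \|a\|^D)$, each $c_{i,r}(a)$ is polynomially bounded in $\|a\|$ of degree at most $D$, and by Parseval the series converges in $L^2(d\P_{\serv{Z}})$ for each fixed $a$; integrating in $\serv{A}$ and using dominated convergence, the total-degree-$D'$ truncation $\tilde{\nonlin}_i^{D'}(z;a) \explain{def}{=} \sum_{|r|_1 \leq D'} c_{i,r}(a)\,\hermite{r}(z)$ satisfies $\E[(\tilde{\nonlin}_i^{D'} - \nonlin_i)^2] \to 0$ as $D' \to \infty$, and similarly for $\tilde{h}^{D'}$. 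Pick $D'$ so that this error is at most a small multiple of $\epsilon^2$ for every $i$. A key observation is that the truncation automatically preserves the divergence-free condition: for any $D' \geq 1$ and any $i,j$, $\E[\serv{Z}_i \tilde{\nonlin}_j^{D'}(\serv{Z};\serv{A})] = \E[c_{j,e_i}(\serv{A})] = \E[\serv{Z}_i \nonlin_j(\serv{Z};\serv{A})] = 0$.

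The remaining task is to correct the inner-product identities $\E[\tilde{\nonlin}_i^{D'}\tilde{\nonlin}_j^{D'}]$, which hold only approximately. Here I would exploit the rigidity of the hypotheses: since $\E[\nonlin_i^2]=1$ and $\E[\nonlin_i\nonlin_j] \in \{0,1\}$, Cauchy--Schwarz forces $\nonlin_i = \nonlin_j$ almost surely whenever $\E[\nonlin_i\nonlin_j]=1$. Partition $[k]$ into equivalence classes under this coincidence relation, select one representative $i_c$ per class, and apply the Gram--Schmidt-type correction $\hat{\nonlin}^\epsilon_{i_c} \explain{def}{=} \sum_{c'} (\tilde{G}^{-1/2})_{c,c'}\, \tilde{\nonlin}^{D'}_{i_{c'}}$, where $\tilde{G}$ is the Gram matrix of the truncated representatives. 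Because the representatives are orthonormal, $\tilde{G} \to I$ as $D' \to \infty$, so $\tilde{G}$ is invertible and $\|\tilde{G}^{-1/2} - I\|$ is small for $D'$ large. For non-representatives, set $\hat{\nonlin}^\epsilon_i \explain{def}{=} \hat{\nonlin}^\epsilon_{i_c}$ where $c$ is the class of $i$, and set $\hat{h}^\epsilon \explain{def}{=} \tilde{h}^{D'}$. Linearity of the correction preserves the divergence-free property, the polynomial-in-$z$ structure (of degree at most $D'$), and polynomial boundedness in $\|a\|$; the representatives become exactly orthonormal, so all inner products among the $\hat{\nonlin}^\epsilon_i$ match those of the originals by construction; and the $L^2$ approximation bound survives the small correction. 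The polynomial bound in item (4) follows by combining the polynomial bound on coefficients $c_{i,r}(a)$ with the standard polynomial bound on $\hermite{r}(z)$, yielding $D_\epsilon$ of order $D + D'$ after Young's inequality. I expect the main subtlety to be the possible degeneracy of the full Gram matrix of $\tilde{\nonlin}_{1:k}^{D'}$ when some $\E[\nonlin_i\nonlin_j] = 1$ for $i \neq j$; the equivalence-class reduction precisely circumvents this degeneracy while ensuring that the correct algebraic structure is recovered from the invertible Gram matrix of representatives alone.
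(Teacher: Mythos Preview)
Your proposal is correct and follows essentially the same approach as the paper: Hermite truncation in $z$ with $a$-dependent coefficients, followed by a $G^{-1/2}$ whitening correction on the (representatives of the) truncated functions to enforce the exact Gram structure. The paper handles the degeneracy issue you flag by reducing \emph{up front} to the case where all $\nonlin_i$ are pairwise orthogonal (noting, exactly as you do, that $\E[\nonlin_i\nonlin_j]=1$ with unit norms forces $\nonlin_i=\nonlin_j$), which is the same equivalence-class reduction you describe.
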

The proof of the above approximation lemma is provided in \appref{approximation}. 

\subsubsection{Dynamics of MVAMP via Method of Moments}
Using the method of moments, we show that the expectation of the joint empirical moments of the last MVAMP iterate $\iter{\mZ}{T,\cdot} = [\iter{\vz}{T,1}, \dotsc, \iter{\vz}{T,\order}]$ converges to the corresponding joint moment of the Gaussian vector $(\serv{Z}_1, \dotsc, \serv{Z}_k) \sim \gauss{0}{I_k}$. \begin{theorem}\label{thm:moments}
Consider the MVAMP iterations \eqref{eq:mvamp}. Suppose that:
\begin{enumerate}
    \item The matrices $\mM_{1:\order} = \mS \mPsi_{1:k} \mS$ form a balanced (\sassumpref{balanced}) semi-random ensemble (\defref{semirandom}) with limiting covariance matrix $\Omega$ such that:
    \begin{subequations}\label{eq:normalized-semirandom}
    \begin{align}
        \Omega_{ii} & = 1 \; \forall \; i \in [\order], \quad
    (\mPsi_i \mPsi_j \tran)_{\ell \ell} = \Omega_{ij} \; \forall \; \ell \in [\dim], \; i,j \in [\order].
    \end{align}
    \end{subequations}
    \item The auxiliary information matrix $\auxmat$ satisfies \assumpref{side-info}. 
    \item For each $\auxvec \in \R^\auxdim$, non-linearities $\nonlin_{1:\order}(z;\auxvec)$ are polynomials in $z \in \R^\order$ of degree at most $\degree$  which satisfy:
\begin{subequations}\label{eq:poly-nonlinearity}
\begin{align}
    \E[ |\nonlin_{i}(\serv{Z}; \serv{A})|^p] < \infty, \quad  \E[\nonlin_i^2(\serv{Z};\serv{A})]  = 1 &\quad \forall \; i \in [\order], \; p \in \N, \\
    \E[\serv{Z}_i \nonlin_j(\serv{Z};\serv{A})] = 0 &\quad \forall \; i,j \in [\order], \\ 
    \Omega_{ij} \cdot \E[\nonlin_i(\serv{Z};\serv{A}) \nonlin_j(\serv{Z};\serv{A})]  = 0 &\quad \forall \; i,j \in [\order], i \neq j.
\end{align}
\end{subequations}
In the above display, $\serv{A} \in \R^{\auxdim}$ is the random vector from \assumpref{side-info}, $\serv{Z} = (\serv{Z}_1, \serv{Z}_2, \dotsc, \serv{Z}_k) \sim \gauss{0}{I_k}$ is independent of $\serv{A}$ and $\Omega$ is the limiting covariance matrix corresponding to the semi-random ensemble $\mM_{1:k}$.
\end{enumerate} 
Then, for any fixed (independent of $\dim$) $T \in \N$, $r \in \W^\order$ with $\|r\|_1 \geq 1$, and any function $h: \R^{\auxdim} \mapsto \R$ with $\E[|h(\serv{A})|^p] < \infty \; \forall \; p \in \N$ we have,
\begin{align*}
    \lim_{\dim \rightarrow \infty} \E \left[ \frac{1}{\dim}\sum_{j=1}^\dim h(\auxvec_j) \cdot  \hermite{r}\left(\iter{z}{T,\cdot}_j\right) \right] & = 0. 
\end{align*}
In the above display, $\hermite{r}$ denotes the multivariate Hermite polynomial, introduced in \eqref{eq:hermite-notation}. 
\end{theorem}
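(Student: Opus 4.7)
My plan is to argue by induction on the number of iterations $T$ using the method of moments, with the combinatorial machinery of colored trees that the preamble already defines. The base case $T=0$ is immediate since $\iter{\vz}{0,i}\overset{\text{i.i.d.}}{\sim}\gauss{\vzero}{\mI_\dim}$ are independent of $\auxmat$, so $\E H_r(\iter{z}{0,\cdot}_j)=0$ whenever $\|r\|_1\geq 1$ by orthogonality of Hermite polynomials to the constants under $\gauss{0}{1}$. For the inductive step, I would first apply \lemref{low-degree-approx} to replace each $\nonlin_i$ by a \emph{polynomial} approximation (of some bounded degree $D_\epsilon$) that preserves the divergence-free and orthonormality conditions; a straightforward $L^2$-estimate controlled by \eqref{eq:poly-nonlinearity} and the uniform moment bounds on $\auxvec_j$ shows the approximation error contributes $o(1)$ to the moment of interest, so it suffices to prove the theorem for polynomial $\nonlin_{1:\order}$.

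Once the non-linearities are polynomial, each coordinate $\iter{z}{T,\cdot}_j$ can be unfolded as a finite sum indexed by rooted \emph{colored trees} $\tau\in\trees{T}{\order}$: every internal node of depth $t$ carries a color $i\in[\order]$ recording which matrix $\mM_i=\mS\mPsi_i\mS$ was applied at that step, its children correspond to a monomial appearing in $\nonlin_i(\cdot\,;\auxmat)$, and the leaves are labeled either by an entry of the initial Gaussian iterate $\iter{\vz}{0,\cdot}$ or by a coordinate of $\auxvec$. Summing over intermediate indices, the contribution of tree $\tau$ is a product of $\mPsi$-entries along its edges (with signs from $\mS$ at every endpoint), Gaussian leaves, and functions of $\auxvec$. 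Expanding $\hermite{r}(\iter{z}{T,\cdot}_j)$ similarly as a sum over $|r|$-tuples of such trees rooted at $j$, the object to analyze is the expectation over $\mS$, over the initial Gaussians, and over $\auxmat$ of a sum of these tree-products weighted by $h(\auxvec_j)/\dim$.

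The core combinatorial step is to classify trees by the induced \emph{sign coloring} of their leaves. Since $s_{1:\dim}\explain{i.i.d.}{\sim}\unif{\{\pm1\}}$, a tree contributes zero in expectation unless every index $\ell\in[\dim]$ appearing among $\mPsi$-indices and leaves does so with even multiplicity. Conditioning on $\mS$ and integrating out the Gaussian leaves reduces each contribution further to a sum over index-pairings. Using the three key properties of the semi-random ensemble---(i) the operator-norm bound $\|\mPsi_i\|_\op\lesssim 1$, (ii) the balancing condition $(\mPsi_i\mPsi_j^\top)_{\ell\ell}=\Omega_{ij}$, and (iii) the entry-wise estimate $\|\mPsi_i\mPsi_j^\top-\Omega_{ij}\mI\|_\infty\lesssim \dim^{-1/2+\epsilon}$ from \defref{semirandom}---one can show that only trees whose edge-pairing is a \emph{planar non-crossing matching} yield $\Theta(1)$ contributions, while all other pairings are suppressed by powers of $\dim^{-1/2+\epsilon}$. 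The planar contributions are exactly the ones that survive in the rotationally invariant (Haar) case, and their sum is governed by products of the $\Omega_{ij}$'s. Here is where the divergence-free hypothesis $\E[\serv{Z}_i\nonlin_j(\serv{Z};\serv{A})]=0$ together with $\Omega_{ij}\E[\nonlin_i\nonlin_j]=0$ for $i\neq j$ and $\Omega_{ii}=1$ kicks in: at every internal node, the forced pairings that would yield a non-vanishing Gaussian contribution to $\hermite{r}$ (with $\|r\|_1\geq 1$) are precisely those that the orthogonality assumptions \eqref{eq:poly-nonlinearity} annihilate, after peeling off one layer of the tree and invoking the inductive hypothesis on the subtrees of depth $T-1$.

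The main obstacle I anticipate is the bookkeeping for the non-planar/off-diagonal trees: showing that the aggregate contribution of the combinatorially exploding family of non-matched trees is genuinely $o(1)$ requires carefully counting how many edges in a tree are "free" versus "pinned" by the pairing constraints, and then balancing these counts against the $\dim^{-1/2+\epsilon}$ savings from the delocalization estimate \eqref{eq:univ-class-moment-convergence}. This is the step where the proof of \citet{dudeja2022universality} required a delicate induction on a height statistic of the tree; I expect to reuse and adapt that framework here, with an additional layer to account for the presence of the auxiliary information vector $\auxvec$ (whose moments are controlled by \assumpref{side-info} and hence do not disrupt the counting argument).
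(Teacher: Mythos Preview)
Your proposal has the right broad shape (unroll into decorated trees, exploit sign-parity, use the orthogonality hypotheses to kill contributions), but two concrete gaps separate it from a working proof, and the paper's argument differs from yours in a way that matters.

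First, a minor point: the polynomial approximation via \lemref{low-degree-approx} is unnecessary here. \thref{moments} already \emph{assumes} that the non-linearities $\nonlin_{1:\order}(z;\auxvec)$ are polynomials in $z$; the approximation lemma is used one level up, in the proof of \thref{VAMP}, to reduce general Lipschitz non-linearities to polynomial ones before invoking \thref{moments}.

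Second, and more substantively, your claim that ``only trees whose edge-pairing is a planar non-crossing matching yield $\Theta(1)$ contributions'' is not the correct characterization, and the induction-on-$T$ / peel-one-layer strategy does not straightforwardly close. The paper does not organize the surviving terms by planarity. Instead, after unrolling (\lemref{hermite-expansion}) it groups colorings by their \emph{repetition pattern}, i.e., by a partition $\pi$ of the vertex set, and computes the expectation over $\mS$, $\auxmat$, and the Gaussian initialization to obtain a sum over \emph{relevant configurations} $(F,\pi)$ (\lemref{expectation-formula}, \defref{relevant-config}). The parity, divergence-free, and $\Omega_{ij}\E[\nonlin_i\nonlin_j]=0$ hypotheses enter here and kill many configurations, but crucially not all: there remain relevant configurations for which the naive estimate $|\Gamma|\lesssim \dim^{|\pi|-1-\frac{1}{2}\sum\|p(v)\|_1+\epsilon}$ is not $o(1)$. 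The paper then needs two further, nontrivial ingredients: (i) an \emph{improved estimate} (\propref{key-estimate}) that gains an extra factor $\dim^{-|\nullleaves{F,\pi}|/4}$ from pairs of \emph{nullifying leaves} by exploiting $\|\mPsi_i\mPsi_j^\top-\Omega_{ij}\mI\|_\infty\lesssim \dim^{-1/2+\epsilon}$, and (ii) a \emph{decomposition result} (\propref{decomposition}) that iteratively removes ``removable edges'' and ``removable edge pairs'' (using the balancing identity $(\mPsi_i\mPsi_j^\top)_{\ell\ell}=\Omega_{ij}$) to express $\Gamma(\mPsi;F,\pi)$ for any relevant configuration as a bounded linear combination of $\Gamma$'s over \emph{simple configurations}. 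Only after this simplification does a purely graph-theoretic inequality (\factref{graph-theory}, imported from \citep{dudeja2022universality}) guarantee the exponent $\eta(F,\pi)\geq 1/4$ for every simple configuration, yielding $\Gamma\to 0$.

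The step you flag as the ``main obstacle'' is therefore not just bookkeeping: the improved estimate alone is insufficient without first passing to simple configurations, and the mechanism that makes this passage possible (removable edges/edge pairs and the role of the \emph{balanced} condition $(\mPsi_i\mPsi_j^\top)_{\ell\ell}=\Omega_{ij}$) is absent from your outline. The layer-by-layer induction you propose does not obviously interact well with partitions $\pi$ that tie together vertices across different heights, which is exactly where the difficulty lies.
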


The above result generalizes the key technical result from our prior work \citep[Theorem 3]{dudeja2022universality}, which was restricted to MVAMP algorithms of order $\order = 1$ that did not use any auxiliary information $\auxmat$. 

\begin{remark}\label{rem:moments} The conclusion of \thref{moments} can be alternatively stated as:
\begin{align} \label{eq:moments-alt}
    \lim_{\dim \rightarrow \infty} \E \left[ \frac{1}{\dim}\sum_{j=1}^\dim h(\auxvec_j) \cdot  \hermite{r}\left(\iter{z}{T,\cdot}_j\right) \right] & = \E[\hermite{r}(\serv{Z}_1, Z_1, \dotsc, \serv{Z}_k) \cdot h(\serv{A})].
\end{align}
In the above display, $\serv{A} \in \R^{\auxdim}$ is the random vector from \assumpref{side-info}, $\serv{Z} = (\serv{Z}_1, \serv{Z}_2, \dotsc, \serv{Z}_k) \sim \gauss{0}{I_k}$ is independent of $\serv{A}$. This is because:
\begin{align*}
    \E[\hermite{r}(\serv{Z}_1, Z_1, \dotsc, \serv{Z}_k) \cdot h(\serv{A})] \explain{(a)}{=} \E[\hermite{r}(\serv{Z}_1, Z_1, \dotsc, \serv{Z}_k)] \cdot \E[h(\serv{A})] \explain{(b)}{=} 0.
\end{align*}
In the above display, step (a) uses the fact that $\serv{Z} = (\serv{Z}_1, \serv{Z}_2, \dotsc, \serv{Z}_k)$ is independent of $\serv{A}$ and step (b) follows from the fact that $\E[\hermite{r}(\serv{Z}_1, Z_1, \dotsc, \serv{Z}_k)]=0$ when $\|r\|_1 \geq 1$. More generally, for a function $P: \R^{\order + \auxdim} \mapsto \R$ such that for any $\auxvec \in \R^{\auxdim}$, $P(z; \auxvec)$ is a polynomial of degree at most $\degree$ in $z \in \R^{\order}$, we have:
\begin{align*}
    \lim_{\dim \rightarrow \infty} \E \left[ \frac{1}{\dim}\sum_{j=1}^\dim  P(\iter{z}{T,\cdot}_j ; \auxvec_{i}) \right] & = \E[P(\serv{Z}_1, Z_1, \dotsc, \serv{Z}_k; \serv{A})]
\end{align*}
This follows from \eqref{eq:moments-alt} and the fact that the polynomial (in $z$) $P(z ; \auxvec)$ can be expressed as a linear combination of the multivariate Hermite polynomials $\{\hermite{r} : r \in \W^k, \|r\|_1 \leq D\}$.
\end{remark}
We also record the following useful corollary of \thref{moments}, which is a generalization of \remref{moments}.
\begin{corollary}\label{cor:moments} Let $h: \R^{k+\auxdim} \rightarrow \R$ be any continuous test function which satisfies $$|h(z; \auxvec)| \leq L \cdot (1+\|z\|^D + \|a\|^D) \; \forall \; z \; \in \; \R^{k}, \; \auxvec\;  \in \; \R^{\auxdim}$$ for some fixed constants $L\geq 0$ and $\degree \in \N$. Then, under the assumptions of \thref{moments} we have,
\begin{align*}
   \lim_{\dim \rightarrow \infty} \E\left[ \frac{1}{\dim} \sum_{i=1}^\dim h(\iter{z}{T,1}_i, \iter{z}{T,2}_i, \dotsc, \iter{z}{T,k}_i; \auxvec_i)  \right] & = \E h(\serv{Z}_1, Z_1, \dotsc, \serv{Z}_k ; \serv{A}).
\end{align*}
In the above display, $\serv{A} \in \R^{\auxdim}$ is the random vector from \assumpref{side-info}, $\serv{Z} = (\serv{Z}_1, \serv{Z}_2, \dotsc, \serv{Z}_k) \sim \gauss{0}{I_k}$ is independent of $\serv{A}$.
\end{corollary}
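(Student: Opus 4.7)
The plan is to invoke the classical method of moments. I will introduce the deterministic probability measure $\mu_\dim$ on $\R^{\order + \auxdim}$ defined by $\int f\, d\mu_\dim \explain{def}{=} \dim^{-1} \sum_{i=1}^\dim \E[f(\iter{z}{T,\cdot}_i; \auxvec_i)]$, together with the target limit $\mu^* \explain{def}{=} \gauss{0}{I_\order} \otimes \mathcal{L}(\serv{A})$. The first step will be to observe that any joint polynomial $P(z,a)$ in $(z,a) \in \R^{\order+\auxdim}$ admits a finite expansion $P(z,a) = \sum_\alpha z^\alpha q_\alpha(a)$ with polynomial coefficients $q_\alpha$, each of which is a function of $a$ with finite moments of all orders under $\mathcal{L}(\serv{A})$ by \assumpref{side-info}. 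Applying \remref{moments} term-by-term and combining with linearity then yields $\int P\, d\mu_\dim \to \int P\, d\mu^*$ for every joint polynomial $P$.

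Specialising to $P(z,a) = \|z\|^{2p}$ and $P(z,a) = \|a\|^{2p}$ will give uniform (in $\dim$) control on all polynomial moments of $\mu_\dim$. Since $\mu^*$ is moment-determinate (its Gaussian factor is moment-determinate, and $\serv{A}$ is moment-determinate by hypothesis), the classical method of moments upgrades polynomial-moment convergence to weak convergence $\mu_\dim \Rightarrow \mu^*$.

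To reach the stated conclusion I then need to promote weak convergence to convergence of $\int h\, d\mu_\dim$ for the continuous test function $h$ with polynomial growth $|h(z;a)| \leq L(1 + \|z\|^\degree + \|a\|^\degree)$. This will follow from uniform integrability: because $|h|^2$ is dominated by a polynomial of degree $2\degree$ in $(z,a)$, the bound $\sup_\dim \int |h|^2\, d\mu_\dim < \infty$ (obtained in the previous step) implies that $h$ is uniformly integrable with respect to the family $\{\mu_\dim\}$. Weak convergence together with uniform integrability then gives $\int h\, d\mu_\dim \to \int h\, d\mu^*$, which is exactly the claim of \corref{moments}.

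The only point requiring real attention—rather than a genuine obstacle—is the extension of \remref{moments} from polynomials in $z$ multiplied by bounded-moment functions of $a$ to arbitrary joint polynomials in $(z,a)$; as described above, this is immediate from the polynomial decomposition together with \assumpref{side-info}, and the remainder of the argument is a classical exercise. In particular, the moment-determinacy hypothesis on $\serv{A}$ built into \assumpref{side-info} is precisely what makes this route viable, since without it the method of moments would not be able to identify the limiting law uniquely.
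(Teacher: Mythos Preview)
Your proposal is correct and follows essentially the same route as the paper: define the averaged empirical measure $\mu_\dim$, use \thref{moments} (via \remref{moments}) with $h(\auxvec)$ chosen as monomials in $\auxvec$ to establish convergence of all joint polynomial moments to those of $\mu^* = \gauss{0}{I_k}\otimes \mathcal{L}(\serv{A})$, invoke moment-determinacy from \assumpref{side-info} to upgrade to weak convergence, and finish with uniform integrability for the polynomially-bounded test function $h$. The paper's proof sketch is terser but identical in substance.
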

The proofs of \thref{moments} and \corref{moments} are provided in \appref{moment-method}. 

\subsubsection{Concentration Analysis of MVAMP Iterates}
The final ingredient need to prove the universality principle for VAMP algorithms is the following concentration estimate for the MVAMP algorithm. 

\begin{theorem}\label{thm:variance} Consider the MVAMP iterations \eqref{eq:mvamp} under the assumptions of \thref{moments}. For any fixed (independent of $\dim$) $T \in \N$, $r \in \W^\order$, and any function $h: \R^{\auxdim} \mapsto \R$ with $\E[|h(\serv{A})|^p] < \infty \; \forall \; p \in \N$ we have,
\begin{align*}
    \lim_{\dim \rightarrow \infty} \Var \left[ \frac{1}{\dim}\sum_{j=1}^\dim h(\auxvec_j) \cdot  \hermite{r}\left(\iter{z}{T,\cdot}_j\right) \right] & = 0. 
\end{align*}
In the above display, $\hermite{r}$ denotes the multivariate Hermite polynomial, as defined in \eqref{eq:hermite-notation}.
\end{theorem}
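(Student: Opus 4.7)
The strategy is to extend the combinatorial method of moments that drives \thref{moments} to the \emph{second} moment of $X_\dim \bydef \dim^{-1}\sum_j h(\auxvec_j)\hermite{r}(\iter{z}{T,\cdot}_j)$. As a preliminary step, I would reduce to the case where the non-linearities $\nonlin_{1:\order}$ are polynomials in $z$ by invoking \lemref{low-degree-approx} together with Lipschitz-in-non-linearity estimates for the MVAMP iterates that follow from the inductive moment bounds already needed for \thref{moments}. After this reduction each iterate $\iter{z}{T,\cdot}_j$ is a fixed polynomial in the random signs $s_{1:\dim}$, with coefficients built from products of entries of $\mPsi_{1:\order}$ and of the auxiliary rows $\auxvec_{1:\dim}$. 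This places us squarely in the combinatorial framework in which \thref{moments} is proved.

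\textbf{Coupling.} To exploit independence, introduce an independent copy $\tilde{\mS}$ of $\mS$ and run a parallel MVAMP recursion using $\tilde{\mS}$ in place of $\mS$, but with the same deterministic matrices $\mPsi_{1:\order}$ and the same auxiliary matrix $\auxmat$, producing iterates $\iter{\tilde{\vz}}{t,i}$ and statistic $\tilde{X}_\dim$. Conditionally on $\auxmat$, $X_\dim$ and $\tilde{X}_\dim$ are i.i.d., so the law of total variance yields
\begin{align*}
\Var[X_\dim] \;=\; \E\bigl[X_\dim^2 - X_\dim \tilde{X}_\dim\bigr] \;+\; \Var\bigl[\E[X_\dim \mid \auxmat]\bigr].
\end{align*}
The second summand vanishes by a classical LLN applied to the empirical average of a fixed polynomial of i.i.d. rows $\auxvec_j$ (using \assumpref{side-info}, noting that $\E[\hermite{r}(\iter{z}{T,\cdot}_j)\mid \auxmat]$ is a polynomial in $\auxvec_j$). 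The first summand equals
\begin{align*}
\frac{1}{\dim^2}\sum_{j_1,j_2}\E\Bigl[h(\auxvec_{j_1})h(\auxvec_{j_2})\,\hermite{r}(\iter{z}{T,\cdot}_{j_1})\bigl(\hermite{r}(\iter{z}{T,\cdot}_{j_2}) - \hermite{r}(\iter{\tilde{z}}{T,\cdot}_{j_2})\bigr)\Bigr].
\end{align*}

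\textbf{Main obstacle and resolution.} Expanding both Hermite factors into monomials in the signs, surviving terms must have every sign appearing with even multiplicity. In the crossed term $\E[X_\dim\tilde{X}_\dim]$, the $\mS$-signs at $j_1$ and the $\tilde{\mS}$-signs at $j_2$ are disjoint, so the expansions at $j_1$ and $j_2$ decouple and factorize. In the direct term $\E[X_\dim^2]$, the two expansions can share sign variables. The difference therefore isolates precisely the \emph{connected} two-tree configurations in which the combinatorial tree rooted at $j_1$ (as in the proof of \thref{moments}) meets the tree rooted at $j_2$ at one or more shared sign indices. Each shared sign forces an index coincidence that collapses two otherwise free entries of products of $\mPsi_{1:\order}$ into a single matrix entry; by the delocalization estimate $\|\mPsi_i\|_\infty \lesssim \dim^{-1/2+\epsilon}$ together with the near-identity estimate in \defref{semirandom}, each such coincidence costs a factor of $\dim^{-1+O(\epsilon)}$ relative to the fully decoupled configuration, while the summation over the coincident index supplies at most one factor of $\dim$. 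The net contribution per connection is therefore $\dim^{-1+O(\epsilon)}$, and choosing $\epsilon$ small enough the sum over all connected configurations vanishes as $\dim \to \infty$. The principal technical burden — and the hard part — is carrying out this bookkeeping rigorously on top of the tree expansion of \thref{moments}: one must show that, after accounting for the additional free indices contributed by the second tree, the number of connected two-tree configurations (weighted by their combinatorial multiplicities) remains $O(1)$ in $\dim$, so that the $\dim^{-1+O(\epsilon)}$ deficit per connection actually controls the overall sum. This is a direct but careful extension of the single-tree combinatorics already developed for \thref{moments}.
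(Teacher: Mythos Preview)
Your approach is genuinely different from the paper's. The paper does \emph{not} extend the tree combinatorics to second moments; instead it applies the Efron--Stein inequality. It introduces, for each coordinate $i\in[\dim]$, a perturbed triple $(\iter{\mS}{i},\iter{\auxmat}{i},\iter{\mG}{i})$ in which only the $i$th sign, the $i$th auxiliary row, and the $i$th Gaussian initialization row are resampled, and bounds $\Var[H_T]\leq \tfrac12\sum_i\E[(H_T-H_T^{(i)})^2]$. The work then reduces to two deterministic perturbation estimates on the iterates (\lemref{perturbation-estimate-efron-stein}) together with a crude moment/delocalization bound (\lemref{delocalization}); the first-moment result (\thref{moments}) is invoked only as a black box to control $\E\|\iter{\mZ}{t,\cdot}\|_\infty^p$. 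No second-moment tree expansion is needed, which is the main advantage of the paper's route.

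Your proposal has a concrete gap. The claim that $\Var[\E[X_\dim\mid\auxmat]]$ vanishes by a ``classical LLN'' because $\E[\hermite{r}(\iter{z}{T,\cdot}_j)\mid\auxmat]$ is a polynomial in $\auxvec_j$ is incorrect: the iterate $\iter{z}{T,\cdot}_j$ depends on \emph{all} rows of $\auxmat$ through the MVAMP recursion (already at $T=1$ one has $\iter{z}{1,i}_j=\sum_\ell (M_i)_{j\ell}\nonlin_i(\iter{z}{0,\cdot}_\ell;\auxvec_\ell)$). Thus $\E[X_\dim\mid\auxmat]$ is a complicated U-statistic-type object in the rows $\auxvec_{1:\dim}$, and bounding its variance requires the same two-tree connected-configuration analysis you propose for the first summand---the law of total variance split does not actually isolate an easy piece. (Relatedly, your coupling must also resample the Gaussian initialization $\mG$, not just $\mS$, for $X_\dim$ and $\tilde X_\dim$ to be conditionally i.i.d.) The upshot is that your route collapses to directly proving $\E[X_\dim^2]-(\E X_\dim)^2\to 0$ via a full second-moment tree expansion. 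That is plausible in principle, but the sketch you give (``each connection costs $\dim^{-1+O(\epsilon)}$'') skips the hard part: the first-moment proof already required the nontrivial machinery of nullifying edges (\propref{key-estimate}), decomposition into simple configurations (\propref{decomposition}), and a structural lemma (\factref{graph-theory}), and you would need two-root analogs of all of these. The Efron--Stein route sidesteps this entirely.
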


The variance bound above is derived using the Efron-Stein inequality, using arguments similar to our previous work \citep[Theorem 4]{dudeja2022universality}, which provided a similar concentration estimate for MVAMP algorithm of order $\order = 1$ that does not use any auxiliary information. The proof of \thref{variance} adapts and extends the arguments used in \citep[Theorem 4]{dudeja2022universality} for general $\order \geq 1$ and also  accounts for the variance that arises due to the randomness of the side information. \appref{concentration} is devoted to the proof of this result. 
\subsection{Proof of \thref{VAMP}}
We have now introduced all the key ideas used to obtain the universality principle for VAMP algorithms stated in \thref{VAMP}. The proof of \thref{VAMP} is presented below.

\begin{proof}[Proof of \thref{VAMP}] Consider $T$ iterations of a VAMP algorithm driven by a semi-random ensemble $\mM_{1:T}$, which satisfies all the assumptions of \thref{VAMP}:
\begin{align}\label{eq:original-VAMP}
    \iter{\vz}{t} = \mM_t \cdot \nonlin_t(\iter{\vz}{1}, \iter{\vz}{2}, \dotsc, \iter{\vz}{t-1}; \auxmat).
\end{align}
As a consequence of \propref{orthogonalization}, we can without loss of generality assume that:
\begin{enumerate}
    \item $\mM_{1:T}$ form a  balanced (\sassumpref{balanced}) semi-random ensemble (\defref{semirandom}) with limiting covariance matrix $\Omega$.
    \item $\Omega$ and the non-linearities $\nonlin_{1:T}$ satisfy the orthogonality conditions stated in \sassumpref{orthogonality}.
\end{enumerate}
Let $\serv{Z}_{1:T} \explain{i.i.d.}{\sim} \gauss{0}{1}$ and let $\serv{A}$ be the auxiliary information random variable from \assumpref{side-info}, sampled independently of $\serv{Z}_{1:T}$. Due to the orthogonality conditions of \sassumpref{orthogonality}, the Gaussian state evolution random variables associated with the VAMP algorithm in \eqref{eq:original-VAMP} are i.i.d. $\gauss{0}{1}$ random variables, and can be taken as $\serv{Z}_{1:T}$. Hence, to prove \thref{VAMP}, we need to show that for any test function $h:\R^{T+\auxdim} \mapsto \R$ that satisfies the regularity assumptions stated in \defref{PW2} we have:
\begin{align*}
    H_\dim \explain{def}{=} \frac{1}{\dim} \sum_{\ell=1}^\dim h(\iter{z}{1}_\ell, \iter{z}{2}_\ell,  \dotsc, \iter{z}{T}_\ell; \auxvec_\ell) \explain{P}{\rightarrow} \E h(\serv{Z}_1, \dotsc, \serv{Z}_T; \serv{A}). 
\end{align*}
\paragraph{Step 1: Implementing VAMP using MVAMP.} As described in \sref{MVAMP-to-VAMP}, we can implement $T$ iterations of the VAMP algorithm in \eqref{eq:original-VAMP} using an $T$ iterations of a MVAMP algorithm of order $\order = T$:
\begin{align*}
    \iter{\vw}{t,i} & = \mM_i f_i(\iter{\vw}{t-1,1}, \dotsc, \iter{\vw}{t-1,i-1}; \auxmat) \; \quad \forall \;  i \; \in \; [T].
\end{align*}
This MVAMP algorithm satisfies: 
\begin{align*}
    (\iter{\vw}{T,1}, \dotsc, \iter{\vw}{T,T}) & = (\iter{\vz}{1}, \dotsc, \iter{\vz}{T}).
\end{align*}
Hence, we now need to show that:
\begin{align} \label{eq:poly-approx-goal}
      H_\dim \explain{}{=} \frac{1}{\dim} \sum_{\ell=1}^\dim h(\iter{w}{T,1}_\ell, \iter{w}{T,2}_\ell,  \dotsc, \iter{w}{T,T}_\ell; \auxvec_\ell) \explain{P}{\rightarrow} \E h(\serv{Z}_1, \dotsc, \serv{Z}_T; \serv{A}). 
\end{align}
\paragraph{Step 2: Polynomial Approximation.} In order to obtain the conclusion \eqref{eq:poly-approx-goal} using \thref{moments}, we will approximate the non-linearities $f_{1:T}$ and the test function $h$ by polynomials using \lemref{low-degree-approx}. For every $\epsilon \in (0,1)$ let $\hat{f}_{1:T}^\epsilon$ and $\hat{h}^\epsilon$ be the approximating polynomials of $\nonlin_{1:T}$ and $h$ constructed in \lemref{low-degree-approx}. We construct the following approximating family of MVAMP iterations indexed by $\epsilon \in (0,1)$:
\begin{align} \label{eq:approx-MVAMP}
     \iter{\hat{\vw}}{t,i}(\epsilon) & = \mM_i \hat{f}^\epsilon_i(\iter{\hat{\vw}}{t-1,1}(\epsilon), \dotsc, \iter{\hat{\vw}}{t-1,i-1}(\epsilon); \auxmat) \; \quad \forall \;  i \; \in \; [T].
\end{align}
Additionally we introduce the random variable $\hat{H}_\dim(\epsilon)$ defined as:
\begin{align} \label{eq:hat-H}
    \hat{H}_\dim(\epsilon)  \explain{def}{=} \frac{1}{\dim} \sum_{\ell=1}^\dim \hat{h}^\epsilon(\iter{\hat{w}}{T,1}_\ell(\epsilon), \iter{\hat{w}}{T,2}_\ell(\epsilon),  \dotsc, \iter{\hat{w}}{T,T}_\ell(\epsilon); \auxvec_\ell).
\end{align}
Observe that for any $\epsilon \in (0,1)$, the iteration \eqref{eq:approx-MVAMP} satisfies all the requirements of \thref{moments} and \thref{variance}. Hence (cf. \remref{moments}),
\begin{align} \label{eq:H-approx-P-conv}
   \hat{H}_\dim(\epsilon)  \explain{P}{\rightarrow} \E \hat{h}^\epsilon(\serv{Z}_1, \dotsc, \serv{Z}_T ; \serv{A}). 
\end{align}
Furthermore, we claim that:
\begin{align} \label{eq:key-claim}
    \lim_{\epsilon \rightarrow 0} \limsup_{\dim \rightarrow \infty} \E |\hat{H}_\dim(\epsilon) - H_\dim| & = 0. 
\end{align}
Before proving this claim, we prove \eqref{eq:poly-approx-goal} (and hence, \thref{VAMP}) by showing that for any $\eta>0$,
\begin{align}\label{eq:poly-approx-goal-2}
    \lim_{\dim \rightarrow \infty} \P( | H_\dim - \E[{h}(\serv{Z}_1, \dotsc, \serv{Z}_T ; \serv{A})] | > 3\eta) & = 0.
\end{align}
Consider any $\epsilon \in (0,\eta)$. By \lemref{low-degree-approx} (item 2), we have:
\begin{align*}
    \left| \E[{h}(\serv{Z}_1, \dotsc, \serv{Z}_T ; \serv{A})]- \E[\hat{h}^\epsilon(\serv{Z}_1, \dotsc, \serv{Z}_T ; \serv{A})] \right| & \leq \left( \E[(\hat{h}^\epsilon( \serv{Z}_1, \dotsc, \serv{Z}_k; \serv{A}) - h( \serv{Z}_1, \dotsc, \serv{Z}_k; \serv{A}))^2] \right)^{1/2} \leq \epsilon < \eta.
\end{align*}
Hence,
\begin{align*}
    &\limsup_{\dim \rightarrow \infty}\P( | H_\dim - \E[{h}(\serv{Z}_1, \dotsc, \serv{Z}_T ; \serv{A})] | > 3\eta)  \\   & \hspace{4.5cm} \leq \limsup_{\dim \rightarrow \infty}  \left\{\P( | H_\dim - \hat{H}_\dim(\epsilon) | > \eta)  +    \P( |\hat{H}_\dim(\epsilon) - \E[\hat{h}^\epsilon(\serv{Z}_1, \dotsc, \serv{Z}_T ; \serv{A})]  | > \eta) \right\}\\
    & \hspace{4.5cm}\explain{\eqref{eq:H-approx-P-conv}}{=}   \limsup_{\dim \rightarrow \infty} \P( | H_\dim - \hat{H}_\dim(\epsilon) | > \eta) \\
    &\hspace{4.5cm} \leq  \limsup_{\dim \rightarrow \infty} \frac{\E| H_\dim - \hat{H}_\dim(\epsilon)|}{\eta}.
    \end{align*} 
Observe that in the above display, $\epsilon \in (0,\eta)$ was arbitrary. Taking $\epsilon \rightarrow 0$ and  using \eqref{eq:key-claim} yields the desired conclusion \eqref{eq:poly-approx-goal-2}. Now, to finish the proof of \thref{VAMP} we only need to prove the claim \eqref{eq:key-claim}. 
\paragraph{Step 3: Proof of \eqref{eq:key-claim}.} In order to prove \eqref{eq:key-claim} we introduce the random variable:
\begin{align*}
    \widetilde{H}_{\dim}(\epsilon) \explain{def}{=} \frac{1}{\dim} \sum_{\ell=1}^\dim {h}(\iter{\hat{w}}{T,1}_\ell(\epsilon), \iter{\hat{w}}{T,2}_\ell(\epsilon),  \dotsc, \iter{\hat{w}}{T,T}_\ell(\epsilon); \auxvec_\ell).
\end{align*}
Note the distinction between $\widetilde{H}_\dim(\epsilon)$ defined above and $\hat{H}_{\dim}(\epsilon)$ introduced in \eqref{eq:hat-H}: the random variable $\widetilde{H}_\dim(\epsilon)$ is obtained by applying the original test function $h$ to the iterates $\iter{\hat{\vw}}{T,1:T}(\epsilon)$ whereas, the random variable $\hat{H}(\epsilon)$ in \eqref{eq:hat-H} is obtained by applying the polynomial approximation $\hat{h}^\epsilon$ of $h$ to the iterates $\iter{\hat{\vw}}{T,1:T}(\epsilon)$. We bound $\E |\hat{H}_\dim(\epsilon) - H_\dim | \leq \E |\hat{H}_\dim(\epsilon) - \widetilde{H}_{\dim}(\epsilon) | + \E | \widetilde{H}_{\dim}(\epsilon) -  H_\dim|$ and analyze the two terms separately.
\paragraph{Step 3a: Analysis of $\E |\hat{H}_\dim(\epsilon) - \widetilde{H}_{\dim}(\epsilon) |$.} By Jensen's Inequality:
\begin{align*}
 &\lim_{\epsilon \rightarrow 0}  \limsup_{\dim \rightarrow \infty} \E |\hat{H}_\dim(\epsilon) - \widetilde{H}_{\dim}(\epsilon) | \\&\qquad\quad\leq \lim_{\epsilon \rightarrow 0}  \limsup_{\dim \rightarrow \infty} \left( \E\left[ \frac{1}{\dim} \sum_{\ell=1}^\dim ({h}(\iter{\hat{w}}{T,1}_\ell(\epsilon),   \dotsc, \iter{\hat{w}}{T}_\ell(\epsilon); \auxvec_\ell)-\hat{h}^\epsilon(\iter{\hat{w}}{T,1}_\ell(\epsilon), \dotsc, \iter{\hat{w}}{T}_\ell(\epsilon); \auxvec_\ell))^2 \right] \right)^{1/2} \\
 &\qquad\quad \explain{(a)}{=} \lim_{\epsilon \rightarrow 0}  \E[({h}(\serv{Z}_1,   \dotsc, \serv{Z}_T; \serv{A})-\hat{h}^\epsilon(\serv{Z}_1, \dotsc, \serv{Z}_T; \serv{A}))^2 ] \explain{(b)}{=} 0.
\end{align*}
In the above display, step (a) follows from \corref{moments} and step (b) follows from \lemref{low-degree-approx} (item 2).
\paragraph{Step 3b: Analysis of $\E | \widetilde{H}_{\dim}(\epsilon) -  H_\dim|$.} By Cauchy-Schwarz Inequality we have:
\begin{align*}
    &\lim_{\epsilon \rightarrow 0}  \limsup_{\dim \rightarrow \infty} \E | \widetilde{H}_{\dim}(\epsilon) -  H_\dim|\leq \lim_{\epsilon \rightarrow 0} \limsup_{\dim \rightarrow \infty} \frac{1}{\dim} \sum_{\ell=1}^\dim \E[ |{h}(\iter{\hat{w}}{T,1}_\ell(\epsilon),   \dotsc, \iter{\hat{w}}{T,T}_\ell(\epsilon); \auxvec_\ell) - {h}(\iter{{w}}{T,1}_\ell,   \dotsc, \iter{{w}}{T,T}_\ell; \auxvec_\ell)| ] \\
    & \leq \lim_{\epsilon \rightarrow 0} \limsup_{\dim \rightarrow \infty} 2L \cdot \left(1 + \E \|\serv{A}\|^{2D} + \sum_{i=1}^T \frac{\E \|  \iter{{\vw}}{T,i}  \|^2 + \E \| \iter{\hat{\vw}}{T,i}(\epsilon) \|^2}{\dim} \right)^{1/2} \cdot \left( \sum_{i=1}^T \frac{\E \| \iter{\hat{\vw}}{T,i}(\epsilon) - \iter{{\vw}}{T,i}  \|^2}{\dim} \right)^{1/2}.
\end{align*}
The final step in the above display relies the regularity assumption on $h$ from \defref{PW2}. Hence, the claim \eqref{eq:key-claim} follows if we can show:
\begin{align}
    \lim_{\epsilon \rightarrow 0} \limsup_{\dim \rightarrow \infty} \frac{ \E \| \iter{\hat{\vw}}{T,i}(\epsilon) \|^2}{\dim}  &< \infty \quad \forall \; i \; \in \; [T],  \label{eq:key-subclaim1}\\
     \lim_{\epsilon \rightarrow 0} \limsup_{\dim \rightarrow \infty} \frac{ \E \| \iter{\hat{\vw}}{t,i}(\epsilon) - \iter{{\vw}}{t,i}  \|^2}{\dim}  &= 0 \quad \forall \;i \; \in [t], \; t\; \in \; [T]. \label{eq:key-subclaim2}
\end{align}
The claim \eqref{eq:key-subclaim1} is immediate since \thref{moments} shows that $\lim_{\dim \rightarrow \infty} \E  \| \iter{\hat{\vw}}{T,i}(\epsilon) \|^2/\dim  = 1$. The claim \eqref{eq:key-subclaim2} can be shown inductively for each $t \in [T]$. For $t=1$ we have:
\begin{align*}
    \lim_{\epsilon \rightarrow 0} \limsup_{\dim \rightarrow \infty} \frac{ \E \| \iter{\hat{\vw}}{1,i}(\epsilon) - \iter{{\vw}}{1,i}  \|^2}{\dim} & \leq  \lim_{\epsilon \rightarrow 0} \limsup_{\dim \rightarrow \infty}  \|\mPsi_i\|_{\op}^2 \frac{\E[\|\nonlin_1(\auxmat) - \hat{\nonlin}_1^\epsilon(\auxmat)\|^2]}{\dim} \\&= (\lim_{\epsilon \rightarrow 0}   \E[(\nonlin_1(\serv{A}) - \hat{\nonlin}_1^\epsilon(\serv{A}))^2]) \cdot (\limsup_{\dim \rightarrow \infty} \|\mPsi_i\|_{\op}^2) = 0,
\end{align*}
where the last step follows from \lemref{low-degree-approx}. Assume as the induction hypothesis that the claim \eqref{eq:key-subclaim2} holds for some $t<T$. We verify the claim for $t+1$ as follows:
\begin{align*}
     &\lim_{\epsilon \rightarrow 0} \limsup_{\dim \rightarrow \infty} \frac{ \E \| \iter{\hat{\vw}}{t+1,i}(\epsilon) - \iter{{\vw}}{t+1,i}  \|^2}{\dim}  =  \lim_{\epsilon \rightarrow 0} \limsup_{\dim \rightarrow \infty} \frac{ \E \| \mM_i \cdot (\hat{f}^\epsilon_i(\iter{\hat{\mW}}{t,1:i-1}(\epsilon); \auxmat) - {f}_i(\iter{{\mW}}{t,1:i-1}; \auxmat)) \|^2}{\dim} \\
     & \leq 2\lim_{\epsilon \rightarrow 0} \limsup_{\dim \rightarrow \infty} \|\mPsi_i\|_{\op}^2  \left( \frac{ \E \| \hat{f}^\epsilon_i(\iter{\hat{\mW}}{t,1:i-1}(\epsilon); \auxmat) - {f}_i(\iter{\hat{\mW}}{t,1:i-1}(\epsilon); \auxmat) \|^2}{\dim} \right.\\ & \hspace{8cm}\left.+ \frac{\E \| {f}_i(\iter{\hat{\mW}}{t,1:i-1}(\epsilon); \auxmat) - {f}_i(\iter{{\mW}}{t,1:i-1}; \auxmat) \|^2}{\dim}  \right). 
\end{align*}
Appealing to \corref{moments}, we obtain:
\begin{align*}
  &\lim_{\epsilon \rightarrow 0} \limsup_{\dim \rightarrow \infty} \frac{ \E \| \iter{\hat{\vw}}{t+1,i}(\epsilon) - \iter{{\vw}}{t+1,i}  \|^2}{\dim} \\
     & \explain{(a)}{\leq} 2\lim_{\epsilon \rightarrow 0} \limsup_{\dim \rightarrow \infty} \|\mPsi_i\|_{\op}^2 \left(  \E[(f_i(\serv{Z}_{1:i-1}; \serv{A})- \hat{f}^\epsilon_i(\serv{Z}_{1:i-1}; \serv{A}))^2] + \frac{ \E \| {f}_i(\iter{\hat{\mW}}{t,1:i-1}(\epsilon); \auxmat) - {f}_i(\iter{{\mW}}{t,1:i-1}; \auxmat) \|^2}{\dim}  \right) \\
     &\explain{(b)}{=} 2\lim_{\epsilon \rightarrow 0} \limsup_{\dim \rightarrow \infty} \|\mPsi_i\|_{\op}^2 \cdot \left(  \frac{ \E \| {f}_i(\iter{\hat{\mW}}{t,1:i-1}(\epsilon); \auxmat) - {f}_i(\iter{{\mW}}{t,1:i-1}; \auxmat) \|^2}{\dim}   \right) \\
     & \explain{(c)}{\leq } \lim_{\epsilon \rightarrow 0} \limsup_{\dim \rightarrow \infty} 2L^2 \cdot  \|\mPsi_i\|_{\op}^2 \cdot \left(\sum_{j=1}^{i-1} \frac{ \E \| \iter{\hat{\vw}}{t,j}(\epsilon) - \iter{{\vw}}{t,j}  \|^2}{\dim}\right) \explain{(d)}{=} 0. 
\end{align*}
In the above display step (a) follows from \corref{moments}, step (b) follows from the approximation guarantee provided by \lemref{low-degree-approx}. Step (c) uses the fact that $f_i$ is a Lipschitz function and step (d) follows from the induction hypothesis. This completes the proof of claim \eqref{eq:key-claim} and hence, \thref{VAMP} is also proved.
\end{proof}

\section{Discussion and Future Work}
For many high-dimensional inference problems, the statistical properties of estimators appear to exhibit broad universality with respect to the underlying sensing matrix. In particular, the asymptotic performance of estimators seems to be determined only by the spectrum of the sensing matrix as long as the singular vectors are sufficiently generic. In this work, we formalized this universality heuristic in the context of the regularized linear regression by introducing the notion of a spectral universality class. This universality class consists of matrices that share the same limiting spectrum and satisfy a set of deterministic conditions which formalize the heuristic notion of ``generic'' singular vectors. Our results show that, for all sensing matrices in a given spectral universality class, the statistical properties of regularized least squares estimators and the dynamics of the proximal method (or, more generally, first-order methods) are asymptotically identical. 

\paragraph{}In addition to satisfying a set of deterministic conditions (introduced in \defref{univ-class}), our universality results require the sensing matrix $\mX$ to be sign-invariant in the sense that $\mX = \mJ \mS$  for a deterministic matrix $\mJ$ and a uniformly random sign diagonal matrix $\mS$. Consequently, fully deterministic matrices are precluded from satisfying our assumptions. An exciting avenue for future work is to relax this sign invariance requirement. We conclude this paper with a discussion of our preliminary findings regarding this aspect. 
\paragraph{Sign invariance from symmetry.} When the inference problem has an underlying sign symmetry, the sign invariance requirement on the sensing matrix is unnecessary. Concretely, consider the regularized linear regression problem where the entries of the signal vector $\vbeta_\star$ are i.i.d. copies of a random variable $\serv{B_\star}$ which satisfies $\serv{B_\star} \explain{d}{=} -\serv{B_\star}$. Consider the RLS estimator:
\begin{align} \label{eq:RLS-conclusion}
    \vbeta_\opt(\mX, \vbeta_\star, \vepsilon) \explain{def}{=} \argmin_{\vbeta\in \R^\dim} \frac{1}{2\dim}  \|\mX \vbeta_\star + \vepsilon - \mX \vbeta\|^2 + \frac{1}{\dim} \sum_{i=1}^\dim \reg(\beta_i)
\end{align}
for an \emph{even regularizer} $\reg$ (that is, $\rho(x) = \rho(-x)$ for any $x \in \R$). For any sign diagonal matrix $\mS$, by introducing the change of variables $\vbeta \leftrightarrow \mS \vbeta$ in \eqref{eq:RLS-conclusion}, we obtain:
\begin{align} \label{eq:change-of-coordinates}
    \vbeta_\opt(\mJ, \vbeta_\star, \vepsilon) & = \mS \vbeta_\opt(\mJ \mS, \mS \vbeta_\star, \vepsilon).
\end{align}
This means that:
\begin{align} 
    \|\vbeta_\opt(\mJ, \vbeta_\star, \vepsilon) - \vbeta_\star\|^2 \explain{\eqref{eq:change-of-coordinates}}{=}  \|\mS \vbeta_\opt(\mJ \mS, \mS \vbeta_\star, \vepsilon) - \vbeta_\star\|^2 &= \| \vbeta_\opt(\mJ \mS, \mS \vbeta_\star, \vepsilon) - \mS \vbeta_\star\|^2 \nonumber \\& \explain{d}{=} \| \vbeta_\opt(\mJ \mS,  \vbeta_\star, \vepsilon) -  \vbeta_\star\|^2, \label{eq:distributional-eq}
\end{align}
where the final distributional equality follows from the fact that $\mS \vbeta_\star \explain{d}{=} \vbeta_\star$ (since the entries of $\vbeta_\star$ are i.i.d. and symmetrically distributed). Taking $\mS$ to be a uniformly random sign diagonal matrix, we conclude from \eqref{eq:distributional-eq} that the asymptotic behavior of the mean square error $\|\vbeta_\opt(\mJ, \vbeta_\star, \vepsilon) - \vbeta_\star\|^2$ for a deterministic sensing matrix $\mJ$ is identical to the asymptotic behavior of the mean square error $\|\vbeta_\opt(\mX, \vbeta_\star, \vepsilon) - \vbeta_\star\|^2$ for the sign invariant sensing matrix $\mX = \mJ \mS$. The latter can be analyzed using our universality result. 
\begin{figure}
\centering
         \includegraphics[width=0.9\textwidth,trim={3.1cm 0.1cm 3.1cm 0.1cm},clip]{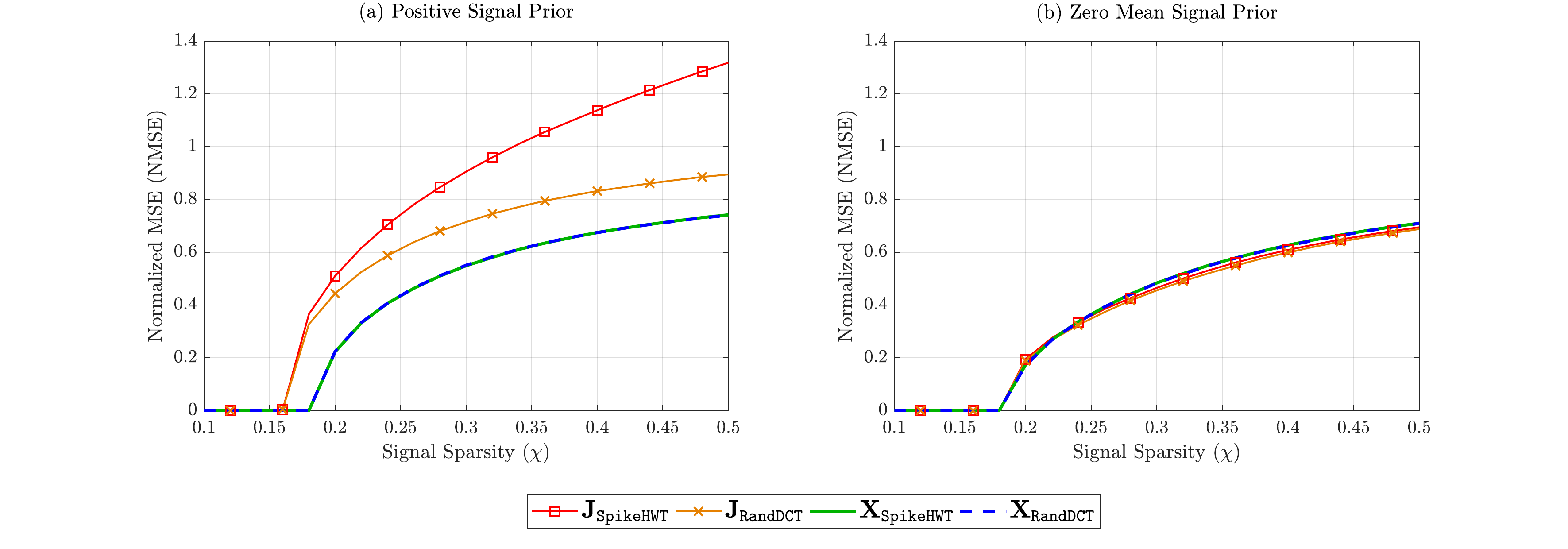}
\caption{Normalized mean square error (NMSE) of the RLS estimator with elastic net penalty v.s. signal sparsity for $\mX_{\texttt{SpikeHWT}}$ and $\mX_{\texttt{RandDCT}}$ ensembles [cf. \eqref{eq:signed-ensemble}] and their unsigned versions $\mJ_{\texttt{SpikeHWT}}$ and $\mJ_{\texttt{RandDCT}}$ [cf. \eqref{eq:unsigned-ensemble}] for a positive signal prior [Panel (a)] and a zero mean signal prior [Panel (b)].} 
\label{fig:signplot}
\end{figure}
\paragraph{Empirical breakdown of universality without sign invariance.} In the absence of additional assumptions, we have empirically observed deviations from universality for matrices that are not sign invariant. In order to demonstrate this, we conducted a numerical simulation for the noiseless linear regression problem with $\vy = \mX \vbeta_\star \in \R^{\ssize}$ ($\ssize = 2^{20})$ where the signal vector $\vbeta_\star \in \R^\dim$ ($\dim = 2 \ssize = 2^{21}$) was sampled from the i.i.d. prior:
\begin{align}\label{eq:sign-role-prior}
    (\vbeta_\star)_i \explain{i.i.d.}{\sim} (1-\chi) \cdot \delta_0 + \chi \cdot  \left( \frac{2}{3} \delta_{12} + \frac{1}{3} \delta_{20}  \right).
\end{align}
In the above display, $\chi \in [0,1]$ is a parameter which controls the sparsity of the signal vector. We plot the normalized mean square error (NMSE):
\begin{align}\label{eq:nmse}
    \mathrm{NMSE} \explain{def}{=} \frac{\| \vbeta_\opt - \vbeta_\star\|^2}{\|\vbeta_\star\|^2}
\end{align}
of the RLS estimator \eqref{eq:RLS-conclusion} with a small elastic net \eqref{eq:reg_net} regularization ($\lambda_1 = 0.01, \; \lambda_2 = 10^{-3} \cdot \lambda_1$) as we vary the signal sparsity $\chi$. We considered 4 different sensing matrices. Two of these were the sign-invariant \texttt{SpikeHWT} (defined below) and \texttt{RandDCT} (introduced in \sref{demonstration}), both of which are covered by our universality result:
\begin{align} \label{eq:signed-ensemble}
     \mX_{\texttt{SpikeHWT}} \explain{def}{=} \frac{1}{\sqrt{2}} \begin{bmatrix} \mI_{\ssize} & \mH_\ssize \end{bmatrix} \cdot \mS, \quad \mX_\texttt{RandDCT} \explain{def}{=}  \diag\{\underbrace{1, \ldots, 1}_{\ssize}, \underbrace{0, \ldots, 0}_{\ssize}\} \cdot (\mP \mQ_N \mS),
\end{align}
In the above display $\mS$ is a uniformly random sign diagonal matrix, $\mP$ is a uniformly random permutation matrix, $\mH_{\ssize}$ is the $\ssize \times \ssize$ Hadamard-Walsh matrix, and $\mQ_{\dim}$ is the $\dim \times \dim$ DCT matrix. We also considered the unsigned versions of \texttt{SpikeHWT} and \texttt{RandDCT}, which are not covered by our results:
\begin{align}\label{eq:unsigned-ensemble}
    \mJ_{\texttt{SpikeHWT}} \explain{def}{=} \frac{1}{\sqrt{2}} \begin{bmatrix} \mI_{\ssize} & \mH_\ssize \end{bmatrix}, \quad \mJ_\texttt{RandDCT} \explain{def}{=}  \diag\{\underbrace{1, \ldots, 1}_{\ssize}, \underbrace{0, \ldots, 0}_{\ssize}\} \cdot (\mP \mQ_N ).
\end{align}
The results of the simulation are shown in \fref{signplot}a. As predicted by the universality principle in \thref{RLS}, the NMSE curves for $\mX_{\texttt{SpikeHWT}}$ and $\mX_\texttt{RandDCT}$ coincide. However, the NMSE curves of $\mJ_{\texttt{SpikeHWT}}$ and $\mJ_\texttt{RandDCT}$ appear to be different, suggesting that $\mJ_{\texttt{SpikeHWT}}, \mJ_\texttt{RandDCT}$ might not lie in the same universality class as $\{\mX_{\texttt{SpikeHWT}}, \mX_\texttt{RandDCT}\}$.
\paragraph{Obstacles to universality without sign invariance.} One possible explanation for the breakdown of universality observed in \fref{signplot}a is that the action of the matrices $\mJ_{\texttt{SpikeHWT}}, \mJ_\texttt{RandDCT}$ on the all ones vector $\vone_\dim  = (1, 1, \dotsc, 1) \tran$ is very different from the action of $\{\mX_{\texttt{SpikeHWT}}, \mX_\texttt{RandDCT}\}$ on $\vone_\dim$. Indeed, \thref{VAMP} shows that for $\mX \in \{\mX_{\texttt{SpikeSine}}, \mX_\texttt{RandDCT}\}$\footnote{This can be seen by viewing $(\mX \tran \mX - \tfrac{1}{2}\mI_{\dim}) \cdot \vone_\dim$ as a single iteration VAMP algorithm and applying \thref{VAMP}.}
\begin{align} \label{eq:action-on-one}
    \mX\tran \mX  \cdot \vone_\dim & = (\mX \tran \mX - \tfrac{1}{2}\mI_{\dim}) \cdot \vone_\dim + \tfrac{1}{2} \cdot \vone_\dim \explain{\pw}{\longrightarrow} \gauss{1/2}{1/4} \quad \forall \; \mX \in \{\mX_{\texttt{SpikeHWT}}, \mX_\texttt{RandDCT}\}.
\end{align}
On the other hand, since the first row of the DCT matrix and the first row and column of the Hadamard-Walsh matrix are $\vone /\|\vone\|$ with the remaining rows/columns orthogonal to $\vone$, one can compute:
\begin{subequations} \label{eq:action-on-one-2}
\begin{align}
    \mJ_{\texttt{SpikeHWT}}\tran \; \mJ_{\texttt{SpikeHWT}} \cdot \vone_{\dim} & = \tfrac{1}{2} \cdot \vone_\dim + \tfrac{\sqrt{\ssize}}{2} \cdot (\ve_1 + \ve_{\ssize+1}), \\
    \mJ_{\texttt{RandDCT}}\tran \; \mJ_{\texttt{RandDCT}} \cdot \vone_{\dim} & = \begin{cases} \vone_\dim & \text{ with probability } 1/2 \\ \vzero & \text{ with probability } 1/2  \end{cases},
\end{align}
\end{subequations}
where $\ve_{1:\dim}$ denote the standard basis vectors of $\R^{\dim}$. Hence, $\mJ_{\texttt{SpikeHWT}}\tran \mJ_{\texttt{SpikeHWT}}  \vone_{\dim}$ and $\mJ_{\texttt{RandDCT}}\tran  \mJ_{\texttt{RandDCT}}  \vone_{\dim}$ do not have a limiting Gaussian distribution as in \eqref{eq:action-on-one}. Since the prior in \eqref{eq:sign-role-prior} has a non-zero mean, the signal vector $\vbeta_\star$ has a non-trivial projection along $\vone_\dim$. Hence the discrepancy highlighted in \eqref{eq:action-on-one} and \eqref{eq:action-on-one-2} can be one possible explanation for the breakdown of universality in \fref{signplot}a. To further test this hypothesis, we repeated our simulation keeping the same setup, but generating the signal entries from the zero-mean prior obtained by centering the prior in \eqref{eq:sign-role-prior}:
\begin{align}\label{eq:sign-role-prior-zeromean}
    (\vbeta_\star)_i \explain{i.i.d.}{\sim} (1-\chi) \cdot \delta_0 + \chi \cdot  \left( \frac{2}{3} \delta_{-8/3} + \frac{1}{3} \delta_{16/3}  \right).
\end{align}
This ensures that the signal $\vbeta_\star$ is asymptotically orthogonal to $\vone_\dim$. However, since the prior in \eqref{eq:sign-role-prior-zeromean} is not symmetric, the symmetrization argument from \eqref{eq:distributional-eq} does not apply to it and hence, this situation is not covered by the universality results of this paper. The simulation results are shown in \fref{signplot}b. We found that the deviations from universality were significantly reduced in this case, providing some evidence for our hypothesis. An important direction for future work is to understand if the discrepancy highlighted in \eqref{eq:action-on-one} and \eqref{eq:action-on-one-2} is the only obstacle to universality. Furthermore, obtaining a universality result that replaces the sign invariance assumption with easily verifiable deterministic conditions on the sensing matrix and the signal vector would also be interesting.

\subsection*{Acknowledgements}
SS gratefully acknowledges support from a Harvard FAS Dean's competitive fund award. The work of YML is supported by a Harvard FAS Dean's competitive fund award for promising scholarship, and by the US National Science Foundation under grant CCF-1910410.

\bibliographystyle{plainnat}
\bibliography{refs}
\appendix
\section{Proof of \thref{GFOM}}\label{appendix:VAMP-to-GFOM}
In this appendix, we prove the universality of GFOMs (\thref{GFOM}) using the universality principle for VAMP algorithms (\thref{VAMP}). 

\begin{proof}[Proof of \thref{GFOM}] Consider a $T$-iteration GFOM driven by a strongly semi-random ensemble $\mM_{1:T} = \mS \mPsi_{1:T} \mS$ with empirical moments $\{\hat{\Omega}_{B,B^\prime}: B,B^\prime \subset [T]\}$ and limiting moments $\{{\Omega}_{B,B^\prime}: B,B^\prime \subset [T]\}$:
\begin{align}\label{eq:GFOM-recall}
    \iter{\vz}{t} = \mM_t \cdot \nonlin_t(\iter{\vz}{1}, \iter{\vz}{2}, \dotsc, \iter{\vz}{t-1}; \auxmat) + \eta_t(\iter{\vz}{1}, \iter{\vz}{2}, \dotsc, \iter{\vz}{t-1}; \auxmat) \quad \forall \; t \; \in \; [T].
\end{align}
We begin by making the following claim.
\paragraph{Claim.} \emph{For any $T$-iteration GFOM \eqref{eq:GFOM-recall}, there exists a strictly increasing function $\tau: [T] \cup \{0\} \mapsto \N$ with $\tau(0) = 0$, a $\tau(T)$-iteration VAMP algorithm of the form:
\begin{align}\label{eq:VAMP-GFOM}
    \iter{\vw}{i} & = \mQ_i \cdot  g_i( \iter{\vw}{1}, \dotsc, \iter{\vw}{i-1}; \auxmat), \quad i \in [\tau(T)]
\end{align}
and post-processing functions $H_{1:T}$ such that:
\begin{enumerate}
    \item The matrices $\mQ_1, \mQ_{2}, \dotsc, \mQ_{\tau(T)-1}, \mQ_{\tau(T)}$ form a semi-random ensemble (\defref{semirandom}) and have the property that for any $t \in [T]$ and any $\tau(t-1)<i \leq \tau(t)$, $\mQ_i$ is of the form: 
    \begin{subequations} \label{eq:Q-form}
    \begin{align}
        \mXi_i & = \sum_{B \subset [t]} \hat{\alpha}_{i,B} \cdot   (\mPsi_B - \hat{\Omega}_{B,\emptyset} \cdot  \mI_{\dim}) , \\
        \mQ_i & = \mS \cdot \mXi_i \cdot \mS,
    \end{align}
    \end{subequations}
    for some coefficients $\{\hat{\alpha}_{i,B}\}$ that are determined by the empirical moments $\{\hat{\Omega}_{B,B^\prime}: B, B^\prime \subset [T]\}$ of the strongly semi-random ensemble $\mM_{1:T}$. Furthermore, the coefficients $\{\hat{\alpha}_{i,B}\}$ converge to limiting values $\{{\alpha}_{i,B}\}$ as $\dim \rightarrow \infty$. The limiting coefficients $\{{\alpha}_{i,B}\}$ and the limiting covariance matrix associated with the semi-random ensemble  $\mQ_1, \mQ_{2}, \dotsc, \mQ_{\tau(T)-1}, \mQ_{\tau(T)}$ are determined by the limiting moments $\{\Omega_{B,B^\prime} : B, B^\prime \subset [t]\}$ of the strongly semi-random ensemble $\mM_{1:T}$. 
    \item The non-linearities $g_{1}, g_2 \dotsc, g_{\tau(T)}$ are continuous and for each $i \in [\tau(T)]$, $g_i : \R^{i-1+\auxdim} \mapsto \R$ is uniformly Lipschitz and polynomially bounded in the sense that there are finite constants $L^\prime \in (0,\infty)$ and $\degree^\prime \in \N$ such that:
    \begin{align*}
        |g_i(z; \auxvec) - g_i(z^\prime; \auxvec)| & \leq L^\prime \cdot  \|z - z^\prime\| \quad \forall \; \auxvec \in \R^{\auxdim}, \; z, z^\prime \in \R^{i-1}, \\
        |g_i(z;\auxvec)| & \leq L^\prime \cdot (1 + \|z\|^{\degree^\prime} + \|\auxvec\|^{\degree^\prime}) \quad \forall \; \auxvec \in \R^{\auxdim}, \; z \in \R^{i-1}.
    \end{align*}
    The non-linearities $g_{1:\tau(T)}$ are divergence-free (\assumpref{div-free}) with respect to $\serv{W}_1,  \dotsc, \serv{W}_{\tau(T)}$, the Gaussian state evolution random variables associated with the VAMP algorithm \eqref{eq:VAMP-GFOM}. Furthermore, they are determined completely by $f_{1:T}$ and the limiting moments $\{\Omega_{B,B^\prime} : B, B^\prime \subset [t]\}$.
    \item The post-processing functions $H_{1:T}$ are continuous and for each $t \in [T]$ $H_t: \R^{\tau(t) + \auxdim} \mapsto \R$ satisfies:
    \begin{subequations}\label{eq:post-processing-regularity}
     \begin{align} 
        |H_t(w; \auxvec) - H_t(w^\prime; \auxvec)| & \leq L^\prime \cdot  \|w - w^\prime\| \quad \forall \; \auxvec \in \R^{\auxdim}, \; z, z^\prime \in \R^{\tau(t)}, \\
        |H_t(w; \auxvec)| & \leq L^\prime \cdot (1 + \|w\| + \|\auxvec\|^{\degree^\prime}) \quad \forall\;  \auxvec \in \R^{\auxdim}, \; w \in \R^{\tau(t)},
    \end{align}
    \end{subequations}
    for some finite constants $L^\prime \in (0,\infty)$ and $\degree^\prime \in \N$. Furthermore, the post-processing functions are determined completely by $f_{1:T}$ and the limiting moments $\{\Omega_{B,B^\prime} : B, B^\prime \subset [t]\}$.
    \item The VAMP algorithm implements the GFOM \eqref{eq:GFOM-recall} in the sense that for any $t \in [T]$, the error:
    \begin{align*}
        \iter{\vDelta}{t} \explain{def}{=} \iter{\vz}{t} - H_t(\iter{\vw}{1}, \dotsc, \iter{\vw}{\tau(t)}; \auxmat)
    \end{align*}
    satisfies $\|\iter{\vDelta}{t}\|^2/\dim \explain{P}{\rightarrow} 0$. In the above display, the post-processing map $H_t$ acts entry-wise on its arguments. 
\end{enumerate}}
\emph{Proof of \thref{GFOM}.} Assuming the above claim, we now show that \thref{GFOM} follows from \thref{VAMP}. Since the VAMP algorithm constructed in \eqref{eq:VAMP-GFOM} satisfies all the requirements of \thref{VAMP}, we have $$(\iter{\vw}{1}, \dotsc, \iter{\vw}{\tau(T)}; \auxmat) \explain{\pw}{\longrightarrow} (\serv{W}_1, \serv{W}_2, \dotsc, \serv{W}_{\tau(T)}; \serv{A}).$$ For any test function $h:\R^{T + \auxdim} \mapsto \R$ which satisfies the regularity assumptions in the definition {\pw} convergence (\defref{PW2}), as a consequence of \eqref{eq:post-processing-regularity} the composite test function $\tilde{h} : \R^{\tau(T) + \auxdim} \mapsto \R$:
\begin{align*}
    \tilde{h}(w_1, \dotsc, w_{\tau(T)}; \auxvec) \explain{def}{=} h(H_1(w_1, w_{\tau(1)}; \auxvec), H_2(w_1, \dotsc, w_{\tau(2)}; \auxvec), \dotsc, H_T(w_1, \dotsc, w_{\tau(T)}; \auxvec) ; \auxvec)
\end{align*}
also satisfies the regularity assumptions of \defref{PW2}. As a consequence, the post-processed iterates
\begin{align*}
    \iter{\tilde{\vz}}{t} &\explain{def}{=} H_t(\iter{\vw}{1}, \dotsc, \iter{\vw}{\tau(t)} ; \auxmat) \quad \forall \; t \; \in \; [T]
\end{align*}
satisfy:
\begin{align} \label{eq:post-process-pw2}
    (\iter{\tilde{\vz}}{1}, \dotsc, \iter{\tilde{\vz}}{T}; \auxmat) &\explain{\pw}{\longrightarrow} (\underbrace{H_1(\serv{W}_1, \dotsc, \serv{W}_{\tau(1)}; \serv{A}), H_2(\serv{W}_2, \dotsc, \serv{W}_{\tau(2)}; \serv{A}), \dotsc, H_T(\serv{W}_1, \dotsc, \serv{W}_{\tau(T)}; \serv{A})}_{\explain{def}{=} \serv{Z}_1, \serv{Z}_2, \dotsc, \serv{Z}_T}; \serv{A}).
\end{align}
Furthermore,
\begin{align*}
   &\left| \frac{1}{\dim} \sum_{i=1}^\dim h(\iter{z}{1}_i, \dotsc, \iter{z}{T}_i; \auxvec_i) -   \frac{1}{\dim} \sum_{i=1}^\dim h(\iter{\tilde{z}}{1}_i, \dotsc, \iter{\tilde{z}}{T}_i; \auxvec_i) \right| \\
   &\hspace{3cm}\leq 2L \cdot \left(1 + \frac{1}{\dim}\sum_{t=1}^T (\|\iter{\vz}{t}\|^2 + \|\iter{\tilde{\vz}}{t}\|^2) + \frac{1}{\dim} \sum_{i=1}^\dim \|\auxvec_i\|^{2D} \right)^{1/2} \cdot \left( \frac{1}{\dim} \sum_{t=1}^T \|\iter{\vDelta}{t}\|^2 \right)^{1/2} \\
   &\hspace{3cm} \leq 2L \cdot \left(1 + \frac{1}{\dim}\sum_{t=1}^T (2\|\iter{\vDelta}{t}\|^2 + 3\|\iter{\tilde{\vz}}{t}\|^2) + \frac{1}{\dim} \sum_{i=1}^\dim \|\auxvec_i\|^{2D}\right)^{1/2} \cdot \left( \frac{1}{\dim} \sum_{t=1}^T \|\iter{\vDelta}{t}\|^2 \right)^{1/2}.
\end{align*}
By \eqref{eq:post-process-pw2} $\|\iter{\tilde{\vz}}{t}\|^2/\dim \explain{P}{\rightarrow} \E \serv{\tilde{Z}}_t^2 < \infty$. \assumpref{side-info} guarantees that $\sum_i \|\auxvec_i\|^{2D}/\dim \explain{P}{\rightarrow} \E \|\serv{A}\|^{2D}< \infty$. Since $\|\iter{\vDelta}{t}\|^2/\dim \explain{P}{\rightarrow} 0$, we obtain:
\begin{align*}
    \frac{1}{\dim} \sum_{i=1}^\dim h(\iter{z}{1}_i, \dotsc, \iter{z}{T}_i; \auxvec_i) -   \frac{1}{\dim} \sum_{i=1}^\dim h(\iter{\tilde{z}}{1}_i, \dotsc, \iter{\tilde{z}}{T}_i; \auxvec_i) \explain{P}{\rightarrow 0},
\end{align*}
which when combined with \eqref{eq:post-process-pw2} yields:
\begin{align*}
     (\iter{{\vz}}{1}, \dotsc, \iter{{\vz}}{T}; \auxmat) &\explain{\pw}{\longrightarrow} (\underbrace{H_1(\serv{W}_1, \dotsc, \serv{W}_{\tau(1)}; \serv{A}), H_2(\serv{W}_2, \dotsc, \serv{W}_{\tau(2)}; \serv{A}), \dotsc, H_T(\serv{W}_1, \dotsc, \serv{W}_{\tau(T)}; \serv{A})}_{\explain{def}{=} \serv{Z}_1, \serv{Z}_2, \dotsc, \serv{Z}_T}; \serv{A}).
\end{align*}
Observe that the distribution of the random variables on the right hand side is completely determined by the limiting covariance of the semi-random ensemble $\mQ_{1:\tau(T)}$, the non-linearities $g_{1:\tau(T)}$ used in the VAMP algorithm, and the post-processing functions $H_{1:T}$. The claim stated above guarantees that each of these are determined by the non-linearities $f_{1:T}$ of the GFOM and the limiting moments of the strongly semi-random matrix ensemble $\mM_{1:T}$ driving it. This proves the assertion made by \thref{GFOM}. 

\emph{Proof of Claim.} We now prove the claim made above. We will construct the VAMP algorithm in \eqref{eq:VAMP-GFOM} inductively. Consider the first iteration of the GFOM:
\begin{align} 
    \iter{\vz}{1} & = \mS \mPsi_1 \mS \nonlin_1(\auxmat) + \eta_1(\auxmat) \nonumber \\&= \mS \cdot (\mPsi_1 - \hat{\Omega}_{\{1\},\emptyset} \mI_{\dim}) \cdot \mS \nonlin_1(\auxmat) +  \Omega_{\{1\},\emptyset} \cdot \nonlin_1(\auxmat) + \eta_1(\auxmat) + (\hat{\Omega}_{\{1\},\emptyset}-{\Omega}_{\{1\},\emptyset}) \cdot \nonlin_1(\auxmat)\label{eq:GFOM-iter-1}
\end{align}
We set $\tau(1) = 1$ and construct the first iteration of the VAMP algorithm as:
\begin{align*}
    \iter{\vw}{1} & = \mQ_1 g_1(\auxmat)
\end{align*}
where:
\begin{enumerate}
    \item $\mXi_1 = \mPsi_1 - \hat{\Omega}_{\{1\},\emptyset} \mI_{\dim}$ and  $\mQ_1 = \mS \cdot \mXi_1 \cdot \mS$. Observe that $\mQ_1$ is of the form \eqref{eq:Q-form}. Furthermore, as verified in the proof sketch of \thref{GFOM} given in \sref{GFOM-univ-proof}, $\mQ_1$ is a semi-random in the sense of \defref{semirandom}. 
    \item The non-linearity $g_1:\R^{\auxdim} \mapsto \R$ is given by $g_1(\auxvec) \explain{def}{=} \nonlin_1(\auxvec)$. 
    \item The post-processing function $H_1: \R^{1+\auxdim} \mapsto \R$ is given by:
    \begin{align*}
        H_1(w_1; \auxvec) \explain{def}{=} w_1 +  \Omega_{\{1\},\emptyset} \cdot \nonlin_1(\auxvec) + \eta_1(\auxvec).
    \end{align*}
    \item Recalling \eqref{eq:GFOM-iter-1}, the above definitions ensure that $\iter{\vz}{1} = H_1(\iter{\vw}{1}; \auxmat) + \iter{\vDelta}{1}$ where $$\iter{\vDelta}{1} \explain{def}{=} (\hat{\Omega}_{\{1\},\emptyset}-{\Omega}_{\{1\},\emptyset}) \cdot \nonlin_1(\auxmat).$$ Since $\hat{\Omega}_{\{1\},\emptyset}-{\Omega}_{\{1\},\emptyset} \rightarrow 0$ (\defref{strong-semirandom}) and $\|\nonlin_1(\auxmat)\|^2/\dim \explain{P}{\rightarrow} \E\nonlin_1^2(\serv{A})$ (\assumpref{side-info}), we have $\|\iter{\vDelta}{1}\|^2/\dim \explain{P}{\rightarrow 0}$.
\end{enumerate}
Hence, we have constructed the first iteration of the desired VAMP algorithm which implements one iteration of the given GFOM. As the induction hypothesis, we assume that for some $t<T$, we have constructed $\tau(t)$ iterations of the VAMP algorithm:
\begin{align*}
   \iter{\vw}{i} & = \mQ_i \cdot  g_i( \iter{\vw}{1}, \dotsc, \iter{\vw}{i-1}; \auxmat) \quad i \in [\tau(t)],
\end{align*}
along with the corresponding post-processing functions $H_{1:t}$ which implement $t$ iterations of the given GFOM and satisfy the assertions made in the claim. Let: $$(\serv{W}_1, \dotsc, \serv{W}_{\tau(t)}) \sim \gauss{0}{\Sigma_{\tau(t)}}$$ denote the Gaussian state evolution random variables associated with $\tau(t)$ iterations of the constructed VAMP algorithm. We now implement iteration $t+1$ of the GFOM:
\begin{align*}
    \iter{\vz}{t+1} & = \mS \mPsi_{t+1} \mS \cdot \nonlin_{t+1}(\iter{\vz}{1}, \dotsc, \iter{\vz}{t}; \auxmat) + \eta_{t+1}(\iter{\vz}{1}, \dotsc, \iter{\vz}{t}; \auxmat) \\
    & \explain{(a)}{=} \mS \mPsi_{t+1} \mS \cdot \nonlin_{t+1}(H_1(\iter{\vw}{1}; \auxmat), \dotsc, H_t(\iter{\vw}{1}, \dotsc, \iter{\vw}{\tau(t)}; \auxmat); \auxmat) \\& \hspace{5cm}+ \eta_{t+1}(H_1(\iter{\vw}{1}; \auxmat), \dotsc, H_t(\iter{\vw}{1}, \dotsc, \iter{\vw}{\tau(t)}; \auxmat); \auxmat) + \iter{\vdelta}{t+1} \\
    & \explain{(b)}{=} \mS \mPsi_{t+1} \mS \cdot \tilde{\nonlin}_{t+1}(\iter{\vw}{1}, \dotsc, \iter{\vw}{\tau(t)}; \auxmat) + \tilde{\eta}_{t+1}(\iter{\vw}{1}, \dotsc, \iter{\vw}{\tau(t)}; \auxmat) + \iter{\vdelta}{t+1}.
\end{align*}
In the above display, in the step marked (a) we defined the error vector:
\begin{align}\label{eq:delta-error}
    \iter{\vdelta}{t+1} &\explain{def}{=} \mM_{t+1} \cdot  (\nonlin_{t+1}(\iter{\vz}{1}, \dotsc, \iter{\vz}{t}; \auxmat) - \nonlin_{t+1}(H_1(\iter{\vw}{1}; \auxmat), \dotsc, H_t(\iter{\vw}{1}, \dotsc, \iter{\vw}{\tau(t)}; \auxmat); \auxmat)) \\ &\hspace{2cm} + \eta_{t+1}(\iter{\vz}{1}, \dotsc, \iter{\vz}{t}; \auxmat) - \eta_{t+1}(H_1(\iter{\vw}{1}; \auxmat), \dotsc, H_t(\iter{\vw}{1}, \dotsc, \iter{\vw}{\tau(t)}; \auxmat); \auxmat). 
\end{align}
In the step marked (b), we defined the composite non-linearities $\tilde{\nonlin}_{t+1}, \tilde{\eta}_{t+1} : \R^{\tau(t) + \auxdim} \mapsto \R$ as:
\begin{align*}
    \tilde{\nonlin}_{t+1}(w_1, \dotsc, w_{\tau(t)}; \auxvec) \explain{def}{=} \nonlin_{t+1}(H_1(w_1; \auxvec), \dotsc, H_t(w_1, \dotsc, w_{\tau(t)}; \auxvec); \auxvec), \\
    \tilde{\eta}_{t+1}(w_1, \dotsc, w_{\tau(t)}; \auxvec) \explain{def}{=} \eta_{t+1}(H_1(w_1; \auxvec), \dotsc, H_t(w_1, \dotsc, w_{\tau(t)}; \auxvec); \auxvec).
\end{align*}
Note that the composite non-linearity $\tilde{\nonlin}_{t+1}$ might not be divergence-free (\assumpref{div-free}) with respect to $(\serv{W}_1, \dotsc, \serv{W}_{\tau(t)})$. In order to address this, we define:
\begin{align*}
    \hat{f}_{t+1}(w_1, \dotsc, w_{\tau(t)}; \auxvec) & = \tilde{\nonlin}_{t+1}(w_1, \dotsc, w_{\tau(t)}; \auxvec) - \sum_{i=1}^{\tau(t)} (\beta_{t+1})_i w_i,
\end{align*}
where the vector $\beta_{t+1} \in \R^{\tau(t)}$ is any solution to the linear equation:
\begin{align*}
    \Sigma_{\tau(t)} \cdot \beta_{t+1} & = \E \big[ \tilde{\nonlin}_{t+1}(\serv{W}_1, \dotsc, \serv{W}_{\tau(t)}; \serv{A}) \cdot (\serv{W}_1, \dotsc, \serv{W}_{\tau(t)})\tran \big],
\end{align*}
which is guaranteed to have a solution since the vector:
\begin{align*}
    \E \big[ \tilde{\nonlin}_{t+1}(\serv{W}_1, \dotsc, \serv{W}_{\tau(t)}; \serv{A}) \cdot (\serv{W}_1, \dotsc, \serv{W}_{\tau(t)})\tran \big] = \Sigma_{\tau(t)}^{1/2} \cdot \E \big[ \tilde{\nonlin}_{t+1}(\Sigma_{\tau(t)}^{1/2} \serv{G}; \serv{A}) \cdot \serv{G} \big], \; \serv{G} \sim \gauss{0}{I_{\tau(t)}}
\end{align*}
lies in the range of $\Sigma_{\tau(t)}$. Observe that by construction, the modified composite non-linearity $\hat{f}_{t+1}$ is divergence-free (\assumpref{div-free}) with respect to $(\serv{W}_1, \dotsc, \serv{W}_{\tau(t)})$. We can then express $\iter{\vz}{t+1}$ as:
\begin{align}
    \iter{\vz}{t+1}  &= \mS \mPsi_{t+1} \mS \cdot \hat{\nonlin}_{t+1}(\iter{\vw}{1}, \dotsc, \iter{\vw}{\tau(t)}; \auxmat) + \sum_{i=1}^{\tau(t)} (\beta_{t+1})_i \cdot  \mS \mPsi_{t+1} \mS \cdot \iter{\vw}{i} + \tilde{\eta}_{t+1}(\iter{\vw}{1}, \dotsc, \iter{\vw}{\tau(t)}; \auxmat) +  \iter{\vdelta}{t+1}\nonumber\\
    & = \mS \mPsi_{t+1} \mS \cdot \hat{\nonlin}_{t+1}(\iter{\vw}{1}, \dotsc, \iter{\vw}{\tau(t)}; \auxmat) + \sum_{i=1}^{\tau(t)} (\beta_{t+1})_i \cdot  \mS \mPsi_{t+1} \mXi_i \mS \cdot  g_i(\iter{\vw}{1}, \dotsc, \iter{\vw}{i-1}; \auxmat)  \label{eq:GFOM-t+1-iterate} \\ & \hspace{9.5cm} + \tilde{\eta}_{t+1}(\iter{\vw}{1}, \dotsc, \iter{\vw}{\tau(t)}; \auxmat) + \iter{\vdelta}{t+1}. \nonumber
\end{align}
We set $\tau(t+1) = 2\tau(t) + 1$ and construct the VAMP iterates $\iter{\vw}{i} = \mQ_i \cdot g_i(\iter{\vw}{1}, \dotsc, \iter{\vw}{i-1}; \auxmat)$ for $\tau(t) <i \leq \tau(t+1)$ as follows:
\begin{enumerate}
    \item We define the non-linearities $g_i$ for $\tau(t) <i \leq \tau(t+1)$ as follows:
    \begin{align} 
        g_{\tau(t+1)}(w_1, \dotsc, w_{\tau(t+1)-1}; \auxvec) &\explain{def}{=} \hat{f}_{t+1}(w_1, \dotsc, w_{\tau(t)}; \auxvec), \\
        g_{\tau(t)+j}(w_1, \dotsc, w_{\tau(t)+j-1}; \auxvec) &\explain{def}{=} g_{j}(w_1, \dotsc, w_{j-1}; \auxvec) \quad \forall \; j \; \in \; [\tau(t)]. \label{eq:new-nonlin}
    \end{align}
    The construction of $\hat{f}_{t+1}$ guarantees that $g_{\tau(t+1)}$ is divergence-free (\assumpref{div-free}). Since the non-linearities $g_{1:\tau(t)}$ are divergence-free by the induction hypothesis, $ g_{\tau(t)+j}$ is also divergence-free for any $j \leq \tau(t)$. Indeed, $j \in [\tau(t)]$ and any $i<j$ we have:
    \begin{align} \label{eq:div-free-easy}
        \E[\serv{W}_i  \cdot g_{\tau(t)+j}(\serv{W}_1, \dotsc, \serv{W}_{\tau(t) + j - 1}; \serv{A})]& \explain{\eqref{eq:new-nonlin}}{=} \E[\serv{W}_i \cdot g_j(\serv{W}_1, \dotsc, \serv{W}_{j-1}; \serv{A} )] \explain{(a)}{=} 0.
    \end{align}
    In the above display, the equality (a) follows from the fact that the non-linearity $g_{j}$ is divergence-free (induction hypothesis). On the other hand for any $i$ such that $j \leq i \leq \tau(t) + j - 1$ we can write:
    \begin{align*}
         \E[\serv{W}_i  \cdot g_{\tau(t)+j}(\serv{W}_1, \dotsc, \serv{W}_{\tau(t) + j - 1}; \serv{A})] & \explain{\eqref{eq:new-nonlin}}{=} \E[\serv{W}_i \cdot g_j(\serv{W}_1, \dotsc, \serv{W}_{j-1}; \serv{A} )] \\&= \E[ \E[\serv{W}_i | \serv{W}_1, \dotsc, \serv{W}_{j-1}, \serv{A}] \cdot g_j(\serv{W}_1, \dotsc, \serv{W}_{j-1}; \serv{A} )] \\
         & \explain{(a)}{=} \E[ \E[\serv{W}_i | \serv{W}_1, \dotsc, \serv{W}_{j-1}] \cdot g_j(\serv{W}_1, \dotsc, \serv{W}_{j-1}; \serv{A} )] \\
         & \explain{(b)}{=} 0.
    \end{align*}
    In the above display step (a) follows because the side information random variable $\serv{A}$ is independent of the Gaussian state evolution random variables $\serv{W}_{1:i}$ associated with VAMP algorithm. The equality in step (b) is obtained by observing that $\E[\serv{W}_i | \serv{W}_1, \dotsc, \serv{W}_{j-1}]$ is a linear combination of $\serv{W}_1, \dotsc, \serv{W}_{j-1}$ (since $\serv{W}_{1:i}$ are jointly Gaussian) and appealing to \eqref{eq:div-free-easy}. Hence, we have checked that the non-linearities $g_i$ for $\tau(t) <i \leq \tau(t+1)$ are divergence-free in the sense of \assumpref{div-free}.
    \item In order to define the matrices $\{\mQ_i : \tau(t) < i \leq \tau(t+1)\}$, we first recall that by the induction hypothesis, we have already constructed matrices $\mXi_{1:\tau(t)}$ given by the formulas:
    \begin{align} \label{eq:induction-xi-form}
        \mXi_i & = \sum_{B \subset [t]} \hat{\alpha}_{i,B} \cdot   (\mPsi_B - \hat{\Omega}_{B,\emptyset} \cdot  \mI_{\dim}) \quad \forall \;  i \;  \in [\tau(t)].
    \end{align}
    For each $\tau(t) < i \leq \tau(t+1)$, we define $\mQ_i = \mS \mXi_i \mS$ where $\mXi_i$ is given by:
    \begin{subequations} \label{eq:new-matrices}
    \begin{align}
        \mXi_{\tau(t+1)} &\explain{def}{=} \mPsi_{t+1} - \hat{\Omega}_{\{t+1\}, \emptyset} \cdot \mI_{\dim}, \\
        \mXi_{\tau(t) + j}  &\explain{def}{=} \mPsi_{t+1} \mXi_j - \left(\sum_{B \subset [t]} \hat{\alpha}_{j,B} \cdot (\hat{\Omega}_{B \cup \{t+1\}, \emptyset} - \hat{\Omega}_{B, \emptyset} \cdot \hat{\Omega}_{ \{t+1\}, \emptyset}) \right) \cdot \mI_{\dim} \quad \forall \;  j \;  \in [\tau(t)]\\
        & \explain{\eqref{eq:induction-xi-form}}{=} \sum_{B \subset [t]} \hat{\alpha}_{j,B} \cdot (\mPsi_{B \cup \{t+1\}} - \hat{\Omega}_{B \cup \{t+1\}, \emptyset} \mI_{\dim})  - \hat{\gamma}_{j} \cdot (\mPsi_{t+1} - \hat{\Omega}_{\{t+1\},\emptyset}) \cdot \mI_{\dim}, \nonumber
    \end{align}
    \end{subequations}
    where:
    \begin{align*}
        \hat{\gamma}_{j} \explain{def}{=} \sum_{B \subset [t]} \hat{\alpha}_{j,B} \cdot\hat{\Omega}_{B, \emptyset}.
    \end{align*}
    Observe that the newly constructed matrices $\{\mXi_i : \tau(t) < i \leq \tau(t+1)\}$ are also of the form given in \eqref{eq:induction-xi-form}. This can be used to verify that the matrices $\mQ_{1:\tau(t+1)}$ form a semi-random ensemble. Indeed, since $\mM_{1:T} = \mS\mPsi_{1:T}\mS$ form a strongly semi-random ensemble (\defref{strong-semirandom}) for any $i,j \in [\tau(t+1)]$, we have:
    \begin{align*}
        &\|\mXi_i\|_{\infty}  \leq \sum_{B \subset [t+1]} |\hat{\alpha}_{i,B}| \cdot \|\mPsi_{B} -  \hat{\Omega}_{B,\emptyset} \cdot  \mI_{\dim}\|_{\infty} \lesssim \dim^{-1/2+\epsilon}, \\  &\Tr(\mXi_i \mXi_j \tran)/\dim = \sum_{B \subset [t+1], B^\prime \subset [j]} \hat{\alpha}_{i,B} \cdot \hat{\alpha}_{j,B^\prime} \cdot (\hat{\Omega}_{B,B^\prime} - \hat{\Omega}_{B, \emptyset} \hat{\Omega}_{B^\prime, \emptyset}),  \\ &\left\| \mXi_i \mXi_j \tran - \left(\sum_{B \subset [t+1], B^\prime \subset [j]} \hat{\alpha}_{i,B} \cdot \hat{\alpha}_{j,B^\prime} \cdot (\hat{\Omega}_{B,B^\prime} - \hat{\Omega}_{B, \emptyset} \hat{\Omega}_{B^\prime, \emptyset})  \right) \cdot \mI_{\dim} \right\|_{\infty}  \lesssim \dim^{-1/2+\epsilon}.
    \end{align*}
    \item In light of the above definitions, we can express the formula obtained for the $t+1$ iteration of the GFOM in \eqref{eq:GFOM-t+1-iterate} as:
    \begin{align*}
        &\iter{\vz}{t+1} =  \mS \mPsi_{t+1} \mS \cdot \hat{\nonlin}_{t+1}(\iter{\vw}{1}, \dotsc, \iter{\vw}{\tau(t)}; \auxmat) + \sum_{i=1}^{\tau(t)} (\beta_{t+1})_i \cdot  \mS \mPsi_{t+1} \mXi_i \mS \cdot  g_i(\iter{\vw}{1}, \dotsc, \iter{\vw}{i-1}; \auxmat) \\& \hspace{10cm} + \tilde{\eta}_{t+1}(\iter{\vw}{1}, \dotsc, \iter{\vw}{\tau(t)}; \auxmat) + \iter{\vdelta}{t+1} \\
        & \explain{\eqref{eq:new-matrices}}{=} \mQ_{\tau(t+1)} \cdot g_{\tau(t+1)}(\iter{\vw}{1}, \dotsc, \iter{\vw}{\tau(t+1) - 1}; \auxmat) + \sum_{i=1}^{\tau(t)} (\beta_{t+1})_i \cdot \mQ_{\tau(t) +i} \cdot g_{\tau(t)+i}(\iter{\vw}{1}, \dotsc, \iter{\vw}{\tau(t)+i-1}; \auxmat) \\
        &\quad+ \hat{\Omega}_{t+1, \emptyset} \cdot g_{\tau(t+1)}(\iter{\vw}{1}, \dotsc, \iter{\vw}{\tau(t+1) - 1}; \auxmat) + \sum_{i=1}^{\tau(t)} (\beta_{t+1})_i \cdot \hat{\Upsilon}_{t+1,i} \cdot g_{\tau(t)+i}(\iter{\vw}{1}, \dotsc, \iter{\vw}{\tau(t)+i-1}; \auxmat) \\& \hspace{10cm} + \tilde{\eta}_{t+1}(\iter{\vw}{1}, \dotsc, \iter{\vw}{\tau(t)}; \auxmat) + \iter{\vdelta}{t+1}, \end{align*}
        where:
        \begin{align*}
            \hat{\Upsilon}_{t+1,i} \explain{def}{=} \sum_{B \subset [t]} \hat{\alpha}_{i,B} \cdot (\hat{\Omega}_{B \cup \{t+1\}, \emptyset} - \hat{\Omega}_{B, \emptyset} \cdot \hat{\Omega}_{ \{t+1\}, \emptyset}) \quad \forall \; i \; \in \; [\tau(t)].
        \end{align*}
        Hence:
        \begin{align*}
        & \iter{\vz}{t+1} = \iter{\vw}{\tau(t+1)} + \sum_{i=1}^{\tau(t)} (\beta_{t+1})_i \cdot \iter{\vw}{\tau(t) +i} + \hat{\Omega}_{t+1, \emptyset} \cdot g_{\tau(t+1)}(\iter{\vw}{1}, \dotsc, \iter{\vw}{\tau(t+1) - 1}; \auxmat) \\&\hspace{1.5cm}+ \sum_{i=1}^{\tau(t)} (\beta_{t+1})_i \cdot \hat{\Upsilon}_{t+1,i} \cdot g_{\tau(t)+i}(\iter{\vw}{1}, \dotsc, \iter{\vw}{\tau(t)+i-1}; \auxmat) + \tilde{\eta}_{t+1}(\iter{\vw}{1}, \dotsc, \iter{\vw}{\tau(t)}; \auxmat) + \iter{\vdelta}{t+1}.
    \end{align*}
    In light of the above display, we define the post-processing function $H_{t+1} : \R^{\tau(t+1) + \auxdim} \mapsto \R$ as:
    \begin{align*}
        H_{t+1}(w_1, \dotsc, w_{\tau(t+1)}; \auxvec) &\explain{def}{=} w_{\tau(t+1)} + \sum_{i=1}^{\tau(t)} (\beta_{t+1})_i \cdot w_{\tau(t) +i} + \Omega_{t+1, \emptyset} \cdot g_{\tau(t+1)}(w_1, \dotsc, w_{\tau(t+1) - 1}; \auxvec) \\&\hspace{0.5cm}+ \sum_{i=1}^{\tau(t)} (\beta_{t+1})_i \cdot \Upsilon_{t+1,i} \cdot g_{\tau(t)+i}(w_{1}, \dotsc, w_{\tau(t)+i-1}; \auxvec) + \tilde{\eta}_{t+1}(w_1, \dotsc, w_{\tau(t)}; \auxvec),
    \end{align*}
    where:
    \begin{align*}
         {\Upsilon}_{t+1,i} \explain{def}{=} \sum_{B \subset [t]} {\alpha}_{i,B} \cdot ({\Omega}_{B \cup \{t+1\}, \emptyset} - {\Omega}_{B, \emptyset} \cdot {\Omega}_{ \{t+1\}, \emptyset}) \quad \forall \; i \; \in \; [\tau(t)].
    \end{align*}
    This ensures:
    \begin{align*}
        \iter{\vz}{t+1} = H_{t+1}(\iter{\vw}{1}, \dotsc, \iter{\vw}{\tau(t+1)}; \auxmat) + \iter{\vDelta}{t+1}
    \end{align*}
    where:
    \begin{align*}
        \iter{\vDelta}{t+1} & = \iter{\vdelta}{t+1} + (\hat{\Omega}_{t+1, \emptyset}-{\Omega}_{t+1, \emptyset}) \cdot g_{\tau(t+1)}(\iter{\vw}{1}, \dotsc, \iter{\vw}{\tau(t+1) - 1}; \auxmat) \\&\hspace{3cm} + \sum_{i=1}^{\tau(t)} (\beta_{t+1})_i \cdot (\hat{\Upsilon}_{t+1,i}-{\Upsilon}_{t+1,i}) \cdot g_{\tau(t)+i}(\iter{\vw}{1}, \dotsc, \iter{\vw}{\tau(t)+i-1}; \auxmat).
    \end{align*}
    We now verify that $\|\iter{\vDelta}{t+1}\|^2/\dim \explain{P}{\rightarrow} {0}$. Note that by \thref{VAMP}:
    \begin{align*}
        \|g_{\tau}(\iter{\vw}{1}, \dotsc, \iter{\vw}{\tau - 1}; \auxmat)\|^2/\dim \explain{P}{\rightarrow} \E g_{\tau}^2(\serv{W}_1, \dotsc, \serv{W}_{\tau-1}; \serv{A}) < \infty \quad \forall \; \tau \; \in \; [\tau(t+1)].
    \end{align*}
    Furthermore, recalling the definition of $\iter{\vdelta}{t+1}$ from \eqref{eq:delta-error} we have:
    \begin{align*}
        \|\iter{\vdelta}{t+1}\|^2/\dim & \leq 2\|\mPsi_{t+1}\|_{\op}^2 \cdot  \|\nonlin_{t+1}(\iter{\vz}{1}, \dotsc, \iter{\vz}{t}; \auxmat) - \nonlin_{t+1}(\iter{\vz}{1} - \iter{\vDelta}{1}, \dotsc, \iter{\vz}{t} - \iter{\vDelta}{t} ; \auxmat)\|^2/\dim \\ &\hspace{2cm} + 2\|\eta_{t+1}(\iter{\vz}{1}, \dotsc, \iter{\vz}{t}; \auxmat) - \eta_{t+1}(\iter{\vz}{1} - \iter{\vDelta}{1}, \dotsc, \iter{\vz}{t} - \iter{\vDelta}{t} ; \auxmat)\|^2/\dim \\
        & \explain{}{\leq} 2 L^2 \cdot  (1 + \|\mPsi_{t+1}\|_{\op}^2) \cdot \sum_{i=1}^t \|\iter{\vDelta}{i}\|^2 /\dim \\
        &\explain{P}{\rightarrow} 0.
    \end{align*}
    In the above display the second-to-last step follows from the regularity assumptions imposed on $\nonlin_{t+1}, \eta_{t+1}$ in the statement of \thref{GFOM} and the last step follows from the induction hypothesis. Combining the above conclusions with the fact that  $\hat{\Omega}_{t+1, \emptyset}-{\Omega}_{t+1, \emptyset} \rightarrow 0$ and $\hat{\Upsilon}_{t+1,i}-{\Upsilon}_{t+1,i} \rightarrow 0$ yields $\|\iter{\vDelta}{t+1}\|^2/\dim \explain{P}{\rightarrow} {0}$, as desired. This completes the inductive construction of the desired VAMP algorithm which implements a given GFOM and concludes the proof of \thref{GFOM}. 
\end{enumerate}
\end{proof}

\section{Proof of \thref{moments} via Method of Moments} \label{appendix:moment-method}

This appendix is devoted to the proof of \thref{moments} and is organized as follows:
\begin{enumerate}
    \item \appref{key-results} introduces the key ideas involved in the proof of \thref{moments} in the form of a few intermediate results. 
    \item The proofs of \thref{moments} and \corref{moments} are immediate given these intermediate results, and is presented in \appref{moments-actual-proof}. 
    \item \appref{proof-hermite-expansion}, \appref{expectation-formula}, \appref{key-estimate}, and \appref{decomposition} are devoted to the proofs of the intermediate results introduced in \appref{key-results}. 
\end{enumerate}

\subsection{Key Results} \label{appendix:key-results}
\subsubsection{Unrolling the MVAMP Iterations}
We start by expressing the key quantity of interest in \thref{moments}:
\begin{align} \label{eq:key-observable}
     \frac{1}{\dim}\sum_{j=1}^\dim h(\auxvec_j) \cdot  \hermite{r}\left(\iter{z}{T,\cdot}_j\right)
\end{align}
as a polynomial of the matrices $\mM_{1:k}$ used the MVAMP iteration and the initialization $\iter{\mZ}{0,\cdot}$. This involves recursively ``unrolling'' the MVAMP iterations to express each iterate $\iter{\mZ}{t,\cdot}$ as a polynomial of the previous iterate $\iter{\mZ}{t-1,\cdot}$ and continuing this process till we reach the initialization $\iter{\mZ}{0,\cdot}$. The resulting polynomial can be expressed as a combinatorial sum over colorings of decorated $\order$-trees, defined below.  

\begin{definition}[Decorated $\order$-trees] \label{def:decorated-tree} Let $\order\in \N$ be arbitrary positive integer. A decorated $\order$-tree $F$ is given by a tuple $(V, E, \height{\cdot}, \pweight{\cdot}{}, \qweight{\cdot}{})$ where:
\begin{enumerate}
    \item $V = \{0,1, 2, 3, \dotsc, |V|-1\}$ is the set of vertices.
    \item $E$ is the set of directed edges.
\end{enumerate}
The sets $(V,E)$ are such that the directed graph given by $(V,E)$ is a directed tree. {Furthermore, the tree is non-trivial in the sense that $|E| \geq 1$.} We define the following notions:
\begin{enumerate}
     \item If $u \rightarrow v \in E$, we say $u$ is the parent of $v$ and $v$ is a child of $u$. Each vertex in a directed tree has at most one parent. 
    \item A vertex with no parent is called a root vertex. A directed tree has exactly one root vertex, denoted by $0$. 
    \item For every vertex $u$ we define $\chrn{F}(u)$ as the number of children of $u$.
    \item A vertex with no children is called a leaf. The set of all leaves is denoted by  $\leaves{F}$.
    \item A pair of non-root vertices $u, v \in V \backslash \{0\}$ are siblings if they have the same parent.
    \item The positive integer $\order$ is called the order of the tree.
\end{enumerate}
The tree is decorated with 3 functions 
\begin{align*}\height{\cdot}{} : V &\rightarrow \W = \{0,1, 2, 3, \dotsc \}, \\ \pweight{\cdot}{} : V \backslash \{0\} &\rightarrow \W^\order \backslash \{(0,0, \dotsc, 0)\}, \\ \qweight{\cdot}{} : V &\rightarrow \W^k,
\end{align*}
such that:
\begin{enumerate}
    \item  The height function $\height{\cdot}{}$ has the following properties:
    \begin{enumerate}
        \item {$\height{0}{} \geq 1$}.
        \item for any $u \rightarrow v \in E$, $\height{v} = \height{u} - 1$. 
        \item If $\height{u} = 0$, then $u$ has no children $(\chrn{F}(u)=0)$. 
        \item For every vertex $u$ with no children and $\|\qweight{u}{}\|_1 \geq 1$, we have $\height{u} = 0$.
    \end{enumerate}
    \item The functions $\pweight{\cdot}{}$ and $\qweight{\cdot}{}$ satisfy: 
    \begin{enumerate}
    \item {$\|\qweight{0}{}\|_1 \geq 1$.}
    \item For any non-leaf vertex $u \in V \backslash \leaves{F}$, we have,
    \begin{align}\label{eq:conservation-eq}
        \qweight{u}{} = \sum_{v\in V: u \rightarrow v } \pweight{v}{}.
    \end{align}
    \end{enumerate}
\end{enumerate}
\end{definition}

Next, we introduce the notion of a coloring of a decorated $\order$-tree. 

\begin{definition}[Coloring of a decorated $\order$-tree] \label{def:coloring}
A coloring of a decorated $\order$-tree $F$ with vertex set $V$ is a map $\ell : V \rightarrow [\dim]$. The set of all colorings of a $\order$-tree $F$ with vertex set $V$ is denoted by $[\dim]^{V}$.
\end{definition}

The colored decorated $\order$-trees that appear in the polynomial expansion of \eqref{eq:key-observable} satisfy certain constraints, which we collect in the following definition of \emph{valid} colorings.   

\begin{definition}[Valid Colorings of Decorated $\order$-trees] \label{def:valid-tree}  A decorated $\order$-tree $F = (V, E, \height{\cdot}, \pweight{\cdot}{}, \qweight{\cdot}{})$  and a coloring $\ell : V \rightarrow [\dim]$ are valid if for vertices $u, v$ that are siblings in the tree, we have $\ell_u \neq \ell_v$.
We denote valid decorated colored $\order$-trees by defining the indicator function $\valid$ such that $\valid(F,\ell) = 1$ iff $(F,\ell)$ is a valid colored decorated $\order$-tree and $\valid(F,\ell) = 0$ otherwise. 
\end{definition}

With these definitions, we can now present a formula for the polynomial representation of the key quantity \eqref{eq:key-observable} in the lemma below. 

\begin{lemma}[Unrolling Lemma]\label{lem:hermite-expansion} For any $T \in \N$, $r \in \W^\order$ with $\|r\|_1 \geq 1$ and $h: \R^{\auxdim} \rightarrow \R$, we have,
\begin{subequations}\label{eq:hermite-expansion}
\begin{align}
    \frac{1}{\dim}\sum_{j=1}^\dim h(\auxvec_j) \cdot \hermite{r}(\iter{z}{T,\cdot}_j)  = \sum_{\substack{F \in \trees{k}{r,T} \\ F = (V, E, \height{\cdot}, \pweight{\cdot}{}, \qweight{\cdot}{})}}\alpha(F)  \sum_{\substack{\ell \in [\dim]^V }} \valid(F,\ell)  \chi(\auxmat; F, \ell)  \gamma(\mM_{1:k} ; F, \ell)   \beta(\iter{\mZ}{0,\cdot}; F, \ell).
\end{align}
In the above display $\trees{\order}{r,T}$ denotes the set of all decorated $\order$-trees with $\qweight{0}{} = r$ and $\height{0} = T$. Furthermore, for a colored decorated $\order$-tree $(F,\ell)$ with $F = (V, E, \height{\cdot}, \pweight{\cdot}{}, \qweight{\cdot}{})$, the weights $\alpha(F)$, $\chi(\auxmat; F,\ell)$, $\gamma(\mM_{1:k} ; F, \ell)$, and $\beta(\iter{\mZ}{0,\cdot}; F, \ell)$ are defined as follows: 
\begin{align}
    \alpha(F) &\explain{def}{=}   \prod_{u \in V \backslash \leaves{F}} \frac{\sqrt{\qweight{u}{}!}}{\chrn{F}(u)!} \prod_{v : u \rightarrow v} \frac{1}{\sqrt{\pweight{v}{}!}}, \\
    \chi(\auxmat; F, \ell) &\explain{def}{=} h(\auxvec_{\ell_0}) \cdot  \prod_{\substack{u \in V \\ u \neq 0}} \fourier{\pweight{u}{}}{\qweight{u}{}}{\auxvec_{\ell_u}}, \\
    \gamma(\mM_{1:k} ; F, \ell) & \explain{def}{=} \frac{1}{\dim} \prod_{\substack{e \in E \\ e = u \rightarrow v}} \prod_{i=1}^\order (M_i)_{\ell_u, \ell_v}^{\pweight{v}{i}}, \\
    \beta(\iter{\mZ}{0,\cdot}; F, \ell) &\explain{def}{=} \prod_{\substack{v \in \leaves{F} \\ \|\qweight{v}{}\|_1 \geq 1}} \hermite{\qweight{v}{}}( \iter{z}{0, \cdot }_{\ell_v}),
\end{align}
\end{subequations}
where for any $r \in \W^\order, r^\prime \in \W^\order$, $z \in \R^{\order}$, and $\auxvec \in \R^\auxdim$, we defined the notations:
\begin{align} \label{eq:hermite-expansion-notations}
    r! \explain{def}{=} \prod_{i=1}^\order r_i! \; , \quad \hermite{r}{(z)} = \prod_{i=1}^\order \hermite{r_i}(z_i), \quad \fourier{r}{r^\prime}{\auxvec} \explain{def}{=}  \E \left[ \hermite{r^\prime}(\serv{Z}) \cdot  \prod_{i=1}^\order \hermite{r_i}(\nonlin_i(\serv{Z}; \auxvec)) \right], \; \serv{Z} \sim \gauss{0}{I_\order}.
    \end{align}
\end{lemma}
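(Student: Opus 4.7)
The plan is to prove \lemref{hermite-expansion} by induction on $T$, unrolling the MVAMP recursion one level at a time. At each step, two algebraic identities perform the work: a Hermite multinomial formula for $\hermite{q}$ of a linear combination, and the Hermite decomposition of the product $\prod_i \hermite{p_i}(\nonlin_i(\cdot\,;\auxvec))$.

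First, I would establish the Hermite multinomial identity. Using the generating function $\sum_n \hermite{n}(x)\,t^n/\sqrt{n!} = e^{tx-t^2/2}$, one checks that whenever $\sum_\ell \alpha_\ell^2 = 1$,
$$
\hermite{n}\Big(\sum_\ell \alpha_\ell y_\ell\Big) \;=\; \sum_{(k_\ell):\,\sum_\ell k_\ell = n}\! \sqrt{\tfrac{n!}{\prod_\ell k_\ell!}}\;\prod_\ell \alpha_\ell^{k_\ell}\hermite{k_\ell}(y_\ell),
$$
as an algebraic identity in the $y_\ell$'s; the normalization $\sum_\ell \alpha_\ell^2 = 1$ is essential, since it factorizes $e^{-t^2/2}$ against $\prod_\ell e^{-(t\alpha_\ell)^2/2}$ in the generating-function proof. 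With $\alpha_\ell = (M_i)_{j,\ell}$ and $y_\ell^{(i)} = \nonlin_i(\iter{z}{t-1,\cdot}_\ell;\auxvec_\ell)$, this normalization is exactly $(\mPsi_i\mPsi_i^\top)_{jj} = \Omega_{ii} = 1$, which holds thanks to the balanced hypothesis \eqref{eq:normalized-semirandom} of \thref{moments}. Applying the identity coordinate-wise to $\hermite{q}(\iter{z}{t,\cdot}_j) = \prod_i \hermite{q_i}(\iter{z}{t,i}_j)$ and grouping terms by the finite set of indices $\ell$ carrying a non-zero exponent yields a sum over unordered data $\{(\ell_s, p_s)\}_{s=1}^m$ with distinct $\ell_s \in [\dim]$ and non-zero $p_s \in \W^\order$ satisfying $\sum_s p_s = q$, with combinatorial weight $(1/m!)\sqrt{q!/\prod_s p_s!}$, matrix weight $\prod_s \prod_i (M_i)_{j,\ell_s}^{(p_s)_i}$, and child factor $\prod_s \prod_i \hermite{(p_s)_i}(\nonlin_i(\iter{z}{t-1,\cdot}_{\ell_s};\auxvec_{\ell_s}))$.

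Second, at each child, since $\nonlin_i(\cdot;\auxvec)$ is a polynomial, the factor $\prod_i \hermite{(p_s)_i}(\nonlin_i(\vec{x};\auxvec))$ is itself a polynomial in $\vec{x}\in\R^\order$ and admits the finite Hermite expansion $\sum_{q'}\fourier{p_s}{q'}{\auxvec}\,\hermite{q'}(\vec{x})$, with coefficients exactly those appearing in \eqref{eq:hermite-expansion-notations}. Substituting $\vec{x} = \iter{z}{t-1,\cdot}_{\ell_s}$ turns the one-step unroll into the height-one version of the tree sum: each index $s$ becomes a child $v_s$ of the root with edge-degree $\pweight{v_s}{}=p_s$, vertex-degree $\qweight{v_s}{}=q'$, color $\ell_{v_s}=\ell_s$, and coefficient $\fourier{p_s}{q'}{\auxvec_{\ell_s}}$; the distinct-$\ell_s$ requirement is exactly $\valid$ imposed on the siblings of the root; and multiplying by $h(\auxvec_j)/\dim$ and summing over $j = \ell_0$ reproduces the LHS. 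Induction on $T$ now completes the proof: at each child $v_s$ with $\qweight{v_s}{} \neq 0$ and height $\height{v_s}\geq 1$, apply the same one-step unroll to $\hermite{\qweight{v_s}{}}(\iter{z}{t-1,\cdot}_{\ell_s})$; children with $\qweight{v_s}{}=0$ terminate as leaves (using $\hermite{0}=1$, contributing $1$ to $\beta$), and children that reach height $0$ terminate as leaves whose $\hermite{\qweight{v}{}}(\iter{z}{0,\cdot}_{\ell_v})$ factors assemble $\beta$. Multiplying the per-level combinatorial weights telescopes into $\alpha(F) = \prod_{u\in V\setminus\leaves{F}} \frac{\sqrt{\qweight{u}{}!}}{\chrn{F}(u)!\prod_{v:u\to v}\sqrt{\pweight{v}{}!}}$, while the per-level matrix and Hermite-coefficient factors accumulate into $\gamma$ and $\chi$.

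The main obstacle will be the careful bookkeeping required to match the ``flat'' multinomial indexing (vectors $(p_\ell)_{\ell \in [\dim]}$ with $\sum_\ell p_\ell = q$) against the ``tree'' indexing by $\trees{\order}{r,T}$. In particular one must verify that the $1/\chrn{F}(u)!$ factor compensates precisely for the $m!$ orderings introduced when converting the unordered multiset of children into an ordered tuple (this conversion is legitimate because siblings carry distinct colors, hence $\valid$), that the two kinds of terminal leaves allowed by \defref{decorated-tree}---those at height $0$ and those at positive height with $\qweight{v}{}=0$---are each produced correctly by the recursion, and that the conservation law \eqref{eq:conservation-eq} is exactly the constraint $\sum_s p_s = q$ that appears at each one-step unroll. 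Once these identifications are in place, the identity \eqref{eq:hermite-expansion} follows by routine induction on $T$, with $T=1$ (computed directly via a single application of the two identities above) as the base case.
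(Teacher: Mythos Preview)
Your proposal is correct and follows essentially the same route as the paper's proof: the paper also applies the Hermite multinomial identity (stated as \factref{hermite-property}, with the unit-norm hypothesis $\|\vu\|=1$ coming from the balanced assumption $(\mPsi_i\mPsi_i^\top)_{jj}=1$ exactly as you identify), regroups the resulting multi-index sum by its support, then Hermite-expands $\prod_i \hermite{p_i}(\nonlin_i(\cdot;\auxvec))$ to produce the coefficients $\fourier{p}{q}{\auxvec}$, and finally recurses down to the initialization. Your explicit handling of the $1/\chrn{F}(u)!$ factor, the two leaf types, and the conservation law \eqref{eq:conservation-eq} matches the paper's bookkeeping.
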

We postpone the proof of this formula to \appref{proof-hermite-expansion}. 
\subsubsection{The Expectation Formula}
The next step in the proof of \thref{moments} involves computing the expectations with respect to the signed diagonal matrix $\mS = \diag(s_{1:\dim})$ used to construct  the semi-random ensemble $\mM_{1:\order} = \mS \mPsi_{1:\order} \mS$, the Gaussian initialization $\iter{\mZ}{0,\cdot}$ and the auxiliary information $\auxmat$ in the polynomial expansion given in \lemref{hermite-expansion}.  Observe that since $\mS, \iter{\mZ}{0,\cdot}, \auxmat$ are mutually independent, taking expectations in \eqref{eq:hermite-expansion} yields:
\begin{align} 
     &\E \left[ \frac{1}{\dim}\sum_{j=1}^\dim h(\auxvec_j) \cdot \hermite{r}(\iter{z}{T,\cdot}_j) \right] \nonumber \\& \hspace{1.5cm} = \sum_{\substack{F \in \trees{k}{r,T} \\ F = (V, E, \height{\cdot}, \pweight{\cdot}{}, \qweight{\cdot}{})}}\alpha(F) \cdot \sum_{\substack{\ell \in [\dim]^V }} \valid(F,\ell) \cdot \E[\chi(\auxmat; F, \ell)] \cdot \E[\gamma(\mM_{1:\order} ; F, \ell)] \cdot \E[ \beta(\iter{\mZ}{0,\cdot}; F, \ell)]. \label{eq:intermediate-expectation}
\end{align}
Observe that in order to evaluate each of the expectations:
\begin{align*}
    \E[\chi(\auxmat; F, \ell)], \quad \E[\gamma(\mM_{1:\order} ; F, \ell)], \quad \E[ \beta(\iter{\mZ}{0,\cdot}; F, \ell)], 
\end{align*}
the repetition pattern of the coloring $\ell \in [\dim]^V$ is important---for two vertices $u,v$ that have the same color $\ell_{u} = \ell_{v}$, the corresponding random variables are identical $(s_{\ell_u}, \auxvec_{\ell_u}, \iter{z}{0,\cdot}_{\ell_u}) = (s_{\ell_v}, \auxvec_{\ell_v}, \iter{z}{0,\cdot}_{\ell_v})$. On the other hand, for vertices $u, v$ with different colors $\ell_{u} \neq \ell_{v}$,   the corresponding $(s_{\ell_u}, \auxvec_{\ell_u}, \iter{z}{0,\cdot}_{\ell_u})$ and $(s_{\ell_v}, \auxvec_{\ell_v}, \iter{z}{0,\cdot}_{\ell_v})$ are independent. The repetition pattern of a coloring in $\ell \in [\dim]^V$ can be encoded by a partition of the vertex set $V$. This motivates the following definitions.

\begin{definition}[Partitions and Configurations]\label{def:partition} Given a decorated $\order$-tree $F=(V, E, \height{\cdot}, \pweight{\cdot}{}, \qweight{\cdot}{})$, a partition $\pi$ of the vertex set $V$ is a collection of disjoint subsets (called blocks) $\{B_1,B_2, \dotsc ,B_{s}\}$ such that
\begin{align*}
    \bigcup_{j=1}^s B_j = V, \; B_j \cap B_k = \emptyset \; \forall \; j \neq k.
\end{align*}
We define $|\pi|$ to be the number of blocks in $\pi$. For every $v \in V$, we use $\pi(v)$ to denote the unique block $j \in   [|\pi|]$ such that $v \in B_j$. Without loss of generality, we will assume that for the root vertex $0$, $\pi(0) = 1$ or equivalently $0 \in B_1$. The set of all partitions of $V$ is denoted by $\part{V}$. A configuration is a pair $(F,\pi)$ consisting of a decorated $\order$-tree $F$ and a partition $\pi$ of its vertices. 
\end{definition}

\begin{definition}[Colorings consistent with a partition] Let $\pi$ be a partition of the vertex set $V$ of a decorated $\order$-tree $F=(V, E, \height{\cdot}, \pweight{\cdot}{}, \qweight{\cdot}{})$. A coloring consistent with $\pi$ is a function $\ell: V \rightarrow [\dim]$ such that, 
\begin{align*}
    \ell_u = \ell_v & \Leftrightarrow \pi(u) = \pi(v).
\end{align*}
The set of all colorings that are consistent with a partition $\pi$ is denoted by $\colorings{}{\pi}$.
\end{definition}

Next, we note that whether a colored $\order$-tree $(F,\ell)$ is valid or not (\defref{valid-tree}) can be determined by knowing the repetition pattern of $\ell$. Hence, we introduce the following definition.

\begin{definition}[Valid Configurations] A decorated $\order$-tree $F = (V, E, \height{\cdot}, \pweight{\cdot}{}, \qweight{\cdot}{})$  and a partition $\pi \in \part{V}$ form a valid configuration if for any two vertices $u, v$ that are siblings in the forest, we have $\pi(u) \neq \pi(v)$. We denote valid configurations by defining the indicator function $\valid$ such that $\valid(F,\pi) = 1$ iff $(F,\pi)$ is a valid configuration and $\valid(F,\pi) = 0$ otherwise.
\end{definition}

In light of the above definition, observe that \eqref{eq:intermediate-expectation} can be rearranged as:
\begin{align} \label{eq:rearrangement}
    &\E \left[ \frac{1}{\dim}\sum_{j=1}^\dim h(\auxvec_j) \cdot \hermite{r}(\iter{z}{T,\cdot}_j) \right]  = \nonumber \\&\sum_{\substack{F \in \trees{k}{r,T} \\ F = (V, E, \height{\cdot}, \pweight{\cdot}{}, \qweight{\cdot}{})}} \sum_{\pi \in \part{V}}\alpha(F) \cdot \valid(F,\pi) \cdot \sum_{\substack{\ell \in \colorings{}{\pi} }}  \E[\chi(\auxmat; F, \ell)] \cdot \E[\gamma(\mM_{1:\order} ; F, \ell)] \cdot  \E[ \beta(\iter{\mZ}{0,\cdot}; F, \ell)]. 
\end{align}
The following lemma presents a simplified formula for the above expectation. 

\begin{lemma}[Expectation Formula]\label{lem:expectation-formula} For any $T \in \N$, any $r \in \W^\order$ with $\|r\|_1 \geq 1$ and any function $h:\R^{\auxdim} \mapsto \R$ with $\E[|h(\serv{A})|^p] <\infty$ for all $p \in \N$ we have,
\begin{subequations}\label{eq:expectation-formula}
\begin{align}
    &\E \left[ \frac{1}{\dim}\sum_{j=1}^\dim h(\auxvec_j) \cdot \hermite{r}(\iter{z}{T,\cdot}_j) \right] \nonumber\\& \hspace{1cm} = \sum_{\substack{F \in \trees{k}{r,T} \\ F = (V, E, \height{\cdot}, \pweight{\cdot}{}, \qweight{\cdot}{})}} \sum_{\pi \in \part{V}}\alpha(F)  \cdot \relev(F,\pi) \cdot \chi(F,\pi) \cdot \beta(F,\pi) \cdot  \sum_{\substack{\ell \in \colorings{}{\pi} }} \gamma(\mPsi_{1:\order} ; F, \ell). 
\end{align}
In the above display, $\trees{\order}{r,T}$ denotes the set of all decorated $\order$-trees with $\qweight{0}{} = r$ and $\height{0} = T$. Furthermore, for a configuration $(F,\pi)$ with $F = (V, E, \height{\cdot}, \pweight{\cdot}{}, \qweight{\cdot}{})$, $\pi = \{B_1, B_2, \dotsc, B_{|\pi|}\}$ and a coloring $\ell \in \colorings{}{\pi}$ the weights $\alpha(F)$,  $\chi(F,\pi)$, $\beta(F,\pi)$, and $\gamma(\mPsi_{1:\order} ; F, \ell)$ are defined as follows:
\begin{align}
     \alpha(F) &\explain{def}{=}   \prod_{u \in V \backslash \leaves{F}} \frac{\sqrt{\qweight{u}{}!}}{\chrn{F}(u)!} \prod_{v : u \rightarrow v} \frac{1}{\sqrt{\pweight{v}{}!}}, \\
     \chi( F, \pi) &\explain{def}{=}  \E \left[ h(\serv{A}) \cdot  \prod_{\substack{u \in B_1 \\ u \neq 0}} \fourier{\pweight{u}{}}{\qweight{u}{}}{\serv{A}} \right] \cdot \prod_{i=2}^{|\pi|} \E \left[ \prod_{\substack{u \in B_i}} \fourier{\pweight{u}{}}{\qweight{u}{}}{\serv{A}} \right],  \\
     \beta(F,\pi) & = \prod_{i=1}^{|\pi|} \E\left[ \prod_{\substack{u \in B_i \cap \leaves{F} \\ \|\qweight{v}{}\|_1 \geq 1}}  \hermite{\qweight{u}{}}(\serv{Z}) \right], \\
      \gamma(\mPsi_{1:k} ; F, \ell) & \explain{def}{=} \frac{1}{\dim} \prod_{\substack{e \in E \\ e = u \rightarrow v}} \prod_{i=1}^\order (\Psi_i)_{\ell_u, \ell_v}^{\pweight{v}{i}}, 
\end{align}
\end{subequations}
where $\serv{A}$ is the auxiliary information random variable from \assumpref{side-info} and $\serv{Z}  \sim \gauss{0}{I_{\order}}$. Finally,  $\relev({F}, {\pi})$ is an indicator function which satisfies $\relev(F,\pi) = 1$ iff $(F,\pi)$ form a relevant configuration as defined below, and is zero otherwise.      
\end{lemma}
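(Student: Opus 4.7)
The plan is to start from the Unrolling Lemma (\lemref{hermite-expansion}) and take expectations. By the mutual independence of the three sources of randomness---the sign matrix $\mS$ (which enters only through $\gamma(\mM_{1:\order};F,\ell)$), the auxiliary information $\auxmat$ (which enters only through $\chi(\auxmat;F,\ell)$), and the Gaussian initialization $\iter{\mZ}{0,\cdot}$ (which enters only through $\beta(\iter{\mZ}{0,\cdot};F,\ell)$)---the expectation factors as in \eqref{eq:intermediate-expectation}. I would then reorganize the sum over colorings $\ell \in [\dim]^V$ by grouping consistent colorings under partitions $\pi \in \part{V}$, yielding \eqref{eq:rearrangement}.

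Next, I would evaluate each of the three expectations for a fixed configuration $(F,\pi)$ and any $\ell \in \colorings{}{\pi}$. Since $\chi$ is a product of vertex-local functions of $\auxvec_{\ell_u}$, and $\beta$ is a product of vertex-local functions of $\iter{z}{0,\cdot}_{\ell_v}$, and since coordinates sharing a block of $\pi$ correspond to identical copies of the underlying random vector while distinct blocks correspond to i.i.d.\ copies (by \assumpref{side-info} and the i.i.d.\ Gaussianity of the rows of $\iter{\mZ}{0,\cdot}$), both expectations factorize over the blocks of $\pi$. This reproduces exactly the expressions $\chi(F,\pi)$ and $\beta(F,\pi)$ stated in the lemma; crucially, both depend only on the configuration and not on the specific consistent coloring $\ell$.

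The more delicate step is the expectation over $\mS$ inside $\gamma(\mM_{1:\order};F,\ell)$. Writing $(M_i)_{\ell_u,\ell_v} = s_{\ell_u}(\Psi_i)_{\ell_u,\ell_v}s_{\ell_v}$ one obtains
\[
\gamma(\mM_{1:\order};F,\ell) \;=\; \gamma(\mPsi_{1:\order};F,\ell)\cdot\prod_{u\in V}s_{\ell_u}^{e(u)},
\]
where $e(u)$ is the total contribution to the exponent of $s_{\ell_u}$ from all edges incident to $u$. Collecting per-edge contributions and invoking the conservation equation \eqref{eq:conservation-eq}, one finds $e(0) = \|\qweight{0}{}\|_1$ at the root, $e(u) = \|\pweight{u}{}\|_1$ at each leaf, and $e(u) = \|\pweight{u}{}\|_1 + \|\qweight{u}{}\|_1$ at internal non-root vertices. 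Since the Rademacher signs are constant within a block of $\pi$ and independent across blocks, the sign expectation factorizes over the blocks as $\prod_{j=1}^{|\pi|}\E[s^{E_j}]$ with $E_j = \sum_{u\in B_j}e(u)$, which equals $1$ when every $E_j$ is even and $0$ otherwise. This parity condition furnishes the definition of $\relev(F,\pi)$: a configuration is relevant precisely when every block has an even total sign exponent. On relevant configurations the sign factor contributes $1$ and the residual sum $\sum_{\ell\in\colorings{}{\pi}}\gamma(\mPsi_{1:\order};F,\ell)$ appears exactly as in the stated formula.

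Substituting these three evaluations into \eqref{eq:rearrangement} yields the claimed identity. The main technical care is the bookkeeping of sign exponents $e(u)$ (in particular, verifying the parity formula from \eqref{eq:conservation-eq} at root, internal, and leaf vertices) and in justifying the interchange of expectation with the finite combinatorial sum; the latter is routine given the polynomial growth of $\nonlin_{1:\order}$, the finiteness of all moments of $\serv{A}$ and $h(\serv{A})$, and the Gaussianity of the initialization. I do not anticipate a substantive obstacle beyond this accounting.
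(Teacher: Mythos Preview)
Your outline correctly reproduces the factorizations of $\E[\chi(\auxmat;F,\ell)]$, $\E[\beta(\iter{\mZ}{0,\cdot};F,\ell)]$, and $\E[\gamma(\mM_{1:\order};F,\ell)]$ over the blocks of $\pi$, and your computation of the sign exponents $e(u)$ via the conservation equation is exactly what the paper does. However, you misidentify what $\relev(F,\pi)$ is. The indicator $\relev$ is \emph{not} the parity condition alone: it is defined (in \defref{relevant-config}, immediately after the lemma) as the conjunction of the \ref{strong-sibling-property}, the \ref{weak-forbidden-weights-property}, the \ref{original-leaf-property}, and the \ref{parity-property}. Your derivation produces the formula with the indicator $\valid(F,\pi)\cdot[\text{parity holds}]$, which is correct but strictly weaker than what the lemma asserts.

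The missing step is to show that replacing $\valid\cdot[\text{parity}]$ by the more restrictive $\relev$ kills only terms that were already zero. Concretely, the paper argues that if the \ref{weak-forbidden-weights-property} fails---i.e., some singleton block $\{u\}$ has $(\|\pweight{u}{}\|_1,\|\qweight{u}{}\|_1)\in\{(1,1),(2,0)\text{ with }\Omega_{ij}\neq 0\}$---then $\chi(F,\pi)=0$, because the corresponding singleton factor $\E[\fourier{\pweight{u}{}}{\qweight{u}{}}{\serv{A}}]$ vanishes by the divergence-free and normalization assumptions \eqref{eq:poly-nonlinearity} on $\nonlin_{1:\order}$. Similarly, if the \ref{original-leaf-property} fails---some leaf $v$ with $\|\qweight{v}{}\|_1\geq 1$ is a singleton block---then $\beta(F,\pi)=0$ because $\E[\hermite{\qweight{v}{}}(\serv{Z})]=0$. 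These two vanishing arguments, together with the observation that $\valid(F,\pi)$ is precisely the \ref{strong-sibling-property}, are what allow the passage to the formula with $\relev$. Without them your argument does not establish the lemma as stated.
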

\begin{proof} The proof of this formula is provided in \appref{expectation-formula}. 
\end{proof}
\begin{definition}[Relevant Configurations] \label{def:relevant-config} A decorated $\order$-tree  $F = (V, E, \height{\cdot}, \pweight{\cdot}{}, \qweight{\cdot}{})$ and a partition $\pi = \{B_1, B_2, \dotsc, B_{|\pi|}\}$ form a relevant configuration if they satisfy the following properties:\footnote{Some of these properties have a strong/weak qualifier since we will introduce weaker or stronger versions of these properties later.}
\begin{enumerate}
     \myitem{\texttt{Strong Sibling Property}}\label{strong-sibling-property}: For vertices $u, v \in V \backslash \{0\}$ that are siblings in the forest, we have $\pi(u) \neq \pi(v)$.
    \myitem{\texttt{Weak Forbidden Weights Property}}\label{weak-forbidden-weights-property}: There are no vertices $u \in V \backslash \{0\}$ with $|B_{\pi(u)}| = 1$  such that: 
    \begin{enumerate}
        \item $\|\pweight{u}{}\|_1 = 1, \|\qweight{u}{}\|_1 = 1$ or,
        \item $\pweight{u}{} = 2 e_i, \qweight{u}{} = 0$ for some $i \in [\order]$, where $e_{1:\order}$ denote the standard basis vectors in $\R^{\order}$. 
        \item $\pweight{u}{}= e_i + e_j, \qweight{u}{} = 0$ for some $i,j \in [\order], \; i \neq j$ such that $\Omega_{ij} \neq 0$. Here, $e_{1:\order}$ denote the standard basis vectors in $\R^{\order}$ and $\Omega \in \R^{\order \times \order}$ is the limiting covariance matrix corresponding to the semi-random ensemble $\mM_{1:k}$ (cf. \defref{semirandom}). 
    \end{enumerate} 
    \myitem{\texttt{Leaf Property}}\label{original-leaf-property}: There are no leaf vertices $v \in \leaves{F}$ with $\|\qweight{v}{}\|_1 \geq 1$ and $|B_{\pi(v)}| = 1$.
    \myitem{\texttt{Parity Property}}\label{parity-property}: For each block $B$ of the partition $\pi$, the sum:
    \begin{align*}
       \|\qweight{0}{}\|_1 \cdot \vone_{0 \in B} + \left(\sum_{u \in B \backslash ( \leaves{F} \cup \{0\})} \|\pweight{u}{}\|_1 + \|\qweight{u}{}\|_1 \right) + \left( \sum_{v \in B \cap \leaves{F}} \|\pweight{v}{}\|_1 \right) 
    \end{align*}
    has even parity. In the above display $\vone_{0 \in B}$ is the indicator which is $1$ iff the root vertex $0$ lies in the block $B$ and is zero otherwise. 
\end{enumerate}
\end{definition}
\subsubsection{Improved Estimates on Polynomials Associated with a Configuration}

In light of \lemref{expectation-formula}, in order to prove \thref{moments} it suffices to show that for any \emph{relevant configuration} (\defref{relevant-config}):
\begin{align*}
    \sum_{\ell \in \colorings{}{\pi}} \frac{1}{\dim} \prod_{i=1}^\order \prod_{\substack{e \in E \\ e = u \rightarrow v}} (\mPsi_i)_{\ell_u, \ell_v}^{\pweight{v}{i}} \ll 1.
\end{align*}
Hence, we develop estimates on polynomials $\Gamma(\mPsi_{1:\order}; F,\pi)$ associated with a configuration $(F,\pi)$, defined as follows:

\begin{align} \label{eq:invariant-polynomial}
    \Gamma(\mPsi_{1:\order}; F, \pi) \explain{def}{=} \sum_{\ell \in \colorings{}{\pi}} \gamma(\mPsi_{1:\order}; F, \ell), \quad \gamma(\mPsi_{1:\order}; F, \ell) \explain{def}{=} \frac{1}{\dim} \prod_{i=1}^\order \prod_{\substack{e \in E \\ e = u \rightarrow v}} (\mPsi_i)_{\ell_u, \ell_v}^{\pweight{v}{i}}.
\end{align}
A simple estimate on $|\Gamma(\mPsi; F,\pi)|$ is as follows:
\begin{align} \label{eq:naive-estimate}
    |\Gamma(\mPsi_{1:\order}; F,\pi)| & \explain{(a)}{\leq} \sum_{\substack{\ell \in \colorings{}{\pi}}}  |\gamma(\mPsi; F,\ell)| \leq |\colorings{}{\pi}| \cdot \dim^{-1} \cdot \max_{\ell \in \colorings{}{\pi}} \gamma(\mPsi; F,\ell)   \explain{(b)}{\leq} \dim^{|\pi|-1-\frac{1}{2} \sum_{v \in V \backslash \{0\}} \|\pweight{v}{}\|_1 + \epsilon}.
\end{align}
In the above display, the step (a) uses the triangle inequality whereas step (b) uses the fact that $|\colorings{}{\pi}| \asymp \dim^{\pi}$ and the assumption that $\max_{i \in [\order]}\|\mPsi_i\|_{\infty} \lesssim \dim^{-1/2 + \epsilon}$. However, for many relevant configurations the naive estimate in \eqref{eq:naive-estimate} is insufficient to even obtain the weaker conclusion that $|\Gamma(\mPsi_{1:\order}; F,\pi)| \lesssim 1$. We refer the reader to our prior work \citep[Section 5.1.4]{dudeja2022universality} for a simple example illustrating this failure (in the situation when $\order = 1$). The key limitation of the simple estimate of \eqref{eq:naive-estimate} is the use of the triangle inequality in step (a). Many decorated $\order$-trees have certain structures called nullifying edges (introduced below) which allow us to use the determinstic constraints:
\begin{align} \label{eq:recall-semi-random}
    \max_{i,j \in [\order]} \| \mPsi_i \mPsi_j \tran - \Omega_{ij} \mI_{\dim} \|_\infty \lesssim \dim^{-1/2 + \epsilon} \; \forall \; \epsilon > 0.
\end{align}
to improve on the naive estimate of \eqref{eq:naive-estimate}. Recall that these constraints are satisfied by the matrices $\mPsi_{1:k}$ used to construct the semi-random ensemble $\mM_{1:k}$ (\defref{semirandom}). 

\begin{definition}[Nullifying Leaves and Edges]\label{def:nullifying}A pair of edges $u \rightarrow v$ and $u^\prime \rightarrow v^\prime$ is a pair of nullifying edges for a configuration $(F,\pi)$ with $F = (V, E, \height{\cdot}, \pweight{\cdot}{}, \qweight{\cdot}{})$ and $\pi = \{B_1, B_2, \dotsc, B_{|\pi|}\}$ if:
\begin{enumerate}
    \item $v\neq v^\prime$ and $v,v^\prime$ are leaves in F.
    \item $\|\pweight{v}{}\|_1 = \|\pweight{v^\prime}{}\|_1 = 1$,
    \item $B_{\pi(v)} = B_{\pi(v^\prime)} = \{v,v^\prime\}$, 
    \item $\pi(u) \neq \pi(u^\prime)$.
\end{enumerate}
In this situation, $v, v^\prime$ are referred to as a pair of nullifying leaves and the set of all nullifying leaves of a configuration $(T,\pi)$ is denoted by $\nullleaves{T,\pi}$. Note that $\nullleaves{T,\pi}$ is always even (since nullifying leaves occur in pairs) and the number of nullifying edges in a configuration is given by $|\nullleaves{T,\pi}|/2$.
\end{definition}
In order to illustrate the significance of nullifying edges consider a pair of nullifying edges $u \rightarrow v, u^\prime \rightarrow v^\prime$. Since $\|\pweight{v}{}\|_1 = \|\pweight{v^\prime}{}\|_1 = 1$ it must be that $\pweight{v}{} = e_i, \; \pweight{v^\prime}{} = e_j$ for some $i,j \in [\order]$ (where $e_{1:\order}$ denote the standard basis of $\R^\order$).  Summing over the possible colors for $v,v^\prime$ in \eqref{eq:invariant-polynomial} and noting that $\ell_u \neq \ell_u^\prime$ (cf. \defref{nullifying}) yields the expression
\begin{align*}
    \Bigg|\sum_{\substack{\ell_v, \ell_{v^\prime} \in [\dim] \\ \ell_{v} = \ell_{v^\prime}}}  (\Psi_i)_{\ell_u\ell_v} (\Psi_j)_{\ell_{u^\prime} \ell_{v^\prime}} \Bigg|  = |(\mPsi_i \mPsi_j \tran)_{\ell_{u} \ell_{u^\prime}}| \explain{\eqref{eq:recall-semi-random}}{\lesssim} \dim^{-1/2+\epsilon}.
\end{align*}
The estimate above is an improvement of the naive estimate obtained by the triangle inequality and the assumption $\|\mPsi\|_{\infty} \lesssim \dim^{-1/2+\epsilon}$ in \eqref{eq:naive-estimate}:
\begin{align*}
   \Bigg|\sum_{\substack{\ell_v, \ell_{v^\prime} \in [\dim] \\ \ell_{v} = \ell_{v^\prime}}}  (\Psi_i)_{\ell_u\ell_v} (\Psi_j)_{\ell_{u^\prime} \ell_{v^\prime}} \Bigg| \leq \sum_{\substack{\ell_v, \ell_{v^\prime} \in [\dim] \\ \ell_{v} = \ell_{v^\prime}}}  |(\Psi_i)_{\ell_u\ell_v}|  \cdot |(\Psi_j)_{\ell_{u^\prime} \ell_{v^\prime}}| \lesssim \dim \cdot \dim^{-1 + \epsilon} \lesssim \dim^{\epsilon}.
\end{align*}
Hence, if a configuration $(F,\pi)$ has $|\nullleaves{F,\pi}|$ nullifying leaves (or $|\nullleaves{F,\pi}|/2$ pairs of nullifying edges), one can expect to improve upon the naive estimate in \eqref{eq:naive-estimate} by a factor of $\dim^{|\nullleaves{F,\pi}|/4}$. The following proposition formalizes this intuition and is our key estimate on polynomials $\Gamma({\mPsi}_{1:\order}; F, \pi)$ associated with a configuration $(F,\pi)$ (cf. \eqref{eq:invariant-polynomial}).

\begin{proposition}[Improved Estimate] \label{prop:key-estimate} Consider a configuration $(F,\pi)$ with $F = (V, E, \height{\cdot}, \pweight{\cdot}{}, \qweight{\cdot}{})$ and $\pi = \{B_1, B_2, \dotsc, B_{|\pi|}\}$. We have,
\begin{align*}
   |\Gamma({\mPsi}_{1:\order}; F, \pi)| \explain{def}{=}  \left|   \sum_{\ell \in \colorings{}{\pi}} \frac{1}{\dim} \prod_{\substack{e \in E \\ e = u \rightarrow v}} \prod_{i=1}^\order (\Psi_i)_{\ell_u\ell_v}^{\pweight{v}{i}}  \right| & \lesssim \dim^{-\exponent(F,\pi) + \epsilon} \; \forall \; \epsilon > 0,
\end{align*}
where,
\begin{align*}
    \exponent(F,\pi) \explain{def}{=}  \frac{|\nullleaves{F,\pi}|}{4} + 1 - |\pi| + \frac{1}{2} \sum_{v \in V \backslash \{0\}}  \|\pweight{v}{}\|_1.
\end{align*}
\end{proposition}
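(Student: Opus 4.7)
The plan is to carry out the coloring sum in a carefully chosen order that exploits the nullifying-edge structure, saving a factor $N^{-1/2+\epsilon}$ per nullifying pair relative to the naive bound. This extends the argument used for the $k=1$ memory-free case in \citep{dudeja2022universality} to the present multi-matrix setting.

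First, I would split the vertex set into two parts: the nullifying leaves $\nullleaves{F,\pi}$ (there are $n\explain{def}{=}|\nullleaves{F,\pi}|$ of them, organized into $n/2$ sibling-pair blocks), and the remaining ``outer'' vertices. Write the sum over $\ell\in\colorings{}{\pi}$ as an iterated sum: an \emph{outer} sum over the colors assigned to the $|\pi|-n/2$ outer blocks, followed by $n/2$ \emph{inner} sums, one for each nullifying pair, over the shared color $\ell_v=\ell_{v'}$ assigned to that pair. For the outer sum, the constraint from $\colorings{}{\pi}$ forces distinct colors across blocks, so there are $\lesssim N^{|\pi|-n/2}$ outer colorings; each inner sum ranges over $[N]$ minus the $O(1)$ colors already used by the outer assignment.

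Next, I would estimate a single inner sum. Fix a pair of nullifying edges $u\to v$ and $u'\to v'$ with $\pweight{v}{}=e_i$ and $\pweight{v'}{}=e_j$. The inner sum is
\[
\sum_{\ell_v\in S}(\Psi_i)_{\ell_u,\ell_v}(\Psi_j)_{\ell_{u'},\ell_v}
=(\mPsi_i\mPsi_j\tran)_{\ell_u,\ell_{u'}}-\sum_{\ell_v\notin S}(\Psi_i)_{\ell_u,\ell_v}(\Psi_j)_{\ell_{u'},\ell_v},
\]
where $S\subset[N]$ is the set of colors not already used by outer blocks, so $|[N]\setminus S|=O(1)$. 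The correction term is $\lesssim N^{-1+\epsilon}$ by $\|\mPsi_i\|_\infty,\|\mPsi_j\|_\infty\lesssim N^{-1/2+\epsilon}$, hence negligible. For the main term, \defref{nullifying} guarantees $\pi(u)\neq\pi(u')$, hence $\ell_u\neq\ell_{u'}$, so the semi-random bound $\|\mPsi_i\mPsi_j\tran-\Omega_{ij}\mI_N\|_\infty\lesssim N^{-1/2+\epsilon}$ from \defref{semirandom} gives $|(\mPsi_i\mPsi_j\tran)_{\ell_u,\ell_{u'}}|\lesssim N^{-1/2+\epsilon}$. Thus each inner sum is $\lesssim N^{-1/2+\epsilon}$, uniformly over outer colorings.

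Finally, I would assemble the estimates. Writing $\sum_v\|\pweight{v}{}\|_1=n+\sum_{v\text{ non-null}}\|\pweight{v}{}\|_1$ and bounding the matrix entries at non-nullifying edges entrywise by $N^{-1/2+\epsilon}$, the total bound is
\[
|\Gamma(\mPsi_{1:\order};F,\pi)|\lesssim N^{-1}\cdot N^{|\pi|-n/2}\cdot(N^{-1/2+\epsilon})^{n/2}\cdot(N^{-1/2+\epsilon})^{\sum_v\|\pweight{v}{}\|_1-n},
\]
and the arithmetic of the exponents collapses exactly to $-\exponent(F,\pi)+\epsilon$ as claimed. The main technical obstacle is managing the correction terms and the $O(1)$ ``forbidden'' colors $[N]\setminus S$ uniformly across nullifying pairs---in particular, justifying that processing the $n/2$ pairs sequentially (rather than simultaneously) produces only acceptable $\epsilon$-losses; this is essentially bookkeeping but requires care because the sets $S$ depend on the outer coloring and because the remaining tree structure, after contraction of nullifying pairs, is no longer a tree in any clean combinatorial sense. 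Induction on $n/2$, peeling off one nullifying pair at a time, is the cleanest way to organize this bookkeeping.
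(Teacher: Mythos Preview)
Your proposal is correct and follows the same overall decomposition as the paper: split the coloring sum into an outer sum over the $|\pi|-n/2$ non-nullifying blocks and an inner sum over the $n/2$ nullifying-pair blocks, then exploit the off-diagonal bound on $\mPsi_i\mPsi_j\tran$ to gain $N^{-1/2+\epsilon}$ per pair. The final exponent arithmetic you wrote is exactly what the paper obtains.

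The one genuine difference is in how the inner sum is controlled. You propose to process the $n/2$ nullifying pairs sequentially by induction, absorbing the distinctness constraints among the $b$-colors (and between $b$'s and $a$'s) into correction terms at each step. The paper instead handles all $r=n/2$ inner colors simultaneously by invoking a prepackaged combinatorial estimate (\factref{mobius}, which is Lemma~8 of \citep{dudeja2022universality}): for vectors $\iter{\vu}{1},\dotsc,\iter{\vu}{r}$ with $U_\infty=\max_i\|\iter{\vu}{i}\|_\infty$ and $\overline{U}=\max_i|\sum_j\iter{u}{i}_j|$, one has $\big|\sum_{\ell_{1:r}\text{ distinct}}\prod_i\iter{u}{i}_{\ell_i}\big|\leq r^{2r}\max(\sqrt{N}U_\infty,\min(\overline{U},NU_\infty))^r$. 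Applied with $\iter{u}{i}_b=(\Psi_{t_i})_{a_{\pi(u_i)},b}(\Psi_{t_i'})_{a_{\pi(u_i')},b}$, this immediately gives the $N^{-r/2+\epsilon}$ bound on the inner sum. Your induction, carried out properly, would have to track how the correction term $-\sum_{j<r}f_r(b_j)$ modifies the remaining $(r-1)$-fold sum (it replaces $f_j$ by $f_jf_r$, which has smaller $U_\infty$ and $\overline{U}$), and this recursion is precisely what the M\"obius-type lemma encapsulates. So your route works, but the ``bookkeeping'' you flag is a bit more than bookkeeping---it amounts to reproving \factref{mobius}. The paper's use of the lemma as a black box is cleaner; your approach is more self-contained.
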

The above result is a natural generalization of an estimate  \citep[Proposition 2]{dudeja2022universality} obtained in our prior work, which provided an upper bound on $\Gamma(\mPsi_{1:k}; F,\pi)$ in the special situation when $F$ is a decorated tree with order $\order = 1$. The proof of the above result is provided in \appref{key-estimate}. 
\subsubsection{Decomposition into Simple Configurations}
A direct application of the improved estimate in \propref{key-estimate} is often still not enough to show that the polynomial $\Gamma(\mPsi_{1:k}; F,\pi)$ (recall \eqref{eq:invariant-polynomial}) associated with a relevant configuration $(F,\pi)$ satisfies the desired estimate $\Gamma(\mPsi_{1:k}; F,\pi) \ll 1$ required to prove \thref{moments}. We again refer the reader to our prior work \citep[Section 5.1.5]{dudeja2022universality} for a simple example illustrating the failure of the improved estimate in \propref{key-estimate} in the case when $F$ is a tree of order $\order = 1$. However, it turns out that one can address this issue by simplifying the polynomial $\Gamma(\mPsi_{1:k}; F,\pi)$ using the deterministic constraints:
\begin{align} \label{eq:simplifying-constraint}
    (\mPsi_i \mPsi_j \tran)_{\ell \ell} & = \Omega_{ij} \; \forall \; \ell \in [\dim], \; i,j \in [\order].
\end{align}
before applying the improved estimate of \propref{key-estimate}. The constraints \eqref{eq:simplifying-constraint} were imposed on the matrix $\mPsi_{1:k}$ in the statement of \thref{moments}. In the following definitions, we introduce two structures, which, when present in a configuration $(F,\pi)$, allow one to leverage the constraints \eqref{eq:simplifying-constraint} to simplify the polynomial $\Gamma(\mPsi_{1:k}; F,\pi)$ associated with $(F,\pi)$. 

\begin{definition}[Removable Edge] \label{def:removable-edge}  An edge $u \rightarrow v$ is called a removable edge for configuration $(F,\pi)$ with $F = (V,E, \height{\cdot}, \pweight{\cdot}{}, \qweight{\cdot}{})$  and $\pi=\{B_1, B_2, \dotsc, B_{|\pi|}\}$ if:
\begin{enumerate}
    \item $v \in \leaves{F}$.
    \item $\pweight{v}{} = e_i + e_j$ for some $i,j \in [\order]$ with $i \neq j$ and $\Omega_{ij} = 0$. Here $e_{1:\order}$ denote the standard basis vectors in $\R^{\order}$ and $\Omega$ is the limiting covariance matrix corresponding to the semi-random ensemble $\mM_{1:k}$ (cf. \defref{semirandom}). 
    \item $|B_{\pi(v)}| =1$.
\end{enumerate}
\end{definition}

\begin{definition}[Removable Edge Pair] \label{def:removable-pair}  A pair of edges $u \rightarrow v, u^\prime \rightarrow v^\prime$ is called a removable edge pair for configuration $(F,\pi)$ with $F = (V,E, \height{\cdot}, \pweight{\cdot}{}, \qweight{\cdot}{})$  and $\pi=\{B_1, B_2, \dotsc, B_{|\pi|}\}$ if:
\begin{enumerate}
    \item $v,v^\prime \in \leaves{F}$, $v \neq v^\prime$.
    \item $\|\pweight{v}{}\|_1 = \|\pweight{v^\prime}{}\|_1 = 1$.
    \item $B_{\pi(v)} = B_{\pi(v^\prime)} = \{v,v^\prime\}$.
    \item $\pi(u) = \pi(u^\prime)$.
\end{enumerate}
\end{definition}
The definition of removable edge pair is a natural generalization of a notion \citep[Definition 11]{dudeja2022universality} introduced in our prior work to $\order$-trees of arbitrary order $\order \geq 1$ (our prior work considered the case $\order = 1$). On the other hand, the notion of a removable edge does not have a counter part in decorated trees of $\order = 1$ and is important only when order of tree is at least $2$ ($\order \geq 2$). 

In order to understand how removable edges and removable edge pairs can be used to simplify the polynomial $\Gamma(\mPsi_{1:k} ; F,\pi)$ consider the situation when a configuration $(F,\pi)$ has a removable edge pair $u \rightarrow v, u^\prime \rightarrow v^\prime$ with $\pweight{v}{} = e_i$ and $\pweight{v^\prime}{} = e_j$ for some $i,j \in [k]$ (here $e_{1:k}$ denote the standard basis of $\R^k$). Observe that the evaluation of $\Gamma(\mPsi; F,\pi)$ (cf. \eqref{eq:invariant-polynomial}) involves summing over the possible colors for $v,v^\prime$, which yields an expression of the form:
\begin{align*}
   \sum_{\substack{\ell_v, \ell_{v^\prime} \in [\dim] \\ \ell_{v} = \ell_{v^\prime}}}  (\Psi_i)_{\ell_u\ell_v} (\Psi_j)_{\ell_{u^\prime} \ell_{v^\prime}}   = (\mPsi_i \mPsi_j \tran)_{\ell_{u} \ell_{u^\prime}} \explain{(d)}{=} \Omega_{ij},
\end{align*}
where the equality (d) follows from \eqref{eq:simplifying-constraint} and the fact that $\ell_{u} = \ell_{u^\prime}$ for a pair of removable edges $u \rightarrow v, u^\prime \rightarrow v^\prime$. Since the sum over the possible colors for $v,v^\prime$ can be evaluated explicitly, the block $\{v, v^\prime\}$ can be effectively removed from the configuration $(F,\pi)$, thus simplifying its structure. A similar simplification occurs if a removable edge (\defref{removable-edge}) is present in the configuration. By eliminating every removable edge and every pair of removable edges in a relevant configuration $(F,\pi)$ one can express the corresponding polynomial $\Gamma(\mPsi; F, \pi)$ as a linear combination of polynomials associated with \emph{simple configurations}, which we introduce next.   

\begin{definition}\label{def:simple-configuration} A decorated $\order$-tree $F = (V, E, \height{\cdot}, \pweight{\cdot}{}, \qweight{\cdot}{})$ and a partition $\pi = \{B_1, B_2, \dotsc, B_{|\pi|}\}$ of $V$ form a simple configuration if they satisfy:
\begin{enumerate}
    \myitem{\texttt{Modified Leaf Property}}\label{modified-leaf-property}: Each leaf $v \in \leaves{F}$ with $|B_{\pi(v)}| = 1$ satisfies $\|\pweight{v}{}\|_1 \geq 4$.
   \myitem{\texttt{Paired Leaf Property}}\label{paired-leaf-property}: Any pair of distinct leaves $v, v^\prime \in \leaves{F}$  with $\|\pweight{v}{}\|_1 = \|\pweight{v^\prime}{}\|_1 = 1$ and $B_{\pi(v)} = B_{\pi(v^\prime)} = \{v,v^\prime\}$ satisfies $\pi(u) \neq \pi(u^\prime)$, where $u,u^\prime$ are the parents of $v,v^\prime$ respectively. 
    \myitem{\texttt{Strong Forbidden Weights Property}}\label{strong-forbidden-weights-property}: There are no vertices $u \in V \backslash \{0\}$ such that $|B_{\pi(u)}| = 1, \|\pweight{u}{}\|_1 = 1, \|\qweight{u}{}\|_1 = 1$ or $|B_{\pi(u)}| = 1, \|\pweight{u}{}\|_1 = 2, \|\qweight{u}{}\|_1 = 0$.
    \item{\ref{parity-property}} as described in \defref{relevant-config}. 
\end{enumerate}
\end{definition}

Simple configurations are maximally ``simplified'' in the sense that they do not have any removable edges or removable edge pairs (which could have been used to further simplify the structure of the configuration). Indeed, the \ref{modified-leaf-property} rules out the presence of a removable edge and the \ref{paired-leaf-property} rules out the presence of a removable edge pair. The following is the formal statement of our decomposition result, which shows that for any relevant configuration $(F,\pi)$ (\defref{relevant-config}), the polynomial $\Gamma(\mPsi_{1:\order}; F,\pi)$ (cf. \eqref{eq:invariant-polynomial}) associated with the relevant configuration can be expressed as a linear combination of a few (independent of dimension $\dim$) simple configurations (cf. \defref{simple-configuration}). 

\begin{proposition}[Decomposition Result]\label{prop:decomposition} For any relevant configuration $(F_0, \pi_0)$, there exists a collection $\calS$ of simple configurations with $|\calS| \leq |\pi| !$ and a map $\xi: \calS \mapsto [-1,1]$ such that:
\begin{align*}
    \Gamma(\mPsi_{1:\order}; F_0, \pi_0) & = \sum_{(F,\pi) \in \calS} \xi(F,\pi) \cdot  \Gamma(\mPsi_{1:\order}; F, \pi)
\end{align*}
\end{proposition}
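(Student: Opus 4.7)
The plan is to induct on the lexicographic pair $(|\pi_0|, |E(F_0)|)$, repeatedly applying the balanced-ensemble identity $(\mPsi_i \mPsi_j\tran)_{\ell\ell} = \Omega_{ij}$ guaranteed by hypothesis \eqref{eq:normalized-semirandom} to contract every removable structure until only simple configurations remain. If $(F_0, \pi_0)$ is already simple, take $\calS = \{(F_0, \pi_0)\}$ with $\xi \equiv 1$. Otherwise, comparing the defining properties of a relevant versus a simple configuration shows that the only admissible obstructions to simplicity are removable edges and removable edge pairs: on any singleton-block leaf $v$, the Parity Property eliminates $\|\pweight{v}{}\|_1 \in \{1, 3\}$, and the Weak Forbidden Weights Property forces the remaining weight-$2$ case into the form $\pweight{v}{} = e_i + e_j$ with $\Omega_{ij}=0$ (a removable edge); any surviving nullifying leaf pair with a common parent block is precisely a removable edge pair.

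The key decomposition step, applied to a removable edge pair $u \to v,\; u' \to v'$ with $\pweight{v}{}=e_i$ and $\pweight{v'}{}=e_j$, isolates inside $\Gamma(\mPsi_{1:\order}; F_0, \pi_0)$ the sum over the shared colour $c = \ell_v = \ell_{v'}$. Injectivity of colourings in $\colorings{}{\pi_0}$ constrains $c$ to avoid the colours of all other blocks, so by inclusion--exclusion the restricted sum equals $(\mPsi_i \mPsi_j\tran)_{\ell_u, \ell_{u'}} - \sum_{B \neq \{v,v'\}} (\Psi_i)_{\ell_u, \ell_B}(\Psi_j)_{\ell_{u'}, \ell_B}$. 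The first term collapses to $\Omega_{ij}$ via the balanced-ensemble identity and the fact that $\pi(u)=\pi(u')$ forces $\ell_u=\ell_{u'}$; this yields
\begin{align*}
\Gamma(\mPsi_{1:\order}; F_0, \pi_0) \;=\; \Omega_{ij}\,\Gamma(\mPsi_{1:\order}; F_1, \pi_1) \;-\; \sum_B \Gamma(\mPsi_{1:\order}; F_0, \pi_0^{(B)}),
\end{align*}
where $(F_1, \pi_1)$ deletes $v, v'$ along with their block, and each $\pi_0^{(B)}$ is the coarsening that merges $\{v, v'\}$ into a surviving block $B$. Removable edges are handled by the same identity applied to a single vertex, producing a vanishing main term (as $\Omega_{ij}=0$) and correction terms only. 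Every sub-configuration has strictly smaller $|\pi|$, so the induction hypothesis applies to the at most $|\pi_0|$ children, and concatenating the resulting decompositions gives $|\calS| \leq |\pi_0| \cdot (|\pi_0| - 1)! = |\pi_0|!$.

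The main obstacle is twofold. First, before invoking the induction hypothesis one must verify that the sub-configurations remain relevant. Deleting $v, v'$ shrinks $\qweight{u}{}$ and $\qweight{u'}{}$ by $e_i$ and $e_j$, possibly pushing $u$ or $u'$ into a Weak Forbidden Weight slot or converting them into new leaves in violation of the Leaf Property; the merges used in the correction terms can analogously change block parities and create unexpected nullifying pairs. A case analysis keyed on whether $u, u'$ had other children and on the parity contribution removed from each affected block resolves this --- this bookkeeping is the most delicate part of the proof and generalizes the $\order = 1$ treatment of \citep[Proposition 3]{dudeja2022universality} to arbitrary $\order$. Second, the scalar $\Omega_{ij}$ arising in main terms need not lie in $[-1, 1]$; following the same prior work, this factor is absorbed into the structural data of the simple configurations produced by the recursive decomposition of $(F_1, \pi_1)$, so that the explicit coefficient $\xi$ retains only the $\pm 1$ inclusion--exclusion sign.
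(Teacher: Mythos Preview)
Your overall strategy---iteratively contracting removable structures via the balanced-ensemble identity until only simple configurations remain---is exactly the paper's approach. But the invariant you propose to carry through the induction is too strong: the sub-configurations produced by your correction terms need \emph{not} remain relevant. When you merge the block $\{v,v'\}$ into another block $B$, you can place $v$ in the same block as one of its siblings (any other child of $u$ that already lived in $B$), which violates the \ref{strong-sibling-property}. The paper handles this by introducing an intermediate notion of \emph{semi-simple} configurations (\defref{semi-simple-config}) that weakens Strong Sibling to the \ref{weak-sibling-property}; it then shows (i) relevant $\Rightarrow$ semi-simple, (ii) semi-simple with no removable structures $\Rightarrow$ simple, and (iii) each elimination step preserves semi-simplicity. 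Your case analysis keyed on whether $u,u'$ have other children and on parity contributions is in the right spirit for verifying semi-simplicity, but as written you will get stuck trying to re-establish relevance.

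Your second stated obstacle is not one: under the standing hypothesis \eqref{eq:normalized-semirandom}, $\Omega$ is a positive semidefinite matrix with unit diagonal, so $|\Omega_{ij}| \leq \sqrt{\Omega_{ii}\Omega_{jj}} = 1$ always. The coefficient $\Omega_{ij}$ therefore already lies in $[-1,1]$, and no absorption into the ``structural data'' is needed---the paper simply records $c = \Omega_{i_\star i_\star'} \in [-1,1]$ directly in \lemref{eliminate-pair}. Finally, your lexicographic induction on $(|\pi_0|, |E(F_0)|)$ is harmless but unnecessary: every child configuration, including the main term $(F_1,\pi_1)$, has $|\pi|$ strictly decreased by one, so induction on $|\pi_0|$ alone suffices and immediately gives the $|\pi_0|!$ bound.
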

The proof of this result is provided in \appref{decomposition}. 

\subsubsection{Universality of Simple Configurations}
The last ingredient in the proof of \thref{moments} is the following result, which shows that the limiting behavior of the polynomial $\Gamma(\mPsi_{1:\order}; F,\pi)$ (cf. \eqref{eq:invariant-polynomial}) associated with a simple configuration (\defref{simple-configuration}) $(F,\pi)$ is identical for any collection of matrices $\mPsi_{1:k}$ that satisfy the requirements of \thref{moments}. 
\begin{proposition}\label{prop:universality-simple} For any simple configuration $(F,\pi)$, 
$\lim_{\dim \rightarrow \infty} \Gamma(\mPsi_{1:\order}; F,\pi)  = 0$. 
\end{proposition}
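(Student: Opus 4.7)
The strategy is to invoke Proposition~\ref{prop:key-estimate}, which gives
\begin{align*}
|\Gamma(\mPsi_{1:\order}; F,\pi)| \lesssim \dim^{-\exponent(F,\pi)+\epsilon} \quad \forall \; \epsilon>0,
\end{align*}
and then to establish the strict inequality $\exponent(F,\pi)>0$ for every simple configuration. Indeed, once we know $\exponent(F,\pi)\geq c$ for some constant $c>0$ independent of $\dim$, choosing $\epsilon=c/2$ yields $|\Gamma(\mPsi_{1:\order}; F,\pi)|\lesssim \dim^{-c/2}\to 0$. The hard work is therefore purely combinatorial: we must show that the four defining properties of a simple configuration (\ref{modified-leaf-property}, \ref{paired-leaf-property}, \ref{strong-forbidden-weights-property}, \ref{parity-property}), together with \ref{strong-sibling-property} (which is inherited from the relevant configurations produced by Proposition~\ref{prop:decomposition}), force $\exponent(F,\pi)$ to be bounded below by a positive constant.

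I would proceed by a block-by-block decomposition of $\exponent$. Denoting the block containing the root by $B_1\ni 0$ and using the identity $1-|\pi|=-\sum_{B\neq B_1}1$, we can rewrite
\begin{align*}
\exponent(F,\pi)
=\frac{|\nullleaves{F,\pi}|}{4}+\frac{1}{2}\sum_{v\in B_1\setminus\{0\}}\|\pweight{v}{}\|_1+\sum_{B\,:\,0\notin B}w(B),\qquad
w(B):=-1+\tfrac{1}{2}\sum_{v\in B}\|\pweight{v}{}\|_1.
\end{align*}
The first two summands are manifestly non-negative, so it remains to analyze $w(B)$ together with the contributions $|\nullleaves{F,\pi}|/4$ for each non-root block. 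The case analysis runs as follows: (i) a singleton leaf block must have $\|\pweight{v}{}\|_1\geq 4$ by \ref{modified-leaf-property}, giving $w(B)\geq 1$; (ii) a block $\{v,v'\}$ consisting of two leaves with $\|\pweight{v}{}\|_1=\|\pweight{v'}{}\|_1=1$ is, by \ref{paired-leaf-property}, automatically a pair of nullifying leaves (\defref{nullifying}), so it contributes $-1+1=0$ from $w(B)$ but $+\tfrac{1}{2}$ from $|\nullleaves{F,\pi}|/4$; (iii) any other non-root block of size $s\geq 2$ satisfies $w(B)\geq -1+s/2\geq 0$, with strict inequality whenever $s\geq 3$ or some vertex has $\|\pweight{v}{}\|_1\geq 2$; (iv) a singleton non-leaf non-root block $\{v\}$ with $\|\pweight{v}{}\|_1\geq 2$ gives $w(B)\geq 0$.

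The only potentially negative contribution, and thus the main obstacle, is a singleton non-leaf non-root block $\{v\}$ with $\|\pweight{v}{}\|_1=1$, which yields $w(B)=-\tfrac{1}{2}$. For such a vertex, \ref{strong-forbidden-weights-property} rules out $\|\qweight{v}{}\|_1=1$ and \ref{parity-property} forces $\|\pweight{v}{}\|_1+\|\qweight{v}{}\|_1$ to be even, so $\|\qweight{v}{}\|_1\geq 3$. The plan is to pay this debt by an amortization over the subtree rooted at $v$: by the conservation equation \eqref{eq:conservation-eq}, the children of $v$ carry a total $\pweight{}{}$-weight of $\|\qweight{v}{}\|_1\geq 3$, and by \ref{strong-sibling-property} those children lie in distinct blocks. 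Each child is either a singleton-leaf block (surplus $\geq 1$ by \ref{modified-leaf-property}), a member of a nullifying leaf pair (surplus $\tfrac{1}{2}$ via $|\nullleaves{F,\pi}|/4$), a member of a larger non-singleton block (surplus $\geq 0$ by case (iii)), or itself a singleton non-leaf of the same problematic type, to which the argument recurses. I would formalize this as a token/potential argument: charge $\tfrac{1}{2}$ units of debt to each problematic singleton non-leaf vertex and redistribute $\tfrac{1}{2}\cdot\|\pweight{v}{}\|_1$ of credit to each non-root vertex, showing by bottom-up induction on the tree that every debt is covered by a strictly larger credit stored in descendants or in nullifying leaves beneath the problematic vertex. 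Since $F$ has finite depth and positive total $\qweight{0}{}$-weight at the root, the recursion terminates with a strictly positive residual, yielding $\exponent(F,\pi)\geq c$ for some $c>0$ depending only on $F$ and $\pi$.

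The trickiest step will be making the bottom-up amortization fully rigorous. Two subtleties require care: first, the recursion must terminate correctly at leaves (which is where \ref{modified-leaf-property} and \ref{paired-leaf-property} are used together to rule out the unpayable case of a singleton $\|\pweight{}{}\|_1=1$ leaf); second, one must verify that the charging scheme never double-counts surplus, which is ensured by assigning each unit of credit to the unique parent edge along which it is transmitted. Once these points are in place, the conclusion $\lim_{\dim\to\infty}\Gamma(\mPsi_{1:\order};F,\pi)=0$ follows immediately from Proposition~\ref{prop:key-estimate}.
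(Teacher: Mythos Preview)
Your high-level strategy---apply \propref{key-estimate} and then establish a uniform positive lower bound on $\exponent(F,\pi)$---matches the paper exactly. However, the paper's execution of the combinatorial step is entirely different from your amortization scheme. Rather than analyzing $\exponent(F,\pi)$ directly, the paper observes that $\exponent(F,\pi)$ depends on the decorated $\order$-tree only through the scalars $\|\pweight{v}{}\|_1$, and that the four properties in \defref{simple-configuration} likewise depend only on these norms. This permits an immediate reduction to the order-$1$ case: set $\widetilde{p}(v):=\|\pweight{v}{}\|_1$, $\widetilde{q}(v):=\|\qweight{v}{}\|_1$; the resulting $(\widetilde F,\pi)$ is again simple, has the same $\exponent$ and the same nullifying leaves, and the bound $\exponent\geq 1/4$ is then quoted as a black box from the authors' earlier paper (\factref{graph-theory}). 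The hard combinatorics is outsourced entirely.

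Your direct block-by-block amortization is in effect an attempt to reprove \factref{graph-theory}, and as written it contains a genuine error. You assert that the \ref{strong-sibling-property} is ``inherited from the relevant configurations produced by \propref{decomposition}'', but this is false: the decomposition proceeds by repeatedly merging blocks of $\pi$ (see the constructions of $\pi_i$ in \lemref{eliminate-removable-edge} and \lemref{eliminate-pair}), and once two siblings land in a merged block the \ref{strong-sibling-property} is destroyed. The proofs of those lemmas explicitly verify only the \ref{weak-sibling-property} (part of semi-simplicity, \defref{semi-simple-config}), and \defref{simple-configuration} itself imposes no sibling hypothesis whatsoever. Since your recursive step relies on ``by \ref{strong-sibling-property} those children lie in distinct blocks'' to route credit without double-counting, the argument has a gap. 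It may be repairable by tracking per-vertex credits $\tfrac12\|\pweight{v}{}\|_1$ (which sum correctly regardless of how children are grouped), but you would then be re-deriving the non-trivial content of \factref{graph-theory} from scratch, and the ``trickiest step'' you already flag would become substantially harder than your sketch suggests.
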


The proof of \propref{universality-simple} relies on the following graph-theoretic result on the structure of simple configurations from our prior work \citep{dudeja2022universality} along with the improved estimate on polynomials associated with a configuration given in \propref{key-estimate}.

\begin{fact}[{\citep[Proposition 5]{dudeja2022universality}}]\label{fact:graph-theory} Let $F = (V, E, \height{\cdot}, \pweight{\cdot}{}, \qweight{\cdot}{})$ be a decorated $k$-tree with order $k=1$ and let $\pi$ be a partition of its vertex set $V$ such that $(F,\pi)$ form a simple configuration. Then, we have,
\begin{align*}
    \frac{|\nullleaves{F,\pi}|}{4} + 1 - |\pi| + \frac{1}{2} \sum_{v \in V \backslash \{0\}}  \pweight{v}{} \geq \frac{1}{4}.
\end{align*}
\end{fact}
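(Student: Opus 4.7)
My plan is to establish the inequality by a block-by-block charging argument. Rewrite
\begin{align*}
\exponent(F,\pi) \;=\; 1 + \sum_{B \in \pi} c(B), \qquad c(B) := -1 + \tfrac{1}{2}\sum_{v \in B \setminus \{0\}} \pweight{v}{} + \tfrac{1}{4}\,|B \cap \nullleaves{F,\pi}|,
\end{align*}
so that the claim reduces to showing $\sum_{B \in \pi} c(B) \geq -\tfrac{3}{4}$. The strategy is to exhaustively classify each block $B$ using the four defining properties of a simple configuration (Modified Leaf, Paired Leaf, Strong Forbidden Weights, Parity), identify the (few) cases in which $c(B)$ is negative, and then offset those deficits using the tree structure.

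The block classification splits into a handful of cases. A singleton leaf block $\{v\}$ with $v \neq 0$ has $\pweight{v}{} \geq 4$ by Modified Leaf, forced to be even by Parity, giving $c(B) \geq 1$. A size-$2$ leaf block $\{v_1,v_2\}$ with $\pweight{v_1}{} = \pweight{v_2}{} = 1$ is automatically a nullifying pair: Paired Leaf forces the two parents into distinct blocks, matching \defref{nullifying}, so $c(B) = -1 + 1 + \tfrac{1}{2} = \tfrac{1}{2}$. A singleton non-leaf non-root block $\{v\}$ is constrained by Strong Forbidden Weights ($(\pweight{v}{},\qweight{v}{}) \notin \{(1,1),(2,0)\}$) and Parity ($\pweight{v}{} + \qweight{v}{}$ even); combined with the conservation $\qweight{v}{} \geq 1$ at any non-leaf, the minimum value of $c(B)$ is $-\tfrac{1}{2}$, attained exactly when $\pweight{v}{} = 1$ and $\qweight{v}{} \geq 3$. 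All remaining blocks satisfy $c(B) \geq 0$ via a direct size-and-weight bound (each non-root vertex contributes $\pweight{v}{} \geq 1$, so blocks of size $\geq 2$ not containing the root give $c(B) \geq 0$), the only other genuinely negative contribution coming from the root block itself: a singleton root gives $-1$ (absorbed by the $+1$ offset), and a size-$2$ root block with a partner of $\pweight{} = 1$ gives $-\tfrac{1}{2}$.

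The main obstacle, and the technical heart of the argument, is controlling how the $-\tfrac{1}{2}$ deficits from singleton non-leaf non-root blocks propagate down the tree. To handle these I would run a discharging induction on $|V|$. At a deficit vertex $v$ with singleton block $\{v\}$, $\pweight{v}{} = 1$, the conservation equation forces $\qweight{v}{} \geq 3$, so the children of $v$ carry total $\pweight{}$-mass at least $3$ distributed across other blocks; a case check using Parity and the absence of forbidden weights further down shows that the children's blocks produce a surplus that exceeds $v$'s deficit by at least $\tfrac{1}{4}$. I would then prune $v$ together with a carefully chosen collection of descendants to obtain a strictly smaller simple configuration and invoke the inductive hypothesis. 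The delicate point—and where most of the technical effort will go—is verifying that such a pruning preserves all four properties of a simple configuration, particularly Parity, which may require removing balanced pairs of subtrees rather than a single subtree. The base cases will be the minimal simple configurations (a root with a nullifying pair of leaf children, and a root with a single descendant leaf of $\pweight{}\geq 4$), which will be checked by hand to confirm the bound of $\tfrac{1}{4}$ is tight.
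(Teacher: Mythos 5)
The paper does not actually prove this fact: it is imported verbatim from the authors' prior work \citep[Proposition 5]{dudeja2022universality} and used as a black box in the proof of \propref{universality-simple}, so there is no in-paper argument to measure your attempt against. Judged on its own terms, your local accounting is correct and is the right first step. Writing $\exponent(F,\pi) = 1 + \sum_{B\in\pi} c(B)$ and classifying blocks via the four properties of \defref{simple-configuration} isolates exactly the right deficit cases (singleton root: $-1$; size-two root block with a weight-one partner: $-\tfrac12$; singleton non-leaf non-root vertex, where \ref{strong-forbidden-weights-property}, \ref{parity-property} and the conservation equation \eqref{eq:conservation-eq} force $\pweight{v}{}=1$, $\qweight{v}{}\ge 3$: $-\tfrac12$), and the bounds $c(B)\ge 1$ for singleton leaves, $c(B)=\tfrac12$ for nullifying pairs, and $c(B)\ge 0$ for all other non-root blocks check out.

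The gap is that the remaining step --- showing these deficits are always over-compensated with margin $\tfrac14$ --- is the entire content of the proposition, and your plan for it does not yet work. Two concrete problems. First, the claim that a singleton root's $-1$ is ``absorbed by the $+1$ offset'' is false as stated: the target is $\sum_B c(B) \ge -\tfrac34$, so a singleton root by itself already exceeds the budget, and you must exhibit a strict surplus of at least $+\tfrac14$ elsewhere in the tree; nothing in your case list produces it. Second, your proposed base case ``a root with a nullifying pair of leaf children'' is not a simple configuration: two leaf children of the root are siblings sharing the parent $0$, so their parents' blocks coincide, violating the \ref{paired-leaf-property} (and \defref{nullifying} itself requires $\pi(u)\neq\pi(u')$). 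More broadly, the pruning induction is where the difficulty lives: deleting a deficit vertex $v$ with $\pweight{v}{}=1$ together with descendants changes $\qweight{u}{}$ at its parent, breaking \eqref{eq:conservation-eq} and the \ref{parity-property}, and can also strand a surviving leaf whose block-mate was pruned as a singleton weight-one leaf, violating the \ref{modified-leaf-property}. Until you specify a pruning that provably lands back inside the class of simple configurations, or replace the induction by a direct charging of each internal deficit to leaf-level surpluses along the tree, this remains a plan rather than a proof. One structural observation that may help you: since nullifying leaves occur in pairs inside a single block, every $c(B)$ is a half-integer, so the claim $\exponent(F,\pi)\ge\tfrac14$ is equivalent to $\sum_B c(B)\ge -\tfrac12$, which is the cleaner target for the discharging.
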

\propref{universality-simple} follows immediately, given the above fact. 
\begin{proof}[Proof of \propref{universality-simple}] Using the estimate on $\Gamma(\mPsi_{1:\order}; F,\pi)$ stated in \propref{key-estimate} we have,
\begin{align*}
   |\Gamma({\mPsi}_{1:\order}; F, \pi)| & \lesssim \dim^{-\exponent(F,\pi) + \epsilon} \; \forall \; \epsilon > 0,
\end{align*}
where,
\begin{align*}
    \exponent(F,\pi) \explain{def}{=}  \frac{|\nullleaves{F,\pi}|}{4} + 1 - |\pi| + \frac{1}{2} \sum_{v \in V \backslash \{0\}}  \|\pweight{v}{}\|_1.
\end{align*}
Observe that the claim of the proposition follows if we show that $\exponent(F,\pi) \geq 1/4$. In order to show this, we will appeal to \factref{graph-theory}. A minor difficulty is that \factref{graph-theory} only applies when $F$ is a decorated $\order$-tree with $\order = 1$ and not for arbitrary $\order$. In order to address this issue, we consider the the decorated $1$-tree $\widetilde{F} = (V, E, \height{\cdot}, \widetilde{p}(\cdot), \widetilde{q}(\cdot))$ defined as follows:
\begin{enumerate}
    \item $\widetilde{F}$ has the same tree structure as $F$ that is, the same vertex set $V$ and the same edge set $E$ and the same height function $\height{\cdot}$. 
    \item For each $v \in V \backslash \{0\}$, we set $\widetilde{p}(v) = \|\pweight{v}{}\|_1$. 
    \item Similarly, for each $v \in V$, we set $\widetilde{q}(v) = \|\qweight{v}{}\|_1$. 
\end{enumerate}
It is straightforward to verify that $\widetilde{F} = (V, E, \height{\cdot}, \widetilde{p}(\cdot), \widetilde{q}(\cdot))$ satisfies all the requirements of \defref{decorated-tree} to be a decorated $1$-tree. Observe that $\pi$ is also partition of the vertex of $\widetilde{F}$. Moreover, it is also immediate from the assumption that $(F,\pi)$ was a simple configuration, $(\widetilde{F}, \pi)$ is also simple (in the sense of \defref{simple-configuration}). Hence by \factref{graph-theory}:
\begin{align*}
     \frac{|\nullleaves{\widetilde{F},\pi}|}{4} + 1 - |\pi| + \frac{1}{2} \sum_{v \in V \backslash \{0\}}  \widetilde{p}(v) = \frac{|\nullleaves{\widetilde{F},\pi}|}{4} + 1 - |\pi| + \frac{1}{2} \sum_{v \in V \backslash \{0\}}  \|\pweight{v}{}\|_1  \geq \frac{1}{4}.
\end{align*}
It follows immediately from the definition of nullifying edges (\defref{nullifying}) that an edge pair $(u \rightarrow v, u^\prime \rightarrow v^\prime)$ is a pair of nullifying edges in the configuration $(F,\pi)$  iff it is a pair of nullifying edges in the configuration $(\widetilde{F},\pi)$. Hence, $\nullleaves{\widetilde{F},\pi} = \nullleaves{{F},\pi}$. Hence we have shown that $\eta(F,\pi) \geq 1/4$, as desired. This concludes the proof.  
\end{proof}

\subsection{Proof of \thref{moments} and \corref{moments}} \label{appendix:moments-actual-proof}
We have now introduced all the key ingredients used to obtain \thref{moments} and \corref{moments}, which we prove below. 

\begin{proof}[Proof of \thref{moments}] Recall that in \lemref{expectation-formula} we computed:
\begin{align}\label{eq: E-formula-recall}
    &\E \left[ \frac{1}{\dim}\sum_{j=1}^\dim h(\auxvec_j) \cdot \hermite{r}(\iter{z}{T,\cdot}_j) \right]  = \sum_{\substack{F \in \trees{k}{r,T} \\ F = (V, E, \height{\cdot}, \pweight{\cdot}{}, \qweight{\cdot}{})}} \sum_{\pi \in \part{V}}\alpha(F)  \relev(F,\pi) \chi(F,\pi)  \beta(F,\pi) \cdot  \Gamma(\mPsi_{1:\order}; F,\pi).
\end{align}
Recall that $\relev(F,\pi)$ was the indicator function which is $1$ iff $(F,\pi)$ is a relevant configuration (cf. \defref{relevant-config}) and zero otherwise. Hence it, suffices to compute $\lim_{\dim \rightarrow \infty}\Gamma(\mPsi_{1:\order}; F,\pi)$ for relevant configuration. Recall from \propref{decomposition} that for any relevant configuration $(F,\pi)$ there exists a collection $\calS$ of simple configurations with $|\calS| \leq |\pi| !$ and a map $a: \calS \mapsto [-1,1]$ such that:
\begin{align*}
    \Gamma(\mPsi_{1:\order}; F_0, \pi_0) & = \sum_{(F,\pi) \in \calS} a(F,\pi) \cdot  \Gamma(\mPsi_{1:\order}; F, \pi)
\end{align*}
Since \propref{universality-simple} showed that $\lim_{\dim \rightarrow \infty}\Gamma(\mPsi_{1:\order}; F,\pi) = 0$ for any simple configuration, we also have $\lim_{\dim \rightarrow \infty}\Gamma(\mPsi_{1:\order}; F,\pi) = 0$ for any relevant configuration. As a consequence,
\begin{align*}
   \lim_{\dim \rightarrow \infty} \E \left[ \frac{1}{\dim}\sum_{j=1}^\dim h(\auxvec_j) \cdot \hermite{r}(\iter{z}{T,\cdot}_j) \right] & = 0,
\end{align*}
as claimed. 
\end{proof}

Next, we provide the proof for \corref{moments}.

\begin{proof}[Proof of \corref{moments}]
The proof follows the argument employed in the \citep[Proof of Corollary 2, Appendix E.4]{dudeja2022universality}. We briefly summarize the argument here for completeness. Let $\iter{\serv{Z}}{\dim}_1, \dotsc, \iter{\serv{Z}}{\dim}_k, \iter{\serv{A}}{\dim}$ denote the random variables with the law:
\begin{align*}
   {\mu}_\dim & \explain{def}{=} \E \left[ \frac{1}{\dim} \sum_{i=1}^\dim \delta_{\iter{z}{T,1}_i, \iter{z}{T,2}_i, \dotsc, \iter{z}{T,k}_i, \auxvec_i} \right].
\end{align*}
By choosing $h(\auxvec) = \auxvec_1^{d_1} \cdot \cdot \auxvec^{d_\auxdim}_{\auxdim}$ for $d_{1:\auxdim} \in \W^{\auxdim}$ in \thref{moments}, one obtains the conclusion that the random variables $(\iter{\serv{Z}}{\dim}_1, \dotsc, \iter{\serv{Z}}{\dim}_k, \iter{\serv{A}}{\dim})$ converge to $(\serv{Z}_1, Z_1, \dotsc, \serv{Z}_k,  \serv{A})$ in moments. \assumpref{side-info} guarantees that the distribution of $(\serv{Z}_1, Z_1, \dotsc, \serv{Z}_k,  \serv{A})$ is uniquely determined by its moments. Hence, we also have that $(\iter{\serv{Z}}{\dim}_1, \dotsc, \iter{\serv{Z}}{\dim}_k, \iter{\serv{A}}{\dim})$ converges in distribution to $(\serv{Z}_1, Z_1, \dotsc, \serv{Z}_k,  \serv{A})$. Since the test function $h$ is bounded by a polynomial and all moments of  $(\serv{Z}_1, Z_1, \dotsc, \serv{Z}_k,  \serv{A})$ one obtains $\E[h(\iter{\serv{Z}}{\dim}_1, \dotsc, \iter{\serv{Z}}{\dim}_k ; \iter{\serv{A}}{\dim})] \rightarrow \E[h(\serv{Z}_1, Z_1, \dotsc, \serv{Z}_k ;  \serv{A})]$ using the continuous mapping theorem and a uniform integrability argument. 
\end{proof}

\subsection{Proof of the Unrolling Lemma} \label{appendix:proof-hermite-expansion}
This subsection provides a proof of the Unrolling Lemma (\lemref{hermite-expansion}). Consider arbitrary $t \in \N$, $\qweight{0}{} = (\qweight{0}{1}, \qweight{0}{2}, \dotsc, \qweight{0}{k}) \in \W^\order$ with $\|\qweight{0}{}\|_1 \geq 1$ \footnote{We make unusual choice of using the variable name $\qweight{0}{}$ (instead of $r$ used in the statement of \lemref{hermite-expansion}) since the unrolling process will lead to the introduction of vectors $\qweight{1}{}, \qweight{2}{}, \dotsc \in \W^\order$.} and $j \in [N]$.
We begin by expressing
\begin{align*}
     \hermite{\qweight{0}{}}\left(\iter{z}{t,\cdot}_j\right) \explain{def}{=} \prod_{i=1}^\order \hermite{\qweight{0}{i}}\left( \iter{z}{t,i}_j \right)
\end{align*}
as a polynomial of the initialization $\iter{\mZ}{0,\cdot}$.  The expansion relies on the following property of Hermite polynomials.
\begin{fact}\label{fact:hermite-property} Let $\vu \in \R^\dim$ be such that $\|\vu\| = 1$. For any $q \in \W$ and any $\vx \in \R^\dim$, we have:
\begin{align*}
    \hermite{q}(\ip{\vu}{\vx}) & = \sum_{\substack{\valpha \in \W^\dim \\ \|\valpha\|_1 = q}} \sqrt{\binom{q}{\valpha}} \cdot \vu^{\valpha} \cdot \hermite{\valpha}(\vx).
\end{align*}
In the display above,
\begin{align*}
    \binom{q}{\valpha} \explain{def}{=} \frac{q!}{\alpha_1 ! \alpha_2! \dotsb \alpha_\dim!}, \; \vu^{\valpha} \explain{def}{=} \prod_{i=1}^\dim u_i^{\alpha_{i}}, \; \hermite{\valpha}(\vx) \explain{def}{=} \prod_{i=1}^\dim  \hermite{\alpha_i}(x_i).
\end{align*}
\end{fact}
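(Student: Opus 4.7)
The plan is to prove this identity via the generating function for the normalized Hermite polynomials. Recall that the (physicists'/probabilists') normalized Hermite polynomials satisfy the exponential generating identity
\begin{align*}
\sum_{q=0}^{\infty} \frac{t^q}{\sqrt{q!}}\,\hermite{q}(x) \;=\; \exp\!\left(tx - \tfrac{t^2}{2}\right) \qquad \forall\, t, x \in \R.
\end{align*}
This is a standard consequence of the orthonormality of $\hermite{q}$ against $\gauss{0}{1}$ together with the recursion $\hermite{q+1}(x) = \tfrac{1}{\sqrt{q+1}}(x\hermite{q}(x) - \sqrt{q}\,\hermite{q-1}(x))$; I would invoke it directly rather than re-derive it.

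Given this, set $y \explain{def}{=} \ip{\vu}{\vx} = \sum_{i=1}^{\dim} u_i x_i$ and apply the generating function at $y$:
\begin{align*}
\sum_{q=0}^{\infty} \frac{t^q}{\sqrt{q!}}\,\hermite{q}(\ip{\vu}{\vx}) \;=\; \exp\!\left(t\sum_{i=1}^{\dim} u_i x_i - \tfrac{t^2}{2}\right).
\end{align*}
The crucial step is to absorb the $-t^2/2$ factor into the product using the normalization hypothesis $\|\vu\|^2 = \sum_i u_i^2 = 1$. This lets me rewrite
\begin{align*}
\exp\!\left(t\sum_{i=1}^{\dim} u_i x_i - \tfrac{t^2}{2}\right) \;=\; \prod_{i=1}^{\dim} \exp\!\left((tu_i)\,x_i - \tfrac{(tu_i)^2}{2}\right) \;=\; \prod_{i=1}^{\dim} \sum_{\alpha_i=0}^{\infty} \frac{(tu_i)^{\alpha_i}}{\sqrt{\alpha_i!}}\,\hermite{\alpha_i}(x_i),
\end{align*}
where I have applied the univariate generating function to each factor.

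Expanding the product and regrouping, the right-hand side becomes
\begin{align*}
\sum_{\valpha \in \W^{\dim}} \frac{t^{\|\valpha\|_1}\,\vu^{\valpha}}{\sqrt{\valpha!}}\,\hermite{\valpha}(\vx),
\end{align*}
where $\valpha! \explain{def}{=} \prod_i \alpha_i!$ and I use the multi-index notations in the statement. Finally I match coefficients of $t^q$ on both sides: the left-hand side contributes $\hermite{q}(\ip{\vu}{\vx})/\sqrt{q!}$, while the right-hand side restricts to $\valpha$ with $\|\valpha\|_1 = q$, yielding
\begin{align*}
\hermite{q}(\ip{\vu}{\vx}) \;=\; \sum_{\substack{\valpha \in \W^{\dim} \\ \|\valpha\|_1 = q}} \sqrt{\tfrac{q!}{\valpha!}}\, \vu^{\valpha}\,\hermite{\valpha}(\vx) \;=\; \sum_{\substack{\valpha \in \W^{\dim} \\ \|\valpha\|_1 = q}} \sqrt{\binom{q}{\valpha}}\, \vu^{\valpha}\,\hermite{\valpha}(\vx),
\end{align*}
as claimed. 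The argument is purely algebraic (identity of formal power series in $t$), so no convergence issues arise, and the only nontrivial ingredient is the norm hypothesis $\|\vu\|=1$, used exactly once to split the Gaussian normalization $e^{-t^2/2}$ into a product of $e^{-(tu_i)^2/2}$. I do not anticipate any genuine obstacle; the only care needed is keeping the multinomial bookkeeping straight.
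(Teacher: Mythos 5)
Your proof is correct and uses exactly the route the paper indicates: the paper derives this fact "using the well known generating formula for Hermite polynomials" (deferring details to Appendix F of the authors' prior work), which is precisely your argument of splitting $e^{t\ip{\vu}{\vx}-t^2/2}$ into a product via $\|\vu\|=1$ and matching coefficients of $t^q$. The multinomial bookkeeping and the single use of the norm hypothesis are both handled correctly.
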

The property stated above is easily derived using the well known generating formula for Hermite polynomials, see for e.g. \citep[Appendix F]{dudeja2022universality} for a proof. Using the formula in \factref{hermite-property}, we obtain:
\begin{align}
     \hermite{\qweight{0}{}}\left(\iter{z}{t,\cdot}_j\right) & = \prod_{i=1}^\order \hermite{\qweight{0}{i}} \left( \ip{\mM_i \tran \ve_j}{\nonlin_i(\iter{\mZ}{t-1,\cdot}; \auxmat)} \right) \nonumber\\
     & = \prod_{i=1}^\order  \sum_{\substack{\valpha_i \in \W^\dim \\ \|\valpha_i\|_1 = \qweight{0}{i}}} \sqrt{\binom{\qweight{0}{i}}{\valpha_i}} \cdot (\mM_i \tran \ve_j)^{\valpha_i} \cdot \hermite{\valpha_i} (\nonlin_i(\iter{\mZ}{t-1,\cdot}; \auxmat))  \nonumber \\
     & = \sum_{\substack{\valpha_{1:\order} \in \W^\dim \\ \|\valpha_i\|_1 = \qweight{0}{i}}} \prod_{i=1}^\order \left( \sqrt{\binom{\qweight{0}{i}}{\valpha_i}} \cdot (\mM_i \tran \ve_j)^{\valpha_i} \cdot \hermite{\valpha_i} (\nonlin_i(\iter{\mZ}{t-1,\cdot}; \auxmat))\right) \label{eq:expansion-intermediate-1}.
\end{align}
In the above display, $\ve_{1:\dim}$ denote the standard basis vectors in $\R^\dim$. Consider the following procedure to pick $\valpha_{1:\order} \in \W^\dim$ with $\|\valpha_i\|_1 = \qweight{0}{i}$:
\begin{enumerate}
    \item First, we pick $c = \|\valpha_1 + \valpha_2 + \dotsc + \valpha_\order \|_0$. Since $\|\valpha_i\|_1 = \qweight{0}{i}$ and $\|\qweight{0}{}\|_1 \geq 1$, $c \in [\|\qweight{0}{}\|_1]$.
    \item Next, we pick $\ell_{1:c} \in [\dim]$ with $\ell_1 < \ell_2< \dotsb < \ell_c$. These will be the locations of the non-zero coordinates of the vector $\valpha_1 + \valpha_2 + \dotsc + \valpha_\order$. 
    \item Then, we pick vectors $\pweight{1}{}, \pweight{2}{}, \dotsc, \pweight{c}{} \in \W^\order$ which satisfy:
    \begin{align*}
        \|\pweight{\cdot}{i}\|_1 \explain{def}{=} \sum_{v=1}^c \pweight{v}{i} = \qweight{0}{i}, \; \|\pweight{v}{}\|_1 \geq 1 \; \forall \; v \in [c].
    \end{align*}
    For each $i \in [\order]$, the vector $\pweight{\cdot}{i} \explain{def}{=} (\pweight{1}{i}, \pweight{2}{i}, \dotsc, \pweight{c}{i})$ will specify the values of the entries of $\valpha_i$ on the indices $\{\ell_1, \ell_2, \dotsc, \ell_c\}$.
    \item Finally, for each $i \in [\order]$ we set the entries of the vector $\valpha_i$ as follows:
    \begin{align*}
        (\alpha_i)_{\ell_v} &= \pweight{v}{i} \; \forall \; v \; \in \; [c], \\
        (\alpha_i)_{\ell} & = 0 \; \forall \; \ell \; \not\in \; \{\ell_1, \ell_2, \dotsc, \ell_c\}. 
    \end{align*}
\end{enumerate}
Observe that the vectors $\valpha_{1:\order} \in \W^\dim$ constructed this way satisfy $\|\valpha_i\|_1 = \qweight{0}{i}$ for each $i \in [\order]$. Moreover, any collection of vectors $\valpha_{1:\order}$ that satisfy $\|\valpha_i\|_1 = \qweight{0}{i}$ for each $i \in [\order]$ can be obtained using the above procedure. Hence, \eqref{eq:expansion-intermediate-1} can be rewritten as:
\begin{align}
    &\hermite{\qweight{0}{}}\left(\iter{z}{t,\cdot}_j\right)  = \sum_{c=1}^{\|\qweight{0}{}\|_1} \sum_{\substack{\pweight{1}{}, \dotsc, \pweight{c}{} \in \W^\order \\  \|\pweight{\cdot}{i}\|_1 = \qweight{0}{i} \\ \|\pweight{v}{}\|_1 \geq 1 }} \sum_{\substack{\ell_{1:c} \in [\dim] \\ \ell_1 < \ell_2 < \dotsb < \ell_c}} \prod_{i=1}^\order \left( \sqrt{\binom{\qweight{0}{i}}{\pweight{\cdot}{i}}} \cdot \prod_{v=1}^c  (M_i)_{j,\ell_v}^{\pweight{v}{i}} \cdot \hermite{\pweight{v}{i}}(\nonlin_i(\iter{z}{t-1,\cdot}_{\ell_v}; \auxvec_{\ell_v}))   \right) \nonumber \\
    & = \sum_{c=1}^{\|\qweight{0}{}\|_1} \sum_{\substack{\pweight{1}{}, \dotsc, \pweight{c}{} \in \W^\order \\  \|\pweight{\cdot}{i}\|_1 = \qweight{0}{i} \\ \|\pweight{v}{}\|_1 \geq 1 }} \left( \frac{1}{c!} \prod_{i=1}^\order \sqrt{\binom{\qweight{0}{i}}{\pweight{\cdot}{i}}}  \right) \cdot \sum_{\substack{\ell_{1:c} \in [\dim] \\ \ell_u \neq \ell_v \forall u \neq v}} \prod_{v=1}^c \left(\prod_{i=1}^\order (M_i)_{j,\ell_v}^{\pweight{v}{i}} \right) \cdot \underbrace{\left( \prod_{i=1}^{\order} \hermite{\pweight{v}{i}}(\nonlin_i(\iter{z}{t-1,\cdot}_{\ell_v}; \auxvec_{\ell_v})) \right)}_{(\star)}. \label{eq:expansion-intermediate-2}
\end{align}
Next, we consider the Hermite decomposition of the function $(\star)$. In the above display. For any $p \in \W^\order$ and any $\auxvec \in \R^{\auxdim}$ consider the function:
\begin{align*}
    z \in \R^\order \mapsto \prod_{i=1}^\order \hermite{p_i}(\nonlin_i(z; \auxvec)).
\end{align*}
We consider the Hermite decomposition of this function:
\begin{align} \label{eq:hermite-decomposition}
    \prod_{i=1}^\order \hermite{p_i}(\nonlin_i(z; \auxvec)) & = \sum_{q \in \W^\order} \fourier{p}{q}{\auxvec} \cdot \hermite{q}(z), 
\end{align}
where
\begin{align*}
     \fourier{p}{q}{\auxvec} & \explain{def}{=} \E \left[ \hermite{q}(\serv{Z}) \cdot  \prod_{i=1}^\order \hermite{p_i}(\nonlin_i(\serv{Z}; \auxvec)) \right], \; \serv{Z} \sim \gauss{0}{I_\order}. 
\end{align*}
Applying this decomposition to the function $(\star)$ in \eqref{eq:expansion-intermediate-2} we obtain:
\begin{align*}
     &\hermite{\qweight{0}{}}\left(\iter{z}{t,\cdot}_j\right) \\&=  \sum_{c=1}^{\|\qweight{0}{}\|_1} \sum_{\substack{\pweight{1}{}, \dotsc, \pweight{c}{} \in \W^\order \\  \|\pweight{\cdot}{i}\|_1 = \qweight{0}{i} \\ \|\pweight{v}{}\|_1 \geq 1 }}  \alpha_0(\qweight{0}{}, \pweight{1:c}{}) \sum_{\substack{\ell_{1:c} \in [\dim] \\ \ell_u \neq \ell_v \forall u \neq v}} \prod_{v=1}^c \left\{\left(\prod_{i=1}^\order (M_i)_{j,\ell_v}^{\pweight{v}{i}} \right) \sum_{\qweight{v}{} \in \W^\order} \fourier{\pweight{v}{}}{\qweight{v}{}}{\auxvec_{\ell_v}}  \hermite{\qweight{v}{}}(\iter{z}{t-1,\cdot}_{\ell_v}) \right\},
\end{align*}
where:
\begin{align*}
    \alpha_0(\qweight{0}{}, \pweight{1}{}, \dotsc, \pweight{c}{}) \explain{def}{=}\left( \frac{1}{c!} \prod_{i=1}^\order \sqrt{\binom{\qweight{0}{i}}{\pweight{\cdot}{i}}}  \right)
\end{align*}
Hence,
\begin{align}
    &\hermite{\qweight{0}{}}\left(\iter{z}{t,\cdot}_j\right) = \nonumber \\& \sum_{c=1}^{\|\qweight{0}{}\|_1} \sum_{\substack{\pweight{1}{}, \dotsc, \pweight{c}{} \in \W^\order \\  \|\pweight{\cdot}{i}\|_1 = \qweight{0}{i} \\ \|\pweight{v}{}\|_1 \geq 1 }}  \sum_{\qweight{1}{}, \dotsc, \qweight{c}{} \in \W^\order}\alpha_0(\qweight{0}{}, \pweight{1:c}{}) \sum_{\substack{\ell_{1:c} \in [\dim] \\ \ell_u \neq \ell_v \forall u \neq v}} \prod_{v=1}^c  \left(\prod_{i=1}^\order (M_i)_{j,\ell_v}^{\pweight{v}{i}} \right) \fourier{\pweight{v}{}}{\qweight{v}{}}{\auxvec_{\ell_v}}  \hermite{\qweight{v}{}}(\iter{z}{t-1,\cdot}_{\ell_v})  \label{eq:expansion-recursive-formula}
\end{align}
We can now apply the formula in \eqref{eq:expansion-recursive-formula} recursively to expand $ \hermite{\qweight{v}{}}(\iter{z}{t-1,\cdot}_{\ell_v})$ for every $v \in [c]$ as a polynomial in $\iter{\mZ}{t-2,\cdot}$. We continue this process to obtain a expansion of $\hermite{\qweight{0}{}}\left(\iter{z}{t,\cdot}_j\right)$ as a polynomial in $\iter{\mZ}{0,\cdot}$. This immediately yields the claimed formula in \lemref{hermite-expansion} which takes the form of a combinatorial sum over colorings of decorated $\order$-trees (defined in \defref{decorated-tree}). We highlight the following  aspects of the definition of decorated forests (\defref{decorated-tree}) and valid colored decorated forests  (\defref{valid-tree}) that play an important role in ensuring that the formula in \lemref{hermite-expansion} is correct:
\begin{enumerate}
    \item In \defref{decorated-tree}, the height function $h$ keeps track of the extent to which the iterations have been unrolled: property (a) of $h$ captures the fact that each step of unrolling expresses the coordinates  of $\iter{\mZ}{t,\cdot}$ as a polynomial in $\iter{\mZ}{t-1,\cdot}$, property (b) captures the fact that the unrolling process stops once a polynomial in $\iter{\mZ}{0,\cdot}$ is obtained and property (c) ensures that the unrolling process continues till every non-trivial polynomial in the iterates has been expressed in terms of the initialization $\iter{\mZ}{0,\cdot}$. 
    \item In \defref{valid-tree}, the second requirement (no two siblings have the same color) captures the $\ell_2 \neq \ell_3 \neq \dotsb$ constraint that appears in \eqref{eq:expansion-recursive-formula}.
\end{enumerate}

\subsection{Proof of the Expectation Formula} \label{appendix:expectation-formula}
This subsection presents the proof of the expectation formula provided in \lemref{expectation-formula}. 
\begin{proof}[Proof of \lemref{expectation-formula}] We begin by recalling \eqref{eq:rearrangement}:
\begin{align} \label{eq:E-formula-intermediate}
    &\E \left[ \frac{1}{\dim}\sum_{j=1}^\dim h(\auxvec_j) \cdot \hermite{r}(\iter{z}{T,\cdot}_j) \right]  = \nonumber \\& \hspace{0.8cm} \sum_{\substack{F \in \trees{k}{r,T} \\ F = (V, E, \height{\cdot}, \pweight{\cdot}{}, \qweight{\cdot}{})}} \sum_{\pi \in \part{V}}\alpha(F) \cdot \valid(F,\pi) \cdot \sum_{\substack{\ell \in \colorings{}{\pi} }}  \E[\chi(\auxmat; F, \ell)] \cdot \E[\gamma(\mM_{1:\order} ; F, \ell)] \cdot  \E[ \beta(\iter{\mZ}{0,\cdot}; F, \ell)]. 
\end{align}
Next, we compute the expectations $\E[\chi(\auxmat; F, \ell)]$, $\E[\gamma(\mM_{1:\order} ; F, \ell)]$ and $\E[ \beta(\iter{\mZ}{0,\cdot}; F, \ell)]$.

\noindent\underline{\textbf{Analysis of $\E[\chi(\auxmat; F, \ell)]$.}}  Recall from \lemref{hermite-expansion} that:
\begin{align*}
    \E[\chi(\auxmat; F, \ell)] & \explain{def}{=} \E \left[ h(\auxvec_{\ell_0}) \cdot  \prod_{\substack{u \in V \\ u \neq 0}} \fourier{\pweight{u}{}}{\qweight{u}{}}{\auxvec_{\ell_u}} \right] \\
    & \explain{(a)}{=}  \E \left[ \left( h(\auxvec_{\ell_0}) \prod_{\substack{u \in B_1 \\ u \neq 0}} \fourier{\pweight{u}{}}{\qweight{u}{}}{a_{\ell_u}} \right) \cdot \prod_{i=2}^{|\pi|} \left( \prod_{\substack{u \in B_i }} \fourier{\pweight{u}{}}{\qweight{u}{}}{a_{\ell_u}}  \right) \right] \\
    & \explain{(b)}{=} \E \left[ h(\serv{A}) \cdot  \prod_{\substack{u \in B_1 \\ u \neq 0}} \fourier{\pweight{u}{}}{\qweight{u}{}}{\serv{A}} \right] \cdot \prod_{i=2}^{|\pi|} \E \left[ \prod_{\substack{u \in B_i}} \fourier{\pweight{u}{}}{\qweight{u}{}}{\serv{A}} \right] \\
    & \explain{def}{=} \chi(F,\pi). 
\end{align*}
In the above display, in step (a), we grouped the vertices that lie in the same block in $\pi$ together. Step (b) uses the fact that since $\ell \in \colorings{}{\pi}$, two vertices have the same color iff they lie in the same block of the $\pi$ along with the assumption that the rows of $\auxmat$ are i.i.d. copies of the random variable $\serv{A}$ (\assumpref{side-info}). Note that if $(F,\pi)$ violates the \ref{weak-forbidden-weights-property}, then there is a vertex $u \neq 0$ with $B_{\pi(u)} = \{u\}$ such that one of the following is true:
\begin{description}
\item [Case 1:] $\|\pweight{u}{}\|_1 = \|\qweight{u}{}\|_1 = 1$. This means that $\pweight{u}{} = e_i$ and $\qweight{u}{} = e_j$ for some $i,j \in [k]$. Recalling the formula for $\fourier{\pweight{u}{}}{\qweight{u}{}}{\auxvec}$ from \eqref{eq:hermite-expansion-notations} we obtain,
\begin{align*}
    \E[\fourier{\pweight{u}{}}{\qweight{u}{}}{\serv{A}}] &\explain{}{=} \E[\nonlin_i(\serv{Z}; \serv{A})  \cdot \serv{Z}_j]  \explain{(c)}{=} 0.
\end{align*}
In the above display, $\serv{Z} \sim \gauss{0}{I_k}$ is independent of $\serv{A}$ and step (c) follows from the assumption that the non-linearities $\nonlin_{1:\order}$ are divergence-free made in the statement of \thref{moments} (cf. \eqref{eq:poly-nonlinearity}). Consequently, since $B_{\pi(u)} = \{u\}$, we have $\chi(F,\pi) = 0$ in this case. 
\item [Case 2:] $\pweight{u}{} = 2 e_i, \qweight{u}{} = 0$ for some $i \in [k]$. As before, recalling the formula for $\fourier{\pweight{u}{}}{\qweight{u}{}}{\auxvec}$ from \eqref{eq:hermite-expansion-notations} and the assumption $\E[(\nonlin_i^2(\serv{Z}; \serv{A})] = 1$ made in the statement of \thref{moments} (cf. \eqref{eq:poly-nonlinearity}), we have, $ \E[\fourier{\pweight{u}{}}{\qweight{u}{}}{\serv{A}}] = \E[(\nonlin_i^2(\serv{Z}; \serv{A})-1)  \cdot 1]/\sqrt{2} = 0$. Hence, again, $\chi(F,\pi) = 0$. 
\item [Case 3:]  $\pweight{u}{} = e_i + e_j, \qweight{u}{} = 0$ for some $i,j \in [\order]$ with $i \neq j$ and $\Omega_{ij} \neq 0$. By \eqref{eq:hermite-expansion-notations}, $$\E[\fourier{\pweight{u}{}}{\qweight{u}{}}{\serv{A}}] = \E[ \nonlin_i(\serv{Z}; \serv{A})  \nonlin_j(\serv{Z}; \serv{A})].$$
Recall that \thref{moments} assumes that $\Omega_{ij} \cdot \E[ \nonlin_i(\serv{Z}; \serv{A})  \nonlin_j(\serv{Z}; \serv{A})] = 0$ (cf. \eqref{eq:poly-nonlinearity}). Since $\Omega_{ij} \neq 0$, we have $\E[ \nonlin_i(\serv{Z}; \serv{A})  \nonlin_j(\serv{Z}; \serv{A})]$ = 0. Hence, again, $\chi(F,\pi) = 0$ in this case. 
\end{description}
To summarize, we have shown that:
\begin{align}
     \E[\chi(\auxmat; F, \ell)] & = \chi(F,\pi), \label{eq:chi-E}\\
     (F,\pi) \text{ violates \ref{weak-forbidden-weights-property}} &\implies  \E[\chi(\auxmat; F, \ell)] = 0. \label{eq:chi-0}
\end{align}

\noindent\underline{\textbf{Analysis of $\E[\beta(\iter{\mZ}{0,\cdot}; F, \ell)]$.}} Using the same argument we can also compute:
\begin{align} \label{eq:beta-E}
   \E[\beta(\iter{\mZ}{0,\cdot}; F, \ell)] = \E\left[ \prod_{\substack{v \in \leaves{F} \\ \|\qweight{v}{}\|_1 \geq 1}} \hermite{\qweight{v}{}}( \iter{z}{0, \cdot }_{\ell_v}) \right] & = \prod_{i=1}^{|\pi|} \E\left[ \prod_{\substack{v \in B_i \cap \leaves{F} \\ \|\qweight{v}{}\|_1 \geq 1}} \hermite{\qweight{v}{}}(\serv{Z}) \right] \explain{def}{=} \beta(F,\pi), \; \serv{Z} \sim \gauss{0}{I_{\order}}. 
\end{align}
Consider the situation in which $(F,\pi)$ violates the \ref{original-leaf-property}. This means that there is a leaf vertex $v \in \leaves{F}$ with $\|\qweight{v}{}\|_1 \geq 1$ and $B_{\pi(v)} = \{v\}$. Since $\E[\hermite{\qweight{v}{}}(\serv{Z})] = 0$, we have $\beta(F,\pi) = 0$. To conclude, we have shown that,
\begin{align} \label{eq:beta-0}
     (F,\pi) \text{ violates \ref{original-leaf-property}} &\implies  \E[\beta(\iter{\mZ}{0,\cdot}; F, \ell)] = 0.
\end{align}

\noindent\underline{\textbf{Analysis of $\E[\gamma(\mM_{1:k}; F,\ell)]$.}} Finally, we compute $\E[\gamma(\mM_{1:k}; F,\ell)]$. Since $\mM_{1:k}$ is a semi-random ensemble (\defref{semirandom}), $\mM_i = \mS \mPsi_i \mS$ where $\mS = \diag(s_{1:\dim})$ with $s_{1:\dim} \explain{i.i.d.}{\sim} \unif{\{\pm 1\}}$. Hence,
\begin{align*}
    \E[\gamma(\mM_{1:k}; F,\ell)]  &= \gamma(\mPsi_{1:k}; F, \ell) \cdot \E \left[ \prod_{\substack{e \in E \\ e = u \rightarrow v}} s_{\ell_u}^{\|\pweight{v}{}\|_1} s_{\ell_v}^{\|\pweight{v}{}\|_1} \right]. 
\end{align*}
We can compute:
\begin{align*}
     &\E \left[ \prod_{\substack{e \in E \\ e = u \rightarrow v}} s_{\ell_u}^{\|\pweight{v}{}\|_1} s_{\ell_v}^{\|\pweight{v}{}\|_1} \right] \\ &\qquad \qquad  \explain{(c)}{=} \E \left[\left(\prod_{v: 0 \rightarrow v} s_{\ell_0}^{\|\pweight{v}{}\|_1} \right) \cdot \left( \prod_{u \in V \backslash ({\leaves{F} \cup \{0\}})} s_{\ell_u}^{\|\pweight{u}{}\|_1} \prod_{v: u \rightarrow v} s_{\ell_u}^{\|\pweight{v}{}\|_1} \right) \cdot \left( \prod_{w \in \leaves{F}} s_{\ell_w}^{\|\pweight{w}{}\|_1} \right) \right] \\
     &\qquad \qquad \explain{(d)}{=} \E\left[ s_{\ell_0}^{\|\qweight{0}{}\|_1} \cdot \left( \prod_{u \in V \backslash ({\leaves{F} \cup \{0\}})} s_{\ell_u}^{\|\pweight{u}{}\|_1+ \|\qweight{u}{}\|_1} \right) \cdot \left( \prod_{w \in \leaves{F}} s_{\ell_w}^{\|\pweight{w}{}\|_1} \right)  \right]. 
\end{align*}
In the above display, in step (c), we reorganized the product over edges in order to collect the signs variables corresponding to the same node together. We made the distinction between the root vertex (no parent), leaf vertices (no children), and all other vertices (have a parent and one or more children). In the step marked (d), we recalled the conservation equation \eqref{eq:conservation-eq},
\begin{align*}
    \qweight{u}{i} & = \sum_{v: u \rightarrow v } \pweight{v}{i}, \; \forall \; i \; \in \; [\order] \; u \; \in \;  V \backslash \leaves{F}. 
\end{align*}
In particular,
\begin{align*}
    \|\qweight{u}{}\|_1 & = \sum_{v: u \rightarrow v } \|\pweight{v}{}\|_1 \; \forall \;  u \; \in \;  V \backslash \leaves{F},
\end{align*}
which gives the equality in step (d). Finally by grouping the vertices that lie in the same block together as before, we obtain:
\begin{align} \label{eq:gamme-E}
     \E[\gamma(\mM_{1:k}; F,\ell)] & = \begin{cases} \gamma(\mPsi_{1:k}; F,\ell)  &: \text{if $(F,\pi)$ satisfies \ref{parity-property}}, \\ 0 &: \text{otherwise}. \end{cases}
\end{align}

\noindent\underline{\textbf{Conclusion of the proof.}} Combining \eqref{eq:E-formula-intermediate} together with the conclusions obtained in \eqref{eq:chi-E}, \eqref{eq:chi-0}, \eqref{eq:beta-E}, \eqref{eq:beta-0}, and \eqref{eq:gamme-E} immediately yields the claim of the lemma. 
\end{proof}

\subsection{Proof of the Improved Estimate} \label{appendix:key-estimate}
This subsection is devoted to the proof of the improved estimated stated in \propref{key-estimate}. As mentioned previously, \propref{key-estimate} is a generalization of  \citep[Proposition 2]{dudeja2022universality} obtained in our prior work, which provided an upper bound on $\Gamma(\mPsi_{1:k}; F,\pi)$ in the special situation when $F$ is a decorated tree with order $\order = 1$. The proof of \propref{key-estimate} closely follows the proof of \citep[Proposition 2]{dudeja2022universality}. In particular, we will rely on combinatorial result from this work \citep[Lemma 8]{dudeja2022universality}, reproduced below for convenience. 

\begin{fact}[{\citep[Lemma 8]{dudeja2022universality}}]\label{fact:mobius} Let $\iter{\vu}{1}, \iter{\vu}{2}, \dotsc, \iter{\vu}{r}$ be a collection of $r\in \N$ vectors in $\R^\dim$. We have,
\begin{align*}
    \left|\sum_{\substack{\ell_{1:r} \in [\dim] \\ \ell_i \neq \ell_j \; \forall \; i \neq j}} \prod_{i=1}^r \iter{u}{i}_{\ell_i} \right| & \leq r^{2r} \cdot \max(\sqrt{\dim} U_\infty, \min(\overline{U},\dim U_\infty))^r,
\end{align*}
where,
\begin{align*}
    \overline{U} \explain{def}{=} \max_{i \in [r]} \left| \sum_{j=1}^\dim \iter{u}{i}_j \right|, \; U_\infty \explain{def}{=} \max_{i \in [r]} \|\iter{\vu}{i}\|_\infty
\end{align*}
\end{fact}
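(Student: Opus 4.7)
}
The plan is to use Möbius inversion on the partition lattice of $[r]$ to convert the restricted sum into a combinatorial sum of unrestricted ``block sums,'' and then bound each block via an $\ell^\infty$/$\ell^2$ trade-off. Concretely, let $S = \sum_{\ell_{1:r}\text{ distinct}}\prod_i \iter{u}{i}_{\ell_i}$, and for any partition $\pi = \{B_1,\dotsc,B_{|\pi|}\}$ of $[r]$ define the block sum
\begin{align*}
    T_\pi \explain{def}{=} \prod_{j=1}^{|\pi|} \sum_{\ell \in [\dim]} \prod_{i \in B_j} \iter{u}{i}_{\ell}.
\end{align*}
The standard Möbius identity on the partition lattice gives $S = \sum_{\pi \in \part{[r]}} \mu(\hat{0},\pi) \, T_\pi$, where $\hat{0}$ is the all-singletons partition and $|\mu(\hat{0},\pi)| = \prod_{B \in \pi} (|B|-1)! \leq (r-1)!$; since the total number of partitions is at most $B_r \leq r^r$, the combinatorial prefactor is bounded by $r^r \cdot (r-1)! \leq r^{2r}$.

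Next I will bound $|T_\pi|$ block by block. For a singleton block $B = \{i\}$, the definitions give both $|\sum_\ell \iter{u}{i}_{\ell}| \leq \overline{U}$ and $|\sum_\ell \iter{u}{i}_{\ell}| \leq \sum_\ell |\iter{u}{i}_\ell| \leq \dim\, U_\infty$, so the block contributes at most $\min(\overline{U}, \dim U_\infty)$. For a block $B$ with $|B|\geq 2$, I will pull out all but two entries in $\ell^\infty$ and apply Cauchy--Schwarz to the remaining product, obtaining
\begin{align*}
   \left| \sum_\ell \prod_{i \in B} \iter{u}{i}_\ell \right| \leq U_\infty^{|B|-2} \sum_\ell |\iter{u}{i_1}_\ell \iter{u}{i_2}_\ell| \leq U_\infty^{|B|-2} \cdot \|\iter{\vu}{i_1}\|_2 \|\iter{\vu}{i_2}\|_2 \leq U_\infty^{|B|-2} \cdot \dim\, U_\infty^{2} = \dim\, U_\infty^{|B|}.
\end{align*}

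The final step is to translate these per-block bounds into a per-index cost. Setting $M \explain{def}{=} \max(\sqrt{\dim}\, U_\infty, \min(\overline{U}, \dim U_\infty))$, one checks: singletons contribute $\leq M^{|B|}$ trivially; size-$2$ blocks contribute $\dim U_\infty^2 = (\sqrt{\dim} U_\infty)^2 \leq M^2$; and for $|B|\geq 3$, $\dim U_\infty^{|B|} = \dim^{1-|B|/2}(\sqrt{\dim} U_\infty)^{|B|} \leq M^{|B|}$ since $\dim^{1-|B|/2}\leq 1$. Hence $|T_\pi| \leq \prod_{B \in \pi} M^{|B|} = M^r$ for every partition $\pi$, and summing over $\pi$ yields $|S| \leq r^{2r} M^r$, which is exactly the claimed bound.

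The steps are all mechanical; the only mild subtlety is verifying that the size-$2$ block is indeed the ``worst'' non-singleton case (larger blocks absorb the extra mass into the $U_\infty$ factors), which is where the $\sqrt{\dim} U_\infty$ term in $M$ comes from and is what prevents a sharper bound. No single step is expected to be a serious obstacle.
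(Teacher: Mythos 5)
Your argument is correct: the Möbius inversion identity on the partition lattice, the bound $|\mu(\hat 0,\pi)| = \prod_{B\in\pi}(|B|-1)! \le (r-1)!$ together with the Bell-number count $\le r^r$, and the per-block estimates (singletons give $\min(\overline{U},\dim U_\infty)$, blocks of size $\ge 2$ give $\dim U_\infty^{|B|} \le (\sqrt{\dim}U_\infty)^{|B|}$) all check out and assemble into exactly the claimed bound. The paper itself does not reprove this fact --- it is quoted from \citep[Lemma 8]{dudeja2022universality} --- but your route via Möbius inversion plus the $\ell^\infty$/counting trade-off is precisely the argument the label \texttt{fact:mobius} points to, so there is nothing to flag.
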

We now present the proof of \propref{key-estimate}. 

\begin{proof}[Proof of \propref{key-estimate}] Let $r$ denote the number of pairs of nullifying leaves in configuration $(F,\pi)$. We label the nullifying leaves as:
\begin{align*}
    \nullleaves{F,\pi} & = \{v_1, v_1^\prime, v_2, v_2^\prime, \dotsc, v_r, v_r^\prime\}. 
\end{align*}
where $v_i, v_i^\prime$ form a pair of nullifying leaves in the sense of \defref{nullifying}. Let $u_i$ and $u_i^\prime$ denote the parents of $v_i$ and $v_i^\prime$ respectively. Recalling \defref{nullifying}, we see that the edges $e_i \bydef u_i \rightarrow v_i, \; e_i^\prime \bydef u_i^\prime \rightarrow v_i^\prime$ form a pair of nullifying edges. Since $\|\pweight{v_i}{}\|_1 = \|\pweight{v_i^\prime}{}\|_1 = 1 $ for each $i \in [r]$, this means that $\pweight{v_i}{}, \pweight{v_i^\prime}{} \in \{e_1, e_2, \dotsc, e_\order\}$ where $e_{1:\order}$ are the standard basis vectors in $\R^k$. Let $\pweight{v_i}{} = e_{t_i}$ and $\pweight{v_i^\prime}{} = e_{t_i^\prime}$ for each $i \in [r]$ for some $t_{1}, t_{2}, \dotsc, t_{r}, t_{1}^\prime, t_2^\prime, \dotsc, t_{r}^\prime \in [\order]$. Let $a_i \in [\dim]$ denote the color assigned to block $B_i$ by $\ell$. Then we can write,
\begin{align*}
    \gamma(\mPsi_{1:\order}; F, \ell)  &= \frac{1}{\dim} \prod_{\substack{e\in E\\ e = u \rightarrow v}} \prod_{i=1}^\order (\Psi_i)_{a_{\pi(u)},a_{\pi(v)}}^{\pweight{v}{i}}, \\
    \sum_{\ell \in \colorings{}{\pi}} \gamma(\mPsi_{1:\order}; F, \ell) & = \sum_{\substack{a_1, a_2, \dotsc, a_{|\pi|}\\ a_i \neq a_j \forall i \neq j}} \frac{1}{\dim} \prod_{\substack{e\in E\\ e = u \rightarrow v}} \prod_{i=1}^\order (\Psi_i)_{a_{\pi(u)},a_{\pi(v)}}^{\pweight{v}{i}}.
\end{align*} 
Notice that, without loss of generality, we can assume that $\{(v_i, v_i^\prime): i \in [r]\}$ form the last $r$ blocks of $\pi$. That is, $B_{i+|\pi|-r} = \{v_i, v_i^\prime \}$ for each $i \in [r]$. The definition of nullifying edges (\defref{nullifying}) guarantees that for each $i \in [r]$ the color $a_{|\pi| - r + i}$ appears exactly twice in the product $\gamma(\mPsi_{1:\order}; F, \ell)$: once with the edge $u_i \rightarrow v_i$, and the second time with the edge $u_i^\prime \rightarrow v_i^\prime$. We isolate the occurrences of these colors as follows:
\begin{align*}
     \gamma(\mPsi_{1:\order}; F, \ell) & = \widetilde{\gamma}(\mPsi_{1:\order}; F, a_{1}, a_{2}, \dotsc, a_{|\pi|-r}) \cdot \prod_{i=1}^r \left\{(\Psi_{t_i})_{a_{\pi(u_i)}, a_{i+|\pi|-r}} (\Psi_{t_i^\prime})_{a_{\pi(u_i^\prime)}, a_{i+|\pi|-r}} \right\},
\end{align*}
where $\widetilde{\gamma}(\mPsi_{1:\order}; F, a_{1}, a_{2}, \dotsc, a_{|\pi|-r})$ is defined as follows:
\begin{align*}
    \widetilde{\gamma}(\mPsi_{1:\order}; F, a_{1}, a_{2}, \dotsc, a_{|\pi|-r}) &\bydef \frac{1}{\dim}\prod_{\substack{e \in E \backslash \{e_{1:r}, e^\prime_{1:r}\}\\ e = u \rightarrow v}} \prod_{i=1}^{\order} (\Psi_i)_{a_{\pi(u)},a_{\pi(v)}}^{\pweight{v}{i}}.
\end{align*}
By defining the indices $b_i = a_{i+|\pi| - r}$ for $i \in [r]$, the above expression can be written as:
\begin{align*}
     &\left|\sum_{\ell \in \colorings{}{\pi}} \gamma(\mPsi_{1:\order}; F, \ell) \right| \\&= \left|\sum_{\substack{a_1,  \dotsc, a_{|\pi| - r} \in [\dim] \\ a_i \neq a_j \forall i \neq j}} \widetilde{\gamma}(\mPsi_{1:\order}; F, a_1, \dotsc, a_{|\pi| - r}) \cdot \sum_{\substack{b_{1:r} \in [\dim] \backslash \{a_{1}, \dotsc, a_{|\pi| - r}\}\\ b_i \neq b_j \forall i \neq j}} \prod_{i=1}^r \left\{(\Psi_{t_i})_{a_{\pi(u_i)}, b_i} (\Psi_{t_i^\prime})_{a_{\pi(u_i^\prime)}, b_i} \right\} \right| \\
     & \leq \sum_{\substack{a_1,  \dotsc, a_{|\pi| - r} \in [\dim] \\ a_i \neq a_j \forall i \neq j}} |\widetilde{\gamma}(\mPsi_{1:\order}; F, a_1, \dotsc, a_{|\pi| - r})|  \left| \sum_{\substack{b_{1:r} \in [\dim] \backslash \{a_{1}, \dotsc, a_{|\pi| - r}\}\\ b_i \neq b_j \forall i \neq j}} \prod_{i=1}^r \left\{(\Psi_{t_i})_{a_{\pi(u_i)}, b_i} (\Psi_{t_i^\prime})_{a_{\pi(u_i^\prime)}, b_i} \right\} \right|.
\end{align*}
Next, we define:
\begin{align*}
    \widetilde{\lambda}(\mPsi_{1:\order}; F, a_1, \dotsc, a_{|\pi| - r}) \bydef \sum_{\substack{b_{1:r} \in [\dim] \backslash \{a_{1}, \dotsc, a_{|\pi| - r}\}\\ b_i \neq b_j \forall i \neq j}} \prod_{i=1}^r \left\{(\Psi_{t_i})_{a_{\pi(u_i)}, b_i} (\Psi_{t_i^\prime})_{a_{\pi(u_i^\prime)}, b_i} \right\},
\end{align*}
and rewrite the previously obtained bound as,
\begin{align}
     &\left|\sum_{\ell \in \colorings{}{\pi}} \gamma(\mPsi_{1:\order}; F, \ell) \right|  \leq \sum_{\substack{a_1,  \dotsc, a_{|\pi| - r} \in [\dim] \\ a_i \neq a_j \forall i \neq j}} |\widetilde{\gamma}(\mPsi_{1:\order}; F, a_1, \dotsc, a_{|\pi| - r})| \cdot |\widetilde{\lambda}(\mPsi_{1:\order}; F, a_1, \dotsc, a_{|\pi| - r})|  \nonumber \\ 
     & \leq \dim^{|\pi| - r} \cdot \left( \max_{\substack{a_1, \dotsc, a_{|\pi|-r} \in [\dim] \\ a_i \neq a_j \forall i \neq j}} |\widetilde{\gamma}(\mPsi_{1:\order}; F, a_1, \dotsc, a_{|\pi| - r})| \right) \cdot \left( \max_{\substack{a_1, \dotsc, a_{|\pi|-r} \in [\dim] \\ a_i \neq a_j \forall i \neq j}} |\widetilde{\lambda}(\mPsi_{1:\order}; F, a_1, \dotsc, a_{|\pi| - r})| \right).  \label{eq:universal-forest-intermediate-estimate}
\end{align}
To prove the claim of the proposition, we need to bound $|\widetilde{\gamma}(\mPsi_{1:\order}; F, a_{1:|\pi| - r})|$ and $|\widetilde{\lambda}(\mPsi_{1:\order}; F, a_{1:|\pi| - r})|$. We will use the following elementary bound on $|\widetilde{\gamma}|$:
\begin{subequations} \label{eq:gamma-tilde-bound}
\begin{align}
    \left| \widetilde{\gamma}(\mPsi_{1:\order}; F, a_1, \dotsc, a_{|\pi| - r})\right| & \bydef \left|  \frac{1}{\dim}\prod_{\substack{e \in E \backslash \{e_{1:k}, e^\prime_{1:k}\}\\ e = u \rightarrow v}} \prod_{i=1}^\order (\Psi_i)_{a_{\pi(u)},a_{\pi(v)}}^{\pweight{v}{i}} \right|  \leq \left(\max_{i \in [\order]}\|\mPsi_i\|_\infty\right)^{\alpha} \cdot \dim^{-1},
\end{align}
where,
\begin{align}
    \alpha \bydef \sum_{v \in V \backslash (\{0\}\cup \nullleaves{F,\pi})} \|\pweight{v}{}\|_1.
\end{align}
Hence,
\begin{align}
       \max_{\substack{a_1, \dotsc, a_{|\pi|-r} \in [\dim] \\ a_i \neq a_j \forall i \neq j}} \left| \widetilde{\gamma}(\mPsi_{1:\order}; F, a_1, \dotsc, a_{|\pi| - r})\right| & \lesssim \dim^{-\alpha/2 - 1 + \epsilon}
\end{align}
\end{subequations}
To control $|\widetilde{\lambda}(\mPsi_{1:\order}; F, a_1, \dotsc, a_{|\pi| - r})|$ we take advantage of the following property of a semi-random ensemble (\defref{semirandom}): $$ \|\mPsi_i \mPsi_j\tran - \Omega_{ij} \cdot \mI_\dim\|_\infty  \lesssim \dim^{-\frac{1}{2} + \epsilon} \; \forall \; i,j \; \in \; [\order].$$ 
This will be done by appealing to \factref{mobius} for a suitable choice of vectors $\iter{\vu}{1:r}\in \R^{\dim -(|\pi| -r)}$. We will index the entries of these vectors using the set $[\dim] \backslash \{a_1, a_2, \dotsc, a_{|\pi| - r}\}$. The entries of these vectors are defined as follows:
\begin{align*}
    \iter{u}{i}_j \bydef (\Psi_{t_i})_{a_{\pi(u_i)},j} \cdot (\Psi_{t_i^\prime})_{a_{\pi(u_i^\prime)},j} \; \forall \; j \; \in \; [\dim] \backslash \{a_1, a_2, \dotsc, a_{|\pi| - r}\}.
\end{align*}
To apply \factref{mobius}, we bound $U_\infty$ and $ \overline{U}$ as follows:
\begin{align*}
    U_\infty &\bydef \max_{i \in [r]} \|\iter{\vu}{i}\|_\infty \leq \left(\max_{i \in [\order]}\|\mPsi_i\|_\infty \right)^2 \lesssim \dim^{-1 + \epsilon},
\end{align*}
and,
\begin{align*}
    \overline{U} &\bydef \max_{i \in [r]} \left| \sum_{j\in [\dim] \backslash \{a_1, a_2, \dotsc, a_{|\pi| - r}\}} \iter{u}{i}_j\right| \\
    & = \max_{i \in [r]} \left| \sum_{j \in [\dim] \backslash \{a_1, a_2, \dotsc, a_{|\pi| - r}\}} (\Psi_{t_i})_{a_{\pi(u_i)},j} \cdot (\Psi_{t_i^\prime})_{a_{\pi(u_i^\prime)},j}\right| \\
    & \leq \max_{i \in [r]} \left(\left| \sum_{j =1}^\dim (\Psi_{t_i})_{a_{\pi(u_i)},j} \cdot (\Psi_{t_i^\prime})_{a_{\pi(u_i^\prime)},j}\right|  + \sum_{j \in \{a_1, a_2, \dotsc, a_{|\pi| - r}\}} |(\Psi_{t_i})_{a_{\pi(u_i)},j}| \cdot |(\Psi_{t_i^\prime})_{a_{\pi(u_i^\prime)},j}| \right) \\
    & \explain{}{\leq} \max_{i \in [r]} \left| (\mPsi_{t_i}\mPsi_{t_i^\prime}\tran)_{a_{\pi(u_i)},a_{\pi(u_i^\prime)}} \right| + (|\pi|-r) \cdot \max_{t \in [\order]} \|\mPsi_t\|_\infty^2 \\
    & \explain{(a)}{\leq} \max_{\substack{i,j \in [\dim] \\ i \neq j}} \max_{s,t \in [\order]}   |(\mPsi_s\mPsi_t\tran)_{ij}| + (|\pi|-r) \cdot \max_{t \in [\order]} \|\mPsi_t\|_\infty^2\\
    & \leq \max_{s,t \in [\order]} \|\mPsi_s\mPsi_t\tran - \Omega_{st} \cdot \mI_\dim \|_\infty + (|\pi|-r) \cdot  \max_{t \in [\order]}\|\mPsi_t\|_\infty^2 \\
    & \lesssim \dim^{-1/2 + \epsilon},
\end{align*}
where in step (a) we noted that since $(u_i \rightarrow v_i, u_i^\prime \rightarrow v_i^\prime)$ is a pair of nullifying edges (see \defref{nullifying}), we have $\pi(u_i) \neq \pi(u_i^\prime)$ and hence, $a_{\pi(u_i)} \neq a_{\pi(u_i^\prime)}$. Combining these bounds on $U_\infty$ and $ \overline{U}$ with \factref{mobius} we obtain:
\begin{align} \label{eq:lambda-tilde-bound}
    \max_{\substack{a_1, \dotsc, a_{|\pi|-r} \in [\dim] \\ a_i \neq a_j \forall i \neq j}} |\widetilde{\lambda}(\mPsi_{1:\order}; F, a_1, \dotsc, a_{|\pi| - r})| & \leq r^{2r} \cdot \max(\sqrt{\dim} U_\infty, \min(\overline{U},\dim U_\infty))^r  \lesssim \dim^{-r/2 + \epsilon}.
\end{align}
Plugging the bounds on $|\widetilde{\gamma}|, |\widetilde{\lambda}|$ obtained in \eref{gamma-tilde-bound} and \eref{lambda-tilde-bound} into \eref{universal-forest-intermediate-estimate} gives:
\begin{align*}
     \left|\sum_{\ell \in \colorings{}{\pi}} \gamma(\mPsi; F, \ell) \right| & \lesssim \dim^{(|\pi| - r - 1) - \frac{\alpha}{2} - \frac{r}{2} + \epsilon} \lesssim \dim^{-\exponent(F,\pi) + (2k + \alpha) \epsilon},
\end{align*}
where we defined $\exponent(F,\pi) = \alpha/2 + r/2 + 1 - (|\pi|-r)$. To conclude the proof of this proposition, we observe that $\exponent(F,\pi)$ can be expressed as follows:
\begin{align*}
    \exponent(F,\pi) & \explain{def}{=} 1 + \frac{\alpha}{2} - (|\pi| - r) + \frac{r}{2} \\& \explain{(c)}{=}   1 + \frac{\alpha}{2} - |\pi| + \frac{|\nullleaves{F,\pi}|}{2} + \frac{|\nullleaves{F,\pi}|}{4} \\
    &\explain{\eref{gamma-tilde-bound}}{=} 1 + \left(\frac{1}{2} \sum_{v \in V \backslash (\{0\} \cup \nullleaves{F,\pi})} \|\pweight{v}{}\|_1 \right) + \frac{|\nullleaves{F,\pi}|}{2} - |\pi| + \frac{|\nullleaves{F,\pi}|}{4} \\
    & \explain{(d)}{=} 1+ \left(\frac{1}{2} \sum_{v \in V \backslash  \{0\}} \|\pweight{v}{}\|_1 \right) - |\pi| + \frac{|\nullleaves{F,\pi}|}{4}.
\end{align*}
In the above display, we recalled that $|\nullleaves{F,\pi}|  = 2r$ in step (c). To obtain equality (d) we observed that for any nullifying leaf $v \in \nullleaves{F,\pi}$, we have $\|\pweight{v}{}\|_1 = 1$ (cf. \defref{nullifying}). This completes the proof of \propref{key-estimate}.
\end{proof}

\subsection{Proof of the Decomposition Result} \label{appendix:decomposition}
This subsection presents the proof of \propref{decomposition}, which shows that the polynomial $\Gamma(\mPsi_{1:k}; F,\pi)$ associated with any relevant configuration can be expressed as a linear combination of polynomials associated with a few simple configurations. We begin by observing that relevant configurations (\defref{relevant-config}) already satisfy many requirements of simple configurations (\defref{simple-configuration}). The simple configuration requirements already satisfied by a relevant configuration are collected in the following definition of semi-simple configurations.

\begin{definition} \label{def:semi-simple-config} A decorated $\order$-tree $F = (V, E, \height{\cdot}, \pweight{\cdot}{}, \qweight{\cdot}{})$ and a partition $\pi = \{B_1, B_2, \dotsc, B_{|\pi|}\}$ of $V$ form a semi-simple configuration if they satisfy:
\begin{enumerate}
     \myitem{\texttt{Weak Sibling Property}}\label{weak-sibling-property}: There are no vertices $u, v \in V \backslash \{0\}$ which are siblings and satisfy
     $B_{\pi(u)} = B_{\pi(v)} = \{u,v\}$ and $\|\pweight{u}{}\|_1 = \|\pweight{v}{}\|_1 = 1$.
     \item{} \ref{original-leaf-property} as described in \defref{relevant-config}. 
     \item \ref{weak-forbidden-weights-property} as described in \defref{relevant-config}.  
     \item{\ref{parity-property}} as described in \defref{relevant-config}. 
\end{enumerate}
\end{definition}

The following lemma verifies that relevant configurations are semi-simple. 

\begin{lemma} \label{lem:relevant-is-semisimple} A relevant configuration (\defref{relevant-config}) is semi-simple (\defref{semi-simple-config}). 
\end{lemma}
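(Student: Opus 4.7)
The plan is to unpack the two definitions and observe that three of the four conditions required for semi-simplicity are literally identical to conditions already imposed by relevance, so the only real task is to show that the \ref{strong-sibling-property} implies the \ref{weak-sibling-property}. Concretely, the \ref{original-leaf-property}, the \ref{weak-forbidden-weights-property}, and the \ref{parity-property} appear verbatim in both \defref{relevant-config} and \defref{semi-simple-config}, so these transfer over with no work at all.

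For the sibling condition, I would argue by direct implication rather than contrapositive. Suppose $(F,\pi)$ is relevant and let $u, v \in V \backslash \{0\}$ be any two siblings in $F$. By the \ref{strong-sibling-property}, $\pi(u) \neq \pi(v)$, and in particular $u$ and $v$ lie in two \emph{different} blocks of $\pi$. Therefore it cannot happen that $B_{\pi(u)} = B_{\pi(v)} = \{u,v\}$, since that equality forces $u$ and $v$ to share a block. Consequently the hypothesis of the \ref{weak-sibling-property} is vacuously violated for every pair of siblings, which is exactly what the \ref{weak-sibling-property} demands. Combined with the three inherited properties, this shows that $(F,\pi)$ is semi-simple.

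There is no real obstacle here, and no delicate combinatorics or estimates are involved. The lemma is best viewed as a bookkeeping step that isolates the one genuine discrepancy between the two notions: relevance forbids any two siblings from being assigned the same color, while semi-simplicity only forbids the very specific ``paired sibling'' configuration in which both siblings have $\ell_1$-weight one and form a block by themselves. Since the former condition is strictly stronger, the implication is immediate, and the statement reduces to a one-line verification once the relevant clauses of \defref{relevant-config} and \defref{semi-simple-config} are lined up side by side.
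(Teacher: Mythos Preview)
Your proposal is correct and matches the paper's own proof essentially line for line: the paper notes that the \ref{weak-forbidden-weights-property}, \ref{original-leaf-property}, and \ref{parity-property} are inherited verbatim, and then observes that the \ref{weak-sibling-property} is weaker than the \ref{strong-sibling-property}. You simply spell out in one extra sentence why ``weaker'' holds (namely, $\pi(u)\neq\pi(v)$ precludes $B_{\pi(u)}=B_{\pi(v)}=\{u,v\}$), which is exactly the content of the implication.
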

\begin{proof}
Let $(F,\pi)$ be a relevant configuration. By the definition of a relevant configuration, $(F, \pi)$ already satisfies the \ref{weak-forbidden-weights-property}, \ref{original-leaf-property}, and \ref{parity-property}. Observe that the \ref{weak-sibling-property} is weaker than the \ref{strong-sibling-property}. Since a relevant configuration satisfies the \ref{strong-sibling-property}, it also satisfies the \ref{weak-sibling-property}. Hence, a relevant configuration is semi-simple.   
\end{proof}
In fact, the following lemma shows that the only obstacles that prevent a semi-simple configuration (\defref{semi-simple-config}) from being simple (\defref{simple-configuration}) is the presence of removable edges (\defref{removable-edge}) and removable edge pairs (\defref{removable-pair}).

\begin{lemma}\label{lem:semi-simple-2-simple} Let $(F,\pi)$ be a semi-simple configuration which has no removable edges (\defref{removable-edge}) and no removable edge pairs (\defref{removable-pair}). Then $(F,\pi)$ is a simple configuration. 
\end{lemma}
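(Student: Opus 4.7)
The plan is to verify each of the four properties in the definition of a simple configuration (\defref{simple-configuration}) in turn, showing they follow from the semi-simple hypotheses plus the absence of the two removable structures. The \ref{parity-property} is immediate since it is shared verbatim with the definition of semi-simple configuration, so no work is needed there.

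For the \ref{strong-forbidden-weights-property}, I will consider a vertex $u \in V \setminus \{0\}$ with $|B_{\pi(u)}| = 1$ and case on $(\|\pweight{u}{}\|_1, \|\qweight{u}{}\|_1)$. The case $\|\pweight{u}{}\|_1 = 1, \|\qweight{u}{}\|_1 = 1$ is ruled out by part (a) of the \ref{weak-forbidden-weights-property} of the semi-simple hypothesis. When $\|\pweight{u}{}\|_1 = 2$ and $\|\qweight{u}{}\|_1 = 0$, the conservation equation \eqref{eq:conservation-eq} forces $u$ to be a leaf (since every child $v$ would need $\|\pweight{v}{}\|_1 \geq 1$, contradicting $\qweight{u}{} = 0$). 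The three sub-cases $\pweight{u}{} = 2 e_i$, $\pweight{u}{} = e_i + e_j$ with $\Omega_{ij} \neq 0$, and $\pweight{u}{} = e_i + e_j$ with $\Omega_{ij} = 0$ are ruled out by part (b) of \ref{weak-forbidden-weights-property}, part (c) of \ref{weak-forbidden-weights-property}, and the absence of removable edges (exactly the situation described in \defref{removable-edge}), respectively.

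For the \ref{modified-leaf-property}, let $v \in \leaves{F}$ with $|B_{\pi(v)}| = 1$. The \ref{original-leaf-property} from the semi-simple definition forces $\|\qweight{v}{}\|_1 = 0$, so the \ref{parity-property} applied to the singleton block $\{v\}$ gives that $\|\pweight{v}{}\|_1$ is even. Since $\|\pweight{v}{}\|_1 \geq 1$ (by \defref{decorated-tree}, the weight of any non-root vertex is nonzero), we must have $\|\pweight{v}{}\|_1 \in \{2, 4, 6, \dotsc\}$. But the argument of the previous paragraph excludes $\|\pweight{v}{}\|_1 = 2$ entirely (combining the Strong Forbidden Weights just established with the removability argument), so $\|\pweight{v}{}\|_1 \geq 4$, as required.

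Finally, for the \ref{paired-leaf-property}, suppose for contradiction that there exist distinct leaves $v, v'$ with $\|\pweight{v}{}\|_1 = \|\pweight{v'}{}\|_1 = 1$ and $B_{\pi(v)} = B_{\pi(v')} = \{v, v'\}$, whose parents $u, u'$ satisfy $\pi(u) = \pi(u')$. Then the edges $u \to v$ and $u' \to v'$ satisfy exactly the four conditions of \defref{removable-pair} (including $v \neq v'$ by hypothesis), contradicting the assumption that $(F,\pi)$ has no removable edge pairs. This completes the verification of all four properties. The argument is essentially routine bookkeeping matching definitions against definitions; the only mild subtlety is ensuring, via the conservation equation, that the case $\|\pweight{u}{}\|_1 = 2, \|\qweight{u}{}\|_1 = 0$ always corresponds to a leaf, so that the removable-edge hypothesis applies.
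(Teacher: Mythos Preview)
Your proof is correct and follows essentially the same approach as the paper's: verify the four defining properties of a simple configuration one by one, using the semi-simple hypotheses and the absence of removable structures. The only cosmetic difference is the order in which you check the properties (you do \ref{modified-leaf-property} before \ref{paired-leaf-property}, the paper does the reverse), and you spell out the conservation-equation argument for why $\|\qweight{u}{}\|_1 = 0$ forces $u$ to be a leaf in slightly more detail than the paper does.
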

\begin{proof}
Consider a semi-simple configuration $(F,\pi)$ with $F = (V, E , \height{\cdot}, \pweight{\cdot}{}, \qweight{\cdot}{})$ and $\pi = \{B_1, B_2, \dotsc, B_{|\pi|}\}$ that has no removable edges (\defref{removable-edge}) and no removable edge pairs (\defref{removable-pair}). In order to show that $(F,\pi)$ is simple, we verify each of the requirements of \defref{simple-configuration}:
\begin{enumerate}
    \item \ref{parity-property}: By the definition of semi-simple configuration, \ref{parity-property} is satisfied. 
    \item \ref{strong-forbidden-weights-property}: Suppose that for the sake of contradiction, $(F,\pi)$ does not satisfy the \ref{strong-forbidden-weights-property}. This means that there is a vertex $v \in V \backslash \{0\}$ such that $|B_{\pi(v)}| = 1, \|\pweight{v}{}\|_1 = 1, \|\qweight{v}{}\|_1 = 1$ or $|B_{\pi(v)}| = 1, \|\pweight{v}{}\|_1 = 2, \|\qweight{v}{}\|_1 = 0$. However, since semi-simple configurations satisfy the \ref{weak-forbidden-weights-property}, we must have that $|B_{\pi(v)}| = 1$, $\pweight{v}{} = e_i + e_j, \qweight{v}{} = 0$ for some $i,j \in [\order]$ such that $i \neq j$ and $\Omega_{ij} = 0$. Since $\|\qweight{v}{}\|_1 = 0$, this means that $v$ must be a leaf (if $v$ was not a leaf, it would violate the conservation equation \eqref{eq:conservation-eq} in \defref{decorated-tree}). Since $v \neq 0$, it has a parent $u$. Observe that $u \rightarrow v$ is a removable edge for $(F,\pi)$. This contradicts the assumption that $(F,\pi)$ has no removable edges.
    \item \ref{paired-leaf-property}: Comparing the definition of a removable edge pair (\defref{removable-pair}) and the \ref{paired-leaf-property}, it is immediate that the absence of  removable edge pairs ensures that the \ref{paired-leaf-property} is satisfied.
    \item \ref{modified-leaf-property}: Consider a leaf vertex $v \in \leaves{F}$ with $|B_{\pi(v)}| = 1$. By the \ref{original-leaf-property} of semi-simple configurations, we must have $\qweight{v}{} = 0$. Furthermore by the \ref{parity-property} of semi-simple configurations $\|\pweight{v}{}\|_1$ is even and hence $\|\pweight{v}{}\|_1 \in \{2, 4, 6, 8, \dotsc, \}$.  Observe that since we have already shown that $(F,\pi)$ satisfies the \ref{strong-forbidden-weights-property} we must have $\|\pweight{v}{}\|_1 \neq 2$. Hence $\|\pweight{v}{}\|_1 \geq 4$, which verifies the \ref{modified-leaf-property}. 
\end{enumerate}
\end{proof}
In light of \lemref{semi-simple-2-simple}, we describe how removable edges and removable edge pairs can be eliminated from a semi-simple configuration to transform it into a simple configuration. The following lemma shows that if a semi-simple configuration $(F,\pi)$ has a removable edge $u_\star \rightarrow v_\star$ (\defref{removable-edge}), then the polynomial $\Gamma(\mPsi_{1:\order}; F,\pi)$ (cf. \eqref{eq:invariant-polynomial}) associated with the configuration $(F,\pi)$ can be expressed as a linear combination of polynomials associated with a few other configurations whose partitions have fewer blocks.  

\begin{lemma}[Elimination of a Removable Edge]\label{lem:eliminate-removable-edge} Let $(F,\pi)$ be a semi-simple configuration (\defref{semi-simple-config}) with at least one removable edge  (\defref{removable-edge}). Then, there exist semi-simple configurations $\{ (F_i, \pi_i): i \in [|\pi|-1]\}$ such that $|\pi_i| = |\pi| - 1$ and,
\begin{align*}
    \Gamma(\mPsi_{1:\order} ; F,\pi) & = - \sum_{i=1}^{|\pi| - 1} \Gamma(\mPsi_{1:\order}; F_i, \pi_i).
\end{align*}
\end{lemma}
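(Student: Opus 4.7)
The plan is to isolate the contribution of the removable leaf $v_\star$ in the sum defining $\Gamma(\mPsi_{1:\order}; F,\pi)$ and exploit the identity $(\mPsi_i \mPsi_j\tran)_{\ell,\ell} = \Omega_{ij}$ from the normalization assumption \eqref{eq:normalized-semirandom}, together with the fact that $\Omega_{ij}=0$ for the indices $i,j$ attached to the removable edge (part (2) of \defref{removable-edge}). Let $u_\star \rightarrow v_\star$ be a removable edge, so $v_\star$ is a leaf, $\pweight{v_\star}{} = e_i + e_j$ with $i\neq j$ and $\Omega_{ij}=0$, and the block $B_\star \explain{def}{=} B_{\pi(v_\star)} = \{v_\star\}$ is a singleton. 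Writing $\pi^\prime$ for the partition of $V \setminus \{v_\star\}$ obtained by dropping $B_\star$, every coloring $\ell \in \colorings{}{\pi}$ is determined by a coloring $\ell^\prime \in \colorings{}{\pi^\prime}$ together with a value for $\ell_{v_\star} \in [\dim]$ that avoids the $|\pi|-1$ colors already used by $\ell^\prime$. Factoring out the only edge that involves $v_\star$ gives
\begin{align*}
    \Gamma(\mPsi_{1:\order}; F, \pi) = \sum_{\ell^\prime \in \colorings{}{\pi^\prime}} \rho(\ell^\prime) \sum_{\substack{\ell_{v_\star} \in [\dim] \\ \ell_{v_\star} \neq \ell^\prime_B \text{ for all } B \in \pi^\prime}} (\Psi_i)_{\ell^\prime_{u_\star}, \ell_{v_\star}} (\Psi_j)_{\ell^\prime_{u_\star}, \ell_{v_\star}},
\end{align*}
where $\rho(\ell^\prime)$ gathers the contribution of all remaining edges and the $1/\dim$ factor.

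Next I would write the constrained inner sum as the unrestricted sum minus the excluded terms. The unrestricted sum equals $(\mPsi_i \mPsi_j\tran)_{\ell^\prime_{u_\star},\ell^\prime_{u_\star}} = \Omega_{ij} = 0$ by \eqref{eq:normalized-semirandom} and the removable-edge assumption, so only the excluded terms survive:
\begin{align*}
    \Gamma(\mPsi_{1:\order}; F, \pi) = -\sum_{B \in \pi^\prime} \sum_{\ell^\prime \in \colorings{}{\pi^\prime}} \rho(\ell^\prime) \cdot (\Psi_i)_{\ell^\prime_{u_\star}, \ell^\prime_B} (\Psi_j)_{\ell^\prime_{u_\star}, \ell^\prime_B}.
\end{align*}
Enumerate the $|\pi|-1$ blocks of $\pi^\prime$ as $B_1,\dots,B_{|\pi|-1}$, and for each $m$ let $\pi_m$ be the partition of $V$ obtained from $\pi$ by replacing the block $B_m$ with $B_m \cup \{v_\star\}$ (thereby eliminating the singleton $B_\star$); then $|\pi_m| = |\pi|-1$. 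Setting $\ell_{v_\star} = \ell^\prime_{B_m}$ inside the $m$th summand recombines with $\rho(\ell^\prime)$ to produce exactly a coloring in $\colorings{}{\pi_m}$, and therefore
\begin{align*}
    \Gamma(\mPsi_{1:\order}; F, \pi) = -\sum_{m=1}^{|\pi|-1} \Gamma(\mPsi_{1:\order}; F, \pi_m).
\end{align*}
Declaring $F_m \explain{def}{=} F$ for every $m$ yields the collection in the lemma.

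It remains to verify that each $(F, \pi_m)$ is semi-simple in the sense of \defref{semi-simple-config}. For the \ref{parity-property}, the only block whose parity sum changes is $B_m \cup \{v_\star\}$, and the change is an addition of $\|\pweight{v_\star}{}\|_1 = 2$, which preserves parity. For the \ref{original-leaf-property}, the original configuration already forced $\|\qweight{v_\star}{}\|_1 = 0$ (otherwise the singleton $\{v_\star\}$ would have violated this very property in $(F,\pi)$), so even after merging no leaf in a singleton block has nonzero $\qweight{\cdot}{}$; singleton blocks other than $B_\star$ are untouched. For the \ref{weak-forbidden-weights-property}, singleton blocks other than $B_\star$ are unchanged, while $B_m \cup \{v_\star\}$ is no longer a singleton, so no new violation can arise. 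Finally, the \ref{weak-sibling-property} only concerns pairs of leaf siblings of $\|\pweight{\cdot}{}\|_1 = 1$; since $\|\pweight{v_\star}{}\|_1 = 2$, no such pair is created by the merge. I expect the main obstacle to be the careful bookkeeping of this verification step, especially making sure that properties depending on block-level sums (like \ref{parity-property}) are stable under the merge and that no property is inadvertently violated by placing $v_\star$ into a block that might already contain $u_\star$ or one of $v_\star$'s siblings; the reductions above show that only the trivial parity check is affected because $\pweight{v_\star}{}$ has $\ell_1$ norm $2$ and $\qweight{v_\star}{} = 0$.
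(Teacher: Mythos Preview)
Your proposal is correct and follows essentially the same route as the paper: isolate the color of the removable leaf $v_\star$, write the constrained sum as the full sum minus the excluded terms, kill the full sum using $(\mPsi_i\mPsi_j^\top)_{\ell\ell}=\Omega_{ij}=0$, and identify each excluded term with $\Gamma(\mPsi_{1:\order};F,\pi_m)$ for $F_m=F$ and $\pi_m$ obtained by merging $\{v_\star\}$ into $B_m$. Your verification of semi-simplicity matches the paper's, including the key observations that merging preserves parity (the added contribution is $\|\pweight{v_\star}{}\|_1=2$), that all surviving singleton blocks of $\pi_m$ were already singletons of $\pi$, and that no new \ref{weak-sibling-property} violation can arise because $\|\pweight{v_\star}{}\|_1=2$.
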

\begin{proof}
See \appref{eliminate-removable-edge}. 
\end{proof}

Next, we show an analog of \lemref{eliminate-removable-edge} for a removable edge pair. This result is a simple generalization of a result \citep[Lemma 10]{dudeja2022universality} in our prior work, which considered the special case when $F$ was a decorated $\order$-tree with $k=1$. 

\begin{lemma}[Elimination of a Removable Edge Pair]\label{lem:eliminate-pair} Consider a semi-simple configuration $(F,\pi)$ with a removable edge pair (cf. \defref{removable-pair}). Then there exist semi-simple configurations $\{(F_i, \pi_i): i \in \{0, 1, 2, \dotsc, |\pi| - 1\}\}$ with $|\pi_i| = |\pi| - 1$  and a constant $c \in [-1,1]$ such that,
\begin{align*}
    \Gamma(\mPsi_{1:\order}; F,\pi) = c \cdot \Gamma(\mPsi_{1:\order}; F_0,\pi_0) -  \sum_{i=1}^{|\pi| - 1}    \Gamma(\mPsi_{1:\order}; F_i,\ell).
\end{align*}
\end{lemma}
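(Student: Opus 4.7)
Let the removable edge pair in $(F,\pi)$ be $(u \to v,\, u' \to v')$ with $\pweight{v}{} = e_i$, $\pweight{v'}{} = e_j$ for some $i,j \in [\order]$, and let $B_\star$ denote the common block of $\pi$ containing $u$ and $u'$. The \ref{weak-sibling-property} of semi-simple configurations rules out $u=u'$ (which would make $v,v'$ siblings forming a size-two block), so $|B_\star| \geq 2$. The plan is to sum out the color of the block $\{v,v'\}$ using the identity $(\mPsi_i\mPsi_j^\top)_{\ell\ell} = \Omega_{ij}$ from \eqref{eq:normalized-semirandom}, then reinterpret the resulting terms as $\Gamma$-values of smaller configurations.

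Writing a coloring $\ell \in \colorings{}{\pi}$ in terms of the colors $a_1,\dotsc,a_{|\pi|-1}$ of the other blocks (with $a_\star$ the color of $B_\star$) and the color $b$ of $\{v,v'\}$, I would isolate the two factors $(\Psi_i)_{a_\star,b}(\Psi_j)_{a_\star,b}$ contributed by the pair of edges, and observe that
\begin{align*}
\sum_{b\in[\dim]\setminus\{a_1,\dotsc,a_{|\pi|-1}\}} (\Psi_i)_{a_\star,b}(\Psi_j)_{a_\star,b} = \Omega_{ij} - \sum_{m=1}^{|\pi|-1}(\Psi_i)_{a_\star,a_m}(\Psi_j)_{a_\star,a_m},
\end{align*}
using the unconstrained identity $\sum_{b}(\Psi_i)_{a_\star,b}(\Psi_j)_{a_\star,b} = (\mPsi_i\mPsi_j^\top)_{a_\star a_\star} = \Omega_{ij}$. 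Substituting back gives the claimed decomposition: the leading term is $\Omega_{ij}\cdot\Gamma(\mPsi_{1:\order};F_0,\pi_0)$, where $F_0$ is $F$ with the leaves $v,v'$ deleted (updating $\qweight{u}{}\to\qweight{u}{}-e_i$ and $\qweight{u'}{}\to\qweight{u'}{}-e_j$, with $u$ or $u'$ reclassified as leaves if they lose their only child) and $\pi_0$ is $\pi$ with $\{v,v'\}$ removed; for each $m\in[|\pi|-1]$, the $m$-th correction term is $-\Gamma(\mPsi_{1:\order}; F_m,\pi_m)$ with $F_m=F$ unchanged and $\pi_m$ the partition that merges $\{v,v'\}$ into the $m$-th other block of $\pi$. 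Setting $c \explain{def}{=}\Omega_{ij}$, the bound $|c|\leq 1$ follows from Cauchy--Schwarz in the $\dim^{-1}$-normalized Frobenius inner product together with $\Omega_{ii}=\Omega_{jj}=1$ (cf.\ \eqref{eq:normalized-semirandom}).

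The hard part is verifying that each $(F_i,\pi_i)$ is semi-simple. For $(F_0,\pi_0)$: since $u,u'\in B_\star$ with $|B_\star|\geq 2$, the \ref{weak-forbidden-weights-property} is vacuous at the two vertices whose weights actually changed; any newly-created leaf ($u$ or $u'$) must have had $v$ (resp.\ $v'$) as its sole child, forcing $\qweight{}{}_{\text{new}}=0$ by the conservation equation and so satisfying \ref{original-leaf-property} vacuously; the \ref{parity-property} is preserved because deleting $\{v,v'\}$ removes an even contribution of $\|\pweight{v}{}\|_1+\|\pweight{v'}{}\|_1=2$, while the joint decrement of $1$ in each of $\|\qweight{u}{}\|_1,\|\qweight{u'}{}\|_1$ changes $B_\star$'s contribution by $2$; and the \ref{weak-sibling-property} is inherited since no new sibling pair is created. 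For each $(F_m,\pi_m)$: the merged block has size $\geq 3$, so $v,v'$ sit in a block where the weak forbidden weights and leaf properties fall into the vacuous case; all other blocks are untouched; and the merged block's parity contribution shifts by the even amount $\|\pweight{v}{}\|_1+\|\pweight{v'}{}\|_1=2$, preserving \ref{parity-property}. The combinatorial bookkeeping at $B_\star$ (both updates landing in the same block) and the size inflation of the merged block (to $\geq 3$) are the two structural features that make all the semi-simple checks go through; the removable-edge lemma (\lemref{eliminate-removable-edge}) then provides the template for packaging the argument.
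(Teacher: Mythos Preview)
Your proposal is correct and follows essentially the same approach as the paper's proof: sum out the color of the block $\{v,v'\}$ using the balanced-diagonal identity $(\mPsi_i\mPsi_j^\top)_{a_\star a_\star}=\Omega_{ij}$, take $c=\Omega_{ij}$, build $(F_0,\pi_0)$ by deleting the two leaves and $(F_m,\pi_m)$ by merging $\{v,v'\}$ into the $m$-th remaining block, and verify semi-simplicity via the observations that (i) $u\neq u'$ forces $|B_\star|\geq 2$ and (ii) each merged block has size $\geq 3$. The paper additionally spells out that $F_0$ remains a valid decorated $\order$-tree (in particular $|E_0|\geq 1$, which uses $u\neq u'$) and gives a slightly more careful case split for the \ref{parity-property} at $B_\star$ when $u$ or $u'$ becomes a leaf in $F_0$, but your sketch captures all the essential ideas.
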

\begin{proof}
See \appref{eliminate-pair}. 
\end{proof}

We postpone the proof of these intermediate results to the end of this subsection and present the proof of \propref{decomposition}. 

\begin{proof}[Proof of \propref{decomposition}]
The basic idea is that in order to decompose a relevant configuration into a linear combination of simple configurations, we will repeatedly apply \lemref{eliminate-removable-edge} to eliminate all removable edges and \lemref{eliminate-pair} to eliminate all removable edge pairs. This leads to the algorithm shown in \fref{decomposition-algorithm}. \propref{decomposition} will follow from the analysis of this algorithm. 
\begin{figure}[htb!]
\begin{algbox}
\textbf{Decomposition Algorithm} 

\vspace{1mm}

\textit{Input}: $(F_0,\pi_0)$, a relevant configuration. \\
\textit{Output}: $\calS$: A collection of simple configurations, and a map $\xi: \calS \rightarrow [-1,1]$. \\
\textit{Initialization} : $\iter{\calS}{0}:= \{(F_0,\pi_0)\}$, $\iter{\xi}{0}(F_0,\pi_0) := 1$.
\begin{itemize}
\item For $t \in \{1, 2, 3, \dotsc \}$
\begin{enumerate}
    \item Let $\iter{\calU}{t-1} := \{(F,\pi) \in \iter{\calS}{t-1}: (F,\pi)\text{ has at least one removable edge} \}$, and let ${u_{t-1}: = |\iter{\calU}{t-1}|}$. 
    \item Let $\iter{\calV}{t-1} := \{(F,\pi) \in \iter{\calS}{t-1} \backslash \iter{\calU}{t-1}: (F,\pi)\text{ has at least one removable edge-pair} \}$, and let  ${v_{t-1}: = |\iter{\calV}{t-1}|}$.
    \item If $u_{t-1}=v_{t-1} = 0$, end for loop. Otherwise,
    \begin{itemize}
        \item Let $\iter{\calU}{t-1} = \{(F_1,\pi_1), (F_2,\pi_2), \dotsc, (F_{u_{t-1}}, \pi_{u_{t-1}})\}$ be any enumeration of $\iter{\calU}{t-1}$. 
        \item Let $\iter{\calV}{t-1} = \{(G_1,\tau_1), (G_2,\tau_2), \dotsc, (G_{v_{t-1}}, \tau_{v_{t-1}})\}$ be any enumeration of $\iter{\calV}{t-1}$. 
        \item For each $i \in [u_{t-1}]$, decompose configuration $(F_i,\pi_i)$ using \lemref{eliminate-removable-edge} to obtain configurations $\{(F_{ij},\pi_{ij}: j \in \{ 1, \dotsc, |\pi_i|-1\}\}$ such that:
        \begin{align*}
    \Gamma(\mPsi_{1:\order}; F_i, \pi_i) \explain{}{=} - \sum_{j=1}^{|\pi_i|-1} \Gamma(\mPsi_{1:\order}; F_{ij}, \pi_{ij}).
\end{align*}
    \item For each $i \in [v_{t-1}]$, decompose configuration $(G_i,\tau_i)$ using \lemref{eliminate-pair} to obtain configurations $\{(G_{ij},\tau_{ij}: j \in \{0, 1, \dotsc, |\tau_i|-1\}\}$ and a constant $c_{it} \in [-1,1]$ such that:
        \begin{align*}
    \Gamma(\mPsi_{1:\order}; G_i, \tau_i) \explain{}{=} c_{it} \cdot \Gamma(\mPsi_{1:\order}; G_{i0}, \tau_{i0}) - \sum_{j=1}^{|\tau_i|-1} \Gamma(\mPsi_{1:\order}; G_{ij}, \tau_{ij}).
\end{align*}
        \item Update:
        \begin{subequations} \label{eq:update-eq-decomposition-algo}
        \begin{align}
            \iter{\calS}{t} &:= (\iter{\calS}{t-1} \backslash (\iter{\calU}{t-1} \cup \iter{\calV}{t-1})) \cup \left( \bigcup_{i=1}^{u_{t-1}} \bigcup_{j=1}^{|\pi_{i}|-1} \{(F_{ij},\pi_{ij})\}\right)  \nonumber\\& \hspace{6cm} \cup \left( \bigcup_{i=1}^{v_{t-1}} \bigcup_{j=0}^{|\tau_{i}|-1} \{(G_{ij},\tau_{ij})\}\right) \\
            \iter{\xi}{t}(F,\pi) &:= \iter{\xi}{t-1}(F,\pi) \; \forall \; (F,\pi) \; \in \; \iter{\calS}{t-1} \backslash (\iter{\calU}{t-1} \cup \iter{\calV}{t-1}), \\
             \iter{\xi}{t}(F_{ij},\pi_{ij}) &:= -\iter{\xi}{t-1}(F_i, \pi_i) \; \forall \; j \; \in \; [|\pi_i|-1], \; \forall \; i \; \in \; [u_{t-1}], \\
            \iter{\xi}{t}(G_{i0},\tau_{i0}) &:= c_{it} \cdot \iter{\xi}{t-1}(G_i,\tau_i) \; \forall \; i \; \in \; [v_{t-1}], \\
            \iter{\xi}{t}(G_{ij},\tau_{ij}) &:= -\iter{\xi}{t-1}(G_i, \tau_i) \; \forall \; j \; \in \; [|\tau_i|-1], \; \forall \; i \; \in \; [v_{t-1}].
        \end{align}
        \end{subequations}
    \end{itemize}
\end{enumerate}
\item \textit{Return} $\calS := \iter{\calS}{t-1}$, $\xi := \iter{\xi}{t-1}$.
\end{itemize}
\vspace{1mm}
\end{algbox}
\caption{Decomposing a relevant configuration into a collection of simple configurations} \label{fig:decomposition-algorithm}
\end{figure}
The proof of the proposition follows from the following sequence of arguments:
\begin{enumerate}
    \item  We claim that for any $t \geq 0$, $\calS^{(t)}$ is a collection of semi-simple configurations. This is true for $t=0$ since $\calS^{(0)} = \{(F_0, \pi_0)\}$ where $(F_0, \pi_0)$ is a relevant configuration by assumption and \lemref{relevant-is-semisimple} shows that relevant configurations are semi-simple. For $t\geq 1$, this claim follows by induction since the configurations generated by applying \lemref{eliminate-removable-edge} or \lemref{eliminate-pair} to a semi-simple configuration are also semi-simple. 
    \item Next we claim that for any $t \geq 0$, we have,
    \begin{align} \label{eq:induction-hypothesis}
        \Gamma(\mPsi_{1:\order}; F_0, \pi_0) & = \sum_{(F,\pi) \in \iter{\calS}{t}} \iter{\xi}{t}(F,\pi) \cdot  \Gamma(\mPsi_{1:\order}; F, \pi).
    \end{align}
    This is trivially true at $t=0$ since $\calS^{(0)} = \{(F_0, \pi_0)\}$ and $a^{(0)}(F_0, \pi_0) = 1$. For $t\geq 1$, this can be verified by induction. Suppose that the claim \eqref{eq:induction-hypothesis} holds for some $t \in \W$. Using the definition of $\calU^{(t)}$  and $\calV^{(t)}$ from the decomposition algorithm in \fref{decomposition-algorithm} we can write \eqref{eq:induction-hypothesis} as:
    \begin{align}\label{eq:ind-hypo-expanded}
        \Gamma(\mPsi_{1:\order}; F_0, \pi_0) & = \sum_{(F,\pi) \in \iter{\calS}{t}\backslash (\calU^{(t)} \cup \calV^{(t)})} \iter{\xi}{t}(F,\pi) \cdot  \Gamma(\mPsi_{1:\order}; F, \pi) \nonumber \\&\qquad  \qquad \qquad + \sum_{i=1}^{u_t} \iter{\xi}{t}(F_i,\pi_i) \cdot \Gamma(\mPsi_{1:\order}; F_i, \pi_i) + \sum_{i=1}^{v_t} \iter{\xi}{t}(G_i,\tau_i) \cdot \Gamma(\mPsi_{1:\order}; G_i, \tau_i).
    \end{align}
\lemref{eliminate-removable-edge} and \lemref{eliminate-pair} guarantee that the configurations $\{(F_{ij}, \pi_{ij})\}$, $\{(G_{ij}, \tau_{ij})\}$ generated by their application satisfy:
\begin{align}\label{eq:elimination-conclusion}
    \Gamma(\mPsi_{1:\order}; F_i, \pi_i) &\explain{}{=} - \sum_{j=1}^{|\pi_i|-1} \Gamma(\mPsi_{1:\order}; F_{ij}, \pi_{ij}) \quad \forall \; j \; \in \; [|\pi_i|], \; i \; \in \; [u_t], \\
    \Gamma(\mPsi_{1:\order}; G_i, \tau_i) &\explain{}{=} c_{it} \cdot \Gamma(\mPsi_{1:\order}; G_{i0}, \tau_{i0}) - \sum_{j=1}^{|\tau_i|-1} \Gamma(\mPsi_{1:\order}; G_{ij}, \tau_{ij}) \quad \forall \; j \; \in \; [|\tau_i|], \; i \; \in \; [v_t].
\end{align}
Plugging \eqref{eq:elimination-conclusion} in \eqref{eq:ind-hypo-expanded} and using the update formulae for the map $\iter{\xi}{t+1}$ given in \eqref{eq:update-eq-decomposition-algo} shows that \eqref{eq:induction-hypothesis} also holds at step $t+1$, as claimed. 
\item Notice that the algorithm terminates at step $t$ iff all semi-simple configurations in $\calS^{(t-1)}$ do not have any removable edges or removable edge-pairs. By \lemref{semi-simple-2-simple}, this means that at termination, $\calS : = \calS^{(t-1)}$ consists of simple configurations, as claimed. 
\item Next, we need to show that the algorithm terminates. In order to track the convergence of the algorithm define the potential:
\begin{align}\label{eq:potential}
    w_t & \explain{def}{=}  \max \{ |\pi| : (F,\pi) \in \iter{\calU}{t-1} \cup \iter{\calV}{t-1} \}.
\end{align}
If the algorithm does not terminate at iteration $t$, then each configuration $(F,\pi) \in \iter{\calU}{t-1} \cup \iter{\calV}{t-1}$ is replaced by a few configurations $(F^\prime, \pi^\prime)$ with $|\pi^\prime| = |\pi| - 1$ by an application of \lemref{eliminate-removable-edge} or \lemref{eliminate-pair}. If the algorithm does not terminate at iteration $t+1$ either, then $w_{t+1} \leq w_t - 1$. Observe that $w_1 = |\pi_0|$ (unless $(F_0,\pi_0)$ was already a simple configuration, in which case the algorithm terminates at the first iteration). Hence if the algorithm does not terminate at iteration $t$, then $w_t  \leq |\pi_0| - t +1$. On the other hand, if the algorithm does not terminate at iteration $t$, we must have $w_t \geq 2$. This is because any semi-simple configuration with a removable edge $u \rightarrow v$ has at least two blocks (one containing $v$ and the other containing $u$). Similarly a configuration with a removable edge pair $u \rightarrow v, u^\prime \rightarrow v^\prime$ also has at least two blocks (the block $\{v, v^\prime\}$ and the block containing $\{u, u^\prime\}$). Hence if the algorithm does not terminate at iteration $t$, $2 \leq w_t \leq |\pi_0| - t + 1 \implies t \leq |\pi_0| - 1$. Hence, the algorithm terminates by iteration $t = |\pi_0|$.  
\item Suppose the algorithm has not terminated at iteration $t$. We can bound the cardinality of $\iter{\calS}{t}$ as follows:
\begin{align*}
     |\iter{\calS}{t}| & \explain{(a)}{\leq} |\iter{\calS}{t-1}| - |\iter{\calU}{t-1}|- |\iter{\calV}{t-1}| + \left(\sum_{(F,\pi) \in \iter{\calU}{t-1}} |\pi| - 1 \right) + \left(\sum_{(F,\pi) \in \iter{\calV}{t-1}} |\pi| \right) \\
        & \explain{(b)}{\leq} |\iter{\calS}{t-1}|  + (w_{t}-2) \cdot  |\iter{\calU}{t-1}| +  (w_{t}-1) \cdot  |\iter{\calV}{t-1}|   \\
        & \explain{(c)}{\leq} w_{t} \cdot |\iter{\calS}{t-1}| \\
        & \explain{(d)}{\leq} (|\pi_0| - t + 1) |\iter{\calS}{t-1}| \\
        & \explain{(e)}{\leq} |\pi_0| \cdot (|\pi_0| - 1) \cdot \dotsb \cdot (|\pi_0| - t + 1).
\end{align*}
In the above display, step (a) follows from the update equation \eqref{eq:update-eq-decomposition-algo} and using the fact that in the decomposition algorithm of \fref{decomposition-algorithm} each configuration $(F,\pi) \in \iter{\calU}{t-1}$ is replaced by $|\pi|-1$ new configurations by an application of \lemref{eliminate-removable-edge} and  each configuration $(F,\pi) \in \iter{\calV}{t-1}$ is replaced by $|\pi|$ new configurations by an application of \lemref{eliminate-pair}. Step (b) follows from the definition of $w_t$ in \eqref{eq:potential}, step (c) uses the fact that $|\iter{\calU}{t-1}| + |\iter{\calV}{t-1}| \leq |\iter{\calS}{t-1}|$. In step (d) we used the bound $w_t \leq |\pi_0| - t + 1$ derived previously and step (e) follows from unrolling the recursive estimate. Since the algorithm terminates by iteration $t = |\pi_0|$, we obtain that $|\calS| \leq |\pi_0|!$ using the above estimate. 
\end{enumerate}
This concludes the proof of this proposition. 
\end{proof}

\subsubsection{Proof of \lemref{eliminate-removable-edge}} \label{appendix:eliminate-removable-edge}
\begin{proof}[Proof of \lemref{eliminate-removable-edge}]
Let $F = (V, E, \height{\cdot}, \pweight{\cdot}{}, \qweight{\cdot}{})$ and $\pi = \{B_1, B_2, \dotsc, B_{|\pi|}\}$ and $e_\star = u_\star \rightarrow v_\star$ denote the removable edge (\defref{removable-edge}). Hence, $v_\star \in \leaves{F}$ with $B_{\pi(v_\star)} = \{v_\star\}$. $\pweight{v_\star}{} = e_{i_\star} + e_{j_\star}$ for some $i_\star, j_\star \in [\order]$ with $i_\star \neq j_\star$ and $\Omega_{i_\star j_\star} = 0$. Without loss of generality, we can assume that $v_\star \in B_{|\pi|}$ and $u_\star \in B_{|\pi| - 1}$. Recall that from \eqref{eq:invariant-polynomial} that:
\begin{align*} 
    \Gamma(\mPsi_{1:\order}; F, \pi) \explain{def}{=} \sum_{\ell \in \colorings{}{\pi}} \gamma(\mPsi_{1:\order}; F, \ell)\text{, where }  \gamma(\mPsi_{1:\order}; F, \ell) \explain{def}{=} \frac{1}{\dim} \prod_{i=1}^\order \prod_{\substack{e \in E \\ e = u \rightarrow v}} (\Psi_i)_{\ell_u, \ell_v}^{\pweight{v}{i}}.
\end{align*}
With every $\ell \in \colorings{}{\pi}$, we associate a vector $a \in [\dim]^{|\pi|}$ such that $a_i$ denotes the color assigned by the coloring $\ell$ to vertices in $B_i$ for each $i \in [|\pi|]$. Hence we can express $\gamma(\mPsi_{1:\order}; F,\ell)$ as:
\begin{align*}
    \gamma(\mPsi_{1:\order}; F, \ell) \explain{}{=} \frac{1}{\dim} \prod_{i=1}^\order \prod_{\substack{e \in E \\ e = u \rightarrow v}} (\Psi_i)_{a_{\pi(u)}, a_{\pi(v)}}^{\pweight{v}{i}}.
\end{align*}
Since the block $B_{|\pi|} = \{v_\star\}$ consists of exactly one vertex, which is a leaf, the color $a_{|\pi|}$ appears exactly once in the product  $\gamma(\mPsi_{1:\order}; F, \ell)$ (in the term corresponding to the edge $u_\star \rightarrow v_\star$). We can isolate the occurence of the color $a_{|\pi|}$ using the factorization:
\begin{align*}
    \gamma(\mPsi_{1:\order}; F, \ell)  & = \widetilde{\gamma}(\mPsi_{1:\order}; F, a_{1}, a_2, \dotsc, a_{|\pi|-1}) \cdot (\Psi_{i_\star})_{a_{|\pi|-1}, a_{|\pi|}} \cdot (\Psi_{j_\star})_{a_{|\pi|-1}, a_{|\pi|}},
\end{align*}
where the factor $\widetilde{\gamma}(\mPsi_{1:\order}; F, a_{1}, a_2, \dotsc, a_{|\pi|-1})$ does not depend on $a_{|\pi|}$ and is defined as:
\begin{align*}
     \gamma(\mPsi_{1:\order}; F, a_{1}, a_2, \dotsc, a_{|\pi|-1}) \explain{}{=} \frac{1}{\dim} \prod_{i=1}^\order \prod_{\substack{e \in E \backslash \{e_\star\}\\ e = u \rightarrow v}} (\Psi_i)_{a_{\pi(u)}, a_{\pi(v)}}^{\pweight{v}{i}}.
\end{align*}
Hence,
\begin{align}
     &\Gamma(\mPsi_{1:\order}; F, \pi) \explain{}{=} \sum_{\substack{a_{1}, \dotsc, a_{|\pi| - 1} \\ a_i \neq a_j \forall i \neq j}}  \gamma(\mPsi_{1:\order}; F, a_{1},  \dotsc, a_{|\pi|-1}) \cdot \sum_{a_{|\pi|} \in [\dim] \backslash \{a_1, \dotsc, a_{|\pi| - 1}\}} (\Psi_{i_\star})_{a_{|\pi|-1}, a_{|\pi|}} \cdot (\Psi_{j_\star})_{a_{|\pi|-1}, a_{|\pi|}} \nonumber \\
     & = \sum_{\substack{a_{1}, \dotsc, a_{|\pi| - 1} \\ a_i \neq a_j \forall i \neq j}}  \gamma(\mPsi_{1:\order}; F, a_{1},  \dotsc, a_{|\pi|-1})  \left( \sum_{a_{|\pi|}=1 }^{\dim} (\Psi_{i_\star})_{a_{|\pi|-1}, a_{|\pi|}}  (\Psi_{j_\star})_{a_{|\pi|-1}, a_{|\pi|}} - \sum_{i=1}^{|\pi|-1} (\Psi_{i_\star})_{a_{|\pi|-1}, a_{i}}  (\Psi_{j_\star})_{a_{|\pi|-1}, a_{i}}\right) \nonumber \\
     & =  \sum_{\substack{a_{1}, \dotsc, a_{|\pi| - 1} \\ a_i \neq a_j \forall i \neq j}}  \gamma(\mPsi_{1:\order}; F, a_{1},  \dotsc, a_{|\pi|-1})  \left( (\mPsi_{i_\star} \mPsi_{j_\star}\tran)_{a_{|\pi|-1}, a_{|\pi|-1}} - \sum_{i=1}^{|\pi|-1} (\Psi_{i_\star})_{a_{|\pi|-1}, a_{i}}  (\Psi_{j_\star})_{a_{|\pi|-1}, a_{i}}\right) \nonumber \\
     & \explain{(a)}{=} -  \sum_{\substack{a_{1}, \dotsc, a_{|\pi| - 1} \\ a_i \neq a_j \forall i \neq j}}  \gamma(\mPsi_{1:\order}; F, a_{1},  \dotsc, a_{|\pi|-1})  \sum_{i=1}^{|\pi|-1} (\Psi_{i_\star})_{a_{|\pi|-1}, a_{i}}  (\Psi_{j_\star})_{a_{|\pi|-1}, a_{i}}. \label{eq:edge-removal}
\end{align}
In the above display, step (a) follows from the assumption that $(\mPsi_{i_\star} \mPsi_{j_\star}\tran)_{\ell,\ell} = \Omega_{i_\star j_\star} = 0$ made in the statement of \thref{moments}. We introduce $|\pi|-1$ new configurations $\{(F_i, \pi_i): i \in [|\pi|-1]\}$ defined as follows:
\begin{enumerate}
    \item For each $i \in [|\pi| - 1]$, we define $F_i \explain{def}{=} F$. 
    \item For each $i \in [|\pi| - 1]$, we define $\pi_i = \{B_1, B_2, \dotsc, B_{i} \cup B_{|\pi|}, B_{i+1}, \dotsc, B_{|\pi|-1}\}$, where $B_1, B_2, \dotsc, B_{|\pi|}$ were the blocks of $\pi$. 
\end{enumerate}
Observe that \eqref{eq:edge-removal} can be written as:
\begin{align*}
    \Gamma(\mPsi_{1:\order}; F, \pi) \explain{}{=} - \sum_{i=1}^{|\pi|-1} \Gamma(\mPsi_{1:\order}; F_i, \pi_i),
\end{align*}
as desired. In order to complete the proof of this we need to verify that the configurations $\{(F_i, \pi_i): i \in [|\pi|-1]\}$ are semi-simple. Indeed,
\begin{enumerate}
    \item Since a singleton block in $\pi_i$ must also be singleton block in $\pi$ and since $F_i = F$, the configuration $(F_i, \pi_i)$ inherits the \ref{original-leaf-property} and \ref{weak-forbidden-weights-property} from $(F,\pi)$.
    \item Since $\pi_i$ is formed by merging two blocks of $\pi$, the \ref{parity-property} is maintained in $(F_i, \pi_i)$. 
    \item Finally to verify the \ref{weak-sibling-property}, we simply need to check that there is no block $B$ in $\pi_i$ of the form:
    \begin{align} \label{eq:forbidden-block-form}
        B = \{u, v\} \text{ where $u,v$ are siblings and } \|\pweight{u}{}\|_1 =\|\pweight{v}{}\|_1 = 1.
    \end{align}
    Recall that the blocks of $\pi_i$ are $\pi_i = \{B_1, B_2, \dotsc, B_{i} \cup B_{|\pi|}, B_{i+1}, \dotsc, B_{|\pi|}\}$, where $B_1, B_2, \dotsc, B_{|\pi|}$ were the blocks of $\pi$. Since $(F,\pi)$ is semi-simple and satisfies the \ref{weak-sibling-property}, the blocks  $B_1, B_2, \dotsc, B_{i},  B_{i+1}, \dotsc, B_{|\pi|-1}$ are not of the form \eqref{eq:forbidden-block-form}. Furthermore the block $B_{i} \cup B_{|\pi|}$ is also not of the form \eqref{eq:forbidden-block-form} since $v_\star \in B_{|\pi|} \subset B_{i} \cup B_{|\pi|}$ and $\|\pweight{v_\star}{}\|_1 = 2$. 
\end{enumerate}
This concludes the proof.
\end{proof}

\subsubsection{Proof of \lemref{eliminate-pair}} \label{appendix:eliminate-pair}
\begin{proof}[Proof of \lemref{eliminate-pair}]
As mentioned previously, this result is a simple generalization of \citep[Lemma 10]{dudeja2022universality}, which considered the special case when $F$ was a decorated $\order$-tree with $k=1$. Hence, we will closely follow the proof of \citep[Lemma 10]{dudeja2022universality}. Let $F = (V, E, \height{\cdot}, \pweight{\cdot}{}, \qweight{\cdot}{})$. Let $e_\star = u_\star \rightarrow v_\star, \; e_\star^\prime = u^\prime_\star \rightarrow v^\prime_\star$ denote the removable edge pair. By \defref{removable-pair}, we know that:
\begin{align*}
   v_\star, v_\star^\prime \in \leaves{F}, \quad  B_{\pi(v_\star)} = B_{\pi(v_\star^\prime)} = \{v_\star, v_\star^\prime \}, \; \pweight{v_\star}{} = e_{i_\star}, \quad \pweight{v_\star^\prime}{} = e_{i_\star^\prime}
\end{align*}
for some $i_\star, i_\star^\prime \in [\order]$. In the above display $e_{1:\order}$ denote the standard basis vectors in $\R^\order$.  Since $(F,\pi)$ is semi-simple, by the \ref{weak-sibling-property} $v_\star, v_\star^\prime$ cannot be siblings. Consequently, $u_\star \neq u_\star^\prime$. We assume (without loss of generality) that $\pi = \{B_1, B_2, \dotsc, B_{|\pi|}\}$ with,\begin{align*}
    B_{|\pi|} & = \{v_\star,  v_\star^\prime \}, \; \{u_\star, u_\star^\prime\} \subset B_{|\pi|-1}.
\end{align*}
Recall that from \eqref{eq:invariant-polynomial} that:
\begin{align*} 
    \Gamma(\mPsi_{1:\order}; F, \pi) \explain{def}{=} \sum_{\ell \in \colorings{}{\pi}} \gamma(\mPsi_{1:\order}; F, \ell)\text{, where }  \gamma(\mPsi_{1:\order}; F, \ell) \explain{def}{=} \frac{1}{\dim} \prod_{i=1}^\order \prod_{\substack{e \in E \\ e = u \rightarrow v}} (\Psi_i)_{\ell_u, \ell_v}^{\pweight{v}{i}}.
\end{align*}
With every $\ell \in \colorings{}{\pi}$, we associate a vector $a \in [\dim]^{|\pi|}$ such that $a_i$ denotes the color assigned by the coloring $\ell$ to vertices in $B_i$ for each $i \in [|\pi|]$. Hence we can express $\gamma(\mPsi_{1:\order}; F,\ell)$ as:
\begin{align*}
    \gamma(\mPsi_{1:\order}; F, \ell) \explain{}{=} \frac{1}{\dim} \prod_{i=1}^\order \prod_{\substack{e \in E \\ e = u \rightarrow v}} (\Psi_i)_{a_{\pi(u)}, a_{\pi(v)}}^{\pweight{v}{i}}.
\end{align*}
Since the block $B_{|\pi|} = \{v_\star,v_\star^\prime\}$ consists of exactly two vertices, which are both leaves, the color $a_{|\pi|}$ appears exactly twice in the product  $\gamma(\mPsi_{1:\order}; F, \ell)$ (in the terms corresponding to the edges $u_\star \rightarrow v_\star$ and $u_\star^\prime \rightarrow v_\star^\prime$). We can isolate the occurrence of the color $a_{|\pi|}$ using the factorization:
\begin{align*}
    \gamma(\mPsi_{1:\order}; F, \ell)  & = \widetilde{\gamma}(\mPsi_{1:\order}; F, a_{1}, a_2, \dotsc, a_{|\pi|-1}) \cdot (\Psi_{i_\star})_{a_{|\pi|-1}, a_{|\pi|}} \cdot (\Psi_{i_\star^\prime})_{a_{|\pi|-1}, a_{|\pi|}},
\end{align*}
where the factor $\widetilde{\gamma}(\mPsi_{1:\order}; F, a_{1}, a_2, \dotsc, a_{|\pi|-1})$ does not depend on $a_{|\pi|}$ and is defined as:
\begin{align*}
     \gamma(\mPsi_{1:\order}; F, a_{1}, a_2, \dotsc, a_{|\pi|-1}) \explain{}{=} \frac{1}{\dim} \prod_{i=1}^\order \prod_{\substack{e \in E \backslash \{e_\star,e_\star^\prime\}\\ e = u \rightarrow v}} (\Psi_i)_{a_{\pi(u)}, a_{\pi(v)}}^{\pweight{v}{i}}.
\end{align*}
Hence,
\begin{align}
     &\Gamma(\mPsi_{1:\order}; F, \pi) \explain{}{=} \sum_{\substack{a_{1}, \dotsc, a_{|\pi| - 1} \\ a_i \neq a_j \forall i \neq j}}  \gamma(\mPsi_{1:\order}; F, a_{1},  \dotsc, a_{|\pi|-1}) \cdot \sum_{a_{|\pi|} \in [\dim] \backslash \{a_1, \dotsc, a_{|\pi| - 1}\}} (\Psi_{i_\star})_{a_{|\pi|-1}, a_{|\pi|}} \cdot (\Psi_{i_\star^\prime})_{a_{|\pi|-1}, a_{|\pi|}} \nonumber \\
     & = \sum_{\substack{a_{1}, \dotsc, a_{|\pi| - 1} \\ a_i \neq a_j \forall i \neq j}}  \gamma(\mPsi_{1:\order}; F, a_{1},  \dotsc, a_{|\pi|-1})  \left( \sum_{a_{|\pi|}=1 }^{\dim} (\Psi_{i_\star})_{a_{|\pi|-1}, a_{|\pi|}}  (\Psi_{i_\star^\prime})_{a_{|\pi|-1}, a_{|\pi|}} - \sum_{i=1}^{|\pi|-1} (\Psi_{i_\star})_{a_{|\pi|-1}, a_{i}}  (\Psi_{i_\star^\prime})_{a_{|\pi|-1}, a_{i}}\right) \nonumber \\
     & =  \sum_{\substack{a_{1}, \dotsc, a_{|\pi| - 1} \\ a_i \neq a_j \forall i \neq j}}  \gamma(\mPsi_{1:\order}; F, a_{1},  \dotsc, a_{|\pi|-1})  \left( (\mPsi_{i_\star} \mPsi_{i_\star^\prime}\tran)_{a_{|\pi|-1}, a_{|\pi|-1}} - \sum_{i=1}^{|\pi|-1} (\Psi_{i_\star})_{a_{|\pi|-1}, a_{i}}  (\Psi_{i_\star^\prime})_{a_{|\pi|-1}, a_{i}}\right) \nonumber \\
     & \explain{(a)}{=}  \Omega_{i_\star i_\star^\prime} \sum_{\substack{a_{1}, \dotsc, a_{|\pi| - 1} \\ a_i \neq a_j \forall i \neq j}}  \gamma(\mPsi_{1:\order}; F, a_{1},  \dotsc, a_{|\pi|-1}) -  \sum_{\substack{a_{1}, \dotsc, a_{|\pi| - 1} \\ a_i \neq a_j \forall i \neq j}}  \gamma(\mPsi_{1:\order}; F, a_{1},  \dotsc, a_{|\pi|-1})  \sum_{i=1}^{|\pi|-1} (\Psi_{i_\star})_{a_{|\pi|-1}, a_{i}}  (\Psi_{j_\star})_{a_{|\pi|-1}, a_{i}}. \label{eq:pair-removal}
\end{align}
In the above display, step (a) follows from the assumption that $(\mPsi_{i_\star} \mPsi_{i_\star^\prime} \tran)_{\ell,\ell} = \Omega_{i_\star i_\star^{\prime}}$ made in the statement of \thref{moments}. Next we define the configurations $(F_i, \pi_i)$ for $i = 0, 1, \dotsc, |\pi| - 1$. Recall that the original decorated $\order$-tree was given by $F = (V, E, \height{\cdot}, \pweight{\cdot}{}, \qweight{\cdot}{})$ and the original partition $\pi$ was given by $\pi = \{B_1, \dotsc, B_{|\pi|}\}$. Then,
\begin{enumerate}
    \item We define the decorated $\order$-tree $F_0 = (V_0, E_0, \height[0]{\cdot}, \pweight{\cdot}{0}, \qweight{\cdot}{0})$ as follows:
    \begin{enumerate}
        \item The vertex set is given by: $V_0 = V \backslash \{v_\star, v_\star^\prime\}$.
        \item The edge set is given by: $E_0 = E \backslash \{e_\star, e_\star^\prime\}$.
    \end{enumerate}
    This defines a directed graph $(V_0, E_0)$. It is straightforward to check that since $F$ was a directed tree with root $0$, $F_0$ is also a directed tree with root $0$.    Next we define the functions $\height[0]{\cdot}, \pweight{\cdot}{0}, \qweight{\cdot}{0}$:
    \begin{enumerate}
        \setcounter{enumii}{2}
        \item We set $\height[0]{v} = \height{v}$ for any $v \in V_0$. 
        \item We set $\pweight{v}{0} = \pweight{v}{}$ for any $v \in \iter{V}{0} \backslash \{0\}$\footnote{Notice that $\pweight{v}{}, \pweight{v}{0} \in \W^{\order}$. We will use the notations $\pweight{v}{j}$ and $\pweight{v}{0j}$ to refer to the coordinate $j$ of $\pweight{v}{}$ and $\pweight{v}{0}$. }. 
        \item We set  $\qweight{v}{0} = \qweight{v}{}$ for any $v \in V_0 \backslash \{u_\star, u_\star^\prime\}$. We set $\qweight{u_\star}{0}, \qweight{u_\star^\prime}{0} \in \W^{\order}$ as follows:
        \begin{subequations} \label{eq:ensure-conservation}
        \begin{align}
            \qweight{u_\star}{0} & = \qweight{u_\star}{} - \pweight{v_\star}{}, \\
            \qweight{u_\star^\prime}{0} & = \qweight{u_\star^\prime}{} - \pweight{v_\star^\prime}{}.
        \end{align}
        \end{subequations}
    \end{enumerate}
    We check that $F_0$ is a decorated $\order$-tree (in the sense of \defref{decorated-tree}).  Observe that $u_\star, u_\star^\prime \in V_0$. Recall that $u_{\star} \neq u_\star^\prime$, and hence $F_0$ has at least one non-root vertex (since $u_\star = u_\star^\prime = 0$ is not possible). Hence $|E_0| \geq 1$, as required by \defref{decorated-tree}. Since $\height[0]{\cdot} = \height{\cdot}$ on $V_0$, $\height[0]{\cdot}$ satisfies all the requirements described in \defref{decorated-tree}. \eqref{eq:ensure-conservation} ensures $\pweight{\cdot}{0}, \qweight{\cdot}{0}$ satisfy the conservation equation \eqref{eq:conservation-eq} in \defref{decorated-tree}. Furthermore since $|E_0| \geq 1$, the root vertex $0$ has alteast one child in $F_0$ and hence by the the conservation equation \eqref{eq:conservation-eq}, $\|\qweight{0}{}\|_1 \geq 1$, as required by \defref{decorated-tree}. Hence $F_0$ satisfies all the requirements described in \defref{decorated-tree}. We set $\iter{\pi}{0} = \{B_1, B_3, \dotsc, B_{|\pi|-1}\}$. Observe that this is a valid partition (\defref{partition}) of $V_0$. 
    \item For $i \geq 1$, we set $F_i = F$ and $\pi_i = \{B_1, B_3, \dotsc, B_{i-1}, B_{i} \cup\{v_\star, v_\star^\prime\}, B_{i+1}, \dotsc, B_{|\pi|-1}\}$. Observe that $\pi_i$ is a valid partition of $V$ (\defref{partition}) which is the vertex set of $F_i$.
\end{enumerate}
Using these definitions,  \eqref{eq:pair-removal} can be written as:
\begin{align} \label{eq:pair-removal-2}
     \Gamma(\mPsi_{1:\order}; F,\pi) = \Omega_{i_\star i_\star^\prime} \cdot  \Gamma(\mPsi_{1:\order}; F_0,\pi_0) -  \sum_{i=1}^{|\pi| - 1} \Gamma(\mPsi_{1:\order}; F_i,\pi_i).
\end{align}
Since $\Omega$ (the limiting covariance matrix of the semi-random ensemble, cf. \defref{semirandom}) is psd,  $$|\Omega_{i_\star i_\star^\prime}| \leq \sqrt{\Omega_{i_\star i_\star}\Omega_{i^\prime_\star i^\prime_\star}} = 1.$$
Hence the expression \eqref{eq:pair-removal-2} is of the form claimed in the statement of the lemma. To complete the proof, we need to verify that each $(F_i, \pi_i)$ is a semi-simple configuration, and we do so next. For each $i \in \{0, 1, 2, \dotsc, |\pi|-1\}$ let us denote the components of the decorated $\order$-trees $F_i$\footnote{Recall that $\pweight{v}{i}, \qweight{v}{i} \in \W^{\order}$. We can use the notation $\pweight{v}{ij}, \qweight{v}{ij}$ to refer to the coordinate $j$ of $\pweight{v}{i},\qweight{v}{i}$ if required.} and the blocks of the partition $\pi_i$ as follows:
\begin{align*}
    F_i &= (V_i, E_i, \height[i]{\cdot}, \pweight{\cdot}{i}, \qweight{\cdot}{i}), \\
    \pi_i &= \{B_{i,1}, B_{i,2}, \dotsc, B_{i,|\pi_i|}\}. 
\end{align*}
We begin by making the following observations for any $i \geq 0$:
\begin{description}
\item [Observation 1: ] For any vertex $v \in V_i$, either $B_{i,\pi_i(v)} = B_{\pi(v)}$ (that is, the block of $v$ is unchanged) or $|B_{i,\pi_i(v)}| \geq 3$ (that is, the block of $v$ has cardinality at least $3$).  The latter scenario covers the case when the leaf vertices $\{v_\star, v_\star^\prime\}$ of the removable edge-pair are added to the block $B_{\pi(v)}$ to form $B_{i,\pi_i(v)}$.
\item [Observation 2: ] The deletion of edges $u_\star \rightarrow v_\star$ and $u_\star^\prime \rightarrow v_\star^\prime$, might result in $u_\star$ or $u_\star^\prime$ become leaves in $F_0$. Hence, $\leaves{F_0} \subset \leaves{F} \cup \{u_\star, u_\star^\prime\}$. On the other hand since $F_i = F$ for $i \geq 1$,  $\leaves{F_i} = \leaves{F}$ for $i \geq 1$.
\item [Observation 3: ] $u_\star \in \leaves{F_0}\backslash \leaves{F}$ iff $u_\star$ had exactly one child in $F$ namely, $v_\star$. Hence, by the conservation equation \eqref{eq:conservation-eq}, $u_{\star}\in \leaves{F_0}\backslash \leaves{F}$ is a leaf in $F_0$ iff $\|\qweight{u_\star}{}\|_1 = 1$. Furthermore, in this situation by \eqref{eq:ensure-conservation}, $\|\qweight{u_\star}{0}\|_1 = 0$. The same observation holds for $u_\star^\prime$. 
\end{description}

In order to check $\{(F_i, \pi_i)\}$ is a collection of semi-simple configurations, we check each of the requirements of \defref{semi-simple-config}:

\begin{enumerate}
    \item  In order to verify the \ref{weak-sibling-property}, for the sake of contradiction assume that  $\pi_i$ has a block of the form $\{u,v\}$ where $\{u,v\}$ are siblings in $F_i$ and $\|\pweight{u}{0}\|_1 = \|\pweight{v}{0}\|_1$ Observe $\{u,v\}$ are also siblings in $F$ and that $\pweight{u}{} = \pweight{u}{0}$, $\qweight{u}{} = \qweight{u}{0}$.  Furthermore, $\{u,v\}$ is also a block of $\pi$ (Observation~2). This contradicts the \ref{weak-sibling-property} of $(F,\pi)$. 
    \item In order to verify the \ref{weak-forbidden-weights-property}, for the sake of contradiction suppose that there is a $v \in V_i\backslash \{0\}$ such that $|B_{i,\pi_i(v)}| = 1$ and either:
    \begin{enumerate}
        \item $\|\pweight{v}{i}\|_1 = 1, \|\qweight{v}{i}\|_1 = 1$ or,
        \item $\pweight{v}{i} = 2 e_j, \qweight{u}{i} = 0$ for some $j \in [\order]$, where $e_{1:\order}$ denote the standard basis vectors in $\R^{\order}$. 
        \item $\pweight{u}{i}= e_{j} + e_{j^\prime}, \qweight{u}{i} = 0$ for some $j,j^\prime \in [\order], \; j \neq j^\prime$ such that $\Omega_{jj^\prime} \neq 0$. Here, $e_{1:\order}$ denote the standard basis vectors in $\R^{\order}$ and $\Omega \in \R^{\order \times \order}$ is the limiting covariance matrix corresponding to the semi-random ensemble $\mM_{1:k}$ (cf. \defref{semirandom}). 
    \end{enumerate} 
    By Observation 1, $|B_{{\pi}(v)}| = 1$ and hence $v \not\in \{u_\star, u_\star^\prime, v_\star, v_\star^\prime\}$. This means that $\pweight{v}{} = \pweight{v}{i}$ and $\qweight{v}{} = \qweight{v}{i}$. This leads to a contradiction of the \ref{weak-forbidden-weights-property} for $(F,\pi)$.
    \item In order to verify the \ref{original-leaf-property}, for the sake of contradiction, assume that there is a leaf ${v \in \leaves{F_i}}$ with $|B_{i,\pi_i(v)}| = 1$ and $\|\qweight{v}{i}\|_1 \geq 1$. By Observation 1, $|B_{\pi(v)}| = 1$. Observe that $v \not\in \{u_\star, u_\star^\prime, v_\star, v_\star^\prime\}$ since these vertices belong to blocks of size at least $2$ in $\pi$. Hence $v \in \leaves{F}$ (see Observation 2), $\qweight{v}{} = \qweight{v}{0}$. The existence of such a $v$ contradicts the \ref{original-leaf-property} of $(F,\pi)$.
    \item  Lastly, we verify the \ref{parity-property}. For the configuration $(F_i, \pi_i)$ for $i \geq 1$, we recall that $F_i = F$ and $\pi_i$ is obtained by merging some blocks of $\pi$. Notice that the \ref{parity-property} is not disturbed by merging some blocks. Hence, $(F_i, \pi_i)$ automatically satisfies \ref{parity-property} for $i \geq 1$. Now, we verify the configuration $(F_0, \pi_0)$ also satisfies \ref{parity-property}. Recall that $\pi_0 = \{B_{0,1}, B_{0,2}, \dots, B_{0,|\pi|-1}\}$ where $B_{0,i} = B_{i}$. Observe that for any $i < |\pi|-1$, since $u_\star, u_\star^\prime \not\in {B}_{0,i}$,
    \begin{align*}
         \pweight{u}{0} = \pweight{u}{}, \quad \qweight{u}{0} = \qweight{u}{} \; \forall \; u \; \in \; B_{0,i} = B_{i}. 
    \end{align*}
    Furthermore, recalling Observation 2 and the fact that $u_\star, u_\star^\prime \not\in {B}_{0,i}$, we have for any $i < |\pi| - 1$:
    \begin{subequations} \label{eq:set-relations}
    \begin{align}
        B_{0,i} \cap \leaves{F_0}   &= {B}_{i} \cap \leaves{F}, \\
        B_{0,i} \backslash (\leaves{F_0} \cup \{0\}) & = B_{i} \backslash (\leaves{F} \cup \{0\}). 
    \end{align}
    \end{subequations}
    Hence, for any $i < |\pi| -1$, we have,
    \begin{align*}
           &\|\qweight{0}{0}\|_1 \cdot \vone_{0 \in B_{0,i}} + \left(\sum_{u \in B_{0,i} \backslash ( \leaves{F_0} \cup \{0\})} \|\pweight{u}{0}\|_1 + \|\qweight{u}{0}\|_1 \right) + \left( \sum_{v \in B_{0,i} \cap \leaves{F_0}} \|\pweight{v}{0}\|_1 \right) \\
           & \qquad \qquad = \|\qweight{0}{}\|_1 \cdot \vone_{0 \in B_{i}} + \left(\sum_{u \in B_{i} \backslash ( \leaves{F} \cup \{0\})} \|\pweight{u}{}\|_1 + \|\qweight{u}{}\|_1 \right) + \left( \sum_{v \in B_{i} \cap \leaves{F}} \|\pweight{v}{}\|_1 \right).
    \end{align*}
    Note that the RHS of the above display is even because $(F,\pi)$ satisfies the \ref{parity-property}. Hence, we have verified for all blocks of $\pi_0$ except $B_{0,|\pi|-1}$. Lastly, we verify the \ref{parity-property} for $B_{0,|\pi|-1}$. Recall that $\{u_\star, u_\star^\prime\} \subset B_{0,|\pi|-1} = B_{|\pi|-1}$ and $\|\qweight{u_\star}{0}\|_1 = \|\qweight{u_\star}{}\|_1 - 1$, $\|\qweight{u_\star^\prime}{0}\|_1 = \|\qweight{u_\star^\prime}{}\|_1 - 1$ (cf. \eqref{eq:ensure-conservation}). If $u_\star, u_\star^\prime \not\in \leaves{F_0}$, then $\leaves{F} = \leaves{F_0}$ and the set equalities \eqref{eq:set-relations} continue to hold for $i=|\pi| - 1$ and we have:
    \begin{align} \label{eq:parity-property-special-block}
         &\|\qweight{0}{0}\|_1 \cdot \vone_{0 \in B_{0,|\pi| - 1}} + \left(\sum_{u \in B_{0,|\pi| - 1} \backslash ( \leaves{F_0} \cup \{0\})} \|\pweight{u}{0}\|_1 + \|\qweight{u}{0}\|_1 \right) + \left( \sum_{v \in B_{0,|\pi| - 1} \cap \leaves{F_0}} \|\pweight{v}{0}\|_1 \right) \nonumber \\
           & \quad = \|\qweight{0}{}\|_1 \cdot \vone_{0 \in B_{|\pi| - 1}} + \left(\sum_{u \in B_{|\pi| - 1} \backslash ( \leaves{F} \cup \{0\})} \|\pweight{u}{}\|_1 + \|\qweight{u}{}\|_1 \right) + \left( \sum_{v \in B_{|\pi| - 1} \cap \leaves{F}} \|\pweight{v}{}\|_1 \right) - 2.
    \end{align}
    In the above display the term $-2$ accounts for the fact that $\|\qweight{u_\star}{0}\|_1 = \|\qweight{u_\star}{}\|_1 - 1$, $\|\qweight{u_\star^\prime}{0}\|_1 = \|\qweight{u_\star^\prime}{}\|_1 - 1$. Furthermore, due to Observation~3, \eref{parity-property-special-block} holds even if $u_\star \in \leaves{F_0}$ or $u_\star^\prime \in \leaves{F_0}$ (or both) since $\|\qweight{u_\star}{0}\|_1 = 0$ or $\|\qweight{u_\star^\prime}{0}\|_1 = 0$ (or both) in these scenarios. Since $(F,\pi)$ satisfies the \ref{parity-property}, the RHS of \eqref{eq:parity-property-special-block} is even. This verifies the \ref{parity-property} for all blocks of the configuration $(F_0, \pi_0)$.
\end{enumerate}
This concludes the proof of \lemref{eliminate-pair}. 
\end{proof}

\section{Concentration Analysis}\label{appendix:concentration}
This appendix provides a proof for the concentration estimate for the MVAMP iteration stated in \thref{variance}. We begin by introducing some useful notations that we use through out this appendix.  
\paragraph{Some Additional Notation: } Observe that there are three sources of randomness in the MVAMP iterations \eqref{eq:mvamp}:
\begin{enumerate}
    \item The random sign diagonal matrix $\mS = \diag(s_{1:\dim})$ with $s_{1:\dim} \explain{i.i.d.}{\sim} \unif{\{\pm 1\}}$ used to generate the semi-random ensemble $\mM_{1:\order}$. 
    \item The $\dim \times \auxdim$ matrix of side information $\auxmat$, whose rows $\auxvec_{1:\dim} \in \R^{\auxdim}$ are i.i.d. copies of a random vector $\serv{A}$ (cf. \assumpref{side-info}). 
    \item The $\dim \times \order$ Gaussian matrix $\mG$ whose rows $g_{1:\dim}$ are sampled i.i.d. from $\gauss{0}{I_{\order}}$ which is used to generate the initialization $\iter{\mZ}{0,\cdot} := \mG$.
\end{enumerate}
In this section, we will make the dependence of the MVAMP iterates on these random variables explicit by using the notations:
\begin{align*}
    \iter{\vz}{t,i}(\mS, \auxmat, \mG), \quad \iter{\mZ}{t,\cdot}(\mS, \auxmat, \mG), \quad \iter{\mZ}{\cdot,i}(\mS, \auxmat, \mG) 
\end{align*}
when needed. 
In order to apply the Efron-Stein Inequality, we will need to develop estimates on the perturbation introduced in the iterates when a single sign, the corresponding row of $\auxmat$, and the corresponding row of $\mG$ is changed. Hence, we introduce the following setup. Let $(\hat{\mS}, \hat{\auxmat}, \hat{\mG})$ be an independent copy of $(\mS, \auxmat, \mG)$. Let $\hat{\mS} = \diag(\hat{s}_{1:\dim})$ and let the rows of $\hat{\auxmat}$ and $\hat{\mG}$ be denoted by $\hat{\auxvec}_{1:\dim} \in \R^{\auxdim}$ and $\hat{g}_{1:\dim} \in \R^{\order}$ respectively. For each $i \in [\dim]$ we define:
\begin{enumerate}
    \item A sign diagonal matrix $\iter{\mS}{i} \explain{def}{=} \diag(s_1, s_2, \dotsc, s_{i-1}, \hat{s}_i, s_{i+1}, \dotsc s_{\dim})$. 
    \item A $\dim \times \auxdim$ matrix $\iter{\auxmat}{i}$ with rows $\auxvec_1, \auxvec_2, \dotsc, \auxvec_{i-1}, \hat{\auxvec}_i, \auxvec_{i+1}, \dotsc, \auxvec_\dim$.
    \item A $\dim \times \order$ matrix $\iter{\mG}{i}$ with rows $g_1, g_2, \dotsc, g_{i-1}, \hat{g}_i, g_{i+1}, \dotsc, g_\dim$.
\end{enumerate}
Finally, for each $i \in [\dim]$, $t \in [T]$, and $j \in [\order]$, we introduce the perturbed iterations:
\begin{align} \label{eq:pert-iterate}
    \iter{\vz}{t,j | i} \explain{def}{=} \iter{\vz}{t,j}(\iter{\mS}{i}, \iter{\auxmat}{i}, \iter{\mG}{i}).
\end{align}
These are iterates generated by the sign diagonal matrix $\iter{\mS}{i}$, side information matrix $\iter{\auxmat}{i}$, and Gaussian initialization $\iter{\mG}{i}$. We use $ \iter{\vz}{t,j | \emptyset}$ to refer to the unperturbed iterations:
\begin{align} \label{eq:unpert-iterate}
     \iter{\vz}{t,j | \emptyset }  \explain{def}{=} \iter{\vz}{t,j}(\mS, \auxmat, \mG).
\end{align}
These are iterates generated by the sign diagonal matrix ${\mS}{}$, side information matrix ${\auxmat}{}$, and Gaussian initialization ${\mG}{}$. 
Finally, we define the perturbation error vectors as follows:
\begin{align} \label{eq:pert-errors}
    \iter{\vDelta}{t,j | i} \explain{def}{=} \iter{\vz}{t,j | i} - \iter{\vz}{t,j | \emptyset} \in \R^{\dim}
\end{align}

\paragraph{}The following lemma presents the estimates on the perturbation error vectors required by Efron-Stein Inequality. 

\begin{lemma}[Perturbation Bounds]\label{lem:perturbation-estimate-efron-stein} Suppose that the polynomial nonlinearities $\nonlin_{1:\order}$ satisfy the requirements stated in \thref{moments}. Then, for any fixed $T \in \N$, 
\begin{subequations}
\begin{align}
   \max_{t \leq T} \max_{j \in [\order]} \max_{i \in [\dim]} \left\|\iter{\vDelta}{t,j | i} \right\| &\leq C \cdot \rho_T(\mPsi_{1:\order}, \mS, \hat{\mS}, \auxmat, \hat{\auxmat}, \mG, \hat{\mG})^{T+1}, \label{eq:pert-claim1} \\
    \max_{t \leq T}  \max_{j \in [\order]} \left\| \sum_{i=1}^\dim \iter{\vDelta}{t,j | i}  \cdot {\iter{\vDelta}{t,j | i}} \tran  \right\|_{\op} & \leq C \cdot \rho_T(\mPsi_{1:\order}, \mS, \hat{\mS}, \auxmat, \hat{\auxmat}, \mG, \hat{\mG})^{2(T+1)^2}, \label{eq:pert-claim2}
\end{align}
\end{subequations}
where,
\begin{align}
    \rho_T(\mPsi_{1:\order}, \mS, \hat{\mS}, \auxmat, \hat{\auxmat}, \mG, \hat{\mG}) &\explain{def}{=}  (1 + \max_{j \in [\order]} \|\mPsi_i\|_{\op})  \times (1 + \|\mG\|_{\infty} + \|\hat{\mG}\|_{\infty})\times  (1 + \|\auxC(\auxmat)\|_{\infty} + \|\auxC(\hat{\auxmat})\|_{\infty}) \nonumber\\  & \qquad \qquad \qquad \qquad \qquad  \times (1 + \max_{\substack{i \in [\dim], t \leq [T]}} \|\iter{\mZ}{t,\cdot | i}\|_{\infty}^\degree + \max_{t \leq T} \|\iter{\mZ}{t,\cdot | \emptyset }\|_{\infty}^\degree). \label{eq:rho-def}
\end{align}
In the above display,
\begin{enumerate}
    \item $C$ is a finite constant that depends only on $T$, $\order$, $\degree$ (maximum degree of the polynomials $\nonlin_{1:k}(\cdot\; ;\;  \auxvec)$ from the statement of \thref{moments}).
    \item $\auxC : \R^{\auxdim} \mapsto [0,\infty)$ is a function (independent of $\dim$) that is determined by $\nonlin_{1:\order}$ and satisfies $\E[|\auxC(\serv{A})|^p] < \infty$ for each $p \in \W$. 
    \item The various norms are defined as follows:
\end{enumerate}
\begin{align*}
    \|\mG\|_{\infty} & \explain{def}{=} \max_{j \in [\order]\ell \in [\dim]} |G_{\ell j}|, \quad \|\hat{\mG}\|_{\infty} \explain{def}{=} \max_{j \in [\order]\ell \in [\dim]} |\hat{G}_{\ell j}| \\
    \|\auxC(\auxmat)\|_{\infty} & \explain{def}{=} \max_{\ell \in [\dim]} |\auxC(\auxvec_\ell)|, \quad \|\auxC(\hat{\auxmat})\|_{\infty}  \explain{def}{=} \max_{\ell \in [\dim]} |\auxC(\hat{\auxvec}_\ell)|, \\
    \|\iter{\mZ}{t,\cdot | i}\|_{\infty} & \explain{def}{=} \max_{j \in [\order],\ell \in [\dim]} |\iter{z}{t,j | i}_{\ell}|, \quad \|\iter{\mZ}{t,\cdot | \emptyset}\|_{\infty}  \explain{def}{=} \max_{j \in [\order],\ell \in [\dim]} |\iter{z}{t,j |. \emptyset}_{\ell}|.    
\end{align*}
\end{lemma}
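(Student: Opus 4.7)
The proof will proceed by a simultaneous induction on the MVAMP iteration index $t \in \{0, 1, \dots, T\}$, establishing both estimates in parallel. The central analytical input is that each $\nonlin_j(z; \auxvec)$ is a polynomial of degree at most $\degree$ in $z \in \R^{\order}$, so there exist a constant $L$, an exponent $\degree' = O(\degree)$, and a polynomial function $\auxC:\R^{\auxdim}\to[0,\infty)$ with $\E[\auxC(\serv{A})^p]<\infty$ for every $p \in \N$, for which
\begin{align*}
  |\nonlin_j(z; \auxvec) - \nonlin_j(z'; \auxvec)| &\le L(1 + \|z\|^{\degree'} + \|z'\|^{\degree'})(1 + \auxC(\auxvec))\|z - z'\|,\\
  |\nonlin_j(z; \auxvec) - \nonlin_j(z; \auxvec')| &\le L(1 + \|z\|^{\degree'})(1 + \auxC(\auxvec) + \auxC(\auxvec')).
\end{align*}
Together with the identity $\|\mM_j\|_{\op} = \|\mPsi_j\|_{\op}$, these estimates will drive the recursion; every factor that appears can be absorbed into the quantity $\rho_T$ defined in \eqref{eq:rho-def}.

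For the pointwise bound \eqref{eq:pert-claim1}, the base case $t = 0$ is immediate because $\iter{\vDelta}{0,j | i}$ vanishes outside coordinate $i$ and has $i$-th entry $\hat{G}_{ij} - G_{ij}$, bounded by $\rho_T$. For the inductive step, I will write $\iter{\vDelta}{t,j | i} = \mM_j \vdelta^{(t,j | i)}$, where $\vdelta^{(t,j | i)}_\ell$ is the difference of two evaluations of $\nonlin_j$ at row $\ell$. Splitting this difference at each $\ell$ into an auxiliary/sign-change part (supported only at $\ell=i$ and controlled by the second display above) and an iterate-change part (present at every $\ell$ and controlled Lipschitz-wise by $\|\iter{z}{t-1,\cdot | i}_\ell - \iter{z}{t-1,\cdot | \emptyset}_\ell\|$) will yield the recursion
\begin{align*}
  \|\iter{\vDelta}{t,j | i}\| \le C \|\mPsi_j\|_{\op}\, \rho_T + C \|\mPsi_j\|_{\op}\, \rho_T\, \sqrt{\textstyle\sum_{j'=1}^{\order} \|\iter{\vDelta}{t-1,j' | i}\|^2}.
\end{align*}
Each recursion step multiplies by a factor at most $C\rho_T$, so unrolling yields $\|\iter{\vDelta}{t,j | i}\| \le C \rho_T^{T+1}$.

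For the outer-product bound \eqref{eq:pert-claim2}, the base case is again favorable: the vectors $\{\iter{\vDelta}{0,j | i}\}_{i \in [\dim]}$ have pairwise disjoint supports, so $\sum_i \iter{\vDelta}{0,j | i}(\iter{\vDelta}{0,j | i})^\tran$ is diagonal with entries at most $\rho_T^2$. For $t \ge 1$ the identity
\begin{align*}
  \sum_{i=1}^{\dim} \iter{\vDelta}{t,j | i}(\iter{\vDelta}{t,j | i})^\tran = \mM_j \Bigl(\sum_{i=1}^{\dim} \vdelta^{(t,j | i)}(\vdelta^{(t,j | i)})^\tran\Bigr) \mM_j^\tran
\end{align*}
reduces the claim to controlling the inner sum. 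I decompose $\vdelta^{(t,j | i)} = r^{(i)} e_i + \tilde{\vdelta}^{(t,j | i)}$, where $|r^{(i)}| \le C\rho_T$ absorbs the auxiliary/sign perturbation at coordinate $i$, and a mean-value expansion of $\nonlin_j$ produces $\tilde{\vdelta}^{(t,j | i)}_\ell = \sum_{j'=1}^{\order} \alpha_{j,j',\ell}^{(i)}\, \iter{\Delta}{t-1,j' | i}_\ell$ with uniformly bounded coefficients $|\alpha_{j,j',\ell}^{(i)}| \le C\rho_T$. The sparse $r$-piece contributes $\sum_i (r^{(i)})^2 e_i e_i^\tran$, a diagonal matrix of operator norm at most $C\rho_T^2$.

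The main technical obstacle is the dense piece $\sum_i \tilde{\vdelta}^{(t,j | i)}(\tilde{\vdelta}^{(t,j | i)})^\tran$: because the coefficients $\alpha_{j,j',\ell}^{(i)}$ depend on $i$, this is a Hadamard-type construction rather than a clean linear image of the inductive object $\sum_i \iter{\vDelta}{t-1,j' | i}(\iter{\vDelta}{t-1,j' | i})^\tran$, so the sharp inductive estimate cannot be applied directly. My plan is to bound $|\tilde{\vdelta}^{(t,j | i)}_\ell| \le C\rho_T \sum_{j'} |\iter{\Delta}{t-1,j'| i}_\ell|$, expand the outer product into $\order^2$ pairs indexed by $(j',j'')$, and on each pair apply Cauchy--Schwarz in the summation over $i$ combined with the pointwise bound \eqref{eq:pert-claim1} on one of the two slots. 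This will extract a factor of $\rho_T^{2(T+1)}$ per recursion step (the $T+1$ exponent originating from \eqref{eq:pert-claim1}), and telescoping through $T$ levels of recursion will yield the total exponent $2(T+1)^2$ in the statement.
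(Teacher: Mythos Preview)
Your argument for \eqref{eq:pert-claim1} is essentially the paper's, modulo one imprecision: the identity $\iter{\vDelta}{t,j|i} = \mM_j\,\vdelta^{(t,j|i)}$ is not literally correct because the sign matrix itself is perturbed, so $\iter{\mS}{i}\mPsi_j\iter{\mS}{i} \ne \mS\mPsi_j\mS$. The paper handles this with a three-term split $\iter{\vDelta}{t,j|i}=\iter{\valpha}{t,j|i}+\iter{\vbeta}{t,j|i}+\iter{\vgamma}{t,j|i}$, where $\valpha$ captures the outer sign flip (a rank-one vector supported on coordinate $i$), $\vbeta$ captures the inner sign flip together with the auxiliary change (equal to $\mS\mPsi_j\ve_i$ times a scalar, hence bounded by $\|\mPsi_j\|_{\op}\cdot\rho_T$), and only $\vgamma = \mS\mPsi_j\mS\{\nonlin_j(\iter{\mZ}{t-1,\cdot|i};\auxmat)-\nonlin_j(\iter{\mZ}{t-1,\cdot|\emptyset};\auxmat)\}$ carries the recursive dependence. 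This is easy to fix and does not affect the exponent.

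The real gap is in your plan for \eqref{eq:pert-claim2}. Using the mean-value form $\tilde\vdelta^{(t,j|i)}_\ell = \sum_{j'}\alpha^{(i)}_{j,j',\ell}\,\iter{\Delta}{t-1,j'|i}_\ell$ gives only the entrywise domination $|\tilde\vdelta^{(i)}_\ell|\le C\rho_T\sum_{j'}|\iter{\Delta}{t-1,j'|i}_\ell|$, and entrywise domination does \emph{not} transfer to operator norms of $\sum_i v^{(i)}v^{(i)\tran}$. Concretely, if the columns of $[\iter{\vDelta}{t-1,j'|1:\dim}]$ were (scaled) Hadamard vectors, then $\big\|\sum_i\iter{\vDelta}{t-1,j'|i}(\iter{\vDelta}{t-1,j'|i})^\tran\big\|_{\op}$ equals the identity's norm, yet replacing each entry by its absolute value produces the all-ones outer product, whose operator norm is larger by a factor $\dim$. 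Your proposed ``Cauchy--Schwarz in $i$ plus the pointwise bound on one slot'' cannot avoid this: Gershgorin on the entrywise bound leads to $\max_\ell\sum_i|\iter{\Delta}{t-1,j'|i}_\ell|\,\|\iter{\vDelta}{t-1,j'|i}\|_1$, and the $\ell_1$-norm introduces an uncontrolled $\sqrt{\dim}$.

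The paper's remedy is to linearize at the \emph{unperturbed} iterate $\iter{z}{t-1,\cdot|\emptyset}_\ell$, so that the Jacobian diagonal matrices $\iter{\mD}{t-1,j,r}=\diag(\partial_r\nonlin_j(\iter{z}{t-1,\cdot|\emptyset}_\ell;\auxvec_\ell))$ are \emph{independent of $i$}. Then the linear piece becomes $\mS\mPsi_j\mS\,\iter{\mD}{t-1,j,r}\,[\iter{\vDelta}{t-1,r|1:\dim}]$, a genuine conjugation, and its operator norm is bounded by $\|\mPsi_j\|_{\op}\|\iter{\mD}{t-1,j,r}\|_{\op}$ times the inductive quantity---this is the clean recursion you were looking for. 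All of the $i$-dependence is pushed into the Taylor remainder $\iter{\vepsilon}{t,j|i}$, which is \emph{quadratic}: $|\iter{\epsilon}{t,j|i}_\ell|\lesssim\rho_T\sum_r|\iter{\Delta}{t-1,r|i}_\ell|^2$. Because of this extra power, the Gershgorin bound on $\sum_i\iter{\vepsilon}{t,j|i}(\iter{\vepsilon}{t,j|i})^\tran$ produces $\max_\ell\sum_i|\iter{\Delta}{t-1,r|i}_\ell|^2\,\|\iter{\vDelta}{t-1,r|i}\|^2$, in which one factor absorbs into the pointwise bound \eqref{eq:pert-claim1} and the other is a diagonal entry of the inductive operator. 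This is precisely what yields the factor $\rho_T^{2T+4}$ per step and hence the final exponent $2(T+1)^2$.
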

\begin{proof} See the end of this appendix (\appref{perturbation-subsection}) for a proof.
\end{proof}

Since the above perturbation estimates are stated in terms of $ \rho_T(\mPsi_{1:\order}, \mS, \hat{\mS}, \auxmat, \hat{\auxmat}, \mG, \hat{\mG})$ defined in \eqref{eq:rho-def}, we will find the following moment estimates useful.

\begin{lemma}\label{lem:delocalization}
For any fixed $p, T \in \W$ and $\epsilon \in (0,1)$, we have,
\begin{align*}
   \E \left[ |\rho_T(\mPsi_{1:\order}, \mS, \hat{\mS}, \auxmat, \hat{\auxmat}, \mG, \hat{\mG})|^{p}\right]   & \lesssim \dim^\epsilon,
\end{align*}
where $\rho_T(\mPsi_{1:\order}, \mS, \hat{\mS}, \auxmat, \hat{\auxmat}, \mG, \hat{\mG})$ is as defined in \eqref{eq:rho-def}.
\end{lemma}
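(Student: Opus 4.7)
The plan is to decouple the four non-negative factors of $\rho_T$ via generalized H\"older's inequality: for fixed $p \in \N$ and $\epsilon > 0$, it suffices to show $\E[X^{4p}] \lesssim \dim^\epsilon$ for each factor $X$ appearing in \eqref{eq:rho-def}. The first factor is bounded deterministically by a constant via the semi-random hypothesis $\|\mPsi_j\|_{\op} \lesssim 1$. The second factor, the maximum of $\dim\order$ i.i.d.\ standard Gaussians, satisfies $\E[\|\mG\|_\infty^{4p}] \lesssim (\log \dim)^{2p} = \dim^{o(1)}$ by a standard Gaussian tail argument.

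For the third factor I would use a high-moment maximum inequality: since $\auxC(\serv{A})$ has finite moments of all orders and the rows of $\auxmat$ are i.i.d.\ copies of $\serv{A}$, for any $q \geq 4p$,
\[
\E\bigl[\|\auxC(\auxmat)\|_\infty^{4p}\bigr] \leq \Bigl(\sum_{\ell=1}^\dim \E[|\auxC(\auxvec_\ell)|^q]\Bigr)^{4p/q} \lesssim \dim^{4p/q},
\]
which is at most $\dim^\epsilon$ once $q$ is chosen large enough (independent of $\dim$). The same max inequality would control the fourth factor, provided one first establishes the uniform entrywise bound on the MVAMP iterates
\[
\sup_{\dim,\, t \leq T,\, j \in [\order],\, i \in [\dim] \cup \{\emptyset\},\, \ell \in [\dim]} \E\bigl[|\iter{z}{t,j \mid i}_\ell|^q\bigr] < \infty \quad \forall \; q \in \N,
\]
because then, applying the same max inequality to the $\dim^2 T \order$ entries indexed by $(t, j, i, \ell)$ gives $\E[\|\iter{\mZ}{t,\cdot \mid i}\|_\infty^{4p\degree}] \lesssim \dim^{\epsilon}$.

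The main obstacle is proving this uniform entrywise bound, which I would establish by induction on $t$. The base case $t = 0$ is immediate since the entries are standard Gaussians. For the inductive step, $\iter{z}{t,j \mid i}_\ell$ is a polynomial of degree at most $\degree^{t}$ (bounded independently of $\dim$ since $T$ is fixed) in the independent random variables $\{s_m, g_m, \auxvec_m\}_{m \neq i} \cup \{\hat{s}_i, \hat{g}_i, \hat{\auxvec}_i\}$. Rademacher and Gaussian hypercontractivity applied to the sign and Gaussian variables, combined with a polynomial moment equivalence for the side-information variables (whose distribution is uniquely determined by its moments and has all moments finite, by \assumpref{side-info}), would reduce the $L^q$ bound to a uniform second-moment bound of the form $\E[|\iter{z}{t,j \mid i}_\ell|^2] \lesssim 1$. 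Establishing this $L^2$ bound is where the balancedness condition \sassumpref{balanced} / \eqref{eq:normalized-semirandom} becomes crucial: writing $\iter{z}{t,j \mid i}_\ell = s_\ell \sum_m s_m (\mPsi_j)_{\ell m} f_j(\iter{z}{t-1,\cdot \mid i}_m; \iter{\auxvec}{i}_m)$, exploiting $s_\ell^2 = 1$, and invoking the identity $(\mPsi_j \mPsi_j\tran)_{\ell\ell} = \Omega_{jj}$ should produce an inductive second-moment bound that does not depend on $\ell$. The hard part will be carefully disentangling the coupled dependence among the $\ell$-th sign $s_\ell$, the signs embedded in $\mM_j$, and the signs propagated through the previous iterate $\iter{z}{t-1,\cdot \mid i}$, so that the $L^2$ computation closes cleanly in every step of the induction.
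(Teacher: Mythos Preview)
Your handling of the first three factors is fine and essentially matches the paper. For the fourth factor your route has a real gap, and the paper's fix is much simpler than what you propose.

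First, the iterate $\iter{z}{t,j | i}_\ell$ is \emph{not} a polynomial in the $\auxvec_m$'s: under the hypotheses of \thref{moments}, $\nonlin_j(z;\auxvec)$ is polynomial only in $z$, with coefficients that are arbitrary (moment-bounded) functions of $\auxvec$. Hence there is no hypercontractivity for the side-information variables, and your reduction from $L^q$ to $L^2$ holds only conditionally on $\auxmat$; you would then still need a uniform-in-$\ell$ bound on all moments (over $\auxmat$) of the conditional variance, which you have not supplied. Second, even the conditional $L^2$ induction does not close for $t\ge2$: in $\sum_m(\mPsi_j)_{\ell m}s_m\nonlin_j(\iter{z}{t-1,\cdot}_m;\auxvec_m)$ the outer signs $s_m$ are entangled with signs already buried inside $\iter{z}{t-1,\cdot}_m$, and disentangling them is precisely the combinatorial content of \lemref{hermite-expansion} and \thref{moments}. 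Balancedness alone does not make this collapse.

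The paper avoids all of this by observing that an entrywise bound is unnecessary: the max-by-sum trick only needs $\sum_\ell\E[|\iter{z}{t,j}_\ell|^{2q}]$, i.e.\ $\dim$ times the \emph{average} moment, and \thref{moments} (already proved by this point) gives $\frac{1}{\dim}\sum_\ell\E[(\iter{z}{t,j}_\ell)^{2q}]\to\E[\serv{Z}^{2q}]<\infty$. The perturbed iterates cost only an extra factor of $\dim$ inside the $(2q)$th root via $\iter{\mZ}{t,\cdot| i}\overset{d}{=}\iter{\mZ}{t,\cdot|\emptyset}$. So the fourth factor is controlled by invoking \thref{moments}, not by a new entrywise induction.
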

\begin{proof}
 See the end of this appendix (\appref{delocalization-subsection}) for a proof.
\end{proof}
The proof of these intermediate results are deferred to the end of this section. We now provide a proof for the variance bound claimed in \thref{variance}. 

\begin{proof}[Proof of \thref{variance}] Throughout the proof, we will use $C$ to denote a constant that depends only on $T,\degree, k$ and this constant may change from one line to the next. We will find it convenient to introduce the definition:
\begin{align*}
    H_T(\mS, \auxmat, \mG) \explain{def}{=}  \frac{1}{\dim}\sum_{\ell=1}^\dim h(\auxvec_\ell) \cdot  \hermite{r}\left(\iter{z}{T,\cdot}_\ell(\mS, \auxmat, \mG)\right) \explain{(a)}{=} \frac{1}{\dim}\sum_{\ell=1}^\dim h(\auxvec_\ell) \cdot  \hermite{r}\left(\iter{z}{T,\cdot | \emptyset }_\ell\right).
\end{align*}
Analogously for each $i \in [\dim]$, we define:
\begin{align*}
     H_T(\iter{\mS}{i}, \iter{\auxmat}{i}, \iter{\mG}{i}) &\explain{def}{=}  \frac{1}{\dim}\sum_{\substack{\ell=1 \\ \ell \neq i}}^\dim h(\auxvec_\ell) \cdot  \hermite{r}\left(\iter{z}{T,\cdot}_\ell(\iter{\mS}{i}, \iter{\auxmat}{i}, \iter{\mG}{i})\right) + \frac{1}{\dim} \cdot h(\hat{\auxvec}_i) \cdot  \hermite{r}\left(\iter{z}{T,\cdot}_\ell(\iter{\mS}{i}, \iter{\auxmat}{i}, \iter{\mG}{i})\right)  \\&\explain{(a)}{=} \frac{1}{\dim}\sum_{\substack{\ell=1 \\ \ell \neq i}}^\dim h(\auxvec_\ell) \cdot  \hermite{r}\left(\iter{z}{T,\cdot | i}_\ell\right) + \frac{1}{\dim} \cdot h(\hat{\auxvec}_i) \cdot  \hermite{r}\left(\iter{z}{T,\cdot | i}_i\right) 
\end{align*}
The equalities marked (a) follow from recalling the definitions of ${\iter{\mZ}{t,\cdot | i}}$ and ${\iter{\mZ}{t,\cdot | \emptyset}}$ from \eqref{eq:pert-iterate} and \eqref{eq:unpert-iterate}. By the Efron-Stein Inequality,
\begin{align*}
    2\Var[H_T(\mS, \auxmat, \mG)] & \leq  \sum_{i=1}^\dim \E[(H_T(\mS, \auxmat, \mG)-H_T(\iter{\mS}{i}, \iter{\auxmat}{i}, \iter{\mG}{i}))^2] \leq 2 \cdot \{ (\star) + (\dagger) \},
\end{align*}
where we defined the terms $(\star)$ and $(\dagger)$ as follows:
\begin{align*}
    (\star) & \explain{def}{=} \sum_{i=1}^\dim \E\left[ \left( \frac{1}{\dim}\sum_{\ell=1}^\dim h(\auxvec_\ell) \cdot  \left\{ \hermite{r}(\iter{z}{T,\cdot | i }_\ell) - \hermite{r}(\iter{z}{T,\cdot | \emptyset }_\ell) \right\} \right)^2 \right], \\
    (\dagger) & \explain{def}{=}  \frac{1}{\dim^2} \sum_{i=1}^{\dim} \E\left[  \hermite{r}^2\left(\iter{z}{T,\cdot | i}_i\right) \cdot ( h(\auxvec_i) - h(\hat{\auxvec_i}))^2 \right].
\end{align*}
In order to complete the proof, we show that $ \lim_{\dim \rightarrow \infty} (\star) = \lim_{\dim \rightarrow \infty} (\dagger) = 0$ by analyzing each of these terms individually.
\paragraph{Analysis of $(\dagger)$.} Consider the following estimates:
\begin{align*}
     (\dagger) & \leq \frac{2}{\dim^2} \sum_{i=1}^{\dim} \E\left[  \hermite{r}^2\left(\iter{z}{T,\cdot | i}_i\right) \cdot  h^2(\auxvec_i) \right] + \frac{2}{\dim^2} \sum_{i=1}^{\dim} \E\left[  \hermite{r}^2\left(\iter{z}{T,\cdot | i}_i\right) \cdot  h^2(\hat{\auxvec}_i) \right] \\
     & \explain{(a)}{=} \frac{2 \cdot \E[h^2(\serv{A})]}{\dim^2} \cdot \sum_{i=1}^{\dim} \E\left[  \hermite{r}^2\left(\iter{z}{T,\cdot | i}_i\right)    \right] + \frac{2}{\dim^2} \sum_{i=1}^{\dim} \E\left[  \hermite{r}^2\left(\iter{z}{T,\cdot | i}_i\right) \cdot  h^2(\hat{\auxvec}_i) \right] \\
     & \explain{(b)}{=}  \frac{2 \cdot \E[h^2(\serv{A})]}{\dim^2} \cdot \sum_{i=1}^{\dim} \E\left[  \hermite{r}^2\left(\iter{z}{T,\cdot | \emptyset}_i\right)    \right] + \frac{2}{\dim^2} \sum_{i=1}^{\dim} \E\left[  \hermite{r}^2\left(\iter{z}{T,\cdot | \emptyset}_i\right) \cdot  h^2({\auxvec}_i) \right].
\end{align*}
In the above display (a) follows by observing that $\iter{z}{T,\cdot | i}_i$ and $\auxvec_i$ are independent. Step (b) follows from observing that $(\iter{z}{T,\cdot | i}_i, \hat{\auxvec}_i) \explain{d}{=} (\iter{z}{T,\cdot | \emptyset}_i, \auxvec_i)$. By \thref{moments},
\begin{align*}
    \lim_{\dim \rightarrow \infty } \E\left[\frac{1}{\dim}  \sum_{i=1}^{\dim}   \hermite{r}^2\left(\iter{z}{T,\cdot | \emptyset}_i\right)    \right] & = \E \hermite{r}^2(\serv{Z}) = 1, \\
    \lim_{\dim \rightarrow \infty } \E\left[\frac{1}{\dim}  \sum_{i=1}^{\dim}   \hermite{r}^2\left(\iter{z}{T,\cdot | \emptyset}_i\right) \cdot  h^2({\auxvec}_i)   \right] & = \E[ \hermite{r}^2(\serv{Z})]\cdot \E[h^2(\serv{A})] = \E[h^2(\serv{A})],
\end{align*}
where $\serv{Z} \sim \gauss{0}{I_\order}$. Hence $\lim_{\dim \rightarrow \infty} (\dagger) \leq 4 \lim_{\dim \rightarrow \infty}  \E[h^2(\serv{A})]/ \dim$ and in particular, $\lim_{\dim \rightarrow \infty}(\dagger) = 0$, as required. 
\paragraph{Analysis of $(\star)$.} Recalling the definition of $\iter{\vDelta}{t,j|i}$ from \eqref{eq:pert-errors}, we obtain using Taylor's expansion:
\begin{align} \label{eq: taylor-expansion}
    \frac{1}{\dim}\sum_{\ell=1}^\dim h(\auxvec_\ell) \cdot  \left\{ \hermite{r}(\iter{z}{T,\cdot | i }_\ell) - \hermite{r}(\iter{z}{T,\cdot | \emptyset }_\ell) \right\} & \explain{}{=} \frac{1}{\dim} \sum_{\ell=1}^\dim \sum_{j=1}^\order \partial_j \hermite{r}(\iter{z}{T,\cdot| \emptyset}_{\ell}) \cdot \iter{\Delta}{T,j | i}_{\ell} + \frac{1}{\dim}\sum_{\ell=1}^\dim \iter{\epsilon}{T|i}_{\ell} \\
    & \explain{(b)}{=} \frac{1}{\dim}\sum_{j=1}^\order \ip{\partial_j \hermite{r}(\iter{\mZ}{T,\cdot| \emptyset})}{\iter{\vDelta}{T,j | i}} + \frac{1}{\dim}\sum_{\ell=1}^\dim \iter{\epsilon}{T|i}_{\ell}
\end{align}
In the above display, in the step marked (a), $\partial_j \hermite{r}$ denotes the partial derivative of $\hermite{r}(z_1, \dotsc, z_\order)$ with respect to $z_j$. In step (b), we introduced the notations $\partial_j \hermite{r}(\iter{\mZ}{T,\cdot| \emptyset})$ to denote the vector:
\begin{align*}
    \partial_j \hermite{r}(\iter{\mZ}{T,\cdot| \emptyset}) & \explain{def}{=} \big[ \partial_j \hermite{r}(\iter{z}{T,\cdot| \emptyset}_{1}),  \partial_j \hermite{r}(\iter{z}{T,\cdot| \emptyset}_{2}), \dotsc,  \partial_j \hermite{r}(\iter{z}{T,\cdot| \emptyset}_{\dim})\big]\tran.
\end{align*}
Note that because $\hermite{r}$ is a polynomial of degree $\|r\|_1$, the second order error term in the Taylor's expansion can be controlled by:
\begin{align} 
    |\iter{\epsilon}{T|i}_{\ell}| & \leq C \cdot \left(1 + \|\iter{z}{T,\cdot | i}_{\ell}\|^{\|r\|_1}_{\infty} + \|\iter{z}{T,\cdot | \emptyset}_{\ell}\|^{\|r\|_1}_{\infty} \right)\cdot \left(\sum_{j=1}^{\order} |\iter{\Delta}{T,j | i}_{\ell}|^2 \right) \nonumber \\
    & \explain{\eqref{eq:rho-def}}{\leq} C \cdot \rho_T(\mS, \hat{\mS}, \auxmat, \hat{\auxmat}, \mG, \hat{\mG})^{\|r\|_1} \cdot \left(\sum_{j=1}^{\order} |\iter{\Delta}{T,j | i}_{\ell}|^2 \right). \label{eq:taylor-thm-remainder-bound}
\end{align}
In the above display $C$ is a finite constant that depends only on $\|r\|_1$ (which is fixed). Hence we can further upper bound $(\star)$ by:
\begin{align*}
    (\star) & \leq 2 \cdot \underbrace{\E\left[ \sum_{i=1}^\dim \left(\frac{1}{\dim}\sum_{j=1}^\order \ip{\partial_j \hermite{r}(\iter{\mZ}{T,\cdot| \emptyset})}{\iter{\vDelta}{T,j | i}}  \right)^2 \right]}_{(\clubsuit)} + 2 \cdot \underbrace{\E\left[\sum_{i=1}^\dim \left(\frac{1}{\dim}\sum_{\ell=1}^\dim \iter{\epsilon}{T|i}_{\ell}\right)^2 \right]}_{(\diamondsuit)}.
\end{align*}
In order to show that $\lim_{\dim \rightarrow \infty} (\star) = 0$, we need to show that $\lim_{\dim \rightarrow \infty} (\clubsuit) = 0$ and $\lim_{\dim \rightarrow \infty} (\diamondsuit) = 0$. We first consider the term $(\clubsuit)$.

\begin{align*}
    (\clubsuit)  &\explain{def}{=} \E\left[ \sum_{i=1}^\dim \left(\frac{1}{\dim}\sum_{j=1}^\order \ip{\partial_j \hermite{r}(\iter{\mZ}{T,\cdot| \emptyset})}{\iter{\vDelta}{T,j | i}}  \right)^2 \right]  \explain{(a)}{\leq} \frac{\order}{\dim^2} \cdot \sum_{j=1}^\order \E\left[ \sum_{i=1}^\dim \ip{\partial_j \hermite{r}(\iter{\mZ}{T,\cdot| \emptyset})}{\iter{\vDelta}{T,j | i}}^2  \right] \\
    & \hspace{5cm}\leq \frac{\order}{\dim^2} \cdot \sum_{j=1}^\order \E\left[  \left\| \sum_{i=1}^\dim \iter{\vDelta}{T,j | i}{\iter{\vDelta}{T, j | i}}\tran \right\|_{\op} \cdot \|\partial_j \hermite{r}(\iter{\mZ}{T,\cdot| \emptyset}) \|^2  \right] \\
    &\hspace{5cm}\explain{(b)}{\leq} \frac{\order}{\dim^2} \cdot \sum_{j=1}^\order \E\left[  \left\| \sum_{i=1}^\dim \iter{\vDelta}{T,j | i}{\iter{\vDelta}{T, j | i}}\tran \right\|_{\op} \cdot C  \cdot \dim \cdot \big(1 + \|\iter{\mZ}{T, \cdot | \emptyset}\|_{\infty}^{\|r\|_1}\big)^2  \right] \\
    & \hspace{5cm}\explain{(c)}{\leq}\frac{C}{\dim} \cdot \sum_{j=1}^\order \E\left[ \rho_T(\mS, \hat{\mS}, \auxmat, \hat{\auxmat}, \mG, \hat{\mG})^{2(T+1)^2} \cdot   \rho_T(\mS, \hat{\mS}, \auxmat, \hat{\auxmat}, \mG, \hat{\mG})^{2\|r\|_1}    \right] \\
    &\hspace{5cm} \explain{(d)}{\lesssim} \dim^{-1+\epsilon}.
\end{align*}
In the above display, step (a) follows from the Cauchy-Schwarz Inequality, step (b) uses the fact that $\partial_j \hermite{r}$ is a polynomial of degree at most $\|r\|_1$ and hence satisfies an estimate of the form:
\begin{align*}
    |\partial_j \hermite{r}(z)|& \leq C ( 1 + \|z\|_{\infty}^{\|r\|_1}),
\end{align*}
for a constant $C$ that depends only on $\|r\|_1$. Step (c) uses estimates on the operator norm from \lemref{perturbation-estimate-efron-stein} and the definition of $\rho_T(\mS, \hat{\mS}, \auxmat, \hat{\auxmat}, \mG, \hat{\mG})$ from \eqref{eq:rho-def}. Finally step (d) relied on the moment estimates from \lemref{delocalization}. Hence, we have shown that $\lim_{\dim \rightarrow \infty} (\clubsuit) = 0$. Next, we consider the term $(\diamondsuit)$:
\begin{align*}
    (\diamondsuit) \explain{def}{=}  \E\left[\sum_{i=1}^\dim \left(\frac{1}{\dim}\sum_{\ell=1}^\dim \iter{\epsilon}{T|i}_{\ell}\right)^2 \right] &\leq \E\left[\sum_{i=1}^\dim \left(\frac{1}{\dim}\sum_{\ell=1}^\dim |\iter{\epsilon}{T|i}_{\ell}|\right)^2 \right]\\
    & \explain{\eqref{eq:taylor-thm-remainder-bound}}{\leq} \frac{C}{\dim^2} \sum_{i=1}^\dim \E\left[ \rho_T(\mS, \hat{\mS}, \auxmat, \hat{\auxmat}, \mG, \hat{\mG})^{2\|r\|_1} \cdot \left(\sum_{j=1}^\order \|\iter{\vDelta}{T,j|i}\|^2 \right)^2 \right] \\
    & \explain{(a)}{\leq} \frac{C \order}{\dim^2} \sum_{j=1}^\order\sum_{i=1}^\dim \E\left[ \rho_T(\mS, \hat{\mS}, \auxmat, \hat{\auxmat}, \mG, \hat{\mG})^{2\|r\|_1} \cdot  \|\iter{\vDelta}{T,j|i}\|^4 \right] \\
    & \explain{(b)}{\leq} \frac{C \order}{\dim} \sum_{j=1}^\order \E\left[  \rho_T(\mPsi_{1:\order}, \mS, \hat{\mS}, \auxmat, \hat{\auxmat}, \mG, \hat{\mG})^{2\|r\|_1+4T+4} \right]
    \\  &\explain{(c)}{\lesssim} \;  \dim^{-1+\epsilon}.
\end{align*}
In the above display, step (a) follows from Cauchy-Schwarz Inequality, step (b) relies on the perturbation estimate from \lemref{perturbation-estimate-efron-stein}, and step (c) uses the moment estimates from \lemref{delocalization}.
Hence, $\lim_{\dim \rightarrow \infty} (\diamondsuit) = 0$, which shows that $\lim_{\dim \rightarrow \infty} (\star) = 0$ and concludes the proof of this theorem. 
\end{proof}

\subsection{Proof of \lemref{perturbation-estimate-efron-stein}} \label{appendix:perturbation-subsection}
\begin{proof}[Proof of \lemref{perturbation-estimate-efron-stein}] We begin by noting that since the non-linearities are assumed to satisfy the requirements of \thref{moments}, \lemref{misc-PL} in \appref{misc} guarantees the existence of a function $\auxC : \R^{\auxdim} \mapsto [0,\infty)$ such that $\E[|\auxC(\serv{A})|^p] < \infty$ for each $p \in \W$ and for each $j \in [\order]$ the non-linearity $\nonlin_{j}$ satisfies the estimates:
\begin{subequations}\label{eq:misc-lemma-conclusion}
\begin{align}
    |\nonlin_j(z; \auxvec)| & \leq \auxC(\auxvec) \cdot (1 + \|z\|_{\infty}^\degree), \\
    \|\nabla_{z}\nonlin_j(z; \auxvec)\|_{\infty} & \leq \auxC(\auxvec) \cdot (1 + \|z\|_{\infty}^\degree), \\
     |\nonlin_j(z; \auxvec) - \nonlin_j(z^\prime; \auxvec)| & \leq \auxC(\auxvec) \cdot (1 + \|z\|_{\infty}^\degree + \|z^\prime\|_{\infty}^\degree ) \cdot \| z - z^\prime\|_{\infty}, \\
     |\nonlin_j(z; \auxvec) - \nonlin_j(z^\prime; \auxvec) - \ip{\nabla_z \nonlin_j(z; \auxvec)}{z-z^\prime}| & \leq \auxC(\auxvec) \cdot (1 + \|z\|_{\infty}^\degree + \|z^\prime\|_{\infty}^\degree ) \cdot \| z - z^\prime\|_{\infty}^2,
\end{align} 
\end{subequations}
for all $z, z^\prime \in \R^{\order}$. Throughout the proof, we will use $C$ to denote a constant that depends only on $T,\degree, k$ and this constant may change from one line to the next. We will also suppress the depends of $\rho_T$ defined in \eqref{eq:rho-def} on $(\mPsi_{1:\order}, \mS, \hat{\mS}, \auxmat, \hat{\auxmat}, \mG, \hat{\mG})$ for ease of notation.   

\paragraph{}Recalling the definitions of $\iter{\vz}{t,j | i}$ and $\iter{\vz}{t, j | \emptyset}$ from \eqref{eq:pert-iterate} and \eqref{eq:unpert-iterate} along with the MVAMP updates \eqref{eq:mvamp}:
\begin{align*}
    &\iter{\vDelta}{t,j | i} \explain{def}{=} \iter{\vz}{t,j | i} - \iter{\vz}{t,j | \emptyset} \\
     & = \iter{\mS}{i} \mPsi_j \iter{\mS}{i} \nonlin_j(\iter{\mZ}{t-1,\cdot | i}; \iter{\auxmat}{i}) - \mS \mPsi_j \mS \nonlin_j(\iter{\mZ}{t-1,\cdot | \emptyset}; {\auxmat}) \\
    & = \iter{\mS}{i} \mPsi_j \iter{\mS}{i} \nonlin_j(\iter{\mZ}{t-1,\cdot | i}; \iter{\auxmat}{i}) - \mS \mPsi_j \mS\nonlin_j(\iter{\mZ}{t-1,\cdot | i}; \auxmat) + \mS \mPsi_j \mS \cdot \{ \nonlin_j(\iter{\mZ}{t-1,\cdot | i}; \auxmat) - \nonlin_j(\iter{\mZ}{t-1,\cdot | \emptyset}; {\auxmat}) \}.
\end{align*}
We define,
\begin{subequations}\label{eq:alpha-beta-def}
\begin{align}
    \iter{\valpha}{t, j | i} &\explain{def}{=} \iter{\mS}{i} \mPsi_j \iter{\mS}{i} \nonlin_j(\iter{\mZ}{t-1,\cdot | i}; \iter{\auxmat}{i}) -  {\mS} \mPsi_j \iter{\mS}{i} \nonlin_j(\iter{\mZ}{t-1,\cdot | i}; \iter{\auxmat}{i}) \nonumber \\
    &= (1- \hat{s}_i s_i) \cdot \iter{z}{t,j | i}_i \cdot \ve_i, \\
    \iter{\vbeta}{t, j | i} &\explain{def}{=} {\mS} \mPsi_j \iter{\mS}{i} \nonlin_j(\iter{\mZ}{t-1,\cdot | i}; \iter{\auxmat}{i}) - \mS \mPsi_j \mS\nonlin_j(\iter{\mZ}{t-1,\cdot | i}; \auxmat) \nonumber\\&= (\hat{s}_i \nonlin_j(\iter{z}{t-1, \cdot | i}_i; \hat{\auxvec}_i)- {s}_i \nonlin_j(\iter{z}{t-1, \cdot | i}_i; {\auxvec}_i)) \cdot \mS \mPsi_j \ve_i, \\
    \iter{\vgamma}{t, j | i} &\explain{def}{=} \mS \mPsi_j \mS \cdot \{ \nonlin_j(\iter{\mZ}{t-1,\cdot | i}; \auxmat) - \nonlin_j(\iter{\mZ}{t-1,\cdot | \emptyset}; {\auxmat}) \}.
\end{align}
\end{subequations}
where, $\ve_{1:\dim}$ denote the standard basis of $\R^{\dim}$. Hence,
\begin{align}\label{eq:delta-decomp}
   \iter{\vDelta}{t,j | i} & = \iter{\valpha}{t, j | i} + \iter{\vbeta}{t, j | i} +  \iter{\vgamma}{t, j | i}. 
\end{align}
We prove each of the two claims in the lemma one by one. 
\paragraph{Proof of \eqref{eq:pert-claim1}.} By the triangle inequality:
\begin{align*}
    \|\iter{\vDelta}{t,j | i}\| & \leq \|\iter{\valpha}{t, j | i}\| + \|\iter{\vbeta}{t, j | i}\| +  \|\iter{\vgamma}{t, j | i}\|.
\end{align*}
Recalling the formula for $\iter{\valpha}{t, j | i}, \iter{\vbeta}{t, j | i}$ from \eqref{eq:alpha-beta-def}:
\begin{align*}
    \|\iter{\valpha}{t, j | i}\| & \leq 2|\iter{z}{t,j| i}_i| \explain{\eqref{eq:rho-def}}{\leq} C \rho_T. \\
    \|\iter{\vbeta}{t, j | i}\| & \leq \|\mPsi_j\|_{\op} \cdot \{|\nonlin_j(\iter{z}{t-1, \cdot | i}_i; \hat{\auxvec}_i)|+ |\nonlin_j(\iter{z}{t-1, \cdot | i}_i; {\auxvec}_i)|\}  \\
    & \explain{\eqref{eq:misc-lemma-conclusion}}{\leq} \|\mPsi_j\|_{\op} \cdot (\|\auxC(\auxmat)\|_{\infty} + \|\auxC(\hat{\auxmat})\|_{\infty}) \cdot ( 1 + \|\iter{\mZ}{t-1, \cdot | i}\|_{\infty}^\degree) \\
    & \explain{\eqref{eq:rho-def}}{\leq} \rho_T.
\end{align*}
Similarly, we can estimate $\|\iter{\vgamma}{t, j | i}\|$ as follows:
\begin{align*}
    \|\iter{\vgamma}{t, j | i}\| & \explain{def}{=} \|  \mS \mPsi_j \mS \cdot \{ \nonlin_j(\iter{\mZ}{t-1,\cdot | i}; \auxmat) - \nonlin_j(\iter{\mZ}{t-1,\cdot | \emptyset}; {\auxmat})  \}\| \\
    & \leq \|\mPsi_j\|_{\op} \cdot \| \nonlin_j(\iter{\mZ}{t-1,\cdot | i}; \auxmat) - \nonlin_j(\iter{\mZ}{t-1,\cdot | \emptyset}; {\auxmat})\| \\
    & \explain{\eqref{eq:misc-lemma-conclusion}}{\leq}  C \cdot \|\mPsi_j\|_{\op} \cdot \|\auxC(\auxmat)\|_{\infty} \cdot (1 + \|\iter{\mZ}{t-1,\cdot | i}\|_{\infty}^\degree + \|\iter{\mZ}{t-1,\cdot | \emptyset}\|_{\infty}^\degree) \cdot \left( \max_{j \in [\order]} \| \iter{\vDelta}{t-1,j | i}\| \right) \\
    & \explain{\eqref{eq:rho-def}}{\leq} C \cdot \rho_T \cdot \left( \max_{j \in [\order]} \| \iter{\vDelta}{t-1,j | i}\| \right).
\end{align*}
Hence, we have shown:
\begin{align*}
   \left( \max_{j \in [\order]} \| \iter{\vDelta}{t,j | i}\| \right) & \leq C \rho_T + C \rho_T \left( \max_{j \in [\order]} \| \iter{\vDelta}{t-1,j | i}\| \right).
\end{align*}
Unrolling this recursive estimate gives:
\begin{align*}
    \left( \max_{j \in [\order]} \| \iter{\vDelta}{t,j | i}\| \right) & \leq C \rho_T + (C\rho_T)^2 + \dotsb + (C\rho_T)^t \left( \max_{j \in [\order]} \| \iter{\vDelta}{0,j | i}\| \right) \\
    & =  C\rho_T + (C\rho_T)^2 + \dotsb + (C\rho_T)^t \cdot  \underbrace{\|g_{i} - \hat{g}_i\|_{\infty}}_{\leq \rho_T}.
\end{align*}
Hence, 
\begin{align*}
    \max_{t \leq T} \max_{j \in [\order]} \max_{i \in [\dim]} \| \iter{\vDelta}{t,j | i}\| & \leq C \rho_T^{T+1},
\end{align*}
as claimed. 
\paragraph{Proof of \eqref{eq:pert-claim2}} Applying Taylor's Theorem to $\iter{\vgamma}{t, j | i}$ yields:
\begin{align}
    \iter{\vgamma}{t, j | i} & \explain{def}{=} \mS \mPsi_j \mS \cdot \{ \nonlin_j(\iter{\mZ}{t-1,\cdot | i}; \auxmat) - \nonlin_j(\iter{\mZ}{t-1,\cdot | \emptyset}; {\auxmat})  \} \nonumber \\
    & = \mS \mPsi_j \mS \cdot \left\{ \sum_{r=1}^\order \iter{\mD}{t-1,j,r} \cdot \iter{\vDelta}{t-1, r | i} + \iter{\vepsilon}{t,j| i}  \right\} \label{eq:taylor-expand}
\end{align}
In \eqref{eq:taylor-expand}, for each $r \in [\order]$, $\iter{\mD}{t-1,j,r}$ is a diagonal matrix given by:
\begin{align*}
    \iter{\mD}{t-1,j,r} \explain{def}{=} \diag(\partial_r\nonlin_j (\iter{z}{t-1, \cdot| \emptyset}_{1}; \auxvec_1), \partial_r\nonlin_j (\iter{z}{t-1, \cdot | \emptyset}_{2}; \auxvec_2), \dotsc, \partial_r\nonlin_j (\iter{z}{t-1, \cdot | \emptyset}_{\dim}; \auxvec_{\dim})),
\end{align*}
where $\partial_r \nonlin_j(z_1, z_2, \dotsc, z_\order; a)$ denotes the derivative of $\nonlin_j(z_1, z_2, \dotsc, z_\order; a)$ with respect to $z_r$. Observe that,
\begin{align} \label{eq:D-op-norm}
    \|\iter{\mD}{t-1,j,r}\|_{\op} \explain{\eqref{eq:misc-lemma-conclusion}}{\leq} \|\auxC(\auxmat)\|_{\infty} \cdot (1 + \|\iter{\mZ}{t-1,\cdot | \emptyset}\|_{\infty}^{\degree}).
\end{align}
Furthermore, $\iter{\vepsilon}{t, j  | i}$ in \eqref{eq:taylor-expand} is the Taylor's remainder:
\begin{align*}
   \iter{\vepsilon}{t, j  | i} \explain{def}{=} \nonlin_j(\iter{\mZ}{t-1,\cdot | i}; \auxmat) - \nonlin_j(\iter{\mZ}{t-1,\cdot | \emptyset}; {\auxmat})  - \sum_{r=1}^\order \iter{\mD}{t-1,j,r} \cdot \iter{\vDelta}{t-1, r | i},
\end{align*}
which can be bounded entry-wise by:
\begin{align}\label{eq:taylor-error-bound}
     |\iter{\epsilon}{t, j  | i}_{\ell}| & \explain{\eqref{eq:misc-lemma-conclusion}}{\leq} \auxC(\auxvec_{\ell}) \cdot (1 + \|\iter{z}{t-1,\cdot | i}_{\ell}\|_{\infty}^\degree + \|\iter{z}{t-1,\cdot | \emptyset}_{\ell}\|_{\infty}^\degree ) \cdot   \sum_{r = 1}^{\order} |\iter{\Delta}{t-1,r | i}_{\ell}|^2
\end{align}
Hence, the decomposition \eqref{eq:delta-decomp} can be written as:
\begin{align} \label{delta-decomp-improved}
   \iter{\vDelta}{t,j | i} & = \iter{\valpha}{t, j | i} + \iter{\vbeta}{t, j | i} +  \left( \sum_{r=1}^\order \mS \mPsi_j \mS  \iter{\mD}{t-1,j,r} \cdot \iter{\vDelta}{t-1, r | i}  \right) + \mS \mPsi_j \mS \cdot \iter{\vepsilon}{t,j| i} 
\end{align}
For brevity, we introduce the definition:
\begin{align}\label{eq:op-short}
    \iter{O}{t,j} \explain{def}{=}  \left\| \sum_{i=1}^\dim \iter{\vDelta}{t,j | i} \cdot {\iter{\vDelta}{t,j | i}}\tran  \right\|_{\op}. 
\end{align}
For a collection of vectors $\iter{\vv}{1:\dim}$ we use the notation $[\iter{\vv}{1:\dim}]$ to denote the matrix whose columns are $\iter{\vv}{1:\dim}$. Notice that applying the triangle inequality to \eqref{delta-decomp-improved} yields:
\begin{align}
    &\iter{O}{t,j}  = \|[\iter{\vDelta}{t,j | 1:\dim}]\|_{\op}^2 \nonumber\\
    & \explain{}{\leq} C  \left( \|[\iter{\valpha}{t,j | 1:\dim}]\|_{\op}^2 +  \|[\iter{\vbeta}{t,j | 1:\dim}]\|_{\op}^2 + \|\mS \mPsi_j \mS  [\iter{\vepsilon}{t,j | 1:\dim}]\|_{\op}^2 + \sum_{r=1}^{\order}  \|\mS \mPsi_j \mS \iter{\mD}{t-1,j,r} [\iter{\vDelta}{t-1,r | 1:\dim}]\|_{\op}^2 \right). \label{eq:L-recursion-initial}
\end{align}
We bound each term that appears in the above inequality. Recalling the definitions of $\iter{\valpha}{t,j | i}, \iter{\vbeta}{t,j | i}$ from \eqref{eq:alpha-beta-def}, we obtain:
\begin{align}
     \|[\iter{\valpha}{t,j| 1:\dim }]\|_{\op}^2 & \leq 4 \max_{i \in [\dim]} |\iter{z}{t,j | i}_i|^2 \leq C \rho_T^2, \label{eq:alpha-norm-ub} \\
      \|[\iter{\vbeta}{t,j| 1:\dim }]\|_{\op}^2 & \explain{\eqref{eq:misc-lemma-conclusion}}{\leq}  \|\mPsi\|_{\op}^2 \cdot (\|\auxC(\auxmat)\|_{\infty} + \|\auxC(\hat{\auxmat})\|_{\infty})^2 \cdot (1+\|\iter{\mZ}{t-1,\cdot | i}\|_{\infty}^\degree)^2  \nonumber \\ 
      & \explain{\eqref{eq:rho-def}}{\leq} \rho_{T}^2.  \label{eq:beta-norm-ub}
\end{align}
For each $r \in [\order]$, we have the estimate:
\begin{align}
   \|\mS \mPsi_j \mS \iter{\mD}{t-1,j,r} \cdot [\iter{\vDelta}{t-1,r | 1:\dim}]\|_{\op}^2  & = \left\| \mS \mPsi_j \mS \iter{\mD}{t-1,j,r} \cdot \left(\sum_{i=1}^\dim \iter{\vDelta}{t-1,r | i} \cdot {\iter{\vDelta}{t-1,r | i}}\tran  \right) \cdot \iter{\mD}{t-1,j,r}  \mS  \mPsi_j\tran \mS \right\|_{\op} \nonumber\\
    & \leq \|\iter{\mD}{t-1,j,r} \|_{\op}^2 \cdot \|\mPsi_j\|_{\op}^2 \cdot \iter{O}{t-1,r}\nonumber \\
    & \explain{\eqref{eq:D-op-norm}}{\leq} \|\auxC(\auxmat)\|_{\infty}^2 \cdot (1 + \|\iter{\mZ}{t-1,\cdot | \emptyset}\|_{\infty}^{\degree})^2 \cdot \|\mPsi\|_{\op}^2 \cdot \iter{O}{t-1,r} \nonumber \\
    & \explain{\eqref{eq:rho-def}}{\leq} \rho_T^2 \cdot \iter{O}{t-1,r}.\label{eq:delta-norm-ub}
\end{align}
Lastly, we control:
\begin{align*}
   \|\mS \mPsi_j \mS  [\iter{\vepsilon}{t,j | 1:\dim}]\|_{\op}^2 & = \left\| \mS \mPsi_j \mS \cdot \left( \sum_{i=1}^\dim \iter{\vepsilon}{t,j | i} {\iter{\vepsilon}{t,j | i}}\tran\right) \mS \mPsi\tran_j \mS \right\|_{\op} \leq \|\mPsi_j\|_{\op}^2 \cdot \left\| \sum_{i=1}^\dim \iter{\vepsilon}{t,j | i} {\iter{\vepsilon}{t, j | i}}\tran \right\|_{\op}.
\end{align*}
Gershgorin's Circle theorem implies that the spectral norm of a symmetric matrix can be bounded by the largest row $\ell_1$ norm. Hence,
\begin{align*}
    \|\mS \mPsi_j \mS  [\iter{\vepsilon}{t,j | 1:\dim}]\|_{\op}^2 &\leq \|\mPsi_j\|_{\op}^2 \cdot \max_{\ell \in [\dim]}\left( \sum_{\ell^\prime=1}^{\dim}\sum_{i=1}^\dim |\iter{\epsilon}{t,j | i}_\ell| |\iter{\epsilon}{t, j | i}_{\ell^\prime}| \right).
\end{align*}
Recalling the estimates on $|\iter{\epsilon}{t,j| i}_\ell|$ from \eqref{eq:taylor-error-bound}, we obtain,
\begin{align}
   &\|\mS \mPsi_j \mS  [\iter{\vepsilon}{t,j | 1:\dim}]\|_{\op}^2 \nonumber \\ & \leq C\|\mPsi_j\|_{\op}^2 \|\auxC(\auxmat)\|_{\infty}^2  (1 + \|\iter{\mZ}{t-1,\cdot | i}\|_{\infty}^\degree + \|\iter{\mZ}{t-1,\cdot | \emptyset}\|_{\infty}^\degree )^2 \cdot \max_{r,r^\prime \in [\order]}\max_{\ell \in [\dim]}\left( \sum_{\ell^\prime=1}^{\dim}\sum_{i=1}^\dim |\iter{\Delta}{t-1,r | i}_\ell |^2 |\iter{\Delta}{t-1, r^\prime | i}_{\ell^\prime}|^2 \right)\nonumber\\
    & =C\rho_T^2 \cdot \max_{r,r^\prime \in [\order]}\max_{\ell \in [\dim]}\left( \sum_{i=1}^\dim |\iter{\Delta}{t-1,r | i}_\ell |^2 \|\iter{\vDelta}{t-1, r^\prime | i}\|^2 \right)\nonumber \\
    & \explain{(a)}{\leq}  C\rho_T^{2T+4} \cdot  \max_{r \in [\order]}\max_{\ell \in [\dim]}\left( \sum_{i=1}^\dim |\iter{\Delta}{t-1,r | i}_\ell |^2  \right)\nonumber \\
    & \explain{(b)}{\leq} C\rho_T^{2T+4} \cdot  \max_{r \in [\order]}\left\| \sum_{i=1}^\dim \iter{\vDelta}{t-1,r | i} {\iter{\vDelta}{t-1,r | i}} \tran   \right\|_{\op}\nonumber\\
    & \explain{(c)}{=} C \cdot \rho_T^{2T+4} \cdot \max_{r \in [\order]} \left(\iter{O}{t-1,r}\right). \label{eq:epsilon-norm-ub}
\end{align}
In the above display, the inequality marked (a) follows from the first estimate \eqref{eq:pert-claim1} claimed in this lemma. In step (b) we used the fact that the maximum diagonal entry of a symmetric matrix can be upper bounded by its operator norm and in step (c) we recalled the definition of $\iter{O}{t-1,r}$ from \eqref{eq:op-short}. Plugging in the estimates in \eref{alpha-norm-ub}, \eref{beta-norm-ub}, \eref{delta-norm-ub} and \eref{epsilon-norm-ub} into \eref{L-recursion-initial} gives the following recursive estimate for $\max_{j \in [\order]} \iter{O}{t,j}$:
\begin{align*}
    \max_{j \in [\order]} (\iter{O}{t,j})& \leq C \cdot \left(\rho_T^2 + \rho_T^{2T + 4} \cdot \max_{j \in [\order]} (\iter{O}{t-1,j}) \right)\; \forall \; t \leq T.
\end{align*}
Unrolling this recursive upper bound gives:
\begin{align*}
     &\max_{t \leq T}  \max_{j \in [\order]} \left\| \sum_{i=1}^\dim \iter{\vDelta}{t,j | i}  \cdot {\iter{\vDelta}{t,j | i}} \tran  \right\|_{\op}  \\& \qquad \qquad \leq C \cdot \left( \rho_T^2 + \rho_T^2 \cdot (\rho_T^{2T + 4}) + \rho_T^2 \cdot (\rho_T^{2T + 4})^2 + \dotsb +  (\rho_T^{2T + 4})^T \cdot  \max_{j \in [\order]} \left\| \sum_{i=1}^\dim \iter{\vDelta}{0,j | i}  \cdot {\iter{\vDelta}{0,j | i}} \tran  \right\|_{\op}   \right).
\end{align*}
Recall that $\iter{\vDelta}{0,j | i} = \iter{\vz}{0,j| i} - \iter{\vz}{0,j| \emptyset} = (\hat{G}_{ij} - G_{ij}) \cdot \ve_i$. Hence,
\begin{align*}
   \max_{j \in [\order]}  \left\| \sum_{i=1}^\dim \iter{\vDelta}{0,j | i}  \cdot {\iter{\vDelta}{0,j | i}} \tran  \right\|_{\op} & \leq \|\hat{\mG} - \mG\|_{\infty}^2 \leq \rho_T^2.
\end{align*}
Finally, we obtain,
\begin{align*}
    \max_{t \leq T}  \max_{j \in [\order]} \left\| \sum_{i=1}^\dim \iter{\vDelta}{t,j | i}  \cdot {\iter{\vDelta}{t,j | i}} \tran  \right\|_{\op} & \leq C \cdot \rho_T^{2(T+1)^2},
\end{align*}
as claimed.
\end{proof}

\subsection{Proof of \lemref{delocalization}}\label{appendix:delocalization-subsection}
\begin{proof}[Proof of \lemref{delocalization}]
Recall that,
\begin{align*}
    \rho_T(\mPsi_{1:\order}, \mS, \hat{\mS}, \auxmat, \hat{\auxmat}, \mG, \hat{\mG}) &\explain{def}{=}  (1 + \max_{j \in [\order]} \|\mPsi_i\|_{\op})  \times (1 + \|\mG\|_{\infty} + \|\hat{\mG}\|_{\infty})\times  (1 + \|\auxC(\auxmat)\|_{\infty} + \|\auxC(\hat{\auxmat})\|_{\infty}) \nonumber\\  & \qquad \qquad \qquad \qquad \qquad  \times (1 + \max_{\substack{i \in [\dim], t \leq [T]}} \|\iter{\mZ}{t,\cdot | i}\|_{\infty}^\degree + \max_{t \leq T} \|\iter{\mZ}{t,\cdot | \emptyset }\|_{\infty}^\degree).
\end{align*}
In light of \defref{semirandom}, we know that $\max_{j \in [\order]} \|\mPsi_i\|_{\op} \lesssim 1$. Hence, by Cauchy–Schwarz inequality, it suffices to show that for any $p \in \W$ and any $\epsilon \in (0,1)$:
\begin{subequations} \label{eq:deloc-goals}
\begin{align}
    \E \|\auxC(\auxmat)\|_{\infty}^p \explain{(a)}{=}  \E \|\auxC(\hat{\auxmat})\|_{\infty}^p &\lesssim \dim^{\epsilon}, \label{eq:deloc-goal-1} \\
    \E \|\mG\|_{\infty}^p \explain{(a)}{=}  \E \|\hat{\mG}\|_{\infty}^p &\lesssim \dim^{\epsilon}, \label{eq:deloc-goal-2} \\
    \E \bigg[ \max_{t \leq [T]} \|\iter{\mZ}{t,\cdot | \emptyset }\|_{\infty}^p \bigg]  + \E \bigg[ \max_{\substack{i \in [\dim], t \leq [T]}} \|\iter{\mZ}{t,\cdot | i }\|_{\infty}^p \bigg] & \lesssim \dim^{\epsilon} \label{eq:deloc-goal-3}
\end{align}
\end{subequations}
Note that in the above claims, the equalities marked (a) are immediate from the fact that $\auxmat \explain{d}{=} \hat{\auxmat}$ and $\mG \explain{d}{=} \hat{\mG}$. We consider each of the remaining claims in \eqref{eq:deloc-goals} separately. 
\paragraph{Proof of \eqref{eq:deloc-goal-1}.} Define $q \explain{def}{=} \lceil \tfrac{1}{\epsilon} \rceil$. Consider the following estimate:
\begin{align*}
    \left(\E \|\auxC(\hat{\auxmat})\|_{\infty}^p \right)^{q} & \explain{(a)}{\leq} (\E \|\auxC(\hat{\auxmat})\|_{\infty}^{pq}) 
     \leq \sum_{i=1}^{\dim} \E[|\auxC(\auxvec_i)|^{pq}] 
     \explain{(b)}{=} \dim \E[|\auxC(\serv{A})|^{pq}] 
\end{align*}
In the above display step (a) follows from Jensen's inequality and step (b) follows from the assumption that $\auxvec_{1:\dim}$, the rows of $\auxmat$ are i.i.d. copies of $\serv{A}$ (cf. \assumpref{side-info}). Recall that \lemref{perturbation-estimate-efron-stein} guarantees that $\E[|\auxC(\serv{A})|^{pq}] < \infty$. Hence, we have shown:
\begin{align*}
    \E \|\auxC(\hat{\auxmat})\|_{\infty}^p \leq  (\E[|\auxC(\serv{A})|^{pq}])^{1/q} \cdot \dim^{1/q} \lesssim \dim^{\epsilon}. 
\end{align*}
\paragraph{Proof of \eqref{eq:deloc-goal-2}.} The bound \eqref{eq:deloc-goal-2} can be derived using the same argument as above, or by using standard bounds on the maximum of Gaussian random variables. 
\paragraph{Proof of \eqref{eq:deloc-goal-3}.} As before, we have:
\begin{align*}
    &\left( \E \bigg[ \max_{t \leq [T]} \|\iter{\mZ}{t,\cdot | \emptyset }\|_{\infty}^{p} \bigg] \right)^{2q}  + \left( \E \bigg[ \max_{\substack{i \in [\dim], t \leq [T]}} \|\iter{\mZ}{t,\cdot | i }\|_{\infty}^p \bigg] \right)^{2q} \\& \hspace{7cm} \explain{(a)}{\leq}  \E \bigg[ \max_{t \leq [T]} \|\iter{\mZ}{t,\cdot | \emptyset }\|_{\infty}^{2pq} \bigg] +  \E \bigg[ \max_{\substack{i \in [\dim], t \leq [T]}} \|\iter{\mZ}{t,\cdot | i }\|_{\infty}^{2pq} \bigg] \\
    &\hspace{7cm} \leq \sum_{t=1}^{T} \left( \E[ \|\iter{\mZ}{t,\cdot | \emptyset }\|_{\infty}^{2pq}] + \sum_{i=1}^\dim \E[ \|\iter{\mZ}{t,\cdot | i }\|_{\infty}^{2pq}]  \right)\\
    &\hspace{7cm} = (\dim+1) \cdot \sum_{t=1}^{T} \E[ \|\iter{\mZ}{t,\cdot | \emptyset }\|_{\infty}^{2pq}] \\
    &\hspace{7cm} \leq  (\dim+1) \cdot \sum_{t=1}^T \sum_{j=1}^\order \sum_{\ell=1}^{\dim} \E[|\iter{z}{t,j}_\ell|^{2pq}] \\
    &\hspace{7cm} = \dim(\dim + 1)  \sum_{t=1}^T \sum_{j=1}^\order \frac{1}{\dim}\sum_{\ell=1}^{\dim} \E[(\iter{z}{t,j}_\ell)^{2pq}],
\end{align*}
where inequality (a) follows from Jensen's Inequality. By \thref{moments}, we know that:
\begin{align*}
    \lim_{\dim \rightarrow \infty } \frac{1}{\dim}\sum_{\ell=1}^{\dim} \E[(\iter{z}{t,j}_\ell)^{2pq}] & = \E[\serv{Z}^{2pq}] < \infty,
\end{align*}
where $\serv{Z} \sim \gauss{0}{1}$. Hence,
\begin{align*}
     &\E \bigg[ \max_{t \leq [T]} \|\iter{\mZ}{t,\cdot | \emptyset }\|_{\infty}^p \bigg]  + \E \bigg[ \max_{\substack{i \in [\dim], t \leq [T]}} \|\iter{\mZ}{t,\cdot | i }\|_{\infty}^p \bigg]  \\& \hspace{4.8cm} \leq 2^{1-\frac{1}{2q}} \cdot \left\{ \left( \E \bigg[ \max_{t \leq [T]} \|\iter{\mZ}{t,\cdot | \emptyset }\|_{\infty}^{p} \bigg] \right)^{2q}  + \left( \E \bigg[ \max_{\substack{i \in [\dim], t \leq [T]}} \|\iter{\mZ}{t,\cdot | i }\|_{\infty}^p \bigg] \right)^{2q} \right\}^{\frac{1}{2q}} \\
     &\hspace{5cm} \lesssim \dim^{\frac{1}{q}} \lesssim \dim^{\epsilon}.
\end{align*}
This concludes the proof.
\end{proof}
\subsection{Continuity Estimates}
\label{appendix:misc}
\begin{lemma}\label{lem:misc-PL} For any $\auxvec \in \R^{\auxdim}$ let $\nonlin(z ; \auxvec)$ be a $\order$-variate, degree $\degree$ polynomial in $z \in \R^{\order}$. Suppose that:
\begin{align*}
    \E[ |\nonlin(\serv{Z}; \serv{A})|^p] < \infty \; \forall \; p \; \in \; \W.
\end{align*}
Then, there exists a function $\auxC: \R^{\auxdim} \mapsto [0,\infty)$ with $\E[|\auxC(\serv{A})|^p] < \infty$ for each $p \in \W$ such that for any $z,z^\prime \in \R^{\order}$, $\nonlin$ satisfies the estimates:
\begin{subequations}\label{eq:misc-lemma}
\begin{align}
    |\nonlin(z; \auxvec)| & \leq \auxC(\auxvec) \cdot (1 + \|z\|_{\infty}^\degree), \label{eq:misc-lemma-1} \\
    \|\nabla_{z}\nonlin(z; \auxvec)\|_{\infty} & \leq \auxC(\auxvec) \cdot (1 + \|z\|_{\infty}^\degree), \label{eq:misc-lemma-2}\\
     |\nonlin(z; \auxvec) - \nonlin(z^\prime; \auxvec)| & \leq \auxC(\auxvec) \cdot (1 + \|z\|_{\infty}^\degree + \|z^\prime\|_{\infty}^\degree ) \cdot \| z - z^\prime\|_{\infty}, \label{eq:misc-lemma-3}\\
     |\nonlin(z; \auxvec) - \nonlin(z^\prime; \auxvec) - \ip{\nabla_z \nonlin(z; \auxvec)}{z-z^\prime}| & \leq \auxC(\auxvec) \cdot (1 + \|z\|_{\infty}^\degree + \|z^\prime\|_{\infty}^\degree ) \cdot \| z - z^\prime\|_{\infty}^2. \label{eq:misc-lemma-4}
\end{align} 
\end{subequations} 
In the above equations $\serv{A}$ is the random variable from \assumpref{side-info}, $\serv{Z} \sim \gauss{0}{I_{\order}}$ is independent of $\serv{A}$ and $\nabla_{z}\nonlin(z; \auxvec)$ denotes the gradient of the polynomial $\nonlin(z; \auxvec)$ with respect to $z$. 
\end{lemma}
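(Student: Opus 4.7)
The plan is to expand $\nonlin(z;\auxvec)$ in a convenient polynomial basis, extract tail bounds on the (random) coefficients via orthogonality against the standard Gaussian measure, and then read off all four estimates in \eqref{eq:misc-lemma} by elementary polynomial manipulations. Since $\nonlin(z;\auxvec)$ is, for each fixed $\auxvec$, a polynomial of degree at most $\degree$ in $z\in\R^{\order}$, we may write it in the multivariate Hermite basis as
\begin{align*}
\nonlin(z;\auxvec) \;=\; \sum_{\alpha\in\W^{\order},\;\|\alpha\|_1\le \degree} \hat{c}_\alpha(\auxvec)\, \hermite{\alpha}(z),
\end{align*}
where $\hermite{\alpha}$ is the normalized multivariate Hermite polynomial defined in \eqref{eq:hermite-notation}. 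By orthonormality of the Hermite basis with respect to $\gauss{0}{I_{\order}}$, the coefficients are recovered as $\hat{c}_\alpha(\auxvec) = \E[\nonlin(\serv{Z};\auxvec)\hermite{\alpha}(\serv{Z})]$, where $\serv{Z}\sim\gauss{0}{I_{\order}}$ is independent of $\auxvec$.

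Next I would show that each coefficient $\hat{c}_\alpha(\serv{A})$ has all finite moments. By H{\"o}lder's inequality applied conditionally on $\serv{A}=\auxvec$, for any $p\ge1$,
\begin{align*}
|\hat{c}_\alpha(\auxvec)|^{p} \;\le\; \E\!\left[|\nonlin(\serv{Z};\auxvec)|^{p}\right]\cdot \left(\E\!\left[|\hermite{\alpha}(\serv{Z})|^{p/(p-1)}\right]\right)^{p-1},
\end{align*}
and the second factor is a finite constant $M_{\alpha,p}$ depending only on $\alpha,p,\order$ because $\hermite{\alpha}$ is a polynomial and $\serv{Z}$ is Gaussian. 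Taking expectation over $\serv{A}$ and using the hypothesis $\E[|\nonlin(\serv{Z};\serv{A})|^{p}]<\infty$ together with Fubini yields $\E[|\hat{c}_\alpha(\serv{A})|^{p}]<\infty$ for every $\alpha$ with $\|\alpha\|_1\le \degree$ and every $p\in\N$. Since there are only finitely many such $\alpha$, the function
\begin{align*}
\auxC(\auxvec) \;\explain{def}{=}\; K_{\degree,\order}\cdot \Bigl(1 + \max_{\|\alpha\|_1\le \degree}|\hat{c}_\alpha(\auxvec)|\Bigr)
\end{align*}
has $\E[|\auxC(\serv{A})|^{p}]<\infty$ for all $p\in\W$, provided $K_{\degree,\order}$ is a sufficiently large combinatorial constant to be chosen below.

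Finally, I would derive the four pointwise estimates by direct manipulation. Switching from the Hermite basis to the monomial basis (a finite, fixed linear change of basis in the space of polynomials of degree at most $\degree$) gives $\nonlin(z;\auxvec)=\sum_{\|\alpha\|_1\le\degree} c_\alpha(\auxvec)z^{\alpha}$ with $\max_\alpha|c_\alpha(\auxvec)|\lesssim_{\degree,\order}\max_\alpha|\hat{c}_\alpha(\auxvec)|$. Bound \eqref{eq:misc-lemma-1} then follows from $|z^{\alpha}|\le\|z\|_\infty^{\|\alpha\|_1}\le 1+\|z\|_\infty^{\degree}$; bound \eqref{eq:misc-lemma-2} from the same argument applied to the partial derivatives $\partial_j \nonlin(z;\auxvec)=\sum_\alpha \alpha_j c_\alpha(\auxvec) z^{\alpha-e_j}$, which are polynomials of degree at most $\degree-1$; and bounds \eqref{eq:misc-lemma-3} and \eqref{eq:misc-lemma-4} from first- and second-order Taylor expansion along the segment $[z',z]$, with the remainders controlled using \eqref{eq:misc-lemma-2} applied to $\nonlin$ and its second partials on that segment (the intermediate points $\zeta$ satisfy $\|\zeta\|_\infty\le\|z\|_\infty+\|z'\|_\infty$, which accounts for the symmetric dependence on $\|z\|_\infty^{\degree}+\|z'\|_\infty^{\degree}$). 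Absorbing all the combinatorial constants into $K_{\degree,\order}$ yields all four estimates simultaneously.

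The only non-routine step is the moment-control step for the coefficients: conceptually, the hypothesis only constrains $\nonlin(\serv{Z};\serv{A})$ in an integrated sense, and one must leverage that $\nonlin$ is a polynomial of bounded degree to translate this integrated control into pointwise estimates with a \emph{random} prefactor $\auxC(\serv{A})$ whose moments are all finite. The Hermite-orthogonality identity above is what makes this transfer clean; any attempt to avoid it (e.g., by extracting coefficients via finite-difference evaluations of $\nonlin$ at deterministic points) would force assumptions on $\nonlin(z;\auxvec)$ at non-random $z$, which are not available.
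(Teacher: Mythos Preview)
Your proposal is correct and follows essentially the same approach as the paper: expand in the Hermite basis, use orthogonality to bound the moments of the coefficients from the integrated hypothesis on $\nonlin(\serv{Z};\serv{A})$, then define $\auxC$ as a constant times the (max or sum of) coefficient magnitudes and read off the four estimates via elementary polynomial bounds and Taylor's theorem. The only cosmetic differences are that the paper uses Jensen plus Cauchy--Schwarz rather than H{\"o}lder for the coefficient moments, and stays in the Hermite basis (bounding $|\hermite{i}|$, $|\partial_j\hermite{i}|$, $|\partial^2_{jj'}\hermite{i}|$ directly) rather than passing to monomials.
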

\begin{proof} Consider the Hermite Decomposition of the polynomial $\nonlin(z; \auxvec)$:
\begin{align*}
    \nonlin(z; \auxvec) & = \sum_{\substack{i\in \W^\order \\ \|i\|_1 \leq \degree}} c_{i}(\auxvec) \cdot \hermite{i}(z),
\end{align*}
where $\{H_i(z): i \in \W^\order\}$ denote the $\order$-variate Hermite polynomials and the Hermite coefficients $c_i(\auxvec)$ are given by the formula:
\begin{align*}
    c_i(\auxvec) = \E[ \nonlin(\serv{Z}; \auxvec) \cdot \hermite{i}(\serv{Z})], \quad \serv{Z} \sim \gauss{0}{I_{\order}}.
\end{align*}
Observe that the coefficients $c_i(\auxvec)$ satisfy $\E[| c_i(\serv{A})|^p] < \infty$ for each $p \in \W$. Indeed,
\begin{align*}
    \E[| c_i(\serv{A})|^p] & = \E_{\serv{A}} \left[ \big| \E_{\serv{Z}}[ \nonlin(\serv{Z}; \serv{A}) \cdot \hermite{i}(\serv{Z}) ]\big|^p \right]  \leq \E\left[ \big| \nonlin(\serv{Z}; \serv{A})  \hermite{i}(\serv{Z})\big|^p \right] \leq \sqrt{\E[|\nonlin(\serv{Z}; \serv{A})|^{2p}] \E[|\hermite{i}(\serv{Z})|^{2p}]} <\infty. 
\end{align*}
Notice that there is a finite constant $C$ (determined by $\degree, \order$) such that for any $i \in \W^{\order}$ with $\|i\|_1 \leq \degree$  and any $j,j^\prime \in [k]$ we have:
\begin{align*}
    |\hermite{i}(z)| + |\partial_{z_j}\hermite{i}(z)| + ||\partial^2_{z_j z_j^\prime}\hermite{i}(z)|  & \leq C (1 + \|z\|_{\infty}^\degree) \quad \forall \; z \; \in \; \R^{\order}.
\end{align*}
This because $\{\hermite{i}(z), \; \partial_{z_j}\hermite{i}(z), \; \partial^2_{z_j z_j^\prime}\hermite{i}(z) \; : \; i \in \W^\order, \; \|i\|_1 \leq \degree, \; j,j^\prime \; \in \; [\order] \}$ is a finite collection of polynomials of degree at most $\degree$. Hence, 
\begin{align} \label{eq:misc-lemma-temp}
     |\nonlin(z; \auxvec)| + |\partial_{z_j} \nonlin(z; \auxvec)| + |\partial^2_{z_j z_j^\prime}\nonlin(z; \auxvec)| & \leq C \cdot \left( \sum_{\substack{i\in \W^\order : \|i\|_1 \leq \degree}} |c_{i}(\auxvec)| \right) \cdot (1 + \|z\|_{\infty}^\degree) 
\end{align}
We define $\auxC(\auxvec)$ as:
\begin{align*}
    \auxC(\auxvec) = Ck^2 \cdot \left( \sum_{\substack{i\in \W^\order : \|i\|_1 \leq \degree}} |c_{i}(\auxvec)| \right).
\end{align*}
The estimate in \eqref{eq:misc-lemma-temp} shows that claims \eqref{eq:misc-lemma-1} and \eqref{eq:misc-lemma-2} in the statement of the lemma hold with this choice of $\auxC(\auxvec)$. The claims \eqref{eq:misc-lemma-3} and \eqref{eq:misc-lemma-4} follow from Taylor's theorem. 
\end{proof}

\section{Reductions and Simplifications} \label{appendix:simplifications}
This appendix is devoted to the proof of \propref{orthogonalization}, which claims that it is sufficient to prove \thref{VAMP} when \sassumpref{orthogonality} and \sassumpref{balanced}. In order to prove this result, we will find it helpful to introduce the following additional simplifying assumption, which we will argue can be assumed without loss of generality. Similar simplifying assumptions have been used in prior works \citep{berthier2020state,fan2020approximate}.
\begin{sassumption}[Non-Degeneracy Condition]\label{sassump:non-degenerate} The limiting covariance matrix of the semi-random ensemble $\Omega$ (cf. \defref{semirandom}) and the state evolution covariances $\Phi_T, \Sigma_T$ defined in \eqref{eq:SE-VAMP} satisfy $\lambda_{\min{}}(\Omega) > 0$, $\lambda_{\min}(\Phi_T)>0$, and $\lambda_{\min{}}(\Sigma_T) > 0$.
\end{sassumption}

We prove \propref{orthogonalization} in three steps, introducing the various simplifying assumptions in a convenient order. These steps are stated in the following three lemmas. 

\begin{lemma}[Removing Non-Degeneracy Assumption]\label{lem:degeneracy} It suffices to prove \thref{VAMP} when \sassumpref{non-degenerate} holds in addition to the other assumptions required by \thref{VAMP}.
\end{lemma}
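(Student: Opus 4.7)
The plan is a perturbation argument. Suppose we are given a VAMP algorithm of the form \eqref{eq:VAMP} satisfying all hypotheses of \thref{VAMP} but possibly violating \sassumpref{non-degenerate}. For each $\epsilon \in (0,1)$, I would construct a perturbed VAMP algorithm that does satisfy \sassumpref{non-degenerate} and whose iterates are close in $L^2$ to those of the original algorithm, apply to it the (assumed) non-degenerate form of \thref{VAMP}, and then send $\epsilon \to 0$ to recover the conclusion.

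For the perturbation, I would modify both the non-linearities and the semi-random ensemble. To make $\Phi_T^{(\epsilon)}$ and $\Sigma_T^{(\epsilon)}$ non-degenerate, enlarge $\auxmat$ by appending $T$ additional independent $\gauss{0}{1}$ coordinates (which preserves \assumpref{side-info}), and define
\begin{align*}
\nonlin_i^{(\epsilon)}(z_1, \dotsc, z_{i-1}; \auxvec, \auxvec') \explain{def}{=} \sqrt{1-\epsilon^2}\, \nonlin_i(z_1, \dotsc, z_{i-1}; \auxvec) + \epsilon\, \auxvec'_i.
\end{align*}
Since $\auxvec'_i$ is mean-zero and independent of $\serv{Z}_{1:T}$, the Lipschitz, polynomial-growth, and divergence-free conditions all carry over, and the non-Gaussian state-evolution matrix becomes $\Phi_T^{(\epsilon)} = (1-\epsilon^2)\Phi_T + \epsilon^2 I_T \succeq \epsilon^2 I_T$. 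To make $\Omega^{(\epsilon)}$ non-degenerate, I would construct a deterministic sequence $\mXi_{1:T}$ of scaled, delocalized matrices satisfying the conditions of \defref{semirandom} with limiting covariance $I_T$ and with vanishing asymptotic cross-moments $\Tr(\mPsi_i \mXi_j^{\top})/\dim \to 0$, and replace $\mPsi_i$ by $\mPsi_i^{(\epsilon)} \explain{def}{=} \sqrt{1-\epsilon^2}\, \mPsi_i + \epsilon\, \mXi_i$. A direct verification against \defref{semirandom} shows that $\mM_i^{(\epsilon)} \explain{def}{=} \mS \mPsi_i^{(\epsilon)} \mS$ forms a semi-random ensemble with limiting covariance $\Omega^{(\epsilon)} = (1-\epsilon^2)\Omega + \epsilon^2 I_T \succeq \epsilon^2 I_T$, and a short induction through the recursion \eqref{eq:SE-VAMP} shows that $\Sigma_T^{(\epsilon)}$ is non-degenerate for all sufficiently small $\epsilon > 0$ and satisfies $\Sigma_T^{(\epsilon)} \to \Sigma_T$ as $\epsilon \to 0$.

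With the perturbed algorithm in hand, the assumed non-degenerate form of \thref{VAMP} yields $(\iter{\vz}{1,\epsilon}, \dotsc, \iter{\vz}{T,\epsilon}; \auxmat) \explain{\pw}{\longrightarrow} (\serv{Z}_1^{(\epsilon)}, \dotsc, \serv{Z}_T^{(\epsilon)}; \serv{A})$ with covariance $\Sigma_T^{(\epsilon)}$. The remaining step is an induction on $t \leq T$ showing that $\E\|\iter{\vz}{t,\epsilon} - \iter{\vz}{t}\|^2/\dim \to 0$ as $\epsilon \to 0$, uniformly in $\dim$, which follows from the uniform operator-norm bound on the (perturbed and unperturbed) semi-random matrices, the Lipschitz property of the non-linearities, and uniform-in-$\dim$ $L^2$ control of the iterates. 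Combined with the covariance continuity $\Sigma_T^{(\epsilon)} \to \Sigma_T$, a standard $\epsilon/3$ triangle-inequality argument (analogous to Step~2 of the proof of \thref{VAMP}) then transfers the {\pw} convergence to the original algorithm, completing the reduction. The main obstacle will be the construction of the auxiliary matrices $\mXi_{1:T}$ with the required asymptotic orthogonality to $\mPsi_{1:T}$ while simultaneously satisfying the delocalization and identity-approximation conditions of \defref{semirandom}; this can be arranged, for example, by using appropriately scaled, permuted Hadamard-type matrices chosen independently of the $\mPsi_{1:T}$, but the verification that $\mM_{1:T}^{(\epsilon)}$ remains semi-random requires careful bookkeeping of the cross-terms that appear in $\mPsi_i^{(\epsilon)} \mPsi_j^{(\epsilon) \top}$.
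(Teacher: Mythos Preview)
Your overall strategy---perturb to force non-degeneracy, apply the non-degenerate form of \thref{VAMP}, then send $\epsilon\to 0$---matches the paper's. However, there is a genuine gap in the execution that breaks the argument.

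The divergence-free condition (\assumpref{div-free}) is stated with respect to the Gaussian state evolution random variables \emph{of the algorithm under consideration}, and these are determined recursively by the covariance $\Sigma_T$ in \eqref{eq:SE-VAMP}. Once you replace $\Omega$ by $\Omega^{(\epsilon)}$, the state evolution covariance changes to some $\Sigma_T^{(\epsilon)}\neq\Sigma_T$, and the perturbed state evolution variables $\serv{Z}_{1:T}^{(\epsilon)}$ have a different joint law from $\serv{Z}_{1:T}$. There is no reason that $\E[\serv{Z}_s^{(\epsilon)}\,\nonlin_t(\serv{Z}_1^{(\epsilon)},\dotsc,\serv{Z}_{t-1}^{(\epsilon)};\serv{A})]=0$ just because $\E[\serv{Z}_s\,\nonlin_t(\serv{Z}_1,\dotsc,\serv{Z}_{t-1};\serv{A})]=0$; for a generic Lipschitz $\nonlin_t$ this fails as soon as the variances differ. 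The added noise term $\epsilon\,\auxvec'_i$ does not help here, since it is independent of $\serv{Z}_{1:T}^{(\epsilon)}$ and contributes zero to the relevant inner product. Consequently your perturbed algorithm does not satisfy the hypotheses of \thref{VAMP}, and you cannot invoke the non-degenerate version on it. For the same reason, your claimed identity $\Phi_T^{(\epsilon)}=(1-\epsilon^2)\Phi_T+\epsilon^2 I_T$ is incorrect: the entries of $\Phi_T^{(\epsilon)}$ are expectations of the $\nonlin_s$ evaluated at $\serv{Z}_{1:s-1}^{(\epsilon)}$, not at $\serv{Z}_{1:s-1}$.

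The paper resolves this by building an explicit linear correction into the perturbed non-linearities,
\[
\nonlin_t^\epsilon(z_1,\dotsc,z_{t-1};\auxvec,w)=\nonlin_t(z_1,\dotsc,z_{t-1};\auxvec)+\epsilon w_t-\sum_{s<t}(\alpha_t^\epsilon)_s\,z_s,
\]
with $\alpha_t^\epsilon=(\Sigma_{t-1}^\epsilon)^{-1}\E[\nonlin_t(\serv{Z}_1^\epsilon,\dotsc,\serv{Z}_{t-1}^\epsilon;\serv{A})\,\serv{Z}_{[t-1]}^\epsilon]$ chosen adaptively so that divergence-freeness is restored at each step, and then shows (\lemref{pert-VAMP-SE}) that ${\alpha_t^\epsilon}^\top\Sigma_{t-1}^\epsilon\alpha_t^\epsilon\to 0$ as $\epsilon\to 0$, so the correction is asymptotically negligible. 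For the $\Omega$-perturbation the paper uses independent GOE matrices $\mG_t$ and sets $\mM_t^\epsilon=\mM_t+\epsilon\,\mS\mG_t\mS$; this sidesteps entirely the construction of deterministic $\mXi_{1:T}$ asymptotically orthogonal to the given $\mPsi_{1:T}$, which you correctly flagged as the main obstacle in your approach.
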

\begin{lemma}[Balancing Semi-Random Matrices]\label{lem:balancing} It suffices to prove \thref{VAMP} when  \sassumpref{non-degenerate} and \sassumpref{balanced} hold in addition to the other assumptions required by \thref{VAMP}.
\end{lemma}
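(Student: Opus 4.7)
The plan is to construct, from the given semi-random ensemble $\mM_{1:T}=\mS\mPsi_{1:T}\mS$, a \emph{balanced} semi-random ensemble $\hat{\mM}_{1:T}=\mS\hat{\mPsi}_{1:T}\mS$ with the same limiting covariance matrix $\Omega$ and with $\max_{t\in[T]}\|\hat{\mM}_t-\mM_t\|_{\op}\to 0$. Once such a pair exists, the (assumed) version of \thref{VAMP} for balanced ensembles can be transferred back to the original ensemble by a Lipschitz-perturbation argument for the VAMP iterates, completing the reduction.

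For every $\ell\in[\dim]$, form the local Gram matrix $G_\ell\in\R^{T\times T}$ with entries $(G_\ell)_{st}\bydef (\mPsi_s\mPsi_t^\top)_{\ell\ell}$. Conditions (3b)--(3c) of \defref{semirandom} give $\|G_\ell-\hat{\Omega}\|_{\infty}\lesssim \dim^{-1/2+\epsilon}$ uniformly in $\ell$, and $\hat{\Omega}\to\Omega$. Under \sassumpref{non-degenerate}, $\Omega$ is positive definite, so for $\dim$ large enough every $G_\ell$ is uniformly positive definite. Set $A_\ell\bydef \Omega^{1/2}G_\ell^{-1/2}\in\R^{T\times T}$; then $A_\ell G_\ell A_\ell^\top=\Omega$, and Lipschitzness of the matrix square root on a neighbourhood of $\Omega$ together with $T$ being independent of dimension gives $\|A_\ell-\mI_T\|_{\infty}\lesssim \dim^{-1/2+\epsilon}$. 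Now define
\begin{align*}
\hat{\mPsi}_i\bydef \sum_{j=1}^T \mA^{(ij)}\mPsi_j,\qquad \mA^{(ij)}\bydef \diag\bigl((A_\ell)_{ij}:\ell\in[\dim]\bigr).
\end{align*}
By construction $(\hat{\mPsi}_s\hat{\mPsi}_t^\top)_{\ell\ell}=(A_\ell G_\ell A_\ell^\top)_{st}=\Omega_{st}$ for every $\ell$, which is precisely \sassumpref{balanced}. The remaining semi-random requirements of \defref{semirandom}---bounded operator norm, entrywise delocalization $\|\hat{\mPsi}_i\|_{\infty}\lesssim \dim^{-1/2+\epsilon}$, and entrywise control $\|\hat{\mPsi}_s\hat{\mPsi}_t^\top-\Omega_{st}\mI_\dim\|_{\infty}\lesssim \dim^{-1/2+\epsilon}$---follow from the expansion $\hat{\mPsi}_s\hat{\mPsi}_t^\top=\sum_{i,j}\mA^{(si)}\mPsi_i\mPsi_j^\top\mA^{(tj)}$, the bound $A_\ell=\mI_T+O(\dim^{-1/2+\epsilon})$, and the fact that off-diagonal entries $(\mPsi_i\mPsi_j^\top)_{\ell m}$ with $\ell\neq m$ are themselves $O(\dim^{-1/2+\epsilon})$ by the assumed semi-randomness of $\mPsi_{1:T}$.

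The operator-norm approximation $\|\hat{\mPsi}_i-\mPsi_i\|_{\op}\leq \sum_{j}\|\mA^{(ij)}-\delta_{ij}\mI_\dim\|_{\op}\|\mPsi_j\|_{\op}\lesssim T\dim^{-1/2+\epsilon}$ is then immediate. Running the VAMP algorithm with identical non-linearities, side information, and initialization but with the two ensembles $\mM_{1:T}$ and $\hat{\mM}_{1:T}$, one obtains iterates $\iter{\vz}{1:T}$ and $\iter{\hat{\vz}}{1:T}$. An induction on $t$ using the uniform Lipschitzness and polynomial growth of $f_{1:T}$, together with a priori moment control on the iterates of the kind established in \appref{concentration}, yields $\dim^{-1}\|\iter{\vz}{t}-\iter{\hat{\vz}}{t}\|^2\explain{P}{\rightarrow}0$ for every $t\leq T$. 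Combined with the Lipschitz and polynomial-growth structure of the PW2 test functions in \defref{PW2}, this $L^2$-closeness transfers the PW2 convergence of $(\iter{\hat{\vz}}{1:T};\auxmat)$---granted by the balanced version of \thref{VAMP}---to $(\iter{\vz}{1:T};\auxmat)$ with the same Gaussian state evolution limit.

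The main technical obstacle is not the existence of the reshuffling but the verification that the single linear row-mixing $\hat{\mPsi}_i=\sum_j\mA^{(ij)}\mPsi_j$ preserves \emph{all} of the semi-random conditions after a product is taken; in particular the off-diagonal entrywise bound on $\hat{\mPsi}_s\hat{\mPsi}_t^\top$ requires expanding that product and controlling each of the $T^2$ cross terms. This is bookkeeping rather than a new idea: every cross term reduces to an object already controlled by the hypotheses on $\mPsi_{1:T}$. Together with the routine Lipschitz-perturbation analysis for the VAMP iterates sketched above, this gives \lemref{balancing}.
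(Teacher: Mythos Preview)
Your proposal is correct and follows essentially the same approach as the paper. Your row-wise remixing $A_\ell=\Omega^{1/2}G_\ell^{-1/2}$ is exactly the construction in \lemref{approx-by-balanced} (the paper writes $\hat{\mU}_\ell=\Omega^{1/2}\Omega_\ell^{-1/2}\mU_\ell$ with $\Omega_\ell=G_\ell$), and the subsequent Lipschitz-perturbation induction on the VAMP iterates matches the paper's argument; the only minor imprecision is that the needed bound $\limsup_{\dim}\E\|\iter{\hat{\vz}}{t}\|^2/\dim<\infty$ comes most cleanly from applying the balanced version of \thref{VAMP} (or the direct inductive moment bound as in \lemref{pert-VAMP-pert-bound}), not from \appref{concentration}.
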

\begin{lemma}[Orthogonalization]\label{lem:orthogonalization} It suffices to prove \thref{VAMP} when  \sassumpref{balanced} and \sassumpref{orthogonality} hold in addition to the other assumptions required by \thref{VAMP}.
\end{lemma}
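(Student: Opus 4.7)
By \lemref{balancing}, it suffices to establish \thref{VAMP} under \sassumpref{non-degenerate} and \sassumpref{balanced}; my plan is to reduce further to the case in which \sassumpref{orthogonality} also holds by constructing, from any such VAMP algorithm, an \emph{orthogonalized} VAMP whose iterates encode the original ones via an explicit linear reduction and which additionally satisfies \sassumpref{orthogonality}. The conclusion of \thref{VAMP} for the original algorithm would then follow by applying \thref{VAMP} (assumed under the extra \sassumpref{orthogonality}) to the orthogonalized algorithm and pushing the resulting $\pw$-convergence through the reconstruction.

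The construction relies on two coupled Gram-Schmidt processes. First, viewing $\mPsi_{1:T}$ as elements of $\R^{\dim \times \dim}$ with the bilinear form $\langle \mA, \mB\rangle_\dim \bydef \Tr(\mA \mB\tran)/\dim$, Gram-Schmidt yields $\tilde{\mPsi}_t = \sum_{s \leq t} u_{ts}^{(\dim)} \mPsi_s$ with $\Tr(\tilde{\mPsi}_s \tilde{\mPsi}_t\tran)/\dim \to \delta_{st}$. Each $\tilde{\mPsi}_t$ is a bounded linear combination of semi-random matrices, so $\tilde{\mM}_t \bydef \mS \tilde{\mPsi}_t \mS$ forms a balanced semi-random ensemble with limiting covariance $\tilde{\Omega} = \mI_T$; \sassumpref{non-degenerate} is precisely what ensures the Gram-Schmidt residuals never vanish. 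Second, applying Gram-Schmidt to $\nonlin_{1:T}$ inside the Gaussian Hilbert space generated by the state evolution variables (while respecting the divergence-free constraint from \assumpref{div-free}) produces $\tilde{\nonlin}_t$ with $\E[\tilde{\nonlin}_s \tilde{\nonlin}_t] = \delta_{st}$ and $\E[\serv{Z}_r \tilde{\nonlin}_t] = 0$ for $r < t$. Since both $\tilde{\Omega}$ and the Gram matrix of the $\tilde{\nonlin}$'s are the identity, \sassumpref{orthogonality} holds for the orthogonalized data; Lipschitz regularity and polynomial growth of $\tilde{\nonlin}_t$ carry over because the Gram-Schmidt coefficients are deterministic and bounded.

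I would then define the orthogonalized iteration
\[
  \iter{\tilde{\vz}}{t} \explain{def}{=} \tilde{\mM}_t\, \tilde{\nonlin}_t(\iter{\tilde{\vz}}{1}, \dotsc, \iter{\tilde{\vz}}{t-1}; \auxmat),
\]
and prove by induction on $t$ that there exist scalars $c_{ts}$ (built from the two families of Gram-Schmidt coefficients) and a polynomially bounded function $r_t$ of $\auxmat$ alone such that
\[
  \frac{1}{\dim}\Big\| \iter{\vz}{t} - \sum_{s \leq t} c_{ts}\, \iter{\tilde{\vz}}{s} - r_t(\auxmat)\Big\|^2 \explain{P}{\rightarrow} 0.
\]
The inductive step uses that $\mM_t$ differs from a linear combination of $\tilde{\mM}_{1:t}$ only by a matrix whose operator action on $O(1)$-norm vectors vanishes in probability, and that $\nonlin_t$ evaluated at the recovered iterates can be expanded via the function-level Gram-Schmidt into $\tilde{\nonlin}_s$'s evaluated at the new iterates, with the remainder absorbed into lower-order terms. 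Transferring $\pw$-convergence through this reconstruction, and through test functions satisfying the pseudo-Lipschitz conditions of \defref{PW2}, would then complete the argument.

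The principal obstacle I anticipate is coupling the two Gram-Schmidt processes correctly so that the product condition $\tilde{\Omega}_{st} \E[\tilde{\nonlin}_s \tilde{\nonlin}_t] = 0$ for $s \neq t$ and the reconstruction identity above remain compatible. The most delicate bookkeeping concerns the gap between the finite-dimensional traces $\Tr(\mPsi_s \mPsi_t\tran)/\dim$ used to define the matrix-level coefficients and their limits (which drive the function-level coefficients); \sassumpref{non-degenerate} (imported via \lemref{degeneracy}) is used repeatedly to invert the Gram matrices at both stages and to control the accumulated reconstruction error.
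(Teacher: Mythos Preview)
Your high-level plan (two Gram--Schmidt processes, one on the matrices and one on the nonlinearities) matches the paper. The gap is in the \emph{architecture} of the orthogonalized algorithm. You propose a single-index, $T$-iteration VAMP
\[
  \iter{\tilde{\vz}}{t} = \tilde{\mM}_t\, \tilde{\nonlin}_t(\iter{\tilde{\vz}}{1},\dotsc,\iter{\tilde{\vz}}{t-1};\auxmat)
\]
and claim that $\iter{\vz}{t} \approx \sum_{s\le t} c_{ts}\,\iter{\tilde{\vz}}{s} + r_t(\auxmat)$. This cannot hold. Writing $\mPsi_t = \sum_{j\le t} P_{tj}\hat{\mPsi}_j$ and $\nonlin_t = \sum_{s\le t} Q_{ts}\hat{\nonlin}_s$, already at $t=2$ you get
\[
  \iter{\vz}{2} \;=\; \sum_{j\le 2}\sum_{s\le 2} P_{2j}\,Q_{2s}\,\hat{\mM}_j\,\hat{\nonlin}_s(\iter{\vz}{1:s-1};\auxmat),
\]
a linear combination of the \emph{four} vectors $\hat{\mM}_j\,\hat{\nonlin}_s(\cdot)$ indexed by $(j,s)\in\{1,2\}^2$. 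Your iterates $\iter{\tilde{\vz}}{1},\iter{\tilde{\vz}}{2}$ give access only to the diagonal pairs $(1,1),(2,2)$. The off-diagonal term $\hat{\mM}_2\,\hat{\nonlin}_1(\auxmat)$ genuinely depends on the random matrix $\hat{\mM}_2$, so it can neither be written as a linear combination of $\iter{\tilde{\vz}}{1},\iter{\tilde{\vz}}{2}$ (which involve $\hat{\mM}_2$ only through $\hat{\nonlin}_2$) nor absorbed into a function $r_2(\auxmat)$ of the side information alone. In general there are $t^2$ such cross terms at step $t$ but only $t$ coefficients $c_{ts}$; the system is overdetermined and the reconstruction identity fails.

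The paper's resolution is exactly to track all cross terms. It introduces doubly-indexed orthogonalized iterates $\iter{\hat{\vz}}{t,i} \explain{def}{=} \hat{\mM}_i\,\hat{\nonlin}_t(\iter{\vz}{1:t-1};\auxmat)$ for all $(t,i)\in[T]^2$, shows the exact identity $\iter{\vz}{t} = \sum_{j\le t}\sum_{s\le t} P_{tj}Q_{ts}\,\iter{\hat{\vz}}{s,j}$, and then linearizes $(t,i)\mapsto\tau=T(t-1)+i$ to obtain a $T^2$-iteration VAMP $\iter{\vv}{\tau} = \mQ_\tau\,g_\tau(\iter{\vv}{1:\tau-1};\auxmat)$ with $\mQ_\tau = \hat{\mM}_i$ and $g_\tau$ equal to $\hat{\nonlin}_t$ \emph{pre-composed with the linear reconstruction maps} $\linmap_1,\dotsc,\linmap_{t-1}$. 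This construction makes \sassumpref{orthogonality} automatic: the limiting matrix covariance $\Upsilon_{\tau\tau'}$ depends only on $(i,i')$ and is $\delta_{ii'}$, while $\E[g_\tau g_{\tau'}]$ depends only on $(t,t')$ and is $\delta_{tt'}$, so their product vanishes whenever $\tau\neq\tau'$. The ``principal obstacle'' you anticipate is thus resolved not by more careful bookkeeping within a $T$-step iteration, but by enlarging to $T^2$ steps.
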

Observe that \lemref{orthogonalization} is a restatement of \propref{orthogonalization}. Hence, the remainder of this appendix is devoted to the lemmas introduced above and is organized as follows:
\begin{enumerate}
    \item \appref{non-degeneracy} proves \lemref{degeneracy} by adapting a perturbation argument of \citet{berthier2020state}. 
    \item \appref{balancing} proves \lemref{balancing} by showing that a semi-random ensemble can be approximated by a balanced semi-random ensemble. 
    \item \appref{orthogonalization} proves \lemref{orthogonalization} by showing the a VAMP algorithm can be implemented using a suitably designed orthogonalized VAMP algorithm that satisfies the orthogonality conditions stated in \sassumpref{orthogonality}. 
\end{enumerate}

\subsection{Removing the Non-Degeneracy Assumption} \label{appendix:non-degeneracy}
This section is devoted to the proof of \lemref{degeneracy}. In order to prove the claim of this lemma, we need to show that if \thref{VAMP} holds under the additional non-degeneracy assumption stated as \sassumpref{non-degenerate}, then it must also hold without the non-degeneracy assumption. To this end, we consider $T$ iterations of a VAMP algorithm, which satisfies all the assumptions of \thref{VAMP}, but need not satisfy  \sassumpref{non-degenerate}:
\begin{align}\label{eq:VAMP-unpert}
    \iter{\vz}{t} = \mM_t \cdot \nonlin_t(\iter{\vz}{1}, \iter{\vz}{2}, \dotsc, \iter{\vz}{t-1}; \auxmat) \quad \forall \; t\; \in \; [T].
\end{align}
In the above display $\mM_{1:T}$ is a semi-random ensemble with $\mM_i = \mS \mPsi_i \mS$ where $\mS$ is a uniformly random sign diagonal matrix. 
Let: 
\begin{subequations}\label{eq:SE-VAMP-unpert}
\begin{align}\serv{Z}_{1}, \dotsc, \serv{Z}_{T} \sim \gauss{0}{\Sigma_T},
\end{align} $\Phi_T$ and $\Sigma_T$ denote the state evolution random variables and covariance matrices associated with the VAMP algorithm in \eqref{eq:VAMP-unpert}. Recall from \eqref{eq:SE-VAMP}, for each $t  \in \{0, 1, \dotsc, T-1\}$, these are defined recursively as follows:
\begin{align} 
    (\Phi_{T})_{s,t+1} & \explain{def}{=} \E[\nonlin_{s}(\serv{Z}_1, \dotsc, \serv{Z}_{s-1}; \serv{A}) \nonlin_{t+1}(\serv{Z}_1, \dotsc, \serv{Z}_{t}; \serv{A})] \quad \forall \; s \; \leq \; t+1, \\
   (\Sigma_{T})_{s,t+1} & \explain{def}{=}  \Omega_{s,t+1} \cdot (\Phi_{T})_{s,t+1}   \quad \forall \; s \; \leq \; t+1.
\end{align}
\end{subequations}
In the above display $\serv{A}$ is the auxiliary information random variable from \assumpref{side-info} independent of $\serv{Z}_{1}, \dotsc, \serv{Z}_{T}$. In other to prove \lemref{degeneracy}, we need to show that 
\begin{align} \label{eq:degeneracy-goal}(\iter{\vz}{1}, \iter{\vz}{2}, \dotsc, \iter{\vz}{T}, \auxmat) \explain{\pw}{\longrightarrow} (\serv{Z}_{1}, \dotsc, \serv{Z}_{T}, \serv{A}).
\end{align}In order to do so, we will introduce a perturbed iteration which approximates \eqref{eq:VAMP-unpert} and additionally satisfies the non-degeneracy condition. We will then infer \eqref{eq:degeneracy-goal} by applying \thref{VAMP} to the perturbed iteration.

\paragraph{Perturbed VAMP.} For each $\epsilon \in (0,1)$, we define a perturbed VAMP iteration of the form:
\begin{align}\label{eq:VAMP-pert}
    \iter{\vz}{t}_\epsilon = \mM_t^\epsilon \cdot \nonlin_t^\epsilon(\iter{\vz}{1}_\epsilon, \iter{\vz}{2}_\epsilon, \dotsc, \iter{\vz}{t-1}_\epsilon; \auxmat, \mW) \quad \forall \; t\; \in \; [T].
\end{align}
In the above display:
\begin{enumerate}
    \item $\mM_t^\epsilon = \mS(\mPsi_t + \epsilon \mG_t)\mS = \mM_t + \epsilon \mS \mG_t \mS$ where $\mG_{1:T}$ are i.i.d. $\dim \times \dim$ matrices drawn from the Gaussian Orthogonal Ensemble (GOE). The following lemma verifies that $\mM_{1:T}^\epsilon$ are semi-random, as required by \thref{VAMP}.  
\end{enumerate}
\begin{lemma}\label{lem:semirandom-pert} With probability 1, $\mM_{1:T}^\epsilon$ form a semi-random ensemble with limiting covariance matrix $\Omega_\epsilon \explain{def}{=} \Omega + \epsilon^2 I_T$.
\end{lemma}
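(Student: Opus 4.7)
The plan is to verify the four deterministic conditions of Definition~\ref{def:semirandom} for the matrices $\mPsi_i^\epsilon \explain{def}{=} \mPsi_i + \epsilon \mG_i$, working on the probability-one event where the GOE matrices $\mG_{1:T}$ enjoy their standard asymptotic properties. Since $\mM_i^\epsilon = \mS \mPsi_i^\epsilon \mS$ by construction, once the conditions on $\mPsi_{1:T}^\epsilon$ are checked, the conclusion follows. Throughout, I will use the original semi-random properties of $\mPsi_{1:T}$ as a black box.

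First, for the bounded operator norm condition, I will apply the triangle inequality $\|\mPsi_i^\epsilon\|_{\op} \leq \|\mPsi_i\|_{\op} + \epsilon \|\mG_i\|_{\op}$, together with the well-known GOE concentration estimate $\|\mG_i\|_{\op} \rightarrow 2$ almost surely. For the delocalization condition $\|\mPsi_i^\epsilon\|_{\infty} \lesssim \dim^{-1/2+\epsilon'}$, I combine the original bound on $\|\mPsi_i\|_{\infty}$ with the standard Gaussian tail bound $\max_{k,\ell}|(G_i)_{k\ell}| \lesssim \sqrt{\log \dim}/\sqrt{\dim}$ almost surely, then apply a Borel--Cantelli argument to ensure this holds simultaneously for all $i \in [T]$.

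The bulk of the work is on the two ``covariance'' conditions. I will expand
\begin{align*}
    \mPsi_i^\epsilon (\mPsi_j^\epsilon)\tran = \mPsi_i \mPsi_j\tran + \epsilon\mPsi_i \mG_j\tran + \epsilon \mG_i \mPsi_j\tran + \epsilon^2 \mG_i \mG_j\tran
\end{align*}
and analyze each of the four terms. For the trace/$\dim$ limit:
(i) $\Tr(\mPsi_i \mPsi_j\tran)/\dim \rightarrow \Omega_{ij}$ by hypothesis;
(ii) $\Tr(\mPsi_i \mG_j\tran)/\dim$ is Gaussian with variance $\lesssim \|\mPsi_i\|_{F}^2/\dim^2 \lesssim 1/\dim$, hence vanishes a.s.;
(iii) likewise for $\Tr(\mG_i \mPsi_j\tran)/\dim$;
(iv) $\Tr(\mG_i \mG_j\tran)/\dim \rightarrow \delta_{ij}$ by the semicircle law (second moment of the semicircle distribution equals $1$) and independence for $i \ne j$.
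Summing, $\hat\Omega^\epsilon_{ij} \rightarrow \Omega_{ij} + \epsilon^2 \delta_{ij}$, identifying the limiting covariance as $\Omega + \epsilon^2 I_T$. For the entry-wise centered bound, the term $\mPsi_i\mPsi_j\tran - \hat\Omega_{ij}\mI$ is $O(\dim^{-1/2+\epsilon'})$ by hypothesis; the two cross terms have entries $(\mPsi_i\mG_j\tran)_{k\ell} = \langle (\mPsi_i)_{k,:},(\mG_j)_{\ell,:}\rangle$ that are Gaussian with variance $\lesssim \|\mPsi_i\|_{\op}^2/\dim \lesssim 1/\dim$, so a union bound over the $\dim^2$ entries yields an $O(\sqrt{\log \dim}/\sqrt{\dim})$ bound almost surely.

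The main obstacle is the quadratic GOE term $\mG_i\mG_j\tran - c_{ij}\mI$: one must show it is entrywise $O(\dim^{-1/2+\epsilon'})$ where $c_{ij}$ is the appropriate centering constant (equal to $\Tr(\mG_i\mG_j\tran)/\dim$, which concentrates near $\delta_{ij}$). Off-diagonal entries $\sum_m (G_i)_{km}(G_j)_{\ell m}$ are sums of products of independent Gaussians and can be controlled by Hanson--Wright (or a direct moment calculation) to get $O(\sqrt{\log\dim/\dim})$; diagonal entries $\sum_m (G_i)_{km}(G_j)_{km}$ require the same tool for $i\ne j$ and, for $i=j$, a concentration of $\|(\mG_i)_{k,:}\|^2$ around its mean $\approx 1$. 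A union bound over $k,\ell \in [\dim]$ and $i,j \in [T]$ and Borel--Cantelli then give the required $\dim^{-1/2+\epsilon'}$ estimate. Combining all four expansion terms via the triangle inequality completes the verification.
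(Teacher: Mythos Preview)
Your proposal is correct and follows essentially the same route as the paper's proof: expand $\mPsi_i^\epsilon(\mPsi_j^\epsilon)\tran$ into four terms, handle the operator norm and $\ell_\infty$ delocalization by triangle inequality plus standard GOE bounds, and control the cross terms and the $\mG_i\mG_j\tran-\delta_{ij}\mI$ term by Gaussian/sub-exponential concentration with a union bound over entries. The only cosmetic difference is that the paper invokes Bernstein's inequality for the quadratic GOE term (viewing each entry as a sum of sub-exponential random variables) whereas you reach for Hanson--Wright; either works.
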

\begin{enumerate}
  \setcounter{enumi}{1}
    \item The algorithm in \eqref{eq:VAMP-pert} uses two kinds of auxiliary information $\auxmat$ and $\mW$. The auxiliary information $\auxmat$ is the same auxiliary information that is used in the original VAMP iterations \eqref{eq:VAMP-unpert}  that we seek to approximate. The auxiliary information $\mW$ is a $\dim \times T$ matrix with columns $\vw_1, \dotsc, \vw_T$ which are sampled i.i.d. from $\gauss{\vzero}{\mI_{\dim}}$. 
    \item The non-linearities $\nonlin_t^\epsilon$ are given by:
    \begin{align} \label{eq:perturbed-nonlinearities}
        \nonlin_t^{\epsilon}(z_1, \dotsc, z_{t-1}; \auxvec, w_{1}, \dotsc, w_T) = \nonlin_t(z_1, \dotsc, z_{t-1}; \auxvec) + \epsilon \cdot w_t - \sum_{s=1}^{t-1} (\alpha_{t}^\epsilon)_s \cdot z_s.
    \end{align}
    In the above display, the \emph{correction coefficients} $\alpha_t^\epsilon \in \R^{t-1}$ will be specified adaptively with the state evolution recursion so that the non-linearities $\nonlin_t^{\epsilon}$ are divergence-free in the sense of \assumpref{div-free}. 
\end{enumerate}

\paragraph{State Evolution and Correction Vectors for Perturbed VAMP.} Next, we specify the state evolution associated with the perturbed VAMP algorithm (recall \eqref{eq:SE-VAMP}) and coefficients in \eqref{eq:perturbed-nonlinearities}. We will denote the Gaussian state evolution random variables associated with \eqref{eq:VAMP-pert} as $\serv{Z}_1^\epsilon, \dotsc, \serv{Z}_T^\epsilon$, which will be distributed as $\gauss{0}{\Sigma_T^{\epsilon}}$ where $\Sigma_T^\epsilon$ is the Gaussian state evolution covariance. Likewise we will denote the non-Gaussian state evolution covariance associated with \eqref{eq:VAMP-pert} by $\Phi_T^\epsilon$. For each $t \in \{0, 1, \dotsc, T-1\}$, the correction vectors $\alpha_{t+1} \in \R^t$ (which complete the definition of $\nonlin_{t+1}^\epsilon$ in \eqref{eq:perturbed-nonlinearities}), the entries of $\Phi_T, \Sigma_T$ are defined recursively as follows:
\begin{subequations}\label{eq:SE-pert-VAMP}
\begin{align}
    \alpha_{t+1}^\epsilon & \explain{def}{=} ({\Sigma_t^\epsilon})^{-1} \cdot  \E[\nonlin_{t+1}(\serv{Z}_1^\epsilon, \dotsc, \serv{Z}_t^\epsilon; \serv{A}) \cdot \serv{Z}_{[t]}^\epsilon], \\
     (\Phi_{T}^\epsilon)_{s,t+1} & \explain{def}{=} \E[\nonlin_{s}^\epsilon(\serv{Z}_1^\epsilon, \dotsc, \serv{Z}_{s-1}^\epsilon; \serv{A}, \serv{W}) \nonlin_{t+1}^\epsilon(\serv{Z}_1^\epsilon, \dotsc, \serv{Z}_{t}^\epsilon; \serv{A},\serv{W})] \quad \forall \; s \; \leq \; t+1, \\
   (\Sigma_{T}^\epsilon)_{s,t+1} & \explain{def}{=}  (\Omega_\epsilon)_{s,t+1} \cdot (\Phi_{T}^\epsilon)_{s,t+1}   \quad \forall \; s \; \leq \; t+1.
\end{align}
\end{subequations}
In the above display:
\begin{enumerate}
    \item $\Sigma_t^\epsilon$ denotes the leading $t \times t$ principal sub-matrix of $\Sigma_T^{\epsilon}$ (formed by the first $t$ rows and columns). Similarly, we will also use $\Phi_t^\epsilon$ to denote the leading $t \times t$ principal sub-matrix of $\Phi_T^{\epsilon}$.
    \item $\serv{Z}_{[t]}^\epsilon \in \R^t$ is the vector $(\serv{Z}_1^\epsilon, \dotsc, \serv{Z}_t^\epsilon)$.
    \item $\serv{A}$ is the auxiliary information random variable and $\serv{W} \sim \gauss{0}{I_T}$. These random variables are independent of each other and of $(\serv{Z}_1, \dotsc, \serv{Z}_T)$. 
    \item $\Omega_\epsilon$ is the limiting covariance matrix of the semi-random ensemble $\mM_{1:T}^\epsilon$ defined in \lemref{semirandom-pert}
\end{enumerate}
Note that for the recursion \eqref{eq:SE-pert-VAMP} to be well-defined, $\Sigma_t^\epsilon$ should be invertible. The following lemma shows that this is indeed the case and collects some useful properties of the state evolution \eqref{eq:SE-pert-VAMP}. 

\begin{lemma} \label{lem:pert-VAMP-SE} For each $t \in [T]$, we have:
\begin{multicols}{2}
\begin{enumerate}
    \item $\lambda_{\min}(\Phi_t^\epsilon) > 0$.
    \item $\lambda_{\min}(\Sigma_t^\epsilon) > 0$.
    \item ${\alpha_t^\epsilon} \tran \Sigma_{t-1}^{\epsilon} \alpha_t^{\epsilon} \rightarrow 0$ as $\epsilon \rightarrow 0$.
    \item $\Phi_t^\epsilon \rightarrow \Phi_t$ and $\Sigma_t^\epsilon \rightarrow \Sigma_t$ as $\epsilon \rightarrow 0$.
\end{enumerate}
\end{multicols}
\begin{enumerate}
\setcounter{enumi}{4}
    \item For any $h:\R^{t + \auxdim} \mapsto \R$ which satisfies:
    \begin{align*}
        |h(z;a) - h(z^\prime;a)| & \leq L \cdot (1 + \|z\| + \|z^\prime\| + \|a\|^D)\cdot \|z-z^\prime\|
    \end{align*}
    for some constants $L < \infty, D \in \N$, we have $\E h(\serv{Z}_1^\epsilon, \dotsc, \serv{Z}_t^\epsilon; \serv{A}) \rightarrow \E h(\serv{Z}_1, \dotsc, \serv{Z}_t; \serv{A})$ as $\epsilon \rightarrow 0$. 
\end{enumerate}
\end{lemma}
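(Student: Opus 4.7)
The plan is to prove all five claims simultaneously by induction on $t$, exploiting the key structural fact that the perturbation $\epsilon \cdot w_t$ in \eqref{eq:perturbed-nonlinearities} adds an independent Gaussian component with covariance $\epsilon^2 I_t$ to the non-Gaussian state evolution covariance. Concretely, since $\serv{W} \sim \gauss{0}{I_T}$ is independent of $(\serv{Z}^\epsilon_{[t]}, \serv{A})$ and the $w_s$-components of $\nonlin_s^\epsilon$ are pairwise uncorrelated across $s$, one gets the identity
\begin{align*}
\Phi_t^\epsilon \;=\; \tilde{\Phi}_t^\epsilon + \epsilon^2 I_t,
\end{align*}
where $\tilde{\Phi}_t^\epsilon$ is the covariance of $(\nonlin_s(\serv{Z}^\epsilon_{[s-1]};\serv{A}) - \langle \alpha_s^\epsilon, \serv{Z}^\epsilon_{[s-1]}\rangle)_{s=1}^t$, hence PSD. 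This immediately yields claim (1). Claim (2) then follows from the Schur product theorem applied to the Hadamard factorization $\Sigma_t^\epsilon = \Omega_{\epsilon,[t]} \odot \Phi_t^\epsilon$: by \lemref{semirandom-pert}, $\Omega_\epsilon = \Omega + \epsilon^2 I_T \succ 0$, and combined with $\Phi_t^\epsilon \succ 0$ from (1), the Hadamard product is strictly positive definite. In particular, the recursion \eqref{eq:SE-pert-VAMP} defining $\alpha_{t+1}^\epsilon$ is well-posed at every step.

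For claim (4), the idea is that as $\epsilon \to 0$, the perturbation $\epsilon \mS \mG_t \mS$ vanishes in operator norm and the extra Gaussian term $\epsilon w_t$ vanishes in $L^2$, so the joint law of $(\serv{Z}^\epsilon_{[t]}, \serv{A})$ converges weakly to that of $(\serv{Z}_{[t]}, \serv{A})$ and, with the polynomial moment bounds inherited from \assumpref{side-info} and the polynomial-growth assumption on $\nonlin_{1:T}$, one obtains moment convergence as well. Running the state-evolution recursion inductively, $\Phi_t^\epsilon \to \Phi_t$ and $\Sigma_t^\epsilon \to \Sigma_t$, provided one verifies along the way that $\alpha_t^\epsilon$ stays bounded (more on this below). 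Claim (5) is an essentially identical continuity argument: $\serv{Z}^\epsilon_{[t]} \eqd (\Sigma_t^\epsilon)^{1/2}\serv{G}$ with $\serv{G} \sim \gauss{0}{I_t}$ independent of $\serv{A}$, so $\E h(\serv{Z}^\epsilon_{[t]};\serv{A}) = \E h((\Sigma_t^\epsilon)^{1/2}\serv{G};\serv{A})$ and the pseudo-Lipschitz plus polynomial-growth bound on $h$ together with $(\Sigma_t^\epsilon)^{1/2} \to \Sigma_t^{1/2}$ yields the result by dominated convergence.

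The heart of the argument, and what I expect to be the main obstacle, is claim (3), because $(\Sigma_{t-1}^\epsilon)^{-1}$ can blow up like $\epsilon^{-2}$ as $\epsilon \to 0$ and the vector $v_\epsilon := \E[\nonlin_t(\serv{Z}^\epsilon_{[t-1]};\serv{A})\serv{Z}^\epsilon_{[t-1]}]$ only converges to $0$, not at a controlled rate. The fix is to use Gaussian integration by parts to factor out the singular matrix: writing
\begin{align*}
v_\epsilon \;=\; \E\!\left[\nonlin_t(\serv{Z}^\epsilon_{[t-1]};\serv{A})\,\serv{Z}^\epsilon_{[t-1]}\right] \;=\; \Sigma_{t-1}^\epsilon\,\E\!\left[\nabla_z \nonlin_t(\serv{Z}^\epsilon_{[t-1]};\serv{A})\right],
\end{align*}
which is valid in the weak/distributional sense for polynomially bounded continuous $\nonlin_t$ (approximate by smooth functions and pass to the limit, using the polynomial moment bounds guaranteed by \assumpref{side-info}), one obtains the clean formula $\alpha_t^\epsilon = u_\epsilon$ with $u_\epsilon := \E\,\nabla_z \nonlin_t(\serv{Z}^\epsilon_{[t-1]};\serv{A})$. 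In particular, $\alpha_t^\epsilon$ is bounded uniformly in $\epsilon$ (which justifies the continuity step in claim (4) alluded to above). Then
\begin{align*}
{\alpha_t^\epsilon}\tran \Sigma_{t-1}^\epsilon \alpha_t^\epsilon \;=\; u_\epsilon\tran \Sigma_{t-1}^\epsilon u_\epsilon \;\longrightarrow\; u_0\tran \Sigma_{t-1} u_0,
\end{align*}
with $u_0 := \E\,\nabla_z \nonlin_t(\serv{Z}_{[t-1]};\serv{A})$. The divergence-free hypothesis (\assumpref{div-free}) on $\nonlin_t$ says $\E[\nonlin_t(\serv{Z}_{[t-1]};\serv{A})\serv{Z}_{[t-1]}]=0$, which via the same Gaussian IBP equals $\Sigma_{t-1} u_0$. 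Therefore $\Sigma_{t-1} u_0 = 0$ and $u_0\tran \Sigma_{t-1} u_0 = u_0\tran \cdot 0 = 0$, giving claim (3). The one technical subtlety is rigorously justifying Gaussian IBP for merely continuous, polynomially-bounded $\nonlin_t$: this is handled by a standard mollification argument using the polynomial moments $\E[|\auxC(\serv{A})|^p]<\infty$ guaranteed by \lemref{misc-PL}, which provides the uniform integrability needed to pass to the limit in both the IBP identity and the convergence of $u_\epsilon$ to $u_0$.
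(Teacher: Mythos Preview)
Your inductive skeleton and your treatment of claims (2), (4), (5) match the paper's proof closely. Your argument for claim (1) via the decomposition $\Phi_t^\epsilon = \tilde\Phi_t^\epsilon + \epsilon^2 I_t$ is actually cleaner than the paper's: the paper argues by contradiction (if $\lambda_{\min}(\Phi_{t+1}^\epsilon)=0$ then $\epsilon\serv{W}_{t+1}$ would be measurable with respect to variables independent of it), whereas your observation that the $\epsilon\serv{W}_s$'s contribute an $\epsilon^2 I_t$ to the Gram matrix is direct and quantitative.

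The real divergence between your approach and the paper's is in claim (3), and here your execution has a gap. Your key insight---Stein's lemma gives $\alpha_t^\epsilon = u_\epsilon := \E[\nabla_z \nonlin_t(\serv{Z}^\epsilon_{[t-1]};\serv{A})]$, hence $\|\alpha_t^\epsilon\|\le L$ uniformly---is correct and useful, and Stein's lemma is valid for Lipschitz $\nonlin_t$ because $\Sigma_{t-1}^\epsilon\succ 0$. But the step $u_\epsilon\to u_0$ is not justified: $\nabla_z\nonlin_t$ is only an a.e.\ object and need not be continuous, and when $\Sigma_{t-1}$ is singular the support of $\serv{Z}_{[t-1]}$ is a lower-dimensional subspace on which $\nabla_z\nonlin_t$ may not even be well defined. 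Your appeal to \lemref{misc-PL} does not help here: that lemma concerns polynomial nonlinearities in the MVAMP setting, not the merely Lipschitz $\nonlin_t$ of \thref{VAMP}. The simplest fix, staying within your framework, is to avoid $u_0$ altogether: since $\Sigma_{t-1}^\epsilon\alpha_t^\epsilon = v_\epsilon := \E[\nonlin_t(\serv{Z}^\epsilon_{[t-1]};\serv{A})\serv{Z}^\epsilon_{[t-1]}]$, one has ${\alpha_t^\epsilon}\tran\Sigma_{t-1}^\epsilon\alpha_t^\epsilon = u_\epsilon\tran v_\epsilon$, and $v_\epsilon\to 0$ by the divergence-free property and the induction hypothesis (claim (5) at level $t-1$), while $\|u_\epsilon\|\le L$; so the product vanishes.

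The paper takes a different route that avoids gradients entirely: it writes $\serv{Z}^\epsilon_{[t-1]}\eqd(\Sigma_{t-1}^\epsilon)^{1/2}\serv{G}$ with $\serv{G}\sim\gauss{0}{I_{t-1}}$ to get
\[
{\alpha_t^\epsilon}\tran\Sigma_{t-1}^\epsilon\alpha_t^\epsilon=\big\|\E[\nonlin_t((\Sigma_{t-1}^\epsilon)^{1/2}\serv{G};\serv{A})\,\serv{G}]\big\|^2,
\]
which is manifestly continuous in $(\Sigma_{t-1}^\epsilon)^{1/2}$ via the Lipschitz property of $\nonlin_t$ alone. The limit is then shown to vanish by the tower property (writing $\E[\serv{G}\mid\Sigma_{t-1}^{1/2}\serv{G}]$ as a linear function of $\Sigma_{t-1}^{1/2}\serv{G}$) and the divergence-free assumption. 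Both routes ultimately rely on the same two ingredients (Lipschitz continuity to pass to the limit, divergence-free to kill the limit), but the paper's reparametrization sidesteps the delicate gradient-convergence issue that your mollification sketch would need to address.
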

Finally, in order to complete the proof of \lemref{degeneracy}, we will also require the following perturbation bound on the distance between the perturbed VAMP \eqref{eq:VAMP-pert} and unperturbed VAMP \eqref{eq:VAMP-unpert} iterates. 

\begin{lemma}\label{lem:pert-VAMP-pert-bound} Assuming that \thref{VAMP} holds under the additional \sassumpref{non-degenerate}, then, for each $t \in [T]$, we have:
\begin{align*}
    \text{1. } \limsup_{\dim \rightarrow \infty} \frac{\E \|\iter{\vz}{t}\|^2}{\dim} < \infty, \quad \text{2. } \lim_{\epsilon \rightarrow 0} \limsup_{\dim \rightarrow \infty} \frac{\E \|\iter{\vz}{t}_{\epsilon}\|^2}{\dim} < \infty, \quad  \text{3. }\lim_{\epsilon \rightarrow 0} \limsup_{\dim \rightarrow \infty} \frac{\E \|\iter{\vz}{t}-\iter{\vz}{t}_\epsilon\|^2}{\dim} = 0.
\end{align*}
\end{lemma}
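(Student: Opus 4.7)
The plan is to establish all three claims by induction on $t \in [T]$, using a common toolkit derived from the structural assumptions of \thref{VAMP}. The key deterministic inputs are (i) the operator norm bound $\|\mM_t\|_{\op} \lesssim 1$ for the semi-random ensemble together with $\|\mM_t^\epsilon\|_{\op} \leq \|\mM_t\|_{\op} + \epsilon \|\mG_t\|_{\op}$ and the classical GOE estimate $\E \|\mG_t\|_{\op}^p \lesssim 1$ for every $p \in \N$; (ii) the coordinate-wise polynomial growth and uniform Lipschitz bounds on the nonlinearities $\nonlin_{1:T}$ assumed in \thref{VAMP}; and (iii) the finite moment condition on the auxiliary information $\serv{A}$ from \assumpref{side-info}. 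Each step of the induction will in fact propagate the stronger bound $\sup_\dim \E[\|\iter{\vz}{t}\|^{2p}/\dim^p] < \infty$ for every $p \in \N$, which is needed later to convert the \pw{}-convergence (which is in probability) into convergence in expectation via uniform integrability.

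For Items 1 and 2 I run the moment induction in parallel. The base case $t=1$ follows from $\|\nonlin_1(\auxmat)\|^{2p} \leq L^{2p} \sum_i (1+\|\auxvec_i\|^\degree)^{2p}$, Jensen's inequality, and the finite moments of $\serv{A}$. For the inductive step, apply the polynomial bound on $\nonlin_t$ coordinate-wise to produce terms of the form $(1/\dim) \sum_i (1+\|\iter{z}{1:t-1}_i\|^\degree)^{2p}$, which are controlled by the higher-moment hypotheses on $\iter{\vz}{s}$ for $s<t$. For the perturbed VAMP, two additional contributions to $\nonlin_t^\epsilon$ must be handled: the Gaussian noise $\epsilon \vw_t$, bounded directly; and the linear correction $\sum_s (\alpha_t^\epsilon)_s \iter{\vz}{s}_\epsilon$, whose norm is finite for each fixed $\epsilon>0$ (since $\alpha_t^\epsilon$ is a fixed vector) and stays bounded as $\epsilon \to 0$ by \lemref{pert-VAMP-SE}, items 3 and 4.

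For Item 3 set $\iter{\vu}{t} \explain{def}{=} \iter{\vz}{t}_\epsilon - \iter{\vz}{t}$ and decompose
\begin{align*}
\iter{\vu}{t} & = (\mM_t^\epsilon - \mM_t)\,\nonlin_t^\epsilon(\iter{\vz}{1:t-1}_\epsilon; \auxmat, \mW) + \mM_t\bigl[\nonlin_t^\epsilon - \nonlin_t\bigr](\iter{\vz}{1:t-1}_\epsilon; \auxmat, \mW) \\
& \qquad + \mM_t\bigl[\nonlin_t(\iter{\vz}{1:t-1}_\epsilon; \auxmat) - \nonlin_t(\iter{\vz}{1:t-1}; \auxmat)\bigr].
\end{align*}
The first piece is bounded by $\epsilon^2 \|\mG_t\|_{\op}^2 \|\nonlin_t^\epsilon\|^2/\dim$, which vanishes in the $\lim_{\epsilon \to 0}\limsup_\dim$ order using Item 2 and the GOE estimate. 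The second piece equals $\mM_t[\epsilon \vw_t - \sum_s (\alpha_t^\epsilon)_s \iter{\vz}{s}_\epsilon]$: the Gaussian term contributes $O(\epsilon^2)$, while the empirical quadratic form $(1/\dim)\|\sum_s (\alpha_t^\epsilon)_s \iter{\vz}{s}_\epsilon\|^2$ converges, by applying \thref{VAMP} to the non-degenerate perturbed VAMP with the polynomial test function $z \mapsto (\sum_s (\alpha_t^\epsilon)_s z_s)^2$ and upgrading probability convergence to expectation convergence through the uniform integrability from Item 2, to $(\alpha_t^\epsilon)\tran \Sigma_{t-1}^\epsilon \alpha_t^\epsilon$, which tends to $0$ as $\epsilon \to 0$ by \lemref{pert-VAMP-SE} item 3. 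The third piece is at most $C^2 L^2 \sum_{s<t}\|\iter{\vu}{s}\|^2$ by the uniform Lipschitz assumption. Combining yields $\limsup_\dim \E[\|\iter{\vu}{t}\|^2/\dim] \leq C' \sum_{s<t}\limsup_\dim \E[\|\iter{\vu}{s}\|^2/\dim] + E(\epsilon)$ with $E(\epsilon) \to 0$, and the induction closes the claim.

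The principal obstacle is controlling the correction term in the decomposition of Item 3 uniformly as $\epsilon \to 0$: the coefficients $\alpha_t^\epsilon = (\Sigma_{t-1}^\epsilon)^{-1}\E[\nonlin_t(\serv{Z}_{[t-1]}^\epsilon; \serv{A})\serv{Z}_{[t-1]}^\epsilon]$ may blow up when the limiting $\Sigma_{t-1}$ is singular. The workaround is to avoid bounding $\|\alpha_t^\epsilon\|$ in isolation and instead to always pair it with the empirical covariance of the perturbed iterates, producing the quadratic form $(\alpha_t^\epsilon)\tran \Sigma_{t-1}^\epsilon \alpha_t^\epsilon$ that \lemref{pert-VAMP-SE} guarantees vanishes. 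Transferring this control from ``expectation of the limit'' to ``limit of the expectation'' is exactly what makes the higher-order moment bounds established in Items 1 and 2 (rather than just the $p=1$ bounds actually claimed) indispensable.
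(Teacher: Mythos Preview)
Your overall strategy matches the paper's: induction on $t$, with the same three-term decomposition for Item 3 (your three pieces are exactly the paper's $(\clubsuit)$, $(\spadesuit)$, $(\diamondsuit)$), and the key use of \lemref{pert-VAMP-SE}(3) to kill the correction term via the quadratic form $(\alpha_t^\epsilon)^\top \Sigma_{t-1}^\epsilon \alpha_t^\epsilon$ rather than bounding $\|\alpha_t^\epsilon\|$ directly. Your observation that passing from the \pw{}-convergence furnished by \thref{VAMP} to convergence in expectation requires uniform integrability is well taken; the paper glosses over exactly this point, implicitly relying on the moment-method machinery (\thref{moments}, \corref{moments}) underlying its proof of \thref{VAMP}, which in fact delivers $L^1$-convergence for polynomial test functions.

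However, your proposed higher-moment induction does not close as written. You use the polynomial growth bound $|\nonlin_t(z;a)| \leq L(1+\|z\|^\degree+\|a\|^\degree)$ coordinate-wise, producing the term $(1/\dim)\sum_i \|\iter{z}{1:t-1}_i\|^{2p\degree}$; but when $\degree \geq 2$ this is a \emph{coordinate-wise} $2p\degree$-th moment, which is not controlled by your vector-norm hypothesis $\sup_\dim \E[\|\iter{\vz}{s}\|^{2p'}/\dim^{p'}] < \infty$ for any $p'$ (Jensen's inequality runs the wrong direction: $\E\|\iter{\vz}{s}\|^{2q}/\dim^q \leq (1/\dim)\sum_i \E|\iter{z}{s}_i|^{2q}$, not conversely). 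The fix is to use the uniform Lipschitz hypothesis instead, exactly as the paper does for Item 1: since $|\nonlin_t(z;a)| \leq |\nonlin_t(0;a)| + L\|z\|$, one gets $\|\nonlin_t(\iter{\vz}{1:t-1};\auxmat)\|^2 \leq C(\|\nonlin_t(\vzero;\auxmat)\|^2 + \sum_{s<t}\|\iter{\vz}{s}\|^2)$, and raising to the $p$-th power produces only $\ell_2$-norm quantities that match the induction hypothesis. With this change your argument goes through; the higher-moment bounds then hold for each fixed $\epsilon$ (with constants depending on $\|\alpha_t^\epsilon\|$, finite for $\epsilon>0$), and that is exactly what you need for the uniform integrability step in Item 3. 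For Item 2, note also the paper's shortcut: rather than a separate induction, it applies \thref{VAMP} directly to the non-degenerate perturbed iteration to obtain $\lim_\dim \E\|\iter{\vz}{t}_\epsilon\|^2/\dim = \E|\serv{Z}_t^\epsilon|^2$, and then invokes \lemref{pert-VAMP-SE}(5).
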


We defer the proof of the intermediate results introduced so far (\lemref{semirandom-pert}, \lemref{pert-VAMP-SE}, and \lemref{pert-VAMP-pert-bound}) to the end of this section, and provide a proof of \lemref{degeneracy}.

\begin{proof}[Proof of \lemref{degeneracy}] In order to show that $(\iter{\vz}{1}, \iter{\vz}{2}, \dotsc, \iter{\vz}{T}, \auxmat) \explain{\pw}{\longrightarrow} (\serv{Z}_{1}, \dotsc, \serv{Z}_{T}, \serv{A})$, we need to show that for any test function $h:\R^{T+\auxdim} \mapsto \R$ that satisfies the regularity hypothesis required by the definition of $\text{\pw}$ convergence (\defref{PW2}) we have, 
\begin{align*}
    H_\dim \explain{def}{=} \frac{1}{\dim} \sum_{\ell = 1}^\dim h(\iter{z}{1}_\ell, \dotsc, \iter{z}{T}_\ell; \auxvec_{\ell}) \explain{P}{\rightarrow} \E[ h(\serv{Z}_1, \dotsc, \serv{Z}_T; \serv{A})].
\end{align*}
Note that for any $\epsilon> 0$, the perturbed VAMP iterates $\iter{\vz}{t}_\epsilon$ satisfy all the requirements of \thref{VAMP} along with \sassumpref{non-degenerate}. Indeed \lemref{semirandom-pert} (item 2) guarantees that $\lambda_{\min}(\Omega_\epsilon) \geq \epsilon^2 > 0$ and \lemref{pert-VAMP-SE} guarantees that $\lambda_{\min}(\Phi_T) > 0$ and $\lambda_{\min}(\Sigma_T) > 0$. Hence, 
\begin{align}\label{eq:pert-vamp-se-conclusion}
    H_N^\epsilon \explain{def}{=} \frac{1}{\dim} \sum_{\ell = 1}^\dim h((\iter{z}{1}_\epsilon)_\ell, \dotsc, (\iter{z}{T}_\epsilon)_\ell; \auxvec_{\ell}) \explain{P}{\rightarrow} \E[ h(\serv{Z}_1^\epsilon, \dotsc, \serv{Z}_T^\epsilon; \serv{A})].
\end{align}
Furthermore, by \lemref{pert-VAMP-SE} (item 5):
\begin{align}\label{eq:se-conv}
    \lim_{\epsilon \rightarrow 0} \E[ h\big(\serv{Z}_1^\epsilon, \dotsc, \serv{Z}_T^\epsilon; \serv{A}\big)] & =  \E[ h(\serv{Z}_1, \dotsc, \serv{Z}_T; \serv{A})].
\end{align}
We also have the following bound on $\E| H_\dim^\epsilon - H_\dim|$:
\begin{align}
    &\lim_{\epsilon \rightarrow 0} \limsup_{\dim \rightarrow \infty} \E| H_\dim^\epsilon - H_\dim|    \leq \lim_{\epsilon \rightarrow 0} \limsup_{\dim \rightarrow \infty} \frac{1}{\dim} \sum_{\ell=1}^\dim \E \bigg|h\big((\iter{z}{1}_\epsilon)_\ell, \dotsc, (\iter{z}{T}_\epsilon)_\ell; \auxvec_{\ell}\big) -  h\big(\iter{z}{1}_\ell, \dotsc, \iter{z}{T}_\ell; \auxvec_{\ell}\big)\bigg| \nonumber\\
    & \qquad \qquad \qquad \qquad\explain{(a)}{\leq} \lim_{\epsilon \rightarrow 0} \limsup_{\dim \rightarrow \infty}  \E \left[ \frac{L}{N} \sum_{\ell = 1}^\dim (1 + \|(\iter{z}{1:T}_\epsilon)_{\ell}\| + \|\iter{z}{1:T}_{\ell}\| + \|\auxvec_{\ell}\|^\degree) \cdot \|(\iter{z}{1:T}_\epsilon)_{\ell} -  \iter{z}{1:T}_{\ell}\|  \right]\nonumber \\
    & \qquad \qquad\qquad \qquad\explain{(b)}{\leq} \lim_{\epsilon \rightarrow 0} \limsup_{\dim \rightarrow \infty}  2L \left\{ 1 + \sum_{t=1}^T \frac{\E\|\iter{\vz}{t}\|^2 + \E\|\iter{\vz}{t}_\epsilon\|^2}{\dim}  + \E \|\serv{A}\|^{2\degree} \right\}^{\frac{1}{2}} \left\{ \sum_{t=1}^T \frac{\E\|\iter{\vz}{t}_\epsilon-\iter{\vz}{t}\|^2}{\dim} \right\}^{\frac{1}{2}} \nonumber\\
    & \qquad \qquad\qquad \qquad\explain{(c)}{=} 0. \label{eq:pert-VAMP-test-pert-bound}
\end{align}
In the above display, inequality (a) follows from the regularity hypothesis on the test function (\defref{PW2}), step (b) uses Cauchy-Schwarz Inequality, and step (c) follows from \lemref{pert-VAMP-pert-bound}.  With \eqref{eq:pert-vamp-se-conclusion}, \eqref{eq:se-conv}, and \eqref{eq:pert-VAMP-test-pert-bound}, we can now verify that $H_\dim \explain{P}{\rightarrow} E[ h(\serv{Z}_1, \dotsc, \serv{Z}_T; \serv{A})]$. Indeed for any $\delta > 0$, we have:
\begin{align*}
    &\limsup_{\dim \rightarrow \infty} \P(|H_\dim -E[ h(\serv{Z}_1, \dotsc, \serv{Z}_T; \serv{A})]| > 3\delta )  \leq \limsup_{\epsilon \rightarrow 0} \limsup_{\dim \rightarrow \infty} \bigg\{\P(|H_\dim -H_\dim^\epsilon| > \delta )  \bigg. \\& \hspace{2cm}+ \bigg.   \P(|H_\dim^\epsilon-  \E[ h\big(\serv{Z}_1^\epsilon, \dotsc, \serv{Z}_T^\epsilon; \serv{A}\big)]| > \delta )  + \P(|\E[ h\big(\serv{Z}_1^\epsilon, \dotsc, \serv{Z}_T^\epsilon; \serv{A}\big)]-E[ h(\serv{Z}_1, \dotsc, \serv{Z}_T; \serv{A})]| > \delta ) \bigg\} \\
    & \hspace{2cm}= 0,
\end{align*}
where the last step follows from \eqref{eq:pert-vamp-se-conclusion}, \eqref{eq:se-conv}, and \eqref{eq:pert-VAMP-test-pert-bound}. This concludes the proof of \lemref{degeneracy}. 
\end{proof}

\subsubsection{Proof of \lemref{semirandom-pert}}
\begin{proof}[Proof of \lemref{semirandom-pert}]
We have for any $\epsilon,\eta > 0$ with probability $1$,
\begin{align*}
    \|\hat{\mM}_t\|_{\op} & \leq \|\mPsi_t\|_{\op} + \epsilon\|\mG_t\|_{\op} \explain{(a)}{\lesssim} 1, \\
    \|\hat{\mM}_t\|_{\infty} & \leq \|\mPsi_t\|_{\infty} + \epsilon \|\mG_t\|_{\infty} \explain{(b)}{\lesssim} \dim^{-1/2 + \eta}.
\end{align*}
In the above display the estimate (a) follows from standard bounds on the operator norm of Gaussian matrices (see e.g., \citep[Corollary 4.4.8]{vershynin2018high}) and (b) follows from standard bounds on the maximum of Gaussian random variables. Furthermore, observe that
\begin{align*}
    \hat{\mM}_s \hat{\mM}_t \tran & = \mM_s \mM_t \tran + \epsilon \mG_s \mM_t\tran + \epsilon \mM_s \mG_t \tran + \epsilon^2 \mG_s \mG_t \tran.  
\end{align*}
Observe that for any $\eta > 0$
\begin{align*}
    \|{\mM}_s {\mM}_t \tran - \dim^{-1} \cdot \Tr({\mM}_s {\mM}_t \tran)  \mI_{\dim}\|_{\infty} \explain{(c)}{\lesssim} \dim^{-\frac{1}{2}+\eta}, \quad \| \mG_s \mM_t\tran\|_{\infty} \explain{(d)}{\lesssim} \dim^{-\frac{1}{2}+\eta}, \quad \|\mG_s \mG_t\tran - \delta_{st} \mI_{\dim}\|_{\infty} \explain{(e)}{\lesssim} \dim^{-\frac{1}{2}+\eta}.
\end{align*}
In the above display, (c) follows because $\mM_{1:T}$ is a semi-random ensemble with limiting covariance $\Omega$, (d) follows from standard bounds on maximum of Gaussian random variables (see e.g., \citep[Exercise 2.5.10]{vershynin2018high}). In order to obtain (e), notice that each entry of $\mG_s \mG_t \tran$ is a sum of independent sub-exponential random variables. Hence, the Bernstein Inequality (see e.g., \citep[Theorem 2.8.1]{vershynin2018high}) along with a union bound over the $\dim^2$ entries of $\mG_s \mG_t \tran$ yields (e). Hence, $\hat{\mM}_{1:T}$ is semi-random with limiting covariance matrix $\Omega_\epsilon =\Omega + \epsilon^2 I_{T}$, as claimed. 
\end{proof}
\subsubsection{Proof of \lemref{pert-VAMP-SE}}
\begin{proof}[Proof of \lemref{pert-VAMP-SE}]
We show the claims by induction on $t \in [T]$. 
\paragraph{Base Case $t=1$.} Recalling \eqref{eq:SE-VAMP-unpert} and \eqref{eq:SE-pert-VAMP} we have,
\begin{align*}
    \Phi_1 & = \E[f_1(\serv{A})^2] ,\; \Sigma_1 = \Omega_{11} \cdot \E[f_1(\serv{A})^2], \quad \serv{Z}_1 \sim \gauss{0}{\Sigma_1} \quad \\
    f_1^\epsilon(\auxvec) &= f_1(\auxvec) + \epsilon w_1, \quad
    \Phi^\epsilon_{1} = \E[f_1(\serv{A})^2] + \epsilon^2, \quad
    \Sigma^\epsilon_{1}  = (\Omega_{11} + \epsilon^2) \cdot (\E[f_1(\serv{A})^2] + \epsilon^2), \quad \serv{Z}_1^\epsilon \sim \gauss{0}{\Sigma^\epsilon_1}.
\end{align*}
From these expressions it is immediate that $\lambda_{\min}(\Phi_{1}^\epsilon) > 0, \lambda_{\min}(\Sigma_1^\epsilon) > 0$ and that $\Phi_1^\epsilon \rightarrow \Phi_1$, $\Sigma_1^\epsilon \rightarrow \Sigma_1$ as $\epsilon \rightarrow 0$. This verifies claims (1), (2), and (4) in the statement of the lemma for $t=1$. There is nothing to prove regarding claim (3) when $t=1$, since no correction vector $\alpha_1$ is defined for $t=1$. To verify claim (5) observe that:
\begin{align*}
    |\E h(\serv{Z}_1^{\epsilon}; \serv{A}) - \E h({Z}_1; \serv{A})| & \leq \E | h(\sqrt{\Sigma_1^\epsilon} \serv{G}; \serv{A})  - h(\sqrt{\Sigma_1} \serv{G}; \serv{A})|, \quad \serv{G} \sim \gauss{0}{1} \\
    & \leq L \cdot \E[(1 + (\sqrt{\Sigma_1^\epsilon} + \sqrt{\Sigma_1}) \cdot |\serv{G}| + \|\serv{A}\|^\degree) \cdot |\sqrt{\Sigma_1^\epsilon} - \sqrt{\Sigma_1}| \cdot |\serv{G}|] \\
    & \rightarrow 0 \quad \text{as} \quad \epsilon \rightarrow 0.
\end{align*}
Hence, we have verified the claims of the lemma when $t=1$.
\paragraph{Induction Hypothesis.} Suppose all the claims of the lemma hold for all $s \leq t$. 
\paragraph{Induction Step.} We verify each of the claims of the lemma for $t+1$:
\begin{enumerate}
    \item We show that $\lambda_{\min}(\Phi_{t+1}^{\epsilon}) > 0$ by contradiction. Indeed if $\lambda_{\min}(\Phi_{t+1}^{\epsilon}) = 0$, since $\lambda_{\min}(\Phi_t^\epsilon) > 0$ it must be that:
    \begin{align*}
        \nonlin_{t+1}^\epsilon( \serv{Z}_1^{\epsilon}, \dotsc, \serv{Z}_t^{\epsilon}; \serv{A}, \serv{W}) & = \sum_{s=1}^t \beta_s \cdot \nonlin_s^\epsilon( \serv{Z}_1^{\epsilon}, \dotsc, \serv{Z}_s^{\epsilon}; \serv{A}, \serv{W}) \quad \text{almost surely},
    \end{align*}
    for some coefficients $\beta_{1:t}$. Recalling the definition the perturbed non-linearities $\nonlin_{1:T}^\epsilon$ from \eqref{eq:perturbed-nonlinearities}, this means that:
    \begin{align*}
        \epsilon\serv{W}_{t+1} & = \sum_{s=1}^t \beta_s \cdot \nonlin_s( \serv{Z}_1^{\epsilon}, \dotsc, \serv{Z}_s^{\epsilon}; \serv{A}, \serv{W}) - \nonlin_{t+1}( \serv{Z}_1^{\epsilon}, \dotsc, \serv{Z}_t^{\epsilon}; \serv{A})  + \sum_{s=1}^t (\alpha_t^\epsilon)_s \cdot  \serv{Z}_s^\epsilon  \quad \text{almost surely}.
    \end{align*}
    Note that the definition of the perturbed non-linearities \eqref{eq:perturbed-nonlinearities} guarantees that the RHS of the above equation is independent of $\serv{W}_{t+1} \sim \gauss{0}{1}$. This leads to a contradiction.
    \item Note that $\Sigma_{t+1}^\epsilon$ is the entry-wise product of $\Phi_{t+1}^\epsilon$ and the principal $(t+1) \times (t+1)$ submatrix of $\Omega_\epsilon = \Omega + \epsilon^2 I_{T}$. Since $\lambda_{\min}(\Phi_{t+1}^\epsilon) > 0$ and the smallest eigenvalue of any principal sub-matrix of  $\Omega_\epsilon$ is $ \geq \epsilon^2$, it follows from a result of \citet[Theorem 3, claim (ii)]{bapat1985majorization} that $\lambda_{\min}(\Sigma_{t+1}^\epsilon) \geq \lambda_{\min}(\Phi_{t+1}^\epsilon) \cdot  \epsilon^2 > 0$. 
    \item As $\epsilon \rightarrow 0$, we have:
    \begin{align*}
        {\alpha_{t+1}^\epsilon} \tran \Sigma_{t}^\epsilon \alpha_{t+1}^{\epsilon} &\explain{\eqref{eq:SE-pert-VAMP}}{=} \| {(\Sigma_{t}^\epsilon)}^{-\frac{1}{2}} \E[\nonlin_{t+1}(\serv{Z}_1^\epsilon, \dotsc, \serv{Z}_t^\epsilon; \serv{A}) \cdot \serv{Z}_{[t]}^\epsilon] \|^2 \\
        & \explain{(a)}{=} \|  \E[\nonlin_{t+1}({(\Sigma_{t}^\epsilon)}^{\frac{1}{2}}\serv{G}; \serv{A}) \cdot \serv{G}] \|^2, \quad \serv{G} \sim \gauss{0}{I_t} \\
        & \explain{(b)}{\rightarrow} \|  \E[\nonlin_{t+1}({(\Sigma_{t})}^{\frac{1}{2}}\serv{G}; \serv{A}) \cdot \serv{G}] \|^2, \quad \serv{G} \sim \gauss{0}{I_t}. 
    \end{align*}
    In the above display, in the step (a) we used the fact that since $(\serv{Z}_1^\epsilon, \dotsc, \serv{Z}_t^\epsilon) \sim \gauss{0}{\Sigma_t^\epsilon}$, we have $(\serv{Z}_1^\epsilon, \dotsc, \serv{Z}_t^\epsilon) \explain{d}{=} {(\Sigma_{t})}^{\frac{1}{2}}\serv{G} $ where $\serv{G} \sim \gauss{0}{I_t}$ (independent of $\serv{A}$) and step (b) follows from the induction hypothesis that $\Sigma_{t}^\epsilon \rightarrow \Sigma_t$. Define $\widetilde{\serv{G}} = \Sigma_t^{\frac{1}{2}} \serv{G} \explain{d}{=} (Z_1, \dotsc, Z_t)$. Observe that:
    \begin{align*}
        \lim_{\epsilon \rightarrow 0}  {\alpha_{t+1}^\epsilon} \tran \Sigma_{t}^\epsilon \alpha_{t+1}^{\epsilon}  = \|  \E[\nonlin_{t+1}(\widetilde{\serv{G}}; \serv{A}) \cdot \serv{G}] \|^2 \explain{(c)}{=}  \|  \E[\nonlin_{t+1}(\widetilde{\serv{G}}; \serv{A}) \cdot \E[\serv{G}| \widetilde{\serv{G}}] ] \|^2 &\explain{(d)}{=} \E[\nonlin_{t+1}(\widetilde{\serv{G}}; \serv{A}) \cdot Q\widetilde{\serv{G}} ] \|^2 \\&= \|Q \cdot  \E[\nonlin_{t+1}(\widetilde{\serv{G}}; \serv{A}) \cdot \widetilde{\serv{G}} ] \|^2 \explain{(e)}{=} 0.
    \end{align*}
    In the above display (c) follows from the Tower property and the fact that $\serv{G}, \widetilde{\serv{G}}$ are independent of $\serv{A}$, step (d) follows from the fact that since $(\serv{G}, \widetilde{\serv{G}})$ are jointly Gaussian $\E[\serv{G}| \widetilde{\serv{G}}] = Q\widetilde{\serv{G}}$ for some matrix $Q$ determined by the joint covariance matrix of $(\serv{G}, \widetilde{\serv{G}})$ (the precise formula will not be needed). Step (e) follows from the observation that  $\widetilde{\serv{G}} \explain{d}{=} (Z_1, \dotsc, Z_t)$ and the fact that the non-linearity $\nonlin_{t+1}$ is divergence free (\assumpref{div-free}). This verifies claim (3) of the lemma for $t+1$. 
    \item The sub-matrix formed by the first $t$ rows and columns of $\Phi_{t+1}^\epsilon$ is precisely $\Phi_t^\epsilon$ which converges to $\Phi_t$ as $\epsilon \rightarrow 0$ by the induction hypothesis. Hence, we only need to show that $(\Phi_{t+1}^\epsilon)_{s,t+1} \rightarrow (\Phi_{t+1})_{s,t+1}$ for each $s \leq t+1$. Recalling \eqref{eq:SE-pert-VAMP}, by Cauchy-Schwarz Inequality we have,
    \begin{align*}
        &\left|(\Phi_{t+1}^\epsilon)_{s,t+1} - \E[\nonlin_s(\serv{Z}_1^\epsilon, \dotsc, \serv{Z}_{s-1}^\epsilon ; \serv{A}) \cdot \nonlin_{t+1}(\serv{Z}_1^\epsilon, \dotsc, \serv{Z}_{t}^\epsilon ; \serv{A})] \right| \\& \leq  \sqrt{\epsilon^2\E[\serv{W}_s^2] + \E[\serv{\Delta}_s^2]} \sqrt{\E[\nonlin_{t+1}^\epsilon(\serv{Z}_1^\epsilon, \dotsc, \serv{Z}_{t}^\epsilon ; \serv{A}, \serv{W})^2]  } + \sqrt{\epsilon^2\E[\serv{W}_{t+1}^2] + \E[\serv{\Delta}_{t+1}^2]} \sqrt{\E[\nonlin_{s}^\epsilon(\serv{Z}_1^\epsilon, \dotsc, \serv{Z}_{s-1}^\epsilon ; \serv{A}, \serv{W})^2] }
    \end{align*}
    where we defined:
    \begin{align*}
        \serv{\Delta}_s \explain{def}{=} \sum_{\tau=1}^{s-1} (\alpha_s^\epsilon)_{\tau} \serv{Z}_\tau^\epsilon, \quad \serv{\Delta}_s \explain{def}{=} \sum_{\tau=1}^{t} (\alpha_{t+1}^\epsilon)_{\tau} \serv{Z}_\tau^\epsilon.
    \end{align*}
    Notice that by the induction hypothesis and the proof of claim (3) in the induction step, $\E[\serv{\Delta}_s^2]  = {\alpha_s^\epsilon} \tran \Sigma_{s-1}^\epsilon \alpha_s^\epsilon \rightarrow 0$ for each $s \leq t+1$. Furthermore by claim (5) of the induction hypothesis,
    \begin{align*}
        \lim_{\epsilon \rightarrow 0} \E[\nonlin_s(\serv{Z}_1^\epsilon, \dotsc, \serv{Z}_{s-1}^\epsilon ; \serv{A}) \cdot \nonlin_{t+1}(\serv{Z}_1^\epsilon, \dotsc, \serv{Z}_{t}^\epsilon ; \serv{A})] &= \E[\nonlin_s(\serv{Z}_1, \dotsc, \serv{Z}_{s-1} ; \serv{A}) \cdot \nonlin_{t+1}(\serv{Z}_1, \dotsc, \serv{Z}_{t} ; \serv{A})] \\&\explain{\eqref{eq:SE-VAMP-unpert}}{=} (\Phi_{t+1})_{s,t+1}.
    \end{align*}
    This shows that $\Phi_{t+1}^\epsilon \rightarrow \Phi_{t+1}$. Since $\Sigma_{t+1}^\epsilon$ is the entry-wise product of $\Phi_{t+1}^\epsilon$ and $\Omega_\epsilon$ and $\Omega_\epsilon \rightarrow \Omega$, we immediately obtain $\Sigma_{t+1}^\epsilon \rightarrow \Sigma_{t+1}$. This completes the induction step for claim (4) of the lemma. 
    \item Since $(\serv{Z}_1, \dotsc, \serv{Z}_{t+1}) \sim \gauss{0}{\Sigma_{t+1}}$ and $(\serv{Z}_1^\epsilon, \dotsc, \serv{Z}_{t+1}^\epsilon) \sim \gauss{0}{\Sigma_{t+1}^\epsilon}$, we can write for $\serv{G} \sim \gauss{0}{I_{t+1}}$:
    \begin{align*}
        &|\E[h(\serv{Z}_1, \dotsc, \serv{Z}_{t+1}; \serv{A})] - \E[h(\serv{Z}_1^\epsilon, \dotsc, \serv{Z}_{t+1}^\epsilon; \serv{A})]|  \leq \E | h(\Sigma_{t+1}^{\frac{1}{2}} \serv{G}; \serv{A}) - h({\Sigma_{t+1}^\epsilon}^{\frac{1}{2}} \serv{G}; \serv{A}) | \\
        &\hspace{4cm} \leq L \cdot \E[(1 + \|\serv{A}\|^\degree + \|\Sigma_{t+1}^{\frac{1}{2}}\serv{G}\| + \|{\Sigma_{t+1}^\epsilon}^{\frac{1}{2}} \serv{G}\|) \cdot \|{\Sigma_{t+1}^\epsilon}^{\frac{1}{2}} -\Sigma_{t+1}^{\frac{1}{2}}\| \cdot \|\serv{G}\| ] \\
        & \hspace{4cm} \rightarrow 0,
    \end{align*}
    where the last equation follows from $\Sigma_{t+1}^\epsilon \rightarrow \Sigma_{t+1}$ shown previously. 
\end{enumerate}
This concludes the claim of the lemma. 
\end{proof}
\subsubsection{Proof of \lemref{pert-VAMP-pert-bound}}
\begin{proof}[Proof of \lemref{pert-VAMP-pert-bound}] We consider each claim made in the statement of the lemma.
\begin{enumerate}
    \item We show the first claim by induction. Observe that:
    \begin{align*}
       \limsup_{\dim \rightarrow \infty} \frac{\E \|\iter{\vz}{1}\|^2}{\dim} \leq \limsup_{\dim \rightarrow \infty} \|\mPsi_1\|_{\op} \frac{\E\| \nonlin_1(\auxmat)\|^2}{\dim}  = \E[\nonlin_1(\serv{A})^2] \cdot \limsup_{\dim \rightarrow \infty} \|\mPsi_1\|_{\op}  <\infty. 
    \end{align*}
    Now assume that the claim for all $s \leq t$ for some $t \in [T]$. In the induction step, we verify the claim for $t+1$. Indeed,
    \begin{align*}
        \limsup_{\dim \rightarrow \infty} \frac{\E \|\iter{\vz}{t+1}\|^2}{\dim} &\leq \limsup_{\dim \rightarrow \infty} \|\mPsi_{t+1}\|_{\op} \frac{\E\| \nonlin_{t+1}(\iter{\vz}{1}, \dotsc, \iter{\vz}{t}; \auxmat)\|^2}{\dim}  \\&\explain{(a)}{\leq} \limsup_{\dim \rightarrow \infty}  \|\mPsi_{t+1}\|_{\op} \cdot \left( 2\E \nonlin^2_{t+1}(0, 0, \dotsc 0; \serv{A}) + \frac{2L^2}{\dim} \sum_{s=1}^ t {\E\|\iter{\vz}{s}\|^2}\right) \\
        & \explain{(b)}{\lesssim} 1.
    \end{align*}
    In the above display, step (a) follows from the assumption that the non-linearities $\nonlin_{1:T}$ are Lipschitz with constant $L$ made in the statement of \thref{VAMP} and (b) follows from the induction hypothesis. This proves the first claim of the lemma. 
    \item Recall that we assume that \thref{VAMP} holds under the additional \sassumpref{non-degenerate}. The perturbed VAMP iterates $\iter{\vz}{t}_\epsilon$ satisfy all the requirements of \thref{VAMP} along with \sassumpref{non-degenerate}. Indeed \lemref{semirandom-pert} (item 2) guarantees that $\lambda_{\min}(\Omega_\epsilon) \geq \epsilon^2 > 0$ and \lemref{pert-VAMP-SE} guarantees that $\lambda_{\min}(\Phi_T) > 0$ and $\lambda_{\min}(\Sigma_T) > 0$. Hence,
    \begin{align*}
        \lim_{\dim\rightarrow \infty } \frac{\E\|\iter{\vz_\epsilon}{t}\|^2}{\dim} & = \E |\serv{Z}_t^\epsilon|^2  \explain{$\epsilon \rightarrow 0$}{\longrightarrow} \E[\serv{Z}_t]^2 < \infty.
    \end{align*}
    In the above display, the claim regarding the $\epsilon \rightarrow 0$ limit follows from \lemref{pert-VAMP-SE} (item 5). This proves the second claim of the lemma.
    \item We prove the third claim by induction. Consider the base case $t=1$. Recall from \eqref{eq:VAMP-unpert}, \eqref{eq:VAMP-pert}, and \eqref{eq:perturbed-nonlinearities} that:
    \begin{align*}
        \iter{\vz}{1} & = \mM_1 \nonlin_1(\auxmat), \quad \iter{\vz}{1}_\epsilon = (\mM_1 + \epsilon \mS\mG_1\mS) \cdot (\nonlin_1(\auxmat) + \epsilon \vw_1).
    \end{align*}
    Hence,
    \begin{align*}
        \limsup_{\epsilon \rightarrow 0}\limsup_{\dim \rightarrow \infty} \frac{\E \|\iter{\vz}{1}_\epsilon  -\iter{\vz}{1}\|^2}{\dim} & \leq 2\limsup_{\epsilon \rightarrow 0} \epsilon^2 \cdot \left(\limsup_{\dim \rightarrow \infty} \frac{\E \|\nonlin_1(\auxmat)\|^2 +\dim \epsilon^2 }{\dim} \cdot \|\mG_1\|_{\op}^2  + \|\mM_1\|_{\op}^2 \right) \\&= \limsup_{\epsilon \rightarrow 0} \epsilon^2 \cdot (\E[\nonlin_1^2(\serv{A})]+\epsilon^2) \cdot \limsup_{\dim \rightarrow \infty} \|\mG_1\|_{\op} = 0,
    \end{align*}
    where the last equality follows form the fact that with probability $1$, $\|\mG_1\|_{\op} \lesssim 1$ (see e.g., \citep[Corollary~4.4.8]{vershynin2018high}). As the induction hypothesis, we assume that the claim holds at all iterations $s \leq t$ for some $t \in [T]$. In order to verify that the claim also holds at iteration $t+1$, we recall from \eqref{eq:VAMP-unpert}, \eqref{eq:VAMP-pert}, and \eqref{eq:perturbed-nonlinearities} that:
    \begin{align*}
        \iter{\vz}{t+1} & = \mM_{t+1} \nonlin_{t+1}(\iter{\vz}{1}, \dotsc, \iter{\vz}{t}; \auxmat), \\ \iter{\vz}{t+1}_\epsilon &= (\mM_{t+1} + \epsilon \mS\mG_{t+1}\mS) \cdot \Bigg(\nonlin_{t+1}(\iter{\vz}{1}_\epsilon, \dotsc, \iter{\vz}{t}_\epsilon; \auxmat) + \epsilon \vw_{t+1} - \underbrace{\sum_{s=1}^t (\alpha_{t+1}^\epsilon)_{s} \cdot  \iter{\vz}{s}_\epsilon}_{\explain{def}{=} \iter{\vDelta}{t}_\epsilon} \Bigg).
    \end{align*}
    Hence,
    \begin{align*}
        \iter{\vz}{t+1}_\epsilon -  \iter{\vz}{t+1}  & = (\diamondsuit) + (\spadesuit) + (\clubsuit),
    \end{align*}
    where:
    \begin{align*}
        (\diamondsuit) &\explain{def}{=} \mM_{t+1} \cdot (\nonlin_{t+1}(\iter{\vz}{1}_\epsilon, \dotsc, \iter{\vz}{t}_\epsilon; \auxmat) - \nonlin_{t+1}(\iter{\vz}{1}, \dotsc, \iter{\vz}{t}; \auxmat)), \\
        (\spadesuit) &\explain{def}{=} (\mM_{t+1} + \epsilon \mS\mG_{t+1}\mS) \cdot (\epsilon \vw_{t+1} - \iter{\vDelta}{t}_\epsilon), \\
        (\clubsuit) & \explain{def}{=} \epsilon \mS\mG_{t+1}\mS \nonlin_{t+1}(\iter{\vz}{1}_\epsilon, \dotsc, \iter{\vz}{t}_\epsilon; \auxmat).
    \end{align*}
    We analyze each of these terms. Recalling that the non-linearities are assumed to be uniformly Lipschitz with constant $L$ (cf. \thref{VAMP}):
    \begin{align*}
        \frac{\E\|(\diamondsuit)\|^2}{\dim} & \leq \|\mPsi_{t+1}\|_{\op}^2 \cdot L^2 \cdot \sum_{s=1}^t \frac{\E\|\iter{\vz}{s}_\epsilon-\iter{\vz}{s}\|^2}{\dim},
    \end{align*}
    Hence by the induction hypothesis, $\lim_{\epsilon \rightarrow 0} \limsup_{\dim \rightarrow \infty} \dim^{-1} \|(\diamondsuit)\|^2 = 0$. Next, we consider the term $(\spadesuit)$:
    \begin{align*}
        \frac{\E\|(\spadesuit)\|^2}{\dim} & \leq \E\|\mPsi_{t+1} + \epsilon\mG_{t+1}\|_{\op}^2 \cdot \left( \epsilon^2 + \frac{\E \|\iter{\vDelta}{t}_\epsilon\|^2}{\dim} \right).
    \end{align*}
    Since we assume that \thref{VAMP} holds under \sassumpref{non-degenerate}, $ \lim_{\dim \rightarrow \infty} \dim^{-1} \E \|\iter{\vDelta}{t}_\epsilon\|^2 = {\alpha_{t+1}^\epsilon} \tran \Sigma_t^\epsilon \alpha_{t+1}^\epsilon$. By \lemref{pert-VAMP-SE} (item (3)) ${\alpha_{t+1}^\epsilon} \tran \Sigma_t^\epsilon \alpha_{t+1}^\epsilon \rightarrow 0$ as $\epsilon \rightarrow 0$. Hence, we obtain: $$\lim_{\epsilon \rightarrow 0} \limsup_{\dim \rightarrow \infty} \dim^{-1} \|(\spadesuit)\|^2 = 0.$$ Finally, we analyze the term $(\clubsuit)$. Since the non-linearities are assumed to be uniformly Lipschitz with constant $L$:
    \begin{align*}
         \frac{\E\|(\clubsuit)\|^2}{\dim} & \leq 2 \cdot L^2 \cdot \epsilon^2 \cdot \E[ \|\mG_{t+1}\|^2] \cdot \left( \frac{\E\| \nonlin_{t+1}(\vzero, \dotsc, \vzero ; \auxmat)\|^2}{\dim} + L^2 \sum_{s=1}^t \frac{\E \| \iter{\vz}{s}_\epsilon\|^2}{\dim} \right)  \\
         & = 2 \cdot L^2 \cdot \epsilon^2 \cdot \E[ \|\mG_{t+1}\|^2] \cdot \left( \E[\nonlin_{t+1}^2(0, \dotsc, 0 ; \serv{A})^2] + L^2 \sum_{s=1}^t \frac{\E \| \iter{\vz}{s}_\epsilon\|^2}{\dim} \right).
    \end{align*}
    Since we have already shown that $\dim^{-1} \E\| \iter{\vz}{t}_\epsilon\|^2 \lesssim 1$ in item (2) of this lemma, we obtain $$\lim_{\epsilon \rightarrow 0} \limsup_{\dim \rightarrow \infty} \dim^{-1} \|(\clubsuit)\|^2 = 0.$$ Hence, we have shown that:
    \begin{align*}
        \limsup_{\epsilon \rightarrow 0}\limsup_{\dim \rightarrow \infty} \frac{\E \|\iter{\vz}{t+1}_\epsilon  -\iter{\vz}{t+1}\|^2}{\dim} & = 0,
    \end{align*}
    as desired.
\end{enumerate}
This proves the claim of \lemref{pert-VAMP-pert-bound}.
\end{proof}

\subsection{Balancing Semi-Random Matrices}\label{appendix:balancing}
This section is devoted to the proof of \lemref{balancing}. The proof relies on the following intermediate lemma. 

\begin{lemma}\label{lem:approx-by-balanced} Let $\mM_{1:T} = \mS \Psi_{1:T} \mS$ be a semi-random ensemble (\defref{semirandom}) with limiting covariance matrix $\Omega$ that satisfies $\lambda_{\min}(\Omega) > 0$.  Then there exists a  semi-random ensemble $\hat{\mM}_{1:T} = \mS \hat{\mPsi}_{1:T} \mS $ with limiting covariance matrix $\Omega$, which is balanced (that is, satisfies \sassumpref{balanced}) such that:
\begin{align*}
    \max_{t \in [T]} \|\hat{\mM}_i - {\mM}_i\|_{\op} =  \max_{t \in [T]} \|\hat{\mPsi}_i - {\mPsi}_i\|_{\op} \ll 1.
\end{align*}
\end{lemma}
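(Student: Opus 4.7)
The strategy is to view the equations $(\mM_s \mM_t^\top)_{ii} = \Omega_{st}$ as $N$ \emph{decoupled} Gram-matrix constraints, one per row index, and to correct each row block by a small multiplicative adjustment that makes the constraint hold exactly without disturbing the remaining semi-random properties. For each $i \in [N]$, let $\mW_i \in \R^{T \times N}$ denote the matrix whose $s$-th row is the $i$-th row of $\mPsi_s$. A direct calculation gives $(\mPsi_s \mPsi_t^\top)_{ii} = (\mW_i \mW_i^\top)_{st}$, so \sassumpref{balanced} is equivalent to the single identity $\mW_i \mW_i^\top = \Omega$ for every $i \in [N]$. The semi-random hypothesis on $\mPsi_{1:T}$ yields $\|\mW_i \mW_i^\top - \Omega\|_\infty \lesssim N^{-1/2+\epsilon}$, and since $T$ is fixed this upgrades to $\|\mW_i \mW_i^\top - \Omega\|_{\op} \lesssim N^{-1/2+\epsilon}$ uniformly in $i$.

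Second, since $\lambda_{\min}(\Omega) > 0$ (which is why we relied on \lemref{degeneracy} first), the matrix $\mW_i \mW_i^\top$ is positive definite for all large $N$ and all $i \in [N]$, and I would define
\begin{align*}
\mT_i \explain{def}{=} \Omega^{1/2} (\mW_i \mW_i^\top)^{-1/2} \in \R^{T \times T}, \qquad \hat{\mW}_i \explain{def}{=} \mT_i \mW_i.
\end{align*}
By construction $\hat{\mW}_i \hat{\mW}_i^\top = \Omega$, and standard perturbation bounds for the inverse matrix square root, combined with $\lambda_{\min}(\Omega) > 0$, give $\|\mT_i - \mI_T\|_{\op} \lesssim \|\mW_i \mW_i^\top - \Omega\|_{\op} \lesssim N^{-1/2+\epsilon}$ uniformly in $i$. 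I then define $\hat{\mPsi}_s$ by taking its $i$-th row to be the $s$-th row of $\hat{\mW}_i$. Equivalently, $\hat{\mPsi}_s = \sum_{r=1}^T \mD_{sr} \mPsi_r$, where $\mD_{sr} = \diag((\mT_1)_{sr}, \ldots, (\mT_N)_{sr})$ satisfies $\|\mD_{sr} - \delta_{sr} \mI_N\|_{\op} = \max_i |(\mT_i)_{sr} - \delta_{sr}| \lesssim N^{-1/2+\epsilon}$.

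Third, I would verify that $\hat{\mPsi}_{1:T}$ is a balanced semi-random ensemble with limiting covariance $\Omega$ and that $\hat{\mM}_s = \mS \hat{\mPsi}_s \mS$ closely approximates $\mM_s$. The bound $\|\hat{\mPsi}_s - \mPsi_s\|_{\op} \leq \sum_r \|\mD_{sr} - \delta_{sr}\mI_N\|_{\op} \cdot \|\mPsi_r\|_{\op} \lesssim N^{-1/2+\epsilon}$ yields $\|\hat{\mM}_s - \mM_s\|_{\op} \ll 1$, which also gives $\|\hat{\mPsi}_s\|_{\op} \lesssim 1$. The entry-wise delocalization $\|\hat{\mPsi}_s\|_\infty \lesssim N^{-1/2+\epsilon}$ follows from the representation $\hat{\mPsi}_s = \sum_r \mD_{sr} \mPsi_r$ together with $\|\mD_{sr}\|_{\op}, \|\mPsi_r\|_\infty$ bounds. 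The Gram condition $\|\hat{\mPsi}_s \hat{\mPsi}_t^\top - \Omega_{st} \mI_N\|_\infty \lesssim N^{-1/2+\epsilon}$ holds trivially on the diagonal (equal to $\Omega_{st}$ by construction) and, on the off-diagonal, inherits the $N^{-1/2+\epsilon}$ bound from the original $\mPsi_r \mPsi_{r'}^\top$ entries after expanding $\hat{\mPsi}_s \hat{\mPsi}_t^\top = \sum_{r,r'} \mD_{sr} \mPsi_r \mPsi_{r'}^\top \mD_{tr'}$ and using $\|\mD_{sr}\|_{\op} \lesssim 1$.

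The main obstacle is the \emph{uniformity} of the correction in $i$: the construction produces $N$ different $T \times T$ matrices $\mT_i$, and the resulting $\hat{\mPsi}_s$ must still inherit both an $O(1)$ operator-norm bound and the stringent $N^{-1/2+\epsilon}$ entry-wise delocalization. This is precisely where the nondegeneracy of $\Omega$ (guaranteed by the prior reduction \lemref{degeneracy}) and the equivalence of norms on fixed-size $T \times T$ matrices play decisive roles: they convert the entry-wise bound on $\mW_i \mW_i^\top - \Omega$ into a uniform spectral bound on $\mT_i - \mI_T$, which is strong enough to preserve every semi-random property under the multiplicative row-block correction.
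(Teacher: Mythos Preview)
Your proposal is correct and essentially identical to the paper's proof: both stack the $i$-th rows of $\mPsi_{1:T}$ into a $T\times N$ matrix (your $\mW_i$, the paper's $\mU_\ell$), observe that its Gram matrix equals the $T\times T$ block $((\mPsi_s\mPsi_t^\top)_{ii})_{s,t}$, and then left-multiply by $\Omega^{1/2}(\mW_i\mW_i^\top)^{-1/2}$ to force the Gram matrix to be exactly $\Omega$. One minor inaccuracy: the bound $\|\mW_i\mW_i^\top-\Omega\|\lesssim N^{-1/2+\epsilon}$ is slightly too strong, since the semi-random definition only gives $\|\mW_i\mW_i^\top-\hat\Omega\|\lesssim N^{-1/2+\epsilon}$ while $\hat\Omega\to\Omega$ carries no rate; the correct uniform conclusion is $\|\mT_i-\mI_T\|_{\op}\ll 1$ (and likewise $\|\hat\mPsi_s-\mPsi_s\|_{\op}\ll 1$), which is exactly what the lemma asserts and all that the downstream verification needs.
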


We defer the proof of the above claim to the end of this section, and present the proof of \lemref{balancing}.

\begin{proof}[Proof of \lemref{balancing}] In order to prove \lemref{balancing}, we will assume that \thref{VAMP} holds in the situation when \sassumpref{non-degenerate} and \sassumpref{balanced} hold (in addition to the assumptions listed in the statement of \thref{VAMP}). We will show that this implies that \thref{VAMP} holds in the situation when \sassumpref{non-degenerate} holds (but not necessarily \sassumpref{balanced}). By \lemref{degeneracy}, this is sufficient to show that \thref{VAMP} holds without any additional assumptions. 

To this end, we consider $T$ iterations of a VAMP algorithm, which satisfies all the assumptions of \thref{VAMP} and \sassumpref{non-degenerate}, but need not satisfy  \sassumpref{balanced}:
\begin{align}\label{eq:VAMP-unbalanced}
    \iter{\vz}{t} = \mM_t \cdot \nonlin_t(\iter{\vz}{1}, \iter{\vz}{2}, \dotsc, \iter{\vz}{t-1}; \auxmat) \quad \forall \; t\; \in \; [T].
\end{align}
In the above display $\mM_{1:T}$ is a semi-random ensemble with limiting covariance matrix $\Omega$ which satisfies $\lambda_{\min}(\Omega)>0$. Recall from \defref{semirandom}, this means that $\mM_i = \mS \mPsi_i \mS$ where $\mS$ is a uniformly random sign diagonal matrix. 
Let: 
\begin{align}\label{eq:SE-VAMP-unbalanced}\serv{Z}_{1}, \dotsc, \serv{Z}_{T} \sim \gauss{0}{\Sigma_T}, \quad \Phi_T, \quad \Sigma_T
\end{align}  
denote the state evolution random variables and covariance matrices associated with the VAMP algorithm in \eqref{eq:VAMP-unbalanced}. Our goal is to show that $(\iter{\vz}{1}, \iter{\vz}{2}, \dotsc, \iter{\vz}{T}, \auxmat) \explain{\pw}{\longrightarrow} (\serv{Z}_{1}, \dotsc, \serv{Z}_{T}, \serv{A})$. In order to do so, we need to show that for any test function $h:\R^{T+\auxdim} \mapsto \R$ that satisfies the regularity hypothesis required by the definition of $\text{\pw}$ convergence (\defref{PW2}) we have, 
\begin{align}\label{eq:balancing-goal}
    H_\dim \explain{def}{=} \frac{1}{\dim} \sum_{\ell = 1}^\dim h(\iter{z}{1}_\ell, \dotsc, \iter{z}{T}_\ell; \auxvec_{\ell}) \explain{P}{\rightarrow} \E[ h(\serv{Z}_1, \dotsc, \serv{Z}_T; \serv{A})].
\end{align}
Using \lemref{approx-by-balanced}, we can obtain a  semi-random ensemble $\hat{\mM}_{1:T} = \mS \hat{\mPsi}_{1:T} \mS $ with limiting covariance matrix $\Omega$, which is balanced (that is, satisfies \sassumpref{balanced}) such that:
\begin{align}\label{eq:balanced-approx-guarantee}
    \max_{t \in [T]} \|\hat{\mM}_i - {\mM}_i\|_{\op} =  \max_{t \in [T]} \|\hat{\mPsi}_i - {\mPsi}_i\|_{\op} \ll 1.
\end{align}
Consider the VAMP iterations driven by the balanced semi-random ensemble:
\begin{align}\label{eq:VAMP-balanced}
    \iter{\hat{\vz}}{t} = \hat{\mM}_t \cdot \nonlin_t(\iter{\hat{\vz}}{1}, \iter{\hat{\vz}}{2}, \dotsc, \iter{\hat{\vz}}{t-1}; \auxmat) \quad \forall \; t\; \in \; [T].
\end{align}
Since the balanced semi-random ensemble $\hat{\mM}_{1:T}$ has the same limiting covariance matrix $\Omega$ as $\mM_{1:T}$, iteration \eqref{eq:VAMP-balanced} and \eqref{eq:VAMP-unbalanced} have the same state evolution. Furthermore, since iteration \eqref{eq:VAMP-balanced} satisfies \sassumpref{non-degenerate} and \sassumpref{balanced} in addition to the assumptions specified in \thref{VAMP}, we know that:
\begin{align*}
     \hat{H}_\dim \explain{def}{=} \frac{1}{\dim} \sum_{\ell = 1}^\dim h(\iter{\hat{z}}{1}_\ell, \dotsc, \iter{\hat{z}}{T}_\ell; \auxvec_{\ell}) \explain{P}{\rightarrow} \E[ h(\serv{Z}_1, \dotsc, \serv{Z}_T; \serv{A})].
\end{align*}
Hence \eqref{eq:balancing-goal} follows if we show that:
\begin{align*}
    \E | \hat{H}_\dim - H_\dim| \rightarrow 0.
\end{align*}
Indeed,
\begin{align*}
     \limsup_{\dim \rightarrow \infty}\E | \hat{H}_\dim - H_\dim| & \leq \limsup_{\dim \rightarrow \infty}\E \left[ \frac{1}{\dim} \sum_{\ell=1}^\dim \big| h(\iter{\hat{z}}{1}_\ell, \dotsc, \iter{\hat{z}}{T}_\ell; \auxvec_{\ell}) - h(\iter{z}{1}_\ell, \dotsc, \iter{z}{T}_\ell; \auxvec_{\ell}) \big| \right] \\
     & \explain{(a)}{\leq} \limsup_{\dim \rightarrow \infty} L  \left\{ \E\|\serv{A}\|^{2\degree} + \ \sum_{t=1}^T \frac{\E\|\iter{\vz}{t}\|^2}{\dim} + \frac{\E\|\iter{\hat{\vz}}{t}\|^2}{\dim}   \right\}^{1/2}  \left\{ \sum_{t=1}^T \frac{\E \|\iter{\hat{\vz}}{t} - \iter{\vz}{t}\|^2}{\dim}\right\}^{1/2} \\
     & \explain{(b)}{\leq} \limsup_{\dim \rightarrow \infty} L  \left\{ \E\|\serv{A}\|^{2\degree} + \ \sum_{t=1}^T \frac{2\E\|\iter{\hat{\vz}}{t}-\iter{\vz}{t}\|^2}{\dim} + \frac{3\E\|\iter{\hat{\vz}}{t}\|^2}{\dim}   \right\}^{1/2}  \left\{ \sum_{t=1}^T \frac{\E \|\iter{\hat{\vz}}{t} - \iter{\vz}{t}\|^2}{\dim}\right\}^{1/2} \\
     &\explain{(c)}{=} \limsup_{\dim \rightarrow \infty}  L  \left\{ \E\|\serv{A}\|^{2\degree} + 3 \Tr(\Sigma_T) +  \sum_{t=1}^T \frac{2\E\|\iter{\hat{\vz}}{t}-\iter{\vz}{t}\|^2}{\dim}    \right\}^{1/2}  \left\{ \sum_{t=1}^T \frac{\E \|\iter{\hat{\vz}}{t} - \iter{\vz}{t}\|^2}{\dim}\right\}^{1/2}.
\end{align*}
In the above display (a) follows from Cauchy-Schwarz Inequality and the continuity hypothesis on $h$ in \defref{PW2}. Step (b) follows from the triangle inequality and step (c) follows from \thref{VAMP} applied to the iteration \eqref{eq:VAMP-balanced} (which satisfies \sassumpref{non-degenerate} and \sassumpref{balanced}). Hence, the claim of the lemma follows if we can show that:
\begin{align}\label{eq:balancing-final-goal}
\lim_{\dim \rightarrow \infty} \frac{\E \|\iter{\hat{\vz}}{t} - \iter{\vz}{t}\|^2}{\dim} = 0 \quad \forall \; t \; \in \; [T].
\end{align}
This can be shown by induction. Indeed, for the base case $t=0$ we have:
\begin{align*}
    \lim_{\dim \rightarrow \infty} \frac{\E \|\iter{\hat{\vz}}{1} - \iter{\vz}{1}\|^2}{\dim} & = \lim_{\dim \rightarrow \infty} \frac{\E \|(\hat{\mM}_1 - \mM_1) \cdot \nonlin_1(\auxmat)\|^2}{\dim} \leq \lim_{\dim \rightarrow \infty} \|\hat{\mPsi}_1 - \mPsi_1\|_{\op} \cdot \E \nonlin_1^2(\serv{A}) \explain{\eqref{eq:balanced-approx-guarantee}}{=} 0. 
\end{align*}
Assuming that \eqref{eq:balancing-final-goal} holds for all iterations $s \leq t$ for some $t \in [T-1]$ as the induction hypothesis, we have:
\begin{align*}
     &\lim_{\dim \rightarrow \infty} \frac{\E \|\iter{\hat{\vz}}{t+1} - \iter{\vz}{t+1}\|^2}{\dim}  \\&\hspace{2.5cm}= \lim_{\dim \rightarrow \infty} \frac{\E \|(\hat{\mM}_{t+1} - \mM_{t+1}) \cdot \nonlin_{t+1}(\iter{\hat{\vz}}{1:t};\auxmat) + \mM_{t+1} \cdot (\nonlin_{t+1}(\iter{\hat{\vz}}{1:t};\auxmat) - \nonlin_{t+1}(\iter{{\vz}}{1:t};\auxmat))\|^2}{\dim} \\&\hspace{2.5cm} \explain{}{\leq} 2\lim_{\dim \rightarrow \infty} \|\hat{\mPsi}_{t+1} - \mPsi_{t+1}\|_{\op}^2 \cdot \frac{\E[\|\nonlin_{t+1}(\iter{\hat{\vz}}{1:t};\auxmat)\|^2]}{\dim} \\&\hspace{7cm}+ 2\lim_{\dim \rightarrow \infty} \|\mPsi_{t+1}\|_{\op}^2 \cdot \frac{\E[\|\nonlin_{t+1}(\iter{\hat{\vz}}{1:t};\auxmat) - \nonlin_{t+1}(\iter{{\vz}}{1:t};\auxmat))\|^2]}{\dim} \\ &\hspace{2.5cm}\explain{(d)}{\leq} 2\lim_{\dim \rightarrow \infty} \|\hat{\mPsi}_{t+1} - \mPsi_{t+1}\|_{\op}^2  \left(\frac{2\E[\|\nonlin_{t+1}(\vzero, \dotsc, \vzero;\auxmat)\|^2]}{\dim} +  \sum_{s=1}^t \frac{2L^2\E\| \iter{\hat{\vz}}{s}\|^2}{\dim} \right) \\&\hspace{10cm}+ 2L^2\lim_{\dim \rightarrow \infty} \|\mPsi_{t+1}\|_{\op}^2  \sum_{s=1}^t\frac{\E \|\iter{\hat{\vz}}{s} - \iter{\vz}{s}\|^2}{\dim} \\ &\hspace{2.5cm} \explain{(e)}{=} 2\lim_{\dim \rightarrow \infty} \|\hat{\mPsi}_{t+1} - \mPsi_{t+1}\|_{\op}^2  \left(\frac{2\E[\|\nonlin_{t+1}(\vzero, \dotsc, \vzero;\auxmat)\|^2]}{\dim} +  \sum_{s=1}^t \frac{2L^2\E\| \iter{\hat{\vz}}{s}\|^2}{\dim} \right)  \\ &\hspace{2.5cm} \explain{(f)}{=} 4(\E \nonlin_{t+1}^2(0, \dotsc, 0; \serv{A}) + \Tr(\Sigma_T)) \cdot \lim_{\dim \rightarrow \infty} \|\hat{\mPsi}_{t+1} - \mPsi_{t+1}\|_{\op}^2 \\
     &\hspace{2.5cm} \explain{\eqref{eq:balanced-approx-guarantee}}{=} 0.\end{align*}
In the above display (d) follows from the assumption that $\nonlin_{1:T}$ are uniformly Lipschitz (cf. \thref{VAMP}), (e) follows from the induction hypothesis, (f) follows from \thref{VAMP} applied to the iteration \eqref{eq:VAMP-balanced} (which satisfies \sassumpref{non-degenerate} and \sassumpref{balanced}). This concludes the proof of \lemref{balancing}. 
\end{proof}

\subsubsection{Proof of \lemref{approx-by-balanced}}
\begin{proof}[Proof of \lemref{approx-by-balanced}]
Recalling the definition of semi-random ensemble (\defref{semirandom}) we know that $\mM_i = \mS \mPsi_i \mS$ where $\mS$ is a uniformly random diagonal sign matrix and $\mPsi_{1:T}$ are deterministic matrices. Let $\mPsi_{i1}, \mPsi_{i2}, \dotsc, \mPsi_{i\dim}$ denote the $\dim$ rows of $\mPsi_i$. Let $\hat{\Omega}$ be the empirical covariance matrix of the semi-random ensemble $\mM_{1:T}$ with entries $\hat{\Omega}_{ij} \explain{def}{=} \Tr(\mPsi_i \mPsi_j \tran)/\dim$ and $\Omega$ be the limiting covariance matrix of $\mM_{1:T}$.  For each $\ell \in [\dim]$, we define  a $T\times \dim$ matrix $\mU_{\ell}$ with rows:
\begin{align*}
    \mU_{\ell}\tran = [\mPsi_{1\ell} , \mPsi_{2\ell}, \dotsc, \mPsi_{T\ell}].
\end{align*} Define $\Omega_{\ell} \explain{def}{=} \mU_{\ell} \mU_{\ell} \tran$.  Note that the definition of a semi-random ensemble guarantees that:
\begin{align} \label{eq:omega-ell-pert-bound}
    \max_{\ell \in [\dim]} \|\Omega_{\ell} - \hat{\Omega}\|_{\op} &\lesssim \dim^{-1/2+\epsilon} \; \forall \; \epsilon > 0, \quad
     \|\hat{\Omega} - \Omega\|_{\op}  \ll 1.
\end{align}
In particular, this means that there is a  $\dim_0 \in \N$ such that:
\begin{align}\label{eq:lambda-min}
    \min_{\ell \in [\dim]} \lambda_{\min}(\Omega_{\ell}) \geq \lambda_{\min}(\Omega)/2 > 0 \quad \forall \; \dim \; \geq \; \dim_0.
\end{align}
Hence, for large enough $\dim$, each of the matrices $\Omega_{\ell}$ is invertible simultaneously. In order to construct the desired balanced semi-random ensemble, for each $\ell \in [\dim]$, we will mix the rows $\mPsi_{1\ell} , \mPsi_{2\ell}, \dotsc, \mPsi_{T\ell}$ to form $\hat{\mPsi}_{1\ell} , \hat{\mPsi}_{2\ell}, \dotsc, \hat{\mPsi}_{T\ell}$ so that the Gram matrix of the new rows is exactly $\Omega$ (instead of $\Omega_{\ell} \approx \Omega$). Specifically, we define the matrix $\hat{\mU}_{\ell}$ with rows given by:
\begin{align} \label{eq:hat-U-def}
    \hat{\mU}_{\ell} \tran = [\hat{\mPsi}_{1\ell} , \hat{\mPsi}_{2\ell}, \dotsc, \hat{\mPsi}_{T\ell}] \explain{def}{=} \mU_{\ell} \tran \Omega_{\ell}^{-\frac{1}{2}} \Omega^{\frac{1}{2}}.
\end{align}
We construct the approximating balanced semi-random ensemble $\hat{\mM}_{1:T}$ as follows:
\begin{align*}
    \hat{\mM}_i \explain{def}{=} \mS\hat{\mPsi}_i \mS, \quad \hat{\mPsi}_i\tran \explain{def}{=} [\hat{\mPsi}_{i1}, \hat{\mPsi}_{i2}, \dotsc, \hat{\mPsi}_{i\dim}]. 
\end{align*}
We first show that $\max_{i \in [T]} \|\hat{\mPsi}_i - \mPsi_i\|_{\op} \lesssim \dim^{-1/2 + \epsilon}$ for any $\epsilon > 0$. Define the block matrix:
\begin{align*}
    \mXi \explain{def}{=} \begin{bmatrix}  \hat{\mU}_{1} \tran - {\mU}_{1} \tran & \vzero  & \hdots & \vzero \\ \vzero &  \hat{\mU}_{2} \tran - {\mU}_{2} \tran & \hdots & \vzero \\  \vdots & \vdots &   & \vdots \\  \vzero & \vzero & \hdots &  \hat{\mU}_{\dim} \tran - {\mU}_{\dim} \tran \end{bmatrix}.
\end{align*}
Observe that for any $i \in [T]$:
\begin{align*}
     \|\hat{\mPsi}_i - \mPsi_i\|_{\op} & = \max_{\substack{\vx \in \R^{\dim} \\ \|\vx\| = 1}} \| (\hat{\mPsi}_i - \mPsi_i) \cdot \vx  \| =  \max_{\substack{\vx \in \R^{\dim} \\ \|\vx\| = 1}} \|\mXi \mathscr{X}_i(\vx)\|,
\end{align*}
where:
\begin{align*}
    \mathscr{X}_i(\vx) \explain{def}{=} \begin{bmatrix} x_1 \ve_i  \\ x_2 \ve_i \\ \vdots \\ x_\dim \ve_i \end{bmatrix}.
\end{align*}
In the above display, $\ve_{1:T}$ denote the standard basis in $\R^T$. Hence,
\begin{align*}
    \max_{i \in [T]} \|\hat{\mPsi}_i - \mPsi_i\|_{\op} & = \max_{i \in [T]} \max_{\substack{\vx \in \R^{\dim} \\ \|\vx\| = 1}} \|\mXi \mathscr{X}_i(\vx)\|  \leq \max_{i \in [T]}\max_{\substack{\vx \in \R^{\dim} \\ \|\vx\| = 1}} \|\mXi\|_{\op}\| \mathscr{X}_i(\vx)\|  = \|\mXi\|_{\op}.
\end{align*}
We can upper bound $\|\mXi\|_{\op}$ as follows:
\begin{align*}
    \|\mXi\|_{\op}^2  = \max_{\ell \in \dim} \|\hat{\mU}_{\ell} \tran - \hat{\mU}_{\ell} \tran\|_{\op}^2 &\explain{\eqref{eq:hat-U-def}}{=} \max_{\ell \in \dim} \|\mU_{\ell}\tran \cdot (\Omega_{\ell}^{-\frac{1}{2}} \Omega^{\frac{1}{2}}-I_{T}) \|_{\op}^2 \\&= \max_{\ell \in \dim} \|(\Omega_{\ell}^{-\frac{1}{2}} \Omega^{\frac{1}{2}}-I_{T}) \tran \mU_{\ell} \mU_{\ell} \tran (\Omega_{\ell}^{-\frac{1}{2}} \Omega^{\frac{1}{2}}-I_{T})\|_{\op} \\
    & = \max_{\ell \in \dim} \|(\Omega_{\ell}^{-\frac{1}{2}} \Omega^{\frac{1}{2}}-I_{T})\tran \Omega_{\ell} (\Omega_{\ell}^{-\frac{1}{2}} \Omega^{\frac{1}{2}}-I_{T})\|_{\op} =  \max_{\ell \in \dim} \| \Omega^{\frac{1}{2}}-\Omega_{\ell}^{\frac{1}{2}}\|_{\op}^2.
\end{align*}
Using standard perturbation bounds on the matrix square roots (e.g. \citep[Lemma 2.2]{schmitt1992perturbation}):
\begin{align*}
    \max_{i \in [T]} \|\hat{\mM}_i - {\mM}_i\|_{\op} =  \max_{i \in [T]} \|\hat{\mPsi}_i - {\mPsi}_i\|_{\op} & \leq  \max_{\ell \in \dim} \| \Omega^{\frac{1}{2}}-\Omega_{\ell}^{\frac{1}{2}}\|_{\op} \leq \frac{\| \Omega-\Omega_{\ell}\|_{\op}}{\sqrt{\lambda_{\min}(\Omega_\ell)} + \sqrt{\lambda_{\min}({\Omega})}} \ll 1,
\end{align*}
where the final bound follows from  \eqref{eq:omega-ell-pert-bound} and \eqref{eq:lambda-min}. This proves the claimed bound on $\max_{i \in [T]} \|\hat{\mM}_i - {\mM}_i\|_{\op} =  \max_{i \in [T]} \|\hat{\mPsi}_i - {\mPsi}_i\|_{\op}$. In order to complete the proof of the lemma, we verify that $\hat{\mM}_{1:T}$ is a balanced semi-random ensemble. Recalling \eqref{eq:hat-U-def} we observe that that for any $i \in [T]$ and any $\ell \in [\dim]$, row $\ell$ of the matrix $\hat{\mPsi}_i$ is given by:
\begin{align*}
    \hat{\mPsi}_{i\ell} & = \sum_{t=1}^T  (\Omega_{\ell}^{-\frac{1}{2}} \Omega^{\frac{1}{2}})_{tj} \cdot \mPsi_{t \ell}.
\end{align*}
Hence,
\begin{align} \label{eq:balancing-semirandom-1}
    \max_{i \in [T]}\|  \hat{\mPsi}_{i}\|_{\infty} & \leq T \cdot \bigg(\max_{\ell \in [\dim]} \|\Omega_{\ell}^{-\frac{1}{2}} \Omega^{\frac{1}{2}}\|_{\op} \bigg) \cdot   \max_{i \in [T]}\|  {\mPsi}_{i}\|_{\infty} \explain{(a)}{\lesssim} \dim^{-1/2 + \epsilon}, \\
      \max_{i,j \in [T]} \max_{\substack{\ell, \ell^\prime \in [\dim] \\ \ell \neq \ell^\prime }} |(\hat{\mPsi}_i \hat{\mPsi}_j \tran)_{\ell, \ell^\prime}| & \leq T^2 \cdot \bigg(\max_{\ell \in [\dim]} \|\Omega_{\ell}^{-\frac{1}{2}} \Omega^{\frac{1}{2}}\|_{\op}^2 \bigg) \cdot \max_{i,j \in [T]} \max_{\substack{\ell, \ell^\prime \in [\dim] \\ \ell \neq \ell^\prime }} |({\mPsi}_i {\mPsi}_j \tran)_{\ell, \ell^\prime}| \explain{(a)}{\lesssim} \dim^{-1/2 + \epsilon}. \label{eq:balancing-semirandom-2}
\end{align}
In the above display, the steps marked (a) follow from \eqref{eq:lambda-min} and the fact that $\mM_{1:T} = \mS \mPsi_{1:T} \mS$ form a semi-random ensemble. Furthermore, for any $i,j \in [T]$ and any $\ell \in [\dim]$:
\begin{align} \label{eq:balancing-semirandom-3}
    (\hat{\mM}_i \hat{\mM}_j \tran)_{\ell,\ell} = \ip{\hat{\mPsi}_{i\ell}}{\hat{\mPsi}_{j\ell}} = (\hat{\mU}_{\ell} \hat{\mU}_{\ell} \tran)_{ij} \explain{\eqref{eq:hat-U-def}}{=} (\Omega^{\frac{1}{2}} \Omega_{\ell}^{-\frac{1}{2}} \mU_{\ell} \mU_{\ell} \tran \Omega_{\ell}^{-\frac{1}{2}} \Omega^{\frac{1}{2}} )_{ij} = (\Omega^{\frac{1}{2}} \Omega_{\ell}^{-\frac{1}{2}} \Omega_{\ell} \Omega_{\ell}^{-\frac{1}{2}} \Omega^{\frac{1}{2}} )_{ij} = (\Omega)_{ij}.
\end{align}
\eqref{eq:balancing-semirandom-1}, \eqref{eq:balancing-semirandom-2}, and \eqref{eq:balancing-semirandom-3} together show that the matrices $\hat{\mM}_{1:T} = \mS \hat{\mPsi}_{1:T} \mS$ form a balanced (\sassumpref{balanced}) semi-random ensemble (\defref{semirandom}) with limiting covariance $\Omega$. This concludes the proof of this lemma. 
\end{proof}

\subsection{Orthogonalization of VAMP Iterates} \label{appendix:orthogonalization}
\begin{proof}[Proof of \lemref{orthogonalization}]
 In order to prove \lemref{orthogonalization}, we will assume that \thref{VAMP} holds in the situation when the semi-random ensemble is balanced (\sassumpref{balanced}) and the VAMP iteration is orthogonalized (\sassumpref{orthogonality}). We will show that this implies that \thref{VAMP} holds in the situation when the VAMP iteration is non-degenerate (\sassumpref{non-degenerate}) and the semi-random ensemble is balanced (\sassumpref{balanced}). By \lemref{balancing}, this is sufficient to show that \thref{VAMP} holds without any additional assumptions. To this end, we consider $T$ iterations of a VAMP algorithm which satisfies all the assumptions of \thref{VAMP}, along with the non-degeneracy assumption (\sassumpref{non-degenerate}) and the balanced semi-random ensemble assumption (\sassumpref{balanced}):
\begin{align}\label{eq:VAMP-non-ortho}
    \iter{\vz}{t} = \mM_t \cdot \nonlin_t(\iter{\vz}{1}, \iter{\vz}{2}, \dotsc, \iter{\vz}{t-1}; \auxmat) \quad \forall \; t\; \in \; [T].
\end{align}
In the above display $\mM_{1:T}$ is a balanced semi-random ensemble with limiting covariance matrix $\Omega$. Recall from \defref{semirandom}, this means that $\mM_i = \mS \mPsi_i \mS$ where $\mS$ is a uniformly random sign diagonal matrix. 
Let: 
\begin{align}\label{eq:SE-VAMP-non-ortho}\serv{Z}_{1}, \dotsc, \serv{Z}_{T} \sim \gauss{0}{\Sigma_T}, \quad \Phi_T, \quad \Sigma_T
\end{align}  
denote the state evolution random variables and covariance matrices associated with the VAMP algorithm in \eqref{eq:VAMP-non-ortho} (as defined in \eqref{eq:SE-VAMP}). Our goal is to show that $(\iter{\vz}{1}, \iter{\vz}{2}, \dotsc, \iter{\vz}{T}, \auxmat) \explain{\pw}{\longrightarrow} (\serv{Z}_{1}, \dotsc, \serv{Z}_{T}, \serv{A})$. Note that the non-degeneracy assumption (\sassumpref{non-degenerate}) guaratees that:
\begin{align}
    \lambda_{\min}(\Omega) > 0, \quad \lambda_{\min}(\Phi_T) > 0, \quad \lambda_{\min}(\Sigma_T) > 0.
\end{align}

\paragraph{Step 1: Orthogonalization of Matrix Ensemble.} By viewing the matrices $\mPsi_{1:T}$ as vectors, we can orthogonalize them via Gram-Schmidt orthogonalization to obtain a matrix ensemble $\hat{\mPsi}_{1:T}$ and an invertible, lower triangular matrix $P \in \R^{T \times T}$ such that: 
\begin{subequations}\label{eq:matrix-ortho}
\begin{align}
    \ip{\hat{\mPsi}_i}{\hat{\mPsi}_j} & = \begin{cases} 1 &: i = j \\
    0&: i \neq j\end{cases},  \\
    P P\tran & = \Omega, \\
    \mPsi_i & = \sum_{j \leq i} P_{ij} \cdot \hat{\mPsi}_j,  \\
    \hat{\mPsi}_i & =  \sum_{j \leq i} (P^{-1})_{ij} \cdot {\mPsi}_j.
\end{align}
\end{subequations}
Define $\hat{\mM}_{1:T} = \mS \hat{\mPsi}_{1:T} \mS$. Observe that since $\hat{\mM}_{1:T}$ is a linear combination of a balanced semi-random ensemble $\mM_{1:T}$, $\hat{\mM}_{1:T}$ themselves form a balanced semi-random ensemble with limiting covariance matrix $I_T$. 

\paragraph{Step 2: Orthogonalization of $\nonlin_{1:T}$.}  For any two functions $h_1, h_2 : \R^{T + \auxdim} \mapsto \R$ define the inner product:
\begin{align*}
    \ip{h_1}{h_2} \explain{def}{=} \E[ h_1(\serv{Z}_1, \dotsc, \serv{Z}_T; \serv{A}) \cdot h_2(\serv{Z}_1, \dotsc, \serv{Z}_T; \serv{A})],
\end{align*}
where $\serv{Z}_{1:T} \sim \gauss{0}{\Sigma_T}$ are the Gaussian state evolution random variables for the VAMP algorithm \eqref{eq:VAMP-non-ortho}. Using Gram-Schmidt Orthogonalization with respect to the above inner product, we can find functions $\hat{\nonlin}_{1:T}$ with $\hat{\nonlin}_t : \R^{t-1+\auxdim} \mapsto \R$ and an invertible lower triangular matrix $Q$ such that:
\begin{subequations}\label{eq:nonlin-ortho}
\begin{align}
    \ip{\nonlin_s}{\nonlin_t} & \explain{def}{=} \E[ \nonlin_s(\serv{Z}_1, \dotsc, \serv{Z}_{s-1}; \serv{A}) \cdot \nonlin_t(\serv{Z}_1, \dotsc, \serv{Z}_t; \serv{A})] = \begin{cases} 1 &: s = t \\
    0&: s \neq t\end{cases},  \\
    Q Q\tran & = \Phi_{T}, \\
    \nonlin_t(z_1, \dotsc, z_{t-1}; \auxvec) & = \sum_{s \leq t} Q_{ts} \cdot \hat{\nonlin}_s(z_1, \dotsc, z_{s-1}; \auxvec),  \\
   \hat{\nonlin}_t(z_1, \dotsc, z_{t-1}; \auxvec) & = \sum_{s \leq t} (Q^{-1})_{ts} \cdot {\nonlin}_s(z_1, \dotsc, z_{s-1}; \auxvec)
\end{align}
\end{subequations}

\paragraph{Step 3: Orthogonalized Iterates.} We introduce the orthogonalized iterates:
\begin{align} \label{eq:ortho-iterates-def}
    \iter{\hat{\vz}}{t,i} \explain{def}{=} \hat{\mM}_i \hat{\nonlin}_t(\iter{\vz}{1}, \dotsc, \iter{\vz}{t-1}; \auxmat) \quad \forall \; t \; \in [T], \; i \; \in \; [T].
\end{align}
Observe that the VAMP iterates \eqref{eq:VAMP-non-ortho} $\iter{\vz}{t}$ can be expressed in terms of the orthogonalized iterates. Indeed for any $t \in [T]$:
\begin{align*}
    \iter{\vz}{t} &= \mM_t \nonlin_t(\iter{\vz}{1}, \dotsc, \iter{\vz}{t-1}; \auxmat) \\
    & \explain{\eqref{eq:matrix-ortho}}{=} \sum_{j \leq t} P_{tj} \cdot  \hat{\mM}_j \nonlin_t(\iter{\vz}{1}, \dotsc, \iter{\vz}{t-1}; \auxmat) \\
    & \explain{\eqref{eq:nonlin-ortho}}{=} \sum_{j \leq t} \sum_{s \leq t} P_{tj} \cdot Q_{ts} \cdot  \hat{\mM}_j \hat{\nonlin}_s(\iter{\vz}{1}, \dotsc, \iter{\vz}{s-1}; \auxmat) \\
    & = \sum_{j \leq t} \sum_{s \leq t} P_{tj} \cdot Q_{ts} \cdot  \iter{\hat{\vz}}{s,j}.
\end{align*}
For notational convenience, given a collection scalars $v_{sj} : j \in [T], s \in [T]$ for each $t \in [T]$,  define the linear map:
\begin{align} \label{eq:linmap-def}
    \linmap_t(v_{1,1:T}, \dotsc, v_{t,1:T}) \explain{def}{=} \sum_{j=1}^T \sum_{s=1}^t P_{tj} \cdot Q_{ts} \cdot v_{sj} \explain{(a)}{=}  \sum_{j=1}^t \sum_{s=1}^t P_{tj} \cdot Q_{ts} \cdot v_{sj},
\end{align}
where equality (a) follows from the fact that $P$ is a lower triangular matrix. Hence,
\begin{align} \label{eq:nonortho-to-ortho}
    \iter{\vz}{t} & = \linmap_t(\iter{\hat{\vz}}{1,1:T}, \dotsc, \iter{\hat{\vz}}{t,1:T}),
\end{align}
where the linear map $\linmap_t$ acts entrywise on the vectors $\iter{\hat{\vz}}{1:t,1:t}$. 
\paragraph{Step 4: Dynamics of Orthogonalized Iterates.} Next, we show that the orthogonalized iterates can be obtained by running $T^2$ iterations of a VAMP algorithm that satisfies the orthogonality assumption (\sassumpref{orthogonality}) in addition to the other assumptions stated in \thref{VAMP}. In order to see this, observe that any $\tau \in [T^2]$ has a unique representation of the form $\tau = T(t-1) + i$ where $i,t \in [T]$. Define the iterates: $\iter{\vv}{\tau} \explain{def}{=} \iter{\hat{\vz}}{t,i}$. Observe that:
\begin{align}
    \iter{\vv}{\tau}  \explain{def}{=} \iter{\hat{\vz}}{t,i}  &\explain{\eqref{eq:ortho-iterates-def}}{=} \hat{\mM}_i \hat{\nonlin}_t(\iter{\vz}{1}, \iter{\vz}{2},  \dotsc, \iter{\vz}{t-1}; \auxmat) \nonumber \\
    & \explain{\eqref{eq:nonortho-to-ortho}}{=}  \hat{\mM}_i \hat{\nonlin}_t(\linmap_1(\iter{\hat{\vz}}{1,1}), \linmap_2(\iter{\hat{\vz}}{1,1:T}, \iter{\hat{\vz}}{2,1:T}), \dotsc, \linmap_{t-1}(\iter{\hat{\vz}}{1,1:T},\dotsc, \iter{\hat{\vz}}{t-1,1:T}); \auxmat) \nonumber \\
    & =  \hat{\mM}_i \hat{\nonlin}_t(\linmap_1(\iter{\vv}{1}), \linmap_1(\iter{\vv}{1}, \dotsc, \iter{\vv}{2T}) \dotsc, \linmap_{t-1}(\iter{\vv}{1}, \dotsc, \iter{\vv}{T(t-1)}); \auxmat) \nonumber.
\end{align}
\begin{subequations}\label{eq:final-ortho-VAMP}
For each $\tau \in [T^2]$ with representation $\tau = T(t-1) + i$ define:
\begin{align}
    \mQ_{\tau} &\explain{def}{=} \hat{\mM}_i, \\ g_{\tau}(v_1, \dotsc, v_{\tau-1}; \auxvec) &\explain{def}{=} \hat{\nonlin}_t(\linmap_1(v_1,\dotsc, v_T), \linmap_2(v_1, \dotsc, v_{2T}), \linmap_3(v_1, \dotsc, v_{3T}), \dotsc, \linmap_{t-1}(v_1, \dotsc, v_{T(t-1)}); \auxvec).
\end{align}
Hence,
\begin{align}
    \iter{\vv}{\tau}  & = \mQ_\tau g_{\tau}(\iter{\vv}{1}, \dotsc, \iter{\vv}{\tau-1}; \auxmat). 
\end{align}
\end{subequations}
Next, we verify the VAMP algorithm in \eqref{eq:final-ortho-VAMP} satisfies the orthogonality assumption (\sassumpref{orthogonality}), balanced semi-random ensemble assumption (\sassumpref{balanced}) in addition to all the assumptions required by \thref{VAMP}. Observe that:
\begin{enumerate}
    \item Observe that $\mQ_{1:T^2}$ is a balanced semi-random ensemble with limiting covariance matrix $\Upsilon$, where for any $\tau, \tau^\prime \in [T^2]$ with representations $\tau = T(t-1) + i$ and $\tau^\prime = T(t^\prime - 1) + i^\prime$, $\Upsilon_{\tau\tau^\prime}$ is given by the formula:
    \begin{align} \label{eq:Q-ortho}
        \Upsilon_{\tau \tau^\prime} & = \begin{cases} 1 &: i = i^\prime \\ 0 &: i \neq i^\prime \end{cases}.
    \end{align}
    \item Let $\serv{V}_1, \dotsc, \serv{V}_{T^2}$ be i.i.d. $\gauss{0}{1}$ random variables. For each $t \in [T]$ define the random variables:
    \begin{align} \label{eq:lin-map-gaussian}
        \serv{W}_t & = \linmap_t(\serv{V}_1, \dotsc, \serv{V}_{Tt}).
    \end{align}Observe that for any $\tau, \tau^\prime \in [T^2]$ with representations $\tau = T(t-1) + i$ and $\tau^\prime = T(t^\prime - 1) + i^\prime$:
    \begin{align*}
        \E[g_\tau(\serv{V}_1, \dotsc, \serv{V}_{\tau-1}; \serv{A}) \cdot g_{\tau^\prime}(\serv{V}_1, \dotsc, \serv{V}_{\tau^\prime-1}; \serv{A})] & \explain{\eqref{eq:final-ortho-VAMP}}{=}  \E[ \hat{\nonlin}_t(\serv{W}_1, \dotsc, \serv{W}_{t-1}; \serv{A}) \cdot \hat{\nonlin}_{t^\prime}(\serv{W}_1, \dotsc, \serv{W}_{t^\prime-1}; \serv{A})].
    \end{align*}
    In order to further simplify the above formula, we note that $\serv{Z}_{1:T}$ are mean zero Gaussian random variables. Furthermore, for any $t,t^\prime \in [T]$, we can compute their covariance:
    \begin{align*}
        \E[\serv{W}_t \serv{W}_{t^\prime}] & \explain{\eqref{eq:lin-map-gaussian}}{=} \E[ \linmap_t(\serv{V}_1, \dotsc, \serv{V}_{Tt}) \cdot  \linmap_{t^\prime}(\serv{V}_1, \dotsc, \serv{V}_{Tt^\prime})] \\
        & \explain{\eqref{eq:linmap-def}}{=}  \sum_{j,j^\prime=1}^T \sum_{s=1}^t \sum_{s^\prime=1}^{t^\prime} P_{tj}  Q_{ts} \cdot P_{t^\prime j^\prime}  Q_{t^\prime s^\prime} \cdot \E[\serv{V}_{T(s-1) + j} \serv{V}_{T(s^\prime-1) + j^\prime}] \\
        & \explain{(b)}{=} \sum_{j=1}^T  \sum_{s=1}^{t \wedge t^\prime} P_{tj}  Q_{ts} \cdot P_{t^\prime j}  Q_{t^\prime s} \\
        & \explain{(c)}{=} (P P \tran)_{tt^\prime} (Q Q \tran)_{tt^\prime} \explain{(d)}{=} (\Omega)_{tt^\prime} (\Phi_T)_{ss^\prime} \explain{(e)}{=} (\Sigma_T)_{tt^\prime}.
    \end{align*}
    In the above display step (b) follows from the fact that $\serv{V}_{1:T^2}$ are i.i.d. $\gauss{0}{1}$ random variables and step (c) follows from the fact that $P,Q$ are lower triangular matrices, step (d) uses \eqref{eq:matrix-ortho} and \eqref{eq:nonlin-ortho}  and step (e) follows from the formula for the Gaussian state evolution covariance given in \eqref{eq:SE-VAMP}. In particular,
    \begin{align}\label{eq:W-deq-Z}
        (\serv{W}_1, \dotsc, \serv{W}_T) \explain{d}{=} (\serv{Z}_1, \dotsc, \serv{Z}_T),
    \end{align}
    where $(\serv{Z}_1, \dotsc, \serv{Z}_T)$ are the Gaussian state evolution random variables associated with the original (non-orthogonal) VAMP algorithm \eqref{eq:VAMP-non-ortho}. Hence, for any $\tau, \tau^\prime \in [T^2]$ with representations $\tau = T(t-1) + i$ and $\tau^\prime = T(t^\prime - 1) + i^\prime$:
    \begin{align}
         \E[g_\tau(\serv{V}_1, \dotsc, \serv{V}_{\tau-1}; \serv{A})  g_{\tau^\prime}(\serv{V}_1, \dotsc, \serv{V}_{\tau^\prime-1}; \serv{A})] &= \E[ \hat{\nonlin}_t(\serv{Z}_1, \dotsc, \serv{Z}_{t-1}; \serv{A}) \hat{\nonlin}_{t^\prime}(\serv{Z}_1, \dotsc, \serv{Z}_{t^\prime-1}; \serv{A})] \nonumber\\&\explain{\eqref{eq:nonlin-ortho}}{=} \begin{cases} 1 &: t^\prime = t \\
    0&: t^\prime \neq t\end{cases} \label{eq:g-ortho}
    \end{align}
    Hence, each of the conditions required by the orthogonality assumption (\sassumpref{orthogonality}) are met by \eqref{eq:Q-ortho} and \eqref{eq:g-ortho}.
    \item Finally, we verify the non-linearities $g_{1:T^2}$ are divergence-free in the sense of \assumpref{div-free}. Since the iteration \eqref{eq:final-ortho-VAMP} satisfies the orthogonality conditions (\sassumpref{orthogonality}), the state evolution random variables associated with \eqref{eq:final-ortho-VAMP} are i.i.d. $\gauss{0}{1}$, and can be taken as $\serv{V}_1, \dotsc, \serv{V}_{T^2}$. To verify the divergence-free condition, we compute for any $\tau^\prime < \tau$ with representations $\tau = T(t-1) + i$ and $\tau^\prime = T(t^\prime - 1) + i^\prime$
    \begin{align*}
        \E[g_\tau(\serv{V}_1, \dotsc, \serv{V}_{\tau-1}; \serv{A})\cdot  V_{\tau^\prime}] &\explain{\eqref{eq:lin-map-gaussian}}{=} \E[  \hat{\nonlin}_t(\serv{W}_1, \dotsc, \serv{W}_{t-1} ; \serv{A}) \cdot V_{\tau^\prime}] \\ &\explain{(f)}{=} \E[  \hat{\nonlin}_t(\serv{W}_1, \dotsc, \serv{W}_{t-1} ; \serv{A}) \cdot \E[V_{\tau^\prime} | \serv{W}_1, \dotsc, \serv{W}_{t-1}]] \\
        & \explain{(g)}{=} \sum_{s=1}^{t-1} \iter{\alpha}{\tau^\prime,t-1}_s \cdot  \E[  \hat{\nonlin}_t(\serv{W}_1, \dotsc, \serv{W}_{t-1} ; \serv{A}) \cdot \serv{W}_s] \\ 
        & \explain{\eqref{eq:W-deq-Z}}{=} \sum_{s=1}^{t-1} \iter{\alpha}{\tau^\prime,t-1}_s \cdot  \E[  \hat{\nonlin}_t(\serv{Z}_1, \dotsc, \serv{Z}_{t-1} ; \serv{A}) \cdot \serv{Z}_s] \\
        & \explain{\eqref{eq:nonlin-ortho}}{=} \sum_{s=1}^{t-1} \sum_{s^\prime = 1}^t \iter{\alpha}{\tau^\prime,t-1}_s \cdot (Q^{-1})_{ts^\prime} \cdot \E[\nonlin_{s^\prime}(\serv{Z}_1, \dotsc, \serv{Z}_{s^\prime-1} ; \serv{A}) \cdot \serv{Z}_s] \\
        & \explain{(h)}{=} \sum_{s=1}^{t-1} \sum_{s^\prime = 1}^t \sum_{r=1}^{s^\prime - 1} \iter{\alpha}{\tau^\prime,t-1}_s \cdot (Q^{-1})_{ts^\prime} \cdot \iter{\beta}{s,s^\prime-1}_r \cdot  \E[\nonlin_{s^\prime}(\serv{Z}_1, \dotsc, \serv{Z}_{s^\prime-1} ; \serv{A}) \cdot \serv{Z}_r] \\
        & \explain{(i)}{=} 0.
    \end{align*}
    In the above display, (f) follows from the tower property and the fact that $\serv{V}_{1:T^2}, \serv{W}_{1:T}$ are independent of $\serv{A}$, (g) follows from the fact that since $(\serv{W}_{1:t-1}, \serv{V}_{\tau^\prime})$ is a Gaussian vector $\E[V_{\tau^\prime} | \serv{W}_1, \dotsc, \serv{W}_{t-1}]]$ is a linear combination of $\serv{W}_{1:t-1}$ (the precise formula for the coefficients $\iter{\alpha}{\tau^\prime,t-1}_{1:t-1}$ is not needed). An identical argument was used in step (h). Finally equality (i) follows because the non-linearities $\nonlin_{1:T}$ are assumed to be divergence-free with respect to $\serv{Z}_{1:T}$.  
\end{enumerate}
Hence, we have verified that the VAMP algorithm in \eqref{eq:final-ortho-VAMP} satisfies the orthogonality assumption (\sassumpref{orthogonality}), balanced semi-random ensemble assumption (\sassumpref{balanced}) in addition to all the assumptions required by \thref{VAMP}. Consequently,
\begin{align*}
    (\iter{\vv}{1}, \dotsc, \iter{\vv}{T^2}, \auxmat) \explain{\text{\pw}}{\longrightarrow} (\serv{V}_1, \dotsc, \serv{V}_{T^2}, \serv{A}). 
\end{align*}
Using \eqref{eq:nonortho-to-ortho}:
\begin{align*}
 (\iter{\vz}{1}, \dotsc, \iter{\vz}{T}, \auxmat) &\explain{\text{\pw}}{\longrightarrow} ( \linmap_1(\serv{V}_1, \dotsc, \serv{V}_{T}) \dotsc, \linmap_T(\serv{V}_1, \dotsc, \serv{V}_{T^2}), \serv{A}) \\
 & \explain{\eqref{eq:lin-map-gaussian},\eqref{eq:W-deq-Z}}{=}  ( \serv{Z}_1, \dotsc,\serv{Z}_T, \serv{A}).
\end{align*}
This proves the claim of \lemref{orthogonalization}. 
\end{proof}

\section{Miscellaneous Results}
In this section, we collect the proofs of some miscellaneous results used in the paper. 
\subsection{Proof of \lemref{iid}}\label{appendix:iid}
This section provides a proof of \lemref{iid}, which identifies the universality class corresponding to left linear transformations of i.i.d. matrices. 
\begin{proof}[Proof of \lemref{iid}] Observe that since the entries of $\mZ$ are distributionally symmetric, $\mX_{\mathtt{tiid}} \explain{d}{=} \mT \mZ \mS$ where $\mS$ is a uniformly random sign diagonal matrix. Hence, it suffices to verify that $\mJ \explain{def}{=} \mT \mZ $ satisfies the requirements of \defref{univ-class}. In order to do so, we rely on three random matrix theory results on i.i.d. matrices:
\begin{enumerate}
    \item \citet{yin1986limiting} has shown that for any $k \in \N$:
    \begin{align*}
        \Tr[(\mJ \tran \mJ)^k]/\dim \explain{a.s.}{\rightarrow} \int \lambda^k  \; \pi \boxtimes \mu_{\mathrm{MP}}^\alpha(\diff \lambda),
    \end{align*}
    as required by \defref{univ-class}. 
    \item \citet{bai2008limit} have shown that $\lambda_{\max}(\mZ \tran \mZ) \explain{a.s.}{\rightarrow} R_{}(\alpha)$ where  $R(\alpha) \explain{def}{=}  \sqrt{\alpha} + 1/{\sqrt{\alpha}} + 2$. Consequently, the event:
    \begin{align}\label{eq:ll-good-event-A}
        \mathcal{A}_{\dim} \explain{def}{=} \left\{ \|\mJ\|_{\op} < \sqrt{2 K R(\alpha)} \right\},
    \end{align}
    satisfies:
    \begin{align}\label{eq:iid-rmt-io-events-A}
          \P( {\mathcal{A}}^c_\dim \text{ occurs infinitely often}) & = 0.
    \end{align}
    In \eqref{eq:iid-rmt-io-events-A}, $K$ is any upper bound on $\|\mT\|_{\op}^2$ (independent of $\dim$). In particular, $\|\mJ\|_{\op} \lesssim 1$ with probability $1$, as required by \defref{univ-class}. 
    \item Finally, \citet[Theorem 3.16 and Remark 3.17]{knowles2017anisotropic} (see also \citep[Remark 2.6 and Remark 2.7]{alex2014isotropic}) have obtained a local law for the resolvent $ (\mJ \tran \mJ -z \mI_{\dim} )^{-1}$ which identifies a function $m_{\dim} : \C  \mapsto \C$ such that for any $\epsilon > 0$ (independent of $\dim$), the event:
    \begin{align}\label{eq:ll-good-event-B}
     \mathcal{B}_{\dim}(\epsilon) \explain{def}{=} \Bigg\{ \sup_{\substack{z \in \C \\ |z| = 2KR_{}(\alpha)}} \left\| (\mJ \tran \mJ -z \mI_{\dim} )^{-1} - m_{\dim}(z) \cdot \mI_{\dim} \right\|_{\infty} \leq \dim^{-1/2+\epsilon} \Bigg\}.
\end{align}
satisfies $\P({\mathcal{B}}_{\dim}(\epsilon)^c) \leq C(\epsilon) \cdot \dim^{-2}$ for some constant $C(\epsilon)$ that is determined by $\epsilon$. Hence:
\begin{align} \label{eq:iid-rmt-io-events-B}
  \P( {\mathcal{B}}_\dim(\epsilon)^c\text{ occurs infinitely often})  = 0 \quad \forall \; \epsilon > 0.
\end{align}
The exact formula for $m_{\dim}(z)$ will not be important for our argument. 
\end{enumerate}
We claim that on the event $\mathcal{A}_\dim \cap \mathcal{B}_\dim({\epsilon})$ we have:
\begin{align}\label{eq:goal-iid-is-semirandom}
    \bigg\| (\mJ \tran \mJ)^k -  \frac{\Tr[ (\mJ \tran \mJ)^k]}{\dim} \cdot \mI_{\dim} \bigg\|_{\infty} \leq 2^{k+2} \cdot R_{}(\alpha)^{k+1} \cdot K^{k+1}\cdot \dim^{-1/2 + \epsilon} \quad \forall \; k \in \N.
\end{align}
To prove this claim, let $\{(\lambda_i, \vu_i): i \in [\dim]\}$ denote the eigenvalue-eigenvector pairs for $\mJ \tran \mJ$. By Cauchy's integral formula, on the event $\mathcal{A}_{\dim}$, we can write:
\begin{align*}
    (\mJ \tran \mJ)^k & = \sum_{j=1}^\dim \vu_j \vu_j \tran \cdot  \lambda_j^k = \sum_{j=1}^\dim \vu_j \vu_j \tran \cdot \frac{1}{2\pi \mathrm{i}} \oint_{|z|=2KR_{}(\alpha)}  \frac{z^k}{z-\lambda_j} \diff z = - \frac{1}{2\pi \mathrm{i}}\oint_{|z|=2KR_{}(\alpha)} (\mJ \tran \mJ -z \mI_{\dim} )^{-1}  \cdot z^k \diff z,
\end{align*}
where $\mathrm{i} = \sqrt{-1}$. Hence, on $\mathcal{A}_\dim \cap \mathcal{B}_\dim({\epsilon})$, we have the upper bound:
\begin{align*}
      \bigg\| (\mJ \tran \mJ)^k -  \frac{\Tr[ (\mJ \tran \mJ)^k]}{\dim} \cdot \mI_{\dim} \bigg\|_{\infty}  & = \bigg\|  \frac{1}{2\pi \mathrm{i}}\oint_{|z|=2KR_{}(\alpha)} z^k \cdot \bigg( (\mJ \tran \mJ -z \mI_{\dim} )^{-1} - \frac{\Tr[(\mJ \tran \mJ -z \mI_{\dim} )^{-1}]}{\dim} \cdot \mI_{\dim} \bigg)  \diff z \bigg\|_{\infty} \\
     & \leq   \frac{1}{2\pi}\oint_{|z|=2KR_{}(\alpha)} |z|^k \cdot\bigg\|  (\mJ \tran \mJ -z \mI_{\dim} )^{-1} - \frac{\Tr[(\mJ \tran \mJ -z \mI_{\dim} )^{-1}]}{\dim} \cdot \mI_{\dim} \bigg\|_{\infty}  \diff z.
\end{align*}
Observe that on the event $\mathcal{B}_\dim(\epsilon)$:
\begin{align*}
    &\bigg\|  (\mJ \tran \mJ -z \mI_{\dim} )^{-1} - \frac{\Tr[(\mJ \tran \mJ -z \mI_{\dim} )^{-1}]}{\dim} \cdot \mI_{\dim} \bigg\|_{\infty} \\& \hspace{4cm} \leq \bigg\|  (\mJ \tran \mJ -z \mI_{\dim} )^{-1} - m_{\dim}(z) \cdot \mI_{\dim} \bigg\|_{\infty} + \left| m_{\dim}(z)  - \frac{\Tr[(\mJ \tran \mJ -z \mI_{\dim} )^{-1}]}{\dim}
    \right| \\
    & \hspace{4cm} \leq 2\bigg\|  (\mJ \tran \mJ -z \mI_{\dim} )^{-1} - m_{\dim}(z) \cdot \mI_{\dim} \bigg\|_{\infty} \leq 2 \dim^{-1/2 + \epsilon}.
\end{align*}
Hence,
\begin{align*}
     \bigg\| (\mJ \tran \mJ)^k -  \frac{\Tr[ (\mJ \tran \mJ)^k]}{\dim} \cdot \mI_{\dim} \bigg\|_{\infty} & \leq 2^{k+2} \cdot R_{}(\alpha)^{k+1} \cdot K^{k+1}\cdot \dim^{-1/2 + \epsilon}. 
\end{align*}
which proves the claim \eqref{eq:goal-iid-is-semirandom}. Combining \eqref{eq:goal-iid-is-semirandom} with \eqref{eq:iid-rmt-io-events-A} and \eqref{eq:iid-rmt-io-events-B} we obtain:
\begin{align*}
    \P\left( \bigg\| (\mJ \tran \mJ)^k -  \frac{\Tr[ (\mJ \tran \mJ)^k]}{\dim} \cdot \mI_{\dim} \bigg\|_{\infty} \lesssim  \dim^{-1/2 + \epsilon} \quad \forall \; k \in \N \right) & = 1 \quad \forall \; \epsilon > 0.
\end{align*}
Taking a union bound over $\epsilon \in \mathbb{Q}$ (the set of rationals) we obtain:
\begin{align*}
    \P\left( \bigg\| (\mJ \tran \mJ)^k -  \frac{\Tr[ (\mJ \tran \mJ)^k]}{\dim} \cdot \mI_{\dim} \bigg\|_{\infty} \lesssim  \dim^{-1/2 + \epsilon} \quad \forall \; k \in \N, \; \epsilon > 0 \right) & = 1.
\end{align*}
Hence, $\mJ$ satisfies the requirements of \defref{univ-class} with probability $1$. This concludes the proof of this lemma. 
\end{proof}

\subsection{Concentration Inequality for Random Permutations}\label{appendix:permutation}
This section provides the statement of the concentration inequality of \citet{bercu2015concentration} for random permutations, which was used in the proof of \lemref{sign-perm-inv-ensemble}. 
\begin{fact}[{\citet[Theorem 4.3]{bercu2015concentration}}]\label{fact:permutation} Let $\mM$ be a random matrix with eigen-decomposition $\mM = \mO \mP \mLambda \mP \tran \mO \tran$ where:
\begin{enumerate}
    \item $\mLambda = \diag(\lambda_1, \dotsc, \lambda_{\dim})$ is a deterministic diagonal matrix.
    \item $\mO \in \mathbb{O}(\dim)$ is a deterministic orthogonal matrix. 
    \item $\mP$ is a uniformly random $\dim \times \dim$ permutation matrix. 
\end{enumerate}
Then, there is a universal constant $K$ such that:
\begin{align*}
    \P \left( \|\mM - \E[\mM]\|_{\infty} >  K \cdot \|\mO\|_{\infty}^2  \cdot \|\mLambda\|_{\op} \cdot (\sqrt{\dim \ln(\dim)}  +  \ln(\dim))\right) & \leq 4/\dim^2.
\end{align*}
\end{fact}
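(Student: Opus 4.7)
The plan is to reduce the concentration of $\|\mM - \E \mM\|_\infty$ to a concentration estimate for linear functions of a single random permutation and then apply a union bound. First I would expand
\begin{align*}
M_{ij} = \sum_{m=1}^{\dim} \lambda_{\pi(m)}\, O_{im} O_{jm},
\end{align*}
where $\pi$ is the random permutation associated with $\mP$. This expresses each entry $M_{ij}$ as a sum $\sum_m a_{i,j,m}(\pi)$ with $a_{i,j,m}(\pi) = \lambda_{\pi(m)} O_{im} O_{jm}$, which is linear in the permutation and bounded uniformly by $c := \|\mLambda\|_{\op}\cdot\|\mO\|_\infty^2$. The mean $\E M_{ij}$ is computed easily from the uniform marginal of $\pi(m)$.

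Next I would establish a Bernstein-type tail bound for $M_{ij} - \E M_{ij}$. The natural route is to build a Doob martingale by revealing $\pi(1), \pi(2), \ldots, \pi(\dim)$ sequentially, bound the martingale differences uniformly by $c$, and control the conditional variance by $v^2 \lesssim \dim \cdot \|\mLambda\|_{\op}^2 \|\mO\|_\infty^4$. Feeding these estimates into a Bernstein martingale inequality---the content of the Bercu--Delyon--Rio machinery, adapted to the symmetric group via a coupling/swap argument---yields
\begin{align*}
\P\bigl(|M_{ij} - \E M_{ij}| > t\bigr) \leq 2\exp\!\left( -\frac{t^2/2}{v^2 + ct/3}\right).
\end{align*}
Choosing $t = K\cdot\|\mO\|_\infty^2\|\mLambda\|_{\op}(\sqrt{\dim\ln\dim} + \ln\dim)$ for a sufficiently large absolute constant $K$ makes the right-hand side at most $4/\dim^4$: the $\sqrt{\dim\ln\dim}$ piece saturates the sub-Gaussian regime $t^2/v^2$, while the $\ln\dim$ piece saturates the sub-exponential regime $t/c$. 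A union bound over the $\dim^2$ pairs $(i,j)$ then upgrades the per-entry failure probability $4/\dim^4$ to the claimed $4/\dim^2$ bound on $\|\mM-\E\mM\|_\infty$.

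The technical heart of the argument, and the step I expect to be the main obstacle, is producing the sharp Bernstein-type martingale bound for $\sum_m \lambda_{\pi(m)} c_m$. A naive application of McDiarmid's inequality with a swap metric on the symmetric group gives only sub-Gaussian tails with variance proxy of order $\dim c^2$, which would miss the sub-exponential improvement responsible for the $\ln\dim$ term in the statement. Obtaining the right variance proxy requires handling the dependence among the increments $\pi(m)$ carefully---either via the coupling/swap approach underlying Bercu--Delyon--Rio, or via Stein's method of exchangeable pairs using the transposition chain on $\mathfrak{S}_\dim$. Once this self-normalizing variance control is in place, the remainder of the proof is routine.
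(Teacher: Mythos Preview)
Your proposal is correct and follows the same architecture as the paper: write each entry $M_{ij}$ as a linear permutation statistic $\sum_\ell O_{i\ell}O_{j\ell}\lambda_{\tau(\ell)}$ with coefficient bound $\|\mO\|_\infty^2\|\mLambda\|_{\op}$, apply a Bernstein-type tail bound with $t=4\ln\dim$, and union-bound over the $\dim^2$ entries. The only difference is one of packaging: the paper invokes the Bercu--Delyon--Rio inequality $\P(|T(\mA)-\E T(\mA)|>C\|\mA\|_\infty(\sqrt{\dim t}+t))\le 4e^{-t}$ as a black box and plugs in directly, whereas you propose to re-derive that inequality via a Doob martingale and the transposition coupling. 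Your extra work is unnecessary here but not wrong; your identification of why a naive McDiarmid bound is insufficient (it misses the sub-exponential regime that produces the additive $\ln\dim$ term) is exactly the reason the paper appeals to a Bernstein-strength result rather than bounded differences.
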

\begin{proof} {\citet[Theorem 4.3]{bercu2015concentration}} have shown that given an array $\mA = (A_{i,j})_{i,j \in [\dim]}$, the permutation statistic:
\begin{align} \label{eq:permutation-statistic}
    T(\mA) \explain{def}{=} \sum_{\ell=1}^ \dim A_{\ell,\tau(\ell)},
\end{align}
constructed using a uniformly random permutation $\tau: [\dim] \mapsto [\dim]$ satisfies the concentration estimate:
\begin{align} \label{eq:bercu-concentration}
    \P\left( |T(\mA) - \E[T(\mA)] | > C \|\mA\|_{\infty} \cdot (\sqrt{\dim t} + t) \right) & \leq 4 e^{-t} \quad \forall \; t \geq 0,
\end{align}
for some explicit, universal constant $C$. We observe that if $\tau$ is the random permutation corresponding to the permutation matrix $\mP$, the entries of $\mM$ can be expressed as permutation statistics of the form \eqref{eq:permutation-statistic}. Indeed for any $i,j \in [\dim]$
\begin{align*}
    \mM_{ij} & = \sum_{\ell = 1}^\dim O_{i\ell} O_{j\ell} \lambda_{\tau(\ell)}  = T(\iter{\mA}{ij})
\end{align*}
where entries of $\iter{\mA}{ij}$ are given by:
\begin{align*}
    \iter{A}{ij}_{\ell,k} & \explain{def}{=} O_{i\ell} O_{j\ell} \lambda_k \quad \forall \; \ell,k \in [\dim].
\end{align*}
Hence applying the concentration inequality \eqref{eq:bercu-concentration} with $t = 4 \ln(\dim)$ we obtain:
\begin{align*}
     \P\left( |M_{ij} - \E[M_{ij}] | > C \cdot \|\mO\|_{\infty}^2  \cdot \|\mLambda\|_{\op} \cdot (2\sqrt{\dim \ln(\dim)} + 4\ln(\dim)) \right) & \leq 4/\dim^4 \quad \forall i,j \; \in \; [\dim].
\end{align*}
Now, taking a union bound over $i,j \in [\dim]$ immediately gives us the claimed concentration bound. 
\end{proof}

\subsection{Polynomial Approximation}\label{appendix:approximation}
This appendix is devoted to the proof of \lemref{low-degree-approx} from \sref{poly-approx}. 
\begin{proof}[Proof of \lemref{low-degree-approx}] Note that the assumption:
\begin{align*}
    \E \nonlin_i( \serv{Z}_1, \dotsc, \serv{Z}_k; \serv{A})\nonlin_j( \serv{Z}_1, \dotsc, \serv{Z}_k; \serv{A}) & \in \{0,1\},
\end{align*}
guarantees that any two function $\nonlin_i, \nonlin_j$ are either identical or orthogonal. Consequently, we need to only construct approximations for collection of orthogonal functions among $\nonlin_{1:k}$. Hence without loss of generality, we may assume that:
\begin{align}\label{eq:poly-approx-wlog}
     \E \nonlin_i( \serv{Z}_1, \dotsc, \serv{Z}_k; \serv{A})\nonlin_j( \serv{Z}_1, \dotsc, \serv{Z}_k; \serv{A}) & = 0  \; \forall \; i  \neq j.
\end{align}
We begin by describing the construction of the approximating functions. For any $\auxvec \in \R^{\auxdim}$, consider the Hermite decompositions of the functions $\nonlin_{1:k}( \cdot ; \auxvec)$ and $h( \cdot; \auxvec)$:
\begin{subequations}\label{eq:poly-approx-hermite-decomp}
\begin{align} 
    h(z; \auxvec) & = \sum_{r \in \W^k} \alpha_r(\auxvec) \cdot \hermite{r}(z), \quad \alpha_r(\auxvec) \explain{def}{=} \E\left[h(\serv{Z}_1, \dotsc, \serv{Z}_k; \auxvec) \cdot \prod_{i=1}^k \hermite{r_i}(\serv{Z}_i)\right],  \\
    \nonlin_{i}(z ; \auxvec) & = \sum_{r \in \W^k} c_{i,r}(\auxvec) \cdot \hermite{r}(z), \quad c_{i,r}(\auxvec) \explain{def}{=} \E\left[\nonlin_i(\serv{Z}_1, \dotsc, \serv{Z}_k; \auxvec) \cdot \prod_{i=1}^k \hermite{r_i}(\serv{Z}_i)\right].
\end{align}
In the above display, $\{\hermite{r}: r \in \W^\order\}$ denote the collection of $\order$-variate orthonormal Hermite polynomials. 
\end{subequations}
For each $D \in \N$, we define the low-degree approximations:
\begin{subequations}\label{eq:poly-approx-lowdegree-def}
\begin{align}
    h^{\leq D}(z; \auxvec) & \explain{def}{=} \sum_{\substack{r \in \W^k \\ \|r\|_1 \leq D}} \alpha_r(\auxvec) \cdot \hermite{r}(z), \\
    \nonlin_{i}^{\leq D}(z ; \auxvec) & \explain{def}{=} \sum_{\substack{r \in \W^k \\ \|r\|_1 \leq D}} c_{i,r}(\auxvec) \cdot \hermite{r}(z).
\end{align}
\end{subequations}
The regularity assumptions (specifically, polynomial growth) imposed on $h, \nonlin_{1:k}$ guarantee that $\E[\nonlin_i^2( \serv{Z}_1, \dotsc, \serv{Z}_k ; \serv{A})]$ and $\E[h^2(\serv{Z}_1, \dotsc, \serv{Z}_k; \serv{A})]$ are finite. Hence,
\begin{align} 
    \lim_{D \rightarrow \infty} \E[(\nonlin_i^{\leq D}( \serv{Z}_1, \dotsc, \serv{Z}_k ; \serv{A}) - \nonlin_i( \serv{Z}_1, \dotsc, \serv{Z}_k ; \serv{A}))^2] &= 0, \label{eq:poly-approx-nonlin-conv}\\
     \lim_{D \rightarrow \infty} \E[(h^{\leq D}( \serv{Z}_1, \dotsc, \serv{Z}_k ; \serv{A}) - h( \serv{Z}_1, \dotsc, \serv{Z}_k ; \serv{A}))^2] &= 0 \label{eq:poly-approx-test-conv}
\end{align}
As a consequence, the matrix $Q^{\leq D} \in \R^{k \times k}$ with entries defined as:
\begin{align}\label{eq:Q-low-degree}
    Q_{ij}^{\leq D} \explain{def}{=} \E[\nonlin_i^{\leq D}( \serv{Z}_1, \dotsc, \serv{Z}_k ; \serv{A}) \cdot \nonlin_j^{\leq D}( \serv{Z}_1, \dotsc, \serv{Z}_k ; \serv{A})]
\end{align}
also satisfies:
\begin{align}\label{eq:poly-approx-Q-conv}
    Q^{\leq D} \rightarrow I_k \quad \text{as } D \rightarrow \infty.
\end{align}
For each $i \in [k]$, define the following sequence of orthogonalized functions (indexed by $D$) as:
\begin{align}\label{eq:poly-approx-whitening}
    \hat{\nonlin}_i^{\leq D}(z_1, \dotsc, z_k ; \auxvec) \explain{def}{=} \sum_{j=1}^k \left[\left(Q^{\leq D} \right)^{-\frac{1}{2}}\right]_{ij} \cdot {\nonlin}_j^{\leq D}(z_1, \dotsc, z_k ; \auxvec).
\end{align}
Note that \eqref{eq:poly-approx-Q-conv} implies that $Q^{\leq D}$ is invertible for large enough $\degree$ and hence, $\left(Q^{\leq D} \right)^{-\frac{1}{2}}$ in \eqref{eq:poly-approx-whitening} is well-defined for large $\degree$. As a consequence of \eqref{eq:poly-approx-nonlin-conv} and \eqref{eq:poly-approx-Q-conv}:
\begin{align}\label{eq:poly-approx-hat-nonlin-conv}
     \lim_{D \rightarrow \infty} \E[(\hat{\nonlin}_i^{\leq D}( \serv{Z}_1, \dotsc, \serv{Z}_k ; \serv{A}) - \nonlin_i( \serv{Z}_1, \dotsc, \serv{Z}_k ; \serv{A}))^2] &= 0
\end{align}
In light of \eqref{eq:poly-approx-test-conv} and \eqref{eq:poly-approx-hat-nonlin-conv}, for any $\epsilon \in (0,1)$ we can find $D_\epsilon \in \N$ such that:
\begin{align*}
    \max_{i \in [k]}\E[(\hat{\nonlin}_i^{\leq D_\epsilon}( \serv{Z}_1, \dotsc, \serv{Z}_k ; \serv{A}) - \nonlin_i( \serv{Z}_1, \dotsc, \serv{Z}_k ; \serv{A}))^2] & \leq \epsilon^2, \\
    \E[(h^{\leq D_\epsilon}( \serv{Z}_1, \dotsc, \serv{Z}_k ; \serv{A}) - h( \serv{Z}_1, \dotsc, \serv{Z}_k ; \serv{A}))^2] & \leq \epsilon^2. 
\end{align*}
This gives us the desired approximating functions $\hat{\nonlin}_{1:k}^\epsilon \explain{def}{=} \hat{\nonlin}_{1:k}^{\leq D_\epsilon}$ and $\hat{h}^\epsilon \explain{def}{=} h^{D_\epsilon}$. This proves the first two claims of the lemma. We now consider each of the remaining claims.
\begin{enumerate}
\setcounter{enumi}{2}
    \item Observe that:
    \begin{align*}
        \E[\serv{Z}_i \hat{\nonlin}^\epsilon_j( \serv{Z}_1, \dotsc, \serv{Z}_k; \serv{A})]  &\explain{\eqref{eq:poly-approx-whitening}}{=} \sum_{\ell = 1}^k \left[\left(Q^{\leq D_\epsilon} \right)^{-\frac{1}{2}}\right]_{j\ell} \cdot  \E[\serv{Z}_i \cdot {\nonlin}_{\ell}^{\leq D_{\epsilon}}( \serv{Z}_1, \dotsc, \serv{Z}_k; \serv{A})] \\&\explain{\eqref{eq:poly-approx-lowdegree-def}}{=} \sum_{\ell = 1}^k \left[\left(Q^{\leq D_\epsilon} \right)^{-\frac{1}{2}}\right]_{j\ell} \E[\serv{Z}_i \cdot {\nonlin}_{\ell}( \serv{Z}_1, \dotsc, \serv{Z}_k; \serv{A})] \\
        & = 0,
    \end{align*}
    where the last step follows from the assumption $\E[\serv{Z}_i \cdot {\nonlin}_{\ell}( \serv{Z}_1, \dotsc, \serv{Z}_k; \serv{A})] = 0$ made in the statement of the lemma. Using \eqref{eq:Q-low-degree} and \eqref{eq:poly-approx-whitening}, we can also compute:
    \begin{align*}
         \E[\hat{\nonlin}^\epsilon_i( \serv{Z}_1, \dotsc, \serv{Z}_k; \serv{A}) \cdot  \hat{\nonlin}^\epsilon_j( \serv{Z}_1, \dotsc, \serv{Z}_k; \serv{A})] 
         & = (I_k)_{ij} \explain{\eqref{eq:poly-approx-wlog}}{=}  \E[\nonlin_i( \serv{Z}_1, \dotsc, \serv{Z}_k; \serv{A})\nonlin_j( \serv{Z}_1, \dotsc, \serv{Z}_k; \serv{A})].
    \end{align*}
    This verifies the third claim made in the lemma.
    \item Recall the definition of the coefficients $\alpha_r(\auxvec)$ and $c_{i,r}(\auxvec)$ from \eqref{eq:poly-approx-hermite-decomp}. Since the functions $\nonlin_{1:k}, h$ are assumed to be continuous and polynomially bounded, by the Dominated Convergence Theorem, $\alpha_r(\cdot)$ and $c_{i,r}(\cdot)$ are continuous functions on $\R^{\auxdim}$. As a consequence, the approximations $\hat{h}^\epsilon$ and $\hat{\nonlin}_{1:k}^\epsilon$ defined in \eqref{eq:poly-approx-lowdegree-def} are also continuous. Furthermore observe that the coefficient  $\alpha_r(\auxvec)$ is polynomially bounded since:
    \begin{align*}
        |\alpha_r(\auxvec)|^2 & \explain{\eqref{eq:poly-approx-hermite-decomp}}{=} \left| \E\left[h(\serv{Z}_1, \dotsc, \serv{Z}_k; \auxvec) \cdot \prod_{i=1}^k \hermite{r_i}(\serv{Z}_i)\right] \right|^2 \explain{(a)}{\leq} \E[h^2(\serv{Z}_1, \dotsc, \serv{Z}_k; \auxvec) ] \explain{(b)}{\leq} 3L^2 \cdot (1 + \E\|\serv{Z}_{1:k}\|^{2\degree} + \|\auxvec\|^{2\degree}).
    \end{align*}
    In the above display, (a) follows from Cauchy-Schwarz Inequality and the orthonormality of the Hermite polynomials, (b) follows from the assumption that $h$ is polynomially bounded. The same bound applies to the coefficients $c_{i,r}(\auxvec)$. Hence, the approximations $\hat{h}^\epsilon$ and $\hat{\nonlin}_{1:k}^\epsilon$ defined in \eqref{eq:poly-approx-lowdegree-def} are polynomially bounded.
\end{enumerate}
This concludes the proof of \lemref{low-degree-approx}. 
\end{proof}

\end{document}